\documentclass[11pt]{article} %%%%%%%%%%%%%%%%%%%%%%%%%%%%%%%%%%%%%%%%%%%%%%%%%

%%% preample %%%%%%%%%%%%%%%%%%%%%%%%%%%%%%%%%%%%%%%%%%%%%%%%%%%%%%%%%%%%%%%%%%

%%% packages %%%%%%%%%%%%%%%%%%%%%%%%%%%%%%%%%%%%%%%%%%%%%%%%%%%%%%%%%%%%%%%%%%

%\usepackage[T1]{fontenc}        % euro quality fonts [T1] (togeth. w/ %textcomp)
\usepackage{textcomp, amssymb}  % additional symbols (there are more packages)
\usepackage{anysize}            % margin package sets tighter margins
\usepackage{amsfonts}
\usepackage{amsmath}
\usepackage{dsfont}
\usepackage{MnSymbol}
\usepackage{mathtools}
\usepackage{graphicx}
\usepackage{epsfig}
\usepackage{epstopdf}
\usepackage{tikz}
\usepackage{amsthm}
\usepackage{ifpdf}
%newer versions of latex don't need pdftex and dvips argument respectively
%\ifpdf
%  \usepackage{aeguill}          % PS converted CM fonts for better acro preview
%  \usepackage[pdftex]{graphicx} % graphics packages
%  \usepackage[pdftex]{color}    % color packages
%  \usepackage[pdftex]{hyperref}
%\else
%  \usepackage[dvips]{graphicx}  % graphics packages
%  \usepackage[dvips]{hyperref}
%\fi

%%% style and finetuning %%%%%%%%%%%%%%%%%%%%%%%%%%%%%%%%%%%%%%%%%%%%%%%%%%%%%%

% pagestyle
\pagestyle{plain}               % headings, empty, plain

%\hypersetup{colorlinks=false}   % don't print colored links on paper

% no indentation for paragraphs and space inbetween paragraphs  (euro standard)
\setlength{\parindent}{0pt}
\setlength{\parskip}{5pt plus 2pt minus 1pt}

%%% hacks %%%%%%%%%%%%%%%%%%%%%%%%%%%%%%%%%%%%%%%%%%%%%%%%%%%%%%%%%%%%%%%%%%%%%

% hyperref must be (almost) last command of preample
% E.g. The \href{http://www.ctan.org}{CTAN} website.
% E.g. \author{First- Lastname $<$\href{mailto: email@domain}{email@domain}$>$}

%\ifpdf
%  \usepackage[pdftex]{hyperref}
%\else
%  \usepackage[dvips]{hyperref}
%\fi
%\hypersetup{colorlinks=false}   % don't print colors on paper
\usetikzlibrary{calc,decorations.pathmorphing,shapes,graphs}

\newcounter{sarrow}
\newcommand\xrsquigarrow[1]{%
\stepcounter{sarrow}%
\mathrel{\begin{tikzpicture}[baseline= {( $ (current bounding box.south) + (0,-0.5ex) $ )}]
\node[inner sep=.5ex] (\thesarrow) {$\scriptstyle #1$};
\path[draw,<-,decorate,
  decoration={zigzag,amplitude=0.7pt,segment length=1.2mm,pre=lineto,pre length=4pt}]
    (\thesarrow.south east) -- (\thesarrow.south west);
    \end{tikzpicture}}%
}

\newcounter{sarrow1}
\newcommand\xnrsquigarrow[1]{%
\stepcounter{sarrow1}%
\mathrel{\begin{tikzpicture}[baseline= {( $ (current bounding box.south) + (0,-0.5ex) $ )}]
\node[inner sep=.5ex] (\thesarrow) {$\scriptstyle #1$};
\path[draw,<-,decorate,
  decoration={zigzag,amplitude=0.7pt,segment length=1.2mm,pre=lineto,pre length=4pt}]
    (\thesarrow1.south east) -- (\thesarrow1.south west);
    $\slashedarrowfill@\relbar\relbar/$
    \end{tikzpicture}}%
}

\makeatletter
\def\slashedarrowfill@#1#2#3#4#5{%
  $\m@th\thickmuskip0mu\medmuskip\thickmuskip\thinmuskip\thickmuskip
   \relax#5#1\mkern-7mu%
   \cleaders\hbox{$#5\mkern-2mu#2\mkern-2mu$}\hfill
   \mathclap{#3}\mathclap{#2}%
   \cleaders\hbox{$#5\mkern-2mu#2\mkern-2mu$}\hfill
   \mkern-7mu#4$%
}
\def\rightslashedarrowfillb@{%
  \slashedarrowfill@\relbar\relbar/\rightarrow}
\newcommand\xnrightarrow[2][]{%
  \ext@arrow 0055{\rightslashedarrowfillb@}{#1}{#2}}

\def\rightslashedarrowfille@{%
  \slashedarrowfill@\relbar\relbar/\twoheadrightarrow}
\newcommand\xntworightarrow[2][]{%
  \ext@arrow 0055{\rightslashedarrowfille@}{#1}{#2}}

\def\rightslashedarrowfillg@{%
  \slashedarrowfill@\relbar\relbar{\raisebox{.12em}{}}\twoheadrightarrow}
\newcommand\xtworightarrow[2][]{%
  \ext@arrow 0055{\rightslashedarrowfillg@}{#1}{#2}}

\def\rightslashedarrowfillx@{%
  \slashedarrowfill@\Relbar\Relbar/\rightrightarrows}
\newcommand\xnTworightarrow[2][]{%
  \ext@arrow 0055{\rightslashedarrowfillx@}{#1}{#2}}

\def\rightslashedarrowfilly@{%
  \slashedarrowfill@\Relbar\Relbar{\raisebox{.12em}{}}\rightrightarrows}
\newcommand\xTworightarrow[2][]{%
  \ext@arrow 0055{\rightslashedarrowfilly@}{#1}{#2}}

\pgfdeclareshape{slash underlined}
{
  \inheritsavedanchors[from=rectangle] % this is nearly a circle
  \inheritanchorborder[from=rectangle]
  \inheritanchor[from=rectangle]{north}
  \inheritanchor[from=rectangle]{north west}
  \inheritanchor[from=rectangle]{north east}
  \inheritanchor[from=rectangle]{center}
  \inheritanchor[from=rectangle]{west}
  \inheritanchor[from=rectangle]{east}
  \inheritanchor[from=rectangle]{mid}
  \inheritanchor[from=rectangle]{mid west}
  \inheritanchor[from=rectangle]{mid east}
  \inheritanchor[from=rectangle]{base}
  \inheritanchor[from=rectangle]{base west}
  \inheritanchor[from=rectangle]{base east}
  \inheritanchor[from=rectangle]{south}
  \inheritanchor[from=rectangle]{south west}
  \inheritanchor[from=rectangle]{south east}
  \inheritanchorborder[from=rectangle]
  \foregroundpath{
% store lower right in xa/ya and upper right in xb/yb
    \southwest \pgf@xa=\pgf@x \pgf@ya=\pgf@y
    \northeast \pgf@xb=\pgf@x \pgf@yb=\pgf@y
    \pgf@xc=\pgf@xa
    \advance\pgf@xc by .5\pgf@xb
    \pgf@yc=\pgf@ya
    \advance\pgf@xc by -1.3pt
    \advance\pgf@yc by -1.8pt
    \pgfpathmoveto{\pgfqpoint{\pgf@xc}{\pgf@yc}}
    \advance\pgf@xc by  2.6pt
    \advance\pgf@yc by  3.6pt
    \pgfpathlineto{\pgfqpoint{\pgf@xc}{\pgf@yc}}
    \pgfpathmoveto{\pgfqpoint{\pgf@xa}{\pgf@ya}}
    \pgfpathlineto{\pgfqpoint{\pgf@xb}{\pgf@ya}}
 }
}
\tikzset{nomorepostaction/.code=\let\tikz@postactions\pgfutil@empty}

\newtheorem{theorem}{Theorem}[section]
\newtheorem{definition}[theorem]{Definition}
\newtheorem{proposition}[theorem]{Proposition}

\begin{document}
%%% tile page %%%%%%%%%%%%%%%%%%%%%%%%%%%%%%%%%%%%%%%%%%%%%%%%%%%%%%%%%%%%%%%%%

\begin{titlepage}
\thispagestyle{empty}

\hrule
\begin{center}
{\bf\LARGE Truly Concurrent Pi-Calculi with Reversibility, Probabilism and Guards}

\vspace{0.7cm}
--- Yong Wang ---

\vspace{2cm}
\begin{figure}[!htbp]
 \centering
 \includegraphics[width=1.0\textwidth]{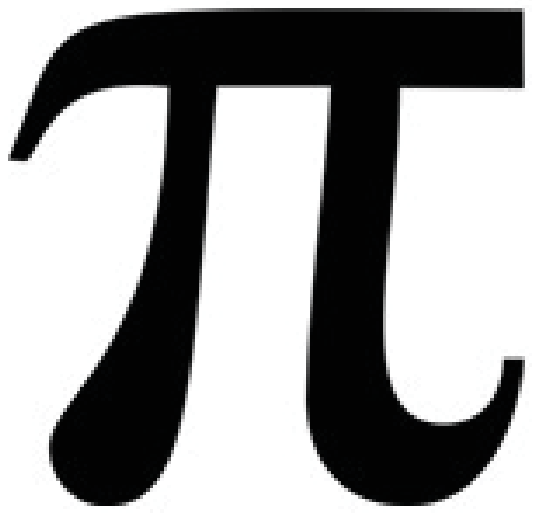}
\end{figure}

\end{center}
\end{titlepage}

\newpage %%%%%%%%%%%%%%%%%%%%%%%%%%%%%%%%%%%%%%%%%%%%%%%%%%%%%%%%%%%%%%%%%%%%%%

\setcounter{page}{1}\pagenumbering{roman}

\tableofcontents

\newpage

\setcounter{page}{1}\pagenumbering{arabic}

        \section{Introduction}

The well-known process algebras, such as CCS \cite{CC} \cite{CCS}, ACP \cite{ACP} and $\pi$-calculus \cite{PI1} \cite{PI2}, capture the interleaving concurrency based on bisimilarity semantics.
We did some work on truly concurrent process algebras, such as CTC \cite{CTC}, APTC \cite{ATC} and $\pi_{tc}$ \cite{PITC}, capture the true concurrency based on truly concurrent bisimilarities, such as
pomset bisimilarity, step bisimilarity, history-preserving (hp-) bisimilarity and hereditary history-preserving (hhp-) bisimilarity. Truly concurrent process algebras are generalizations
of the corresponding traditional process algebras.

In this book, we introduce reversibility, probabilism, and guards into truly concurrent calculus $\pi_{tc}$, based on the work on $\pi$-calculus \cite{PI1} \cite{PI2}, probabilistic process algebra \cite{PPA}
\cite{PPA2} \cite{PPA3}, reversible process algebra \cite{APRTC}, and process algebra with guards \cite{HLPA}. We introduce the
preliminaries in chapter \ref{bg}. $\pi_{tc}$ with guards in chapter \ref{pitcg}, $\pi_{tc}$ with reversibility in chapter \ref{pitcr}, $\pi_{tc}$ with probabilism and reversibility in
chapter \ref{pitcpr}, $\pi_{tc}$ with probabilism and guards in chapter \ref{pitcpg}, $\pi_{tc}$ reversibility and guards in chapter \ref{pitcrg} and $\pi_{}tc$ with reversibility,
probabilism and guards all together in chapter \ref{pa}. For $\pi_{tc}$ with probabilistic, please refer to \cite{APPTC}.

\newpage\section{Backgrounds}\label{bg}

To make this book self-satisfied, we introduce some preliminaries in this chapter, including some introductions on operational semantics and $\pi_{tc}$ \cite{PITC}.

\subsection{Operational Semantics}\label{OS}

Firstly, in this subsection, we introduce concepts of (strongly) truly concurrent bisimilarities, including pomset bisimilarity, step bisimilarity, history-preserving (hp-)bisimilarity
and hereditary history-preserving (hhp-)bisimilarity. In contrast to traditional truly concurrent bisimilarities in CTC and APTC, these versions in $\pi_{tc}$ must take care of actions
with bound objects. Note that, these truly concurrent bisimilarities are defined as late bisimilarities, but not early bisimilarities, as defined in $\pi$-calculus \cite{PI1} \cite{PI2}.
Note that, here, a PES $\mathcal{E}$ is deemed as a process.

\begin{definition}[Bisimulation]
A bisimulation relation $R$ is a binary relation on processes such that: (1) if $p R q$ and $p\xrightarrow{a}p'$ then $q\xrightarrow{a}q'$ with $p' R q'$; (2) if $p R q$ and
$q\xrightarrow{a}q'$ then $p\xrightarrow{a}p'$ with $p' R q'$; (3) if $p R q$ and $pP$, then $qP$; (4) if $p R q$ and $qP$, then $pP$. Two processes $p$ and $q$ are bisimilar,
denoted by $p\sim_{HM} q$, if there is a bisimulation relation $R$ such that $p R q$.
\end{definition}

\begin{definition}[Congruence]
Let $\Sigma$ be a signature. An equivalence relation $R$ on $\mathcal{T}(\Sigma)$ is a congruence if for each $f\in\Sigma$, if $s_i R t_i$ for $i\in\{1,\cdots,ar(f)\}$, then
$f(s_1,\cdots,s_{ar(f)}) R f(t_1,\cdots,t_{ar(f)})$.
\end{definition}

\begin{definition}[Prime event structure with silent event]\label{PES}
Let $\Lambda$ be a fixed set of labels, ranged over $a,b,c,\cdots$ and $\tau$. A ($\Lambda$-labelled) prime event structure with silent event $\tau$ is a tuple
$\mathcal{E}=\langle \mathbb{E}, \leq, \sharp, \lambda\rangle$, where $\mathbb{E}$ is a denumerable set of events, including the silent event $\tau$. Let
$\hat{\mathbb{E}}=\mathbb{E}\backslash\{\tau\}$, exactly excluding $\tau$, it is obvious that $\hat{\tau^*}=\epsilon$, where $\epsilon$ is the empty event.
Let $\lambda:\mathbb{E}\rightarrow\Lambda$ be a labelling function and let $\lambda(\tau)=\tau$. And $\leq$, $\sharp$ are binary relations on $\mathbb{E}$,
called causality and conflict respectively, such that:

\begin{enumerate}
  \item $\leq$ is a partial order and $\lceil e \rceil = \{e'\in \mathbb{E}|e'\leq e\}$ is finite for all $e\in \mathbb{E}$. It is easy to see that
  $e\leq\tau^*\leq e'=e\leq\tau\leq\cdots\leq\tau\leq e'$, then $e\leq e'$.
  \item $\sharp$ is irreflexive, symmetric and hereditary with respect to $\leq$, that is, for all $e,e',e''\in \mathbb{E}$, if $e\sharp e'\leq e''$, then $e\sharp e''$.
\end{enumerate}

Then, the concepts of consistency and concurrency can be drawn from the above definition:

\begin{enumerate}
  \item $e,e'\in \mathbb{E}$ are consistent, denoted as $e\frown e'$, if $\neg(e\sharp e')$. A subset $X\subseteq \mathbb{E}$ is called consistent, if $e\frown e'$ for all
  $e,e'\in X$.
  \item $e,e'\in \mathbb{E}$ are concurrent, denoted as $e\parallel e'$, if $\neg(e\leq e')$, $\neg(e'\leq e)$, and $\neg(e\sharp e')$.
\end{enumerate}
\end{definition}

\begin{definition}[Configuration]
Let $\mathcal{E}$ be a PES. A (finite) configuration in $\mathcal{E}$ is a (finite) consistent subset of events $C\subseteq \mathcal{E}$, closed with respect to causality
(i.e. $\lceil C\rceil=C$). The set of finite configurations of $\mathcal{E}$ is denoted by $\mathcal{C}(\mathcal{E})$. We let $\hat{C}=C\backslash\{\tau\}$.
\end{definition}

A consistent subset of $X\subseteq \mathbb{E}$ of events can be seen as a pomset. Given $X, Y\subseteq \mathbb{E}$, $\hat{X}\sim \hat{Y}$ if $\hat{X}$ and $\hat{Y}$ are
isomorphic as pomsets. In the following of the paper, we say $C_1\sim C_2$, we mean $\hat{C_1}\sim\hat{C_2}$.

\begin{definition}[Pomset transitions and step]
Let $\mathcal{E}$ be a PES and let $C\in\mathcal{C}(\mathcal{E})$, and $\emptyset\neq X\subseteq \mathbb{E}$, if $C\cap X=\emptyset$ and $C'=C\cup X\in\mathcal{C}(\mathcal{E})$, then $C\xrightarrow{X} C'$ is called a pomset transition from $C$ to $C'$. When the events in $X$ are pairwise concurrent, we say that $C\xrightarrow{X}C'$ is a step.
\end{definition}

\begin{definition}[Strong pomset, step bisimilarity]\label{PSB5}
Let $\mathcal{E}_1$, $\mathcal{E}_2$ be PESs. A strong pomset bisimulation is a relation $R\subseteq\mathcal{C}(\mathcal{E}_1)\times\mathcal{C}(\mathcal{E}_2)$, such that if $(C_1,C_2)\in R$, and $C_1\xrightarrow{X_1}C_1'$ (with $\mathcal{E}_1\xrightarrow{X_1}\mathcal{E}_1'$) then $C_2\xrightarrow{X_2}C_2'$ (with $\mathcal{E}_2\xrightarrow{X_2}\mathcal{E}_2'$), with $X_1\subseteq \mathbb{E}_1$, $X_2\subseteq \mathbb{E}_2$, $X_1\sim X_2$ and $(C_1',C_2')\in R$:
\begin{enumerate}
  \item for each fresh action $\alpha\in X_1$, if $C_1''\xrightarrow{\alpha}C_1'''$ (with $\mathcal{E}_1''\xrightarrow{\alpha}\mathcal{E}_1'''$), then for some $C_2''$ and $C_2'''$, $C_2''\xrightarrow{\alpha}C_2'''$ (with $\mathcal{E}_2''\xrightarrow{\alpha}\mathcal{E}_2'''$), such that if $(C_1'',C_2'')\in R$ then $(C_1''',C_2''')\in R$;
  \item for each $x(y)\in X_1$ with ($y\notin n(\mathcal{E}_1, \mathcal{E}_2)$), if $C_1''\xrightarrow{x(y)}C_1'''$ (with $\mathcal{E}_1''\xrightarrow{x(y)}\mathcal{E}_1'''\{w/y\}$) for all $w$, then for some $C_2''$ and $C_2'''$, $C_2''\xrightarrow{x(y)}C_2'''$ (with $\mathcal{E}_2''\xrightarrow{x(y)}\mathcal{E}_2'''\{w/y\}$) for all $w$, such that if $(C_1'',C_2'')\in R$ then $(C_1''',C_2''')\in R$;
  \item for each two $x_1(y),x_2(y)\in X_1$ with ($y\notin n(\mathcal{E}_1, \mathcal{E}_2)$), if $C_1''\xrightarrow{\{x_1(y),x_2(y)\}}C_1'''$ (with $\mathcal{E}_1''\xrightarrow{\{x_1(y),x_2(y)\}}\mathcal{E}_1'''\{w/y\}$) for all $w$, then for some $C_2''$ and $C_2'''$, $C_2''\xrightarrow{\{x_1(y),x_2(y)\}}C_2'''$ (with $\mathcal{E}_2''\xrightarrow{\{x_1(y),x_2(y)\}}\mathcal{E}_2'''\{w/y\}$) for all $w$, such that if $(C_1'',C_2'')\in R$ then $(C_1''',C_2''')\in R$;
  \item for each $\overline{x}(y)\in X_1$ with $y\notin n(\mathcal{E}_1, \mathcal{E}_2)$, if $C_1''\xrightarrow{\overline{x}(y)}C_1'''$ (with $\mathcal{E}_1''\xrightarrow{\overline{x}(y)}\mathcal{E}_1'''$), then for some $C_2''$ and $C_2'''$, $C_2''\xrightarrow{\overline{x}(y)}C_2'''$ (with $\mathcal{E}_2''\xrightarrow{\overline{x}(y)}\mathcal{E}_2'''$), such that if $(C_1'',C_2'')\in R$ then $(C_1''',C_2''')\in R$.
\end{enumerate}
 and vice-versa.

We say that $\mathcal{E}_1$, $\mathcal{E}_2$ are strong pomset bisimilar, written $\mathcal{E}_1\sim_p\mathcal{E}_2$, if there exists a strong pomset bisimulation $R$, such that $(\emptyset,\emptyset)\in R$. By replacing pomset transitions with steps, we can get the definition of strong step bisimulation. When PESs $\mathcal{E}_1$ and $\mathcal{E}_2$ are strong step bisimilar, we write $\mathcal{E}_1\sim_s\mathcal{E}_2$.
\end{definition}

\begin{definition}[Posetal product]
Given two PESs $\mathcal{E}_1$, $\mathcal{E}_2$, the posetal product of their configurations, denoted $\mathcal{C}(\mathcal{E}_1)\overline{\times}\mathcal{C}(\mathcal{E}_2)$, is defined as

$$\{(C_1,f,C_2)|C_1\in\mathcal{C}(\mathcal{E}_1),C_2\in\mathcal{C}(\mathcal{E}_2),f:C_1\rightarrow C_2 \textrm{ isomorphism}\}.$$

A subset $R\subseteq\mathcal{C}(\mathcal{E}_1)\overline{\times}\mathcal{C}(\mathcal{E}_2)$ is called a posetal relation. We say that $R$ is downward closed when for any $(C_1,f,C_2),(C_1',f',C_2')\in \mathcal{C}(\mathcal{E}_1)\overline{\times}\mathcal{C}(\mathcal{E}_2)$, if $(C_1,f,C_2)\subseteq (C_1',f',C_2')$ pointwise and $(C_1',f',C_2')\in R$, then $(C_1,f,C_2)\in R$.

For $f:X_1\rightarrow X_2$, we define $f[x_1\mapsto x_2]:X_1\cup\{x_1\}\rightarrow X_2\cup\{x_2\}$, $z\in X_1\cup\{x_1\}$,(1)$f[x_1\mapsto x_2](z)=
x_2$,if $z=x_1$;(2)$f[x_1\mapsto x_2](z)=f(z)$, otherwise. Where $X_1\subseteq \mathbb{E}_1$, $X_2\subseteq \mathbb{E}_2$, $x_1\in \mathbb{E}_1$, $x_2\in \mathbb{E}_2$.
\end{definition}

\begin{definition}[Strong (hereditary) history-preserving bisimilarity]\label{HHPB5}
A strong history-preserving (hp-) bisimulation is a posetal relation $R\subseteq\mathcal{C}(\mathcal{E}_1)\overline{\times}\mathcal{C}(\mathcal{E}_2)$ such that if $(C_1,f,C_2)\in R$, and
\begin{enumerate}
  \item for $e_1=\alpha$ a fresh action, if $C_1\xrightarrow{\alpha}C_1'$ (with $\mathcal{E}_1\xrightarrow{\alpha}\mathcal{E}_1'$), then for some $C_2'$ and $e_2=\alpha$, $C_2\xrightarrow{\alpha}C_2'$ (with $\mathcal{E}_2\xrightarrow{\alpha}\mathcal{E}_2'$), such that $(C_1',f[e_1\mapsto e_2],C_2')\in R$;
  \item for $e_1=x(y)$ with ($y\notin n(\mathcal{E}_1, \mathcal{E}_2)$), if $C_1\xrightarrow{x(y)}C_1'$ (with $\mathcal{E}_1\xrightarrow{x(y)}\mathcal{E}_1'\{w/y\}$) for all $w$, then for some $C_2'$ and $e_2=x(y)$, $C_2\xrightarrow{x(y)}C_2'$ (with $\mathcal{E}_2\xrightarrow{x(y)}\mathcal{E}_2'\{w/y\}$) for all $w$, such that $(C_1',f[e_1\mapsto e_2],C_2')\in R$;
  \item for $e_1=\overline{x}(y)$ with $y\notin n(\mathcal{E}_1, \mathcal{E}_2)$, if $C_1\xrightarrow{\overline{x}(y)}C_1'$ (with $\mathcal{E}_1\xrightarrow{\overline{x}(y)}\mathcal{E}_1'$), then for some $C_2'$ and $e_2=\overline{x}(y)$, $C_2\xrightarrow{\overline{x}(y)}C_2'$ (with $\mathcal{E}_2\xrightarrow{\overline{x}(y)}\mathcal{E}_2'$), such that $(C_1',f[e_1\mapsto e_2],C_2')\in R$.
\end{enumerate}

and vice-versa. $\mathcal{E}_1,\mathcal{E}_2$ are strong history-preserving (hp-)bisimilar and are written $\mathcal{E}_1\sim_{hp}\mathcal{E}_2$ if there exists a strong hp-bisimulation $R$ such that $(\emptyset,\emptyset,\emptyset)\in R$.

A strongly hereditary history-preserving (hhp-)bisimulation is a downward closed strong hp-bisimulation. $\mathcal{E}_1,\mathcal{E}_2$ are strongly hereditary history-preserving (hhp-)bisimilar and are written $\mathcal{E}_1\sim_{hhp}\mathcal{E}_2$.
\end{definition}

\subsection{$\pi_{tc}$}

$\pi_{tc}$ \cite{PITC} is a calculus of truly concurrent mobile processes. It includes syntax and semantics:

\begin{enumerate}
  \item Its syntax includes actions, process constant, and operators acting between actions, like Prefix, Summation, Composition, Restriction, Input and Output.
  \item Its semantics is based on labeled transition systems, Prefix, Summation, Composition, Restriction, Input and Output have their transition rules. $\pi_{tc}$ has good semantic properties
  based on the truly concurrent bisimulations. These properties include summation laws, identity laws, restriction laws, parallel laws, expansion laws, congruences, and also unique solution for recursion.
\end{enumerate}

\newpage\section{$\pi_{tc}$ with Guards}\label{pitcg}

In this chapter, we design $\pi_{tc}$ with guards. This chapter is organized as follows. In section \ref{os3}, we introduce the truly concurrent operational semantics. Then, we introduce
the syntax and operational semantics, laws modulo strongly truly concurrent bisimulations, and algebraic theory of $\pi_{tc}$ with guards in section \ref{sos3},
\ref{s3} and \ref{a3} respectively.

\subsection{Operational Semantics}\label{os3}

Firstly, in this section, we introduce concepts of (strongly) truly concurrent bisimilarities, including pomset bisimilarity, step
bisimilarity, history-preserving (hp-)bisimilarity and hereditary history-preserving (hhp-)bisimilarity. In contrast to traditional truly
concurrent bisimilarities in section \ref{bg}, these versions in $\pi_{ptc}$ must take care of actions with bound objects. Note that, these truly concurrent bisimilarities
are defined as late bisimilarities, but not early bisimilarities, as defined in $\pi$-calculus \cite{PI1} \cite{PI2}. Note that, here, a PES $\mathcal{E}$ is deemed as a process.

\begin{definition}[Prime event structure with silent event and empty event]
Let $\Lambda$ be a fixed set of labels, ranged over $a,b,c,\cdots$ and $\tau,\epsilon$. A ($\Lambda$-labelled) prime event structure with silent event $\tau$ and empty event
$\epsilon$ is a tuple $\mathcal{E}=\langle \mathbb{E}, \leq, \sharp, \lambda\rangle$, where $\mathbb{E}$ is a denumerable set of events, including the silent event $\tau$ and empty
event $\epsilon$. Let $\hat{\mathbb{E}}=\mathbb{E}\backslash\{\tau,\epsilon\}$, exactly excluding $\tau$ and $\epsilon$, it is obvious that $\hat{\tau^*}=\epsilon$. Let
$\lambda:\mathbb{E}\rightarrow\Lambda$ be a labelling function and let $\lambda(\tau)=\tau$ and $\lambda(\epsilon)=\epsilon$. And $\leq$, $\sharp$ are binary relations on $\mathbb{E}$,
called causality and conflict respectively, such that:

\begin{enumerate}
  \item $\leq$ is a partial order and $\lceil e \rceil = \{e'\in \mathbb{E}|e'\leq e\}$ is finite for all $e\in \mathbb{E}$. It is easy to see that
  $e\leq\tau^*\leq e'=e\leq\tau\leq\cdots\leq\tau\leq e'$, then $e\leq e'$.
  \item $\sharp$ is irreflexive, symmetric and hereditary with respect to $\leq$, that is, for all $e,e',e''\in \mathbb{E}$, if $e\sharp e'\leq e''$, then $e\sharp e''$.
\end{enumerate}

Then, the concepts of consistency and concurrency can be drawn from the above definition:

\begin{enumerate}
  \item $e,e'\in \mathbb{E}$ are consistent, denoted as $e\frown e'$, if $\neg(e\sharp e')$. A subset $X\subseteq \mathbb{E}$ is called consistent, if $e\frown e'$ for all
  $e,e'\in X$.
  \item $e,e'\in \mathbb{E}$ are concurrent, denoted as $e\parallel e'$, if $\neg(e\leq e')$, $\neg(e'\leq e)$, and $\neg(e\sharp e')$.
\end{enumerate}
\end{definition}

\begin{definition}[Configuration]
Let $\mathcal{E}$ be a PES. A (finite) configuration in $\mathcal{E}$ is a (finite) consistent subset of events $C\subseteq \mathcal{E}$, closed with respect to causality (i.e.
$\lceil C\rceil=C$), and a data state $s\in S$ with $S$ the set of all data states, denoted $\langle C, s\rangle$. The set of finite configurations of $\mathcal{E}$ is denoted by
$\langle\mathcal{C}(\mathcal{E}), S\rangle$. We let $\hat{C}=C\backslash\{\tau\}\cup\{\epsilon\}$.
\end{definition}

A consistent subset of $X\subseteq \mathbb{E}$ of events can be seen as a pomset. Given $X, Y\subseteq \mathbb{E}$, $\hat{X}\sim \hat{Y}$ if $\hat{X}$ and $\hat{Y}$ are isomorphic as
pomsets. In the following of the paper, we say $C_1\sim C_2$, we mean $\hat{C_1}\sim\hat{C_2}$.

\begin{definition}[Strongly pomset, step bisimilarity]
Let $\mathcal{E}_1$, $\mathcal{E}_2$ be PESs. A strongly pomset bisimulation is a relation $R\subseteq\langle\mathcal{C}(\mathcal{E}_1),s\rangle\times\langle\mathcal{C}(\mathcal{E}_2),s\rangle$,
such that if $(\langle C_1,s\rangle,\langle C_2,s\rangle)\in R$, and $\langle C_1,s\rangle\xrightarrow{X_1}\langle C_1',s'\rangle$ (with $\mathcal{E}_1\xrightarrow{X_1}\mathcal{E}_1'$) then $\langle C_2,s\rangle\xrightarrow{X_2}\langle C_2',s'\rangle$ (with
$\mathcal{E}_2\xrightarrow{X_2}\mathcal{E}_2'$), with $X_1\subseteq \mathbb{E}_1$, $X_2\subseteq \mathbb{E}_2$, $X_1\sim X_2$ and $(\langle C_1',s'\rangle,\langle C_2',s'\rangle)\in R$:
\begin{enumerate}
  \item for each fresh action $\alpha\in X_1$, if $\langle C_1'',s''\rangle\xrightarrow{\alpha}\langle C_1''',s'''\rangle$ (with $\mathcal{E}_1''\xrightarrow{\alpha}\mathcal{E}_1'''$),
  then for some $C_2''$ and $\langle C_2''',s'''\rangle$, $\langle C_2'',s''\rangle\xrightarrow{\alpha}\langle C_2''',s'''\rangle$ (with
  $\mathcal{E}_2''\xrightarrow{\alpha}\mathcal{E}_2'''$), such that if $(\langle C_1'',s''\rangle,\langle C_2'',s''\rangle)\in R$ then $(\langle C_1''',s'''\rangle,\langle C_2''',s'''\rangle)\in R$;
  \item for each $x(y)\in X_1$ with ($y\notin n(\mathcal{E}_1, \mathcal{E}_2)$), if $\langle C_1'',s''\rangle\xrightarrow{x(y)}\langle C_1''',s'''\rangle$ (with
  $\mathcal{E}_1''\xrightarrow{x(y)}\mathcal{E}_1'''\{w/y\}$) for all $w$, then for some $C_2''$ and $C_2'''$, $\langle C_2'',s''\rangle\xrightarrow{x(y)}\langle C_2''',s'''\rangle$
  (with $\mathcal{E}_2''\xrightarrow{x(y)}\mathcal{E}_2'''\{w/y\}$) for all $w$, such that if $(\langle C_1'',s''\rangle,\langle C_2'',s''\rangle)\in R$ then $(\langle C_1''',s'''\rangle,\langle C_2''',s'''\rangle)\in R$;
  \item for each two $x_1(y),x_2(y)\in X_1$ with ($y\notin n(\mathcal{E}_1, \mathcal{E}_2)$), if $\langle C_1'',s''\rangle\xrightarrow{\{x_1(y),x_2(y)\}}\langle C_1''',s'''\rangle$
  (with $\mathcal{E}_1''\xrightarrow{\{x_1(y),x_2(y)\}}\mathcal{E}_1'''\{w/y\}$) for all $w$, then for some $C_2''$ and $C_2'''$,
  $\langle C_2'',s''\rangle\xrightarrow{\{x_1(y),x_2(y)\}}\langle C_2''',s'''\rangle$ (with $\mathcal{E}_2''\xrightarrow{\{x_1(y),x_2(y)\}}\mathcal{E}_2'''\{w/y\}$) for all $w$, such
  that if $(\langle C_1'',s''\rangle,\langle C_2'',s''\rangle)\in R$ then $(\langle C_1''',s'''\rangle,\langle C_2''',s'''\rangle)\in R$;
  \item for each $\overline{x}(y)\in X_1$ with $y\notin n(\mathcal{E}_1, \mathcal{E}_2)$, if $\langle C_1'',s''\rangle\xrightarrow{\overline{x}(y)}\langle C_1''',s'''\rangle$
  (with $\mathcal{E}_1''\xrightarrow{\overline{x}(y)}\mathcal{E}_1'''$), then for some $C_2''$ and $C_2'''$, $\langle C_2'',s''\rangle\xrightarrow{\overline{x}(y)}\langle C_2''',s'''\rangle$
  (with $\mathcal{E}_2''\xrightarrow{\overline{x}(y)}\mathcal{E}_2'''$), such that if $(\langle C_1'',s''\rangle,\langle C_2'',s''\rangle)\in R$ then $(\langle C_1''',s'''\rangle,\langle C_2''',s'''\rangle)\in R$.
\end{enumerate}
 and vice-versa.
 
We say that $\mathcal{E}_1$, $\mathcal{E}_2$ are strongly pomset bisimilar, written $\mathcal{E}_1\sim_p\mathcal{E}_2$, if there exists a strongly pomset
bisimulation $R$, such that $(\emptyset,\emptyset)\in R$. By replacing pomset transitions with steps, we can get the definition of strongly step bisimulation.
When PESs $\mathcal{E}_1$ and $\mathcal{E}_2$ are strongly step bisimilar, we write $\mathcal{E}_1\sim_s\mathcal{E}_2$.
\end{definition}

\begin{definition}[Posetal product]
Given two PESs $\mathcal{E}_1$, $\mathcal{E}_2$, the posetal product of their configurations, denoted
$\langle\mathcal{C}(\mathcal{E}_1),S\rangle\overline{\times}\langle\mathcal{C}(\mathcal{E}_2),S\rangle$, is defined as

$$\{(\langle C_1,s\rangle,f,\langle C_2,s\rangle)|C_1\in\mathcal{C}(\mathcal{E}_1),C_2\in\mathcal{C}(\mathcal{E}_2),f:C_1\rightarrow C_2 \textrm{ isomorphism}\}.$$

A subset $R\subseteq\langle\mathcal{C}(\mathcal{E}_1),S\rangle\overline{\times}\langle\mathcal{C}(\mathcal{E}_2),S\rangle$ is called a posetal relation. We say that $R$ is downward
closed when for any
$(\langle C_1,s\rangle,f,\langle C_2,s\rangle),(\langle C_1',s'\rangle,f',\langle C_2',s'\rangle)\in \langle\mathcal{C}(\mathcal{E}_1),S\rangle\overline{\times}\langle\mathcal{C}(\mathcal{E}_2),S\rangle$,
if $(\langle C_1,s\rangle,f,\langle C_2,s\rangle)\subseteq (\langle C_1',s'\rangle,f',\langle C_2',s'\rangle)$ pointwise and $(\langle C_1',s'\rangle,f',\langle C_2',s'\rangle)\in R$,
then $(\langle C_1,s\rangle,f,\langle C_2,s\rangle)\in R$.

For $f:X_1\rightarrow X_2$, we define $f[x_1\mapsto x_2]:X_1\cup\{x_1\}\rightarrow X_2\cup\{x_2\}$, $z\in X_1\cup\{x_1\}$,(1)$f[x_1\mapsto x_2](z)=
x_2$,if $z=x_1$;(2)$f[x_1\mapsto x_2](z)=f(z)$, otherwise. Where $X_1\subseteq \mathbb{E}_1$, $X_2\subseteq \mathbb{E}_2$, $x_1\in \mathbb{E}_1$, $x_2\in \mathbb{E}_2$.
\end{definition}

\begin{definition}[Strongly (hereditary) history-preserving bisimilarity]
A strongly history-preserving (hp-) bisimulation is a posetal relation $R\subseteq\mathcal{C}(\mathcal{E}_1)\overline{\times}\mathcal{C}(\mathcal{E}_2)$ such that
if $(\langle C_1,s\rangle,f,\langle C_2,s\rangle)\in R$, and
\begin{enumerate}
  \item for $e_1=\alpha$ a fresh action, if $\langle C_1,s\rangle\xrightarrow{\alpha}\langle C_1',s'\rangle$ (with $\mathcal{E}_1\xrightarrow{\alpha}\mathcal{E}_1'$), then for some
  $C_2'$ and $e_2=\alpha$, $\langle C_2,s\rangle\xrightarrow{\alpha}\langle C_2',s'\rangle$ (with $\mathcal{E}_2\xrightarrow{\alpha}\mathcal{E}_2'$), such that
  $(\langle C_1',s'\rangle,f[e_1\mapsto e_2],\langle C_2',s'\rangle)\in R$;
  \item for $e_1=x(y)$ with ($y\notin n(\mathcal{E}_1, \mathcal{E}_2)$), if $\langle C_1,s\rangle\xrightarrow{x(y)}\langle C_1',s'\rangle$ (with
  $\mathcal{E}_1\xrightarrow{x(y)}\mathcal{E}_1'\{w/y\}$) for all $w$, then for some $C_2'$ and $e_2=x(y)$, $\langle C_2,s\rangle\xrightarrow{x(y)}\langle C_2',s'\rangle$ (with
  $\mathcal{E}_2\xrightarrow{x(y)}\mathcal{E}_2'\{w/y\}$) for all $w$, such that $(\langle C_1',s'\rangle,f[e_1\mapsto e_2],\langle C_2',s'\rangle)\in R$;
  \item for $e_1=\overline{x}(y)$ with $y\notin n(\mathcal{E}_1, \mathcal{E}_2)$, if $\langle C_1,s\rangle\xrightarrow{\overline{x}(y)}\langle C_1',s'\rangle$ (with
  $\mathcal{E}_1\xrightarrow{\overline{x}(y)}\mathcal{E}_1'$), then for some $C_2'$ and $e_2=\overline{x}(y)$, $\langle C_2,s\rangle\xrightarrow{\overline{x}(y)}\langle C_2',s'\rangle$
  (with $\mathcal{E}_2\xrightarrow{\overline{x}(y)}\mathcal{E}_2'$), such that $(\langle C_1',s'\rangle,f[e_1\mapsto e_2],\langle C_2',s'\rangle)\in R$.
\end{enumerate}
and vice-versa. $\mathcal{E}_1,\mathcal{E}_2$
are strongly history-preserving (hp-)bisimilar and are written $\mathcal{E}_1\sim_{hp}\mathcal{E}_2$ if there exists a strongly hp-bisimulation
$R$ such that $(\emptyset,\emptyset,\emptyset)\in R$.

A strongly hereditary history-preserving (hhp-)bisimulation is a downward closed strongly hp-bisimulation. $\mathcal{E}_1,\mathcal{E}_2$ are FR
strongly hereditary history-preserving (hhp-)bisimilar and are written $\mathcal{E}_1\sim_{hhp}\mathcal{E}_2$.
\end{definition}

\subsection{Syntax and Operational Semantics}\label{sos3}

We assume an infinite set $\mathcal{N}$ of (action or event) names, and use $a,b,c,\cdots$ to range over $\mathcal{N}$, use $x,y,z,w,u,v$ as meta-variables over names. We denote by
$\overline{\mathcal{N}}$ the set of co-names and let $\overline{a},\overline{b},\overline{c},\cdots$ range over $\overline{\mathcal{N}}$. Then we set
$\mathcal{L}=\mathcal{N}\cup\overline{\mathcal{N}}$ as the set of labels, and use $l,\overline{l}$ to range over $\mathcal{L}$. We extend complementation to $\mathcal{L}$ such that
$\overline{\overline{a}}=a$. Let $\tau$ denote the silent step (internal action or event) and define $Act=\mathcal{L}\cup\{\tau\}$ to be the set of actions, $\alpha,\beta$ range over
$Act$. And $K,L$ are used to stand for subsets of $\mathcal{L}$ and $\overline{L}$ is used for the set of complements of labels in $L$.

Further, we introduce a set $\mathcal{X}$ of process variables, and a set $\mathcal{K}$ of process constants, and let $X,Y,\cdots$ range over $\mathcal{X}$, and $A,B,\cdots$ range over
$\mathcal{K}$. For each process constant $A$, a nonnegative arity $ar(A)$ is assigned to it. Let $\widetilde{x}=x_1,\cdots,x_{ar(A)}$ be a tuple of distinct name variables, then
$A(\widetilde{x})$ is called a process constant. $\widetilde{X}$ is a tuple of distinct process variables, and also $E,F,\cdots$ range over the recursive expressions. We write
$\mathcal{P}$ for the set of processes. Sometimes, we use $I,J$ to stand for an indexing set, and we write $E_i:i\in I$ for a family of expressions indexed by $I$. $Id_D$ is the
identity function or relation over set $D$. The symbol $\equiv_{\alpha}$ denotes equality under standard alpha-convertibility, note that the subscript $\alpha$ has no relation to the
action $\alpha$.

Let $G_{at}$ be the set of atomic guards, $\delta$ be the deadlock constant, and $\epsilon$ be the empty action, and extend $Act$ to $Act\cup\{\epsilon\}\cup\{\delta\}$. We extend
$G_{at}$ to the set of basic guards $G$ with element $\phi,\psi,\cdots$, which is generated by the following formation rules:

$$\phi::=\delta|\epsilon|\neg\phi|\psi\in G_{at}|\phi+\psi|\phi\cdot\psi$$

The predicate $test(\phi,s)$ represents that $\phi$ holds in the state $s$, and $test(\epsilon,s)$ holds and $test(\delta,s)$ does not hold. $effect(e,s)\in S$ denotes $s'$ in
$s\xrightarrow{e}s'$. The predicate weakest precondition $wp(e,\phi)$ denotes that $\forall s,s'\in S, test(\phi,effect(e,s))$ holds.

\subsubsection{Syntax}

We use the Prefix $.$ to model the causality relation $\leq$ in true concurrency, the Summation $+$ to model the conflict relation $\sharp$ in true concurrency, and the Composition $\parallel$ to explicitly model concurrent relation in true concurrency. And we follow the
conventions of process algebra.

\begin{definition}[Syntax]\label{syntax3}
A truly concurrent process $\pi_{tc}$ with guards is defined inductively by the following formation rules:

\begin{enumerate}
  \item $A(\widetilde{x})\in\mathcal{P}$;
  \item $\phi\in\mathcal{P}$;
  \item $\textbf{nil}\in\mathcal{P}$;
  \item if $P\in\mathcal{P}$, then the Prefix $\tau.P\in\mathcal{P}$, for $\tau\in Act$ is the silent action;
  \item if $P\in\mathcal{P}$, then the Prefix $\phi.P\in\mathcal{P}$, for $\phi\in G_{at}$;
  \item if $P\in\mathcal{P}$, then the Output $\overline{x}y.P\in\mathcal{P}$, for $x,y\in Act$;
  \item if $P\in\mathcal{P}$, then the Input $x(y).P\in\mathcal{P}$, for $x,y\in Act$;
  \item if $P\in\mathcal{P}$, then the Restriction $(x)P\in\mathcal{P}$, for $x\in Act$;
  \item if $P,Q\in\mathcal{P}$, then the Summation $P+Q\in\mathcal{P}$;
  \item if $P,Q\in\mathcal{P}$, then the Composition $P\parallel Q\in\mathcal{P}$;
\end{enumerate}

The standard BNF grammar of syntax of $\pi_{tc}$ with guards can be summarized as follows:

$$P::=A(\widetilde{x})|\textbf{nil}|\tau.P| \overline{x}y.P | x(y).P| (x)P  |\phi.P|  P+P| P\parallel P.$$
\end{definition}

In $\overline{x}y$, $x(y)$ and $\overline{x}(y)$, $x$ is called the subject, $y$ is called the object and it may be free or bound.

\begin{definition}[Free variables]
The free names of a process $P$, $fn(P)$, are defined as follows.

\begin{enumerate}
  \item $fn(A(\widetilde{x}))\subseteq\{\widetilde{x}\}$;
  \item $fn(\textbf{nil})=\emptyset$;
  \item $fn(\tau.P)=fn(P)$;
  \item $fn(\phi.P)=fn(P)$;
  \item $fn(\overline{x}y.P)=fn(P)\cup\{x\}\cup\{y\}$;
  \item $fn(x(y).P)=fn(P)\cup\{x\}-\{y\}$;
  \item $fn((x)P)=fn(P)-\{x\}$;
  \item $fn(P+Q)=fn(P)\cup fn(Q)$;
  \item $fn(P\parallel Q)=fn(P)\cup fn(Q)$.
\end{enumerate}
\end{definition}

\begin{definition}[Bound variables]
Let $n(P)$ be the names of a process $P$, then the bound names $bn(P)=n(P)-fn(P)$.
\end{definition}

For each process constant schema $A(\widetilde{x})$, a defining equation of the form

$$A(\widetilde{x})\overset{\text{def}}{=}P$$

is assumed, where $P$ is a process with $fn(P)\subseteq \{\widetilde{x}\}$.

\begin{definition}[Substitutions]\label{subs3}
A substitution is a function $\sigma:\mathcal{N}\rightarrow\mathcal{N}$. For $x_i\sigma=y_i$ with $1\leq i\leq n$, we write $\{y_1/x_1,\cdots,y_n/x_n\}$ or
$\{\widetilde{y}/\widetilde{x}\}$ for $\sigma$. For a process $P\in\mathcal{P}$, $P\sigma$ is defined inductively as follows:
\begin{enumerate}
  \item if $P$ is a process constant $A(\widetilde{x})=A(x_1,\cdots,x_n)$, then $P\sigma=A(x_1\sigma,\cdots,x_n\sigma)$;
  \item if $P=\textbf{nil}$, then $P\sigma=\textbf{nil}$;
  \item if $P=\tau.P'$, then $P\sigma=\tau.P'\sigma$;
  \item if $P=\phi.P'$, then $P\sigma=\phi.P'\sigma$;
  \item if $P=\overline{x}y.P'$, then $P\sigma=\overline{x\sigma}y\sigma.P'\sigma$;
  \item if $P=x(y).P'$, then $P\sigma=x\sigma(y).P'\sigma$;
  \item if $P=(x)P'$, then $P\sigma=(x\sigma)P'\sigma$;
  \item if $P=P_1+P_2$, then $P\sigma=P_1\sigma+P_2\sigma$;
  \item if $P=P_1\parallel P_2$, then $P\sigma=P_1\sigma \parallel P_2\sigma$.
\end{enumerate}
\end{definition}

\subsubsection{Operational Semantics}

The operational semantics is defined by LTSs (labelled transition systems), and it is detailed by the following definition.

\begin{definition}[Semantics]\label{semantics3}
The operational semantics of $\pi_{tc}$ with guards corresponding to the syntax in Definition \ref{syntax3} is defined by a series of transition rules, named $\textbf{ACT}$, $\textbf{SUM}$,
$\textbf{IDE}$, $\textbf{PAR}$, $\textbf{COM}$, $\textbf{CLOSE}$, $\textbf{RES}$, $\textbf{OPEN}$ indicate that the rules are associated respectively with Prefix, Summation,
Identity, Parallel Composition, Communication, and Restriction in Definition \ref{syntax3}. They are shown in \ref{TRForPITC3}.

\begin{center}
    \begin{table}
        \[\textbf{TAU-ACT}\quad \frac{}{\langle\tau.P,s\rangle\xrightarrow{\tau}\langle P,\tau(s)\rangle}\]

        \[\textbf{OUTPUT-ACT}\quad \frac{}{\langle\overline{x}y.P,s\rangle\xrightarrow{\overline{x}y}\langle P,s'\rangle}\]

        \[\textbf{INPUT-ACT}\quad \frac{}{\langle x(z).P,s\rangle\xrightarrow{x(w)}\langle P\{w/z\},s'\rangle}\quad (w\notin fn((z)P))\]

        \[\textbf{PAR}_1\quad \frac{\langle P,s\rangle\xrightarrow{\alpha}\langle P',s'\rangle\quad \langle Q,s\rangle\nrightarrow}{\langle P\parallel Q,s\rangle\xrightarrow{\alpha}\langle P'\parallel Q,s'\rangle}\quad (bn(\alpha)\cap fn(Q)=\emptyset)\]

        \[\textbf{PAR}_2\quad \frac{\langle Q,s\rangle\xrightarrow{\alpha}\langle Q',s'\rangle\quad \langle P,s\rangle\nrightarrow}{\langle P\parallel Q,s\rangle\xrightarrow{\alpha}\langle P\parallel Q',s'\rangle}\quad (bn(\alpha)\cap fn(P)=\emptyset)\]

        \[\textbf{PAR}_3\quad \frac{\langle P,s\rangle\xrightarrow{\alpha}\langle P',s'\rangle\quad \langle Q,s\rangle\xrightarrow{\beta}\langle Q',s''\rangle}{\langle P\parallel Q,s\rangle\xrightarrow{\{\alpha,\beta\}}\langle P'\parallel Q',s'\cup s''\rangle}\] $(\beta\neq\overline{\alpha}, bn(\alpha)\cap bn(\beta)=\emptyset, bn(\alpha)\cap fn(Q)=\emptyset,bn(\beta)\cap fn(P)=\emptyset)$

        \[\textbf{PAR}_4\quad \frac{\langle P,s\rangle\xrightarrow{x_1(z)}\langle P',s'\rangle\quad \langle Q,s\rangle\xrightarrow{x_2(z)}\langle Q',s''\rangle}{\langle P\parallel Q,s\rangle\xrightarrow{\{x_1(w),x_2(w)\}}\langle P'\{w/z\}\parallel Q'\{w/z\},s'\cup s''\rangle}\quad (w\notin fn((z)P)\cup fn((z)Q))\]

        \[\textbf{COM}\quad \frac{\langle P,s\rangle\xrightarrow{\overline{x}y}\langle P',s'\rangle\quad \langle Q,s\rangle\xrightarrow{x(z)}\langle Q',s''\rangle}{\langle P\parallel Q,s\rangle\xrightarrow{\tau}\langle P'\parallel Q'\{y/z\},s'\cup s''\rangle}\]

        \[\textbf{CLOSE}\quad \frac{\langle P,s\rangle\xrightarrow{\overline{x}(w)}\langle P',s'\rangle\quad \langle Q,s\rangle\xrightarrow{x(w)}\langle Q',s''\rangle}{\langle P\parallel Q,s\rangle\xrightarrow{\tau}\langle (w)(P'\parallel Q'),s'\cup s''\rangle}\]

        \caption{Transition rules}
        \label{TRForPITC3}
    \end{table}
\end{center}

\begin{center}
    \begin{table}
        \[\textbf{SUM}_1\quad \frac{\langle P,s\rangle\xrightarrow{\alpha}\langle P',s'\rangle}{\langle P+Q,s\rangle\xrightarrow{\alpha}\langle P',s'\rangle}\]

        \[\textbf{SUM}_2\quad \frac{\langle P,s\rangle\xrightarrow{\{\alpha_1,\cdots,\alpha_n\}}\langle P',s'\rangle}{\langle P+Q,s\rangle\xrightarrow{\{\alpha_1,\cdots,\alpha_n\}}\langle P',s'\rangle}\]

        \[\textbf{IDE}_1\quad\frac{\langle P\{\widetilde{y}/\widetilde{x}\},s\rangle\xrightarrow{\alpha}\langle P',s'\rangle}{\langle A(\widetilde{y}),s\rangle\xrightarrow{\alpha}\langle P',s'\rangle}\quad (A(\widetilde{x})\overset{\text{def}}{=}P)\]

        \[\textbf{IDE}_2\quad\frac{\langle P\{\widetilde{y}/\widetilde{x}\},s\rangle\xrightarrow{\{\alpha_1,\cdots,\alpha_n\}}\langle P',s'\rangle} {\langle A(\widetilde{y}),s\rangle\xrightarrow{\{\alpha_1,\cdots,\alpha_n\}}\langle P',s'\rangle}\quad (A(\widetilde{x})\overset{\text{def}}{=}P)\]

        \[\textbf{RES}_1\quad \frac{\langle P,s\rangle\xrightarrow{\alpha}\langle P',s'\rangle}{\langle (y)P,s\rangle\xrightarrow{\alpha}\langle (y)P',s'\rangle}\quad (y\notin n(\alpha))\]

        \[\textbf{RES}_2\quad \frac{\langle P,s\rangle\xrightarrow{\{\alpha_1,\cdots,\alpha_n\}}\langle P',s'\rangle}{\langle (y)P,s\rangle\xrightarrow{\{\alpha_1,\cdots,\alpha_n\}}\langle (y)P',s'\rangle}\quad (y\notin n(\alpha_1)\cup\cdots\cup n(\alpha_n))\]

        \[\textbf{OPEN}_1\quad \frac{\langle P,s\rangle\xrightarrow{\overline{x}y}\langle P',s'\rangle}{\langle (y)P,s\rangle\xrightarrow{\overline{x}(w)}\langle P'\{w/y\},s'\rangle} \quad (y\neq x, w\notin fn((y)P'))\]

        \[\textbf{OPEN}_2\quad \frac{\langle P,s\rangle\xrightarrow{\{\overline{x}_1 y,\cdots,\overline{x}_n y\}}\langle P',s'\rangle}{\langle(y)P,s\rangle\xrightarrow{\{\overline{x}_1(w),\cdots,\overline{x}_n(w)\}}\langle P'\{w/y\},s'\rangle} \quad (y\neq x_1\neq\cdots\neq x_n, w\notin fn((y)P'))\]

        \caption{Transition rules (continuing)}
        \label{TRForPITC32}
    \end{table}
\end{center}
\end{definition}

\subsubsection{Properties of Transitions}

\begin{proposition}
\begin{enumerate}
  \item If $\langle P,s\rangle\xrightarrow{\alpha}\langle P',s'\rangle$ then
  \begin{enumerate}
    \item $fn(\alpha)\subseteq fn(P)$;
    \item $fn(P')\subseteq fn(P)\cup bn(\alpha)$;
  \end{enumerate}
  \item If $\langle P,s\rangle\xrightarrow{\{\alpha_1,\cdots,\alpha_n\}}\langle P',s\rangle$ then
  \begin{enumerate}
    \item $fn(\alpha_1)\cup\cdots\cup fn(\alpha_n)\subseteq fn(P)$;
    \item $fn(P')\subseteq fn(P)\cup bn(\alpha_1)\cup\cdots\cup bn(\alpha_n)$.
  \end{enumerate}
\end{enumerate}
\end{proposition}

\begin{proof}
By induction on the depth of inference.
\end{proof}

\begin{proposition}
Suppose that $\langle P,s\rangle\xrightarrow{\alpha(y)}\langle P',s'\rangle$, where $\alpha=x$ or $\alpha=\overline{x}$, and $x\notin n(P)$, then there exists some $P''\equiv_{\alpha}P'\{z/y\}$,
$\langle P,s\rangle\xrightarrow{\alpha(z)}\langle P'',s''\rangle$.
\end{proposition}

\begin{proof}
By induction on the depth of inference.
\end{proof}

\begin{proposition}
If $\langle P,s\rangle\xrightarrow{\alpha} \langle P',s'\rangle$, $bn(\alpha)\cap fn(P'\sigma)=\emptyset$, and $\sigma\lceil bn(\alpha)=id$, then there exists some $P''\equiv_{\alpha}P'\sigma$,
$\langle P,s\rangle\sigma\xrightarrow{\alpha\sigma}\langle P'',s''\rangle$.
\end{proposition}

\begin{proof}
By the definition of substitution (Definition \ref{subs3}) and induction on the depth of inference.
\end{proof}

\begin{proposition}
\begin{enumerate}
  \item If $\langle P\{w/z\},s\rangle\xrightarrow{\alpha}\langle P',s'\rangle$, where $w\notin fn(P)$ and $bn(\alpha)\cap fn(P,w)=\emptyset$, then there exist some $Q$ and $\beta$ with $Q\{w/z\}\equiv_{\alpha}P'$ and
  $\beta\sigma=\alpha$, $\langle P,s\rangle\xrightarrow{\beta}\langle Q,s'\rangle$;
  \item If $\langle P\{w/z\},s\rangle\xrightarrow{\{\alpha_1,\cdots,\alpha_n\}}\langle P',s'\rangle$, where $w\notin fn(P)$ and $bn(\alpha_1)\cap\cdots\cap bn(\alpha_n)\cap fn(P,w)=\emptyset$, then there exist some $Q$
  and $\beta_1,\cdots,\beta_n$ with $Q\{w/z\}\equiv_{\alpha}P'$ and $\beta_1\sigma=\alpha_1,\cdots,\beta_n\sigma=\alpha_n$, $\langle P,s\rangle\xrightarrow{\{\beta_1,\cdots,\beta_n\}}\langle Q,s'\rangle$.
\end{enumerate}

\end{proposition}

\begin{proof}
By the definition of substitution (Definition \ref{subs3}) and induction on the depth of inference.
\end{proof}

\subsection{Strong Bisimilarities}\label{s3}

\subsubsection{Laws and Congruence}

\begin{theorem}
$\equiv_{\alpha}$ are strongly truly concurrent bisimulations. That is, if $P\equiv_{\alpha}Q$, then,
\begin{enumerate}
  \item $P\sim_p  Q$;
  \item $P\sim_s  Q$;
  \item $P\sim_{hp}  Q$;
  \item $P\sim_{hhp}  Q$.
\end{enumerate}
\end{theorem}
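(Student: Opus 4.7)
The plan is to exhibit, for each of the four bisimilarities, a witness relation built directly from $\equiv_\alpha$ itself and then verify the transfer clauses by appeal to the substitution/renaming propositions established just before this theorem. Specifically, for pomset bisimilarity I would take
$$R=\{(\langle C_1,s\rangle,\langle C_2,s\rangle) : \text{there exist }P_1\equiv_\alpha P_2\text{ reaching }\langle C_1,s\rangle,\langle C_2,s\rangle\text{ by matching derivations}\},$$
seeded with $(\emptyset,\emptyset)$ corresponding to the original pair $P\equiv_\alpha Q$. The crux is that $\equiv_\alpha$ is a syntactic congruence that renames only bound names, so whenever $P\equiv_\alpha Q$ and $\langle P,s\rangle\xrightarrow{X_1}\langle P',s'\rangle$ is derivable, the preceding proposition on substitutions (the one producing a $\beta$ with $\beta\sigma=\alpha$) together with the alpha-renaming proposition guarantees a matching transition $\langle Q,s\rangle\xrightarrow{X_2}\langle Q',s'\rangle$ with $X_1\sim X_2$ as pomsets and $P'\equiv_\alpha Q'$.

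For each of the four clauses of the pomset/step bisimulation definition, I would check the matching separately: fresh actions $\alpha$ are trivial since alpha-conversion leaves them untouched; for bound input $x(y)$ and bound output $\overline{x}(y)$ I would choose the bound name $y$ outside $n(\mathcal{E}_1,\mathcal{E}_2)$ using the alpha-renaming proposition, so the derivative processes remain $\equiv_\alpha$ after any substitution $\{w/y\}$; the simultaneous input case $\{x_1(y),x_2(y)\}$ is handled analogously, by first aligning the bound names on both sides. Step bisimilarity then follows immediately, since $\equiv_\alpha$ preserves the concurrency structure of a transition set.

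For $\sim_{hp}$ I would enlarge the relation to a posetal one,
$$R_{hp}=\{(\langle C_1,s\rangle,f,\langle C_2,s\rangle):P_1\equiv_\alpha P_2,\ f\text{ induced by the matching of single-event transitions along the derivation}\},$$
and extend $f$ by $f[e_1\mapsto e_2]$ at each step, using $e_1=e_2$ up to alpha-renaming of bound objects. Because $\equiv_\alpha$ preserves causality and conflict (both are determined by the derivation tree, which is preserved up to renaming of bound names), the map $f$ remains an order-isomorphism at each stage. For $\sim_{hhp}$ I would take the downward closure of $R_{hp}$; since $\equiv_\alpha$ on subconfigurations is inherited from $\equiv_\alpha$ on the ambient processes, the closure still lies inside $R_{hp}$, and downward closure is immediate.

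The main obstacle I expect is the bookkeeping around bound names in the input, bound-output, and simultaneous-input rules: one has to choose the fresh $y$ and the substituted $w$ simultaneously on both sides so that (i) the resulting processes are still $\equiv_\alpha$, (ii) $y\notin n(\mathcal{E}_1,\mathcal{E}_2)$, and (iii) the isomorphism $f$ extends consistently. This is exactly where the renaming proposition (the one producing $Q$ and $\beta$ from a $P\{w/z\}$ transition) does the real work, and the verification reduces to a careful but routine induction on the depth of inference of the transitions, mirroring the proofs of the preceding propositions.
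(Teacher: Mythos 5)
Your proposal is correct and follows essentially the same route as the paper: the paper's proof establishes, by induction on the depth of inference, exactly the two transfer facts you derive from the renaming and substitution propositions (matching of free actions with $P'\equiv_{\alpha}Q'$, and matching of bound actions $a(y)$ after renaming the bound object to a fresh $z$ with $P'\{z/y\}\equiv_{\alpha}Q'$), and then concludes each of the four bisimilarities directly from their definitions. Your version merely makes explicit the witness relations and the posetal/downward-closure bookkeeping that the paper leaves implicit.
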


\begin{proof}
By induction on the depth of inference, we can get the following facts:

\begin{enumerate}
  \item If $\alpha$ is a free action and $\langle P,s\rangle\xrightarrow{\alpha}\langle P',s'\rangle$, then equally for some $Q'$ with $P'\equiv_{\alpha}Q'$,
  $\langle Q,s\rangle\xrightarrow{\alpha}\langle Q',s'\rangle$;
  \item If $\langle P,s\rangle\xrightarrow{a(y)}\langle P',s'\rangle$ with $a=x$ or $a=\overline{x}$ and $z\notin n(Q)$, then equally for some $Q'$ with $P'\{z/y\}\equiv_{\alpha}Q'$,
  $\langle Q,s\rangle\xrightarrow{a(z)}\langle Q',s'\rangle$.
\end{enumerate}

Then, we can get:

\begin{enumerate}
  \item by the definition of strongly pomset bisimilarity, $P\sim_p  Q$;
  \item by the definition of strongly step bisimilarity, $P\sim_s  Q$;
  \item by the definition of strongly hp-bisimilarity, $P\sim_{hp}  Q$;
  \item by the definition of strongly hhp-bisimilarity, $P\sim_{hhp}  Q$.
\end{enumerate}
\end{proof}

\begin{proposition}[Summation laws for strongly pomset bisimulation] The Summation laws for strongly pomset bisimulation are as follows.

\begin{enumerate}
  \item $P+Q\sim_p  Q+P$;
  \item $P+(Q+R)\sim_p  (P+Q)+R$;
  \item $P+P\sim_p  P$;
  \item $P+\textbf{nil}\sim_p  P$.
\end{enumerate}

\end{proposition}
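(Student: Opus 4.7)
The plan is to prove each of the four laws by exhibiting an explicit strong pomset bisimulation containing the required pair. The essential operational fact I would use is the shape of the \textbf{SUM} rules: by \textbf{SUM}$_1$ and \textbf{SUM}$_2$, a transition $\langle P+Q,s\rangle\xrightarrow{X}\langle R,s'\rangle$ is derivable iff either $\langle P,s\rangle\xrightarrow{X}\langle R,s'\rangle$ or $\langle Q,s\rangle\xrightarrow{X}\langle R,s'\rangle$, and in particular the residual is a residual of one summand rather than a sum. Hence once a transition out of a summation is taken, the two sides collapse to the same process, so the bisimilarity closure reduces to reflexivity.

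For (1), I would take $R=\{(\langle P+Q,s\rangle,\langle Q+P,s\rangle):P,Q\in\mathcal{P},\,s\in S\}\cup Id$. Any transition of $P+Q$ originates by \textbf{SUM} in either $P$ or $Q$; the symmetric \textbf{SUM} rule on the right produces the same labelled transition with identical residual, so $X_1=X_2$ (hence $X_1\sim X_2$) and the resulting pair lies in $Id\subseteq R$. For (2), the relation $R=\{(\langle P+(Q+R),s\rangle,\langle (P+Q)+R,s\rangle)\}\cup Id$ works similarly: any derivation of a transition of either side decomposes through successive \textbf{SUM} applications into a transition of exactly one of $P,Q,R$, which the other side can match in the symmetric nesting, again landing in $Id$. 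For (3), use $R=\{(\langle P+P,s\rangle,\langle P,s\rangle)\}\cup Id$: a transition of $P+P$ must come from one of the copies of $P$, and conversely any transition of $P$ is lifted by \textbf{SUM}$_1$. For (4), $R=\{(\langle P+\mathbf{nil},s\rangle,\langle P,s\rangle)\}\cup Id$ works because $\mathbf{nil}$ admits no transitions, so the transitions of $P+\mathbf{nil}$ and $P$ coincide.

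In each case I would verify the four clauses in the definition of strong pomset bisimilarity (fresh action, bound input, simultaneous bound inputs, bound output). These clauses are discharged uniformly: the \textbf{SUM} rules neither introduce nor rename bound names, so the bound-name side conditions are transparent, and the witnesses $C_2'',C_2'''$ can be chosen identically to $C_1'',C_1'''$ because the residuals of the two sides are syntactically equal after taking the summation step. Symmetric reasoning ("and vice-versa") is immediate because the relations $R$ are symmetric up to the identity part.

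I do not expect a real obstacle here: the proof is a routine bisimulation game, and the only mild care needed is to insist that $Id$ (the identity relation on configurations) is included in $R$, so that the residuals formed after the summation step continue to lie in $R$. The pomset matching condition $X_1\sim X_2$ is trivially satisfied since we pick $X_2=X_1$ in every case.
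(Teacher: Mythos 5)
Your proposal is correct and follows essentially the same route as the paper: for each law the paper exhibits exactly the candidate relation you give ($R=\{(P+Q,Q+P)\}\cup\textbf{Id}$, etc.) and omits the verification, which you have simply carried out via the observation that the \textbf{SUM} rules discard the unused summand so every post-transition pair falls into $\textbf{Id}$.
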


\begin{proof}
\begin{enumerate}
  \item $P+Q\sim_p  Q+P$. It is sufficient to prove the relation $R=\{(P+Q, Q+P)\}\cup \textbf{Id}$ is a strongly pomset bisimulation, we omit it;
  \item $P+(Q+R)\sim_p  (P+Q)+R$. It is sufficient to prove the relation $R=\{(P+(Q+R), (P+Q)+R)\}\cup \textbf{Id}$ is a strongly pomset bisimulation, we omit it;
  \item $P+P\sim_p  P$. It is sufficient to prove the relation $R=\{(P+P, P)\}\cup \textbf{Id}$ is a strongly pomset bisimulation, we omit it;
  \item $P+\textbf{nil}\sim_p  P$. It is sufficient to prove the relation $R=\{(P+\textbf{nil}, P)\}\cup \textbf{Id}$ is a strongly pomset bisimulation, we omit it.
\end{enumerate}
\end{proof}

\begin{proposition}[Summation laws for strongly step bisimulation] The Summation laws for strongly step bisimulation are as follows.
\begin{enumerate}
  \item $P+Q\sim_s  Q+P$;
  \item $P+(Q+R)\sim_s  (P+Q)+R$;
  \item $P+P\sim_s  P$;
  \item $P+\textbf{nil}\sim_s  P$.
\end{enumerate}
\end{proposition}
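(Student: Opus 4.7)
The plan is to mirror the proof of the preceding proposition for strongly pomset bisimulation, replacing pomset transitions with step transitions throughout. For each of the four identities I would exhibit the candidate relation $R\cup\textbf{Id}$ with $R$ consisting of the single pair in question (e.g.\ $R=\{(P+Q,Q+P)\}$) and check that $R\cup\textbf{Id}$ satisfies the clauses of strongly step bisimilarity. The only feature of the operational semantics that matters is that Summation has two transition rules, $\textbf{SUM}_1$ for a single action and $\textbf{SUM}_2$ for a step $\{\alpha_1,\ldots,\alpha_n\}$ of pairwise concurrent actions (Table \ref{TRForPITC32}); together these guarantee that every step transition from $P+Q$ arises from a step transition of one summand.

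For commutativity $P+Q\sim_s Q+P$: if $\langle P+Q,s\rangle\xrightarrow{X}\langle P',s'\rangle$ with $X$ a step, then by $\textbf{SUM}_1$ or $\textbf{SUM}_2$ the transition is derived from $\langle P,s\rangle\xrightarrow{X}\langle P',s'\rangle$ or $\langle Q,s\rangle\xrightarrow{X}\langle P',s'\rangle$; applying the same rule instance on the symmetric side produces $\langle Q+P,s\rangle\xrightarrow{X}\langle P',s'\rangle$, and the resulting configuration pair $(P',P')$ lies in $\textbf{Id}$. The four clauses concerning actions with bound objects (fresh actions, inputs $x(y)$, pairs of inputs, and bound outputs $\overline{x}(y)$) are automatic because the residuals are literally equal, so any further transition can be matched trivially through $\textbf{Id}$. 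Associativity, idempotence, and the $\textbf{nil}$-law are handled identically: in each case a step on one side is matched by the same step on the other using $\textbf{SUM}_1$/$\textbf{SUM}_2$, and the continuation already sits in $\textbf{Id}$.

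I do not expect any real obstacle; the argument is verbatim the pomset case with the single additional observation that $\textbf{SUM}_2$ supplies precisely the step-shaped transitions that the definition of $\sim_s$ demands. Consequently, following the style of the preceding proposition, each of the four verifications can be stated by exhibiting the relation and then omitted, and the proof takes the same short form.
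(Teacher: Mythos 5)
Your proposal is correct and follows exactly the paper's route: the paper likewise proves each law by exhibiting the relation $R=\{(\cdot,\cdot)\}\cup\textbf{Id}$ and asserting it is a strongly step bisimulation, omitting the verification that you sketch via $\textbf{SUM}_1$/$\textbf{SUM}_2$. The only cosmetic caveat is that Table \ref{TRForPITC32} states the summation rules only for the left summand, so your appeal to a transition of $Q$ inducing one of $P+Q$ tacitly uses the implicit symmetric rule, which is standard.
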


\begin{proof}
\begin{enumerate}
  \item $P+Q\sim_s  Q+P$. It is sufficient to prove the relation $R=\{(P+Q, Q+P)\}\cup \textbf{Id}$ is a strongly step bisimulation, we omit it;
  \item $P+(Q+R)\sim_s  (P+Q)+R$. It is sufficient to prove the relation $R=\{(P+(Q+R), (P+Q)+R)\}\cup \textbf{Id}$ is a strongly step bisimulation, we omit it;
  \item $P+P\sim_s  P$. It is sufficient to prove the relation $R=\{(P+P, P)\}\cup \textbf{Id}$ is a strongly step bisimulation, we omit it;
  \item $P+\textbf{nil}\sim_s  P$. It is sufficient to prove the relation $R=\{(P+\textbf{nil}, P)\}\cup \textbf{Id}$ is a strongly step bisimulation, we omit it.
\end{enumerate}
\end{proof}

\begin{proposition}[Summation laws for strongly hp-bisimulation] The Summation laws for strongly hp-bisimulation are as follows.
\begin{enumerate}
  \item $P+Q\sim_{hp}  Q+P$;
  \item $P+(Q+R)\sim_{hp}  (P+Q)+R$;
  \item $P+P\sim_{hp}  P$;
  \item $P+\textbf{nil}\sim_{hp}  P$.
\end{enumerate}
\end{proposition}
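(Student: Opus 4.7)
The plan is to imitate the two preceding propositions (for $\sim_p$ and $\sim_s$), but now carry along an order-isomorphism $f$ between the configurations, since strong hp-bisimulation is a posetal relation rather than a plain relation on configurations. For each of the four equations I will exhibit an explicit witness posetal relation of the shape
\[
R \;=\; \bigl\{(\langle C_1,s\rangle,\,f,\,\langle C_2,s\rangle) \;\big|\; \text{the two configurations arise from the same run of the two sides}\bigr\} \;\cup\; \overline{\mathbf{Id}},
\]
where $\overline{\mathbf{Id}}$ is the identity posetal relation (triples $(\langle C,s\rangle,\mathrm{id}_C,\langle C,s\rangle)$). Concretely, for commutativity take the witness generated from $(\emptyset,\emptyset,\emptyset)$ by pairing every $P{+}Q$-run with the matching $Q{+}P$-run under the identity isomorphism on events; similarly for associativity, idempotence and the unit law with $\textbf{nil}$.

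The verification then proceeds by analysing the first transition step from a pair $(\langle C_1,s\rangle,f,\langle C_2,s\rangle)\in R$ using the rules $\textbf{SUM}_1$ and $\textbf{SUM}_2$, together with the fresh-action, bound-input, two-bound-input, and bound-output clauses of Definition of strong hp-bisimulation. In each case the transition of the left-hand side comes from exactly one summand, and the matching transition on the right-hand side is produced by the corresponding summand via the same $\textbf{SUM}$ rule. After the first step the two residuals coincide syntactically, so the extended isomorphism $f[e_1\mapsto e_2]$ is just the identity extension, and the resulting triple lies in $\overline{\mathbf{Id}}\subseteq R$; from there all further moves are matched by identity. The symmetric direction is the same up to swapping the two sides.

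The main thing to be careful about is the bookkeeping on bound names in clauses 2, 3 and 4: when the matched step is $x(y)$, $\{x_1(y),x_2(y)\}$ or $\overline{x}(y)$ with $y\notin n(\mathcal{E}_1,\mathcal{E}_2)$, we must check that the same $w$-instantiation works on both sides and that the posetal isomorphism is well-defined on the freshly added event. This is where writing $R$ as the union with $\overline{\mathbf{Id}}$ is convenient: after one $\textbf{SUM}$-step the surviving process is identical on the two sides, so the fresh event lives in both configurations with the same label and the same causal position, and the extension $f[e_1\mapsto e_2]$ is automatically an isomorphism.

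The expected main obstacle is therefore not conceptual but notational: one has to spell out, for each of the four laws, that the posetal extension $f[e_1\mapsto e_2]$ preserves the causality and conflict inherited from the PES after the $\textbf{SUM}$ rule has collapsed to a single summand. Once this is observed, the four items follow by the same one-line argument given in the proofs of the pomset and step versions above, and so I will state each case briefly, indicating the witness relation and noting that the verification is routine.
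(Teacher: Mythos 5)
Your proposal is correct and follows essentially the same route as the paper: exhibit the witness relation consisting of the matched pair together with the identity, and observe that after the first $\textbf{SUM}$-step the two sides coincide so the remaining verification is routine. The only difference is that you correctly spell out the posetal (isomorphism-carrying) form of the witness and the extension $f[e_1\mapsto e_2]$, a detail the paper's one-line proof leaves implicit.
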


\begin{proof}
\begin{enumerate}
  \item $P+Q\sim_{hp}  Q+P$. It is sufficient to prove the relation $R=\{(P+Q, Q+P)\}\cup \textbf{Id}$ is a strongly hp-bisimulation, we omit it;
  \item $P+(Q+R)\sim_{hp}  (P+Q)+R$. It is sufficient to prove the relation $R=\{(P+(Q+R), (P+Q)+R)\}\cup \textbf{Id}$ is a strongly hp-bisimulation, we omit it;
  \item $P+P\sim_{hp}  P$. It is sufficient to prove the relation $R=\{(P+P, P)\}\cup \textbf{Id}$ is a strongly hp-bisimulation, we omit it;
  \item $P+\textbf{nil}\sim_{hp}  P$. It is sufficient to prove the relation $R=\{(P+\textbf{nil}, P)\}\cup \textbf{Id}$ is a strongly hp-bisimulation, we omit it.
\end{enumerate}
\end{proof}

\begin{proposition}[Summation laws for strongly hhp-bisimulation] The Summation laws for strongly hhp-bisimulation are as follows.
\begin{enumerate}
  \item $P+Q\sim_{hhp}  Q+P$;
  \item $P+(Q+R)\sim_{hhp}  (P+Q)+R$;
  \item $P+P\sim_{hhp}  P$;
  \item $P+\textbf{nil}\sim_{hhp}  P$.
\end{enumerate}
\end{proposition}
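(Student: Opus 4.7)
The plan is to mimic the pattern used for the strongly pomset, step, and hp-bisimilarity cases, but with the extra verification that the candidate relation is downward closed (since a strongly hhp-bisimulation is precisely a downward closed strongly hp-bisimulation, by definition). Concretely, for each of the four laws I would exhibit, as candidate, the posetal relation
\[
R = \{(\langle C_1,s\rangle, f, \langle C_2,s\rangle) : \text{the underlying processes are as on the two sides}\} \cup \overline{\textbf{Id}},
\]
where $\overline{\textbf{Id}}$ is the posetal identity on all configurations and $f$ is the evident isomorphism induced by the obvious bijection between the events generated on the two sides (the summands in $P+Q$ and $Q+P$ produce the same events, and similarly for the other laws).

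First I would check the zigzag clauses 1--3 of Definition of strong hp-bisimilarity. Since in all four laws the transition rules $\textbf{SUM}_1$, $\textbf{SUM}_2$ apply uniformly and neither $P+Q$ nor $Q+P$ (nor the associativity/idempotency/neutral variants) can perform a step without one of the summands performing the matching step, the transitions on both sides are in bijective correspondence and land in configurations related by $\overline{\textbf{Id}}$. Thus clauses 1--3, already established essentially in the hp-bisimulation proof just above, carry over unchanged.

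The new content, and the only real obstacle, is downward closure: given $(\langle C_1',s'\rangle,f',\langle C_2',s'\rangle) \in R$ and $(\langle C_1,s\rangle,f,\langle C_2,s\rangle) \subseteq (\langle C_1',s'\rangle,f',\langle C_2',s'\rangle)$ pointwise, I must check that the smaller triple again lies in $R$. For the candidate relations listed, any sub-triple of a triple witnessing one of the four laws is either again of the witnessing shape (because the events descend from the same summand choice) or else it is a posetal identity triple, and therefore it is still in $R$. The rest of the downward-closed triples live in $\overline{\textbf{Id}}$, which is trivially downward closed.

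Once downward closure is verified, $R$ is by definition a strongly hhp-bisimulation, and since $(\emptyset,\emptyset,\emptyset) \in R$ for each of the four candidate relations, the four equivalences $P+Q\sim_{hhp} Q+P$, $P+(Q+R)\sim_{hhp}(P+Q)+R$, $P+P\sim_{hhp} P$, and $P+\textbf{nil}\sim_{hhp} P$ follow. I expect the hardest bookkeeping to be making precise the isomorphism $f$ for the associativity law, where an event occurring in the middle summand $Q$ must be mapped coherently regardless of whether it is parsed under $P+(Q+R)$ or $(P+Q)+R$; but this is a direct structural matter with no semantic subtlety beyond what is already handled in the hp-case.
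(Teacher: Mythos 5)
Your proposal takes essentially the same route as the paper: for each law the paper simply exhibits the relation $R=\{(\text{LHS},\text{RHS})\}\cup\textbf{Id}$ and asserts (without detail) that it is a strongly hhp-bisimulation. Your additional explicit check of downward closure is exactly the one ingredient distinguishing the hhp-case from the hp-case, and your sketch of it is sound, so the proposal is correct and if anything more complete than the paper's omitted verification.
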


\begin{proof}
\begin{enumerate}
  \item $P+Q\sim_{hhp}  Q+P$. It is sufficient to prove the relation $R=\{(P+Q, Q+P)\}\cup \textbf{Id}$ is a strongly hhp-bisimulation, we omit it;
  \item $P+(Q+R)\sim_{hhp}  (P+Q)+R$. It is sufficient to prove the relation $R=\{(P+(Q+R), (P+Q)+R)\}\cup \textbf{Id}$ is a strongly hhp-bisimulation, we omit it;
  \item $P+P\sim_{hhp}  P$. It is sufficient to prove the relation $R=\{(P+P, P)\}\cup \textbf{Id}$ is a strongly hhp-bisimulation, we omit it;
  \item $P+\textbf{nil}\sim_{hhp}  P$. It is sufficient to prove the relation $R=\{(P+\textbf{nil}, P)\}\cup \textbf{Id}$ is a strongly hhp-bisimulation, we omit it.
\end{enumerate}
\end{proof}

\begin{theorem}[Identity law for strongly truly concurrent bisimilarities]
If $A(\widetilde{x})\overset{\text{def}}{=}P$, then

\begin{enumerate}
  \item $A(\widetilde{y})\sim_p  P\{\widetilde{y}/\widetilde{x}\}$;
  \item $A(\widetilde{y})\sim_s  P\{\widetilde{y}/\widetilde{x}\}$;
  \item $A(\widetilde{y})\sim_{hp}  P\{\widetilde{y}/\widetilde{x}\}$;
  \item $A(\widetilde{y})\sim_{hhp}  P\{\widetilde{y}/\widetilde{x}\}$.
\end{enumerate}
\end{theorem}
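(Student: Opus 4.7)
The plan is to exhibit, for each of the four equivalences, an explicit witnessing relation and verify the bisimulation clauses using the identity transition rules $\textbf{IDE}_1$ and $\textbf{IDE}_2$ from Definition \ref{semantics3}. The key observation is that these rules are bidirectional in content: $\langle A(\widetilde{y}),s\rangle\xrightarrow{\alpha}\langle P',s'\rangle$ if and only if $\langle P\{\widetilde{y}/\widetilde{x}\},s\rangle\xrightarrow{\alpha}\langle P',s'\rangle$ (and similarly for step transitions). So the two processes share exactly the same set of outgoing transitions into identical residuals, which is the kind of situation where the identity-augmented relation is always the right candidate.

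For strong pomset and strong step bisimilarity, I would take
$$R \;=\; \{(A(\widetilde{y}),\, P\{\widetilde{y}/\widetilde{x}\})\}\,\cup\,\textbf{Id}$$
and check the clauses of Definition \ref{PSB5} (in its guarded form). For any pomset (resp.\ step) transition $\langle A(\widetilde{y}),s\rangle\xrightarrow{X}\langle P',s'\rangle$, rule $\textbf{IDE}_2$ (or $\textbf{IDE}_1$ for a singleton) gives the matching transition $\langle P\{\widetilde{y}/\widetilde{x}\},s\rangle\xrightarrow{X}\langle P',s'\rangle$ with an identical label and residual; the pair $(P',P')$ lies in $\textbf{Id}\subseteq R$, and the four sub-clauses concerning fresh actions, bound inputs, parallel bound inputs, and bound outputs are discharged verbatim because the labels and residuals coincide. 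The converse direction is the same rule used right-to-left. This gives (1) and (2).

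For strong hp-bisimilarity I would take the posetal relation
$$R \;=\; \{(\langle C,s\rangle,\, Id_C,\, \langle C,s\rangle)\mid C\in\mathcal{C}(\mathcal{E}),\, s\in S\}$$
seeded with the pair $(\langle\emptyset,\emptyset\rangle,\emptyset,\langle\emptyset,\emptyset\rangle)$, where $\mathcal{E}$ ranges over the PESs associated to $A(\widetilde{y})$ and $P\{\widetilde{y}/\widetilde{x}\}$ (which coincide by $\textbf{IDE}$). Because each transition of one side is matched by the identically labelled transition of the other, the extension $Id_C[e\mapsto e]$ is again an isomorphism and the updated triple is again of the form $(\langle C',s'\rangle, Id_{C'}, \langle C',s'\rangle)\in R$. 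This verifies Definition \ref{HHPB5} and gives (3).

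The main obstacle is (4), strong hhp-bisimilarity, which additionally requires $R$ to be downward closed. With the diagonal relation above this is essentially automatic: if $(\langle C_1,s_1\rangle, f, \langle C_2,s_2\rangle)$ is pointwise below a triple of the form $(\langle C,s\rangle, Id_C, \langle C,s\rangle)\in R$, then $C_1=C_2$ as subsets of $C$ and $f$ is forced to be the identity on $C_1$, so the smaller triple is again in $R$. Hence the hp-witness already is an hhp-witness. If one preferred to avoid this argument about the structure of $R$, the alternative is to enlarge $R$ to contain every sub-triple of every pair already present, which remains a bisimulation by the same IDE-driven matching. Either way, the only real content of the proof is the bidirectional use of $\textbf{IDE}_1$ and $\textbf{IDE}_2$; the bisimulation bookkeeping is routine once the witnessing relation is chosen.
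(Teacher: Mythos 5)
Your proposal is correct and follows essentially the same route as the paper: the paper likewise takes the witnessing relation $R=\{(A(\widetilde{y}),P\{\widetilde{y}/\widetilde{x}\})\}\cup\textbf{Id}$ for each of the four bisimilarities and omits the verification, which is exactly the IDE-rule inversion you spell out. Your only addition is to phrase the hp/hhp witness properly as a diagonal posetal relation and check downward closure explicitly, which is a more careful rendering of the same idea rather than a different argument.
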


\begin{proof}
\begin{enumerate}
  \item $A(\widetilde{y})\sim_p  P\{\widetilde{y}/\widetilde{x}\}$. It is sufficient to prove the relation $R=\{(A(\widetilde{y}), P\{\widetilde{y}/\widetilde{x}\})\}\cup \textbf{Id}$ is a strongly pomset bisimulation, we omit it;
  \item $A(\widetilde{y})\sim_s  P\{\widetilde{y}/\widetilde{x}\}$. It is sufficient to prove the relation $R=\{(A(\widetilde{y}), P\{\widetilde{y}/\widetilde{x}\})\}\cup \textbf{Id}$ is a strongly step bisimulation, we omit it;
  \item $A(\widetilde{y})\sim_{hp}  P\{\widetilde{y}/\widetilde{x}\}$. It is sufficient to prove the relation $R=\{(A(\widetilde{y}), P\{\widetilde{y}/\widetilde{x}\})\}\cup \textbf{Id}$ is a strongly hp-bisimulation, we omit it;
  \item $A(\widetilde{y})\sim_{hhp}  P\{\widetilde{y}/\widetilde{x}\}$. It is sufficient to prove the relation $R=\{(A(\widetilde{y}), P\{\widetilde{y}/\widetilde{x}\})\}\cup \textbf{Id}$ is a strongly hhp-bisimulation, we omit it.
\end{enumerate}
\end{proof}

\begin{theorem}[Restriction Laws for strongly pomset bisimilarity]
The restriction laws for strongly pomset bisimilarity are as follows.

\begin{enumerate}
  \item $(y)P\sim_p  P$, if $y\notin fn(P)$;
  \item $(y)(z)P\sim_p  (z)(y)P$;
  \item $(y)(P+Q)\sim_p  (y)P+(y)Q$;
  \item $(y)\alpha.P\sim_p  \alpha.(y)P$ if $y\notin n(\alpha)$;
  \item $(y)\alpha.P\sim_p  \textbf{nil}$ if $y$ is the subject of $\alpha$.
\end{enumerate}
\end{theorem}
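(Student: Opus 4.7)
The plan is to follow the witness-relation strategy used for the preceding Summation laws: for each of the five items, I would define a relation $R = \{(\mathrm{LHS}, \mathrm{RHS})\} \cup \textbf{Id}$ and verify that $R$ is a strongly pomset bisimulation by analysing first-step transitions on each side (everything after the first step is handled by the identity component).

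Items 1--3 I would dispatch by direct inspection of the transition rules. For $(y)P \sim_p P$ with $y \notin fn(P)$, every transition of $(y)P$ arises by $\textbf{RES}_1$ or $\textbf{RES}_2$ from a transition of $P$; conversely, the free-variables proposition gives $fn(X) \subseteq fn(P)$, so the side condition $y \notin n(X)$ is automatic and every transition of $P$ lifts. The $\textbf{OPEN}$ rules cannot fire since $y \notin fn(P)$. For $(y)(z)P \sim_p (z)(y)P$, both sides simulate the same transition of $P$ subject to the symmetric condition $\{y,z\} \cap n(X) = \emptyset$. For $(y)(P+Q) \sim_p (y)P + (y)Q$, I would compose $\textbf{SUM}$ and $\textbf{RES}$ in either order and observe that the matched derivatives coincide.

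Items 4 and 5 concern prefixed processes. In item 4, with $y \notin n(\alpha)$, the unique first-step transition of $\alpha.(y)P$ via the appropriate $\textbf{ACT}$-rule produces $(y)P$ after firing $\alpha$, while $(y)\alpha.P$ produces the same derivative via $\textbf{RES}$ lifted from the $\textbf{ACT}$-rule; both resulting pairs then lie in $\textbf{Id}$. In item 5 I would argue that $(y)\alpha.P$ has no outgoing transitions whenever $y$ is the subject of $\alpha$: the $\textbf{RES}$ rules are blocked by the side condition $y \notin n(\alpha)$, and the $\textbf{OPEN}$ rules apply only to outputs $\overline{x}y'$ with $x \neq y$, so they cannot fire when $y$ itself is the subject. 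Hence the LHS has no transitions and is strongly pomset bisimilar to $\textbf{nil}$.

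The main obstacle will be the careful bookkeeping of bound names appearing in labels: for items 3 and 4, whenever $\alpha$ carries a bound object clashing with $y$ or $z$, I would invoke the alpha-conversion proposition established earlier to rename to a fresh name before comparing derivatives. This is routine but error-prone, and it is where the hypothesis $y \notin n(\alpha)$ (rather than only $y \notin fn(\alpha)$) is essential in item 4.
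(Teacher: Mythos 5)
Your proposal follows exactly the paper's approach: for each law the paper exhibits the same witness relation $R=\{(\mathrm{LHS},\mathrm{RHS})\}\cup\textbf{Id}$ and asserts it is a strongly pomset bisimulation, omitting the case analysis of the transition rules that you sketch. Your additional details (blocking of the $\textbf{OPEN}$ rules in item 1, the emptiness argument in item 5, and the alpha-conversion bookkeeping) are consistent with the paper's intent; just note that for items 1 and 2 the matched derivatives are $(y)P'$ versus $P'$ (resp.\ $(y)(z)P'$ versus $(z)(y)P'$) rather than identical terms, so the witness relation must be read as closed over all such instances rather than handled by $\textbf{Id}$ alone.
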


\begin{proof}
\begin{enumerate}
  \item $(y)P\sim_p  P$, if $y\notin fn(P)$. It is sufficient to prove the relation $R=\{((y)P, P)\}\cup \textbf{Id}$, if $y\notin fn(P)$, is a strongly pomset bisimulation, we omit it;
  \item $(y)(z)P\sim_p  (z)(y)P$. It is sufficient to prove the relation $R=\{((y)(z)P, (z)(y)P)\}\cup \textbf{Id}$ is a strongly pomset bisimulation, we omit it;
  \item $(y)(P+Q)\sim_p  (y)P+(y)Q$. It is sufficient to prove the relation $R=\{((y)(P+Q), (y)P+(y)Q)\}\cup \textbf{Id}$ is a strongly pomset bisimulation, we omit it;
  \item $(y)\alpha.P\sim_p  \alpha.(y)P$ if $y\notin n(\alpha)$. It is sufficient to prove the relation $R=\{((y)\alpha.P, \alpha.(y)P)\}\cup \textbf{Id}$, if $y\notin n(\alpha)$, is a strongly pomset bisimulation, we omit it;
  \item $(y)\alpha.P\sim_p  \textbf{nil}$ if $y$ is the subject of $\alpha$. It is sufficient to prove the relation $R=\{((y)\alpha.P, \textbf{nil})\}\cup \textbf{Id}$, if $y$ is the subject of $\alpha$, is a strongly pomset bisimulation, we omit it.
\end{enumerate}
\end{proof}

\begin{theorem}[Restriction Laws for strongly step bisimilarity]
The restriction laws for strongly step bisimilarity are as follows.

\begin{enumerate}
  \item $(y)P\sim_s  P$, if $y\notin fn(P)$;
  \item $(y)(z)P\sim_s  (z)(y)P$;
  \item $(y)(P+Q)\sim_s  (y)P+(y)Q$;
  \item $(y)\alpha.P\sim_s  \alpha.(y)P$ if $y\notin n(\alpha)$;
  \item $(y)\alpha.P\sim_s  \textbf{nil}$ if $y$ is the subject of $\alpha$.
\end{enumerate}
\end{theorem}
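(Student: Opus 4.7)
The plan is to mirror, essentially verbatim, the preceding theorem on restriction laws for strongly pomset bisimilarity, exploiting the fact that strong step bisimilarity differs from strong pomset bisimilarity only by the additional requirement that the labels in each transition be a set of pairwise concurrent events. Since the $\textbf{RES}_1$ and $\textbf{RES}_2$ rules preserve concurrency (they do not add causal dependencies among the events of the label), any relation that witnesses pomset bisimilarity for a restriction law will also witness step bisimilarity, provided we are careful that the matched transitions carry the same concurrent set.

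For each of the five clauses I would, as the author does throughout this section, exhibit the candidate witness relation $R \cup \textbf{Id}$ and then verify the transfer condition from Definition of strong step bisimilarity. Concretely: (1) take $R=\{((y)P,P)\}\cup \textbf{Id}$ under the assumption $y\notin fn(P)$, and observe via $\textbf{RES}_1$ / $\textbf{RES}_2$ that every transition of $(y)P$ corresponds to a transition of $P$ on the same label (since $y\notin n(\alpha_i)$ for every $\alpha_i$ appearing), and conversely; (2) take $R=\{((y)(z)P,(z)(y)P)\}\cup \textbf{Id}$ and match transitions through two applications of $\textbf{RES}$, in either order; (3) take $R=\{((y)(P+Q),(y)P+(y)Q)\}\cup\textbf{Id}$ and combine $\textbf{RES}$ with $\textbf{SUM}_1$/$\textbf{SUM}_2$; (4) take $R=\{((y)\alpha.P,\alpha.(y)P)\}\cup\textbf{Id}$ with $y\notin n(\alpha)$ and match the unique prefix transition before applying $\textbf{RES}$; (5) take $R=\{((y)\alpha.P,\textbf{nil})\}\cup\textbf{Id}$ when $y$ is the subject of $\alpha$, and observe that no rule can fire since $\textbf{RES}$ is blocked by the side condition $y\notin n(\alpha)$ and $\textbf{nil}$ itself has no transitions, so the transfer condition is vacuously satisfied on both sides.

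In each case, once the singleton-action transition has been matched, one must also check clauses (1)--(4) of the strong step bisimilarity definition, which handle fresh actions, bound inputs $x(y)$, paired bound inputs $\{x_1(y),x_2(y)\}$, and bound outputs $\overline{x}(y)$ respectively. Because all of the constructions above preserve pairwise concurrency of the label set and do not disturb bound-name side conditions (we can always $\alpha$-convert bound names away from the restricted name by the earlier proposition on $\alpha$-equivalence), the step-case analysis reduces to the same checks already used in the pomset proof.

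The only genuine obstacle is clause (5), where one has to be confident that no transition from $(y)\alpha.P$ is derivable at all when $y$ is the subject of $\alpha$: the $\textbf{RES}_i$ rules require $y\notin n(\alpha)$ so they do not apply, and the $\textbf{OPEN}_i$ rules require $\alpha$ to be an output of the form $\overline{x}y$ with $y\neq x$, so if $y$ is the subject $x$ of $\overline{x}y$ the open rule is blocked as well; all other rules need an outer constructor other than restriction. Establishing this exhaustiveness by inspection of the rule set completes the proof, and the remaining four parts then reduce to the straightforward bisimulation checks omitted by the author in the parallel pomset theorem.
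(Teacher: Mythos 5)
Your proposal is correct and follows essentially the same route as the paper: for each clause you exhibit the same witness relation $R\cup\textbf{Id}$ and verify (or argue the vacuity of) the transfer conditions, which is exactly what the paper asserts while omitting the details. Your additional care with clause (5) and with the concurrency-preservation of $\textbf{RES}_1$/$\textbf{RES}_2$ merely fills in checks the paper leaves implicit.
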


\begin{proof}
\begin{enumerate}
  \item $(y)P\sim_s  P$, if $y\notin fn(P)$. It is sufficient to prove the relation $R=\{((y)P, P)\}\cup \textbf{Id}$, if $y\notin fn(P)$, is a strongly step bisimulation, we omit it;
  \item $(y)(z)P\sim_s  (z)(y)P$. It is sufficient to prove the relation $R=\{((y)(z)P, (z)(y)P)\}\cup \textbf{Id}$ is a strongly step bisimulation, we omit it;
  \item $(y)(P+Q)\sim_s  (y)P+(y)Q$. It is sufficient to prove the relation $R=\{((y)(P+Q), (y)P+(y)Q)\}\cup \textbf{Id}$ is a strongly step bisimulation, we omit it;
  \item $(y)\alpha.P\sim_s  \alpha.(y)P$ if $y\notin n(\alpha)$. It is sufficient to prove the relation $R=\{((y)\alpha.P, \alpha.(y)P)\}\cup \textbf{Id}$, if $y\notin n(\alpha)$, is a strongly step bisimulation, we omit it;
  \item $(y)\alpha.P\sim_s  \textbf{nil}$ if $y$ is the subject of $\alpha$. It is sufficient to prove the relation $R=\{((y)\alpha.P, \textbf{nil})\}\cup \textbf{Id}$, if $y$ is the subject of $\alpha$, is a strongly step bisimulation, we omit it.
\end{enumerate}
\end{proof}

\begin{theorem}[Restriction Laws for strongly hp-bisimilarity]
The restriction laws for strongly hp-bisimilarity are as follows.

\begin{enumerate}
  \item $(y)P\sim_{hp}  P$, if $y\notin fn(P)$;
  \item $(y)(z)P\sim_{hp}  (z)(y)P$;
  \item $(y)(P+Q)\sim_{hp}  (y)P+(y)Q$;
  \item $(y)\alpha.P\sim_{hp}  \alpha.(y)P$ if $y\notin n(\alpha)$;
  \item $(y)\alpha.P\sim_{hp}  \textbf{nil}$ if $y$ is the subject of $\alpha$.
\end{enumerate}
\end{theorem}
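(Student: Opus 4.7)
The plan is to mirror the pattern used in the preceding pomset, step, and hp-bisimilarity restriction-law arguments: for each of the five clauses, exhibit a candidate posetal relation $R$ (obtained by closing the defining pair under $\overline{\textbf{Id}}$, the identity posetal relation) and verify the three transfer conditions in the definition of strongly hp-bisimulation, i.e.\ matching fresh actions, bound input $x(y)$ transitions, and bound output $\overline{x}(y)$ transitions, together with the preservation of the isomorphism $f$ between the two configurations.

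For clause (1), with $y\notin fn(P)$, I would take $R=\{(\langle C,s\rangle,\mathrm{id}_C,\langle C,s\rangle)\mid C\in\mathcal{C}((y)P)\}\cup\overline{\textbf{Id}}$. Every transition of $\langle (y)P,s\rangle$ is derived by $\textbf{RES}_1$ (or $\textbf{RES}_2$) from a transition of $\langle P,s\rangle$ with $y\notin n(\alpha)$, and the side condition $y\notin fn(P)$ ensures the converse direction: each transition of $\langle P,s\rangle$ lifts through $\textbf{RES}_1$ to a transition of $\langle (y)P,s\rangle$ carrying the same event. Hence $f[e_1\mapsto e_2]$, with $e_1=e_2$, continues to be an isomorphism of configurations. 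Clauses (2) and (4) are handled by the same recipe, simply applying $\textbf{RES}_1$ twice (for (2)) or commuting $\textbf{RES}_1$ past the prefix rule since $y\notin n(\alpha)$ (for (4)).

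For clause (3) the relation is $R=\{(\langle C, s\rangle, \mathrm{id}_C, \langle C, s\rangle)\mid C\in\mathcal{C}((y)(P+Q))\}\cup\overline{\textbf{Id}}$, exploiting the fact that transitions of $\langle (y)(P+Q),s\rangle$ are generated by $\textbf{SUM}_1/\textbf{SUM}_2$ followed by $\textbf{RES}_1$, which is exactly the derivation shape available for $\langle (y)P+(y)Q,s\rangle$; the isomorphism is preserved since in both cases the new event is the same, chosen from the summand that fires. Clause (5) is degenerate: if $y$ is the subject of $\alpha$, then $\textbf{INPUT-ACT}$ and $\textbf{OUTPUT-ACT}$ produce a label containing $y$, which $\textbf{RES}_1$ forbids, and $\textbf{OPEN}_1$ requires $y\neq x$; thus $\langle (y)\alpha.P,s\rangle$ has no outgoing transitions, and neither does $\langle\textbf{nil},s\rangle$, so the empty posetal relation augmented with $(\emptyset,\emptyset,\emptyset)$ is trivially a strongly hp-bisimulation.

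The main obstacle I expect is not any single clause individually but the bookkeeping of the isomorphism $f[e_1\mapsto e_2]$ across clause (3): one must check that the event labels involved in bound-name rules $\textbf{OPEN}_1$ and $\textbf{INPUT-ACT}$ agree on both sides up to $\equiv_{\alpha}$, so that the extended map remains order-preserving on the newly enlarged configurations. This is routine once one invokes the $\alpha$-conversion proposition already established (and the fact that $\equiv_{\alpha}$ implies $\sim_{hp}$), but it is the only place where the verification is not entirely mechanical.
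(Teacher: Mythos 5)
Your proposal follows the same route as the paper: for each clause you exhibit the candidate relation (the defining pair closed under the identity posetal relation) and verify the hp-bisimulation transfer conditions via the \textbf{RES}, \textbf{SUM}, \textbf{OPEN} and prefix rules, which is exactly what the paper does, except that the paper omits the verification you sketch. The only cosmetic point is in clause (1), where your relation as written pairs configurations of $(y)P$ with themselves rather than with configurations of $P$; the intended identification of $\mathcal{C}((y)P)$ with $\mathcal{C}(P)$ is legitimate precisely because $y\notin fn(P)$, but it should be stated as such.
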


\begin{proof}
\begin{enumerate}
  \item $(y)P\sim_{hp}  P$, if $y\notin fn(P)$. It is sufficient to prove the relation $R=\{((y)P, P)\}\cup \textbf{Id}$, if $y\notin fn(P)$, is a strongly hp-bisimulation, we omit it;
  \item $(y)(z)P\sim_{hp}  (z)(y)P$. It is sufficient to prove the relation $R=\{((y)(z)P, (z)(y)P)\}\cup \textbf{Id}$ is a strongly hp-bisimulation, we omit it;
  \item $(y)(P+Q)\sim_{hp}  (y)P+(y)Q$. It is sufficient to prove the relation $R=\{((y)(P+Q), (y)P+(y)Q)\}\cup \textbf{Id}$ is a strongly hp-bisimulation, we omit it;
  \item $(y)\alpha.P\sim_{hp}  \alpha.(y)P$ if $y\notin n(\alpha)$. It is sufficient to prove the relation $R=\{((y)\alpha.P, \alpha.(y)P)\}\cup \textbf{Id}$, if $y\notin n(\alpha)$, is a strongly hp-bisimulation, we omit it;
  \item $(y)\alpha.P\sim_{hp}  \textbf{nil}$ if $y$ is the subject of $\alpha$. It is sufficient to prove the relation $R=\{((y)\alpha.P, \textbf{nil})\}\cup \textbf{Id}$, if $y$ is the subject of $\alpha$, is a strongly hp-bisimulation, we omit it.
\end{enumerate}
\end{proof}

\begin{theorem}[Restriction Laws for strongly hhp-bisimilarity]
The restriction laws for strongly hhp-bisimilarity are as follows.

\begin{enumerate}
  \item $(y)P\sim_{hhp}  P$, if $y\notin fn(P)$;
  \item $(y)(z)P\sim_{hhp}  (z)(y)P$;
  \item $(y)(P+Q)\sim_{hhp}  (y)P+(y)Q$;
  \item $(y)\alpha.P\sim_{hhp}  \alpha.(y)P$ if $y\notin n(\alpha)$;
  \item $(y)\alpha.P\sim_{hhp}  \textbf{nil}$ if $y$ is the subject of $\alpha$.
\end{enumerate}
\end{theorem}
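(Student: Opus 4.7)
The plan is to proceed in direct analogy with the preceding four restriction-law theorems (for $\sim_p$, $\sim_s$, $\sim_{hp}$), exploiting the fact that a strongly hhp-bisimulation is by definition just a downward-closed strongly hp-bisimulation. Concretely, for each of the five clauses I would form the candidate posetal relation
\[
R \;=\; \bigl\{(\langle C_1,s\rangle,f,\langle C_2,s\rangle) \;\big|\; (C_1,C_2)\text{ configurations reached by matching transitions from the two sides}\bigr\} \cup \textbf{Id},
\]
starting from $(\emptyset,\emptyset,\emptyset)$ with the pair (LHS, RHS) specified in the clause, where $f$ is the obvious label- and order-preserving isomorphism induced by pairing corresponding events on the two sides.

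The verification that $R$ is a strongly hp-bisimulation is, for each of the five items, essentially the same argument already used in the immediately preceding theorem for $\sim_{hp}$. For clause (1), the transitions of $(y)P$ are exactly those of $P$ by $\textbf{RES}_1$/$\textbf{RES}_2$ when $y\notin fn(P)$, and events correspond one-to-one. For clause (2), the rules $\textbf{RES}_1$/$\textbf{RES}_2$ commute, giving the same set of transitions from either side. For clause (3), a transition of $(y)(P+Q)$ is derived by $\textbf{SUM}_i$ followed by $\textbf{RES}_i$, and conversely, with identical labels and event structures. For clauses (4) and (5), the side condition on the subject of $\alpha$ is precisely what makes either the transitions match via $\textbf{RES}_i$ under the prefix or, respectively, makes all derivations blocked, so both sides have an empty transition relation and are vacuously bisimilar. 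In each case, every transition of the LHS is matched by a transition of the RHS with the same label and with the extension of $f$ by the newly generated event pair again an isomorphism of the enlarged configurations.

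The one genuinely new obligation, compared with the hp-case, is verifying that $R$ is downward closed: whenever $(\langle C_1',s'\rangle, f', \langle C_2',s'\rangle)\in R$ and $(\langle C_1,s\rangle, f, \langle C_2,s\rangle) \subseteq (\langle C_1',s'\rangle, f', \langle C_2',s'\rangle)$ pointwise in $\langle\mathcal{C}(\mathcal{E}_1),S\rangle\overline{\times}\langle\mathcal{C}(\mathcal{E}_2),S\rangle$, then $(\langle C_1,s\rangle, f, \langle C_2,s\rangle)\in R$ as well. This is immediate in each of the five cases because the bijection $f$ we produce is defined uniformly on events: restricting to any causally closed subconfiguration $C_1\subseteq C_1'$ yields on the RHS the corresponding subconfiguration $C_2 = f'(C_1)\subseteq C_2'$ with $f = f'\!\restriction\! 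C_1$, and the pair $(\langle C_1,s\rangle, f, \langle C_2,s\rangle)$ is of exactly the same form used to define $R$ (since the restriction operator and the summation/prefix operators do not alter the causal shape of configurations that are projected downward). Hence $R$ is downward closed and is therefore an hhp-bisimulation.

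I expect the main subtlety, rather than any hard calculation, to lie in clause (3): for $(y)(P+Q)\sim_{hhp}(y)P+(y)Q$, the two sides differ in the syntactic location of the summation, and one must check that picking a sub-configuration on one side (say, one entirely consistent with the $P$-branch) determines the corresponding sub-configuration on the other side in a way that keeps $f$ an isomorphism — this is where downward closure is the least automatic. Once this clause is handled, the remaining clauses are routine adaptations of the previous theorem's proofs, so I would write them out as "it is sufficient to prove $R$ is a strongly hhp-bisimulation" in the same compressed style as the earlier restriction-law theorems.
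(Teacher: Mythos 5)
Your proposal follows essentially the same route as the paper: for each clause the paper simply exhibits the relation $R=\{(\text{LHS},\text{RHS})\}\cup\textbf{Id}$, asserts it is a strongly hhp-bisimulation, and omits the verification. You supply more detail than the paper does (in particular the downward-closure check and the remark about clause (3)), but the underlying argument is the same.
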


\begin{proof}
\begin{enumerate}
  \item $(y)P\sim_{hhp}  P$, if $y\notin fn(P)$. It is sufficient to prove the relation $R=\{((y)P, P)\}\cup \textbf{Id}$, if $y\notin fn(P)$, is a strongly hhp-bisimulation, we omit it;
  \item $(y)(z)P\sim_{hhp}  (z)(y)P$. It is sufficient to prove the relation $R=\{((y)(z)P, (z)(y)P)\}\cup \textbf{Id}$ is a strongly hhp-bisimulation, we omit it;
  \item $(y)(P+Q)\sim_{hhp}  (y)P+(y)Q$. It is sufficient to prove the relation $R=\{((y)(P+Q), (y)P+(y)Q)\}\cup \textbf{Id}$ is a strongly hhp-bisimulation, we omit it;
  \item $(y)\alpha.P\sim_{hhp}  \alpha.(y)P$ if $y\notin n(\alpha)$. It is sufficient to prove the relation $R=\{((y)\alpha.P, \alpha.(y)P)\}\cup \textbf{Id}$, if $y\notin n(\alpha)$, is a strongly hhp-bisimulation, we omit it;
  \item $(y)\alpha.P\sim_{hhp}  \textbf{nil}$ if $y$ is the subject of $\alpha$. It is sufficient to prove the relation $R=\{((y)\alpha.P, \textbf{nil})\}\cup \textbf{Id}$, if $y$ is the subject of $\alpha$, is a strongly hhp-bisimulation, we omit it.
\end{enumerate}
\end{proof}

\begin{theorem}[Parallel laws for strongly pomset bisimilarity]
The parallel laws for strongly pomset bisimilarity are as follows.

\begin{enumerate}
  \item $P\parallel \textbf{nil}\sim_p  P$;
  \item $P_1\parallel P_2\sim_p  P_2\parallel P_1$;
  \item $(P_1\parallel P_2)\parallel P_3\sim_p  P_1\parallel (P_2\parallel P_3)$;
  \item $(y)(P_1\parallel P_2)\sim_p  (y)P_1\parallel (y)P_2$, if $y\notin fn(P_1)\cap fn(P_2)$.
\end{enumerate}
\end{theorem}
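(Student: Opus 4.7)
The plan is to establish each of the four equations by the same template used throughout this section: exhibit an explicit candidate relation $R$ (always of the form $\{(\text{LHS},\text{RHS})\}\cup\textbf{Id}$, possibly parametrised over processes and names) and verify by cases on the last rule of a transition derivation that $R$ satisfies the clauses of Definition \ref{PSB5} for strongly pomset bisimulation. Since the data state component $s$ is carried identically along transitions on both sides of every rule in Tables \ref{TRForPITC3} and \ref{TRForPITC32}, the entire verification reduces to matching labels and residuals.

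For (1), I would take $R=\{(P\parallel\textbf{nil},P):P\in\mathcal{P}\}\cup\textbf{Id}$. Since $\textbf{nil}$ has no outgoing transitions, the only rule applicable on the left is $\textbf{PAR}_1$, which just exposes a transition of $P$; the match on the right is immediate. For (2), take $R=\{(P_1\parallel P_2,P_2\parallel P_1):P_1,P_2\in\mathcal{P}\}\cup\textbf{Id}$ and dispatch on whether the derivation ends in $\textbf{PAR}_1$--$\textbf{PAR}_4$, $\textbf{COM}$, or $\textbf{CLOSE}$; in every case, because step-sets $\{\alpha,\beta\}$ and $\{x_1(w),x_2(w)\}$ are unordered and the side conditions on bound names are symmetric in the two operands, the mirror rule applies on the right with the same pomset label.

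For (3), take $R=\{((P_1\parallel P_2)\parallel P_3,P_1\parallel(P_2\parallel P_3))\}\cup\textbf{Id}$. The cases to consider are: a single-component transition via $\textbf{PAR}_i$, a two-component communication via $\textbf{COM}$ or $\textbf{CLOSE}$ (three sub-cases depending on which pair communicates), and a three-component concurrent step built by nesting $\textbf{PAR}_3$ or $\textbf{PAR}_4$. For each, one shows that the rebracketed parse tree can emit the same label, using that $\textbf{PAR}_3$ assembles step-sets commutatively and associatively, and that the bound-name side conditions $bn(\alpha)\cap fn(\cdot)=\emptyset$ compose across the reassociation. For (4), take $R=\{((y)(P_1\parallel P_2),(y)P_1\parallel(y)P_2):y\notin fn(P_1)\cap fn(P_2)\}\cup\textbf{Id}$. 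The side condition forces that no $\textbf{COM}$ or $\textbf{CLOSE}$ derivation on either side uses $y$ as the communicating subject between $P_1$ and $P_2$, because the complementary capability is absent in at least one operand. Every other transition of the left is obtained by $\textbf{RES}_i$ from a transition of $P_1\parallel P_2$ with $y\notin n(\alpha_i)$, which is mirrored on the right by applying $\textbf{PAR}_i$ to the corresponding $\textbf{RES}_i$-derivations of $(y)P_1$ and $(y)P_2$.

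The main obstacle will be case (3): pomset bisimilarity requires matching the entire multi-action label, and the label of a three-way concurrent step is built up in a tree-shaped manner that differs between the two parse trees; one must check that any step set producible on one side by some nesting of $\textbf{PAR}_3$/$\textbf{PAR}_4$/$\textbf{COM}$/$\textbf{CLOSE}$ applications is also producible on the other with the same set of labels and residuals-up-to-$R$. A milder version of the same bookkeeping appears in (4) around $\textbf{OPEN}_i$, where one must ensure that the extruded fresh name $w$ may be chosen identically on both sides, invoking alpha-convertibility if needed.
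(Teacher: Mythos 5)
Your proposal follows exactly the paper's approach: for each law you exhibit the same candidate relation $R=\{(\mathrm{LHS},\mathrm{RHS})\}\cup\textbf{Id}$ and verify it is a strongly pomset bisimulation by case analysis on the last transition rule, which is precisely what the paper asserts (and then omits). Your sketch in fact supplies more detail than the paper does, correctly identifying the reassociation of step-sets in law (3) and the treatment of bound-output/fresh-name extrusion in law (4) as the only delicate points.
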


\begin{proof}
\begin{enumerate}
  \item $P\parallel \textbf{nil}\sim_p  P$. It is sufficient to prove the relation $R=\{(P\parallel \textbf{nil}, P)\}\cup \textbf{Id}$ is a strongly pomset bisimulation, we omit it;
  \item $P_1\parallel P_2\sim_p  P_2\parallel P_1$. It is sufficient to prove the relation $R=\{(P_1\parallel P_2, P_2\parallel P_1)\}\cup \textbf{Id}$ is a strongly pomset bisimulation, we omit it;
  \item $(P_1\parallel P_2)\parallel P_3\sim_p  P_1\parallel (P_2\parallel P_3)$. It is sufficient to prove the relation $R=\{((P_1\parallel P_2)\parallel P_3, P_1\parallel (P_2\parallel P_3))\}\cup \textbf{Id}$ is a strongly pomset bisimulation, we omit it;
  \item $(y)(P_1\parallel P_2)\sim_p  (y)P_1\parallel (y)P_2$, if $y\notin fn(P_1)\cap fn(P_2)$. It is sufficient to prove the relation $R=\{((y)(P_1\parallel P_2), (y)P_1\parallel (y)P_2)\}\cup \textbf{Id}$, if $y\notin fn(P_1)\cap fn(P_2)$, is a strongly pomset bisimulation, we omit it.
\end{enumerate}
\end{proof}

\begin{theorem}[Parallel laws for strongly step bisimilarity]
The parallel laws for strongly step bisimilarity are as follows.

\begin{enumerate}
  \item $P\parallel \textbf{nil}\sim_s  P$;
  \item $P_1\parallel P_2\sim_s  P_2\parallel P_1$;
  \item $(P_1\parallel P_2)\parallel P_3\sim_s  P_1\parallel (P_2\parallel P_3)$;
  \item $(y)(P_1\parallel P_2)\sim_s  (y)P_1\parallel (y)P_2$, if $y\notin fn(P_1)\cap fn(P_2)$.
\end{enumerate}
\end{theorem}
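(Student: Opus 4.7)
My plan is to mirror the structure of the immediately preceding theorem for strongly pomset bisimilarity: for each of the four equations, I exhibit the candidate relation $R=\{(\text{LHS},\text{RHS})\}\cup\mathbf{Id}$ and verify it is a strongly step bisimulation against Definition of strongly pomset/step bisimilarity (with the additional requirement that the multiset of labels be pairwise concurrent). The core observation is that step transitions are just pomset transitions whose label multiset consists of pairwise concurrent events, and the parallel, communication, and restriction rules all preserve concurrency of their contributing sub-events; so the case analysis on transition derivations inherited from the pomset proof carries over verbatim, and it only remains to check that the matching transition on the opposite side yields the same concurrency structure.

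For (1) $P\parallel\textbf{nil}\sim_s P$, only $\textbf{PAR}_1$ can fire on the left (since $\textbf{nil}$ has no outgoing transitions, the rules $\textbf{PAR}_2$, $\textbf{PAR}_3$, $\textbf{PAR}_4$, $\textbf{COM}$, $\textbf{CLOSE}$ all fail on their right premise), so steps of $P\parallel\textbf{nil}$ are in bijection with steps of $P$ and preserve concurrency trivially. For (2) commutativity, every rule involving $\parallel$ has a symmetric counterpart ($\textbf{PAR}_1\leftrightarrow\textbf{PAR}_2$, and $\textbf{PAR}_3$, $\textbf{PAR}_4$, $\textbf{COM}$, $\textbf{CLOSE}$ are symmetric in the two premises), so any $\langle P_1\parallel P_2,s\rangle\xrightarrow{X}\langle P_1'\parallel P_2',s'\rangle$ is matched by the swapped derivation $\langle P_2\parallel P_1,s\rangle\xrightarrow{X}\langle P_2'\parallel P_1',s'\rangle$ with identical $X$.

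For (3) associativity, every step of $(P_1\parallel P_2)\parallel P_3$ decomposes according to which of the three components contribute and through which rule (pure parallel via $\textbf{PAR}_i$, synchronization via $\textbf{COM}$ or $\textbf{CLOSE}$), and re-bracketing the derivation tree produces a matching step of $P_1\parallel(P_2\parallel P_3)$ with the same label multiset $X$ and the same pairwise-concurrency structure. For (4), the side-condition $y\notin fn(P_1)\cap fn(P_2)$ ensures $y$ is free in at most one of the two components, so no $\textbf{CLOSE}$ communication on $y$ between $P_1$ and $P_2$ is possible; this means every application of $\textbf{RES}_1$ or $\textbf{RES}_2$ on the outside can be pulled into the component that actually binds $y$, and concurrency is preserved under these rule rearrangements.

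The main obstacle is case (3): the bookkeeping around multi-party steps in which $\textbf{COM}$ or $\textbf{CLOSE}$ occurs inside a $\textbf{PAR}_i$ derivation, especially when a step involves all three components simultaneously, forces a careful argument that the re-associated derivation not only produces the same multiset $X$ but also preserves pairwise concurrency of its constituent events. All other cases reduce to inspection of a small number of rule shapes and are fully analogous to the pomset version already proved, so in the style of the preceding lemmas I would just exhibit $R$ and omit the routine verification.
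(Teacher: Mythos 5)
Your proposal takes exactly the paper's approach: for each law you exhibit the candidate relation $R=\{(\text{LHS},\text{RHS})\}\cup\textbf{Id}$ and argue it is a strongly step bisimulation; the paper states this and omits the verification entirely, whereas you additionally sketch the rule-level case analysis (symmetry of the $\textbf{PAR}$ rules, re-bracketing for associativity, the side condition blocking cross-component communication on $y$) that the paper leaves implicit. This is correct and consistent with the paper's proof.
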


\begin{proof}
\begin{enumerate}
  \item $P\parallel \textbf{nil}\sim_s  P$. It is sufficient to prove the relation $R=\{(P\parallel \textbf{nil}, P)\}\cup \textbf{Id}$ is a strongly step bisimulation, we omit it;
  \item $P_1\parallel P_2\sim_s  P_2\parallel P_1$. It is sufficient to prove the relation $R=\{(P_1\parallel P_2, P_2\parallel P_1)\}\cup \textbf{Id}$ is a strongly step bisimulation, we omit it;
  \item $(P_1\parallel P_2)\parallel P_3\sim_s  P_1\parallel (P_2\parallel P_3)$. It is sufficient to prove the relation $R=\{((P_1\parallel P_2)\parallel P_3, P_1\parallel (P_2\parallel P_3))\}\cup \textbf{Id}$ is a strongly step bisimulation, we omit it;
  \item $(y)(P_1\parallel P_2)\sim_s  (y)P_1\parallel (y)P_2$, if $y\notin fn(P_1)\cap fn(P_2)$. It is sufficient to prove the relation $R=\{((y)(P_1\parallel P_2), (y)P_1\parallel (y)P_2)\}\cup \textbf{Id}$, if $y\notin fn(P_1)\cap fn(P_2)$, is a strongly step bisimulation, we omit it.
\end{enumerate}
\end{proof}

\begin{theorem}[Parallel laws for strongly hp-bisimilarity]
The parallel laws for strongly hp-bisimilarity are as follows.

\begin{enumerate}
  \item $P\parallel \textbf{nil}\sim_{hp}  P$;
  \item $P_1\parallel P_2\sim_{hp}  P_2\parallel P_1$;
  \item $(P_1\parallel P_2)\parallel P_3\sim_{hp}  P_1\parallel (P_2\parallel P_3)$;
  \item $(y)(P_1\parallel P_2)\sim_{hp}  (y)P_1\parallel (y)P_2$, if $y\notin fn(P_1)\cap fn(P_2)$.
\end{enumerate}
\end{theorem}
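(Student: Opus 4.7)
The plan is to follow the same witness-relation template the paper uses for the pomset and step cases, but upgrade each relation to a posetal one so that the isomorphism between configurations is tracked explicitly. For each of the four clauses I would exhibit a candidate $R\subseteq\langle\mathcal{C}(\mathcal{E}_1),S\rangle\overline{\times}\langle\mathcal{C}(\mathcal{E}_2),S\rangle$ of the form ``all $(\langle C_1,s\rangle,f,\langle C_2,s\rangle)$ where $C_1$, $C_2$ are configurations reached by matching runs of the two sides and $f$ is the evident event-preserving bijection,'' and then check the three transition clauses of Definition of strong hp-bisimilarity (fresh action, bound input $x(y)$, bound output $\overline{x}(y)$), plus the symmetric ones.

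First I would handle clause (1): here $R=\{(\langle C,s\rangle,\mathrm{id}_C,\langle C,s\rangle):\langle C,s\rangle\in\langle\mathcal{C}(P\parallel\mathbf{nil}),S\rangle\}\cup \mathbf{Id}$, using that every transition of $P\parallel\mathbf{nil}$ must be derived by $\mathbf{PAR}_1$ (since $\mathbf{nil}$ has no transitions, so $\mathbf{PAR}_2,\mathbf{PAR}_3,\mathbf{PAR}_4,\mathbf{COM},\mathbf{CLOSE}$ cannot fire), so transitions of the two sides are in one-to-one correspondence and the identity serves as the isomorphism. Clause (2) uses $R=\{(\langle C_1\cup C_2,s\rangle,\sigma,\langle C_2\cup C_1,s\rangle)\}\cup\mathbf{Id}$ where $\sigma$ swaps the $P_1$-events with the $P_2$-events; every transition rule among $\mathbf{PAR}_1$--$\mathbf{PAR}_4$, $\mathbf{COM}$, $\mathbf{CLOSE}$ is visibly symmetric in its two premises, giving the matching transitions. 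Clause (3) uses the obvious re-bracketing bijection between events of $(P_1\parallel P_2)\parallel P_3$ and $P_1\parallel (P_2\parallel P_3)$; one has to enumerate how each rule fires on both sides (in particular, a synchronisation between $P_i$ and $P_j$ for any $i\neq j$ is derivable in both parenthesisations, possibly by a different nesting of $\mathbf{PAR}$s and $\mathbf{COM}$/$\mathbf{CLOSE}$), and in each case verify that the result lies again in $R$ after extending $f$ by $f[e_1\mapsto e_2]$.

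Clause (4) is the one that needs a bit of care: the side condition $y\notin fn(P_1)\cap fn(P_2)$ is exactly what guarantees that any use of $\mathbf{PAR}_i$, $\mathbf{COM}$ or $\mathbf{CLOSE}$ on $(y)P_1\parallel(y)P_2$ is also derivable from $(y)(P_1\parallel P_2)$ after $\mathbf{RES}$ or $\mathbf{OPEN}$, and conversely; in the bound-output case I would have to use the proposition on $\alpha$-conversion proved earlier (namely, if $\langle P,s\rangle\xrightarrow{\alpha(y)}\langle P',s'\rangle$ with $x\notin n(P)$ then $\langle P,s\rangle\xrightarrow{\alpha(z)}\langle P'',s''\rangle$ with $P''\equiv_\alpha P'\{z/y\}$) to pick a fresh witness that works simultaneously for both sides, and then invoke the theorem that $\equiv_\alpha\subseteq\sim_{hp}$ already proved to compose the relations. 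In each clause the isomorphism $f$ is simply extended by $f[e_1\mapsto e_2]$ with $e_2$ the evident ``same'' event on the other side, so the posetal and order-preserving conditions are preserved automatically.

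The main obstacle I expect is clause (4), because of the interaction between the restriction $(y)$ and the synchronisation rules $\mathbf{COM}$/$\mathbf{CLOSE}$: on the left the restriction sits outside the parallel, so communications across it are visible as $\tau$; on the right $y$ is bound in each component separately, so a $\mathbf{CLOSE}$ must be justified via $\mathbf{OPEN}_1$ from both sides. Checking that the two presentations produce $\tau$-transitions to configurations related by $R$ (with the isomorphism tracking which copies of $y$ have been extruded) is the delicate bookkeeping step; once this is settled, the other three clauses are routine rule-by-rule inspections of the same shape as the step-bisimilarity proofs above.
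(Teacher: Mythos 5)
Your proposal is correct and follows essentially the same route as the paper: exhibit the canonical witness relation for each clause and verify it is closed under the transition rules, the paper writing this as ``$R=\{(\mathrm{LHS},\mathrm{RHS})\}\cup\textbf{Id}$ is a strongly hp-bisimulation, we omit it.'' You in fact supply strictly more detail than the paper does --- correctly upgrading the witness to a posetal relation of triples $(\langle C_1,s\rangle,f,\langle C_2,s\rangle)$ as the definition of hp-bisimulation requires, and isolating the genuinely delicate point in clause (4), namely matching $\textbf{COM}$/$\textbf{RES}$ derivations on the left against $\textbf{OPEN}$/$\textbf{CLOSE}$ derivations on the right up to $\equiv_{\alpha}$ --- so no gap remains.
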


\begin{proof}
\begin{enumerate}
  \item $P\parallel \textbf{nil}\sim_{hp}  P$. It is sufficient to prove the relation $R=\{(P\parallel \textbf{nil}, P)\}\cup \textbf{Id}$ is a strongly hp-bisimulation, we omit it;
  \item $P_1\parallel P_2\sim_{hp}  P_2\parallel P_1$. It is sufficient to prove the relation $R=\{(P_1\parallel P_2, P_2\parallel P_1)\}\cup \textbf{Id}$ is a strongly hp-bisimulation, we omit it;
  \item $(P_1\parallel P_2)\parallel P_3\sim_{hp}  P_1\parallel (P_2\parallel P_3)$. It is sufficient to prove the relation $R=\{((P_1\parallel P_2)\parallel P_3, P_1\parallel (P_2\parallel P_3))\}\cup \textbf{Id}$ is a strongly hp-bisimulation, we omit it;
  \item $(y)(P_1\parallel P_2)\sim_{hp}  (y)P_1\parallel (y)P_2$, if $y\notin fn(P_1)\cap fn(P_2)$. It is sufficient to prove the relation $R=\{((y)(P_1\parallel P_2), (y)P_1\parallel (y)P_2)\}\cup \textbf{Id}$, if $y\notin fn(P_1)\cap fn(P_2)$, is a strongly hp-bisimulation, we omit it.
\end{enumerate}
\end{proof}

\begin{theorem}[Parallel laws for strongly hhp-bisimilarity]
The parallel laws for strongly hhp-bisimilarity are as follows.

\begin{enumerate}
  \item $P\parallel \textbf{nil}\sim_{hhp}  P$;
  \item $P_1\parallel P_2\sim_{hhp}  P_2\parallel P_1$;
  \item $(P_1\parallel P_2)\parallel P_3\sim_{hhp}  P_1\parallel (P_2\parallel P_3)$;
  \item $(y)(P_1\parallel P_2)\sim_{hhp}  (y)P_1\parallel (y)P_2$, if $y\notin fn(P_1)\cap fn(P_2)$.
\end{enumerate}
\end{theorem}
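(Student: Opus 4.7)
The plan is to follow the template set by the three preceding Parallel-laws theorems (for $\sim_p$, $\sim_s$, $\sim_{hp}$) and, for each of the four laws (1)--(4), exhibit an explicit candidate relation of the form $R=\{(\langle L,s\rangle,f,\langle R,s\rangle)\}\cup\textbf{Id}$, where $L$ and $R$ are the two sides of the law and $f$ is the natural event-isomorphism between configurations of $L$ and $R$ induced by the structural rewriting. I would then verify that $R$ is a strongly hp-bisimulation and, crucially, that it is downward closed in the posetal product, so that it qualifies as a strongly hhp-bisimulation in the sense of the definition given earlier.

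For each individual law I would proceed by case analysis on the transition rules introduced in Definition \ref{semantics3}, namely $\textbf{PAR}_1$--$\textbf{PAR}_4$, $\textbf{COM}$, $\textbf{CLOSE}$, together with the restriction/open rules $\textbf{RES}_{1,2}$ and $\textbf{OPEN}_{1,2}$ where relevant. Concretely: (1) uses $f$ equal to the identity on events of $P$ since $\textbf{nil}$ has no events and cannot contribute a transition; (2) uses the swap of the two parallel components as $f$, so that every transition of $P_1\parallel P_2$ derived by a rule $\textbf{PAR}_i$ is matched by the symmetric application to $P_2\parallel P_1$; (3) uses the re-bracketing isomorphism and matches transitions from either of the three components on each side; (4) uses the side condition $y\notin fn(P_1)\cap fn(P_2)$ so that every transition produced by $\textbf{RES}$ or $\textbf{OPEN}$ applied to $(y)(P_1\parallel P_2)$ can be re-derived by a $\textbf{PAR}_i$/$\textbf{COM}$/$\textbf{CLOSE}$ step on $(y)P_1\parallel(y)P_2$ followed by the appropriate restriction/open, and vice versa. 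In each case the extension of $f$ along a new matched event pair $e_1\mapsto e_2$ is forced by the structural rule used, so the hp-conditions are satisfied.

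The main obstacle, as in the standard literature on hhp-bisimilarity, is verifying the downward-closure clause of Definition (Posetal product), since strongly hhp-bisimilarity is a downward closed strongly hp-bisimulation. For laws (1)--(3) this reduces to noting that the canonical structural isomorphism $f$ commutes with restriction to any causally closed subconfiguration: the sub-isomorphism $f\!\upharpoonright\! C$ is again the structural isomorphism for a pair of the same syntactic form, so $(\langle C_1,s\rangle,f\!\upharpoonright\!C_1,\langle C_2,s\rangle)\in R$ follows from $(C_1',f',C_2')\in R$. For (4) the delicate point is that a causally closed subconfiguration of $(y)(P_1\parallel P_2)$ may use $y$ only in one of the two parallel branches; the hypothesis $y\notin fn(P_1)\cap fn(P_2)$ guarantees that the restriction $(y)$ can be pushed into exactly the same branch on the right-hand side without introducing new free-name clashes, so the sub-pair still sits in $R$. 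Once downward closure is checked, combining it with the hp-simulation diagrams completes the proof for all four items.
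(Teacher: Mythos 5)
Your proposal takes essentially the same approach as the paper: for each law you exhibit the canonical relation $R=\{(L,R)\}\cup \textbf{Id}$ between the two sides and verify that it is a strongly hp-bisimulation that is downward closed, hence a strongly hhp-bisimulation. The paper simply states this relation and omits the verification, so your sketch of the case analysis on the $\textbf{PAR}$/$\textbf{COM}$/$\textbf{CLOSE}$/$\textbf{RES}$/$\textbf{OPEN}$ rules and of the downward-closure argument supplies exactly the detail the paper leaves out.
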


\begin{proof}
\begin{enumerate}
  \item $P\parallel \textbf{nil}\sim_{hhp}  P$. It is sufficient to prove the relation $R=\{(P\parallel \textbf{nil}, P)\}\cup \textbf{Id}$ is a strongly hhp-bisimulation, we omit it;
  \item $P_1\parallel P_2\sim_{hhp}  P_2\parallel P_1$. It is sufficient to prove the relation $R=\{(P_1\parallel P_2, P_2\parallel P_1)\}\cup \textbf{Id}$ is a strongly hhp-bisimulation, we omit it;
  \item $(P_1\parallel P_2)\parallel P_3\sim_{hhp}  P_1\parallel (P_2\parallel P_3)$. It is sufficient to prove the relation $R=\{((P_1\parallel P_2)\parallel P_3, P_1\parallel (P_2\parallel P_3))\}\cup \textbf{Id}$ is a strongly hhp-bisimulation, we omit it;
  \item $(y)(P_1\parallel P_2)\sim_{hhp}  (y)P_1\parallel (y)P_2$, if $y\notin fn(P_1)\cap fn(P_2)$. It is sufficient to prove the relation $R=\{((y)(P_1\parallel P_2), (y)P_1\parallel (y)P_2)\}\cup \textbf{Id}$, if $y\notin fn(P_1)\cap fn(P_2)$, is a strongly hhp-bisimulation, we omit it.
\end{enumerate}
\end{proof}

\begin{theorem}[Expansion law for truly concurrent bisimilarities]
Let $P\equiv\sum_i \alpha_{i}.P_{i}$ and $Q\equiv\sum_j\beta_{j}.Q_{j}$, where $bn(\alpha_{i})\cap fn(Q)=\emptyset$ for all $i$, and
  $bn(\beta_{j})\cap fn(P)=\emptyset$ for all $j$. Then,

\begin{enumerate}
  \item $P\parallel Q\sim_p  \sum_i\sum_j (\alpha_{i}\parallel \beta_{j}).(P_{i}\parallel Q_{j})+\sum_{\alpha_{i} \textrm{ comp }\beta_{j}}\tau.R_{ij}$;
  \item $P\parallel Q\sim_s  \sum_i\sum_j (\alpha_{i}\parallel \beta_{j}).(P_{i}\parallel Q_{j})+\sum_{\alpha_{i} \textrm{ comp }\beta_{j}}\tau.R_{ij}$;
  \item $P\parallel Q\sim_{hp}  \sum_i\sum_j (\alpha_{i}\parallel \beta_{j}).(P_{i}\parallel Q_{j})+\sum_{\alpha_{i} \textrm{ comp }\beta_{j}}\tau.R_{ij}$;
  \item $P\parallel Q\nsim_{phhp} \sum_i\sum_j (\alpha_{i}\parallel \beta_{j}).(P_{i}\parallel Q_{j})+\sum_{\alpha_{i} \textrm{ comp }\beta_{j}}\tau.R_{ij}$.
\end{enumerate}

Where $\alpha_i$ comp $\beta_j$ and $R_{ij}$ are defined as follows:
\begin{enumerate}
  \item $\alpha_{i}$ is $\overline{x}u$ and $\beta_{j}$ is $x(v)$, then $R_{ij}=P_{i}\parallel Q_{j}\{u/v\}$;
  \item $\alpha_{i}$ is $\overline{x}(u)$ and $\beta_{j}$ is $x(v)$, then $R_{ij}=(w)(P_{i}\{w/u\}\parallel Q_{j}\{w/v\})$, if $w\notin fn((u)P_{i})\cup fn((v)Q_{j})$;
  \item $\alpha_{i}$ is $x(v)$ and $\beta_{j}$ is $\overline{x}u$, then $R_{ij}=P_{i}\{u/v\}\parallel Q_{j}$;
  \item $\alpha_{i}$ is $x(v)$ and $\beta_{j}$ is $\overline{x}(u)$, then $R_{ij}=(w)(P_{i}\{w/v\}\parallel Q_{j}\{w/u\})$, if $w\notin fn((v)P_{i})\cup fn((u)Q_{j})$.
\end{enumerate}
\end{theorem}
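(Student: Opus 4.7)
The plan is to prove items (1)--(3) simultaneously by exhibiting a single candidate relation and then refining the verification for each level of bisimilarity, and to prove item (4) by constructing an explicit counterexample that witnesses the downward-closure obstruction.

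For items (1)--(3), I would take
\[
R=\{(\langle P\parallel Q, s\rangle,\langle E,s\rangle)\}\cup\textbf{Id},
\]
where $E$ denotes the right-hand side of the expansion, and in the hp-case I would equip each pair with the identity isomorphism on the relevant configurations. The verification proceeds by case analysis on the transitions of $\langle P\parallel Q,s\rangle$ according to which rule from Table \ref{TRForPITC3} was applied. The rules $\textbf{PAR}_1,\textbf{PAR}_2$ correspond on the right-hand side to those summands $(\alpha_i\parallel\beta_j).(P_i\parallel Q_j)$ in which only one of the two components fires (this is captured because $\textbf{nil}$ is absorbed up to the identity laws already proven). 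Rule $\textbf{PAR}_3$ corresponds directly to a concurrent summand $(\alpha_i\parallel\beta_j).(P_i\parallel Q_j)$ firing as a step. Rules $\textbf{COM}$ and $\textbf{CLOSE}$ correspond to the communication summands $\tau.R_{ij}$, with the four forms of $R_{ij}$ matching the four possible free/bound shapes of complementary prefixes; rule $\textbf{PAR}_4$ corresponds to the joint bound-input step produced by a concurrent summand whose two prefixes share the bound name. The side conditions $bn(\alpha_i)\cap fn(Q)=\emptyset$ and $bn(\beta_j)\cap fn(P)=\emptyset$ ensure that no alpha-conversion is needed, and earlier Propositions about substitution and renaming let me transport derivations symmetrically. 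Conversely, every transition of $\langle E,s\rangle$ comes from exactly one summand, and each such summand can be matched backwards using $\textbf{SUM}_1,\textbf{SUM}_2$ and the appropriate $\textbf{PAR}$/$\textbf{COM}$/$\textbf{CLOSE}$ rule applied to $P\parallel Q$. The pomset and step cases then follow by checking the pomset isomorphism $X_1\sim X_2$ clause of Definition \ref{PSB5}, and the hp-case follows by extending the identity isomorphism $f$ along each matched event as specified in Definition \ref{HHPB5}.

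For item (4), the main obstacle is to locate a genuine asymmetry between concurrency and summation that hhp-bisimilarity detects but hp-, step-, and pomset-bisimilarity do not. I would use the standard hhp-counterexample pattern, adapted to $\pi_{tc}$: take $P=\alpha.\textbf{nil}+\beta.\textbf{nil}$ and $Q=\gamma.\textbf{nil}$ with $\alpha,\beta,\gamma$ pairwise non-complementary, so that the expansion collapses to a sum of concurrent prefixes. The hhp-bisimilarity requires downward closure of the posetal relation, so after firing a single concurrent pair $(\alpha\parallel\gamma)$ on the right one can restrict the configuration to just $\{\alpha\}$ and must still find a matching sub-configuration on the left; on the left, however, the initial step had to commit simultaneously to which branch of the summation was taken, breaking the downward-closure requirement. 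Making this precise requires comparing sub-configurations of the joint executions and showing that no downward-closed hp-bisimulation can contain the top pair, which is the technically delicate part of the argument and the place I expect to spend most of the work.

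Throughout, I would rely on the already established Summation, Identity, Restriction, and Parallel laws for $\sim_p,\sim_s,\sim_{hp},\sim_{hhp}$ to keep the bisimulation closed under obvious rearrangements (commutativity and associativity of $+$ and $\parallel$, idempotence of $+$, absorption of $\textbf{nil}$), so that the verification only has to treat one canonical representative of each transition shape rather than every possible syntactic variant.
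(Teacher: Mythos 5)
The paper gives no proof of this theorem beyond the sentence ``we can easily prove the above equations,'' so your proposal is not so much an alternative route as the missing content; the relation-plus-case-analysis plan for items (1)--(3) and the counterexample plan for item (4) are exactly what such a proof has to look like. Two points need correction, though. First, your treatment of $\textbf{PAR}_1$/$\textbf{PAR}_2$: these rules carry the premise $\langle Q,s\rangle\nrightarrow$ (resp.\ $\langle P,s\rangle\nrightarrow$), so they apply only when the partner is an empty sum, not whenever ``only one of the two components fires''; there is no rule letting a single prefix of a multi-prefix $(\alpha_i\parallel\beta_j)$ fire alone, so your appeal to absorption of $\textbf{nil}$ does not correspond to any derivable transition on the right-hand side. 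If both sums are non-empty this case is vacuous and your matching of $\textbf{PAR}_3$/$\textbf{PAR}_4$/$\textbf{COM}$/$\textbf{CLOSE}$ against the summands is the whole argument; if one sum is empty the stated law degenerates (the double sum is $\textbf{nil}$) and no bisimulation exists, which is a defect of the statement rather than of your proof, but you should say which reading you adopt.

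Second, in the item (4) counterexample with $P=\alpha.\textbf{nil}+\beta.\textbf{nil}$ and $Q=\gamma.\textbf{nil}$, the restriction you propose is to the wrong sub-configuration and attributes the commitment to the wrong side. After matching the step $\{\alpha,\gamma\}$ of $P\parallel Q$ against $\{\alpha_1,\gamma_1\}$ of the first summand of the expansion, restricting downward to $\{\alpha\}$ exposes nothing: both $\{\alpha\}$ and $\{\alpha_1\}$ extend with a $\gamma$-event. The failure appears when you restrict to $\{\gamma\}$ paired with $\{\gamma_1\}$: on $P\parallel Q$ the single event $\gamma$ is concurrent with, and independent of, the choice between $\alpha$ and $\beta$, so $\{\gamma\}$ extends with $\beta$; on the expansion side $\gamma$ has been split into two conflicting copies $\gamma_1,\gamma_2$, one per summand, and $\{\gamma_1\}$ cannot extend with the $\beta$-event $\beta_2$ of the other summand. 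It is the expansion, not $P\parallel Q$, that commits to a branch. With that repair the downward-closure contradiction goes through and establishes the (existential) reading of item (4).
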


\begin{proof}
According to the definition of strongly truly concurrent bisimulations, we can easily prove the above equations, and we omit the proof.
\end{proof}

\begin{theorem}[Equivalence and congruence for strongly pomset bisimilarity]
We can enjoy the full congruence modulo strongly pomset bisimilarity.

\begin{enumerate}
  \item $\sim_p $ is an equivalence relation;
  \item If $P\sim_p  Q$ then
  \begin{enumerate}
    \item $\alpha.P\sim_p  \alpha.Q$, $\alpha$ is a free action;
    \item $\phi.P\sim_p  \phi.Q$;
    \item $P+R\sim_p  Q+R$;
    \item $P\parallel R\sim_p  Q\parallel R$;
    \item $(w)P\sim_p  (w)Q$;
    \item $x(y).P\sim_p  x(y).Q$.
  \end{enumerate}
\end{enumerate}
\end{theorem}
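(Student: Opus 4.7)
The plan is first to check the equivalence properties directly from the definition of strongly pomset bisimulation, and then to handle each congruence clause by exhibiting a suitable candidate relation and verifying the bisimulation transfer conditions by case analysis on the transition rules of Definition \ref{semantics3}.

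For the equivalence part, reflexivity uses the identity relation $\textbf{Id}$; symmetry follows from the observation that $R^{-1}$ is a strongly pomset bisimulation whenever $R$ is; and transitivity from the fact that the relational composition $R_1 \circ R_2$ is a strongly pomset bisimulation whenever both $R_1$ and $R_2$ are. These three facts fall out of the symmetric shape of the bisimulation definition in Section \ref{os3}.

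For each congruence clause I would take the relation
\[
\mathcal{R} = \{(C[P],\, C[Q]) \mid P \sim_p Q\} \cup \textbf{Id},
\]
where $C[\cdot]$ is the context in question, and verify that $\mathcal{R}$ is a strongly pomset bisimulation. The prefix cases $\alpha.P \sim_p \alpha.Q$ (for free $\alpha$) and $\phi.P \sim_p \phi.Q$ are immediate: the only transitions from $\alpha.P$ and $\phi.P$ come from the $\textbf{OUTPUT-ACT}$/$\textbf{TAU-ACT}$ rules or the guarded-prefix rule, and their residuals are exactly $(P,Q) \in \sim_p$; for the guard case, both sides agree on the $test(\phi,s)$ evaluation since $\phi$ is shared. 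For summation, transitions arise via $\textbf{SUM}_1$ or $\textbf{SUM}_2$, and we either match with the hypothesis (if the transition comes from $P$) or with $\textbf{Id}$ (if from $R$). For restriction, $\textbf{RES}_1,\textbf{RES}_2,\textbf{OPEN}_1,\textbf{OPEN}_2$ lift every transition from $(w)P$ to a transition of $P$, so the hypothesis transfers. The parallel case $P \parallel R \sim_p Q \parallel R$ is more work: one must cope with all of $\textbf{PAR}_1$--$\textbf{PAR}_4$, $\textbf{COM}$ and $\textbf{CLOSE}$, but pure $P$-moves and pure $R$-moves match directly, and the synchronisation cases succeed because the $Q$-transitions guaranteed by $P \sim_p Q$ carry the same label and lead to bisimilar residuals, with the resulting pomsets related by $\sim$.

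The main obstacle is the input prefix $x(y).P \sim_p x(y).Q$. Here $\textbf{INPUT-ACT}$ generates a transition $x(w)$ substituting a fresh $w$ for $y$, and matching it formally requires $P\{w/y\} \sim_p Q\{w/y\}$ for every such $w$ rather than merely $P \sim_p Q$. Since the bisimulation definition in Section \ref{os3} is phrased \emph{late} --- the input clause quantifies over all $w$ inside a single bisimulation step --- the needed closure is already baked into $\sim_p$, but one still has to argue a substitutivity lemma saying $\sim_p$ is preserved under arbitrary name substitutions. This follows by verifying that the image of any strongly pomset bisimulation under a substitution is again a strongly pomset bisimulation, invoking the substitution propositions established earlier in Section \ref{sos3}. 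Once substitutivity is in hand, input-prefix congruence follows by taking $\mathcal{R} = \{(x(y).P,\, x(y).Q) \mid P \sim_p Q\} \cup \textbf{Id}$ and checking that the $\textbf{INPUT-ACT}$ transitions on both sides produce pairs $(P\{w/y\}, Q\{w/y\})$ already in $\sim_p$.
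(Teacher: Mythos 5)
Your proposal follows essentially the same route as the paper's proof: for each clause, exhibit the relation $R=\{(C[P],C[Q])\}\cup\textbf{Id}$ for the relevant context $C[\cdot]$ and check it is a strongly pomset bisimulation (the paper states exactly this and omits the verification). Your additional observation that the input-prefix case needs a substitutivity lemma --- that $\sim_p$ is closed under name substitution, justified via the substitution propositions of Section \ref{sos3} and the late formulation of the bisimulation --- is a correct and worthwhile filling-in of a detail the paper leaves implicit.
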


\begin{proof}
\begin{enumerate}
  \item $\sim_p $ is an equivalence relation, it is obvious;
  \item If $P\sim_p  Q$ then
  \begin{enumerate}
    \item $\alpha.P\sim_p  \alpha.Q$, $\alpha$ is a free action. It is sufficient to prove the relation $R=\{(\alpha.P, \alpha.Q)\}\cup \textbf{Id}$ is a strongly pomset bisimulation, we omit it;
    \item $\phi.P\sim_p  \phi.Q$. It is sufficient to prove the relation $R=\{(\phi.P, \phi.Q)\}\cup \textbf{Id}$ is a strongly pomset bisimulation, we omit it;
    \item $P+R\sim_p  Q+R$. It is sufficient to prove the relation $R=\{(P+R, Q+R)\}\cup \textbf{Id}$ is a strongly pomset bisimulation, we omit it;
    \item $P\parallel R\sim_p  Q\parallel R$. It is sufficient to prove the relation $R=\{(P\parallel R, Q\parallel R)\}\cup \textbf{Id}$ is a strongly pomset bisimulation, we omit it;
    \item $(w)P\sim_p  (w)Q$. It is sufficient to prove the relation $R=\{((w)P, (w)Q)\}\cup \textbf{Id}$ is a strongly pomset bisimulation, we omit it;
    \item $x(y).P\sim_p  x(y).Q$. It is sufficient to prove the relation $R=\{(x(y).P, x(y).Q)\}\cup \textbf{Id}$ is a strongly pomset bisimulation, we omit it.
  \end{enumerate}
\end{enumerate}
\end{proof}

\begin{theorem}[Equivalence and congruence for strongly step bisimilarity]
We can enjoy the full congruence modulo strongly step bisimilarity.

\begin{enumerate}
  \item $\sim_s $ is an equivalence relation;
  \item If $P\sim_s  Q$ then
  \begin{enumerate}
    \item $\alpha.P\sim_s  \alpha.Q$, $\alpha$ is a free action;
    \item $\phi.P\sim_s  \phi.Q$;
    \item $P+R\sim_s  Q+R$;
    \item $P\parallel R\sim_s  Q\parallel R$;
    \item $(w)P\sim_s  (w)Q$;
    \item $x(y).P\sim_s  x(y).Q$.
  \end{enumerate}
\end{enumerate}
\end{theorem}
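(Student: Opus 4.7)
The plan is to follow exactly the pattern already used for strongly pomset bisimilarity in the preceding theorem, adapted to the step-bisimulation case. Recall that the only difference between strong pomset and strong step bisimilarity (Definition of strongly pomset, step bisimilarity) is that in the step case the label sets $X_1,X_2$ witnessing the transition $\langle C_i,s\rangle\xrightarrow{X_i}\langle C_i',s'\rangle$ must consist of pairwise concurrent events, together with the usual side conditions on fresh actions, input prefixes, pairs of input prefixes with the same bound name, and bound outputs. Thus every closure argument that worked for $\sim_p$ carries over verbatim, provided we check that the witnessing transitions stay within steps rather than arbitrary pomsets.

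First I would verify that $\sim_s$ is an equivalence relation. Reflexivity is witnessed by $\textbf{Id}$, symmetry by the converse of a step bisimulation (which is still a step bisimulation since the definition is symmetric by the ``and vice-versa'' clause), and transitivity by the relational composition $R_1\circ R_2$ of two step bisimulations, a standard diagram chase using the four transition clauses of the definition.

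Next I would handle each congruence clause by exhibiting the candidate relation $R=\{(\mathcal{C}(P),\mathcal{C}(Q))\mid P\sim_s Q\}\cup\textbf{Id}$ wrapped inside the appropriate context, and checking that it is a strong step bisimulation. For clauses (a), (b) and (f) the only transition out of $\alpha.P$, $\phi.P$ or $x(y).P$ is a single-action prefix transition by rules \textbf{TAU-ACT}, \textbf{OUTPUT-ACT}, \textbf{INPUT-ACT} (and the guard prefix rule), after which we fall into the assumed bisimulation $P\sim_s Q$; in particular the first transition is trivially a step (a singleton set). For clause (c), any step $\langle P+R,s\rangle\xrightarrow{X}\langle P',s'\rangle$ comes via \textbf{SUM}$_1$ or \textbf{SUM}$_2$ from either $P$ or $R$, and in the $P$ side we invoke $P\sim_s Q$ with $X$ still a step, while in the $R$ side we use $\textbf{Id}$. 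For clause (e), a step out of $(w)P$ arises from \textbf{RES}$_1$, \textbf{RES}$_2$ or \textbf{OPEN}, and the restriction conditions propagate, so the witnessing $X$ in $Q$ obtained from $P\sim_s Q$ remains a step (pairwise concurrency is preserved by matching labels under the isomorphism $X_1\sim X_2$).

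The main obstacle, as usual, is clause (d), $P\parallel R\sim_s Q\parallel R$. Here a step out of $P\parallel R$ can be built from rules \textbf{PAR}$_1$--\textbf{PAR}$_4$, \textbf{COM}, and \textbf{CLOSE}, so the components of $X$ can come partly from $P$ and partly from $R$. I would define the candidate bisimulation as $R_\parallel=\{(P_1\parallel R_1,Q_1\parallel R_1)\mid P_1\sim_s Q_1\}\cup\textbf{Id}$ and split into cases by which parallel rule fires. For the pure \textbf{PAR}$_1$ / \textbf{PAR}$_2$ cases we directly transport the step on the $P$-side using $P\sim_s Q$; for \textbf{PAR}$_3$ and \textbf{PAR}$_4$ we split the step $X$ into $X_P\uplus X_R$ with $X_P$ concurrent on the $P$-side and $X_R$ on the $R$-side, apply $P\sim_s Q$ to obtain a matching $X_P'\sim X_P$ that is itself a step, and recombine with $X_R$ using the same parallel rule, checking the name side-conditions $bn(\alpha)\cap fn(Q)=\emptyset$, which transfer since $fn(Q)\subseteq fn(P)$ modulo alpha-conversion on bisimilar processes. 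The pairwise concurrency of the combined $X_P'\uplus X_R$ follows from the fact that events from different parallel components are concurrent by construction. Communication via \textbf{COM} and \textbf{CLOSE} produces a $\tau$-step, which is again handled by transporting the $\overline{x}y$ or $\overline{x}(w)$ transition through $P\sim_s Q$ using the bound-output clause of the step bisimulation definition. The verification is routine but bookkeeping-heavy, which is why, following the pattern set by the preceding theorem, I would simply record the witnessing relation and omit the detailed case analysis.
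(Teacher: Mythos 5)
Your proposal is correct and follows essentially the same route as the paper: for each congruence clause you exhibit the candidate relation of the form $R=\{(\mathcal{C}(P),\mathcal{C}(Q))\}\cup\textbf{Id}$ placed in the relevant context and argue it is a strongly step bisimulation, which is exactly what the paper does (the paper simply records each witnessing relation and omits the verification entirely). Your additional sketch of the case analysis — in particular the decomposition of a step over $P\parallel R$ into components handled by $\textbf{PAR}_1$--$\textbf{PAR}_4$, $\textbf{COM}$ and $\textbf{CLOSE}$ — supplies detail the paper leaves out but does not change the approach.
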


\begin{proof}
\begin{enumerate}
  \item $\sim_s $ is an equivalence relation, it is obvious;
  \item If $P\sim_s  Q$ then
  \begin{enumerate}
    \item $\alpha.P\sim_s  \alpha.Q$, $\alpha$ is a free action. It is sufficient to prove the relation $R=\{(\alpha.P, \alpha.Q)\}\cup \textbf{Id}$ is a strongly step bisimulation, we omit it;
    \item $\phi.P\sim_s  \phi.Q$. It is sufficient to prove the relation $R=\{(\phi.P, \phi.Q)\}\cup \textbf{Id}$ is a strongly step bisimulation, we omit it;
    \item $P+R\sim_s  Q+R$. It is sufficient to prove the relation $R=\{(P+R, Q+R)\}\cup \textbf{Id}$ is a strongly step bisimulation, we omit it;
    \item $P\parallel R\sim_s  Q\parallel R$. It is sufficient to prove the relation $R=\{(P\parallel R, Q\parallel R)\}\cup \textbf{Id}$ is a strongly step bisimulation, we omit it;
    \item $(w)P\sim_s  (w)Q$. It is sufficient to prove the relation $R=\{((w)P, (w)Q)\}\cup \textbf{Id}$ is a strongly step bisimulation, we omit it;
    \item $x(y).P\sim_s  x(y).Q$. It is sufficient to prove the relation $R=\{(x(y).P, x(y).Q)\}\cup \textbf{Id}$ is a strongly step bisimulation, we omit it.
  \end{enumerate}
\end{enumerate}
\end{proof}

\begin{theorem}[Equivalence and congruence for strongly hp-bisimilarity]
We can enjoy the full congruence modulo strongly hp-bisimilarity.

\begin{enumerate}
  \item $\sim_{hp} $ is an equivalence relation;
  \item If $P\sim_{hp}  Q$ then
  \begin{enumerate}
    \item $\alpha.P\sim_{hp}  \alpha.Q$, $\alpha$ is a free action;
    \item $\phi.P\sim_{hp}  \phi.Q$;
    \item $P+R\sim_{hp}  Q+R$;
    \item $P\parallel R\sim_{hp}  Q\parallel R$;
    \item $(w)P\sim_{hp}  (w)Q$;
    \item $x(y).P\sim_{hp}  x(y).Q$.
  \end{enumerate}
\end{enumerate}
\end{theorem}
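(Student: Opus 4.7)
The plan is to proceed in two stages, paralleling the earlier theorems for $\sim_p$ and $\sim_s$: first establish that $\sim_{hp}$ is an equivalence, then establish each of the six congruence clauses by exhibiting a concrete hp-bisimulation that witnesses the relation.

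For the equivalence part, I would verify reflexivity by showing that $\{(\langle C,s\rangle, id_C, \langle C,s\rangle) \mid C\in\mathcal{C}(\mathcal{E})\}$ is a strongly hp-bisimulation (using the transition rules of Definition \ref{semantics3} directly), symmetry by observing that if $R$ is a hp-bisimulation then $R^{-1}=\{(\langle C_2,s\rangle, f^{-1}, \langle C_1,s\rangle) \mid (\langle C_1,s\rangle,f,\langle C_2,s\rangle)\in R\}$ is also one, and transitivity by composing two hp-bisimulations via $(\langle C_1,s\rangle, g\circ f, \langle C_3,s\rangle)$ whenever $(\langle C_1,s\rangle,f,\langle C_2,s\rangle)\in R_1$ and $(\langle C_2,s\rangle,g,\langle C_3,s\rangle)\in R_2$. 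In each case the isomorphism-preservation is immediate from the definition of posetal product and the only real checking is the transition-matching clauses, which follow from the corresponding clauses of $R$, $R^{-1}$, or $R_1,R_2$.

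For the congruence clauses (a)--(f), my uniform strategy is to take a fixed witnessing hp-bisimulation $R$ for $P\sim_{hp} Q$ and build, for each operator $\mathcal{O}[\cdot]$, a candidate relation
\[
R' = \{(\langle C_1\cup D,s\rangle, f\cup id_D, \langle C_2\cup D,s\rangle) \mid (\langle C_1,s\rangle, f, \langle C_2,s\rangle)\in R\} \cup \overline{Id}
\]
where $D$ corresponds to the events contributed by the surrounding context (the prefix event, the summand, the other parallel component, or the restriction wrapper), and $\overline{Id}$ is the posetal identity relation on all processes. I would then verify case by case that the transition rules of Definition \ref{semantics3} preserve $R'$: for $\alpha.P$ and $\phi.P$ the only initial transitions are the prefix itself, and after firing it the residual pair sits in $R$; for $P+R$ one uses $\textbf{SUM}_1$ so that transitions on the $P$-side (resp.\ $Q$-side) are matched through $R$, and transitions on the $R$-side are matched through the identity part of $R'$; for $(w)P$ one uses $\textbf{RES}_1$ to lift transitions, checking that bound-name side conditions are preserved; for $x(y).P$ one uses $\textbf{INPUT-ACT}$ and the substitution lemmas (the propositions on substitution established earlier in the excerpt) to match $x(w)$-transitions through $R$ after applying $\{w/y\}$.

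The real technical point, and the one I expect to be the main obstacle, is parallel composition: given $(\langle C_1,s\rangle,f,\langle C_2,s\rangle)\in R$ and an arbitrary $R$-process, a transition of $P\parallel R$ may arise from $\textbf{PAR}_{1,2,3,4}$, $\textbf{COM}$, or $\textbf{CLOSE}$, and for each one has to extend the isomorphism $f$ coherently to the new events produced on both sides while respecting the bound-name side conditions $bn(\alpha)\cap fn(R)=\emptyset$ and the freshness conditions in $\textbf{PAR}_4$, $\textbf{COM}$, and $\textbf{CLOSE}$. Here I would appeal to the substitution propositions to rename bound names apart, so that the events contributed by the $R$-side on the left match identically those on the right under $id$, and the events contributed by the $P$/$Q$-side match under $f$ extended by a single pair $e_1\mapsto e_2$ as in clauses (1)--(3) of Definition \ref{HHPB5}. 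Once this bookkeeping is done, each congruence clause follows with essentially the same proof structure as the corresponding $\sim_p$ and $\sim_s$ cases already proved above, and I would therefore state them in the abbreviated ``it is sufficient to prove\ldots, we omit it'' style used throughout the section.
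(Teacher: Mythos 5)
Your proposal follows essentially the same route as the paper: the paper likewise disposes of the equivalence claim as immediate and, for each congruence clause, exhibits a witnessing relation of the form $\{(\mathcal{O}[P],\mathcal{O}[Q])\}\cup\textbf{Id}$ and omits the verification. The only substantive difference is that your candidate relation is properly closed under the residuals contributed by $R$ (which is in fact necessary for the parallel and summation cases, where the paper's literal single-pair-plus-identity witness would not be transition-closed), so your version supplies the bookkeeping the paper leaves implicit.
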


\begin{proof}
\begin{enumerate}
  \item $\sim_{hp} $ is an equivalence relation, it is obvious;
  \item If $P\sim_{hp}  Q$ then
  \begin{enumerate}
    \item $\alpha.P\sim_{hp}  \alpha.Q$, $\alpha$ is a free action. It is sufficient to prove the relation $R=\{(\alpha.P, \alpha.Q)\}\cup \textbf{Id}$ is a strongly hp-bisimulation, we omit it;
    \item $\phi.P\sim_{hp}  \phi.Q$. It is sufficient to prove the relation $R=\{(\phi.P, \phi.Q)\}\cup \textbf{Id}$ is a strongly hp-bisimulation, we omit it;
    \item $P+R\sim_{hp}  Q+R$. It is sufficient to prove the relation $R=\{(P+R, Q+R)\}\cup \textbf{Id}$ is a strongly hp-bisimulation, we omit it;
    \item $P\parallel R\sim_{hp}  Q\parallel R$. It is sufficient to prove the relation $R=\{(P\parallel R, Q\parallel R)\}\cup \textbf{Id}$ is a strongly hp-bisimulation, we omit it;
    \item $(w)P\sim_{hp}  (w)Q$. It is sufficient to prove the relation $R=\{((w)P, (w)Q)\}\cup \textbf{Id}$ is a strongly hp-bisimulation, we omit it;
    \item $x(y).P\sim_{hp}  x(y).Q$. It is sufficient to prove the relation $R=\{(x(y).P, x(y).Q)\}\cup \textbf{Id}$ is a strongly hp-bisimulation, we omit it.
  \end{enumerate}
\end{enumerate}
\end{proof}

\begin{theorem}[Equivalence and congruence for strongly hhp-bisimilarity]
We can enjoy the full congruence modulo strongly hhp-bisimilarity.

\begin{enumerate}
  \item $\sim_{hhp} $ is an equivalence relation;
  \item If $P\sim_{hhp}  Q$ then
  \begin{enumerate}
    \item $\alpha.P\sim_{hhp}  \alpha.Q$, $\alpha$ is a free action;
    \item $\phi.P\sim_{hhp}  \phi.Q$;
    \item $P+R\sim_{hhp}  Q+R$;
    \item $P\parallel R\sim_{hhp}  Q\parallel R$;
    \item $(w)P\sim_{hhp}  (w)Q$;
    \item $x(y).P\sim_{hhp}  x(y).Q$.
  \end{enumerate}
\end{enumerate}
\end{theorem}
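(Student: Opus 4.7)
The plan is to follow the template established by the immediately preceding congruence theorem for $\sim_{hp}$, supplementing each step with the verification of downward closure, which is the additional requirement distinguishing an hhp-bisimulation from an hp-bisimulation. For item (1), that $\sim_{hhp}$ is an equivalence: reflexivity is witnessed by the identity posetal relation $\{(\langle C,s\rangle,\mathrm{id}_C,\langle C,s\rangle)\}$, which is trivially downward closed; symmetry by the involution $(\langle C_1,s\rangle,f,\langle C_2,s\rangle)\mapsto(\langle C_2,s\rangle,f^{-1},\langle C_1,s\rangle)$, which preserves downward closure; and transitivity by relational composition, noting that the composition of two downward-closed posetal relations is again downward closed because the pointwise order is preserved under composition of isomorphisms.

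For item (2), for each operator I would exhibit a candidate posetal relation obtained by applying the corresponding context to a given hhp-witness for $P,Q$, unioned with the identity posetal relation, and then closed downward. For $\alpha.P\sim_{hhp}\alpha.Q$ and $\phi.P\sim_{hhp}\phi.Q$ the unique initial transition (via \textbf{TAU-ACT}, \textbf{OUTPUT-ACT}, or the guard-prefix rule) reduces the question to the underlying hhp-witness. For $P+R\sim_{hhp}Q+R$ I rely on \textbf{SUM}$_1$ and \textbf{SUM}$_2$, routing transitions from the $P/Q$ side through the hhp-witness and those from $R$ through the identity component. For $(w)P\sim_{hhp}(w)Q$ I propagate through \textbf{RES}$_1$, \textbf{RES}$_2$, \textbf{OPEN}$_1$, \textbf{OPEN}$_2$, exploiting the side-conditions so that each isomorphism extends consistently to the newly added events. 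For $x(y).P\sim_{hhp}x(y).Q$ I invoke \textbf{INPUT-ACT} together with the late-style clause quantifying over all instantiations $w$ of the bound name. In every case, downward closure is preserved because the contexts are monotone with respect to the pointwise order on configuration triples.

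The main obstacle will be the parallel-composition clause $P\parallel R\sim_{hhp}Q\parallel R$. Here the witness relation must absorb the interaction of all of \textbf{PAR}$_1$--\textbf{PAR}$_4$, \textbf{COM} and \textbf{CLOSE}, while simultaneously remaining downward closed. A pointwise-smaller subtriple inside $P\parallel R$ may partition along any concurrency front between the $P$-side and the $R$-side, and for each such partition the induced restriction of the isomorphism $f$ must place the $P/Q$-portion in the hhp-witness and the $R$-portion in the identity. This is delicate precisely because, as the excerpt's expansion theorem already records, $P\parallel Q\nsim_{hhp}\sum_{i,j}(\alpha_i\parallel\beta_j).(P_i\parallel Q_j)+\sum\tau.R_{ij}$, so one cannot reduce parallelism to summation and must establish the congruence directly on the parallel structure. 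The verification proceeds by case analysis on the applied parallel rule, checking in each case that the subconfiguration induced on each side is still related by the respective witness and that the extended event map $f[e_1\mapsto e_2]$ remains an isomorphism compatible with downward closure.
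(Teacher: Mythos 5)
Your proposal takes essentially the same route as the paper: for each clause one exhibits the candidate relation $R=\{(\mathcal{C}[P],\mathcal{C}[Q])\}\cup\textbf{Id}$ and verifies it is a strongly hhp-bisimulation, the paper simply omitting every verification (including the downward-closure bookkeeping and the equivalence-relation check, which it declares obvious). Your additional attention to downward closure under relational composition and under the parallel contexts, and your remark that the failure of the expansion law for $\sim_{hhp}$ forces a direct argument for $P\parallel R\sim_{hhp}Q\parallel R$, are correct refinements of the same strategy rather than a different one.
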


\begin{proof}
\begin{enumerate}
  \item $\sim_{hhp} $ is an equivalence relation, it is obvious;
  \item If $P\sim_{hhp}  Q$ then
  \begin{enumerate}
    \item $\alpha.P\sim_{hhp}  \alpha.Q$, $\alpha$ is a free action. It is sufficient to prove the relation $R=\{(\alpha.P, \alpha.Q)\}\cup \textbf{Id}$ is a strongly hhp-bisimulation, we omit it;
    \item $\phi.P\sim_{hhp}  \phi.Q$. It is sufficient to prove the relation $R=\{(\phi.P, \phi.Q)\}\cup \textbf{Id}$ is a strongly hhp-bisimulation, we omit it;
    \item $P+R\sim_{hhp}  Q+R$. It is sufficient to prove the relation $R=\{(P+R, Q+R)\}\cup \textbf{Id}$ is a strongly hhp-bisimulation, we omit it;
    \item $P\parallel R\sim_{hhp}  Q\parallel R$. It is sufficient to prove the relation $R=\{(P\parallel R, Q\parallel R)\}\cup \textbf{Id}$ is a strongly hhp-bisimulation, we omit it;
    \item $(w)P\sim_{hhp}  (w)Q$. It is sufficient to prove the relation $R=\{((w)P, (w)Q)\}\cup \textbf{Id}$ is a strongly hhp-bisimulation, we omit it;
    \item $x(y).P\sim_{hhp}  x(y).Q$. It is sufficient to prove the relation $R=\{(x(y).P, x(y).Q)\}\cup \textbf{Id}$ is a strongly hhp-bisimulation, we omit it.
  \end{enumerate}
\end{enumerate}
\end{proof}

\subsubsection{Recursion}

\begin{definition}
Let $X$ have arity $n$, and let $\widetilde{x}=x_1,\cdots,x_n$ be distinct names, and $fn(P)\subseteq\{x_1,\cdots,x_n\}$. The replacement of $X(\widetilde{x})$ by $P$ in $E$, written
$E\{X(\widetilde{x}):=P\}$, means the result of replacing each subterm $X(\widetilde{y})$ in $E$ by $P\{\widetilde{y}/\widetilde{x}\}$.
\end{definition}

\begin{definition}
Let $E$ and $F$ be two process expressions containing only $X_1,\cdots,X_m$ with associated name sequences $\widetilde{x}_1,\cdots,\widetilde{x}_m$. Then,
\begin{enumerate}
  \item $E\sim_p  F$ means $E(\widetilde{P})\sim_p  F(\widetilde{P})$;
  \item $E\sim_s  F$ means $E(\widetilde{P})\sim_s  F(\widetilde{P})$;
  \item $E\sim_{hp}  F$ means $E(\widetilde{P})\sim_{hp}  F(\widetilde{P})$;
  \item $E\sim_{hhp}  F$ means $E(\widetilde{P})\sim_{hhp}  F(\widetilde{P})$;
\end{enumerate}

for all $\widetilde{P}$ such that $fn(P_i)\subseteq \widetilde{x}_i$ for each $i$.
\end{definition}

\begin{definition}
A term or identifier is weakly guarded in $P$ if it lies within some subterm $\alpha.Q$ or $(\alpha_1\parallel\cdots\parallel \alpha_n).Q$ of $P$.
\end{definition}

\begin{theorem}
Assume that $\widetilde{E}$ and $\widetilde{F}$ are expressions containing only $X_i$ with $\widetilde{x}_i$, and $\widetilde{A}$ and $\widetilde{B}$ are identifiers with $A_i$, $B_i$. Then, for all $i$,
\begin{enumerate}
  \item $E_i\sim_s  F_i$, $A_i(\widetilde{x}_i)\overset{\text{def}}{=}E_i(\widetilde{A})$, $B_i(\widetilde{x}_i)\overset{\text{def}}{=}F_i(\widetilde{B})$, then
  $A_i(\widetilde{x}_i)\sim_s  B_i(\widetilde{x}_i)$;
  \item $E_i\sim_p  F_i$, $A_i(\widetilde{x}_i)\overset{\text{def}}{=}E_i(\widetilde{A})$, $B_i(\widetilde{x}_i)\overset{\text{def}}{=}F_i(\widetilde{B})$, then
  $A_i(\widetilde{x}_i)\sim_p  B_i(\widetilde{x}_i)$;
  \item $E_i\sim_{hp}  F_i$, $A_i(\widetilde{x}_i)\overset{\text{def}}{=}E_i(\widetilde{A})$, $B_i(\widetilde{x}_i)\overset{\text{def}}{=}F_i(\widetilde{B})$, then
  $A_i(\widetilde{x}_i)\sim_{hp}  B_i(\widetilde{x}_i)$;
  \item $E_i\sim_{hhp}  F_i$, $A_i(\widetilde{x}_i)\overset{\text{def}}{=}E_i(\widetilde{A})$, $B_i(\widetilde{x}_i)\overset{\text{def}}{=}F_i(\widetilde{B})$, then
  $A_i(\widetilde{x}_i)\sim_{hhp}  B_i(\widetilde{x}_i)$.
\end{enumerate}
\end{theorem}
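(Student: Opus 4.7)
The plan is to establish all four items by a single uniform construction. Fix one of the bisimilarities $\sim$ under consideration and define
\[
R \;=\; \{(H(\widetilde{A}),\, H(\widetilde{B})) : H \text{ is a process expression whose free process variables lie in } \{X_1,\ldots,X_m\}\},
\]
and let $\mathcal{R}$ be its $\sim$-closure $\sim \cdot R \cdot \sim$ (lifted to a posetal relation in the hp and hhp cases by tracking the evident label-preserving isomorphism between configurations of $H(\widetilde{A})$ and $H(\widetilde{B})$). The goal is to show that $\mathcal{R}$ is a strongly truly concurrent bisimulation of the required kind. Once this is done, taking $H \equiv X_i(\widetilde{x}_i)$ places $(A_i(\widetilde{x}_i), B_i(\widetilde{x}_i)) \in \mathcal{R}$, which yields the four conclusions.

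Verification proceeds by induction on the depth of inference of transitions $\langle H(\widetilde{A}), s\rangle \xrightarrow{\alpha} \langle P', s'\rangle$ (respectively the step version $\xrightarrow{\{\alpha_1,\ldots,\alpha_n\}}$). The inductive cases corresponding to Prefix, Summation, Parallel Composition, and Restriction reduce directly to the Equivalence and Congruence theorems proved above for $\sim$. The essential base case is when the outermost construct of $H$ is an identifier $X_i(\widetilde{y})$: by $\textbf{IDE}_1$ (resp.\ $\textbf{IDE}_2$) the transition is derived from $\langle E_i(\widetilde{A})\{\widetilde{y}/\widetilde{x}_i\}, s\rangle \xrightarrow{\alpha} \langle P', s'\rangle$. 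Using the hypothesis $E_i \sim F_i$ together with the substitution propositions of the Properties of Transitions subsection and the congruence of $\sim$ under the operators in $F_i$, one obtains a matching transition $\langle F_i(\widetilde{B})\{\widetilde{y}/\widetilde{x}_i\}, s\rangle \xrightarrow{\alpha} \langle Q', s'\rangle$ with $P' \sim Q'$; reapplying $\textbf{IDE}$ then produces $\langle B_i(\widetilde{y}), s\rangle \xrightarrow{\alpha} \langle Q', s'\rangle$, and the pair $(P', Q')$ lies in $\mathcal{R}$ by construction.

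For transitions whose labels carry a bound name, namely input $x(y)$ and bound output $\overline{x}(y)$, and the $\textbf{PAR}_4$/$\textbf{CLOSE}$/$\textbf{OPEN}$ rules involving them, the ``for all $w$'' quantification in the bisimulation clauses is handled uniformly by the substitution propositions, which let the $\{w/y\}$-instantiated residuals be compared symbolically. In the hp case the posetal component of $\mathcal{R}$ is extended after each matched step by $f[e_1 \mapsto e_2]$, which is automatic since matched events carry identical labels.

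The main obstacle is the hhp case, because $\sim_{hhp}$ demands that the posetal relation be \emph{downward closed}, whereas $\mathcal{R}$ as defined above is not obviously so. I expect to handle this by replacing $\mathcal{R}$ with its pointwise downward closure and observing that a sub-configuration of $(H(\widetilde{A}), H(\widetilde{B}))$ is obtained from a subexpression (and subcomputation) that is itself of the same form $(H'(\widetilde{A}), H'(\widetilde{B}))$, so the inductive argument of the previous paragraph still applies. Care with recursive identifiers will be needed here: I would invoke the weak-guardedness convention of the preceding definition to ensure the structural induction on subexpressions remains well-founded.
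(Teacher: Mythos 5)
Your construction is essentially the paper's own proof: the paper likewise takes the relation $R=\{(G(A),G(B)):G\textrm{ has only identifier }X\}$, shows it is a strong (step, pomset, hp, hhp) bisimulation up to $\sim$ by induction on the depth of inference with the identifier case resolved through $\textbf{IDE}$ and the hypothesis $E_i\sim F_i$, and concludes by instantiating $G\equiv X(\widetilde{y})$. Your version merely fills in details the paper leaves as ``routine'' (the explicit $\sim$-closure, the treatment of bound names via the substitution propositions, and the downward-closure repair for the hhp case), so it is the same argument rather than a different route.
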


\begin{proof}
\begin{enumerate}
  \item $E_i\sim_s  F_i$, $A_i(\widetilde{x}_i)\overset{\text{def}}{=}E_i(\widetilde{A})$, $B_i(\widetilde{x}_i)\overset{\text{def}}{=}F_i(\widetilde{B})$, then
  $A_i(\widetilde{x}_i)\sim_s  B_i(\widetilde{x}_i)$.

      We will consider the case $I=\{1\}$ with loss of generality, and show the following relation $R$ is a strongly step bisimulation.

      $$R=\{(G(A),G(B)):G\textrm{ has only identifier }X\}.$$

      By choosing $G\equiv X(\widetilde{y})$, it follows that $A(\widetilde{y})\sim_s  B(\widetilde{y})$. It is sufficient to prove the following:
      \begin{enumerate}
        \item If $\langle G(A),s\rangle\xrightarrow{\{\alpha_1,\cdots,\alpha_n\}}\langle P',s'\rangle$, where $\alpha_i(1\leq i\leq n)$ is a free action or bound output action with
        $bn(\alpha_1)\cap\cdots\cap bn(\alpha_n)\cap n(G(A),G(B))=\emptyset$, then $\langle G(B),s\rangle\xrightarrow{\{\alpha_1,\cdots,\alpha_n\}}\langle Q'',s''\rangle$ such that $P'\sim_s  Q''$;
        \item If $\langle G(A),s\rangle\xrightarrow{x(y)}\langle P',s'\rangle$ with $x\notin n(G(A),G(B))$, then $\langle G(B),s\rangle\xrightarrow{x(y)}\langle Q'',s''\rangle$, such that for all $u$,
        $\langle P',s'\rangle\{u/y\}\sim_s  \langle Q''\{u/y\},s''\rangle$.
      \end{enumerate}

      To prove the above properties, it is sufficient to induct on the depth of inference and quite routine, we omit it.
  \item $E_i\sim_p  F_i$, $A_i(\widetilde{x}_i)\overset{\text{def}}{=}E_i(\widetilde{A})$, $B_i(\widetilde{x}_i)\overset{\text{def}}{=}F_i(\widetilde{B})$, then
  $A_i(\widetilde{x}_i)\sim_p  B_i(\widetilde{x}_i)$. It can be proven similarly to the above case.
  \item $E_i\sim_{hp}  F_i$, $A_i(\widetilde{x}_i)\overset{\text{def}}{=}E_i(\widetilde{A})$, $B_i(\widetilde{x}_i)\overset{\text{def}}{=}F_i(\widetilde{B})$, then
  $A_i(\widetilde{x}_i)\sim_{hp}  B_i(\widetilde{x}_i)$. It can be proven similarly to the above case.
  \item $E_i\sim_{hhp}  F_i$, $A_i(\widetilde{x}_i)\overset{\text{def}}{=}E_i(\widetilde{A})$, $B_i(\widetilde{x}_i)\overset{\text{def}}{=}F_i(\widetilde{B})$, then
  $A_i(\widetilde{x}_i)\sim_{hhp}  B_i(\widetilde{x}_i)$. It can be proven similarly to the above case.
\end{enumerate}
\end{proof}

\begin{theorem}[Unique solution of equations]
Assume $\widetilde{E}$ are expressions containing only $X_i$ with $\widetilde{x}_i$, and each $X_i$ is weakly guarded in each $E_j$. Assume that $\widetilde{P}$ and $\widetilde{Q}$ are
processes such that $fn(P_i)\subseteq \widetilde{x}_i$ and $fn(Q_i)\subseteq \widetilde{x}_i$. Then, for all $i$,
\begin{enumerate}
  \item if $P_i\sim_p  E_i(\widetilde{P})$, $Q_i\sim_p  E_i(\widetilde{Q})$, then $P_i\sim_p  Q_i$;
  \item if $P_i\sim_s  E_i(\widetilde{P})$, $Q_i\sim_s  E_i(\widetilde{Q})$, then $P_i\sim_s  Q_i$;
  \item if $P_i\sim_{hp}  E_i(\widetilde{P})$, $Q_i\sim_{hp}  E_i(\widetilde{Q})$, then $P_i\sim_{hp}  Q_i$;
  \item if $P_i\sim_{hhp}  E_i(\widetilde{P})$, $Q_i\sim_{hhp}  E_i(\widetilde{Q})$, then $P_i\sim_{hhp}  Q_i$.
\end{enumerate}
\end{theorem}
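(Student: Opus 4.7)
The plan is to adapt the classical Milner-style unique solution argument from $\pi$-calculus to the four truly concurrent settings simultaneously. The first step is a transfer lemma asserting that if every $X_i$ is weakly guarded in an expression $G$ and $\langle G(\widetilde{P}),s\rangle\xrightarrow{\alpha}\langle P',s'\rangle$ (and analogously for step and pomset transitions), then there exists an expression $G'$ containing only the $X_i$ such that $P'\equiv G'(\widetilde{P})$ and, for every $\widetilde{Q}$ meeting the free-name condition, $\langle G(\widetilde{Q}),s\rangle\xrightarrow{\alpha}\langle G'(\widetilde{Q}),s'\rangle$. I would prove this by induction on the depth of inference using the transition rules in Tables \ref{TRForPITC3} and \ref{TRForPITC32}: because each $X_i$ occurs only under a guarding prefix $\alpha.$ or $(\alpha_1\parallel\cdots\parallel\alpha_n).$, no first-level rule application can fire an action drawn from inside an $X_i$, so the derivative is again an expression in the $X_i$'s.

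With this lemma in hand, for each bisimilarity $\sim_{\bullet}\in\{\sim_p,\sim_s,\sim_{hp},\sim_{hhp}\}$ I would consider the candidate
$$R_{\bullet}=\{(H(\widetilde{P}),H(\widetilde{Q})):H\text{ contains only variables among the }X_i\},$$
equipped in the hp- and hhp-cases with the obvious posetal structure inherited from the shared skeleton $H$. Using the hypotheses $P_i\sim_{\bullet}E_i(\widetilde{P})$ and $Q_i\sim_{\bullet}E_i(\widetilde{Q})$ and the congruence results of the previous subsection, it suffices to show that $\sim_{\bullet}\circ R_{\bullet}\circ\sim_{\bullet}$ is a bisimulation of the corresponding type. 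Given a transition from $\langle H(\widetilde{P}),s\rangle$, the transfer lemma produces a matching transition from $\langle H(\widetilde{Q}),s\rangle$ whose derivatives are of the shape $H'(\widetilde{P})$ and $H'(\widetilde{Q})$ and hence again lie in $R_{\bullet}$. The pomset, step, and hp-cases follow the same induction-on-inference template used above; in the hp-case one simply threads the isomorphism $f[e_1\mapsto e_2]$ through the skeleton $H'$, matching corresponding events on the two sides.

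The main obstacle is the hhp-case, where the posetal relation must additionally be downward closed. This is not automatic from the construction, since a sub-configuration of $(\langle H(\widetilde{P}),s\rangle,f,\langle H(\widetilde{Q}),s\rangle)$ may partially consume events that originate inside some $P_i$ or $Q_i$ and so fail to be a substitution instance $(H'(\widetilde{P}),f',H'(\widetilde{Q}))$ for any $H'$. I would handle this by exploiting weak guardedness to split each reachable configuration into a skeleton part generated by the unfolding of $H$ and independent leaf contributions drawn from each $P_i$ (respectively $Q_i$), and then invoke downward closure of $\sim_{hhp}$ separately on each leaf pair $P_i\sim_{hhp}Q_i$. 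Combining the skeleton match produced by the transfer lemma with this componentwise downward closure yields downward closure of the global relation, completing the four cases.
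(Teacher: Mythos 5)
The paper does not actually prove this theorem; its ``proof'' is a pointer to the unique-solution argument in \cite{CTC2}. Your proposal reconstructs exactly the Milner-style argument that such a reference contains --- a transfer lemma for weakly guarded expressions, the candidate relation $R_{\bullet}=\{(H(\widetilde{P}),H(\widetilde{Q}))\}$, and a bisimulation-up-to-$\sim_{\bullet}$ argument --- so in substance you are on the intended route rather than a different one, and your write-up is strictly more informative than the paper's.

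One step must be made explicit or the argument does not close. Your transfer lemma applies only to expressions in which every $X_i$ is weakly guarded, but $R_{\bullet}$ contains pairs $(H(\widetilde{P}),H(\widetilde{Q}))$ for \emph{arbitrary} $H$, and you invoke the lemma directly on a transition of $\langle H(\widetilde{P}),s\rangle$. The missing move is to first rewrite $H(\widetilde{P})\sim_{\bullet}\bigl(H\{\widetilde{E}(\widetilde{X})/\widetilde{X}\}\bigr)(\widetilde{P})$ using the hypothesis $P_i\sim_{\bullet}E_i(\widetilde{P})$ together with the congruence theorems, and only then apply the transfer lemma to $H\{\widetilde{E}/\widetilde{X}\}$, in which every variable \emph{is} weakly guarded because each $X_i$ is weakly guarded in each $E_j$; this is precisely where that hypothesis enters, and matching derivatives then land in $\sim_{\bullet}\circ R_{\bullet}\circ\sim_{\bullet}$ as required. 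Two further remarks. First, your worry that a single pomset transition might ``penetrate'' a guard into the body of some $P_i$ does not materialise here: the rules $\textbf{SUM}_2$, $\textbf{PAR}_3$, $\textbf{PAR}_4$ only generate multi-event labels from concurrent guards of distinct parallel components, so the first (multi-)transition of a weakly guarded expression still consumes only guarding prefixes and the transfer lemma survives in the pomset and step settings. Second, your treatment of downward closure in the hhp case is only a sketch; that case is genuinely the delicate one (note the paper itself denies the expansion law for hhp-bisimilarity), and the skeleton/leaf decomposition would need to be carried out in detail before item 4 can be considered established.
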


\begin{proof}
\begin{enumerate}
  \item It is similar to the proof of unique solution of equations for strongly pomset bisimulation in CTC, please refer to \cite{CTC2} for details, we omit it;
  \item It is similar to the proof of unique solution of equations for strongly step bisimulation in CTC, please refer to \cite{CTC2} for details, we omit it;
  \item It is similar to the proof of unique solution of equations for strongly hp-bisimulation in CTC, please refer to \cite{CTC2} for details, we omit it;
  \item It is similar to the proof of unique solution of equations for strongly hhp-bisimulation in CTC, please refer to \cite{CTC2} for details, we omit it.
\end{enumerate}
\end{proof}

\subsection{Algebraic Theory}\label{a3}

\begin{definition}[STC]
The theory \textbf{STC} is consisted of the following axioms and inference rules:

\begin{enumerate}
  \item Alpha-conversion $\textbf{A}$.
  \[\textrm{if } P\equiv Q, \textrm{ then } P=Q\]
  \item Congruence $\textbf{C}$. If $P=Q$, then,
  \[\tau.P=\tau.Q\quad \overline{x}y.P=\overline{x}y.Q\]
  \[P+R=Q+R\quad P\parallel R=Q\parallel R\]
  \[(x)P=(x)Q\quad x(y).P=x(y).Q\]
  \item Summation $\textbf{S}$.
  \[\textbf{S0}\quad P+\textbf{nil}=P\]
  \[\textbf{S1}\quad P+P=P\]
  \[\textbf{S2}\quad P+Q=Q+P\]
  \[\textbf{S3}\quad P+(Q+R)=(P+Q)+R\]
  \item Restriction $\textbf{R}$.
  \[\textbf{R0}\quad (x)P=P\quad \textrm{ if }x\notin fn(P)\]
  \[\textbf{R1}\quad (x)(y)P=(y)(x)P\]
  \[\textbf{R2}\quad (x)(P+Q)=(x)P+(x)Q\]
  \[\textbf{R3}\quad (x)\alpha.P=\alpha.(x)P\quad \textrm{ if }x\notin n(\alpha)\]
  \[\textbf{R4}\quad (x)\alpha.P=\textbf{nil}\quad \textrm{ if }x\textrm{is the subject of }\alpha\]
  \item Expansion $\textbf{E}$.
  Let $P\equiv\sum_i \alpha_{i}.P_{i}$ and $Q\equiv\sum_j\beta_{j}.Q_{j}$, where $bn(\alpha_{i})\cap fn(Q)=\emptyset$ for all $i$, and
  $bn(\beta_{j})\cap fn(P)=\emptyset$ for all $j$. Then,

\begin{enumerate}
  \item $P\parallel Q\sim_p  \sum_i\sum_j (\alpha_{i}\parallel \beta_{j}).(P_{i}\parallel Q_{j})+\sum_{\alpha_{i} \textrm{ comp }\beta_{j}}\tau.R_{ij}$;
  \item $P\parallel Q\sim_s  \sum_i\sum_j (\alpha_{i}\parallel \beta_{j}).(P_{i}\parallel Q_{j})+\sum_{\alpha_{i} \textrm{ comp }\beta_{j}}\tau.R_{ij}$;
  \item $P\parallel Q\sim_{hp}  \sum_i\sum_j (\alpha_{i}\parallel \beta_{j}).(P_{i}\parallel Q_{j})+\sum_{\alpha_{i} \textrm{ comp }\beta_{j}}\tau.R_{ij}$;
  \item $P\parallel Q\nsim_{phhp} \sum_i\sum_j (\alpha_{i}\parallel \beta_{j}).(P_{i}\parallel Q_{j})+\sum_{\alpha_{i} \textrm{ comp }\beta_{j}}\tau.R_{ij}$.
\end{enumerate}

Where $\alpha_i$ comp $\beta_j$ and $R_{ij}$ are defined as follows:
\begin{enumerate}
  \item $\alpha_{i}$ is $\overline{x}u$ and $\beta_{j}$ is $x(v)$, then $R_{ij}=P_{i}\parallel Q_{j}\{u/v\}$;
  \item $\alpha_{i}$ is $\overline{x}(u)$ and $\beta_{j}$ is $x(v)$, then $R_{ij}=(w)(P_{i}\{w/u\}\parallel Q_{j}\{w/v\})$, if $w\notin fn((u)P_{i})\cup fn((v)Q_{j})$;
  \item $\alpha_{i}$ is $x(v)$ and $\beta_{j}$ is $\overline{x}u$, then $R_{ij}=P_{i}\{u/v\}\parallel Q_{j}$;
  \item $\alpha_{i}$ is $x(v)$ and $\beta_{j}$ is $\overline{x}(u)$, then $R_{ij}=(w)(P_{i}\{w/v\}\parallel Q_{j}\{w/u\})$, if $w\notin fn((v)P_{i})\cup fn((u)Q_{j})$.
\end{enumerate}
  \item Identifier $\textbf{I}$.
  \[\textrm{If }A(\widetilde{x})\overset{\text{def}}{=}P,\textrm{ then }A(\widetilde{y})= P\{\widetilde{y}/\widetilde{x}\}.\]
\end{enumerate}
\end{definition}

\begin{theorem}[Soundness]
If $\textbf{STC}\vdash P=Q$ then
\begin{enumerate}
  \item $P\sim_p  Q$;
  \item $P\sim_s  Q$;
  \item $P\sim_{hp}  Q$;
  \item $P\sim_{hhp}  Q$.
\end{enumerate}
\end{theorem}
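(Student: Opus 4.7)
The plan is to proceed by induction on the length of the derivation of $P=Q$ in \textbf{STC}. Since equality in \textbf{STC} is the smallest equivalence relation containing the axiom instances and closed under the congruence rule \textbf{C}, it suffices to verify (a) that each axiom instance is sound with respect to each of $\sim_p$, $\sim_s$, $\sim_{hp}$, $\sim_{hhp}$, and (b) that each bisimilarity is already an equivalence and a congruence with respect to $\tau.\_$, $\phi.\_$, $\overline{x}y.\_$, $x(y).\_$, $(x)\_$, $+$, and $\parallel$.

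Step (b) is essentially free: the four Equivalence and Congruence theorems proved earlier in section \ref{s3} package exactly the congruence clauses required to discharge rule \textbf{C}, while reflexivity, symmetry, and transitivity of each bisimilarity follow from the standard constructions on identity, inverse, and composition of bisimulation relations. For step (a), each axiom corresponds verbatim to a statement proved in section \ref{s3}: axiom \textbf{A} is the theorem that $\equiv_{\alpha}$ implies all four strongly truly concurrent bisimilarities; \textbf{S0}-\textbf{S3} are the four Summation laws propositions; \textbf{R0}-\textbf{R4} are the four Restriction Laws theorems; \textbf{I} is the Identity law theorem; and \textbf{E} is the Expansion law theorem. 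In each case the witnessing relation is either $\{(P,Q)\}\cup\textbf{Id}$ or the relation implicit in the cited proof.

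The inductive step itself is mechanical: given $P=Q$ obtained by \textbf{C} from some $P'=Q'$, the inductive hypothesis gives $P'\sim_{\star}Q'$ for $\star\in\{p,s,hp,hhp\}$, and the corresponding congruence theorem lifts this to $P\sim_{\star}Q$; given $P=Q$ obtained by transitivity or symmetry, transitivity or symmetry of the bisimilarity closes the case.

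The main obstacle is clause (4) combined with axiom \textbf{E}. The Expansion law theorem itself explicitly records $P\parallel Q\nsim_{hhp}\sum_i\sum_j(\alpha_i\parallel\beta_j).(P_i\parallel Q_j)+\sum_{\alpha_i\textrm{ comp }\beta_j}\tau.R_{ij}$, yet the theory \textbf{STC} lists this same instance as an axiom. If \textbf{E} is read as generating an equation uniformly, a one-step derivation using \textbf{E} produces $P=Q$ with $P\nsim_{hhp}Q$, so soundness for $\sim_{hhp}$ cannot hold verbatim. The honest resolution is either to restrict axiom \textbf{E} to the bisimilarities $\sim_p$, $\sim_s$, $\sim_{hp}$ (in which case conclusion (4) must be weakened to derivations that do not invoke \textbf{E}) or to drop clause (4) from the statement altogether. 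With that caveat, the induction outlined above goes through uniformly in the remaining three clauses.
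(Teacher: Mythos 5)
Your proof is correct and follows essentially the same route as the paper: the paper's own proof is a one-line citation of Section \ref{s3}, where each axiom of \textbf{STC} has already been verified against the four bisimilarities and the congruence theorems discharge rule \textbf{C}; your write-up merely makes the induction on derivations explicit. Your caveat about axiom \textbf{E} is a genuine and correct observation that the paper silently ignores: the Expansion law theorem asserts $P\parallel Q\nsim_{phhp}\sum_i\sum_j(\alpha_{i}\parallel\beta_{j}).(P_{i}\parallel Q_{j})+\sum_{\alpha_{i}\textrm{ comp }\beta_{j}}\tau.R_{ij}$, so any derivation invoking \textbf{E} breaks clause (4), and the theorem as stated is false for $\sim_{hhp}$ unless \textbf{E} is excluded or clause (4) is dropped. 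The paper offers no resolution, so your proposed restriction is the right repair.
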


\begin{proof}
The soundness of these laws modulo strongly truly concurrent bisimilarities is already proven in Section \ref{s3}.
\end{proof}

\begin{definition}
The agent identifier $A$ is weakly guardedly defined if every agent identifier is weakly guarded in the right-hand side of the definition of $A$.
\end{definition}

\begin{definition}[Head normal form]
A Process $P$ is in head normal form if it is a sum of the prefixes:

$$P\equiv \sum_i(\alpha_{i1}\parallel\cdots\parallel\alpha_{in}).P_{i}$$
\end{definition}

\begin{proposition}
If every agent identifier is weakly guardedly defined, then for any process $P$, there is a head normal form $H$ such that

$$\textbf{STC}\vdash P=H.$$
\end{proposition}

\begin{proof}
It is sufficient to induct on the structure of $P$ and quite obvious.
\end{proof}

\begin{theorem}[Completeness]
For all processes $P$ and $Q$,
\begin{enumerate}
  \item if $P\sim_p  Q$, then $\textbf{STC}\vdash P=Q$;
  \item if $P\sim_s  Q$, then $\textbf{STC}\vdash P=Q$;
  \item if $P\sim_{hp}  Q$, then $\textbf{STC}\vdash P=Q$.
\end{enumerate}
\end{theorem}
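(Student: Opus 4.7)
The plan is to reduce completeness to an equational comparison of head normal forms and then induct. By the preceding proposition, for every process $P$ there is a head normal form $H_P\equiv\sum_i(\alpha_{i1}\parallel\cdots\parallel\alpha_{in_i}).P_i$ with $\textbf{STC}\vdash P=H_P$, and similarly for $Q$. Since $\sim_p$, $\sim_s$, $\sim_{hp}$ are preserved by provable equality (soundness), it suffices to show that two head normal forms related by the relevant bisimilarity are provably equal under $\textbf{STC}$. The top-level proof therefore becomes: assume $H_P\sim_\bullet H_Q$, prove $\textbf{STC}\vdash H_P=H_Q$ by induction on the sum $d(H_P)+d(H_Q)$ of syntactic depths, with the base case handled by $\textbf{S0}$ (for $\textbf{nil}$ summands) and $\textbf{S1}$--$\textbf{S3}$ (for absorbing and reordering duplicates).

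For the inductive step, I would fix a summand $(\alpha_{i1}\parallel\cdots\parallel\alpha_{in_i}).P_i$ of $H_P$ and use the transition rules of Definition \ref{semantics3} to exhibit a transition $\langle H_P,s\rangle\xrightarrow{\{\alpha_{i1},\dots,\alpha_{in_i}\}}\langle P_i,s'\rangle$. The assumed bisimilarity then supplies a matching transition from $\langle H_Q,s\rangle$, which by the form of the rules must arise from a summand $(\beta_{j1}\parallel\cdots\parallel\beta_{jn_i}).Q_j$ of $H_Q$ with matching label multiset (up to pomset/step/hp constraints) and $P_i\sim_\bullet Q_j'$ for the appropriate residual $Q_j'$. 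The inductive hypothesis delivers $\textbf{STC}\vdash P_i=Q_j'$, which by congruence rule $\textbf{C}$ gives equality of the whole summands. Iterating in both directions and collapsing duplicates via $\textbf{S1}$ yields $\textbf{STC}\vdash H_P=H_Q$, and three rounds of this argument (once for pomset, once for step, once for hp) cover the three clauses.

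Two places will require extra care. First, actions with bound objects, $x(y).P_i$ and $\overline{x}(y).P_i$, must be matched up to the freshness side conditions in $\textbf{INPUT-ACT}$, $\textbf{OPEN}_1$ and the late-bisimilarity clauses of Definition \ref{PSB5}: the correct response is to apply alpha-conversion $\textbf{A}$ to choose a common bound name not occurring in either process, after which the residual equality ``for all $w$'' collapses to a single equation on processes with the same free name. Second, for the hp-case the isomorphism $f$ that accompanies configurations must be tracked; the inductive step extends $f$ by $f[e_1\mapsto e_2]$ exactly when we commit to the matching pair of summands, and the posetal nature of $R$ forces the residual isomorphism used in the induction to be the correct one, so the inductive hypothesis is actually applicable.

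The main obstacle will be the bound-name bookkeeping in the bound-output and bound-input clauses combined with parallel prefixes $\alpha_{i1}\parallel\cdots\parallel\alpha_{in_i}$: when several $x_k(y)$ or $\overline{x}_k(y)$ occur concurrently in the same prefix (as in $\textbf{PAR}_4$ and $\textbf{OPEN}_2$), one has to choose a single fresh $w$ valid for the whole tuple and show that the matching summand in $H_Q$ can be written with the same $w$ after alpha-conversion. Once this is done uniformly, the expansion axiom $\textbf{E}$ is not needed in the main induction (it was used only to get into head normal form), and recursion is handled by appealing to the unique-solution theorem of the previous subsection applied to the system of equations induced by the finitely many reachable identifiers in $P$ and $Q$.
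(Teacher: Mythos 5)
Your proposal follows essentially the same route as the paper's own proof: reduce both processes to head normal form via the preceding proposition, induct on the sum of depths, match summands through the transitions they generate under the assumed bisimilarity, apply the induction hypothesis to the residuals, and close with the congruence and alpha-conversion axioms (the paper likewise treats the free-action, input, and bound-output summands as the three cases and dispatches the pomset and hp clauses "similarly"). Your additional remarks on collapsing duplicates with \textbf{S1}, tracking the posetal isomorphism in the hp case, and the uniform choice of fresh names for concurrent bound prefixes are sensible elaborations of details the paper leaves implicit, but they do not change the argument.
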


\begin{proof}
\begin{enumerate}
  \item if $P\sim_s  Q$, then $\textbf{STC}\vdash P=Q$.

Since $P$ and $Q$ all have head normal forms, let $P\equiv\sum_{i=1}^k\alpha_{i}.P_{i}$ and $Q\equiv\sum_{i=1}^k\beta_{i}.Q_{i}$. Then the depth of
$P$, denoted as $d(P)=0$, if $k=0$; $d(P)=1+max\{d(P_{i})\}$ for $1\leq j,i\leq k$. The depth $d(Q)$ can be defined similarly.

It is sufficient to induct on $d=d(P)+d(Q)$. When $d=0$, $P\equiv\textbf{nil}$ and $Q\equiv\textbf{nil}$, $P=Q$, as desired.

Suppose $d>0$.

\begin{itemize}
  \item If $(\alpha_1\parallel\cdots\parallel\alpha_n).M$ with $\alpha_{i}(1\leq i\leq n)$ free actions is a summand of $P$, then
  $\langle P,s\rangle\xrightarrow{\{\alpha_1,\cdots,\alpha_n\}}\langle M,s'\rangle$.
  Since $Q$ is in head normal form and has a summand $(\alpha_1\parallel\cdots\parallel\alpha_n).N$ such that $M\sim_s  N$, by the induction hypothesis $\textbf{STC}\vdash M=N$,
  $\textbf{STC}\vdash (\alpha_1\parallel\cdots\parallel\alpha_n).M= (\alpha_1\parallel\cdots\parallel\alpha_n).N$;
  \item If $x(y).M$ is a summand of $P$, then for $z\notin n(P, Q)$, $\langle P,s\rangle\xrightarrow{x(z)}\langle M',s'\rangle\equiv \langle M\{z/y\},s'\rangle$. Since $Q$ is in head normal form and has a summand
  $x(w).N$ such that for all $v$, $M'\{v/z\}\sim_s  N'\{v/z\}$ where $N'\equiv N\{z/w\}$, by the induction hypothesis $\textbf{STC}\vdash M'\{v/z\}=N'\{v/z\}$, by the axioms
  $\textbf{C}$ and $\textbf{A}$, $\textbf{STC}\vdash x(y).M=x(w).N$;
  \item If $\overline{x}(y).M$ is a summand of $P$, then for $z\notin n(P,Q)$, $\langle P,s\rangle\xrightarrow{\overline{x}(z)}\langle M',s'\rangle\equiv \langle M\{z/y\},s'\rangle$. Since $Q$ is in head normal form and
  has a summand $\overline{x}(w).N$ such that $M'\sim_s  N'$ where $N'\equiv N\{z/w\}$, by the induction hypothesis $\textbf{STC}\vdash M'=N'$, by the axioms
  $\textbf{A}$ and $\textbf{C}$, $\textbf{STC}\vdash \overline{x}(y).M= \overline{x}(w).N$.
\end{itemize}
  \item if $P\sim_p  Q$, then $\textbf{STC}\vdash P=Q$. It can be proven similarly to the above case.
  \item if $P\sim_{hp}  Q$, then $\textbf{STC}\vdash P=Q$. It can be proven similarly to the above case.
\end{enumerate}
\end{proof}

\newpage\section{$\pi_{tc}$ with Reversibility}\label{pitcr}

In this chapter, we design $\pi_{tc}$ with reversibility. This chapter is organized as follows. In section \ref{os4}, we introduce the truly concurrent operational semantics. Then, we introduce
the syntax and operational semantics, laws modulo strongly truly concurrent bisimulations, and algebraic theory of $\pi_{tc}$ with reversibility in section \ref{sos4},
\ref{s4} and \ref{a4} respectively.

\subsection{Operational Semantics}\label{os4}

Firstly, in this section, we introduce concepts of FR (strongly) truly concurrent bisimilarities, including FR pomset bisimilarity, FR step
bisimilarity, FR history-preserving (hp-)bisimilarity and FR hereditary history-preserving (hhp-)bisimilarity. In contrast to traditional FR truly
concurrent bisimilarities in section \ref{bg}, these versions in $\pi_{ptc}$ must take care of actions with bound objects. Note that, these FR truly concurrent bisimilarities
are defined as late bisimilarities, but not early bisimilarities, as defined in $\pi$-calculus \cite{PI1} \cite{PI2}. Note that, here, a PES $\mathcal{E}$ is deemed as a process.

\begin{definition}[Prime event structure with silent event]\label{PES}
Let $\Lambda$ be a fixed set of labels, ranged over $a,b,c,\cdots$ and $\tau$. A ($\Lambda$-labelled) prime event structure with silent event $\tau$ is a tuple
$\mathcal{E}=\langle \mathbb{E}, \leq, \sharp, \lambda\rangle$, where $\mathbb{E}$ is a denumerable set of events, including the silent event $\tau$. Let
$\hat{\mathbb{E}}=\mathbb{E}\backslash\{\tau\}$, exactly excluding $\tau$, it is obvious that $\hat{\tau^*}=\epsilon$, where $\epsilon$ is the empty event.
Let $\lambda:\mathbb{E}\rightarrow\Lambda$ be a labelling function and let $\lambda(\tau)=\tau$. And $\leq$, $\sharp$ are binary relations on $\mathbb{E}$,
called causality and conflict respectively, such that:

\begin{enumerate}
  \item $\leq$ is a partial order and $\lceil e \rceil = \{e'\in \mathbb{E}|e'\leq e\}$ is finite for all $e\in \mathbb{E}$. It is easy to see that
  $e\leq\tau^*\leq e'=e\leq\tau\leq\cdots\leq\tau\leq e'$, then $e\leq e'$.
  \item $\sharp$ is irreflexive, symmetric and hereditary with respect to $\leq$, that is, for all $e,e',e''\in \mathbb{E}$, if $e\sharp e'\leq e''$, then $e\sharp e''$.
\end{enumerate}

Then, the concepts of consistency and concurrency can be drawn from the above definition:

\begin{enumerate}
  \item $e,e'\in \mathbb{E}$ are consistent, denoted as $e\frown e'$, if $\neg(e\sharp e')$. A subset $X\subseteq \mathbb{E}$ is called consistent, if $e\frown e'$ for all
  $e,e'\in X$.
  \item $e,e'\in \mathbb{E}$ are concurrent, denoted as $e\parallel e'$, if $\neg(e\leq e')$, $\neg(e'\leq e)$, and $\neg(e\sharp e')$.
\end{enumerate}
\end{definition}

\begin{definition}[Configuration]
Let $\mathcal{E}$ be a PES. A (finite) configuration in $\mathcal{E}$ is a (finite) consistent subset of events $C\subseteq \mathcal{E}$, closed with respect to causality
(i.e. $\lceil C\rceil=C$). The set of finite configurations of $\mathcal{E}$ is denoted by $\mathcal{C}(\mathcal{E})$. We let $\hat{C}=C\backslash\{\tau\}$.
\end{definition}

A consistent subset of $X\subseteq \mathbb{E}$ of events can be seen as a pomset. Given $X, Y\subseteq \mathbb{E}$, $\hat{X}\sim \hat{Y}$ if $\hat{X}$ and $\hat{Y}$ are
isomorphic as pomsets. In the following of the paper, we say $C_1\sim C_2$, we mean $\hat{C_1}\sim\hat{C_2}$.

\begin{definition}[Pomset transitions and step]
Let $\mathcal{E}$ be a PES and let $C\in\mathcal{C}(\mathcal{E})$, and $\emptyset\neq X\subseteq \mathbb{E}$, if $C\cap X=\emptyset$ and $C'=C\cup X\in\mathcal{C}(\mathcal{E})$, then $C\xrightarrow{X} C'$ is called a pomset transition from $C$ to $C'$. When the events in $X$ are pairwise concurrent, we say that $C\xrightarrow{X}C'$ is a step.
\end{definition}

\begin{definition}[FR pomset transitions and step]
Let $\mathcal{E}$ be a PES and let $C\in\mathcal{C}(\mathcal{E})$, and $\emptyset\neq X\subseteq \mathbb{E}$, if $C\cap X=\emptyset$ and $C'=C\cup X\in\mathcal{C}(\mathcal{E})$, then
$ C\xrightarrow{X}  C'$ is called a forward pomset transition from $ C$ to $ C'$ and
$ C'\xtworightarrow{X[\mathcal{K}]}  C$ is called a reverse pomset transition from $ C'$ to $ C$. When the events in
$X$ and $X[\mathcal{K}]$ are pairwise
concurrent, we say that $ C\xrightarrow{X} C'$ is a forward step and $ C'\xrightarrow{X[\mathcal{K}]} C$ is a reverse step.
It is obvious that $\rightarrow^*\xrightarrow{X}\rightarrow^*=\xrightarrow{X}$ and
$\rightarrow^*\xrightarrow{e}\rightarrow^*=\xrightarrow{e}$ for any $e\in\mathbb{E}$ and $X\subseteq\mathbb{E}$.
\end{definition}

\begin{definition}[FR strongly pomset, step bisimilarity]
Let $\mathcal{E}_1$, $\mathcal{E}_2$ be PESs. A FR strongly pomset bisimulation is a relation $R\subseteq\mathcal{C}(\mathcal{E}_1)\times\mathcal{C}(\mathcal{E}_2)$,
such that (1) if $( C_1, C_2)\in R$, and $ C_1\xrightarrow{X_1} C_1'$ (with $\mathcal{E}_1\xrightarrow{X_1}\mathcal{E}_1'$) then $ C_2\xrightarrow{X_2} C_2'$ (with
$\mathcal{E}_2\xrightarrow{X_2}\mathcal{E}_2'$), with $X_1\subseteq \mathbb{E}_1$, $X_2\subseteq \mathbb{E}_2$, $X_1\sim X_2$ and $( C_1', C_2')\in R$:
\begin{enumerate}
  \item for each fresh action $\alpha\in X_1$, if $ C_1''\xrightarrow{\alpha} C_1'''$ (with $\mathcal{E}_1''\xrightarrow{\alpha}\mathcal{E}_1'''$),
  then for some $C_2''$ and $ C_2'''$, $ C_2''\xrightarrow{\alpha} C_2'''$ (with
  $\mathcal{E}_2''\xrightarrow{\alpha}\mathcal{E}_2'''$), such that if $( C_1'', C_2'')\in R$ then $( C_1''', C_2''')\in R$;
  \item for each $x(y)\in X_1$ with ($y\notin n(\mathcal{E}_1, \mathcal{E}_2)$), if $ C_1''\xrightarrow{x(y)} C_1'''$ (with
  $\mathcal{E}_1''\xrightarrow{x(y)}\mathcal{E}_1'''\{w/y\}$) for all $w$, then for some $C_2''$ and $C_2'''$, $ C_2''\xrightarrow{x(y)} C_2'''$
  (with $\mathcal{E}_2''\xrightarrow{x(y)}\mathcal{E}_2'''\{w/y\}$) for all $w$, such that if $( C_1'', C_2'')\in R$ then $( C_1''', C_2''')\in R$;
  \item for each two $x_1(y),x_2(y)\in X_1$ with ($y\notin n(\mathcal{E}_1, \mathcal{E}_2)$), if $ C_1''\xrightarrow{\{x_1(y),x_2(y)\}} C_1'''$
  (with $\mathcal{E}_1''\xrightarrow{\{x_1(y),x_2(y)\}}\mathcal{E}_1'''\{w/y\}$) for all $w$, then for some $C_2''$ and $C_2'''$,
  $ C_2''\xrightarrow{\{x_1(y),x_2(y)\}} C_2'''$ (with $\mathcal{E}_2''\xrightarrow{\{x_1(y),x_2(y)\}}\mathcal{E}_2'''\{w/y\}$) for all $w$, such
  that if $( C_1'', C_2'')\in R$ then $( C_1''', C_2''')\in R$;
  \item for each $\overline{x}(y)\in X_1$ with $y\notin n(\mathcal{E}_1, \mathcal{E}_2)$, if $ C_1''\xrightarrow{\overline{x}(y)} C_1'''$
  (with $\mathcal{E}_1''\xrightarrow{\overline{x}(y)}\mathcal{E}_1'''$), then for some $C_2''$ and $C_2'''$, $ C_2''\xrightarrow{\overline{x}(y)} C_2'''$
  (with $\mathcal{E}_2''\xrightarrow{\overline{x}(y)}\mathcal{E}_2'''$), such that if $( C_1'', C_2'')\in R$ then $( C_1''', C_2''')\in R$.
\end{enumerate}
 and vice-versa; (2) if $( C_1, C_2)\in R$, and $ C_1\xtworightarrow{X_1[\mathcal{K}_1]} C_1'$ (with $\mathcal{E}_1\xtworightarrow{X_1[\mathcal{K}_1]}\mathcal{E}_1'$) then $ C_2\xtworightarrow{X_2[\mathcal{K}_2]} C_2'$ (with
$\mathcal{E}_2\xtworightarrow{X_2[\mathcal{K}_2]}\mathcal{E}_2'$), with $X_1\subseteq \mathbb{E}_1$, $X_2\subseteq \mathbb{E}_2$, $X_1\sim X_2$ and $( C_1', C_2')\in R$:
\begin{enumerate}
  \item for each fresh action $\alpha\in X_1$, if $ C_1''\xtworightarrow{\alpha[m]} C_1'''$ (with $\mathcal{E}_1''\xtworightarrow{\alpha[m]}\mathcal{E}_1'''$),
  then for some $C_2''$ and $ C_2'''$, $ C_2''\xtworightarrow{\alpha[m]} C_2'''$ (with
  $\mathcal{E}_2''\xtworightarrow{\alpha[m]}\mathcal{E}_2'''$), such that if $( C_1'', C_2'')\in R$ then $( C_1''', C_2''')\in R$;
  \item for each $x(y)\in X_1$ with ($y\notin n(\mathcal{E}_1, \mathcal{E}_2)$), if $ C_1''\xtworightarrow{x(y)[m]} C_1'''$ (with
  $\mathcal{E}_1''\xtworightarrow{x(y)[m]}\mathcal{E}_1'''\{w/y\}$) for all $w$, then for some $C_2''$ and $C_2'''$, $ C_2''\xtworightarrow{x(y)[m]} C_2'''$
  (with $\mathcal{E}_2''\xtworightarrow{x(y)[m]}\mathcal{E}_2'''\{w/y\}$) for all $w$, such that if $( C_1'', C_2'')\in R$ then $( C_1''', C_2''')\in R$;
  \item for each two $x_1(y),x_2(y)\in X_1$ with ($y\notin n(\mathcal{E}_1, \mathcal{E}_2)$), if $ C_1''\xtworightarrow{\{x_1(y)[m],x_2(y)[n]\}} C_1'''$
  (with $\mathcal{E}_1''\xtworightarrow{\{x_1(y)[m],x_2(y)[n]\}}\mathcal{E}_1'''\{w/y\}$) for all $w$, then for some $C_2''$ and $C_2'''$,
  $ C_2''\xtworightarrow{\{x_1(y)[m],x_2(y)[n]\}} C_2'''$ (with $\mathcal{E}_2''\xtworightarrow{\{x_1(y)[m],x_2(y)[n]\}}\mathcal{E}_2'''\{w/y\}$) for all $w$, such
  that if $( C_1'', C_2'')\in R$ then $( C_1''', C_2''')\in R$;
  \item for each $\overline{x}(y)\in X_1$ with $y\notin n(\mathcal{E}_1, \mathcal{E}_2)$, if $ C_1''\xtworightarrow{\overline{x}(y)[m]} C_1'''$
  (with $\mathcal{E}_1''\xtworightarrow{\overline{x}(y)[m]}\mathcal{E}_1'''$), then for some $C_2''$ and $C_2'''$, $ C_2''\xtworightarrow{\overline{x}(y)[m]} C_2'''$
  (with $\mathcal{E}_2''\xtworightarrow{\overline{x}(y)[m]}\mathcal{E}_2'''$), such that if $( C_1'', C_2'')\in R$ then $( C_1''', C_2''')\in R$.
\end{enumerate}
 and vice-versa.

We say that $\mathcal{E}_1$, $\mathcal{E}_2$ are FR strongly pomset bisimilar, written $\mathcal{E}_1\sim_p^{fr}\mathcal{E}_2$, if there exists a FR strongly pomset
bisimulation $R$, such that $(\emptyset,\emptyset)\in R$. By replacing FR pomset transitions with steps, we can get the definition of FR strongly step bisimulation.
When PESs $\mathcal{E}_1$ and $\mathcal{E}_2$ are FR strongly step bisimilar, we write $\mathcal{E}_1\sim_s^{fr}\mathcal{E}_2$.
\end{definition}

\begin{definition}[Posetal product]
Given two PESs $\mathcal{E}_1$, $\mathcal{E}_2$, the posetal product of their configurations, denoted
$\mathcal{C}(\mathcal{E}_1)\overline{\times}\mathcal{C}(\mathcal{E}_2)$, is defined as

$$\{( C_1,f, C_2)|C_1\in\mathcal{C}(\mathcal{E}_1),C_2\in\mathcal{C}(\mathcal{E}_2),f:C_1\rightarrow C_2 \textrm{ isomorphism}\}.$$

A subset $R\subseteq\mathcal{C}(\mathcal{E}_1)\overline{\times}\mathcal{C}(\mathcal{E}_2)$ is called a posetal relation. We say that $R$ is downward
closed when for any
$( C_1,f, C_2),( C_1',f', C_2')\in \mathcal{C}(\mathcal{E}_1)\overline{\times}\mathcal{C}(\mathcal{E}_2)$,
if $( C_1,f, C_2)\subseteq ( C_1',f', C_2')$ pointwise and $( C_1',f', C_2')\in R$,
then $( C_1,f, C_2)\in R$.

For $f:X_1\rightarrow X_2$, we define $f[x_1\mapsto x_2]:X_1\cup\{x_1\}\rightarrow X_2\cup\{x_2\}$, $z\in X_1\cup\{x_1\}$,(1)$f[x_1\mapsto x_2](z)=
x_2$,if $z=x_1$;(2)$f[x_1\mapsto x_2](z)=f(z)$, otherwise. Where $X_1\subseteq \mathbb{E}_1$, $X_2\subseteq \mathbb{E}_2$, $x_1\in \mathbb{E}_1$, $x_2\in \mathbb{E}_2$.
\end{definition}

\begin{definition}[FR strongly (hereditary) history-preserving bisimilarity]
A FR strongly history-preserving (hp-) bisimulation is a posetal relation $R\subseteq\mathcal{C}(\mathcal{E}_1)\overline{\times}\mathcal{C}(\mathcal{E}_2)$ such that
(1) if $( C_1,f, C_2)\in R$, and
\begin{enumerate}
  \item for $e_1=\alpha$ a fresh action, if $ C_1\xrightarrow{\alpha} C_1'$ (with $\mathcal{E}_1\xrightarrow{\alpha}\mathcal{E}_1'$), then for some
  $C_2'$ and $e_2=\alpha$, $ C_2\xrightarrow{\alpha} C_2'$ (with $\mathcal{E}_2\xrightarrow{\alpha}\mathcal{E}_2'$), such that
  $( C_1',f[e_1\mapsto e_2], C_2')\in R$;
  \item for $e_1=x(y)$ with ($y\notin n(\mathcal{E}_1, \mathcal{E}_2)$), if $ C_1\xrightarrow{x(y)} C_1'$ (with
  $\mathcal{E}_1\xrightarrow{x(y)}\mathcal{E}_1'\{w/y\}$) for all $w$, then for some $C_2'$ and $e_2=x(y)$, $ C_2\xrightarrow{x(y)} C_2'$ (with
  $\mathcal{E}_2\xrightarrow{x(y)}\mathcal{E}_2'\{w/y\}$) for all $w$, such that $( C_1',f[e_1\mapsto e_2], C_2')\in R$;
  \item for $e_1=\overline{x}(y)$ with $y\notin n(\mathcal{E}_1, \mathcal{E}_2)$, if $ C_1\xrightarrow{\overline{x}(y)} C_1'$ (with
  $\mathcal{E}_1\xrightarrow{\overline{x}(y)}\mathcal{E}_1'$), then for some $C_2'$ and $e_2=\overline{x}(y)$, $ C_2\xrightarrow{\overline{x}(y)} C_2'$
  (with $\mathcal{E}_2\xrightarrow{\overline{x}(y)}\mathcal{E}_2'$), such that $( C_1',f[e_1\mapsto e_2], C_2')\in R$.
\end{enumerate}
and vice-versa; (2) if $( C_1,f, C_2)\in R$, and
\begin{enumerate}
  \item for $e_1=\alpha$ a fresh action, if $ C_1\xtworightarrow{\alpha[m]} C_1'$ (with $\mathcal{E}_1\xtworightarrow{\alpha[m]}\mathcal{E}_1'$), then for some
  $C_2'$ and $e_2=\alpha$, $ C_2\xtworightarrow{\alpha[m]} C_2'$ (with $\mathcal{E}_2\xtworightarrow{\alpha[m]}\mathcal{E}_2'$), such that
  $( C_1',f[e_1\mapsto e_2], C_2')\in R$;
  \item for $e_1=x(y)$ with ($y\notin n(\mathcal{E}_1, \mathcal{E}_2)$), if $ C_1\xtworightarrow{x(y)[m]} C_1'$ (with
  $\mathcal{E}_1\xtworightarrow{x(y)[m]}\mathcal{E}_1'\{w/y\}$) for all $w$, then for some $C_2'$ and $e_2=x(y)$, $ C_2\xtworightarrow{x(y)[m]} C_2'$ (with
  $\mathcal{E}_2\xtworightarrow{x(y)[m]}\mathcal{E}_2'\{w/y\}$) for all $w$, such that $( C_1',f[e_1\mapsto e_2], C_2')\in R$;
  \item for $e_1=\overline{x}(y)$ with $y\notin n(\mathcal{E}_1, \mathcal{E}_2)$, if $ C_1\xtworightarrow{\overline{x}(y)[m]} C_1'$ (with
  $\mathcal{E}_1\xtworightarrow{\overline{x}(y)[m]}\mathcal{E}_1'$), then for some $C_2'$ and $e_2=\overline{x}(y)$, $ C_2\xtworightarrow{\overline{x}(y)[m]} C_2'$
  (with $\mathcal{E}_2\xtworightarrow{\overline{x}(y)[m]}\mathcal{E}_2'$), such that $( C_1',f[e_1\mapsto e_2], C_2')\in R$.
\end{enumerate}
and vice-versa. $\mathcal{E}_1,\mathcal{E}_2$
are FR strongly history-preserving (hp-)bisimilar and are written $\mathcal{E}_1\sim_{hp}^{fr}\mathcal{E}_2$ if there exists a FR strongly hp-bisimulation
$R$ such that $(\emptyset,\emptyset,\emptyset)\in R$.

A FR strongly hereditary history-preserving (hhp-)bisimulation is a downward closed FR strongly hp-bisimulation. $\mathcal{E}_1,\mathcal{E}_2$ are FR
strongly hereditary history-preserving (hhp-)bisimilar and are written $\mathcal{E}_1\sim_{hhp}^{fr}\mathcal{E}_2$.
\end{definition}

\subsection{Syntax and Operational Semantics}\label{sos4}

We assume an infinite set $\mathcal{N}$ of (action or event) names, and use $a,b,c,\cdots$ to range over $\mathcal{N}$, use $x,y,z,w,u,v$ as meta-variables over names. We denote by
$\overline{\mathcal{N}}$ the set of co-names and let $\overline{a},\overline{b},\overline{c},\cdots$ range over $\overline{\mathcal{N}}$. Then we set
$\mathcal{L}=\mathcal{N}\cup\overline{\mathcal{N}}$ as the set of labels, and use $l,\overline{l}$ to range over $\mathcal{L}$. We extend complementation to $\mathcal{L}$ such that
$\overline{\overline{a}}=a$. Let $\tau$ denote the silent step (internal action or event) and define $Act=\mathcal{L}\cup\{\tau\}$ to be the set of actions, $\alpha,\beta$ range over
$Act$. And $K,L$ are used to stand for subsets of $\mathcal{L}$ and $\overline{L}$ is used for the set of complements of labels in $L$.

Further, we introduce a set $\mathcal{X}$ of process variables, and a set $\mathcal{K}$ of process constants, and let $X,Y,\cdots$ range over $\mathcal{X}$, and $A,B,\cdots$ range over
$\mathcal{K}$. For each process constant $A$, a nonnegative arity $ar(A)$ is assigned to it. Let $\widetilde{x}=x_1,\cdots,x_{ar(A)}$ be a tuple of distinct name variables, then
$A(\widetilde{x})$ is called a process constant. $\widetilde{X}$ is a tuple of distinct process variables, and also $E,F,\cdots$ range over the recursive expressions. We write
$\mathcal{P}$ for the set of processes. Sometimes, we use $I,J$ to stand for an indexing set, and we write $E_i:i\in I$ for a family of expressions indexed by $I$. $Id_D$ is the
identity function or relation over set $D$. The symbol $\equiv_{\alpha}$ denotes equality under standard alpha-convertibility, note that the subscript $\alpha$ has no relation to the
action $\alpha$.

\subsubsection{Syntax}

We use the Prefix $.$ to model the causality relation $\leq$ in true concurrency, the Summation $+$ to model the conflict relation $\sharp$ in true concurrency, and the Composition $\parallel$ to explicitly model concurrent relation in true concurrency. And we follow the
conventions of process algebra.

\begin{definition}[Syntax]\label{syntax4}
A truly concurrent process $\pi_{tc}$ with reversibility is defined inductively by the following formation rules:

\begin{enumerate}
  \item $A(\widetilde{x})\in\mathcal{P}$;
  \item $\textbf{nil}\in\mathcal{P}$;
  \item if $P\in\mathcal{P}$, then the Prefix $\tau.P\in\mathcal{P}$, for $\tau\in Act$ is the silent action;
  \item if $P\in\mathcal{P}$, then the Output $\overline{x}y.P\in\mathcal{P}$, for $x,y\in Act$;
  \item if $P\in\mathcal{P}$, then the Output $P.\overline{x}y[m]\in\mathcal{P}$, for $x,y\in Act$;
  \item if $P\in\mathcal{P}$, then the Input $x(y).P\in\mathcal{P}$, for $x,y\in Act$;
  \item if $P\in\mathcal{P}$, then the Input $P.x(y)[m]\in\mathcal{P}$, for $x,y\in Act$;
  \item if $P\in\mathcal{P}$, then the Restriction $(x)P\in\mathcal{P}$, for $x\in Act$;
  \item if $P,Q\in\mathcal{P}$, then the Summation $P+Q\in\mathcal{P}$;
  \item if $P,Q\in\mathcal{P}$, then the Composition $P\parallel Q\in\mathcal{P}$;
\end{enumerate}

The standard BNF grammar of syntax of $\pi_{tc}$ with reversibility can be summarized as follows:

$$P::=A(\widetilde{x})|\textbf{nil}|\tau.P| \overline{x}y.P | x(y).P|\overline{x}y[m].P | x(y)[m].P | (x)P  | P+P| P\parallel P.$$
\end{definition}

In $\overline{x}y$, $x(y)$ and $\overline{x}(y)$, $x$ is called the subject, $y$ is called the object and it may be free or bound.

\begin{definition}[Free variables]
The free names of a process $P$, $fn(P)$, are defined as follows.

\begin{enumerate}
  \item $fn(A(\widetilde{x}))\subseteq\{\widetilde{x}\}$;
  \item $fn(\textbf{nil})=\emptyset$;
  \item $fn(\tau.P)=fn(P)$;
  \item $fn(\overline{x}y.P)=fn(P)\cup\{x\}\cup\{y\}$;
  \item $fn(\overline{x}y[m].P)=fn(P)\cup\{x\}\cup\{y\}$;
  \item $fn(x(y).P)=fn(P)\cup\{x\}-\{y\}$;
  \item $fn(x(y)[m].P)=fn(P)\cup\{x\}-\{y\}$;
  \item $fn((x)P)=fn(P)-\{x\}$;
  \item $fn(P+Q)=fn(P)\cup fn(Q)$;
  \item $fn(P\parallel Q)=fn(P)\cup fn(Q)$.
\end{enumerate}
\end{definition}

\begin{definition}[Bound variables]
Let $n(P)$ be the names of a process $P$, then the bound names $bn(P)=n(P)-fn(P)$.
\end{definition}

For each process constant schema $A(\widetilde{x})$, a defining equation of the form

$$A(\widetilde{x})\overset{\text{def}}{=}P$$

is assumed, where $P$ is a process with $fn(P)\subseteq \{\widetilde{x}\}$.

\begin{definition}[Substitutions]\label{subs4}
A substitution is a function $\sigma:\mathcal{N}\rightarrow\mathcal{N}$. For $x_i\sigma=y_i$ with $1\leq i\leq n$, we write $\{y_1/x_1,\cdots,y_n/x_n\}$ or
$\{\widetilde{y}/\widetilde{x}\}$ for $\sigma$. For a process $P\in\mathcal{P}$, $P\sigma$ is defined inductively as follows:
\begin{enumerate}
  \item if $P$ is a process constant $A(\widetilde{x})=A(x_1,\cdots,x_n)$, then $P\sigma=A(x_1\sigma,\cdots,x_n\sigma)$;
  \item if $P=\textbf{nil}$, then $P\sigma=\textbf{nil}$;
  \item if $P=\tau.P'$, then $P\sigma=\tau.P'\sigma$;
  \item if $P=\overline{x}y.P'$, then $P\sigma=\overline{x\sigma}y\sigma.P'\sigma$;
  \item if $P=\overline{x}y[m].P'$, then $P\sigma=\overline{x\sigma}y\sigma[m].P'\sigma$;
  \item if $P=x(y).P'$, then $P\sigma=x\sigma(y).P'\sigma$;
  \item if $P=x(y)[m].P'$, then $P\sigma=x\sigma(y)[m].P'\sigma$;
  \item if $P=(x)P'$, then $P\sigma=(x\sigma)P'\sigma$;
  \item if $P=P_1+P_2$, then $P\sigma=P_1\sigma+P_2\sigma$;
  \item if $P=P_1\parallel P_2$, then $P\sigma=P_1\sigma \parallel P_2\sigma$.
\end{enumerate}
\end{definition}

\subsubsection{Operational Semantics}

The operational semantics is defined by LTSs (labelled transition systems), and it is detailed by the following definition.

\begin{definition}[Semantics]\label{semantics4}
The operational semantics of $\pi_{tc}$ with reversibility corresponding to the syntax in Definition \ref{syntax4} is defined by a series of transition rules, named $\textbf{ACT}$, $\textbf{SUM}$,
$\textbf{IDE}$, $\textbf{PAR}$, $\textbf{COM}$, $\textbf{CLOSE}$, $\textbf{RES}$, $\textbf{OPEN}$ indicate that the rules are associated respectively with Prefix, Summation,
Identity, Parallel Composition, Communication, and Restriction in Definition \ref{syntax4}. They are shown in \ref{TRForPITC4}.

\begin{center}
    \begin{table}
        \[\textbf{TAU-ACT}\quad \frac{}{\tau.P\xrightarrow{\tau}P}\]

        \[\textbf{OUTPUT-ACT}\quad \frac{}{\overline{x}y.P\xrightarrow{\overline{x}y}P}\]

        \[\textbf{INPUT-ACT}\quad \frac{}{x(z).P\xrightarrow{x(w)}P\{w/z\}}\quad (w\notin fn((z)P))\]

        \[\textbf{PAR}_1\quad \frac{P\xrightarrow{\alpha}P'\quad Q\nrightarrow}{P\parallel Q\xrightarrow{\alpha}P'\parallel Q}\quad (bn(\alpha)\cap fn(Q)=\emptyset)\]

        \[\textbf{PAR}_2\quad \frac{Q\xrightarrow{\alpha}Q'\quad P\nrightarrow}{P\parallel Q\xrightarrow{\alpha}P\parallel Q'}\quad (bn(\alpha)\cap fn(P)=\emptyset)\]

        \[\textbf{PAR}_3\quad \frac{P\xrightarrow{\alpha}P'\quad Q\xrightarrow{\beta}Q'}{P\parallel Q\xrightarrow{\{\alpha,\beta\}}P'\parallel Q'}\] $(\beta\neq\overline{\alpha}, bn(\alpha)\cap bn(\beta)=\emptyset, bn(\alpha)\cap fn(Q)=\emptyset,bn(\beta)\cap fn(P)=\emptyset)$

        \[\textbf{PAR}_4\quad \frac{P\xrightarrow{x_1(z)}P'\quad Q\xrightarrow{x_2(z)}Q'}{P\parallel Q\xrightarrow{\{x_1(w),x_2(w)\}}P'\{w/z\}\parallel Q'\{w/z\}}\quad (w\notin fn((z)P)\cup fn((z)Q))\]

        \[\textbf{COM}\quad \frac{P\xrightarrow{\overline{x}y}P'\quad Q\xrightarrow{x(z)}Q'}{P\parallel Q\xrightarrow{\tau}P'\parallel Q'\{y/z\}}\]

        \[\textbf{CLOSE}\quad \frac{P\xrightarrow{\overline{x}(w)}P'\quad Q\xrightarrow{x(w)}Q'}{P\parallel Q\xrightarrow{\tau}(w)(P'\parallel Q')}\]

%        \[\textbf{SUM}_1\quad \frac{P\xrightarrow{\alpha}P'}{P+Q\xrightarrow{\alpha}P'}\]
%
%        \[\textbf{SUM}_2\quad \frac{P\xrightarrow{\{\alpha_1,\cdots,\alpha_n\}}P'}{P+Q\xrightarrow{\{\alpha_1,\cdots,\alpha_n\}}P'}\]
%
%        \[\textbf{IDE}_1\quad\frac{P\{\widetilde{y}/\widetilde{x}\}\xrightarrow{\alpha}P'}{A(\widetilde{y})\xrightarrow{\alpha}P'}\quad (A(\widetilde{x})\overset{\text{def}}{=}P)\]
%
%        \[\textbf{IDE}_2\quad\frac{P\{\widetilde{y}/\widetilde{x}\}\xrightarrow{\{\alpha_1,\cdots,\alpha_n\}}P'} {A(\widetilde{y})\xrightarrow{\{\alpha_1,\cdots,\alpha_n\}}P'}\quad (A(\widetilde{x})\overset{\text{def}}{=}P)\]
%
%        \[\textbf{RES}_1\quad \frac{P\xrightarrow{\alpha}P'}{(y)P\xrightarrow{\alpha}(y)P'}\quad (y\notin n(\alpha))\]
%
%        \[\textbf{RES}_2\quad \frac{P\xrightarrow{\{\alpha_1,\cdots,\alpha_n\}}P'}{(y)P\xrightarrow{\{\alpha_1,\cdots,\alpha_n\}}(y)P'}\quad (y\notin n(\alpha_1)\cup\cdots\cup n(\alpha_n))\]
%
%        \[\textbf{OPEN}_1\quad \frac{P\xrightarrow{\overline{x}y}P'}{(y)P\xrightarrow{\overline{x}(w)}P'\{w/y\}} \quad (y\neq x, w\notin fn((y)P'))\]
%
%        \[\textbf{OPEN}_2\quad \frac{P\xrightarrow{\{\overline{x}_1 y,\cdots,\overline{x}_n y\}}P'}{(y)P\xrightarrow{\{\overline{x}_1(w),\cdots,\overline{x}_n(w)\}}P'\{w/y\}} \quad (y\neq x_1\neq\cdots\neq x_n, w\notin fn((y)P'))\]
%
        \caption{Forward transition rules}
        \label{TRForPITC4}
    \end{table}
\end{center}

\begin{center}
    \begin{table}
%        \[\textbf{TAU-ACT}\quad \frac{}{\breve{\tau}.P\xrightarrow{\tau}P}\]
%
%        \[\textbf{OUTPUT-ACT}\quad \frac{}{\breve{\overline{x}y}.P\xrightarrow{\overline{x}y}P}\]
%
%        \[\textbf{INPUT-ACT}\quad \frac{}{\breve{x(z)}.P\xrightarrow{x(w)}P\{w/z\}}\quad (w\notin fn((z)P))\]
%
%        \[\textbf{PAR}_1\quad \frac{P\xrightarrow{\alpha}P'\quad Q\nrightarrow}{P\parallel Q\xrightarrow{\alpha}P'\parallel Q}\quad (bn(\alpha)\cap fn(Q)=\emptyset)\]
%
%        \[\textbf{PAR}_2\quad \frac{Q\xrightarrow{\alpha}Q'\quad P\nrightarrow}{P\parallel Q\xrightarrow{\alpha}P\parallel Q'}\quad (bn(\alpha)\cap fn(P)=\emptyset)\]
%
%        \[\textbf{PAR}_3\quad \frac{P\xrightarrow{\alpha}P'\quad Q\xrightarrow{\beta}Q'}{P\parallel Q\xrightarrow{\{\alpha,\beta\}}P'\parallel Q'}\quad (\beta\neq\overline{\alpha}, bn(\alpha)\cap bn(\beta)=\emptyset, bn(\alpha)\cap fn(Q)=\emptyset,bn(\beta)\cap fn(P)=\emptyset)\]
%
%        \[\textbf{PAR}_4\quad \frac{P\xrightarrow{x_1(z)}P'\quad Q\xrightarrow{x_2(z)}Q'}{P\parallel Q\xrightarrow{\{x_1(w),x_2(w)\}}P'\{w/z\}\parallel Q'\{w/z\}}\quad (w\notin fn((z)P)\cup fn((z)Q))\]
%
%        \[\textbf{COM}\quad \frac{P\xrightarrow{\overline{x}y}P'\quad Q\xrightarrow{x(z)}Q'}{P\parallel Q\xrightarrow{\tau}P'\parallel Q'\{y/z\}}\]
%
%        \[\textbf{CLOSE}\quad \frac{P\xrightarrow{\overline{x}(w)}P'\quad Q\xrightarrow{x(w)}Q'}{P\parallel Q\xrightarrow{\tau}(w)(P'\parallel Q')}\]
%
        \[\textbf{SUM}_1\quad \frac{P\xrightarrow{\alpha}P'}{P+Q\xrightarrow{\alpha}P'}\]

        \[\textbf{SUM}_2\quad \frac{P\xrightarrow{\{\alpha_1,\cdots,\alpha_n\}}P'}{P+Q\xrightarrow{\{\alpha_1,\cdots,\alpha_n\}}P'}\]

        \[\textbf{IDE}_1\quad\frac{P\{\widetilde{y}/\widetilde{x}\}\xrightarrow{\alpha}P'}{A(\widetilde{y})\xrightarrow{\alpha}P'}\quad (A(\widetilde{x})\overset{\text{def}}{=}P)\]

        \[\textbf{IDE}_2\quad\frac{P\{\widetilde{y}/\widetilde{x}\}\xrightarrow{\{\alpha_1,\cdots,\alpha_n\}}P'} {A(\widetilde{y})\xrightarrow{\{\alpha_1,\cdots,\alpha_n\}}P'}\quad (A(\widetilde{x})\overset{\text{def}}{=}P)\]

        \[\textbf{RES}_1\quad \frac{P\xrightarrow{\alpha}P'}{(y)P\xrightarrow{\alpha}(y)P'}\quad (y\notin n(\alpha))\]

        \[\textbf{RES}_2\quad \frac{P\xrightarrow{\{\alpha_1,\cdots,\alpha_n\}}P'}{(y)P\xrightarrow{\{\alpha_1,\cdots,\alpha_n\}}(y)P'}\quad (y\notin n(\alpha_1)\cup\cdots\cup n(\alpha_n))\]

        \[\textbf{OPEN}_1\quad \frac{P\xrightarrow{\overline{x}y}P'}{(y)P\xrightarrow{\overline{x}(w)}P'\{w/y\}} \quad (y\neq x, w\notin fn((y)P'))\]

        \[\textbf{OPEN}_2\quad \frac{P\xrightarrow{\{\overline{x}_1 y,\cdots,\overline{x}_n y\}}P'}{(y)P\xrightarrow{\{\overline{x}_1(w),\cdots,\overline{x}_n(w)\}}P'\{w/y\}} \quad (y\neq x_1\neq\cdots\neq x_n, w\notin fn((y)P'))\]

        \caption{Forward transition rules (continuing)}
        \label{TRForPITC42}
    \end{table}
\end{center}

\begin{center}
    \begin{table}
        \[\textbf{RTAU-ACT}\quad \frac{}{\tau.P\xtworightarrow{\tau}P}\]

        \[\textbf{ROUTPUT-ACT}\quad \frac{}{\overline{x}y[m].P\xtworightarrow{\overline{x}y[m]}P}\]

        \[\textbf{RINPUT-ACT}\quad \frac{}{x(z)[m].P\xtworightarrow{x(w)[m]}P\{w/z\}}\quad (w\notin fn((z)P))\]

        \[\textbf{RPAR}_1\quad \frac{P\xtworightarrow{\alpha[m]}P'\quad Q\nrightarrow}{P\parallel Q\xtworightarrow{\alpha[m]}P'\parallel Q}\quad (bn(\alpha)\cap fn(Q)=\emptyset)\]

        \[\textbf{RPAR}_2\quad \frac{Q\xtworightarrow{\alpha[m]}Q'\quad P\nrightarrow}{P\parallel Q\xtworightarrow{\alpha[m]}P\parallel Q'}\quad (bn(\alpha)\cap fn(P)=\emptyset)\]

        \[\textbf{RPAR}_3\quad \frac{P\xtworightarrow{\alpha[m]}P'\quad Q\xtworightarrow{\beta[m]}Q'}{P\parallel Q\xtworightarrow{\{\alpha[m],\beta[m]\}}P'\parallel Q'}\] $(\beta\neq\overline{\alpha}, bn(\alpha)\cap bn(\beta)=\emptyset, bn(\alpha)\cap fn(Q)=\emptyset,bn(\beta)\cap fn(P)=\emptyset)$

        \[\textbf{RPAR}_4\quad \frac{P\xtworightarrow{x_1(z)[m]}P'\quad Q\xtworightarrow{x_2(z)[m]}Q'}{P\parallel Q\xtworightarrow{\{x_1(w)[m],x_2(w)[m]\}}P'\{w/z\}\parallel Q'\{w/z\}}\quad (w\notin fn((z)P)\cup fn((z)Q))\]

        \[\textbf{RCOM}\quad \frac{P\xtworightarrow{\overline{x}y[m]}P'\quad Q\xtworightarrow{x(z)[m]}Q'}{P\parallel Q\xtworightarrow{\tau}P'\parallel Q'\{y/z\}}\]

        \[\textbf{RCLOSE}\quad \frac{P\xtworightarrow{\overline{x}(w)[m]}P'\quad Q\xtworightarrow{x(w)[m]}Q'}{P\parallel Q\xtworightarrow{\tau}(w)(P'\parallel Q')}\]

%        \[\textbf{SUM}_1\quad \frac{P\xtworightarrow{\alpha[m]}P'}{P+Q\xtworightarrow{\alpha[m]}P'}\]
%
%        \[\textbf{SUM}_2\quad \frac{P\xtworightarrow{\{\alpha_1[m],\cdots,\alpha_n[m]\}}P'}{P+Q\xtworightarrow{\{\alpha_1[m],\cdots,\alpha_n[m]\}}P'}\]
%
%        \[\textbf{IDE}_1\quad\frac{P\{\widetilde{y}/\widetilde{x}\}\xtworightarrow{\alpha[m]}P'}{A(\widetilde{y})\xtworightarrow{\alpha[m]}P'}\quad (A(\widetilde{x})\overset{\text{def}}{=}P)\]
%
%        \[\textbf{IDE}_2\quad\frac{P\{\widetilde{y}/\widetilde{x}\}\xtworightarrow{\{\alpha_1[m],\cdots,\alpha_n[m]\}}P'} {A(\widetilde{y})\xtworightarrow{\{\alpha_1[m],\cdots,\alpha_n[m]\}}P'}\quad (A(\widetilde{x})\overset{\text{def}}{=}P)\]
%
%        \[\textbf{RES}_1\quad \frac{P\xtworightarrow{\alpha[m]}P'}{(y)P\xtworightarrow{\alpha[m]}(y)P'}\quad (y\notin n(\alpha))\]
%
%        \[\textbf{RES}_2\quad \frac{P\xtworightarrow{\{\alpha_1[m],\cdots,\alpha_n[m]\}}P'}{(y)P\xtworightarrow{\{\alpha_1[m],\cdots,\alpha_n[m]\}}(y)P'}\quad (y\notin n(\alpha_1)\cup\cdots\cup n(\alpha_n))\]
%
%        \[\textbf{OPEN}_1\quad \frac{P\xtworightarrow{\overline{x}y[m]}P'}{(y)P\xtworightarrow{\overline{x}(w)[m]}P'\{w/y\}} \quad (y\neq x, w\notin fn((y)P'))\]
%
%        \[\textbf{OPEN}_2\quad \frac{P\xtworightarrow{\{\overline{x}_1 y[m],\cdots,\overline{x}_n y[m]\}}P'}{(y)P\xtworightarrow{\{\overline{x}_1(w)[m],\cdots,\overline{x}_n(w)[m]\}}P'\{w/y\}} \quad (y\neq x_1\neq\cdots\neq x_n, w\notin fn((y)P'))\]
%
        \caption{Reverse transition rules}
        \label{TRForPITC43}
    \end{table}
\end{center}

\begin{center}
    \begin{table}
%        \[\textbf{TAU-ACT}\quad \frac{}{\breve{\tau}.P\xtworightarrow{\tau}P}\]
%
%        \[\textbf{OUTPUT-ACT}\quad \frac{}{\breve{\overline{x}y}[m].P\xtworightarrow{\overline{x}y[m]}P}\]
%
%        \[\textbf{INPUT-ACT}\quad \frac{}{\breve{x(z)}[m].P\xtworightarrow{x(w)[m]}P\{w/z\}}\quad (w\notin fn((z)P))\]
%
%        \[\textbf{PAR}_1\quad \frac{P\xtworightarrow{\alpha[m]}P'\quad Q\nrightarrow}{P\parallel Q\xtworightarrow{\alpha[m]}P'\parallel Q}\quad (bn(\alpha)\cap fn(Q)=\emptyset)\]
%
%        \[\textbf{PAR}_2\quad \frac{Q\xtworightarrow{\alpha[m]}Q'\quad P\nrightarrow}{P\parallel Q\xtworightarrow{\alpha[m]}P\parallel Q'}\quad (bn(\alpha)\cap fn(P)=\emptyset)\]
%
%        \[\textbf{PAR}_3\quad \frac{P\xtworightarrow{\alpha[m]}P'\quad Q\xtworightarrow{\beta[m]}Q'}{P\parallel Q\xtworightarrow{\{\alpha[m],\beta[m]\}}P'\parallel Q'}\quad (\beta\neq\overline{\alpha}, bn(\alpha)\cap bn(\beta)=\emptyset, bn(\alpha)\cap fn(Q)=\emptyset,bn(\beta)\cap fn(P)=\emptyset)\]
%
%        \[\textbf{PAR}_4\quad \frac{P\xtworightarrow{x_1(z)[m]}P'\quad Q\xtworightarrow{x_2(z)[m]}Q'}{P\parallel Q\xtworightarrow{\{x_1(w)[m],x_2(w)[m]\}}P'\{w/z\}\parallel Q'\{w/z\}}\quad (w\notin fn((z)P)\cup fn((z)Q))\]
%
%        \[\textbf{COM}\quad \frac{P\xtworightarrow{\overline{x}y[m]}P'\quad Q\xtworightarrow{x(z)[m]}Q'}{P\parallel Q\xtworightarrow{\tau}P'\parallel Q'\{y/z\}}\]
%
%        \[\textbf{CLOSE}\quad \frac{P\xtworightarrow{\overline{x}(w)[m]}P'\quad Q\xtworightarrow{x(w)[m]}Q'}{P\parallel Q\xtworightarrow{\tau}(w)(P'\parallel Q')}\]
%
        \[\textbf{RSUM}_1\quad \frac{P\xtworightarrow{\alpha[m]}P'}{P+Q\xtworightarrow{\alpha[m]}P'}\]

        \[\textbf{RSUM}_2\quad \frac{P\xtworightarrow{\{\alpha_1[m],\cdots,\alpha_n[m]\}}P'}{P+Q\xtworightarrow{\{\alpha_1[m],\cdots,\alpha_n[m]\}}P'}\]

        \[\textbf{RIDE}_1\quad\frac{P\{\widetilde{y}/\widetilde{x}\}\xtworightarrow{\alpha[m]}P'}{A(\widetilde{y})\xtworightarrow{\alpha[m]}P'}\quad (A(\widetilde{x})\overset{\text{def}}{=}P)\]

        \[\textbf{RIDE}_2\quad\frac{P\{\widetilde{y}/\widetilde{x}\}\xtworightarrow{\{\alpha_1[m],\cdots,\alpha_n[m]\}}P'} {A(\widetilde{y})\xtworightarrow{\{\alpha_1[m],\cdots,\alpha_n[m]\}}P'}\quad (A(\widetilde{x})\overset{\text{def}}{=}P)\]

        \[\textbf{RRES}_1\quad \frac{P\xtworightarrow{\alpha[m]}P'}{(y)P\xtworightarrow{\alpha[m]}(y)P'}\quad (y\notin n(\alpha))\]

        \[\textbf{RRES}_2\quad \frac{P\xtworightarrow{\{\alpha_1[m],\cdots,\alpha_n[m]\}}P'}{(y)P\xtworightarrow{\{\alpha_1[m],\cdots,\alpha_n[m]\}}(y)P'}\quad (y\notin n(\alpha_1)\cup\cdots\cup n(\alpha_n))\]

        \[\textbf{ROPEN}_1\quad \frac{P\xtworightarrow{\overline{x}y[m]}P'}{(y)P\xtworightarrow{\overline{x}(w)[m]}P'\{w/y\}} \quad (y\neq x, w\notin fn((y)P'))\]

        \[\textbf{ROPEN}_2\quad \frac{P\xtworightarrow{\{\overline{x}_1 y[m],\cdots,\overline{x}_n y[m]\}}P'}{(y)P\xtworightarrow{\{\overline{x}_1(w)[m],\cdots,\overline{x}_n(w)[m]\}}P'\{w/y\}} \quad (y\neq x_1\neq\cdots\neq x_n, w\notin fn((y)P'))\]

        \caption{Reverse transition rules (continuing)}
        \label{TRForPITC44}
    \end{table}
\end{center}
\end{definition}

\subsubsection{Properties of Transitions}

\begin{proposition}
\begin{enumerate}
  \item If $P\xrightarrow{\alpha}P'$ then
  \begin{enumerate}
    \item $fn(\alpha)\subseteq fn(P)$;
    \item $fn(P')\subseteq fn(P)\cup bn(\alpha)$;
  \end{enumerate}
  \item If $P\xrightarrow{\{\alpha_1,\cdots,\alpha_n\}}P'$ then
  \begin{enumerate}
    \item $fn(\alpha_1)\cup\cdots\cup fn(\alpha_n)\subseteq fn(P)$;
    \item $fn(P')\subseteq fn(P)\cup bn(\alpha_1)\cup\cdots\cup bn(\alpha_n)$.
  \end{enumerate}
\end{enumerate}
\end{proposition}

\begin{proof}
By induction on the depth of inference.
\end{proof}

\begin{proposition}
Suppose that $P\xrightarrow{\alpha(y)}P'$, where $\alpha=x$ or $\alpha=\overline{x}$, and $x\notin n(P)$, then there exists some $P''\equiv_{\alpha}P'\{z/y\}$,
$P\xrightarrow{\alpha(z)}P''$.
\end{proposition}

\begin{proof}
By induction on the depth of inference.
\end{proof}

\begin{proposition}
If $P\xrightarrow{\alpha} P'$, $bn(\alpha)\cap fn(P'\sigma)=\emptyset$, and $\sigma\lceil bn(\alpha)=id$, then there exists some $P''\equiv_{\alpha}P'\sigma$,
$P\sigma\xrightarrow{\alpha\sigma}P''$.
\end{proposition}

\begin{proof}
By the definition of substitution (Definition \ref{subs4}) and induction on the depth of inference.
\end{proof}

\begin{proposition}
\begin{enumerate}
  \item If $P\{w/z\}\xrightarrow{\alpha}P'$, where $w\notin fn(P)$ and $bn(\alpha)\cap fn(P,w)=\emptyset$, then there exist some $Q$ and $\beta$ with $Q\{w/z\}\equiv_{\alpha}P'$ and
  $\beta\sigma=\alpha$, $P\xrightarrow{\beta}Q$;
  \item If $P\{w/z\}\xrightarrow{\{\alpha_1,\cdots,\alpha_n\}}P'$, where $w\notin fn(P)$ and $bn(\alpha_1)\cap\cdots\cap bn(\alpha_n)\cap fn(P,w)=\emptyset$, then there exist some $Q$
  and $\beta_1,\cdots,\beta_n$ with $Q\{w/z\}\equiv_{\alpha}P'$ and $\beta_1\sigma=\alpha_1,\cdots,\beta_n\sigma=\alpha_n$, $P\xrightarrow{\{\beta_1,\cdots,\beta_n\}}Q$.
\end{enumerate}

\end{proposition}

\begin{proof}
By the definition of substitution (Definition \ref{subs4}) and induction on the depth of inference.
\end{proof}

\begin{proposition}
\begin{enumerate}
  \item If $P\xtworightarrow{\alpha[m]}P'$ then
  \begin{enumerate}
    \item $fn(\alpha[m])\subseteq fn(P)$;
    \item $fn(P')\subseteq fn(P)\cup bn(\alpha[m])$;
  \end{enumerate}
  \item If $P\xtworightarrow{\{\alpha_1[m],\cdots,\alpha_n[m]\}}P'$ then
  \begin{enumerate}
    \item $fn(\alpha_1[m])\cup\cdots\cup fn(\alpha_n[m])\subseteq fn(P)$;
    \item $fn(P')\subseteq fn(P)\cup bn(\alpha_1[m])\cup\cdots\cup bn(\alpha_n[m])$.
  \end{enumerate}
\end{enumerate}
\end{proposition}

\begin{proof}
By induction on the depth of inference.
\end{proof}

\begin{proposition}
Suppose that $P\xtworightarrow{\alpha(y)[m]}P'$, where $\alpha=x$ or $\alpha=\overline{x}$, and $x\notin n(P)$, then there exists some $P''\equiv_{\alpha}P'\{z/y\}$,
$P\xtworightarrow{\alpha(z)[m]}P''$.
\end{proposition}

\begin{proof}
By induction on the depth of inference.
\end{proof}

\begin{proposition}
If $P\xtworightarrow{\alpha[m]} P'$, $bn(\alpha[m])\cap fn(P'\sigma)=\emptyset$, and $\sigma\lceil bn(\alpha[m])=id$, then there exists some $P''\equiv_{\alpha}P'\sigma$,
$P\sigma\xtworightarrow{\alpha[m]\sigma}P''$.
\end{proposition}

\begin{proof}
By the definition of substitution (Definition \ref{subs4}) and induction on the depth of inference.
\end{proof}

\begin{proposition}
\begin{enumerate}
  \item If $P\{w/z\}\xtworightarrow{\alpha[m]}P'$, where $w\notin fn(P)$ and $bn(\alpha)\cap fn(P,w)=\emptyset$, then there exist some $Q$ and $\beta$ with $Q\{w/z\}\equiv_{\alpha}P'$ and
  $\beta\sigma[m]=\alpha[m]$, $P\xtworightarrow{\beta[m]}Q$;
  \item If $P\{w/z\}\xtworightarrow{\{\alpha_1[m],\cdots,\alpha_n[m]\}}P'$, where $w\notin fn(P)$ and $bn(\alpha_1[m])\cap\cdots\cap bn(\alpha_n[m])\cap fn(P,w)=\emptyset$, then there exist some $Q$
  and $\beta_1[m],\cdots,\beta_n[m]$ with $Q\{w/z\}\equiv_{\alpha}P'$ and $\beta_1\sigma[m]=\alpha_1[m],\cdots,\beta_n\sigma[m]=\alpha_n[m]$, $P\xtworightarrow{\{\beta_1[m],\cdots,\beta_n[m]\}}Q$.
\end{enumerate}

\end{proposition}

\begin{proof}
By the definition of substitution (Definition \ref{subs4}) and induction on the depth of inference.
\end{proof}

\subsection{Strong Bisimilarities}\label{s4}

\subsubsection{Laws and Congruence}

\begin{theorem}
$\equiv_{\alpha}$ are FR strongly truly concurrent bisimulations. That is, if $P\equiv_{\alpha}Q$, then,
\begin{enumerate}
  \item $P\sim_p^{fr} Q$;
  \item $P\sim_s^{fr} Q$;
  \item $P\sim_{hp}^{fr} Q$;
  \item $P\sim_{hhp}^{fr} Q$.
\end{enumerate}
\end{theorem}

\begin{proof}
By induction on the depth of inference, we can get the following facts:

\begin{enumerate}
  \item If $\alpha$ is a free action and $P\xrightarrow{\alpha}P'$, then equally for some $Q'$ with $P'\equiv_{\alpha}Q'$,
  $Q\xrightarrow{\alpha}Q'$;
  \item If $P\xrightarrow{a(y)}P'$ with $a=x$ or $a=\overline{x}$ and $z\notin n(Q)$, then equally for some $Q'$ with $P'\{z/y\}\equiv_{\alpha}Q'$,
  $Q\xrightarrow{a(z)}Q'$;
  \item If $\alpha[m]$ is a free action and $P\xtworightarrow{\alpha[m]}P'$, then equally for some $Q'$ with $P'\equiv_{\alpha}Q'$,
  $Q\xtworightarrow{\alpha[m]}Q'$;
  \item If $P\xtworightarrow{a(y)[m]}P'$ with $a=x$ or $a=\overline{x}$ and $z\notin n(Q)$, then equally for some $Q'$ with $P'\{z/y\}\equiv_{\alpha}Q'$,
  $Q\xtworightarrow{a(z)[m]}Q'$.
\end{enumerate}

Then, we can get:

\begin{enumerate}
  \item by the definition of FR strongly pomset bisimilarity, $P\sim_p^{fr} Q$;
  \item by the definition of FR strongly step bisimilarity, $P\sim_s^{fr} Q$;
  \item by the definition of FR strongly hp-bisimilarity, $P\sim_{hp}^{fr} Q$;
  \item by the definition of FR strongly hhp-bisimilarity, $P\sim_{hhp}^{fr} Q$.
\end{enumerate}
\end{proof}

\begin{proposition}[Summation laws for FR strongly pomset bisimulation] The Summation laws for FR strongly pomset bisimulation are as follows.

\begin{enumerate}
  \item $P+Q\sim_p^{fr} Q+P$;
  \item $P+(Q+R)\sim_p^{fr} (P+Q)+R$;
  \item $P+P\sim_p^{fr} P$;
  \item $P+\textbf{nil}\sim_p^{fr} P$.
\end{enumerate}

\end{proposition}

\begin{proof}
\begin{enumerate}
  \item $P+Q\sim_p^{fr} Q+P$. It is sufficient to prove the relation $R=\{(P+Q, Q+P)\}\cup \textbf{Id}$ is a FR strongly pomset bisimulation, we omit it;
  \item $P+(Q+R)\sim_p^{fr} (P+Q)+R$. It is sufficient to prove the relation $R=\{(P+(Q+R), (P+Q)+R)\}\cup \textbf{Id}$ is a FR strongly pomset bisimulation, we omit it;
  \item $P+P\sim_p^{fr} P$. It is sufficient to prove the relation $R=\{(P+P, P)\}\cup \textbf{Id}$ is a FR strongly pomset bisimulation, we omit it;
  \item $P+\textbf{nil}\sim_p^{fr} P$. It is sufficient to prove the relation $R=\{(P+\textbf{nil}, P)\}\cup \textbf{Id}$ is a FR strongly pomset bisimulation, we omit it.
\end{enumerate}
\end{proof}

\begin{proposition}[Summation laws for FR strongly step bisimulation] The Summation laws for FR strongly step bisimulation are as follows.
\begin{enumerate}
  \item $P+Q\sim_s^{fr} Q+P$;
  \item $P+(Q+R)\sim_s^{fr} (P+Q)+R$;
  \item $P+P\sim_s^{fr} P$;
  \item $P+\textbf{nil}\sim_s^{fr} P$.
\end{enumerate}
\end{proposition}

\begin{proof}
\begin{enumerate}
  \item $P+Q\sim_s^{fr} Q+P$. It is sufficient to prove the relation $R=\{(P+Q, Q+P)\}\cup \textbf{Id}$ is a FR strongly step bisimulation, we omit it;
  \item $P+(Q+R)\sim_s^{fr} (P+Q)+R$. It is sufficient to prove the relation $R=\{(P+(Q+R), (P+Q)+R)\}\cup \textbf{Id}$ is a FR strongly step bisimulation, we omit it;
  \item $P+P\sim_s^{fr} P$. It is sufficient to prove the relation $R=\{(P+P, P)\}\cup \textbf{Id}$ is a FR strongly step bisimulation, we omit it;
  \item $P+\textbf{nil}\sim_s^{fr} P$. It is sufficient to prove the relation $R=\{(P+\textbf{nil}, P)\}\cup \textbf{Id}$ is a FR strongly step bisimulation, we omit it.
\end{enumerate}
\end{proof}

\begin{proposition}[Summation laws for FR strongly hp-bisimulation] The Summation laws for FR strongly hp-bisimulation are as follows.
\begin{enumerate}
  \item $P+Q\sim_{hp}^{fr} Q+P$;
  \item $P+(Q+R)\sim_{hp}^{fr} (P+Q)+R$;
  \item $P+P\sim_{hp}^{fr} P$;
  \item $P+\textbf{nil}\sim_{hp}^{fr} P$.
\end{enumerate}
\end{proposition}

\begin{proof}
\begin{enumerate}
  \item $P+Q\sim_{hp}^{fr} Q+P$. It is sufficient to prove the relation $R=\{(P+Q, Q+P)\}\cup \textbf{Id}$ is a FR strongly hp-bisimulation, we omit it;
  \item $P+(Q+R)\sim_{hp}^{fr} (P+Q)+R$. It is sufficient to prove the relation $R=\{(P+(Q+R), (P+Q)+R)\}\cup \textbf{Id}$ is a FR strongly hp-bisimulation, we omit it;
  \item $P+P\sim_{hp}^{fr} P$. It is sufficient to prove the relation $R=\{(P+P, P)\}\cup \textbf{Id}$ is a FR strongly hp-bisimulation, we omit it;
  \item $P+\textbf{nil}\sim_{hp}^{fr} P$. It is sufficient to prove the relation $R=\{(P+\textbf{nil}, P)\}\cup \textbf{Id}$ is a FR strongly hp-bisimulation, we omit it.
\end{enumerate}
\end{proof}

\begin{proposition}[Summation laws for FR strongly hhp-bisimulation] The Summation laws for FR strongly hhp-bisimulation are as follows.
\begin{enumerate}
  \item $P+Q\sim_{hhp}^{fr} Q+P$;
  \item $P+(Q+R)\sim_{hhp}^{fr} (P+Q)+R$;
  \item $P+P\sim_{hhp}^{fr} P$;
  \item $P+\textbf{nil}\sim_{hhp}^{fr} P$.
\end{enumerate}
\end{proposition}

\begin{proof}
\begin{enumerate}
  \item $P+Q\sim_{hhp}^{fr} Q+P$. It is sufficient to prove the relation $R=\{(P+Q, Q+P)\}\cup \textbf{Id}$ is a FR strongly hhp-bisimulation, we omit it;
  \item $P+(Q+R)\sim_{hhp}^{fr} (P+Q)+R$. It is sufficient to prove the relation $R=\{(P+(Q+R), (P+Q)+R)\}\cup \textbf{Id}$ is a FR strongly hhp-bisimulation, we omit it;
  \item $P+P\sim_{hhp}^{fr} P$. It is sufficient to prove the relation $R=\{(P+P, P)\}\cup \textbf{Id}$ is a FR strongly hhp-bisimulation, we omit it;
  \item $P+\textbf{nil}\sim_{hhp}^{fr} P$. It is sufficient to prove the relation $R=\{(P+\textbf{nil}, P)\}\cup \textbf{Id}$ is a FR strongly hhp-bisimulation, we omit it.
\end{enumerate}
\end{proof}

\begin{theorem}[Identity law for FR strongly truly concurrent bisimilarities]
If $A(\widetilde{x})\overset{\text{def}}{=}P$, then

\begin{enumerate}
  \item $A(\widetilde{y})\sim_p^{fr} P\{\widetilde{y}/\widetilde{x}\}$;
  \item $A(\widetilde{y})\sim_s^{fr} P\{\widetilde{y}/\widetilde{x}\}$;
  \item $A(\widetilde{y})\sim_{hp}^{fr} P\{\widetilde{y}/\widetilde{x}\}$;
  \item $A(\widetilde{y})\sim_{hhp}^{fr} P\{\widetilde{y}/\widetilde{x}\}$.
\end{enumerate}
\end{theorem}

\begin{proof}
\begin{enumerate}
  \item $A(\widetilde{y})\sim_p^{fr} P\{\widetilde{y}/\widetilde{x}\}$. It is sufficient to prove the relation $R=\{(A(\widetilde{y}), P\{\widetilde{y}/\widetilde{x}\})\}\cup \textbf{Id}$ is a FR strongly pomset bisimulation, we omit it;
  \item $A(\widetilde{y})\sim_s^{fr} P\{\widetilde{y}/\widetilde{x}\}$. It is sufficient to prove the relation $R=\{(A(\widetilde{y}), P\{\widetilde{y}/\widetilde{x}\})\}\cup \textbf{Id}$ is a FR strongly step bisimulation, we omit it;
  \item $A(\widetilde{y})\sim_{hp}^{fr} P\{\widetilde{y}/\widetilde{x}\}$. It is sufficient to prove the relation $R=\{(A(\widetilde{y}), P\{\widetilde{y}/\widetilde{x}\})\}\cup \textbf{Id}$ is a FR strongly hp-bisimulation, we omit it;
  \item $A(\widetilde{y})\sim_{hhp}^{fr} P\{\widetilde{y}/\widetilde{x}\}$. It is sufficient to prove the relation $R=\{(A(\widetilde{y}), P\{\widetilde{y}/\widetilde{x}\})\}\cup \textbf{Id}$ is a FR strongly hhp-bisimulation, we omit it.
\end{enumerate}
\end{proof}

\begin{theorem}[Restriction Laws for FR strongly pomset bisimilarity]
The restriction laws for FR strongly pomset bisimilarity are as follows.

\begin{enumerate}
  \item $(y)P\sim_p^{fr} P$, if $y\notin fn(P)$;
  \item $(y)(z)P\sim_p^{fr} (z)(y)P$;
  \item $(y)(P+Q)\sim_p^{fr} (y)P+(y)Q$;
  \item $(y)\alpha.P\sim_p^{fr} \alpha.(y)P$ if $y\notin n(\alpha)$;
  \item $(y)\alpha.P\sim_p^{fr} \textbf{nil}$ if $y$ is the subject of $\alpha$.
\end{enumerate}
\end{theorem}

\begin{proof}
\begin{enumerate}
  \item $(y)P\sim_p^{fr} P$, if $y\notin fn(P)$. It is sufficient to prove the relation $R=\{((y)P, P)\}\cup \textbf{Id}$, if $y\notin fn(P)$, is a FR strongly pomset bisimulation, we omit it;
  \item $(y)(z)P\sim_p^{fr} (z)(y)P$. It is sufficient to prove the relation $R=\{((y)(z)P, (z)(y)P)\}\cup \textbf{Id}$ is a FR strongly pomset bisimulation, we omit it;
  \item $(y)(P+Q)\sim_p^{fr} (y)P+(y)Q$. It is sufficient to prove the relation $R=\{((y)(P+Q), (y)P+(y)Q)\}\cup \textbf{Id}$ is a FR strongly pomset bisimulation, we omit it;
  \item $(y)\alpha.P\sim_p^{fr} \alpha.(y)P$ if $y\notin n(\alpha)$. It is sufficient to prove the relation $R=\{((y)\alpha.P, \alpha.(y)P)\}\cup \textbf{Id}$, if $y\notin n(\alpha)$, is a FR strongly pomset bisimulation, we omit it;
  \item $(y)\alpha.P\sim_p^{fr} \textbf{nil}$ if $y$ is the subject of $\alpha$. It is sufficient to prove the relation $R=\{((y)\alpha.P, \textbf{nil})\}\cup \textbf{Id}$, if $y$ is the subject of $\alpha$, is a FR strongly pomset bisimulation, we omit it.
\end{enumerate}
\end{proof}

\begin{theorem}[Restriction Laws for FR strongly step bisimilarity]
The restriction laws for FR strongly step bisimilarity are as follows.

\begin{enumerate}
  \item $(y)P\sim_s^{fr} P$, if $y\notin fn(P)$;
  \item $(y)(z)P\sim_s^{fr} (z)(y)P$;
  \item $(y)(P+Q)\sim_s^{fr} (y)P+(y)Q$;
  \item $(y)\alpha.P\sim_s^{fr} \alpha.(y)P$ if $y\notin n(\alpha)$;
  \item $(y)\alpha.P\sim_s^{fr} \textbf{nil}$ if $y$ is the subject of $\alpha$.
\end{enumerate}
\end{theorem}

\begin{proof}
\begin{enumerate}
  \item $(y)P\sim_s^{fr} P$, if $y\notin fn(P)$. It is sufficient to prove the relation $R=\{((y)P, P)\}\cup \textbf{Id}$, if $y\notin fn(P)$, is a FR strongly step bisimulation, we omit it;
  \item $(y)(z)P\sim_s^{fr} (z)(y)P$. It is sufficient to prove the relation $R=\{((y)(z)P, (z)(y)P)\}\cup \textbf{Id}$ is a FR strongly step bisimulation, we omit it;
  \item $(y)(P+Q)\sim_s^{fr} (y)P+(y)Q$. It is sufficient to prove the relation $R=\{((y)(P+Q), (y)P+(y)Q)\}\cup \textbf{Id}$ is a FR strongly step bisimulation, we omit it;
  \item $(y)\alpha.P\sim_s^{fr} \alpha.(y)P$ if $y\notin n(\alpha)$. It is sufficient to prove the relation $R=\{((y)\alpha.P, \alpha.(y)P)\}\cup \textbf{Id}$, if $y\notin n(\alpha)$, is a FR strongly step bisimulation, we omit it;
  \item $(y)\alpha.P\sim_s^{fr} \textbf{nil}$ if $y$ is the subject of $\alpha$. It is sufficient to prove the relation $R=\{((y)\alpha.P, \textbf{nil})\}\cup \textbf{Id}$, if $y$ is the subject of $\alpha$, is a FR strongly step bisimulation, we omit it.
\end{enumerate}
\end{proof}

\begin{theorem}[Restriction Laws for FR strongly hp-bisimilarity]
The restriction laws for FR strongly hp-bisimilarity are as follows.

\begin{enumerate}
  \item $(y)P\sim_{hp}^{fr} P$, if $y\notin fn(P)$;
  \item $(y)(z)P\sim_{hp}^{fr} (z)(y)P$;
  \item $(y)(P+Q)\sim_{hp}^{fr} (y)P+(y)Q$;
  \item $(y)\alpha.P\sim_{hp}^{fr} \alpha.(y)P$ if $y\notin n(\alpha)$;
  \item $(y)\alpha.P\sim_{hp}^{fr} \textbf{nil}$ if $y$ is the subject of $\alpha$.
\end{enumerate}
\end{theorem}

\begin{proof}
\begin{enumerate}
  \item $(y)P\sim_{hp}^{fr} P$, if $y\notin fn(P)$. It is sufficient to prove the relation $R=\{((y)P, P)\}\cup \textbf{Id}$, if $y\notin fn(P)$, is a FR strongly hp-bisimulation, we omit it;
  \item $(y)(z)P\sim_{hp}^{fr} (z)(y)P$. It is sufficient to prove the relation $R=\{((y)(z)P, (z)(y)P)\}\cup \textbf{Id}$ is a FR strongly hp-bisimulation, we omit it;
  \item $(y)(P+Q)\sim_{hp}^{fr} (y)P+(y)Q$. It is sufficient to prove the relation $R=\{((y)(P+Q), (y)P+(y)Q)\}\cup \textbf{Id}$ is a FR strongly hp-bisimulation, we omit it;
  \item $(y)\alpha.P\sim_{hp}^{fr} \alpha.(y)P$ if $y\notin n(\alpha)$. It is sufficient to prove the relation $R=\{((y)\alpha.P, \alpha.(y)P)\}\cup \textbf{Id}$, if $y\notin n(\alpha)$, is a FR strongly hp-bisimulation, we omit it;
  \item $(y)\alpha.P\sim_{hp}^{fr} \textbf{nil}$ if $y$ is the subject of $\alpha$. It is sufficient to prove the relation $R=\{((y)\alpha.P, \textbf{nil})\}\cup \textbf{Id}$, if $y$ is the subject of $\alpha$, is a FR strongly hp-bisimulation, we omit it.
\end{enumerate}
\end{proof}

\begin{theorem}[Restriction Laws for FR strongly hhp-bisimilarity]
The restriction laws for FR strongly hhp-bisimilarity are as follows.

\begin{enumerate}
  \item $(y)P\sim_{hhp}^{fr} P$, if $y\notin fn(P)$;
  \item $(y)(z)P\sim_{hhp}^{fr} (z)(y)P$;
  \item $(y)(P+Q)\sim_{hhp}^{fr} (y)P+(y)Q$;
  \item $(y)\alpha.P\sim_{hhp}^{fr} \alpha.(y)P$ if $y\notin n(\alpha)$;
  \item $(y)\alpha.P\sim_{hhp}^{fr} \textbf{nil}$ if $y$ is the subject of $\alpha$.
\end{enumerate}
\end{theorem}

\begin{proof}
\begin{enumerate}
  \item $(y)P\sim_{hhp}^{fr} P$, if $y\notin fn(P)$. It is sufficient to prove the relation $R=\{((y)P, P)\}\cup \textbf{Id}$, if $y\notin fn(P)$, is a FR strongly hhp-bisimulation, we omit it;
  \item $(y)(z)P\sim_{hhp}^{fr} (z)(y)P$. It is sufficient to prove the relation $R=\{((y)(z)P, (z)(y)P)\}\cup \textbf{Id}$ is a FR strongly hhp-bisimulation, we omit it;
  \item $(y)(P+Q)\sim_{hhp}^{fr} (y)P+(y)Q$. It is sufficient to prove the relation $R=\{((y)(P+Q), (y)P+(y)Q)\}\cup \textbf{Id}$ is a FR strongly hhp-bisimulation, we omit it;
  \item $(y)\alpha.P\sim_{hhp}^{fr} \alpha.(y)P$ if $y\notin n(\alpha)$. It is sufficient to prove the relation $R=\{((y)\alpha.P, \alpha.(y)P)\}\cup \textbf{Id}$, if $y\notin n(\alpha)$, is a FR strongly hhp-bisimulation, we omit it;
  \item $(y)\alpha.P\sim_{hhp}^{fr} \textbf{nil}$ if $y$ is the subject of $\alpha$. It is sufficient to prove the relation $R=\{((y)\alpha.P, \textbf{nil})\}\cup \textbf{Id}$, if $y$ is the subject of $\alpha$, is a FR strongly hhp-bisimulation, we omit it.
\end{enumerate}
\end{proof}

\begin{theorem}[Parallel laws for FR strongly pomset bisimilarity]
The parallel laws for FR strongly pomset bisimilarity are as follows.

\begin{enumerate}
  \item $P\parallel \textbf{nil}\sim_p^{fr} P$;
  \item $P_1\parallel P_2\sim_p^{fr} P_2\parallel P_1$;
  \item $(P_1\parallel P_2)\parallel P_3\sim_p^{fr} P_1\parallel (P_2\parallel P_3)$;
  \item $(y)(P_1\parallel P_2)\sim_p^{fr} (y)P_1\parallel (y)P_2$, if $y\notin fn(P_1)\cap fn(P_2)$.
\end{enumerate}
\end{theorem}

\begin{proof}
\begin{enumerate}
  \item $P\parallel \textbf{nil}\sim_p^{fr} P$. It is sufficient to prove the relation $R=\{(P\parallel \textbf{nil}, P)\}\cup \textbf{Id}$ is a FR strongly pomset bisimulation, we omit it;
  \item $P_1\parallel P_2\sim_p^{fr} P_2\parallel P_1$. It is sufficient to prove the relation $R=\{(P_1\parallel P_2, P_2\parallel P_1)\}\cup \textbf{Id}$ is a FR strongly pomset bisimulation, we omit it;
  \item $(P_1\parallel P_2)\parallel P_3\sim_p^{fr} P_1\parallel (P_2\parallel P_3)$. It is sufficient to prove the relation $R=\{((P_1\parallel P_2)\parallel P_3, P_1\parallel (P_2\parallel P_3))\}\cup \textbf{Id}$ is a FR strongly pomset bisimulation, we omit it;
  \item $(y)(P_1\parallel P_2)\sim_p^{fr} (y)P_1\parallel (y)P_2$, if $y\notin fn(P_1)\cap fn(P_2)$. It is sufficient to prove the relation $R=\{((y)(P_1\parallel P_2), (y)P_1\parallel (y)P_2)\}\cup \textbf{Id}$, if $y\notin fn(P_1)\cap fn(P_2)$, is a FR strongly pomset bisimulation, we omit it.
\end{enumerate}
\end{proof}

\begin{theorem}[Parallel laws for FR strongly step bisimilarity]
The parallel laws for FR strongly step bisimilarity are as follows.

\begin{enumerate}
  \item $P\parallel \textbf{nil}\sim_s^{fr} P$;
  \item $P_1\parallel P_2\sim_s^{fr} P_2\parallel P_1$;
  \item $(P_1\parallel P_2)\parallel P_3\sim_s^{fr} P_1\parallel (P_2\parallel P_3)$;
  \item $(y)(P_1\parallel P_2)\sim_s^{fr} (y)P_1\parallel (y)P_2$, if $y\notin fn(P_1)\cap fn(P_2)$.
\end{enumerate}
\end{theorem}

\begin{proof}
\begin{enumerate}
  \item $P\parallel \textbf{nil}\sim_s^{fr} P$. It is sufficient to prove the relation $R=\{(P\parallel \textbf{nil}, P)\}\cup \textbf{Id}$ is a FR strongly step bisimulation, we omit it;
  \item $P_1\parallel P_2\sim_s^{fr} P_2\parallel P_1$. It is sufficient to prove the relation $R=\{(P_1\parallel P_2, P_2\parallel P_1)\}\cup \textbf{Id}$ is a FR strongly step bisimulation, we omit it;
  \item $(P_1\parallel P_2)\parallel P_3\sim_s^{fr} P_1\parallel (P_2\parallel P_3)$. It is sufficient to prove the relation $R=\{((P_1\parallel P_2)\parallel P_3, P_1\parallel (P_2\parallel P_3))\}\cup \textbf{Id}$ is a FR strongly step bisimulation, we omit it;
  \item $(y)(P_1\parallel P_2)\sim_s^{fr} (y)P_1\parallel (y)P_2$, if $y\notin fn(P_1)\cap fn(P_2)$. It is sufficient to prove the relation $R=\{((y)(P_1\parallel P_2), (y)P_1\parallel (y)P_2)\}\cup \textbf{Id}$, if $y\notin fn(P_1)\cap fn(P_2)$, is a FR strongly step bisimulation, we omit it.
\end{enumerate}
\end{proof}

\begin{theorem}[Parallel laws for FR strongly hp-bisimilarity]
The parallel laws for FR strongly hp-bisimilarity are as follows.

\begin{enumerate}
  \item $P\parallel \textbf{nil}\sim_{hp}^{fr} P$;
  \item $P_1\parallel P_2\sim_{hp}^{fr} P_2\parallel P_1$;
  \item $(P_1\parallel P_2)\parallel P_3\sim_{hp}^{fr} P_1\parallel (P_2\parallel P_3)$;
  \item $(y)(P_1\parallel P_2)\sim_{hp}^{fr} (y)P_1\parallel (y)P_2$, if $y\notin fn(P_1)\cap fn(P_2)$.
\end{enumerate}
\end{theorem}

\begin{proof}
\begin{enumerate}
  \item $P\parallel \textbf{nil}\sim_{hp}^{fr} P$. It is sufficient to prove the relation $R=\{(P\parallel \textbf{nil}, P)\}\cup \textbf{Id}$ is a FR strongly hp-bisimulation, we omit it;
  \item $P_1\parallel P_2\sim_{hp}^{fr} P_2\parallel P_1$. It is sufficient to prove the relation $R=\{(P_1\parallel P_2, P_2\parallel P_1)\}\cup \textbf{Id}$ is a FR strongly hp-bisimulation, we omit it;
  \item $(P_1\parallel P_2)\parallel P_3\sim_{hp}^{fr} P_1\parallel (P_2\parallel P_3)$. It is sufficient to prove the relation $R=\{((P_1\parallel P_2)\parallel P_3, P_1\parallel (P_2\parallel P_3))\}\cup \textbf{Id}$ is a FR strongly hp-bisimulation, we omit it;
  \item $(y)(P_1\parallel P_2)\sim_{hp}^{fr} (y)P_1\parallel (y)P_2$, if $y\notin fn(P_1)\cap fn(P_2)$. It is sufficient to prove the relation $R=\{((y)(P_1\parallel P_2), (y)P_1\parallel (y)P_2)\}\cup \textbf{Id}$, if $y\notin fn(P_1)\cap fn(P_2)$, is a FR strongly hp-bisimulation, we omit it.
\end{enumerate}
\end{proof}

\begin{theorem}[Parallel laws for FR strongly hhp-bisimilarity]
The parallel laws for FR strongly hhp-bisimilarity are as follows.

\begin{enumerate}
  \item $P\parallel \textbf{nil}\sim_{hhp}^{fr} P$;
  \item $P_1\parallel P_2\sim_{hhp}^{fr} P_2\parallel P_1$;
  \item $(P_1\parallel P_2)\parallel P_3\sim_{hhp}^{fr} P_1\parallel (P_2\parallel P_3)$;
  \item $(y)(P_1\parallel P_2)\sim_{hhp}^{fr} (y)P_1\parallel (y)P_2$, if $y\notin fn(P_1)\cap fn(P_2)$.
\end{enumerate}
\end{theorem}

\begin{proof}
\begin{enumerate}
  \item $P\parallel \textbf{nil}\sim_{hhp}^{fr} P$. It is sufficient to prove the relation $R=\{(P\parallel \textbf{nil}, P)\}\cup \textbf{Id}$ is a FR strongly hhp-bisimulation, we omit it;
  \item $P_1\parallel P_2\sim_{hhp}^{fr} P_2\parallel P_1$. It is sufficient to prove the relation $R=\{(P_1\parallel P_2, P_2\parallel P_1)\}\cup \textbf{Id}$ is a FR strongly hhp-bisimulation, we omit it;
  \item $(P_1\parallel P_2)\parallel P_3\sim_{hhp}^{fr} P_1\parallel (P_2\parallel P_3)$. It is sufficient to prove the relation $R=\{((P_1\parallel P_2)\parallel P_3, P_1\parallel (P_2\parallel P_3))\}\cup \textbf{Id}$ is a FR strongly hhp-bisimulation, we omit it;
  \item $(y)(P_1\parallel P_2)\sim_{hhp}^{fr} (y)P_1\parallel (y)P_2$, if $y\notin fn(P_1)\cap fn(P_2)$. It is sufficient to prove the relation $R=\{((y)(P_1\parallel P_2), (y)P_1\parallel (y)P_2)\}\cup \textbf{Id}$, if $y\notin fn(P_1)\cap fn(P_2)$, is a FR strongly hhp-bisimulation, we omit it.
\end{enumerate}
\end{proof}

\begin{theorem}[Expansion law for truly concurrent bisimilarities]
Let $P\equiv\sum_i \alpha_{i}.P_{i}$ and $Q\equiv\sum_j\beta_{j}.Q_{j}$, where $bn(\alpha_{i})\cap fn(Q)=\emptyset$ for all $i$, and
  $bn(\beta_{j})\cap fn(P)=\emptyset$ for all $j$. Then,

\begin{enumerate}
  \item $P\parallel Q\sim_p^{fr} \sum_i\sum_j (\alpha_{i}\parallel \beta_{j}).(P_{i}\parallel Q_{j})+\sum_{\alpha_{i} \textrm{ comp }\beta_{j}}\tau.R_{ij}$;
  \item $P\parallel Q\sim_s^{fr} \sum_i\sum_j (\alpha_{i}\parallel \beta_{j}).(P_{i}\parallel Q_{j})+\sum_{\alpha_{i} \textrm{ comp }\beta_{j}}\tau.R_{ij}$;
  \item $P\parallel Q\sim_{hp}^{fr} \sum_i\sum_j (\alpha_{i}\parallel \beta_{j}).(P_{i}\parallel Q_{j})+\sum_{\alpha_{i} \textrm{ comp }\beta_{j}}\tau.R_{ij}$;
  \item $P\parallel Q\nsim_{phhp} \sum_i\sum_j (\alpha_{i}\parallel \beta_{j}).(P_{i}\parallel Q_{j})+\sum_{\alpha_{i} \textrm{ comp }\beta_{j}}\tau.R_{ij}$.
\end{enumerate}

Where $\alpha_i$ comp $\beta_j$ and $R_{ij}$ are defined as follows:
\begin{enumerate}
  \item $\alpha_{i}$ is $\overline{x}u$ and $\beta_{j}$ is $x(v)$, then $R_{ij}=P_{i}\parallel Q_{j}\{u/v\}$;
  \item $\alpha_{i}$ is $\overline{x}(u)$ and $\beta_{j}$ is $x(v)$, then $R_{ij}=(w)(P_{i}\{w/u\}\parallel Q_{j}\{w/v\})$, if $w\notin fn((u)P_{i})\cup fn((v)Q_{j})$;
  \item $\alpha_{i}$ is $x(v)$ and $\beta_{j}$ is $\overline{x}u$, then $R_{ij}=P_{i}\{u/v\}\parallel Q_{j}$;
  \item $\alpha_{i}$ is $x(v)$ and $\beta_{j}$ is $\overline{x}(u)$, then $R_{ij}=(w)(P_{i}\{w/v\}\parallel Q_{j}\{w/u\})$, if $w\notin fn((v)P_{i})\cup fn((u)Q_{j})$.
\end{enumerate}

Let $P\equiv\sum_i P_{i}.\alpha_{i}[m]$ and $Q\equiv\sum_l Q_{j}.\beta_{j}[m]$, where $bn(\alpha_{i}[m])\cap fn(Q)=\emptyset$ for all $i$, and
  $bn(\beta_{j}[m])\cap fn(P)=\emptyset$ for all $j$. Then,

\begin{enumerate}
  \item $P\parallel Q\sim_p^{fr} \sum_i\sum_j(P_{i}\parallel Q_{j}).(\alpha_{i}[m]\parallel \beta_{j}[m])+\sum_{\alpha_{i} \textrm{ comp }\beta_{j}} R_{ij}.\tau$;
  \item $P\parallel Q\sim_s^{fr} \sum_i\sum_j(P_{i}\parallel Q_{j}).(\alpha_{i}[m]\parallel \beta_{j}[m])+\sum_{\alpha_{i} \textrm{ comp }\beta_{j}} R_{ij}.\tau$;
  \item $P\parallel Q\sim_{hp}^{fr} \sum_i\sum_j(P_{i}\parallel Q_{j}).(\alpha_{i}[m]\parallel \beta_{j}[m])+\sum_{\alpha_{i} \textrm{ comp }\beta_{j}} R_{ij}.\tau$;
  \item $P\parallel Q\nsim_{phhp} \sum_i\sum_j(P_{i}\parallel Q_{j}).(\alpha_{i}[m]\parallel \beta_{j}[m])+\sum_{\alpha_{i} \textrm{ comp }\beta_{j}} R_{ij}.\tau$.
\end{enumerate}

Where $\alpha_i$ comp $\beta_j$ and $R_{ij}$ are defined as follows:
\begin{enumerate}
  \item $\alpha_{i}[m]$ is $\overline{x}u$ and $\beta_{j}[m]$ is $x(v)$, then $R_{ij}=P_{i}\parallel Q_{j}\{u/v\}$;
  \item $\alpha_{i}[m]$ is $\overline{x}(u)$ and $\beta_{j}[m]$ is $x(v)$, then $R_{ij}=(w)(P_{i}\{w/u\}\parallel Q_{j}\{w/v\})$, if $w\notin fn((u)P_{i})\cup fn((v)Q_{j})$;
  \item $\alpha_{i}[m]$ is $x(v)$ and $\beta_{j}[m]$ is $\overline{x}u$, then $R_{ij}=P_{i}\{u/v\}\parallel Q_{j}$;
  \item $\alpha_{i}[m]$ is $x(v)$ and $\beta_{j}[m]$ is $\overline{x}(u)$, then $R_{ij}=(w)(P_{i}\{w/v\}\parallel Q_{j}\{w/u\})$, if $w\notin fn((v)P_{i})\cup fn((u)Q_{j})$.
\end{enumerate}
\end{theorem}

\begin{proof}
According to the definition of FR strongly truly concurrent bisimulations, we can easily prove the above equations, and we omit the proof.
\end{proof}

\begin{theorem}[Equivalence and congruence for FR strongly pomset bisimilarity]
We can enjoy the full congruence modulo FR strongly pomset bisimilarity.

\begin{enumerate}
  \item $\sim_p^{fr}$ is an equivalence relation;
  \item If $P\sim_p^{fr} Q$ then
  \begin{enumerate}
    \item $\alpha.P\sim_p^{f} \alpha.Q$, $\alpha$ is a free action;
    \item $P.\alpha[m]\sim_p^{r}Q.\alpha[m]$, $\alpha[m]$ is a free action;
    \item $P+R\sim_p^{fr} Q+R$;
    \item $P\parallel R\sim_p^{fr} Q\parallel R$;
    \item $(w)P\sim_p^{fr} (w)Q$;
    \item $x(y).P\sim_p^{f} x(y).Q$;
    \item $P.x(y)[m]\sim_p^{r}Q.x(y)[m]$.
  \end{enumerate}
\end{enumerate}
\end{theorem}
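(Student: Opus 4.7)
The plan is to dispatch part (1) — that $\sim_p^{fr}$ is an equivalence relation — by the standard three-step argument: reflexivity uses the identity posetal relation; symmetry swaps the two components of any FR strongly pomset bisimulation (both clauses of the definition are already symmetric by virtue of the ``and vice-versa'' condition); and transitivity is by relational composition, checked verbatim against both the forward and the reverse clause. For part (2), the overall scheme is to exhibit, for each operator context $C$, a small candidate relation $R = \{(C[P], C[Q])\} \cup \textbf{Id}$, and to verify that $R$ is an FR strongly pomset bisimulation by case-analysing which transition rule fires at the top of $C[P]$, matching it on the $Q$-side by the same rule and invoking the hypothesis $P \sim_p^{fr} Q$ on any residual that still contains $P$.

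For the prefix cases I would take $R = \{(\alpha.P, \alpha.Q)\} \cup \textbf{Id}$ for (a) and $R = \{(x(y).P, x(y).Q)\} \cup \textbf{Id}$ for (f). The only forward rule applicable at the top is \textbf{TAU-ACT}/\textbf{OUTPUT-ACT}/\textbf{INPUT-ACT}, landing in the pair $(P,Q)$ (possibly after a renaming $\{w/y\}$ handled by the substitution propositions of Section \ref{sos4}); from there $P \sim_p^{fr} Q$ closes the clause. Only the forward clauses need checking here because the unfired top prefix has no reverse transitions — which is exactly why (a) and (f) are stated for $\sim_p^{f}$ rather than $\sim_p^{fr}$. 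Cases (b) and (g) are dual, using $R = \{(P.\alpha[m], Q.\alpha[m])\} \cup \textbf{Id}$ and $R = \{(P.x(y)[m], Q.x(y)[m])\} \cup \textbf{Id}$, with \textbf{ROUTPUT-ACT}/\textbf{RINPUT-ACT} firing at the top and only the reverse clauses being non-trivial.

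The summation case (c) and restriction case (e) are routine extensions of the usual arguments: for (c), $\textbf{SUM}_1$/$\textbf{SUM}_2$ and their reverse counterparts $\textbf{RSUM}_1$/$\textbf{RSUM}_2$ force each transition of $P+R$ to come either from $P$, matched on the $Q$-side via the hypothesis, or from $R$, matched by $\textbf{Id}$; for (e), $\textbf{RES}_1$/$\textbf{RES}_2$ and $\textbf{OPEN}_1$/$\textbf{OPEN}_2$ together with their reverse duals let any transition of $(w)P$ be traced back to a transition of $P$ of the same shape modulo the restriction side-conditions, which transfer to $(w)Q$ in the obvious way.

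The parallel case (d) is the main obstacle. The candidate $R = \{(P \parallel R, Q \parallel R)\} \cup \textbf{Id}$ must be checked against all of $\textbf{PAR}_1$--$\textbf{PAR}_4$, $\textbf{COM}$, $\textbf{CLOSE}$ and their reverse versions. The interleaving cases $\textbf{PAR}_1$/$\textbf{PAR}_2$ (one side silent) are immediate. The genuinely concurrent cases $\textbf{PAR}_3$/$\textbf{PAR}_4$ require matching a multiset $\{\alpha,\beta\}$ in which $\alpha$ comes from $P$ and $\beta$ from $R$: the hypothesis supplies a matching $\alpha$-transition from $Q$ with an isomorphic pomset and related residual, the $R$-side is matched by identity, and the two are recombined by the same rule on the $Q \parallel R$ side after checking that the bound-name side-conditions $bn(\alpha) \cap fn(R) = \emptyset$, $bn(\beta) \cap fn(Q) = \emptyset$ still hold — which is where the earlier alpha-conversion propositions from Section \ref{sos4} are invoked to rename bound names apart. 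The communication cases $\textbf{COM}$/$\textbf{CLOSE}$ and their reverse versions $\textbf{RCOM}$/$\textbf{RCLOSE}$ follow the same pattern, with the extra bookkeeping in $\textbf{CLOSE}$ that the extruded name $w$ must additionally be chosen fresh for $Q$, again via alpha-conversion. The reverse clauses of $R$ are handled identically, replacing each forward rule by its reverse variant.
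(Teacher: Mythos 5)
Your proposal follows exactly the approach the paper takes: for each clause you exhibit the candidate relation $R=\{(C[P],C[Q])\}\cup\textbf{Id}$ and verify it is an FR strongly pomset bisimulation by case analysis on the forward and reverse transition rules, which is precisely what the paper asserts (and then omits) for every item. The only refinement worth recording is that the candidate as literally written should be enlarged to contain the residual pairs (e.g.\ $\{(P'\parallel R,\ Q'\parallel R): P'\sim_p^{fr}Q'\}\cup\textbf{Id}$ in case (d)), a standard adjustment your verification narrative already implicitly performs when it invokes the hypothesis on residuals.
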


\begin{proof}
\begin{enumerate}
  \item $\sim_p^{fr}$ is an equivalence relation, it is obvious;
  \item If $P\sim_p^{fr} Q$ then
  \begin{enumerate}
    \item $\alpha.P\sim_p^{f} \alpha.Q$, $\alpha$ is a free action. It is sufficient to prove the relation $R=\{(\alpha.P, \alpha.Q)\}\cup \textbf{Id}$ is a F strongly pomset bisimulation, we omit it;
    \item $P.\alpha[m]\sim_p^{r}Q.\alpha[m]$, $\alpha[m]$ is a free action. It is sufficient to prove the relation $R=\{(P.\alpha[m], Q.\alpha[m])\}\cup \textbf{Id}$ is a R strongly pomset bisimulation, we omit it;
    \item $P+R\sim_p^{fr} Q+R$. It is sufficient to prove the relation $R=\{(P+R, Q+R)\}\cup \textbf{Id}$ is a FR strongly pomset bisimulation, we omit it;
    \item $P\parallel R\sim_p^{fr} Q\parallel R$. It is sufficient to prove the relation $R=\{(P\parallel R, Q\parallel R)\}\cup \textbf{Id}$ is a FR strongly pomset bisimulation, we omit it;
    \item $(w)P\sim_p^{fr} (w)Q$. It is sufficient to prove the relation $R=\{((w)P, (w)Q)\}\cup \textbf{Id}$ is a FR strongly pomset bisimulation, we omit it;
    \item $x(y).P\sim_p^{f} x(y).Q$. It is sufficient to prove the relation $R=\{(x(y).P, x(y).Q)\}\cup \textbf{Id}$ is a F strongly pomset bisimulation, we omit it;
    \item $P.x(y)[m]\sim_p^{r}Q.x(y)[m]$. It is sufficient to prove the relation $R=\{(P.x(y)[m], Q.x(y)[m])\}\cup \textbf{Id}$ is a R strongly pomset bisimulation, we omit it.
  \end{enumerate}
\end{enumerate}
\end{proof}

\begin{theorem}[Equivalence and congruence for FR strongly step bisimilarity]
We can enjoy the full congruence modulo FR strongly step bisimilarity.

\begin{enumerate}
  \item $\sim_s^{fr}$ is an equivalence relation;
  \item If $P\sim_s^{fr} Q$ then
  \begin{enumerate}
    \item $\alpha.P\sim_s^{f} \alpha.Q$, $\alpha$ is a free action;
    \item $P.\alpha[m]\sim_s^{r}Q.\alpha[m]$, $\alpha[m]$ is a free action;
    \item $P+R\sim_s^{fr} Q+R$;
    \item $P\parallel R\sim_s^{fr} Q\parallel R$;
    \item $(w)P\sim_s^{fr} (w)Q$;
    \item $x(y).P\sim_s^{f} x(y).Q$;
    \item $P.x(y)[m]\sim_s^{r}Q.x(y)[m]$.
  \end{enumerate}
\end{enumerate}
\end{theorem}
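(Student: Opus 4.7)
The plan is to follow the same template used for the preceding theorems on (non-FR) strongly step bisimilarity and on FR strongly pomset bisimilarity, doubling up each case to handle both forward transitions $\xrightarrow{}$ and reverse transitions $\xtworightarrow{}$. First I would establish clause (1): $\sim_s^{fr}$ is an equivalence relation. Reflexivity is immediate because $\textbf{Id}$ is itself a FR strongly step bisimulation; symmetry follows from the built-in "vice-versa" quantification in the definition, applied separately to the forward and reverse parts; transitivity comes from the standard relational composition $R_1\circ R_2$ together with the fact that pomset isomorphism $\sim$ composes, so $X_1\sim X_2$ and $X_2\sim X_3$ give $X_1\sim X_3$.

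For clause (2), for each of the seven contexts $\alpha.[\cdot]$, $[\cdot].\alpha[m]$, $[\cdot]+R$, $[\cdot]\parallel R$, $(w)[\cdot]$, $x(y).[\cdot]$, $[\cdot].x(y)[m]$ I would exhibit the candidate relation $R=\{(\mathcal{C}[P],\mathcal{C}[Q])\}\cup \textbf{Id}$ and verify it is a FR strongly step bisimulation by inspection of the rules in Definition~\ref{semantics4}. The forward-only cases (2a, 2f) reduce to the rules \textbf{TAU-ACT}/\textbf{OUTPUT-ACT}/\textbf{INPUT-ACT}, which immediately reduce the matching obligation to the hypothesis $P\sim_s^{fr} Q$. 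The reverse-only cases (2b, 2g) are symmetric, using \textbf{RTAU-ACT}/\textbf{ROUTPUT-ACT}/\textbf{RINPUT-ACT}. For summation (2c), only $\textbf{SUM}_1, \textbf{SUM}_2$ and $\textbf{RSUM}_1, \textbf{RSUM}_2$ can fire, and the match simply forwards to the $P/Q$ summand. For restriction (2e), the rules $\textbf{RES}_1, \textbf{RES}_2, \textbf{OPEN}_1, \textbf{OPEN}_2$ and their reverse counterparts $\textbf{RRES}_1, \textbf{RRES}_2, \textbf{ROPEN}_1, \textbf{ROPEN}_2$ apply; the side conditions on names and the freshness of the extruded name $w$ are preserved because $(w)P$ and $(w)Q$ have the same free names by the earlier theorem that $\equiv_\alpha$ implies $\sim_s^{fr}$, combined with the substitution proposition.

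The main obstacle is clause (2d), the parallel composition $P\parallel R\sim_s^{fr} Q\parallel R$. A forward step of $P\parallel R$ may arise from any of $\textbf{PAR}_1$--$\textbf{PAR}_4$, $\textbf{COM}$, $\textbf{CLOSE}$, and each case must be matched on the $Q\parallel R$ side, and symmetrically for the six reverse rules $\textbf{RPAR}_1$--$\textbf{RPAR}_4$, $\textbf{RCOM}$, $\textbf{RCLOSE}$. The step condition, namely that the events in the label are pairwise concurrent, is automatic from the premises of these rules, which impose $bn(\alpha)\cap bn(\beta)=\emptyset$ and the name-freshness conditions. For $\textbf{PAR}_3$ and $\textbf{PAR}_4$ the candidate matching step is obtained by applying $P\sim_s^{fr} Q$ to get a matching $Q$-transition and then recombining with the same $R$-transition; the residuals are of the form $P'\parallel R'$ and $Q'\parallel R'$ with $P'\sim_s^{fr} Q'$, which lie in $R$. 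For $\textbf{COM}$ and $\textbf{CLOSE}$ (and their reverse analogues) one additionally uses the earlier propositions on substitutions and $\alpha$-conversion to align the bound output/input names; the closure case produces a residual $(w)(P'\parallel R')$ which is matched by $(w)(Q'\parallel R')$ and lies in $R$ after one application of the restriction congruence already established in (2e). Once the parallel case is settled, the theorem is complete.
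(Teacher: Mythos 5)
Your proposal takes essentially the same route as the paper: the paper's proof of each clause consists precisely of naming the candidate relation $R=\{(\textrm{context}[P],\textrm{context}[Q])\}\cup\textbf{Id}$ and asserting, without the rule-by-rule case analysis you supply, that it is a FR strongly step bisimulation. One caveat: for the composition and restriction cases the singleton-plus-identity relation is not closed under derivatives --- after a step the residual pair $(P'\parallel R',Q'\parallel R')$ with $P'\sim_s^{fr}Q'$ need not lie in $R$ as you wrote it --- so the candidate must be enlarged to $\{(P'\parallel R',Q'\parallel R')\mid P'\sim_s^{fr}Q',\ R'\in\mathcal{P}\}\cup\textbf{Id}$ (and similarly $\{((w)P',(w)Q')\mid P'\sim_s^{fr}Q'\}\cup\textbf{Id}$); with that standard adjustment, which the paper also leaves implicit, your argument goes through.
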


\begin{proof}
\begin{enumerate}
  \item $\sim_s^{fr}$ is an equivalence relation, it is obvious;
  \item If $P\sim_s^{fr} Q$ then
  \begin{enumerate}
    \item $\alpha.P\sim_s^{f} \alpha.Q$, $\alpha$ is a free action. It is sufficient to prove the relation $R=\{(\alpha.P, \alpha.Q)\}\cup \textbf{Id}$ is a F strongly step bisimulation, we omit it;
    \item $P.\alpha[m]\sim_s^{r}Q.\alpha[m]$, $\alpha[m]$ is a free action. It is sufficient to prove the relation $R=\{(P.\alpha[m], Q.\alpha[m])\}\cup \textbf{Id}$ is a R strongly step bisimulation, we omit it;
    \item $P+R\sim_s^{fr} Q+R$. It is sufficient to prove the relation $R=\{(P+R, Q+R)\}\cup \textbf{Id}$ is a FR strongly step bisimulation, we omit it;
    \item $P\parallel R\sim_s^{fr} Q\parallel R$. It is sufficient to prove the relation $R=\{(P\parallel R, Q\parallel R)\}\cup \textbf{Id}$ is a FR strongly step bisimulation, we omit it;
    \item $(w)P\sim_s^{fr} (w)Q$. It is sufficient to prove the relation $R=\{((w)P, (w)Q)\}\cup \textbf{Id}$ is a FR strongly step bisimulation, we omit it;
    \item $x(y).P\sim_s^{f} x(y).Q$. It is sufficient to prove the relation $R=\{(x(y).P, x(y).Q)\}\cup \textbf{Id}$ is a F strongly step bisimulation, we omit it;
    \item $P.x(y)[m]\sim_s^{r}Q.x(y)[m]$. It is sufficient to prove the relation $R=\{(P.x(y)[m], Q.x(y)[m])\}\cup \textbf{Id}$ is a R strongly step bisimulation, we omit it.
  \end{enumerate}
\end{enumerate}
\end{proof}

\begin{theorem}[Equivalence and congruence for FR strongly hp-bisimilarity]
We can enjoy the full congruence modulo FR strongly hp-bisimilarity.

\begin{enumerate}
  \item $\sim_{hp}^{fr}$ is an equivalence relation;
  \item If $P\sim_{hp}^{fr} Q$ then
  \begin{enumerate}
    \item $\alpha.P\sim_{hp}^{f} \alpha.Q$, $\alpha$ is a free action;
    \item $P.\alpha[m]\sim_{hp}^{r}Q.\alpha[m]$, $\alpha[m]$ is a free action;
    \item $P+R\sim_{hp}^{fr} Q+R$;
    \item $P\parallel R\sim_{hp}^{fr} Q\parallel R$;
    \item $(w)P\sim_{hp}^{fr} (w)Q$;
    \item $x(y).P\sim_{hp}^{f} x(y).Q$;
    \item $P.x(y)[m]\sim_{hp}^{r}Q.x(y)[m]$.
  \end{enumerate}
\end{enumerate}
\end{theorem}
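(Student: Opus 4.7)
The plan is to handle the two clauses of the theorem separately, mirroring the approach used for the corresponding FR strongly pomset and step results proved just above.

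For clause (1), that $\sim_{hp}^{fr}$ is an equivalence relation, the strategy is standard. Reflexivity is witnessed by the identity posetal relation $\{(C,\textrm{id}_C,C) : C\in\mathcal{C}(\mathcal{E})\}$, which is trivially a FR strongly hp-bisimulation and is downward closed. Symmetry follows by inverting every triple $(C_1,f,C_2)\mapsto(C_2,f^{-1},C_1)$; this preserves bisimilarity because both the forward and the reverse clauses of the definition are already stated with ``and vice-versa''. Transitivity is witnessed by the usual relational composition $(C_1,f_1,C_2)\circ(C_2,f_2,C_3)=(C_1,f_2\circ f_1,C_3)$, where the composed bijection tracks corresponding events through both bisimulations. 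These arguments transfer cleanly to the FR setting because the reverse transition clauses of the definition are stated in perfect analogy with the forward ones.

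For clause (2), the uniform strategy is to exhibit, for each context $C[\cdot]$, a relation of the form $R=\{(C[P],f,C[Q]):(D,g,E)\in R_0\}\cup\textbf{Id}$, where $R_0$ is a given FR strongly hp-bisimulation witnessing $P\sim_{hp}^{fr}Q$ and $f$ extends $g$ on the events contributed by the context. For the prefix cases (2a), (2b), (2f), (2g), I would proceed by case analysis on the first forward or reverse transition: the outgoing forward transitions of $\alpha.P$ and $x(y).P$ are determined by \textbf{TAU-ACT}, \textbf{OUTPUT-ACT}, and \textbf{INPUT-ACT}, while the reverse transitions of $P.\alpha[m]$ and $P.x(y)[m]$ are determined by their reverse analogues, so the bijection is extended by mapping the prefix event on the left to that on the right. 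For (2c) summation and (2e) restriction, transitions factor through \textbf{SUM}/\textbf{RSUM} and \textbf{RES}/\textbf{OPEN}/\textbf{RRES}/\textbf{ROPEN}, and the side conditions on bound names propagate without complication, so the residuals land either in $R_0$ or in $\textbf{Id}$.

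The hard part will be clause (2d), parallel composition, since this is the only case where events from the two sides of the context can synchronize. In the forward direction, \textbf{PAR}$_3$, \textbf{PAR}$_4$, \textbf{COM}, and \textbf{CLOSE} yield residuals that mix events from $P$ (respectively $Q$) with events from $R$, so the posetal bijection must be assembled as the disjoint union of the bijection supplied by $R_0$ on $P$-events and the identity on $R$-events, and one must verify that this union remains a pomset isomorphism after arbitrary interleavings. The reverse direction is more delicate still: when $P\parallel R$ undoes a past synchronization via \textbf{RCOM} or \textbf{RCLOSE}, the memory keys $[m]$ must match on both sides so that the bijection can be consistently restricted, and the freshness conditions on bound names inherited from \textbf{ROPEN} must remain compatible between the $P$- and $Q$-components through $R_0$. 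Once these bookkeeping checks are carried out, the remaining verifications are routine and follow the outline used for the parallel law above; accordingly, as the paper does in its other congruence theorems, we would then omit the detailed calculations.
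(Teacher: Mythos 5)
Your proposal follows essentially the same route as the paper: for each operator exhibit a candidate posetal relation built from a witnessing bisimulation together with the identity, and reduce the claim to checking the (forward and reverse) bisimulation conditions, which are then omitted as routine. If anything you are more careful than the paper, whose stated witness $R=\{(C[P],C[Q])\}\cup\textbf{Id}$ is not literally closed under derivatives, whereas your context-closure of $R_0$ --- together with the observations about merging the two bijections in the parallel case and matching memory keys under \textbf{RCOM}/\textbf{RCLOSE} --- is the form the verification actually requires.
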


\begin{proof}
\begin{enumerate}
  \item $\sim_{hp}^{fr}$ is an equivalence relation, it is obvious;
  \item If $P\sim_{hp}^{fr} Q$ then
  \begin{enumerate}
    \item $\alpha.P\sim_{hp}^{f} \alpha.Q$, $\alpha$ is a free action. It is sufficient to prove the relation $R=\{(\alpha.P, \alpha.Q)\}\cup \textbf{Id}$ is a F strongly hp-bisimulation, we omit it;
    \item $P.\alpha[m]\sim_{hp}^{r}Q.\alpha[m]$, $\alpha[m]$ is a free action. It is sufficient to prove the relation $R=\{(P.\alpha[m], Q.\alpha[m])\}\cup \textbf{Id}$ is a R strongly hp-bisimulation, we omit it;
    \item $P+R\sim_{hp}^{fr} Q+R$. It is sufficient to prove the relation $R=\{(P+R, Q+R)\}\cup \textbf{Id}$ is a FR strongly hp-bisimulation, we omit it;
    \item $P\parallel R\sim_{hp}^{fr} Q\parallel R$. It is sufficient to prove the relation $R=\{(P\parallel R, Q\parallel R)\}\cup \textbf{Id}$ is a FR strongly hp-bisimulation, we omit it;
    \item $(w)P\sim_{hp}^{fr} (w)Q$. It is sufficient to prove the relation $R=\{((w)P, (w)Q)\}\cup \textbf{Id}$ is a FR strongly hp-bisimulation, we omit it;
    \item $x(y).P\sim_{hp}^{f} x(y).Q$. It is sufficient to prove the relation $R=\{(x(y).P, x(y).Q)\}\cup \textbf{Id}$ is a F strongly hp-bisimulation, we omit it;
    \item $P.x(y)[m]\sim_{hp}^{r}Q.x(y)[m]$. It is sufficient to prove the relation $R=\{(P.x(y)[m], Q.x(y)[m])\}\cup \textbf{Id}$ is a R strongly hp-bisimulation, we omit it.
  \end{enumerate}
\end{enumerate}
\end{proof}

\begin{theorem}[Equivalence and congruence for FR strongly hhp-bisimilarity]
We can enjoy the full congruence modulo FR strongly hhp-bisimilarity.

\begin{enumerate}
  \item $\sim_{hhp}^{fr}$ is an equivalence relation;
  \item If $P\sim_{hhp}^{fr} Q$ then
  \begin{enumerate}
    \item $\alpha.P\sim_{hhp}^{f} \alpha.Q$, $\alpha$ is a free action;
    \item $P.\alpha[m]\sim_{hhp}^{r}Q.\alpha[m]$, $\alpha[m]$ is a free action;
    \item $P+R\sim_{hhp}^{fr} Q+R$;
    \item $P\parallel R\sim_{hhp}^{fr} Q\parallel R$;
    \item $(w)P\sim_{hhp}^{fr} (w)Q$;
    \item $x(y).P\sim_{hhp}^{f} x(y).Q$;
    \item $P.x(y)[m]\sim_{hhp}^{r}Q.x(y)[m]$.
  \end{enumerate}
\end{enumerate}
\end{theorem}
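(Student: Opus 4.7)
The plan is to mirror the structure of the four preceding congruence theorems (for $\sim_p^{fr}$, $\sim_s^{fr}$, $\sim_{hp}^{fr}$, and the non-reversible $\sim_{hhp}$), constructing, for each clause, a candidate posetal relation and verifying that it is a downward closed FR strongly hp-bisimulation. Because a FR hhp-bisimulation must satisfy both the forward and the reverse transfer conditions, every check below is carried out twice: once using the forward transition rules of Tables \ref{TRForPITC4}--\ref{TRForPITC42} and once using the reverse rules of Tables \ref{TRForPITC43}--\ref{TRForPITC44}.

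For clause 1, I would verify that $\sim_{hhp}^{fr}$ is reflexive via the identity posetal relation $\{(C,\mathrm{id}_C,C)\}$, symmetric by sending $(C_1,f,C_2)$ to $(C_2,f^{-1},C_1)$, and transitive by composing isomorphisms; downward closure is preserved by each of these constructions since it is a pointwise condition on posetal triples.

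For clauses 2(a)--(g), assume $P\sim_{hhp}^{fr}Q$ via some witnessing downward closed posetal relation $S$. For each context $C[\cdot]$ --- namely $\alpha.[\cdot]$, $[\cdot].\alpha[m]$, $\cdot+R$, $\cdot\parallel R$, $(w)\cdot$, $x(y).[\cdot]$, and $[\cdot].x(y)[m]$ --- I would exhibit the lifted posetal relation $R_C=\{(C[P]\text{-config},\tilde f,C[Q]\text{-config}) : \text{restriction of }\tilde f\text{ to }P,Q\text{-events lies in }S\}\cup\textbf{Id}$, where $\tilde f$ extends the underlying $f\in S$ by matching each context event to its counterpart. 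Transitions from the $C[P]$-side are then matched on the $C[Q]$-side by the prefix, sum, restriction, and parallel rules, with the forward case following the earlier hp-bisimilarity proof and the reverse case being symmetric via the reverse rules. Downward closure of $R_C$ reduces to downward closure of $S$ together with the observation that restricting a configuration of $C[P]$ yields either a configuration of $C$ or a configuration of $P$ on which $S$ still applies.

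The real obstacle lies in clause 2(d), parallel composition. It is classical (Bednarczyk's counterexample) that hhp-bisimilarity fails to be a congruence for parallel composition in standard event-structure settings, precisely because the downward closure requirement forces back-and-forth matching of arbitrary subconfigurations, and the interleavings produced by $\cdot\parallel R$ can expose subconfigurations whose hp-equivalence in isolation is not witnessed by the global $S$. In the FR setting, the reverse transitions strengthen the bisimulation game with memory keys $[m]$ that pin down the provenance of each event, which gives a plausible avenue for repairing the obstruction; a careful analysis of how these keys synchronize forward and reverse witnesses between $P\parallel R$ and $Q\parallel R$ is where the substantive content of the proof must lie, and is the step I expect to require more than the routine template used for the other clauses.
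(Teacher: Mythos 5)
Your overall strategy coincides with the paper's: for each clause exhibit the lifted relation $R=\{(C[P],C[Q])\}\cup\textbf{Id}$ (as a posetal relation, with the map extended by the identity on context events) and check that it is a downward closed FR strongly hp-bisimulation, running every transfer check once against the forward rules and once against the reverse rules. The paper's own proof consists of exactly this template for each of the seven contexts, with all verifications omitted, so for clauses 1 and 2(a)--(c), (e)--(g) your proposal is, if anything, more explicit than the source.

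The genuine gap is clause 2(d), which you explicitly leave open. Two things need saying. First, the obstruction you invoke is not the one that actually afflicts hhp-bisimilarity in this development: the failure recorded in the paper is the failure of the \emph{expansion law} ($P\parallel Q\nsim_{phhp}\cdots$), i.e.\ the impossibility of rewriting $\parallel$ into prefix--summation form, which is a different claim from congruence of $\parallel$. Second, the routine lifted-relation argument does close this clause, and you should carry it out rather than defer it: a configuration of $P\parallel R'$ decomposes into a $P$-part, an $R'$-part, and synchronization pairs; a pointwise-smaller posetal triple restricts the isomorphism $\tilde f$ to a downward closed subset on each component, so its $P$/$Q$-part lands back in the witnessing relation $S$ by downward closure of $S$, and its $R'$-part remains an identity. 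Hence the lifted relation is downward closed, and the hp-transfer conditions (forward and reverse, including the $\textbf{COM}$/$\textbf{RCOM}$ and $\textbf{CLOSE}$/$\textbf{RCLOSE}$ synchronizations) follow from those of $S$ exactly as in the hp-case you already accept. The memory keys $[m]$ play no special role here beyond ensuring that each reverse transition is matched against the event with the same key, which the extended isomorphism already records. As written, your proof of 2(d) is a statement of intent rather than an argument, so the theorem is not yet established by your proposal.
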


\begin{proof}
\begin{enumerate}
  \item $\sim_{hhp}^{fr}$ is an equivalence relation, it is obvious;
  \item If $P\sim_{hhp}^{fr} Q$ then
  \begin{enumerate}
    \item $\alpha.P\sim_{hhp}^{f} \alpha.Q$, $\alpha$ is a free action. It is sufficient to prove the relation $R=\{(\alpha.P, \alpha.Q)\}\cup \textbf{Id}$ is a F strongly hhp-bisimulation, we omit it;
    \item $P.\alpha[m]\sim_{hhp}^{r}Q.\alpha[m]$, $\alpha[m]$ is a free action. It is sufficient to prove the relation $R=\{(P.\alpha[m], Q.\alpha[m])\}\cup \textbf{Id}$ is a R strongly hhp-bisimulation, we omit it;
    \item $P+R\sim_{hhp}^{fr} Q+R$. It is sufficient to prove the relation $R=\{(P+R, Q+R)\}\cup \textbf{Id}$ is a FR strongly hhp-bisimulation, we omit it;
    \item $P\parallel R\sim_{hhp}^{fr} Q\parallel R$. It is sufficient to prove the relation $R=\{(P\parallel R, Q\parallel R)\}\cup \textbf{Id}$ is a FR strongly hhp-bisimulation, we omit it;
    \item $(w)P\sim_{hhp}^{fr} (w)Q$. It is sufficient to prove the relation $R=\{((w)P, (w)Q)\}\cup \textbf{Id}$ is a FR strongly hhp-bisimulation, we omit it;
    \item $x(y).P\sim_{hhp}^{f} x(y).Q$. It is sufficient to prove the relation $R=\{(x(y).P, x(y).Q)\}\cup \textbf{Id}$ is a F strongly hhp-bisimulation, we omit it;
    \item $P.x(y)[m]\sim_{hhp}^{r}Q.x(y)[m]$. It is sufficient to prove the relation $R=\{(P.x(y)[m], Q.x(y)[m])\}\cup \textbf{Id}$ is a R strongly hhp-bisimulation, we omit it.
  \end{enumerate}
\end{enumerate}
\end{proof}

\subsubsection{Recursion}

\begin{definition}
Let $X$ have arity $n$, and let $\widetilde{x}=x_1,\cdots,x_n$ be distinct names, and $fn(P)\subseteq\{x_1,\cdots,x_n\}$. The replacement of $X(\widetilde{x})$ by $P$ in $E$, written
$E\{X(\widetilde{x}):=P\}$, means the result of replacing each subterm $X(\widetilde{y})$ in $E$ by $P\{\widetilde{y}/\widetilde{x}\}$.
\end{definition}

\begin{definition}
Let $E$ and $F$ be two process expressions containing only $X_1,\cdots,X_m$ with associated name sequences $\widetilde{x}_1,\cdots,\widetilde{x}_m$. Then,
\begin{enumerate}
  \item $E\sim_p^{fr} F$ means $E(\widetilde{P})\sim_p^{fr} F(\widetilde{P})$;
  \item $E\sim_s^{fr} F$ means $E(\widetilde{P})\sim_s^{fr} F(\widetilde{P})$;
  \item $E\sim_{hp}^{fr} F$ means $E(\widetilde{P})\sim_{hp}^{fr} F(\widetilde{P})$;
  \item $E\sim_{hhp}^{fr} F$ means $E(\widetilde{P})\sim_{hhp}^{fr} F(\widetilde{P})$;
\end{enumerate}

for all $\widetilde{P}$ such that $fn(P_i)\subseteq \widetilde{x}_i$ for each $i$.
\end{definition}

\begin{definition}
A term or identifier is weakly guarded in $P$ if it lies within some subterm $\alpha.Q$ or $Q.\alpha[m]$ or $(\alpha_1\parallel\cdots\parallel \alpha_n).Q$ or
$Q.(\alpha_1[m]\parallel\cdots\parallel \alpha_n[m])$ of $P$.
\end{definition}

\begin{theorem}
Assume that $\widetilde{E}$ and $\widetilde{F}$ are expressions containing only $X_i$ with $\widetilde{x}_i$, and $\widetilde{A}$ and $\widetilde{B}$ are identifiers with $A_i$, $B_i$. Then, for all $i$,
\begin{enumerate}
  \item $E_i\sim_s^{fr} F_i$, $A_i(\widetilde{x}_i)\overset{\text{def}}{=}E_i(\widetilde{A})$, $B_i(\widetilde{x}_i)\overset{\text{def}}{=}F_i(\widetilde{B})$, then
  $A_i(\widetilde{x}_i)\sim_s^{fr} B_i(\widetilde{x}_i)$;
  \item $E_i\sim_p^{fr} F_i$, $A_i(\widetilde{x}_i)\overset{\text{def}}{=}E_i(\widetilde{A})$, $B_i(\widetilde{x}_i)\overset{\text{def}}{=}F_i(\widetilde{B})$, then
  $A_i(\widetilde{x}_i)\sim_p^{fr} B_i(\widetilde{x}_i)$;
  \item $E_i\sim_{hp}^{fr} F_i$, $A_i(\widetilde{x}_i)\overset{\text{def}}{=}E_i(\widetilde{A})$, $B_i(\widetilde{x}_i)\overset{\text{def}}{=}F_i(\widetilde{B})$, then
  $A_i(\widetilde{x}_i)\sim_{hp}^{fr} B_i(\widetilde{x}_i)$;
  \item $E_i\sim_{hhp}^{fr} F_i$, $A_i(\widetilde{x}_i)\overset{\text{def}}{=}E_i(\widetilde{A})$, $B_i(\widetilde{x}_i)\overset{\text{def}}{=}F_i(\widetilde{B})$, then
  $A_i(\widetilde{x}_i)\sim_{hhp}^{fr} B_i(\widetilde{x}_i)$.
\end{enumerate}
\end{theorem}

\begin{proof}
\begin{enumerate}
  \item $E_i\sim_s^{fr} F_i$, $A_i(\widetilde{x}_i)\overset{\text{def}}{=}E_i(\widetilde{A})$, $B_i(\widetilde{x}_i)\overset{\text{def}}{=}F_i(\widetilde{B})$, then
  $A_i(\widetilde{x}_i)\sim_s^{fr} B_i(\widetilde{x}_i)$.

      We will consider the case $I=\{1\}$ with loss of generality, and show the following relation $R$ is a FR strongly step bisimulation.

      $$R=\{(G(A),G(B)):G\textrm{ has only identifier }X\}.$$

      By choosing $G\equiv X(\widetilde{y})$, it follows that $A(\widetilde{y})\sim_s^{fr} B(\widetilde{y})$. It is sufficient to prove the following:
      \begin{enumerate}
        \item If $G(A)\xrightarrow{\{\alpha_1,\cdots,\alpha_n\}}P'$, where $\alpha_i(1\leq i\leq n)$ is a free action or bound output action with
        $bn(\alpha_1)\cap\cdots\cap bn(\alpha_n)\cap n(G(A),G(B))=\emptyset$, then $G(B)\xrightarrow{\{\alpha_1,\cdots,\alpha_n\}}Q''$ such that $P'\sim_s^{fr} Q''$;
        \item If $G(A)\xrightarrow{x(y)}P'$ with $x\notin n(G(A),G(B))$, then $G(B)\xrightarrow{x(y)}Q''$, such that for all $u$,
        $P'\{u/y\}\sim_s^{fr} Q''\{u/y\}$;
        \item If $G(A)\xtworightarrow{\{\alpha_1[m],\cdots,\alpha_n[m]\}}P'$, where $\alpha_i[m](1\leq i\leq n)$ is a free action or bound output action with
        $bn(\alpha_1[m])\cap\cdots\cap bn(\alpha_n[m])\cap n(G(A),G(B))=\emptyset$, then $G(B)\xtworightarrow{\{\alpha_1[m],\cdots,\alpha_n[m]\}}Q''$ such that $P'\sim_s^{fr} Q''$;
        \item If $G(A)\xtworightarrow{x(y)[m]}P'$ with $x\notin n(G(A),G(B))$, then $G(B)\xtworightarrow{x(y)[m]}Q''$, such that for all $u$,
        $P'\{u/y\}\sim_s^{fr} Q''\{u/y\}$.
      \end{enumerate}

      To prove the above properties, it is sufficient to induct on the depth of inference and quite routine, we omit it.
  \item $E_i\sim_p^{fr} F_i$, $A_i(\widetilde{x}_i)\overset{\text{def}}{=}E_i(\widetilde{A})$, $B_i(\widetilde{x}_i)\overset{\text{def}}{=}F_i(\widetilde{B})$, then
  $A_i(\widetilde{x}_i)\sim_p^{fr} B_i(\widetilde{x}_i)$. It can be proven similarly to the above case.
  \item $E_i\sim_{hp}^{fr} F_i$, $A_i(\widetilde{x}_i)\overset{\text{def}}{=}E_i(\widetilde{A})$, $B_i(\widetilde{x}_i)\overset{\text{def}}{=}F_i(\widetilde{B})$, then
  $A_i(\widetilde{x}_i)\sim_{hp}^{fr} B_i(\widetilde{x}_i)$. It can be proven similarly to the above case.
  \item $E_i\sim_{hhp}^{fr} F_i$, $A_i(\widetilde{x}_i)\overset{\text{def}}{=}E_i(\widetilde{A})$, $B_i(\widetilde{x}_i)\overset{\text{def}}{=}F_i(\widetilde{B})$, then
  $A_i(\widetilde{x}_i)\sim_{hhp}^{fr} B_i(\widetilde{x}_i)$. It can be proven similarly to the above case.
\end{enumerate}
\end{proof}

\begin{theorem}[Unique solution of equations]
Assume $\widetilde{E}$ are expressions containing only $X_i$ with $\widetilde{x}_i$, and each $X_i$ is weakly guarded in each $E_j$. Assume that $\widetilde{P}$ and $\widetilde{Q}$ are
processes such that $fn(P_i)\subseteq \widetilde{x}_i$ and $fn(Q_i)\subseteq \widetilde{x}_i$. Then, for all $i$,
\begin{enumerate}
  \item if $P_i\sim_p^{fr} E_i(\widetilde{P})$, $Q_i\sim_p^{fr} E_i(\widetilde{Q})$, then $P_i\sim_p^{fr} Q_i$;
  \item if $P_i\sim_s^{fr} E_i(\widetilde{P})$, $Q_i\sim_s^{fr} E_i(\widetilde{Q})$, then $P_i\sim_s^{fr} Q_i$;
  \item if $P_i\sim_{hp}^{fr} E_i(\widetilde{P})$, $Q_i\sim_{hp}^{fr} E_i(\widetilde{Q})$, then $P_i\sim_{hp}^{fr} Q_i$;
  \item if $P_i\sim_{hhp}^{fr} E_i(\widetilde{P})$, $Q_i\sim_{hhp}^{fr} E_i(\widetilde{Q})$, then $P_i\sim_{hhp}^{fr} Q_i$.
\end{enumerate}
\end{theorem}

\begin{proof}
\begin{enumerate}
  \item It is similar to the proof of unique solution of equations for FR strongly pomset bisimulation in CTC, please refer to \cite{CTC2} for details, we omit it;
  \item It is similar to the proof of unique solution of equations for FR strongly step bisimulation in CTC, please refer to \cite{CTC2} for details, we omit it;
  \item It is similar to the proof of unique solution of equations for FR strongly hp-bisimulation in CTC, please refer to \cite{CTC2} for details, we omit it;
  \item It is similar to the proof of unique solution of equations for FR strongly hhp-bisimulation in CTC, please refer to \cite{CTC2} for details, we omit it.
\end{enumerate}
\end{proof}

\subsection{Algebraic Theory}\label{a4}

\begin{definition}[STC]
The theory \textbf{STC} is consisted of the following axioms and inference rules:

\begin{enumerate}
  \item Alpha-conversion $\textbf{A}$.
  \[\textrm{if } P\equiv Q, \textrm{ then } P=Q\]
  \item Congruence $\textbf{C}$. If $P=Q$, then,
  \[\tau.P=\tau.Q\quad \overline{x}y.P=\overline{x}y.Q\quad P.\overline{x}y[m]=Q.\overline{x}y[m]\]
  \[P+R=Q+R\quad P\parallel R=Q\parallel R\]
  \[(x)P=(x)Q\quad x(y).P=x(y).Q\quad P.x(y)[m]=Q.x(y)[m]\]
  \item Summation $\textbf{S}$.
  \[\textbf{S0}\quad P+\textbf{nil}=P\]
  \[\textbf{S1}\quad P+P=P\]
  \[\textbf{S2}\quad P+Q=Q+P\]
  \[\textbf{S3}\quad P+(Q+R)=(P+Q)+R\]
  \item Restriction $\textbf{R}$.
  \[\textbf{R0}\quad (x)P=P\quad \textrm{ if }x\notin fn(P)\]
  \[\textbf{R1}\quad (x)(y)P=(y)(x)P\]
  \[\textbf{R2}\quad (x)(P+Q)=(x)P+(x)Q\]
  \[\textbf{R3}\quad (x)\alpha.P=\alpha.(x)P\quad \textrm{ if }x\notin n(\alpha)\]
  \[\textbf{R4}\quad (x)\alpha.P=\textbf{nil}\quad \textrm{ if }x\textrm{is the subject of }\alpha\]
  \item Expansion $\textbf{E}$.
  Let $P\equiv\sum_i \alpha_{i}.P_{i}$ and $Q\equiv\sum_j\beta_{j}.Q_{j}$, where $bn(\alpha_{i})\cap fn(Q)=\emptyset$ for all $i$, and
  $bn(\beta_{j})\cap fn(P)=\emptyset$ for all $j$. Then,

\begin{enumerate}
  \item $P\parallel Q\sim_p^{fr} \sum_i\sum_j (\alpha_{i}\parallel \beta_{j}).(P_{i}\parallel Q_{j})+\sum_{\alpha_{i} \textrm{ comp }\beta_{j}}\tau.R_{ij}$;
  \item $P\parallel Q\sim_s^{fr} \sum_i\sum_j (\alpha_{i}\parallel \beta_{j}).(P_{i}\parallel Q_{j})+\sum_{\alpha_{i} \textrm{ comp }\beta_{j}}\tau.R_{ij}$;
  \item $P\parallel Q\sim_{hp}^{fr} \sum_i\sum_j (\alpha_{i}\parallel \beta_{j}).(P_{i}\parallel Q_{j})+\sum_{\alpha_{i} \textrm{ comp }\beta_{j}}\tau.R_{ij}$;
  \item $P\parallel Q\nsim_{phhp} \sum_i\sum_j (\alpha_{i}\parallel \beta_{j}).(P_{i}\parallel Q_{j})+\sum_{\alpha_{i} \textrm{ comp }\beta_{j}}\tau.R_{ij}$.
\end{enumerate}

Where $\alpha_i$ comp $\beta_j$ and $R_{ij}$ are defined as follows:
\begin{enumerate}
  \item $\alpha_{i}$ is $\overline{x}u$ and $\beta_{j}$ is $x(v)$, then $R_{ij}=P_{i}\parallel Q_{j}\{u/v\}$;
  \item $\alpha_{i}$ is $\overline{x}(u)$ and $\beta_{j}$ is $x(v)$, then $R_{ij}=(w)(P_{i}\{w/u\}\parallel Q_{j}\{w/v\})$, if $w\notin fn((u)P_{i})\cup fn((v)Q_{j})$;
  \item $\alpha_{i}$ is $x(v)$ and $\beta_{j}$ is $\overline{x}u$, then $R_{ij}=P_{i}\{u/v\}\parallel Q_{j}$;
  \item $\alpha_{i}$ is $x(v)$ and $\beta_{j}$ is $\overline{x}(u)$, then $R_{ij}=(w)(P_{i}\{w/v\}\parallel Q_{j}\{w/u\})$, if $w\notin fn((v)P_{i})\cup fn((u)Q_{j})$.
\end{enumerate}

Let $P\equiv\sum_i P_{i}.\alpha_{i}[m]$ and $Q\equiv\sum_l Q_{j}.\beta_{j}[m]$, where $bn(\alpha_{i}[m])\cap fn(Q)=\emptyset$ for all $i$, and
  $bn(\beta_{j}[m])\cap fn(P)=\emptyset$ for all $j$. Then,

\begin{enumerate}
  \item $P\parallel Q\sim_p^{fr} \sum_i\sum_j(P_{i}\parallel Q_{j}).(\alpha_{i}[m]\parallel \beta_{j}[m])+\sum_{\alpha_{i} \textrm{ comp }\beta_{j}} R_{ij}.\tau$;
  \item $P\parallel Q\sim_s^{fr} \sum_i\sum_j(P_{i}\parallel Q_{j}).(\alpha_{i}[m]\parallel \beta_{j}[m])+\sum_{\alpha_{i} \textrm{ comp }\beta_{j}} R_{ij}.\tau$;
  \item $P\parallel Q\sim_{hp}^{fr} \sum_i\sum_j(P_{i}\parallel Q_{j}).(\alpha_{i}[m]\parallel \beta_{j}[m])+\sum_{\alpha_{i} \textrm{ comp }\beta_{j}} R_{ij}.\tau$;
  \item $P\parallel Q\nsim_{phhp} \sum_i\sum_j(P_{i}\parallel Q_{j}).(\alpha_{i}[m]\parallel \beta_{j}[m])+\sum_{\alpha_{i} \textrm{ comp }\beta_{j}} R_{ij}.\tau$.
\end{enumerate}

Where $\alpha_i$ comp $\beta_j$ and $R_{ij}$ are defined as follows:
\begin{enumerate}
  \item $\alpha_{i}[m]$ is $\overline{x}u$ and $\beta_{j}[m]$ is $x(v)$, then $R_{ij}=P_{i}\parallel Q_{j}\{u/v\}$;
  \item $\alpha_{i}[m]$ is $\overline{x}(u)$ and $\beta_{j}[m]$ is $x(v)$, then $R_{ij}=(w)(P_{i}\{w/u\}\parallel Q_{j}\{w/v\})$, if $w\notin fn((u)P_{i})\cup fn((v)Q_{j})$;
  \item $\alpha_{i}[m]$ is $x(v)$ and $\beta_{j}[m]$ is $\overline{x}u$, then $R_{ij}=P_{i}\{u/v\}\parallel Q_{j}$;
  \item $\alpha_{i}[m]$ is $x(v)$ and $\beta_{j}[m]$ is $\overline{x}(u)$, then $R_{ij}=(w)(P_{i}\{w/v\}\parallel Q_{j}\{w/u\})$, if $w\notin fn((v)P_{i})\cup fn((u)Q_{j})$.
\end{enumerate}
  \item Identifier $\textbf{I}$.
  \[\textrm{If }A(\widetilde{x})\overset{\text{def}}{=}P,\textrm{ then }A(\widetilde{y})= P\{\widetilde{y}/\widetilde{x}\}.\]
\end{enumerate}
\end{definition}

\begin{theorem}[Soundness]
If $\textbf{STC}\vdash P=Q$ then
\begin{enumerate}
  \item $P\sim_p^{fr} Q$;
  \item $P\sim_s^{fr} Q$;
  \item $P\sim_{hp}^{fr} Q$;
  \item $P\sim_{hhp}^{fr} Q$.
\end{enumerate}
\end{theorem}

\begin{proof}
The soundness of these laws modulo strongly truly concurrent bisimilarities is already proven in Section \ref{s4}.
\end{proof}

\begin{definition}
The agent identifier $A$ is weakly guardedly defined if every agent identifier is weakly guarded in the right-hand side of the definition of $A$.
\end{definition}

\begin{definition}[Head normal form]
A Process $P$ is in head normal form if it is a sum of the prefixes:

$$P\equiv \sum_i(\alpha_{i1}\parallel\cdots\parallel\alpha_{in}).P_{i}\quad P\equiv \sum_i P_{i}.(\alpha_{i1}[m]\parallel\cdots\parallel\alpha_{in}[m])$$
\end{definition}

\begin{proposition}
If every agent identifier is weakly guardedly defined, then for any process $P$, there is a head normal form $H$ such that

$$\textbf{STC}\vdash P=H.$$
\end{proposition}

\begin{proof}
It is sufficient to induct on the structure of $P$ and quite obvious.
\end{proof}

\begin{theorem}[Completeness]
For all processes $P$ and $Q$,
\begin{enumerate}
  \item if $P\sim_p^{fr} Q$, then $\textbf{STC}\vdash P=Q$;
  \item if $P\sim_s^{fr} Q$, then $\textbf{STC}\vdash P=Q$;
  \item if $P\sim_{hp}^{fr} Q$, then $\textbf{STC}\vdash P=Q$.
\end{enumerate}
\end{theorem}

\begin{proof}
\begin{enumerate}
  \item if $P\sim_s^{fr} Q$, then $\textbf{STC}\vdash P=Q$.

  For the forward transition case.

Since $P$ and $Q$ all have head normal forms, let $P\equiv\sum_{i=1}^k\alpha_{i}.P_{i}$ and $Q\equiv\sum_{i=1}^k\beta_{i}.Q_{i}$. Then the depth of
$P$, denoted as $d(P)=0$, if $k=0$; $d(P)=1+max\{d(P_{i})\}$ for $1\leq j,i\leq k$. The depth $d(Q)$ can be defined similarly.

It is sufficient to induct on $d=d(P)+d(Q)$. When $d=0$, $P\equiv\textbf{nil}$ and $Q\equiv\textbf{nil}$, $P=Q$, as desired.

Suppose $d>0$.

\begin{itemize}
  \item If $(\alpha_1\parallel\cdots\parallel\alpha_n).M$ with $\alpha_{i}(1\leq i\leq n)$ free actions is a summand of $P$, then
  $P\xrightarrow{\{\alpha_1,\cdots,\alpha_n\}}M$.
  Since $Q$ is in head normal form and has a summand $(\alpha_1\parallel\cdots\parallel\alpha_n).N$ such that $M\sim_s^{fr} N$, by the induction hypothesis $\textbf{STC}\vdash M=N$,
  $\textbf{STC}\vdash (\alpha_1\parallel\cdots\parallel\alpha_n).M= (\alpha_1\parallel\cdots\parallel\alpha_n).N$;
  \item If $x(y).M$ is a summand of $P$, then for $z\notin n(P, Q)$, $P\xrightarrow{x(z)}M'\equiv M\{z/y\}$. Since $Q$ is in head normal form and has a summand
  $x(w).N$ such that for all $v$, $M'\{v/z\}\sim_s^{fr} N'\{v/z\}$ where $N'\equiv N\{z/w\}$, by the induction hypothesis $\textbf{STC}\vdash M'\{v/z\}=N'\{v/z\}$, by the axioms
  $\textbf{C}$ and $\textbf{A}$, $\textbf{STC}\vdash x(y).M=x(w).N$;
  \item If $\overline{x}(y).M$ is a summand of $P$, then for $z\notin n(P,Q)$, $P\xrightarrow{\overline{x}(z)}M'\equiv M\{z/y\}$. Since $Q$ is in head normal form and
  has a summand $\overline{x}(w).N$ such that $M'\sim_s^{fr} N'$ where $N'\equiv N\{z/w\}$, by the induction hypothesis $\textbf{STC}\vdash M'=N'$, by the axioms
  $\textbf{A}$ and $\textbf{C}$, $\textbf{STC}\vdash \overline{x}(y).M= \overline{x}(w).N$.
\end{itemize}

For the reverse transition case, it can be proven similarly, and we omit it.

  \item if $P\sim_p^{fr} Q$, then $\textbf{STC}\vdash P=Q$. It can be proven similarly to the above case.
  \item if $P\sim_{hp}^{fr} Q$, then $\textbf{STC}\vdash P=Q$. It can be proven similarly to the above case.
\end{enumerate}
\end{proof}

\newpage\section{$\pi_{tc}$ with Probabilism and Reversibility}\label{pitcpr}

In this chapter, we design $\pi_{tc}$ with probabilism and reversibility. This chapter is organized as follows. In section \ref{os5}, we introduce the truly concurrent operational semantics. Then, we introduce
the syntax and operational semantics, laws modulo strongly truly concurrent bisimulations, and algebraic theory of $\pi_{tc}$ with probabilism and reversibility in section \ref{sos5},
\ref{s5} and \ref{a5} respectively.

\subsection{Operational Semantics}\label{os5}

Firstly, in this section, we introduce concepts of FR (strongly) probabilistic truly concurrent bisimilarities, including FR probabilistic pomset bisimilarity, FR probabilistic step
bisimilarity, FR probabilistic history-preserving (hp-)bisimilarity and FR probabilistic hereditary history-preserving (hhp-)bisimilarity. In contrast to traditional FR probabilistic truly
concurrent bisimilarities in section \ref{bg}, these versions in $\pi_{ptc}$ must take care of actions with bound objects. Note that, these FR probabilistic truly concurrent bisimilarities
are defined as late bisimilarities, but not early bisimilarities, as defined in $\pi$-calculus \cite{PI1} \cite{PI2}. Note that, here, a PES $\mathcal{E}$ is deemed as a process.

\begin{definition}[Prime event structure with silent event]\label{PES}
Let $\Lambda$ be a fixed set of labels, ranged over $a,b,c,\cdots$ and $\tau$. A ($\Lambda$-labelled) prime event structure with silent event $\tau$ is a tuple
$\mathcal{E}= \langle\mathbb{E}, \leq, \sharp, \lambda\rangle$, where $\mathbb{E}$ is a denumerable set of events, including the silent event $\tau$. Let
$\hat{\mathbb{E}}=\mathbb{E}\backslash\{\tau\}$, exactly excluding $\tau$, it is obvious that $\hat{\tau^*}=\epsilon$, where $\epsilon$ is the empty event.
Let $\lambda:\mathbb{E}\rightarrow\Lambda$ be a labelling function and let $\lambda(\tau)=\tau$. And $\leq$, $\sharp$ are binary relations on $\mathbb{E}$,
called causality and conflict respectively, such that:

\begin{enumerate}
  \item $\leq$ is a partial order and $\lceil e \rceil = \{e'\in \mathbb{E}|e'\leq e\}$ is finite for all $e\in \mathbb{E}$. It is easy to see that
  $e\leq\tau^*\leq e'=e\leq\tau\leq\cdots\leq\tau\leq e'$, then $e\leq e'$.
  \item $\sharp$ is irreflexive, symmetric and hereditary with respect to $\leq$, that is, for all $e,e',e''\in \mathbb{E}$, if $e\sharp e'\leq e''$, then $e\sharp e''$.
\end{enumerate}

Then, the concepts of consistency and concurrency can be drawn from the above definition:

\begin{enumerate}
  \item $e,e'\in \mathbb{E}$ are consistent, denoted as $e\frown e'$, if $\neg(e\sharp e')$. A subset $X\subseteq \mathbb{E}$ is called consistent, if $e\frown e'$ for all
  $e,e'\in X$.
  \item $e,e'\in \mathbb{E}$ are concurrent, denoted as $e\parallel e'$, if $\neg(e\leq e')$, $\neg(e'\leq e)$, and $\neg(e\sharp e')$.
\end{enumerate}
\end{definition}

\begin{definition}[Configuration]
Let $\mathcal{E}$ be a PES. A (finite) configuration in $\mathcal{E}$ is a (finite) consistent subset of events $C\subseteq \mathcal{E}$, closed with respect to causality
(i.e. $\lceil C\rceil=C$). The set of finite configurations of $\mathcal{E}$ is denoted by $\mathcal{C}(\mathcal{E})$. We let $\hat{C}=C\backslash\{\tau\}$.
\end{definition}

A consistent subset of $X\subseteq \mathbb{E}$ of events can be seen as a pomset. Given $X, Y\subseteq \mathbb{E}$, $\hat{X}\sim \hat{Y}$ if $\hat{X}$ and $\hat{Y}$ are
isomorphic as pomsets. In the following of the paper, we say $C_1\sim C_2$, we mean $\hat{C_1}\sim\hat{C_2}$.

\begin{definition}[Pomset transitions and step]
Let $\mathcal{E}$ be a PES and let $C\in\mathcal{C}(\mathcal{E})$, and $\emptyset\neq X\subseteq \mathbb{E}$, if $C\cap X=\emptyset$ and $C'=C\cup X\in\mathcal{C}(\mathcal{E})$, then $C\xrightarrow{X} C'$ is called a pomset transition from $C$ to $C'$. When the events in $X$ are pairwise concurrent, we say that $C\xrightarrow{X}C'$ is a step.
\end{definition}

\begin{definition}[FR pomset transitions and step]
Let $\mathcal{E}$ be a PES and let $C\in\mathcal{C}(\mathcal{E})$, and $\emptyset\neq X\subseteq \mathbb{E}$, if $C\cap X=\emptyset$ and $C'=C\cup X\in\mathcal{C}(\mathcal{E})$, then
$ C\xrightarrow{X}  C'$ is called a forward pomset transition from $ C$ to $ C'$ and
$ C'\xtworightarrow{X[\mathcal{K}]}  C$ is called a reverse pomset transition from $ C'$ to $ C$. When the events in
$X$ and $X[\mathcal{K}]$ are pairwise
concurrent, we say that $ C\xrightarrow{X} C'$ is a forward step and $ C'\xrightarrow{X[\mathcal{K}]} C$ is a reverse step.
It is obvious that $\rightarrow^*\xrightarrow{X}\rightarrow^*=\xrightarrow{X}$ and
$\rightarrow^*\xrightarrow{e}\rightarrow^*=\xrightarrow{e}$ for any $e\in\mathbb{E}$ and $X\subseteq\mathbb{E}$.
\end{definition}

\begin{definition}[Probabilistic transitions]
Let $\mathcal{E}$ be a PES and let $C\in\mathcal{C}(\mathcal{E})$, the transition $ C\xrsquigarrow{\pi}  C^{\pi}$ is called a probabilistic
transition
from $ C$ to $ C^{\pi}$.
\end{definition}

A probability distribution function (PDF) $\mu$ is a map $\mu:\mathcal{C}\times\mathcal{C}\rightarrow[0,1]$ and $\mu^*$ is the cumulative probability distribution function (cPDF).

\begin{definition}[FR strongly probabilistic pomset, step bisimilarity]
Let $\mathcal{E}_1$, $\mathcal{E}_2$ be PESs. A FR strongly probabilistic pomset bisimulation is a relation $R\subseteq\mathcal{C}(\mathcal{E}_1)\times\mathcal{C}(\mathcal{E}_2)$,
such that (1) if $( C_1, C_2)\in R$, and $ C_1\xrightarrow{X_1} C_1'$ (with $\mathcal{E}_1\xrightarrow{X_1}\mathcal{E}_1'$) then $ C_2\xrightarrow{X_2} C_2'$ (with
$\mathcal{E}_2\xrightarrow{X_2}\mathcal{E}_2'$), with $X_1\subseteq \mathbb{E}_1$, $X_2\subseteq \mathbb{E}_2$, $X_1\sim X_2$ and $( C_1', C_2')\in R$:
\begin{enumerate}
  \item for each fresh action $\alpha\in X_1$, if $ C_1''\xrightarrow{\alpha} C_1'''$ (with $\mathcal{E}_1''\xrightarrow{\alpha}\mathcal{E}_1'''$),
  then for some $C_2''$ and $ C_2'''$, $ C_2''\xrightarrow{\alpha} C_2'''$ (with
  $\mathcal{E}_2''\xrightarrow{\alpha}\mathcal{E}_2'''$), such that if $( C_1'', C_2'')\in R$ then $( C_1''', C_2''')\in R$;
  \item for each $x(y)\in X_1$ with ($y\notin n(\mathcal{E}_1, \mathcal{E}_2)$), if $ C_1''\xrightarrow{x(y)} C_1'''$ (with
  $\mathcal{E}_1''\xrightarrow{x(y)}\mathcal{E}_1'''\{w/y\}$) for all $w$, then for some $C_2''$ and $C_2'''$, $ C_2''\xrightarrow{x(y)} C_2'''$
  (with $\mathcal{E}_2''\xrightarrow{x(y)}\mathcal{E}_2'''\{w/y\}$) for all $w$, such that if $( C_1'', C_2'')\in R$ then $( C_1''', C_2''')\in R$;
  \item for each two $x_1(y),x_2(y)\in X_1$ with ($y\notin n(\mathcal{E}_1, \mathcal{E}_2)$), if $ C_1''\xrightarrow{\{x_1(y),x_2(y)\}} C_1'''$
  (with $\mathcal{E}_1''\xrightarrow{\{x_1(y),x_2(y)\}}\mathcal{E}_1'''\{w/y\}$) for all $w$, then for some $C_2''$ and $C_2'''$,
  $ C_2''\xrightarrow{\{x_1(y),x_2(y)\}} C_2'''$ (with $\mathcal{E}_2''\xrightarrow{\{x_1(y),x_2(y)\}}\mathcal{E}_2'''\{w/y\}$) for all $w$, such
  that if $( C_1'', C_2'')\in R$ then $( C_1''', C_2''')\in R$;
  \item for each $\overline{x}(y)\in X_1$ with $y\notin n(\mathcal{E}_1, \mathcal{E}_2)$, if $ C_1''\xrightarrow{\overline{x}(y)} C_1'''$
  (with $\mathcal{E}_1''\xrightarrow{\overline{x}(y)}\mathcal{E}_1'''$), then for some $C_2''$ and $C_2'''$, $ C_2''\xrightarrow{\overline{x}(y)} C_2'''$
  (with $\mathcal{E}_2''\xrightarrow{\overline{x}(y)}\mathcal{E}_2'''$), such that if $( C_1'', C_2'')\in R$ then $( C_1''', C_2''')\in R$.
\end{enumerate}
 and vice-versa; (2) if $( C_1, C_2)\in R$, and $ C_1\xtworightarrow{X_1[\mathcal{K}_1]} C_1'$ (with $\mathcal{E}_1\xtworightarrow{X_1[\mathcal{K}_1]}\mathcal{E}_1'$) then $ C_2\xtworightarrow{X_2[\mathcal{K}_2]} C_2'$ (with
$\mathcal{E}_2\xtworightarrow{X_2[\mathcal{K}_2]}\mathcal{E}_2'$), with $X_1\subseteq \mathbb{E}_1$, $X_2\subseteq \mathbb{E}_2$, $X_1\sim X_2$ and $( C_1', C_2')\in R$:
\begin{enumerate}
  \item for each fresh action $\alpha\in X_1$, if $ C_1''\xtworightarrow{\alpha[m]} C_1'''$ (with $\mathcal{E}_1''\xtworightarrow{\alpha[m]}\mathcal{E}_1'''$),
  then for some $C_2''$ and $ C_2'''$, $ C_2''\xtworightarrow{\alpha[m]} C_2'''$ (with
  $\mathcal{E}_2''\xtworightarrow{\alpha[m]}\mathcal{E}_2'''$), such that if $( C_1'', C_2'')\in R$ then $( C_1''', C_2''')\in R$;
  \item for each $x(y)\in X_1$ with ($y\notin n(\mathcal{E}_1, \mathcal{E}_2)$), if $ C_1''\xtworightarrow{x(y)[m]} C_1'''$ (with
  $\mathcal{E}_1''\xtworightarrow{x(y)[m]}\mathcal{E}_1'''\{w/y\}$) for all $w$, then for some $C_2''$ and $C_2'''$, $ C_2''\xtworightarrow{x(y)[m]} C_2'''$
  (with $\mathcal{E}_2''\xtworightarrow{x(y)[m]}\mathcal{E}_2'''\{w/y\}$) for all $w$, such that if $( C_1'', C_2'')\in R$ then $( C_1''', C_2''')\in R$;
  \item for each two $x_1(y),x_2(y)\in X_1$ with ($y\notin n(\mathcal{E}_1, \mathcal{E}_2)$), if $ C_1''\xtworightarrow{\{x_1(y)[m],x_2(y)[n]\}} C_1'''$
  (with $\mathcal{E}_1''\xtworightarrow{\{x_1(y)[m],x_2(y)[n]\}}\mathcal{E}_1'''\{w/y\}$) for all $w$, then for some $C_2''$ and $C_2'''$,
  $ C_2''\xtworightarrow{\{x_1(y)[m],x_2(y)[n]\}} C_2'''$ (with $\mathcal{E}_2''\xtworightarrow{\{x_1(y)[m],x_2(y)[n]\}}\mathcal{E}_2'''\{w/y\}$) for all $w$, such
  that if $( C_1'', C_2'')\in R$ then $( C_1''', C_2''')\in R$;
  \item for each $\overline{x}(y)\in X_1$ with $y\notin n(\mathcal{E}_1, \mathcal{E}_2)$, if $ C_1''\xtworightarrow{\overline{x}(y)[m]} C_1'''$
  (with $\mathcal{E}_1''\xtworightarrow{\overline{x}(y)[m]}\mathcal{E}_1'''$), then for some $C_2''$ and $C_2'''$, $ C_2''\xtworightarrow{\overline{x}(y)[m]} C_2'''$
  (with $\mathcal{E}_2''\xtworightarrow{\overline{x}(y)[m]}\mathcal{E}_2'''$), such that if $( C_1'', C_2'')\in R$ then $( C_1''', C_2''')\in R$.
\end{enumerate}
 and vice-versa;(3) if $( C_1, C_2)\in R$, and $ C_1\xrsquigarrow{\pi} C_1^{\pi}$ then
 $ C_2\xrsquigarrow{\pi} C_2^{\pi}$ and $( C_1^{\pi}, C_2^{\pi})\in R$, and vice-versa; (4) if $( C_1, C_2)\in R$,
then $\mu(C_1,C)=\mu(C_2,C)$ for each $C\in\mathcal{C}(\mathcal{E})/R$; (5) $[\surd]_R=\{\surd\}$.

We say that $\mathcal{E}_1$, $\mathcal{E}_2$ are FR strongly probabilistic pomset bisimilar, written $\mathcal{E}_1\sim_{pp}^{fr}\mathcal{E}_2$, if there exists a FR strongly probabilistic pomset
bisimulation $R$, such that $(\emptyset,\emptyset)\in R$. By replacing FR probabilistic pomset transitions with steps, we can get the definition of FR strongly probabilistic step bisimulation.
When PESs $\mathcal{E}_1$ and $\mathcal{E}_2$ are FR strongly probabilistic step bisimilar, we write $\mathcal{E}_1\sim_{ps}^{fr}\mathcal{E}_2$.
\end{definition}

\begin{definition}[Posetal product]
Given two PESs $\mathcal{E}_1$, $\mathcal{E}_2$, the posetal product of their configurations, denoted
$\mathcal{C}(\mathcal{E}_1)\overline{\times}\mathcal{C}(\mathcal{E}_2)$, is defined as

$$\{( C_1,f, C_2)|C_1\in\mathcal{C}(\mathcal{E}_1),C_2\in\mathcal{C}(\mathcal{E}_2),f:C_1\rightarrow C_2 \textrm{ isomorphism}\}.$$

A subset $R\subseteq\mathcal{C}(\mathcal{E}_1)\overline{\times}\mathcal{C}(\mathcal{E}_2)$ is called a posetal relation. We say that $R$ is downward
closed when for any
$( C_1,f, C_2),( C_1',f', C_2')\in \mathcal{C}(\mathcal{E}_1)\overline{\times}\mathcal{C}(\mathcal{E}_2)$,
if $( C_1,f, C_2)\subseteq ( C_1',f', C_2')$ pointwise and $( C_1',f', C_2')\in R$,
then $( C_1,f, C_2)\in R$.

For $f:X_1\rightarrow X_2$, we define $f[x_1\mapsto x_2]:X_1\cup\{x_1\}\rightarrow X_2\cup\{x_2\}$, $z\in X_1\cup\{x_1\}$,(1)$f[x_1\mapsto x_2](z)=
x_2$,if $z=x_1$;(2)$f[x_1\mapsto x_2](z)=f(z)$, otherwise. Where $X_1\subseteq \mathbb{E}_1$, $X_2\subseteq \mathbb{E}_2$, $x_1\in \mathbb{E}_1$, $x_2\in \mathbb{E}_2$.
\end{definition}

\begin{definition}[FR strongly probabilistic (hereditary) history-preserving bisimilarity]
A FR strongly probabilistic history-preserving (hp-) bisimulation is a posetal relation $R\subseteq\mathcal{C}(\mathcal{E}_1)\overline{\times}\mathcal{C}(\mathcal{E}_2)$ such that
(1) if $( C_1,f, C_2)\in R$, and
\begin{enumerate}
  \item for $e_1=\alpha$ a fresh action, if $ C_1\xrightarrow{\alpha} C_1'$ (with $\mathcal{E}_1\xrightarrow{\alpha}\mathcal{E}_1'$), then for some
  $C_2'$ and $e_2=\alpha$, $ C_2\xrightarrow{\alpha} C_2'$ (with $\mathcal{E}_2\xrightarrow{\alpha}\mathcal{E}_2'$), such that
  $( C_1',f[e_1\mapsto e_2], C_2')\in R$;
  \item for $e_1=x(y)$ with ($y\notin n(\mathcal{E}_1, \mathcal{E}_2)$), if $ C_1\xrightarrow{x(y)} C_1'$ (with
  $\mathcal{E}_1\xrightarrow{x(y)}\mathcal{E}_1'\{w/y\}$) for all $w$, then for some $C_2'$ and $e_2=x(y)$, $ C_2\xrightarrow{x(y)} C_2'$ (with
  $\mathcal{E}_2\xrightarrow{x(y)}\mathcal{E}_2'\{w/y\}$) for all $w$, such that $( C_1',f[e_1\mapsto e_2], C_2')\in R$;
  \item for $e_1=\overline{x}(y)$ with $y\notin n(\mathcal{E}_1, \mathcal{E}_2)$, if $ C_1\xrightarrow{\overline{x}(y)} C_1'$ (with
  $\mathcal{E}_1\xrightarrow{\overline{x}(y)}\mathcal{E}_1'$), then for some $C_2'$ and $e_2=\overline{x}(y)$, $ C_2\xrightarrow{\overline{x}(y)} C_2'$
  (with $\mathcal{E}_2\xrightarrow{\overline{x}(y)}\mathcal{E}_2'$), such that $( C_1',f[e_1\mapsto e_2], C_2')\in R$.
\end{enumerate}
and vice-versa; (2) if $( C_1,f, C_2)\in R$, and
\begin{enumerate}
  \item for $e_1=\alpha$ a fresh action, if $ C_1\xtworightarrow{\alpha[m]} C_1'$ (with $\mathcal{E}_1\xtworightarrow{\alpha[m]}\mathcal{E}_1'$), then for some
  $C_2'$ and $e_2=\alpha$, $ C_2\xtworightarrow{\alpha[m]} C_2'$ (with $\mathcal{E}_2\xtworightarrow{\alpha[m]}\mathcal{E}_2'$), such that
  $( C_1',f[e_1\mapsto e_2], C_2')\in R$;
  \item for $e_1=x(y)$ with ($y\notin n(\mathcal{E}_1, \mathcal{E}_2)$), if $ C_1\xtworightarrow{x(y)[m]} C_1'$ (with
  $\mathcal{E}_1\xtworightarrow{x(y)[m]}\mathcal{E}_1'\{w/y\}$) for all $w$, then for some $C_2'$ and $e_2=x(y)$, $ C_2\xtworightarrow{x(y)[m]} C_2'$ (with
  $\mathcal{E}_2\xtworightarrow{x(y)[m]}\mathcal{E}_2'\{w/y\}$) for all $w$, such that $( C_1',f[e_1\mapsto e_2], C_2')\in R$;
  \item for $e_1=\overline{x}(y)$ with $y\notin n(\mathcal{E}_1, \mathcal{E}_2)$, if $ C_1\xtworightarrow{\overline{x}(y)[m]} C_1'$ (with
  $\mathcal{E}_1\xtworightarrow{\overline{x}(y)[m]}\mathcal{E}_1'$), then for some $C_2'$ and $e_2=\overline{x}(y)$, $ C_2\xtworightarrow{\overline{x}(y)[m]} C_2'$
  (with $\mathcal{E}_2\xtworightarrow{\overline{x}(y)[m]}\mathcal{E}_2'$), such that $( C_1',f[e_1\mapsto e_2], C_2')\in R$.
\end{enumerate}
and vice-versa; (3) if $( C_1,f, C_2)\in R$, and $ C_1\xrsquigarrow{\pi} C_1^{\pi}$ then
$ C_2\xrsquigarrow{\pi} C_2^{\pi}$ and $( C_1^{\pi},f, C_2^{\pi})\in R$, and vice-versa; (4) if
$( C_1,f, C_2)\in R$, then $\mu(C_1,C)=\mu(C_2,C)$ for each $C\in\mathcal{C}(\mathcal{E})/R$; (5) $[\surd]_R=\{\surd\}$. $\mathcal{E}_1,\mathcal{E}_2$
are FR strongly probabilistic history-preserving (hp-)bisimilar and are written $\mathcal{E}_1\sim_{php}^{fr}\mathcal{E}_2$ if there exists a FR strongly probabilistic hp-bisimulation
$R$ such that $(\emptyset,\emptyset,\emptyset)\in R$.

A FR strongly probabilistic hereditary history-preserving (hhp-)bisimulation is a downward closed FR strongly probabilistic hp-bisimulation. $\mathcal{E}_1,\mathcal{E}_2$ are FR
strongly probabilistic hereditary history-preserving (hhp-)bisimilar and are written $\mathcal{E}_1\sim_{phhp}^{fr}\mathcal{E}_2$.
\end{definition}

\subsection{Syntax and Operational Semantics}\label{sos5}

We assume an infinite set $\mathcal{N}$ of (action or event) names, and use $a,b,c,\cdots$ to range over $\mathcal{N}$, use $x,y,z,w,u,v$ as meta-variables over names. We denote by
$\overline{\mathcal{N}}$ the set of co-names and let $\overline{a},\overline{b},\overline{c},\cdots$ range over $\overline{\mathcal{N}}$. Then we set
$\mathcal{L}=\mathcal{N}\cup\overline{\mathcal{N}}$ as the set of labels, and use $l,\overline{l}$ to range over $\mathcal{L}$. We extend complementation to $\mathcal{L}$ such that
$\overline{\overline{a}}=a$. Let $\tau$ denote the silent step (internal action or event) and define $Act=\mathcal{L}\cup\{\tau\}$ to be the set of actions, $\alpha,\beta$ range over
$Act$. And $K,L$ are used to stand for subsets of $\mathcal{L}$ and $\overline{L}$ is used for the set of complements of labels in $L$.

Further, we introduce a set $\mathcal{X}$ of process variables, and a set $\mathcal{K}$ of process constants, and let $X,Y,\cdots$ range over $\mathcal{X}$, and $A,B,\cdots$ range over
$\mathcal{K}$. For each process constant $A$, a nonnegative arity $ar(A)$ is assigned to it. Let $\widetilde{x}=x_1,\cdots,x_{ar(A)}$ be a tuple of distinct name variables, then
$A(\widetilde{x})$ is called a process constant. $\widetilde{X}$ is a tuple of distinct process variables, and also $E,F,\cdots$ range over the recursive expressions. We write
$\mathcal{P}$ for the set of processes. Sometimes, we use $I,J$ to stand for an indexing set, and we write $E_i:i\in I$ for a family of expressions indexed by $I$. $Id_D$ is the
identity function or relation over set $D$. The symbol $\equiv_{\alpha}$ denotes equality under standard alpha-convertibility, note that the subscript $\alpha$ has no relation to the
action $\alpha$.

\subsubsection{Syntax}

We use the Prefix $.$ to model the causality relation $\leq$ in true concurrency, the Summation $+$ to model the conflict relation $\sharp$, and $\boxplus_{\pi}$ to model the probabilistic
conflict relation $\sharp_{\pi}$ in probabilistic true concurrency, and the Composition $\parallel$ to explicitly model concurrent relation in true concurrency. And we follow the
conventions of process algebra.

\begin{definition}[Syntax]\label{syntax5}
A truly concurrent process $\pi_{tc}$ with reversibility and probabilism is defined inductively by the following formation rules:

\begin{enumerate}
  \item $A(\widetilde{x})\in\mathcal{P}$;
  \item $\textbf{nil}\in\mathcal{P}$;
  \item if $P\in\mathcal{P}$, then the Prefix $\tau.P\in\mathcal{P}$, for $\tau\in Act$ is the silent action;
  \item if $P\in\mathcal{P}$, then the Output $\overline{x}y.P\in\mathcal{P}$, for $x,y\in Act$;
  \item if $P\in\mathcal{P}$, then the Output $P.\overline{x}y[m]\in\mathcal{P}$, for $x,y\in Act$;
  \item if $P\in\mathcal{P}$, then the Input $x(y).P\in\mathcal{P}$, for $x,y\in Act$;
  \item if $P\in\mathcal{P}$, then the Input $P.x(y)[m]\in\mathcal{P}$, for $x,y\in Act$;
  \item if $P\in\mathcal{P}$, then the Restriction $(x)P\in\mathcal{P}$, for $x\in Act$;
  \item if $P,Q\in\mathcal{P}$, then the Summation $P+Q\in\mathcal{P}$;
  \item if $P,Q\in\mathcal{P}$, then the Summation $P\boxplus_{\pi}Q\in\mathcal{P}$;
  \item if $P,Q\in\mathcal{P}$, then the Composition $P\parallel Q\in\mathcal{P}$;
\end{enumerate}

The standard BNF grammar of syntax of $\pi_{tc}$ with reversibility and probabilism can be summarized as follows:

$$P::=A(\widetilde{x})|\textbf{nil}|\tau.P| \overline{x}y.P | x(y).P|\overline{x}y[m].P | x(y)[m].P | (x)P  | P+P| P\boxplus_{\pi}P | P\parallel P.$$
\end{definition}

In $\overline{x}y$, $x(y)$ and $\overline{x}(y)$, $x$ is called the subject, $y$ is called the object and it may be free or bound.

\begin{definition}[Free variables]
The free names of a process $P$, $fn(P)$, are defined as follows.

\begin{enumerate}
  \item $fn(A(\widetilde{x}))\subseteq\{\widetilde{x}\}$;
  \item $fn(\textbf{nil})=\emptyset$;
  \item $fn(\tau.P)=fn(P)$;
  \item $fn(\overline{x}y.P)=fn(P)\cup\{x\}\cup\{y\}$;
  \item $fn(\overline{x}y[m].P)=fn(P)\cup\{x\}\cup\{y\}$;
  \item $fn(x(y).P)=fn(P)\cup\{x\}-\{y\}$;
  \item $fn(x(y)[m].P)=fn(P)\cup\{x\}-\{y\}$;
  \item $fn((x)P)=fn(P)-\{x\}$;
  \item $fn(P+Q)=fn(P)\cup fn(Q)$;
  \item $fn(P\boxplus_{\pi}Q)=fn(P)\cup fn(Q)$;
  \item $fn(P\parallel Q)=fn(P)\cup fn(Q)$.
\end{enumerate}
\end{definition}

\begin{definition}[Bound variables]
Let $n(P)$ be the names of a process $P$, then the bound names $bn(P)=n(P)-fn(P)$.
\end{definition}

For each process constant schema $A(\widetilde{x})$, a defining equation of the form

$$A(\widetilde{x})\overset{\text{def}}{=}P$$

is assumed, where $P$ is a process with $fn(P)\subseteq \{\widetilde{x}\}$.

\begin{definition}[Substitutions]\label{subs5}
A substitution is a function $\sigma:\mathcal{N}\rightarrow\mathcal{N}$. For $x_i\sigma=y_i$ with $1\leq i\leq n$, we write $\{y_1/x_1,\cdots,y_n/x_n\}$ or
$\{\widetilde{y}/\widetilde{x}\}$ for $\sigma$. For a process $P\in\mathcal{P}$, $P\sigma$ is defined inductively as follows:
\begin{enumerate}
  \item if $P$ is a process constant $A(\widetilde{x})=A(x_1,\cdots,x_n)$, then $P\sigma=A(x_1\sigma,\cdots,x_n\sigma)$;
  \item if $P=\textbf{nil}$, then $P\sigma=\textbf{nil}$;
  \item if $P=\tau.P'$, then $P\sigma=\tau.P'\sigma$;
  \item if $P=\overline{x}y.P'$, then $P\sigma=\overline{x\sigma}y\sigma.P'\sigma$;
  \item if $P=\overline{x}y[m].P'$, then $P\sigma=\overline{x\sigma}y\sigma[m].P'\sigma$;
  \item if $P=x(y).P'$, then $P\sigma=x\sigma(y).P'\sigma$;
  \item if $P=x(y)[m].P'$, then $P\sigma=x\sigma(y)[m].P'\sigma$;
  \item if $P=(x)P'$, then $P\sigma=(x\sigma)P'\sigma$;
  \item if $P=P_1+P_2$, then $P\sigma=P_1\sigma+P_2\sigma$;
  \item if $P=P_1\boxplus_{\pi}P_2$, then $P\sigma=P_1\sigma\boxplus_{\pi}P_2\sigma$;
  \item if $P=P_1\parallel P_2$, then $P\sigma=P_1\sigma \parallel P_2\sigma$.
\end{enumerate}
\end{definition}

\subsubsection{Operational Semantics}

The operational semantics is defined by LTSs (labelled transition systems), and it is detailed by the following definition.

\begin{definition}[Semantics]\label{semantics5}
The operational semantics of $\pi_{tc}$ with reversibility and probabilism corresponding to the syntax in Definition \ref{syntax5} is defined by a series of transition rules, named $\textbf{PACT}$, $\textbf{PSUM}$, $\textbf{PBOX-SUM}$,
$\textbf{PIDE}$, $\textbf{PPAR}$, $\textbf{PRES}$ and named $\textbf{ACT}$, $\textbf{SUM}$,
$\textbf{IDE}$, $\textbf{PAR}$, $\textbf{COM}$, $\textbf{CLOSE}$, $\textbf{RES}$, $\textbf{OPEN}$ indicate that the rules are associated respectively with Prefix, Summation, Box-Summation,
Identity, Parallel Composition, Communication, and Restriction in Definition \ref{syntax5}. They are shown in Table \ref{PTRForPITC5} and \ref{TRForPITC5}.

\begin{center}
    \begin{table}
        \[\textbf{PTAU-ACT}\quad \frac{}{\tau.P\rightsquigarrow \breve{\tau}.P}\]

        \[\textbf{POUTPUT-ACT}\quad \frac{}{\overline{x}y.P\rightsquigarrow \breve{\overline{x}y}.P}\]

        \[\textbf{PINPUT-ACT}\quad \frac{}{x(z).P\rightsquigarrow \breve{x(z)}.P}\]

        \[\textbf{PPAR}\quad \frac{P\rightsquigarrow P'\quad Q\rightsquigarrow Q'}{P\parallel Q\rightsquigarrow P'\parallel Q'}\]

        \[\textbf{PSUM}\quad \frac{P\rightsquigarrow P'\quad Q\rightsquigarrow Q'}{P+Q\rightsquigarrow P'+Q'}\]

        \[\textbf{PBOX-SUM}\quad \frac{P\rightsquigarrow P'}{P\boxplus_{\pi}Q\rightsquigarrow P'}\]

        \[\textbf{PIDE}\quad\frac{P\{\widetilde{y}/\widetilde{x}\}\rightsquigarrow P'}{A(\widetilde{y})\rightsquigarrow P'}\quad (A(\widetilde{x})\overset{\text{def}}{=}P)\]

        \[\textbf{PRES}\quad \frac{P\rightsquigarrow P'}{(y)P\rightsquigarrow (y)P'}\quad (y\notin n(\alpha))\]

        \caption{Probabilistic transition rules}
        \label{PTRForPITC5}
    \end{table}
\end{center}

\begin{center}
    \begin{table}
        \[\textbf{TAU-ACT}\quad \frac{}{\breve{\tau}.P\xrightarrow{\tau}P}\]

        \[\textbf{OUTPUT-ACT}\quad \frac{}{\breve{\overline{x}y}.P\xrightarrow{\overline{x}y}P}\]

        \[\textbf{INPUT-ACT}\quad \frac{}{\breve{x(z)}.P\xrightarrow{x(w)}P\{w/z\}}\quad (w\notin fn((z)P))\]

        \[\textbf{PAR}_1\quad \frac{P\xrightarrow{\alpha}P'\quad Q\nrightarrow}{P\parallel Q\xrightarrow{\alpha}P'\parallel Q}\quad (bn(\alpha)\cap fn(Q)=\emptyset)\]

        \[\textbf{PAR}_2\quad \frac{Q\xrightarrow{\alpha}Q'\quad P\nrightarrow}{P\parallel Q\xrightarrow{\alpha}P\parallel Q'}\quad (bn(\alpha)\cap fn(P)=\emptyset)\]

        \[\textbf{PAR}_3\quad \frac{P\xrightarrow{\alpha}P'\quad Q\xrightarrow{\beta}Q'}{P\parallel Q\xrightarrow{\{\alpha,\beta\}}P'\parallel Q'}\] $(\beta\neq\overline{\alpha}, bn(\alpha)\cap bn(\beta)=\emptyset, bn(\alpha)\cap fn(Q)=\emptyset,bn(\beta)\cap fn(P)=\emptyset)$

        \[\textbf{PAR}_4\quad \frac{P\xrightarrow{x_1(z)}P'\quad Q\xrightarrow{x_2(z)}Q'}{P\parallel Q\xrightarrow{\{x_1(w),x_2(w)\}}P'\{w/z\}\parallel Q'\{w/z\}}\quad (w\notin fn((z)P)\cup fn((z)Q))\]

        \[\textbf{COM}\quad \frac{P\xrightarrow{\overline{x}y}P'\quad Q\xrightarrow{x(z)}Q'}{P\parallel Q\xrightarrow{\tau}P'\parallel Q'\{y/z\}}\]

        \[\textbf{CLOSE}\quad \frac{P\xrightarrow{\overline{x}(w)}P'\quad Q\xrightarrow{x(w)}Q'}{P\parallel Q\xrightarrow{\tau}(w)(P'\parallel Q')}\]

%        \[\textbf{SUM}_1\quad \frac{P\xrightarrow{\alpha}P'}{P+Q\xrightarrow{\alpha}P'}\]
%
%        \[\textbf{SUM}_2\quad \frac{P\xrightarrow{\{\alpha_1,\cdots,\alpha_n\}}P'}{P+Q\xrightarrow{\{\alpha_1,\cdots,\alpha_n\}}P'}\]
%
%        \[\textbf{IDE}_1\quad\frac{P\{\widetilde{y}/\widetilde{x}\}\xrightarrow{\alpha}P'}{A(\widetilde{y})\xrightarrow{\alpha}P'}\quad (A(\widetilde{x})\overset{\text{def}}{=}P)\]
%
%        \[\textbf{IDE}_2\quad\frac{P\{\widetilde{y}/\widetilde{x}\}\xrightarrow{\{\alpha_1,\cdots,\alpha_n\}}P'} {A(\widetilde{y})\xrightarrow{\{\alpha_1,\cdots,\alpha_n\}}P'}\quad (A(\widetilde{x})\overset{\text{def}}{=}P)\]
%
%        \[\textbf{RES}_1\quad \frac{P\xrightarrow{\alpha}P'}{(y)P\xrightarrow{\alpha}(y)P'}\quad (y\notin n(\alpha))\]
%
%        \[\textbf{RES}_2\quad \frac{P\xrightarrow{\{\alpha_1,\cdots,\alpha_n\}}P'}{(y)P\xrightarrow{\{\alpha_1,\cdots,\alpha_n\}}(y)P'}\quad (y\notin n(\alpha_1)\cup\cdots\cup n(\alpha_n))\]
%
%        \[\textbf{OPEN}_1\quad \frac{P\xrightarrow{\overline{x}y}P'}{(y)P\xrightarrow{\overline{x}(w)}P'\{w/y\}} \quad (y\neq x, w\notin fn((y)P'))\]
%
%        \[\textbf{OPEN}_2\quad \frac{P\xrightarrow{\{\overline{x}_1 y,\cdots,\overline{x}_n y\}}P'}{(y)P\xrightarrow{\{\overline{x}_1(w),\cdots,\overline{x}_n(w)\}}P'\{w/y\}} \quad (y\neq x_1\neq\cdots\neq x_n, w\notin fn((y)P'))\]
%
        \caption{Forward action transition rules}
        \label{TRForPITC5}
    \end{table}
\end{center}

\begin{center}
    \begin{table}
%        \[\textbf{TAU-ACT}\quad \frac{}{\breve{\tau}.P\xrightarrow{\tau}P}\]
%
%        \[\textbf{OUTPUT-ACT}\quad \frac{}{\breve{\overline{x}y}.P\xrightarrow{\overline{x}y}P}\]
%
%        \[\textbf{INPUT-ACT}\quad \frac{}{\breve{x(z)}.P\xrightarrow{x(w)}P\{w/z\}}\quad (w\notin fn((z)P))\]
%
%        \[\textbf{PAR}_1\quad \frac{P\xrightarrow{\alpha}P'\quad Q\nrightarrow}{P\parallel Q\xrightarrow{\alpha}P'\parallel Q}\quad (bn(\alpha)\cap fn(Q)=\emptyset)\]
%
%        \[\textbf{PAR}_2\quad \frac{Q\xrightarrow{\alpha}Q'\quad P\nrightarrow}{P\parallel Q\xrightarrow{\alpha}P\parallel Q'}\quad (bn(\alpha)\cap fn(P)=\emptyset)\]
%
%        \[\textbf{PAR}_3\quad \frac{P\xrightarrow{\alpha}P'\quad Q\xrightarrow{\beta}Q'}{P\parallel Q\xrightarrow{\{\alpha,\beta\}}P'\parallel Q'}\quad (\beta\neq\overline{\alpha}, bn(\alpha)\cap bn(\beta)=\emptyset, bn(\alpha)\cap fn(Q)=\emptyset,bn(\beta)\cap fn(P)=\emptyset)\]
%
%        \[\textbf{PAR}_4\quad \frac{P\xrightarrow{x_1(z)}P'\quad Q\xrightarrow{x_2(z)}Q'}{P\parallel Q\xrightarrow{\{x_1(w),x_2(w)\}}P'\{w/z\}\parallel Q'\{w/z\}}\quad (w\notin fn((z)P)\cup fn((z)Q))\]
%
%        \[\textbf{COM}\quad \frac{P\xrightarrow{\overline{x}y}P'\quad Q\xrightarrow{x(z)}Q'}{P\parallel Q\xrightarrow{\tau}P'\parallel Q'\{y/z\}}\]
%
%        \[\textbf{CLOSE}\quad \frac{P\xrightarrow{\overline{x}(w)}P'\quad Q\xrightarrow{x(w)}Q'}{P\parallel Q\xrightarrow{\tau}(w)(P'\parallel Q')}\]
%
        \[\textbf{SUM}_1\quad \frac{P\xrightarrow{\alpha}P'}{P+Q\xrightarrow{\alpha}P'}\]

        \[\textbf{SUM}_2\quad \frac{P\xrightarrow{\{\alpha_1,\cdots,\alpha_n\}}P'}{P+Q\xrightarrow{\{\alpha_1,\cdots,\alpha_n\}}P'}\]

        \[\textbf{IDE}_1\quad\frac{P\{\widetilde{y}/\widetilde{x}\}\xrightarrow{\alpha}P'}{A(\widetilde{y})\xrightarrow{\alpha}P'}\quad (A(\widetilde{x})\overset{\text{def}}{=}P)\]

        \[\textbf{IDE}_2\quad\frac{P\{\widetilde{y}/\widetilde{x}\}\xrightarrow{\{\alpha_1,\cdots,\alpha_n\}}P'} {A(\widetilde{y})\xrightarrow{\{\alpha_1,\cdots,\alpha_n\}}P'}\quad (A(\widetilde{x})\overset{\text{def}}{=}P)\]

        \[\textbf{RES}_1\quad \frac{P\xrightarrow{\alpha}P'}{(y)P\xrightarrow{\alpha}(y)P'}\quad (y\notin n(\alpha))\]

        \[\textbf{RES}_2\quad \frac{P\xrightarrow{\{\alpha_1,\cdots,\alpha_n\}}P'}{(y)P\xrightarrow{\{\alpha_1,\cdots,\alpha_n\}}(y)P'}\quad (y\notin n(\alpha_1)\cup\cdots\cup n(\alpha_n))\]

        \[\textbf{OPEN}_1\quad \frac{P\xrightarrow{\overline{x}y}P'}{(y)P\xrightarrow{\overline{x}(w)}P'\{w/y\}} \quad (y\neq x, w\notin fn((y)P'))\]

        \[\textbf{OPEN}_2\quad \frac{P\xrightarrow{\{\overline{x}_1 y,\cdots,\overline{x}_n y\}}P'}{(y)P\xrightarrow{\{\overline{x}_1(w),\cdots,\overline{x}_n(w)\}}P'\{w/y\}} \quad (y\neq x_1\neq\cdots\neq x_n, w\notin fn((y)P'))\]

        \caption{Forward action transition rules (continuing)}
        \label{TRForPITC52}
    \end{table}
\end{center}

\begin{center}
    \begin{table}
        \[\textbf{RTAU-ACT}\quad \frac{}{\breve{\tau}.P\xtworightarrow{\tau}P}\]

        \[\textbf{ROUTPUT-ACT}\quad \frac{}{\breve{\overline{x}y}[m].P\xtworightarrow{\overline{x}y[m]}P}\]

        \[\textbf{RINPUT-ACT}\quad \frac{}{\breve{x(z)}[m].P\xtworightarrow{x(w)[m]}P\{w/z\}}\quad (w\notin fn((z)P))\]

        \[\textbf{RPAR}_1\quad \frac{P\xtworightarrow{\alpha[m]}P'\quad Q\nrightarrow}{P\parallel Q\xtworightarrow{\alpha[m]}P'\parallel Q}\quad (bn(\alpha)\cap fn(Q)=\emptyset)\]

        \[\textbf{RPAR}_2\quad \frac{Q\xtworightarrow{\alpha[m]}Q'\quad P\nrightarrow}{P\parallel Q\xtworightarrow{\alpha[m]}P\parallel Q'}\quad (bn(\alpha)\cap fn(P)=\emptyset)\]

        \[\textbf{RPAR}_3\quad \frac{P\xtworightarrow{\alpha[m]}P'\quad Q\xtworightarrow{\beta[m]}Q'}{P\parallel Q\xtworightarrow{\{\alpha[m],\beta[m]\}}P'\parallel Q'}\] $(\beta\neq\overline{\alpha}, bn(\alpha)\cap bn(\beta)=\emptyset, bn(\alpha)\cap fn(Q)=\emptyset,bn(\beta)\cap fn(P)=\emptyset)$

        \[\textbf{RPAR}_4\quad \frac{P\xtworightarrow{x_1(z)[m]}P'\quad Q\xtworightarrow{x_2(z)[m]}Q'}{P\parallel Q\xtworightarrow{\{x_1(w)[m],x_2(w)[m]\}}P'\{w/z\}\parallel Q'\{w/z\}}\quad (w\notin fn((z)P)\cup fn((z)Q))\]

        \[\textbf{RCOM}\quad \frac{P\xtworightarrow{\overline{x}y[m]}P'\quad Q\xtworightarrow{x(z)[m]}Q'}{P\parallel Q\xtworightarrow{\tau}P'\parallel Q'\{y/z\}}\]

        \[\textbf{RCLOSE}\quad \frac{P\xtworightarrow{\overline{x}(w)[m]}P'\quad Q\xtworightarrow{x(w)[m]}Q'}{P\parallel Q\xtworightarrow{\tau}(w)(P'\parallel Q')}\]

%        \[\textbf{SUM}_1\quad \frac{P\xtworightarrow{\alpha[m]}P'}{P+Q\xtworightarrow{\alpha[m]}P'}\]
%
%        \[\textbf{SUM}_2\quad \frac{P\xtworightarrow{\{\alpha_1[m],\cdots,\alpha_n[m]\}}P'}{P+Q\xtworightarrow{\{\alpha_1[m],\cdots,\alpha_n[m]\}}P'}\]
%
%        \[\textbf{IDE}_1\quad\frac{P\{\widetilde{y}/\widetilde{x}\}\xtworightarrow{\alpha[m]}P'}{A(\widetilde{y})\xtworightarrow{\alpha[m]}P'}\quad (A(\widetilde{x})\overset{\text{def}}{=}P)\]
%
%        \[\textbf{IDE}_2\quad\frac{P\{\widetilde{y}/\widetilde{x}\}\xtworightarrow{\{\alpha_1[m],\cdots,\alpha_n[m]\}}P'} {A(\widetilde{y})\xtworightarrow{\{\alpha_1[m],\cdots,\alpha_n[m]\}}P'}\quad (A(\widetilde{x})\overset{\text{def}}{=}P)\]
%
%        \[\textbf{RES}_1\quad \frac{P\xtworightarrow{\alpha[m]}P'}{(y)P\xtworightarrow{\alpha[m]}(y)P'}\quad (y\notin n(\alpha))\]
%
%        \[\textbf{RES}_2\quad \frac{P\xtworightarrow{\{\alpha_1[m],\cdots,\alpha_n[m]\}}P'}{(y)P\xtworightarrow{\{\alpha_1[m],\cdots,\alpha_n[m]\}}(y)P'}\quad (y\notin n(\alpha_1)\cup\cdots\cup n(\alpha_n))\]
%
%        \[\textbf{OPEN}_1\quad \frac{P\xtworightarrow{\overline{x}y[m]}P'}{(y)P\xtworightarrow{\overline{x}(w)[m]}P'\{w/y\}} \quad (y\neq x, w\notin fn((y)P'))\]
%
%        \[\textbf{OPEN}_2\quad \frac{P\xtworightarrow{\{\overline{x}_1 y[m],\cdots,\overline{x}_n y[m]\}}P'}{(y)P\xtworightarrow{\{\overline{x}_1(w)[m],\cdots,\overline{x}_n(w)[m]\}}P'\{w/y\}} \quad (y\neq x_1\neq\cdots\neq x_n, w\notin fn((y)P'))\]
%
        \caption{Reverse action transition rules}
        \label{TRForPITC53}
    \end{table}
\end{center}

\begin{center}
    \begin{table}
%        \[\textbf{TAU-ACT}\quad \frac{}{\breve{\tau}.P\xtworightarrow{\tau}P}\]
%
%        \[\textbf{OUTPUT-ACT}\quad \frac{}{\breve{\overline{x}y}[m].P\xtworightarrow{\overline{x}y[m]}P}\]
%
%        \[\textbf{INPUT-ACT}\quad \frac{}{\breve{x(z)}[m].P\xtworightarrow{x(w)[m]}P\{w/z\}}\quad (w\notin fn((z)P))\]
%
%        \[\textbf{PAR}_1\quad \frac{P\xtworightarrow{\alpha[m]}P'\quad Q\nrightarrow}{P\parallel Q\xtworightarrow{\alpha[m]}P'\parallel Q}\quad (bn(\alpha)\cap fn(Q)=\emptyset)\]
%
%        \[\textbf{PAR}_2\quad \frac{Q\xtworightarrow{\alpha[m]}Q'\quad P\nrightarrow}{P\parallel Q\xtworightarrow{\alpha[m]}P\parallel Q'}\quad (bn(\alpha)\cap fn(P)=\emptyset)\]
%
%        \[\textbf{PAR}_3\quad \frac{P\xtworightarrow{\alpha[m]}P'\quad Q\xtworightarrow{\beta[m]}Q'}{P\parallel Q\xtworightarrow{\{\alpha[m],\beta[m]\}}P'\parallel Q'}\quad (\beta\neq\overline{\alpha}, bn(\alpha)\cap bn(\beta)=\emptyset, bn(\alpha)\cap fn(Q)=\emptyset,bn(\beta)\cap fn(P)=\emptyset)\]
%
%        \[\textbf{PAR}_4\quad \frac{P\xtworightarrow{x_1(z)[m]}P'\quad Q\xtworightarrow{x_2(z)[m]}Q'}{P\parallel Q\xtworightarrow{\{x_1(w)[m],x_2(w)[m]\}}P'\{w/z\}\parallel Q'\{w/z\}}\quad (w\notin fn((z)P)\cup fn((z)Q))\]
%
%        \[\textbf{COM}\quad \frac{P\xtworightarrow{\overline{x}y[m]}P'\quad Q\xtworightarrow{x(z)[m]}Q'}{P\parallel Q\xtworightarrow{\tau}P'\parallel Q'\{y/z\}}\]
%
%        \[\textbf{CLOSE}\quad \frac{P\xtworightarrow{\overline{x}(w)[m]}P'\quad Q\xtworightarrow{x(w)[m]}Q'}{P\parallel Q\xtworightarrow{\tau}(w)(P'\parallel Q')}\]
%
        \[\textbf{RSUM}_1\quad \frac{P\xtworightarrow{\alpha[m]}P'}{P+Q\xtworightarrow{\alpha[m]}P'}\]

        \[\textbf{RSUM}_2\quad \frac{P\xtworightarrow{\{\alpha_1[m],\cdots,\alpha_n[m]\}}P'}{P+Q\xtworightarrow{\{\alpha_1[m],\cdots,\alpha_n[m]\}}P'}\]

        \[\textbf{RIDE}_1\quad\frac{P\{\widetilde{y}/\widetilde{x}\}\xtworightarrow{\alpha[m]}P'}{A(\widetilde{y})\xtworightarrow{\alpha[m]}P'}\quad (A(\widetilde{x})\overset{\text{def}}{=}P)\]

        \[\textbf{RIDE}_2\quad\frac{P\{\widetilde{y}/\widetilde{x}\}\xtworightarrow{\{\alpha_1[m],\cdots,\alpha_n[m]\}}P'} {A(\widetilde{y})\xtworightarrow{\{\alpha_1[m],\cdots,\alpha_n[m]\}}P'}\quad (A(\widetilde{x})\overset{\text{def}}{=}P)\]

        \[\textbf{RRES}_1\quad \frac{P\xtworightarrow{\alpha[m]}P'}{(y)P\xtworightarrow{\alpha[m]}(y)P'}\quad (y\notin n(\alpha))\]

        \[\textbf{RRES}_2\quad \frac{P\xtworightarrow{\{\alpha_1[m],\cdots,\alpha_n[m]\}}P'}{(y)P\xtworightarrow{\{\alpha_1[m],\cdots,\alpha_n[m]\}}(y)P'}\quad (y\notin n(\alpha_1)\cup\cdots\cup n(\alpha_n))\]

        \[\textbf{ROPEN}_1\quad \frac{P\xtworightarrow{\overline{x}y[m]}P'}{(y)P\xtworightarrow{\overline{x}(w)[m]}P'\{w/y\}} \quad (y\neq x, w\notin fn((y)P'))\]

        \[\textbf{ROPEN}_2\quad \frac{P\xtworightarrow{\{\overline{x}_1 y[m],\cdots,\overline{x}_n y[m]\}}P'}{(y)P\xtworightarrow{\{\overline{x}_1(w)[m],\cdots,\overline{x}_n(w)[m]\}}P'\{w/y\}} \quad (y\neq x_1\neq\cdots\neq x_n, w\notin fn((y)P'))\]

        \caption{Reverse action transition rules (continuing)}
        \label{TRForPITC54}
    \end{table}
\end{center}
\end{definition}

\subsubsection{Properties of Transitions}

\begin{proposition}
\begin{enumerate}
  \item If $P\xrightarrow{\alpha}P'$ then
  \begin{enumerate}
    \item $fn(\alpha)\subseteq fn(P)$;
    \item $fn(P')\subseteq fn(P)\cup bn(\alpha)$;
  \end{enumerate}
  \item If $P\xrightarrow{\{\alpha_1,\cdots,\alpha_n\}}P'$ then
  \begin{enumerate}
    \item $fn(\alpha_1)\cup\cdots\cup fn(\alpha_n)\subseteq fn(P)$;
    \item $fn(P')\subseteq fn(P)\cup bn(\alpha_1)\cup\cdots\cup bn(\alpha_n)$.
  \end{enumerate}
\end{enumerate}
\end{proposition}

\begin{proof}
By induction on the depth of inference.
\end{proof}

\begin{proposition}
Suppose that $P\xrightarrow{\alpha(y)}P'$, where $\alpha=x$ or $\alpha=\overline{x}$, and $x\notin n(P)$, then there exists some $P''\equiv_{\alpha}P'\{z/y\}$,
$P\xrightarrow{\alpha(z)}P''$.
\end{proposition}

\begin{proof}
By induction on the depth of inference.
\end{proof}

\begin{proposition}
If $P\xrightarrow{\alpha} P'$, $bn(\alpha)\cap fn(P'\sigma)=\emptyset$, and $\sigma\lceil bn(\alpha)=id$, then there exists some $P''\equiv_{\alpha}P'\sigma$,
$P\sigma\xrightarrow{\alpha\sigma}P''$.
\end{proposition}

\begin{proof}
By the definition of substitution (Definition \ref{subs5}) and induction on the depth of inference.
\end{proof}

\begin{proposition}
\begin{enumerate}
  \item If $P\{w/z\}\xrightarrow{\alpha}P'$, where $w\notin fn(P)$ and $bn(\alpha)\cap fn(P,w)=\emptyset$, then there exist some $Q$ and $\beta$ with $Q\{w/z\}\equiv_{\alpha}P'$ and
  $\beta\sigma=\alpha$, $P\xrightarrow{\beta}Q$;
  \item If $P\{w/z\}\xrightarrow{\{\alpha_1,\cdots,\alpha_n\}}P'$, where $w\notin fn(P)$ and $bn(\alpha_1)\cap\cdots\cap bn(\alpha_n)\cap fn(P,w)=\emptyset$, then there exist some $Q$
  and $\beta_1,\cdots,\beta_n$ with $Q\{w/z\}\equiv_{\alpha}P'$ and $\beta_1\sigma=\alpha_1,\cdots,\beta_n\sigma=\alpha_n$, $P\xrightarrow{\{\beta_1,\cdots,\beta_n\}}Q$.
\end{enumerate}

\end{proposition}

\begin{proof}
By the definition of substitution (Definition \ref{subs5}) and induction on the depth of inference.
\end{proof}

\begin{proposition}
\begin{enumerate}
  \item If $P\xtworightarrow{\alpha[m]}P'$ then
  \begin{enumerate}
    \item $fn(\alpha[m])\subseteq fn(P)$;
    \item $fn(P')\subseteq fn(P)\cup bn(\alpha[m])$;
  \end{enumerate}
  \item If $P\xtworightarrow{\{\alpha_1[m],\cdots,\alpha_n[m]\}}P'$ then
  \begin{enumerate}
    \item $fn(\alpha_1[m])\cup\cdots\cup fn(\alpha_n[m])\subseteq fn(P)$;
    \item $fn(P')\subseteq fn(P)\cup bn(\alpha_1[m])\cup\cdots\cup bn(\alpha_n[m])$.
  \end{enumerate}
\end{enumerate}
\end{proposition}

\begin{proof}
By induction on the depth of inference.
\end{proof}

\begin{proposition}
Suppose that $P\xtworightarrow{\alpha(y)[m]}P'$, where $\alpha=x$ or $\alpha=\overline{x}$, and $x\notin n(P)$, then there exists some $P''\equiv_{\alpha}P'\{z/y\}$,
$P\xtworightarrow{\alpha(z)[m]}P''$.
\end{proposition}

\begin{proof}
By induction on the depth of inference.
\end{proof}

\begin{proposition}
If $P\xtworightarrow{\alpha[m]} P'$, $bn(\alpha[m])\cap fn(P'\sigma)=\emptyset$, and $\sigma\lceil bn(\alpha[m])=id$, then there exists some $P''\equiv_{\alpha}P'\sigma$,
$P\sigma\xtworightarrow{\alpha[m]\sigma}P''$.
\end{proposition}

\begin{proof}
By the definition of substitution (Definition \ref{subs5}) and induction on the depth of inference.
\end{proof}

\begin{proposition}
\begin{enumerate}
  \item If $P\{w/z\}\xtworightarrow{\alpha[m]}P'$, where $w\notin fn(P)$ and $bn(\alpha)\cap fn(P,w)=\emptyset$, then there exist some $Q$ and $\beta$ with $Q\{w/z\}\equiv_{\alpha}P'$ and
  $\beta\sigma[m]=\alpha[m]$, $P\xtworightarrow{\beta[m]}Q$;
  \item If $P\{w/z\}\xtworightarrow{\{\alpha_1[m],\cdots,\alpha_n[m]\}}P'$, where $w\notin fn(P)$ and $bn(\alpha_1[m])\cap\cdots\cap bn(\alpha_n[m])\cap fn(P,w)=\emptyset$, then there exist some $Q$
  and $\beta_1[m],\cdots,\beta_n[m]$ with $Q\{w/z\}\equiv_{\alpha}P'$ and $\beta_1\sigma[m]=\alpha_1[m],\cdots,\beta_n\sigma[m]=\alpha_n[m]$, $P\xtworightarrow{\{\beta_1[m],\cdots,\beta_n[m]\}}Q$.
\end{enumerate}

\end{proposition}

\begin{proof}
By the definition of substitution (Definition \ref{subs5}) and induction on the depth of inference.
\end{proof}

\subsection{Strong Bisimilarities}\label{s5}

\subsubsection{Laws and Congruence}

\begin{theorem}
$\equiv_{\alpha}$ are FR strongly probabilistic truly concurrent bisimulations. That is, if $P\equiv_{\alpha}Q$, then,
\begin{enumerate}
  \item $P\sim_{pp}^{fr} Q$;
  \item $P\sim_{ps}^{fr} Q$;
  \item $P\sim_{php}^{fr} Q$;
  \item $P\sim_{phhp}^{fr} Q$.
\end{enumerate}
\end{theorem}

\begin{proof}
By induction on the depth of inference, we can get the following facts:

\begin{enumerate}
  \item If $\alpha$ is a free action and $P\rightsquigarrow\xrightarrow{\alpha}P'$, then equally for some $Q'$ with $P'\equiv_{\alpha}Q'$,
  $Q\rightsquigarrow\xrightarrow{\alpha}Q'$;
  \item If $P\rightsquigarrow\xrightarrow{a(y)}P'$ with $a=x$ or $a=\overline{x}$ and $z\notin n(Q)$, then equally for some $Q'$ with $P'\{z/y\}\equiv_{\alpha}Q'$,
  $Q\rightsquigarrow\xrightarrow{a(z)}Q'$;
  \item If $\alpha[m]$ is a free action and $P\rightsquigarrow\xtworightarrow{\alpha[m]}P'$, then equally for some $Q'$ with $P'\equiv_{\alpha}Q'$,
  $Q\rightsquigarrow\xtworightarrow{\alpha[m]}Q'$;
  \item If $P\rightsquigarrow\xtworightarrow{a(y)[m]}P'$ with $a=x$ or $a=\overline{x}$ and $z\notin n(Q)$, then equally for some $Q'$ with $P'\{z/y\}\equiv_{\alpha}Q'$,
  $Q\rightsquigarrow\xtworightarrow{a(z)[m]}Q'$.
\end{enumerate}

Then, we can get:

\begin{enumerate}
  \item by the definition of FR strongly probabilistic pomset bisimilarity, $P\sim_{pp}^{fr} Q$;
  \item by the definition of FR strongly probabilistic step bisimilarity, $P\sim_{ps}^{fr} Q$;
  \item by the definition of FR strongly probabilistic hp-bisimilarity, $P\sim_{php}^{fr} Q$;
  \item by the definition of FR strongly probabilistic hhp-bisimilarity, $P\sim_{phhp}^{fr} Q$.
\end{enumerate}
\end{proof}

\begin{proposition}[Summation laws for FR strongly probabilistic pomset bisimulation] The Summation laws for FR strongly probabilistic pomset bisimulation are as follows.

\begin{enumerate}
  \item $P+Q\sim_{pp}^{fr} Q+P$;
  \item $P+(Q+R)\sim_{pp}^{fr} (P+Q)+R$;
  \item $P+P\sim_{pp}^{fr} P$;
  \item $P+\textbf{nil}\sim_{pp}^{fr} P$.
\end{enumerate}

\end{proposition}

\begin{proof}
\begin{enumerate}
  \item $P+Q\sim_{pp}^{fr} Q+P$. It is sufficient to prove the relation $R=\{(P+Q, Q+P)\}\cup \textbf{Id}$ is a FR strongly probabilistic pomset bisimulation, we omit it;
  \item $P+(Q+R)\sim_{pp}^{fr} (P+Q)+R$. It is sufficient to prove the relation $R=\{(P+(Q+R), (P+Q)+R)\}\cup \textbf{Id}$ is a FR strongly probabilistic pomset bisimulation, we omit it;
  \item $P+P\sim_{pp}^{fr} P$. It is sufficient to prove the relation $R=\{(P+P, P)\}\cup \textbf{Id}$ is a FR strongly probabilistic pomset bisimulation, we omit it;
  \item $P+\textbf{nil}\sim_{pp}^{fr} P$. It is sufficient to prove the relation $R=\{(P+\textbf{nil}, P)\}\cup \textbf{Id}$ is a FR strongly probabilistic pomset bisimulation, we omit it.
\end{enumerate}
\end{proof}

\begin{proposition}[Summation laws for FR strongly probabilistic step bisimulation] The Summation laws for FR strongly probabilistic step bisimulation are as follows.
\begin{enumerate}
  \item $P+Q\sim_{ps}^{fr} Q+P$;
  \item $P+(Q+R)\sim_{ps}^{fr} (P+Q)+R$;
  \item $P+P\sim_{ps}^{fr} P$;
  \item $P+\textbf{nil}\sim_{ps}^{fr} P$.
\end{enumerate}
\end{proposition}

\begin{proof}
\begin{enumerate}
  \item $P+Q\sim_{ps}^{fr} Q+P$. It is sufficient to prove the relation $R=\{(P+Q, Q+P)\}\cup \textbf{Id}$ is a FR strongly probabilistic step bisimulation, we omit it;
  \item $P+(Q+R)\sim_{ps}^{fr} (P+Q)+R$. It is sufficient to prove the relation $R=\{(P+(Q+R), (P+Q)+R)\}\cup \textbf{Id}$ is a FR strongly probabilistic step bisimulation, we omit it;
  \item $P+P\sim_{ps}^{fr} P$. It is sufficient to prove the relation $R=\{(P+P, P)\}\cup \textbf{Id}$ is a FR strongly probabilistic step bisimulation, we omit it;
  \item $P+\textbf{nil}\sim_{ps}^{fr} P$. It is sufficient to prove the relation $R=\{(P+\textbf{nil}, P)\}\cup \textbf{Id}$ is a FR strongly probabilistic step bisimulation, we omit it.
\end{enumerate}
\end{proof}

\begin{proposition}[Summation laws for FR strongly probabilistic hp-bisimulation] The Summation laws for FR strongly probabilistic hp-bisimulation are as follows.
\begin{enumerate}
  \item $P+Q\sim_{php}^{fr} Q+P$;
  \item $P+(Q+R)\sim_{php}^{fr} (P+Q)+R$;
  \item $P+P\sim_{php}^{fr} P$;
  \item $P+\textbf{nil}\sim_{php}^{fr} P$.
\end{enumerate}
\end{proposition}

\begin{proof}
\begin{enumerate}
  \item $P+Q\sim_{php}^{fr} Q+P$. It is sufficient to prove the relation $R=\{(P+Q, Q+P)\}\cup \textbf{Id}$ is a FR strongly probabilistic hp-bisimulation, we omit it;
  \item $P+(Q+R)\sim_{php}^{fr} (P+Q)+R$. It is sufficient to prove the relation $R=\{(P+(Q+R), (P+Q)+R)\}\cup \textbf{Id}$ is a FR strongly probabilistic hp-bisimulation, we omit it;
  \item $P+P\sim_{php}^{fr} P$. It is sufficient to prove the relation $R=\{(P+P, P)\}\cup \textbf{Id}$ is a FR strongly probabilistic hp-bisimulation, we omit it;
  \item $P+\textbf{nil}\sim_{php}^{fr} P$. It is sufficient to prove the relation $R=\{(P+\textbf{nil}, P)\}\cup \textbf{Id}$ is a FR strongly probabilistic hp-bisimulation, we omit it.
\end{enumerate}
\end{proof}

\begin{proposition}[Summation laws for FR strongly probabilistic hhp-bisimulation] The Summation laws for FR strongly probabilistic hhp-bisimulation are as follows.
\begin{enumerate}
  \item $P+Q\sim_{phhp}^{fr} Q+P$;
  \item $P+(Q+R)\sim_{phhp}^{fr} (P+Q)+R$;
  \item $P+P\sim_{phhp}^{fr} P$;
  \item $P+\textbf{nil}\sim_{phhp}^{fr} P$.
\end{enumerate}
\end{proposition}

\begin{proof}
\begin{enumerate}
  \item $P+Q\sim_{phhp}^{fr} Q+P$. It is sufficient to prove the relation $R=\{(P+Q, Q+P)\}\cup \textbf{Id}$ is a FR strongly probabilistic hhp-bisimulation, we omit it;
  \item $P+(Q+R)\sim_{phhp}^{fr} (P+Q)+R$. It is sufficient to prove the relation $R=\{(P+(Q+R), (P+Q)+R)\}\cup \textbf{Id}$ is a FR strongly probabilistic hhp-bisimulation, we omit it;
  \item $P+P\sim_{phhp}^{fr} P$. It is sufficient to prove the relation $R=\{(P+P, P)\}\cup \textbf{Id}$ is a FR strongly probabilistic hhp-bisimulation, we omit it;
  \item $P+\textbf{nil}\sim_{phhp}^{fr} P$. It is sufficient to prove the relation $R=\{(P+\textbf{nil}, P)\}\cup \textbf{Id}$ is a FR strongly probabilistic hhp-bisimulation, we omit it.
\end{enumerate}
\end{proof}

\begin{proposition}[Box-Summation laws for FR strongly probabilistic pomset bisimulation]
The Box-Summation laws for FR strongly probabilistic pomset bisimulation are as follows.

\begin{enumerate}
  \item $P\boxplus_{\pi} Q\sim_{pp}^{fr} Q\boxplus_{1-\pi} P$;
  \item $P\boxplus_{\pi}(Q\boxplus_{\rho} R)\sim_{pp}^{fr} (P\boxplus_{\frac{\pi}{\pi+\rho-\pi\rho}}Q)\boxplus_{\pi+\rho-\pi\rho} R$;
  \item $P\boxplus_{\pi}P\sim_{pp}^{fr} P$;
  \item $P\boxplus_{\pi}\textbf{nil}\sim_{pp}^{fr} P$.
\end{enumerate}
\end{proposition}

\begin{proof}
\begin{enumerate}
  \item $P\boxplus_{\pi} Q\sim_{pp}^{fr} Q\boxplus_{1-\pi} P$. It is sufficient to prove the relation $R=\{(P\boxplus_{\pi} Q, Q\boxplus_{1-\pi} P)\}\cup \textbf{Id}$ is a FR strongly probabilistic pomset bisimulation, we omit it;
  \item $P\boxplus_{\pi}(Q\boxplus_{\rho} R)\sim_{pp}^{fr} (P\boxplus_{\frac{\pi}{\pi+\rho-\pi\rho}}Q)\boxplus_{\pi+\rho-\pi\rho} R$. It is sufficient to prove the relation $R=\{(P\boxplus_{\pi}(Q\boxplus_{\rho} R), (P\boxplus_{\frac{\pi}{\pi+\rho-\pi\rho}}Q)\boxplus_{\pi+\rho-\pi\rho} R)\}\cup \textbf{Id}$ is a FR strongly probabilistic pomset bisimulation, we omit it;
  \item $P\boxplus_{\pi}P\sim_{pp}^{fr} P$. It is sufficient to prove the relation $R=\{(P\boxplus_{\pi}P, P)\}\cup \textbf{Id}$ is a FR strongly probabilistic pomset bisimulation, we omit it;
  \item $P\boxplus_{\pi}\textbf{nil}\sim_{pp}^{fr} P$. It is sufficient to prove the relation $R=\{(P\boxplus_{\pi}\textbf{nil}, P)\}\cup \textbf{Id}$ is a FR strongly probabilistic pomset bisimulation, we omit it.
\end{enumerate}
\end{proof}

\begin{proposition}[Box-Summation laws for FR strongly probabilistic step bisimulation]
The Box-Summation laws for FR strongly probabilistic step bisimulation are as follows.

\begin{enumerate}
  \item $P\boxplus_{\pi} Q\sim_{ps}^{fr} Q\boxplus_{1-\pi} P$;
  \item $P\boxplus_{\pi}(Q\boxplus_{\rho} R)\sim_{ps}^{fr} (P\boxplus_{\frac{\pi}{\pi+\rho-\pi\rho}}Q)\boxplus_{\pi+\rho-\pi\rho} R$;
  \item $P\boxplus_{\pi}P\sim_{ps}^{fr} P$;
  \item $P\boxplus_{\pi}\textbf{nil}\sim_{ps}^{fr} P$.
\end{enumerate}
\end{proposition}

\begin{proof}
\begin{enumerate}
  \item $P\boxplus_{\pi} Q\sim_{ps}^{fr} Q\boxplus_{1-\pi} P$. It is sufficient to prove the relation $R=\{(P\boxplus_{\pi} Q, Q\boxplus_{1-\pi} P)\}\cup \textbf{Id}$ is a FR strongly probabilistic step bisimulation, we omit it;
  \item $P\boxplus_{\pi}(Q\boxplus_{\rho} R)\sim_{ps}^{fr} (P\boxplus_{\frac{\pi}{\pi+\rho-\pi\rho}}Q)\boxplus_{\pi+\rho-\pi\rho} R$. It is sufficient to prove the relation $R=\{(P\boxplus_{\pi}(Q\boxplus_{\rho} R), (P\boxplus_{\frac{\pi}{\pi+\rho-\pi\rho}}Q)\boxplus_{\pi+\rho-\pi\rho} R)\}\cup \textbf{Id}$ is a FR strongly probabilistic step bisimulation, we omit it;
  \item $P\boxplus_{\pi}P\sim_{ps}^{fr} P$. It is sufficient to prove the relation $R=\{(P\boxplus_{\pi}P, P)\}\cup \textbf{Id}$ is a FR strongly probabilistic step bisimulation, we omit it;
  \item $P\boxplus_{\pi}\textbf{nil}\sim_{ps}^{fr} P$. It is sufficient to prove the relation $R=\{(P\boxplus_{\pi}\textbf{nil}, P)\}\cup \textbf{Id}$ is a FR strongly probabilistic step bisimulation, we omit it.
\end{enumerate}
\end{proof}

\begin{proposition}[Box-Summation laws for FR strongly probabilistic hp-bisimulation]
The Box-Summation laws for FR strongly probabilistic hp-bisimulation are as follows.

\begin{enumerate}
  \item $P\boxplus_{\pi} Q\sim_{php}^{fr} Q\boxplus_{1-\pi} P$;
  \item $P\boxplus_{\pi}(Q\boxplus_{\rho} R)\sim_{php}^{fr} (P\boxplus_{\frac{\pi}{\pi+\rho-\pi\rho}}Q)\boxplus_{\pi+\rho-\pi\rho} R$;
  \item $P\boxplus_{\pi}P\sim_{php}^{fr} P$;
  \item $P\boxplus_{\pi}\textbf{nil}\sim_{php}^{fr} P$.
\end{enumerate}
\end{proposition}

\begin{proof}
\begin{enumerate}
  \item $P\boxplus_{\pi} Q\sim_{php}^{fr} Q\boxplus_{1-\pi} P$. It is sufficient to prove the relation $R=\{(P\boxplus_{\pi} Q, Q\boxplus_{1-\pi} P)\}\cup \textbf{Id}$ is a FR strongly probabilistic hp-bisimulation, we omit it;
  \item $P\boxplus_{\pi}(Q\boxplus_{\rho} R)\sim_{php}^{fr} (P\boxplus_{\frac{\pi}{\pi+\rho-\pi\rho}}Q)\boxplus_{\pi+\rho-\pi\rho} R$. It is sufficient to prove the relation $R=\{(P\boxplus_{\pi}(Q\boxplus_{\rho} R), (P\boxplus_{\frac{\pi}{\pi+\rho-\pi\rho}}Q)\boxplus_{\pi+\rho-\pi\rho} R)\}\cup \textbf{Id}$ is a FR strongly probabilistic hp-bisimulation, we omit it;
  \item $P\boxplus_{\pi}P\sim_{php}^{fr} P$. It is sufficient to prove the relation $R=\{(P\boxplus_{\pi}P, P)\}\cup \textbf{Id}$ is a FR strongly probabilistic hp-bisimulation, we omit it;
  \item $P\boxplus_{\pi}\textbf{nil}\sim_{php}^{fr} P$. It is sufficient to prove the relation $R=\{(P\boxplus_{\pi}\textbf{nil}, P)\}\cup \textbf{Id}$ is a FR strongly probabilistic hp-bisimulation, we omit it.
\end{enumerate}
\end{proof}

\begin{proposition}[Box-Summation laws for FR strongly probabilistic hhp-bisimulation]
The Box-Summation laws for FR strongly probabilistic hhp-bisimulation are as follows.

\begin{enumerate}
  \item $P\boxplus_{\pi} Q\sim_{phhp}^{fr} Q\boxplus_{1-\pi} P$;
  \item $P\boxplus_{\pi}(Q\boxplus_{\rho} R)\sim_{phhp}^{fr} (P\boxplus_{\frac{\pi}{\pi+\rho-\pi\rho}}Q)\boxplus_{\pi+\rho-\pi\rho} R$;
  \item $P\boxplus_{\pi}P\sim_{phhp}^{fr} P$;
  \item $P\boxplus_{\pi}\textbf{nil}\sim_{phhp}^{fr} P$.
\end{enumerate}
\end{proposition}

\begin{proof}
\begin{enumerate}
  \item $P\boxplus_{\pi} Q\sim_{phhp}^{fr} Q\boxplus_{1-\pi} P$. It is sufficient to prove the relation $R=\{(P\boxplus_{\pi} Q, Q\boxplus_{1-\pi} P)\}\cup \textbf{Id}$ is a FR strongly probabilistic hhp-bisimulation, we omit it;
  \item $P\boxplus_{\pi}(Q\boxplus_{\rho} R)\sim_{phhp}^{fr} (P\boxplus_{\frac{\pi}{\pi+\rho-\pi\rho}}Q)\boxplus_{\pi+\rho-\pi\rho} R$. It is sufficient to prove the relation $R=\{(P\boxplus_{\pi}(Q\boxplus_{\rho} R), (P\boxplus_{\frac{\pi}{\pi+\rho-\pi\rho}}Q)\boxplus_{\pi+\rho-\pi\rho} R)\}\cup \textbf{Id}$ is a FR strongly probabilistic hhp-bisimulation, we omit it;
  \item $P\boxplus_{\pi}P\sim_{phhp}^{fr} P$. It is sufficient to prove the relation $R=\{(P\boxplus_{\pi}P, P)\}\cup \textbf{Id}$ is a FR strongly probabilistic hhp-bisimulation, we omit it;
  \item $P\boxplus_{\pi}\textbf{nil}\sim_{phhp}^{fr} P$. It is sufficient to prove the relation $R=\{(P\boxplus_{\pi}\textbf{nil}, P)\}\cup \textbf{Id}$ is a FR strongly probabilistic hhp-bisimulation, we omit it.
\end{enumerate}
\end{proof}

\begin{theorem}[Identity law for FR strongly probabilistic truly concurrent bisimilarities]
If $A(\widetilde{x})\overset{\text{def}}{=}P$, then

\begin{enumerate}
  \item $A(\widetilde{y})\sim_{pp}^{fr} P\{\widetilde{y}/\widetilde{x}\}$;
  \item $A(\widetilde{y})\sim_{ps}^{fr} P\{\widetilde{y}/\widetilde{x}\}$;
  \item $A(\widetilde{y})\sim_{php}^{fr} P\{\widetilde{y}/\widetilde{x}\}$;
  \item $A(\widetilde{y})\sim_{phhp}^{fr} P\{\widetilde{y}/\widetilde{x}\}$.
\end{enumerate}
\end{theorem}

\begin{proof}
\begin{enumerate}
  \item $A(\widetilde{y})\sim_{pp}^{fr} P\{\widetilde{y}/\widetilde{x}\}$. It is sufficient to prove the relation $R=\{(A(\widetilde{y}), P\{\widetilde{y}/\widetilde{x}\})\}\cup \textbf{Id}$ is a FR strongly probabilistic pomset bisimulation, we omit it;
  \item $A(\widetilde{y})\sim_{ps}^{fr} P\{\widetilde{y}/\widetilde{x}\}$. It is sufficient to prove the relation $R=\{(A(\widetilde{y}), P\{\widetilde{y}/\widetilde{x}\})\}\cup \textbf{Id}$ is a FR strongly probabilistic step bisimulation, we omit it;
  \item $A(\widetilde{y})\sim_{php}^{fr} P\{\widetilde{y}/\widetilde{x}\}$. It is sufficient to prove the relation $R=\{(A(\widetilde{y}), P\{\widetilde{y}/\widetilde{x}\})\}\cup \textbf{Id}$ is a FR strongly probabilistic hp-bisimulation, we omit it;
  \item $A(\widetilde{y})\sim_{phhp}^{fr} P\{\widetilde{y}/\widetilde{x}\}$. It is sufficient to prove the relation $R=\{(A(\widetilde{y}), P\{\widetilde{y}/\widetilde{x}\})\}\cup \textbf{Id}$ is a FR strongly probabilistic hhp-bisimulation, we omit it.
\end{enumerate}
\end{proof}

\begin{theorem}[Restriction Laws for FR strongly probabilistic pomset bisimilarity]
The restriction laws for FR strongly probabilistic pomset bisimilarity are as follows.

\begin{enumerate}
  \item $(y)P\sim_{pp}^{fr} P$, if $y\notin fn(P)$;
  \item $(y)(z)P\sim_{pp}^{fr} (z)(y)P$;
  \item $(y)(P+Q)\sim_{pp}^{fr} (y)P+(y)Q$;
  \item $(y)(P\boxplus_{\pi}Q)\sim_{pp}^{fr} (y)P\boxplus_{\pi}(y)Q$;
  \item $(y)\alpha.P\sim_{pp}^{fr} \alpha.(y)P$ if $y\notin n(\alpha)$;
  \item $(y)\alpha.P\sim_{pp}^{fr} \textbf{nil}$ if $y$ is the subject of $\alpha$.
\end{enumerate}
\end{theorem}

\begin{proof}
\begin{enumerate}
  \item $(y)P\sim_{pp}^{fr} P$, if $y\notin fn(P)$. It is sufficient to prove the relation $R=\{((y)P, P)\}\cup \textbf{Id}$, if $y\notin fn(P)$, is a FR strongly probabilistic pomset bisimulation, we omit it;
  \item $(y)(z)P\sim_{pp}^{fr} (z)(y)P$. It is sufficient to prove the relation $R=\{((y)(z)P, (z)(y)P)\}\cup \textbf{Id}$ is a FR strongly probabilistic pomset bisimulation, we omit it;
  \item $(y)(P+Q)\sim_{pp}^{fr} (y)P+(y)Q$. It is sufficient to prove the relation $R=\{((y)(P+Q), (y)P+(y)Q)\}\cup \textbf{Id}$ is a FR strongly probabilistic pomset bisimulation, we omit it;
  \item $(y)(P\boxplus_{\pi}Q)\sim_{pp}^{fr} (y)P\boxplus_{\pi}(y)Q$. It is sufficient to prove the relation $R=\{((y)(P\boxplus_{\pi}Q), (y)P\boxplus_{\pi}(y)Q)\}\cup \textbf{Id}$ is a FR strongly probabilistic pomset bisimulation, we omit it;
  \item $(y)\alpha.P\sim_{pp}^{fr} \alpha.(y)P$ if $y\notin n(\alpha)$. It is sufficient to prove the relation $R=\{((y)\alpha.P, \alpha.(y)P)\}\cup \textbf{Id}$, if $y\notin n(\alpha)$, is a FR strongly probabilistic pomset bisimulation, we omit it;
  \item $(y)\alpha.P\sim_{pp}^{fr} \textbf{nil}$ if $y$ is the subject of $\alpha$. It is sufficient to prove the relation $R=\{((y)\alpha.P, \textbf{nil})\}\cup \textbf{Id}$, if $y$ is the subject of $\alpha$, is a FR strongly probabilistic pomset bisimulation, we omit it.
\end{enumerate}
\end{proof}

\begin{theorem}[Restriction Laws for FR strongly probabilistic step bisimilarity]
The restriction laws for FR strongly probabilistic step bisimilarity are as follows.

\begin{enumerate}
  \item $(y)P\sim_{ps}^{fr} P$, if $y\notin fn(P)$;
  \item $(y)(z)P\sim_{ps}^{fr} (z)(y)P$;
  \item $(y)(P+Q)\sim_{ps}^{fr} (y)P+(y)Q$;
  \item $(y)(P\boxplus_{\pi}Q)\sim_{ps}^{fr} (y)P\boxplus_{\pi}(y)Q$;
  \item $(y)\alpha.P\sim_{ps}^{fr} \alpha.(y)P$ if $y\notin n(\alpha)$;
  \item $(y)\alpha.P\sim_{ps}^{fr} \textbf{nil}$ if $y$ is the subject of $\alpha$.
\end{enumerate}
\end{theorem}

\begin{proof}
\begin{enumerate}
  \item $(y)P\sim_{ps}^{fr} P$, if $y\notin fn(P)$. It is sufficient to prove the relation $R=\{((y)P, P)\}\cup \textbf{Id}$, if $y\notin fn(P)$, is a FR strongly probabilistic step bisimulation, we omit it;
  \item $(y)(z)P\sim_{ps}^{fr} (z)(y)P$. It is sufficient to prove the relation $R=\{((y)(z)P, (z)(y)P)\}\cup \textbf{Id}$ is a FR strongly probabilistic step bisimulation, we omit it;
  \item $(y)(P+Q)\sim_{ps}^{fr} (y)P+(y)Q$. It is sufficient to prove the relation $R=\{((y)(P+Q), (y)P+(y)Q)\}\cup \textbf{Id}$ is a FR strongly probabilistic step bisimulation, we omit it;
  \item $(y)(P\boxplus_{\pi}Q)\sim_{ps}^{fr} (y)P\boxplus_{\pi}(y)Q$. It is sufficient to prove the relation $R=\{((y)(P\boxplus_{\pi}Q), (y)P\boxplus_{\pi}(y)Q)\}\cup \textbf{Id}$ is a FR strongly probabilistic step bisimulation, we omit it;
  \item $(y)\alpha.P\sim_{ps}^{fr} \alpha.(y)P$ if $y\notin n(\alpha)$. It is sufficient to prove the relation $R=\{((y)\alpha.P, \alpha.(y)P)\}\cup \textbf{Id}$, if $y\notin n(\alpha)$, is a FR strongly probabilistic step bisimulation, we omit it;
  \item $(y)\alpha.P\sim_{ps}^{fr} \textbf{nil}$ if $y$ is the subject of $\alpha$. It is sufficient to prove the relation $R=\{((y)\alpha.P, \textbf{nil})\}\cup \textbf{Id}$, if $y$ is the subject of $\alpha$, is a FR strongly probabilistic step bisimulation, we omit it.
\end{enumerate}
\end{proof}

\begin{theorem}[Restriction Laws for FR strongly probabilistic hp-bisimilarity]
The restriction laws for FR strongly probabilistic hp-bisimilarity are as follows.

\begin{enumerate}
  \item $(y)P\sim_{php}^{fr} P$, if $y\notin fn(P)$;
  \item $(y)(z)P\sim_{php}^{fr} (z)(y)P$;
  \item $(y)(P+Q)\sim_{php}^{fr} (y)P+(y)Q$;
  \item $(y)(P\boxplus_{\pi}Q)\sim_{php}^{fr} (y)P\boxplus_{\pi}(y)Q$;
  \item $(y)\alpha.P\sim_{php}^{fr} \alpha.(y)P$ if $y\notin n(\alpha)$;
  \item $(y)\alpha.P\sim_{php}^{fr} \textbf{nil}$ if $y$ is the subject of $\alpha$.
\end{enumerate}
\end{theorem}

\begin{proof}
\begin{enumerate}
  \item $(y)P\sim_{php}^{fr} P$, if $y\notin fn(P)$. It is sufficient to prove the relation $R=\{((y)P, P)\}\cup \textbf{Id}$, if $y\notin fn(P)$, is a FR strongly probabilistic hp-bisimulation, we omit it;
  \item $(y)(z)P\sim_{php}^{fr} (z)(y)P$. It is sufficient to prove the relation $R=\{((y)(z)P, (z)(y)P)\}\cup \textbf{Id}$ is a FR strongly probabilistic hp-bisimulation, we omit it;
  \item $(y)(P+Q)\sim_{php}^{fr} (y)P+(y)Q$. It is sufficient to prove the relation $R=\{((y)(P+Q), (y)P+(y)Q)\}\cup \textbf{Id}$ is a FR strongly probabilistic hp-bisimulation, we omit it;
  \item $(y)(P\boxplus_{\pi}Q)\sim_{php}^{fr} (y)P\boxplus_{\pi}(y)Q$. It is sufficient to prove the relation $R=\{((y)(P\boxplus_{\pi}Q), (y)P\boxplus_{\pi}(y)Q)\}\cup \textbf{Id}$ is a FR strongly probabilistic hp-bisimulation, we omit it;
  \item $(y)\alpha.P\sim_{php}^{fr} \alpha.(y)P$ if $y\notin n(\alpha)$. It is sufficient to prove the relation $R=\{((y)\alpha.P, \alpha.(y)P)\}\cup \textbf{Id}$, if $y\notin n(\alpha)$, is a FR strongly probabilistic hp-bisimulation, we omit it;
  \item $(y)\alpha.P\sim_{php}^{fr} \textbf{nil}$ if $y$ is the subject of $\alpha$. It is sufficient to prove the relation $R=\{((y)\alpha.P, \textbf{nil})\}\cup \textbf{Id}$, if $y$ is the subject of $\alpha$, is a FR strongly probabilistic hp-bisimulation, we omit it.
\end{enumerate}
\end{proof}

\begin{theorem}[Restriction Laws for FR strongly probabilistic hhp-bisimilarity]
The restriction laws for FR strongly probabilistic hhp-bisimilarity are as follows.

\begin{enumerate}
  \item $(y)P\sim_{phhp}^{fr} P$, if $y\notin fn(P)$;
  \item $(y)(z)P\sim_{phhp}^{fr} (z)(y)P$;
  \item $(y)(P+Q)\sim_{phhp}^{fr} (y)P+(y)Q$;
  \item $(y)(P\boxplus_{\pi}Q)\sim_{phhp}^{fr} (y)P\boxplus_{\pi}(y)Q$;
  \item $(y)\alpha.P\sim_{phhp}^{fr} \alpha.(y)P$ if $y\notin n(\alpha)$;
  \item $(y)\alpha.P\sim_{phhp}^{fr} \textbf{nil}$ if $y$ is the subject of $\alpha$.
\end{enumerate}
\end{theorem}

\begin{proof}
\begin{enumerate}
  \item $(y)P\sim_{phhp}^{fr} P$, if $y\notin fn(P)$. It is sufficient to prove the relation $R=\{((y)P, P)\}\cup \textbf{Id}$, if $y\notin fn(P)$, is a FR strongly probabilistic hhp-bisimulation, we omit it;
  \item $(y)(z)P\sim_{phhp}^{fr} (z)(y)P$. It is sufficient to prove the relation $R=\{((y)(z)P, (z)(y)P)\}\cup \textbf{Id}$ is a FR strongly probabilistic hhp-bisimulation, we omit it;
  \item $(y)(P+Q)\sim_{phhp}^{fr} (y)P+(y)Q$. It is sufficient to prove the relation $R=\{((y)(P+Q), (y)P+(y)Q)\}\cup \textbf{Id}$ is a FR strongly probabilistic hhp-bisimulation, we omit it;
  \item $(y)(P\boxplus_{\pi}Q)\sim_{phhp}^{fr} (y)P\boxplus_{\pi}(y)Q$. It is sufficient to prove the relation $R=\{((y)(P\boxplus_{\pi}Q), (y)P\boxplus_{\pi}(y)Q)\}\cup \textbf{Id}$ is a FR strongly probabilistic hhp-bisimulation, we omit it;
  \item $(y)\alpha.P\sim_{phhp}^{fr} \alpha.(y)P$ if $y\notin n(\alpha)$. It is sufficient to prove the relation $R=\{((y)\alpha.P, \alpha.(y)P)\}\cup \textbf{Id}$, if $y\notin n(\alpha)$, is a FR strongly probabilistic hhp-bisimulation, we omit it;
  \item $(y)\alpha.P\sim_{phhp}^{fr} \textbf{nil}$ if $y$ is the subject of $\alpha$. It is sufficient to prove the relation $R=\{((y)\alpha.P, \textbf{nil})\}\cup \textbf{Id}$, if $y$ is the subject of $\alpha$, is a FR strongly probabilistic hhp-bisimulation, we omit it.
\end{enumerate}
\end{proof}

\begin{theorem}[Parallel laws for FR strongly probabilistic pomset bisimilarity]
The parallel laws for FR strongly probabilistic pomset bisimilarity are as follows.

\begin{enumerate}
  \item $P\parallel \textbf{nil}\sim_{pp}^{fr} P$;
  \item $P_1\parallel P_2\sim_{pp}^{fr} P_2\parallel P_1$;
  \item $(P_1\parallel P_2)\parallel P_3\sim_{pp}^{fr} P_1\parallel (P_2\parallel P_3)$;
  \item $(y)(P_1\parallel P_2)\sim_{pp}^{fr} (y)P_1\parallel (y)P_2$, if $y\notin fn(P_1)\cap fn(P_2)$.
\end{enumerate}
\end{theorem}

\begin{proof}
\begin{enumerate}
  \item $P\parallel \textbf{nil}\sim_{pp}^{fr} P$. It is sufficient to prove the relation $R=\{(P\parallel \textbf{nil}, P)\}\cup \textbf{Id}$ is a FR strongly probabilistic pomset bisimulation, we omit it;
  \item $P_1\parallel P_2\sim_{pp}^{fr} P_2\parallel P_1$. It is sufficient to prove the relation $R=\{(P_1\parallel P_2, P_2\parallel P_1)\}\cup \textbf{Id}$ is a FR strongly probabilistic pomset bisimulation, we omit it;
  \item $(P_1\parallel P_2)\parallel P_3\sim_{pp}^{fr} P_1\parallel (P_2\parallel P_3)$. It is sufficient to prove the relation $R=\{((P_1\parallel P_2)\parallel P_3, P_1\parallel (P_2\parallel P_3))\}\cup \textbf{Id}$ is a FR strongly probabilistic pomset bisimulation, we omit it;
  \item $(y)(P_1\parallel P_2)\sim_{pp}^{fr} (y)P_1\parallel (y)P_2$, if $y\notin fn(P_1)\cap fn(P_2)$. It is sufficient to prove the relation $R=\{((y)(P_1\parallel P_2), (y)P_1\parallel (y)P_2)\}\cup \textbf{Id}$, if $y\notin fn(P_1)\cap fn(P_2)$, is a FR strongly probabilistic pomset bisimulation, we omit it.
\end{enumerate}
\end{proof}

\begin{theorem}[Parallel laws for FR strongly probabilistic step bisimilarity]
The parallel laws for FR strongly probabilistic step bisimilarity are as follows.

\begin{enumerate}
  \item $P\parallel \textbf{nil}\sim_{ps}^{fr} P$;
  \item $P_1\parallel P_2\sim_{ps}^{fr} P_2\parallel P_1$;
  \item $(P_1\parallel P_2)\parallel P_3\sim_{ps}^{fr} P_1\parallel (P_2\parallel P_3)$;
  \item $(y)(P_1\parallel P_2)\sim_{ps}^{fr} (y)P_1\parallel (y)P_2$, if $y\notin fn(P_1)\cap fn(P_2)$.
\end{enumerate}
\end{theorem}

\begin{proof}
\begin{enumerate}
  \item $P\parallel \textbf{nil}\sim_{ps}^{fr} P$. It is sufficient to prove the relation $R=\{(P\parallel \textbf{nil}, P)\}\cup \textbf{Id}$ is a FR strongly probabilistic step bisimulation, we omit it;
  \item $P_1\parallel P_2\sim_{ps}^{fr} P_2\parallel P_1$. It is sufficient to prove the relation $R=\{(P_1\parallel P_2, P_2\parallel P_1)\}\cup \textbf{Id}$ is a FR strongly probabilistic step bisimulation, we omit it;
  \item $(P_1\parallel P_2)\parallel P_3\sim_{ps}^{fr} P_1\parallel (P_2\parallel P_3)$. It is sufficient to prove the relation $R=\{((P_1\parallel P_2)\parallel P_3, P_1\parallel (P_2\parallel P_3))\}\cup \textbf{Id}$ is a FR strongly probabilistic step bisimulation, we omit it;
  \item $(y)(P_1\parallel P_2)\sim_{ps}^{fr} (y)P_1\parallel (y)P_2$, if $y\notin fn(P_1)\cap fn(P_2)$. It is sufficient to prove the relation $R=\{((y)(P_1\parallel P_2), (y)P_1\parallel (y)P_2)\}\cup \textbf{Id}$, if $y\notin fn(P_1)\cap fn(P_2)$, is a FR strongly probabilistic step bisimulation, we omit it.
\end{enumerate}
\end{proof}

\begin{theorem}[Parallel laws for FR strongly probabilistic hp-bisimilarity]
The parallel laws for FR strongly probabilistic hp-bisimilarity are as follows.

\begin{enumerate}
  \item $P\parallel \textbf{nil}\sim_{php}^{fr} P$;
  \item $P_1\parallel P_2\sim_{php}^{fr} P_2\parallel P_1$;
  \item $(P_1\parallel P_2)\parallel P_3\sim_{php}^{fr} P_1\parallel (P_2\parallel P_3)$;
  \item $(y)(P_1\parallel P_2)\sim_{php}^{fr} (y)P_1\parallel (y)P_2$, if $y\notin fn(P_1)\cap fn(P_2)$.
\end{enumerate}
\end{theorem}

\begin{proof}
\begin{enumerate}
  \item $P\parallel \textbf{nil}\sim_{php}^{fr} P$. It is sufficient to prove the relation $R=\{(P\parallel \textbf{nil}, P)\}\cup \textbf{Id}$ is a FR strongly probabilistic hp-bisimulation, we omit it;
  \item $P_1\parallel P_2\sim_{php}^{fr} P_2\parallel P_1$. It is sufficient to prove the relation $R=\{(P_1\parallel P_2, P_2\parallel P_1)\}\cup \textbf{Id}$ is a FR strongly probabilistic hp-bisimulation, we omit it;
  \item $(P_1\parallel P_2)\parallel P_3\sim_{php}^{fr} P_1\parallel (P_2\parallel P_3)$. It is sufficient to prove the relation $R=\{((P_1\parallel P_2)\parallel P_3, P_1\parallel (P_2\parallel P_3))\}\cup \textbf{Id}$ is a FR strongly probabilistic hp-bisimulation, we omit it;
  \item $(y)(P_1\parallel P_2)\sim_{php}^{fr} (y)P_1\parallel (y)P_2$, if $y\notin fn(P_1)\cap fn(P_2)$. It is sufficient to prove the relation $R=\{((y)(P_1\parallel P_2), (y)P_1\parallel (y)P_2)\}\cup \textbf{Id}$, if $y\notin fn(P_1)\cap fn(P_2)$, is a FR strongly probabilistic hp-bisimulation, we omit it.
\end{enumerate}
\end{proof}

\begin{theorem}[Parallel laws for FR strongly probabilistic hhp-bisimilarity]
The parallel laws for FR strongly probabilistic hhp-bisimilarity are as follows.

\begin{enumerate}
  \item $P\parallel \textbf{nil}\sim_{phhp}^{fr} P$;
  \item $P_1\parallel P_2\sim_{phhp}^{fr} P_2\parallel P_1$;
  \item $(P_1\parallel P_2)\parallel P_3\sim_{phhp}^{fr} P_1\parallel (P_2\parallel P_3)$;
  \item $(y)(P_1\parallel P_2)\sim_{phhp}^{fr} (y)P_1\parallel (y)P_2$, if $y\notin fn(P_1)\cap fn(P_2)$.
\end{enumerate}
\end{theorem}

\begin{proof}
\begin{enumerate}
  \item $P\parallel \textbf{nil}\sim_{phhp}^{fr} P$. It is sufficient to prove the relation $R=\{(P\parallel \textbf{nil}, P)\}\cup \textbf{Id}$ is a FR strongly probabilistic hhp-bisimulation, we omit it;
  \item $P_1\parallel P_2\sim_{phhp}^{fr} P_2\parallel P_1$. It is sufficient to prove the relation $R=\{(P_1\parallel P_2, P_2\parallel P_1)\}\cup \textbf{Id}$ is a FR strongly probabilistic hhp-bisimulation, we omit it;
  \item $(P_1\parallel P_2)\parallel P_3\sim_{phhp}^{fr} P_1\parallel (P_2\parallel P_3)$. It is sufficient to prove the relation $R=\{((P_1\parallel P_2)\parallel P_3, P_1\parallel (P_2\parallel P_3))\}\cup \textbf{Id}$ is a FR strongly probabilistic hhp-bisimulation, we omit it;
  \item $(y)(P_1\parallel P_2)\sim_{phhp}^{fr} (y)P_1\parallel (y)P_2$, if $y\notin fn(P_1)\cap fn(P_2)$. It is sufficient to prove the relation $R=\{((y)(P_1\parallel P_2), (y)P_1\parallel (y)P_2)\}\cup \textbf{Id}$, if $y\notin fn(P_1)\cap fn(P_2)$, is a FR strongly probabilistic hhp-bisimulation, we omit it.
\end{enumerate}
\end{proof}

\begin{theorem}[Expansion law for truly concurrent bisimilarities]
Let $P\equiv\boxplus_i\sum_j \alpha_{ij}.P_{ij}$ and $Q\equiv\boxplus_{k}\sum_l\beta_{kl}.Q_{kl}$, where $bn(\alpha_{ij})\cap fn(Q)=\emptyset$ for all $i,j$, and
  $bn(\beta_{kl})\cap fn(P)=\emptyset$ for all $k,l$. Then,

\begin{enumerate}
  \item $P\parallel Q\sim_{pp}^{fr} \boxplus_{i}\boxplus_{k}\sum_j\sum_l (\alpha_{ij}\parallel \beta_{kl}).(P_{ij}\parallel Q_{kl})+\boxplus_i\boxplus_k\sum_{\alpha_{ij} \textrm{ comp }\beta_{kl}}\tau.R_{ijkl}$;
  \item $P\parallel Q\sim_{ps}^{fr} \boxplus_{i}\boxplus_{k}\sum_j\sum_l (\alpha_{ij}\parallel \beta_{kl}).(P_{ij}\parallel Q_{kl})+\boxplus_i\boxplus_k\sum_{\alpha_{ij} \textrm{ comp }\beta_{kl}}\tau.R_{ijkl}$;
  \item $P\parallel Q\sim_{php}^{fr} \boxplus_{i}\boxplus_{k}\sum_j\sum_l (\alpha_{ij}\parallel \beta_{kl}).(P_{ij}\parallel Q_{kl})+\boxplus_i\boxplus_k\sum_{\alpha_{ij} \textrm{ comp }\beta_{kl}}\tau.R_{ijkl}$;
  \item $P\parallel Q\nsim_{phhp} \boxplus_{i}\boxplus_{k}\sum_j\sum_l (\alpha_{ij}\parallel \beta_{kl}).(P_{ij}\parallel Q_{kl})+\boxplus_i\boxplus_k\sum_{\alpha_{ij} \textrm{ comp }\beta_{kl}}\tau.R_{ijkl}$.
\end{enumerate}

Where $\alpha_{ij}$ comp $\beta_{kl}$ and $R_{ijkl}$ are defined as follows:
\begin{enumerate}
  \item $\alpha_{ij}$ is $\overline{x}u$ and $\beta_{kl}$ is $x(v)$, then $R_{ijkl}=P_{ij}\parallel Q_{kl}\{u/v\}$;
  \item $\alpha_{ij}$ is $\overline{x}(u)$ and $\beta_{kl}$ is $x(v)$, then $R_{ijkl}=(w)(P_{ij}\{w/u\}\parallel Q_{kl}\{w/v\})$, if $w\notin fn((u)P_{ij})\cup fn((v)Q_{kl})$;
  \item $\alpha_{ij}$ is $x(v)$ and $\beta_{kl}$ is $\overline{x}u$, then $R_{ijkl}=P_{ij}\{u/v\}\parallel Q_{kl}$;
  \item $\alpha_{ij}$ is $x(v)$ and $\beta_{kl}$ is $\overline{x}(u)$, then $R_{ijkl}=(w)(P_{ij}\{w/v\}\parallel Q_{kl}\{w/u\})$, if $w\notin fn((v)P_{ij})\cup fn((u)Q_{kl})$.
\end{enumerate}

Let $P\equiv\boxplus_i\sum_j P_{ij}.\alpha_{ij}[m]$ and $Q\equiv\boxplus_{k}\sum_l Q_{kl}.\beta_{kl}[m]$, where $bn(\alpha_{ij}[m])\cap fn(Q)=\emptyset$ for all $i,j$, and
  $bn(\beta_{kl}[m])\cap fn(P)=\emptyset$ for all $k,l$. Then,

\begin{enumerate}
  \item $P\parallel Q\sim_{pp}^{fr} \boxplus_{i}\boxplus_{k}\sum_j\sum_l(P_{ij}\parallel Q_{kl}).(\alpha_{ij}[m]\parallel \beta_{kl}[m])+\boxplus_i\boxplus_k\sum_{\alpha_{ij} \textrm{ comp }\beta_{kl}} R_{ijkl}.\tau$;
  \item $P\parallel Q\sim_{ps}^{fr} \boxplus_{i}\boxplus_{k}\sum_j\sum_l(P_{ij}\parallel Q_{kl}).(\alpha_{ij}[m]\parallel \beta_{kl}[m])+\boxplus_i\boxplus_k\sum_{\alpha_{ij} \textrm{ comp }\beta_{kl}} R_{ijkl}.\tau$;
  \item $P\parallel Q\sim_{php}^{fr} \boxplus_{i}\boxplus_{k}\sum_j\sum_l(P_{ij}\parallel Q_{kl}).(\alpha_{ij}[m]\parallel \beta_{kl}[m])+\boxplus_i\boxplus_k\sum_{\alpha_{ij} \textrm{ comp }\beta_{kl}} R_{ijkl}.\tau$;
  \item $P\parallel Q\nsim_{phhp} \boxplus_{i}\boxplus_{k}\sum_j\sum_l(P_{ij}\parallel Q_{kl}).(\alpha_{ij}[m]\parallel \beta_{kl}[m])+\boxplus_i\boxplus_k\sum_{\alpha_{ij} \textrm{ comp }\beta_{kl}} R_{ijkl}.\tau$.
\end{enumerate}

Where $\alpha_{ij}[m]$ comp $\beta_{kl}[m]$ and $R_{ijkl}$ are defined as follows:
\begin{enumerate}
  \item $\alpha_{ij}[m]$ is $\overline{x}u$ and $\beta_{kl}[m]$ is $x(v)$, then $R_{ijkl}=P_{ij}\parallel Q_{kl}\{u/v\}$;
  \item $\alpha_{ij}[m]$ is $\overline{x}(u)$ and $\beta_{kl}[m]$ is $x(v)$, then $R_{ijkl}=(w)(P_{ij}\{w/u\}\parallel Q_{kl}\{w/v\})$, if $w\notin fn((u)P_{ij})\cup fn((v)Q_{kl})$;
  \item $\alpha_{ij}[m]$ is $x(v)$ and $\beta_{kl}[m]$ is $\overline{x}u$, then $R_{ijkl}=P_{ij}\{u/v\}\parallel Q_{kl}$;
  \item $\alpha_{ij}[m]$ is $x(v)$ and $\beta_{kl}[m]$ is $\overline{x}(u)$, then $R_{ijkl}=(w)(P_{ij}\{w/v\}\parallel Q_{kl}\{w/u\})$, if $w\notin fn((v)P_{ij})\cup fn((u)Q_{kl})$.
\end{enumerate}
\end{theorem}

\begin{proof}
According to the definition of FR strongly probabilistic truly concurrent bisimulations, we can easily prove the above equations, and we omit the proof.
\end{proof}

\begin{theorem}[Equivalence and congruence for FR strongly probabilistic pomset bisimilarity]
We can enjoy the full congruence modulo FR strongly probabilistic pomset bisimilarity.

\begin{enumerate}
  \item $\sim_{pp}^{fr}$ is an equivalence relation;
  \item If $P\sim_{pp}^{fr} Q$ then
  \begin{enumerate}
    \item $\alpha.P\sim_{pp}^{f} \alpha.Q$, $\alpha$ is a free action;
    \item $P.\alpha[m]\sim_{pp}^{r}Q.\alpha[m]$, $\alpha[m]$ is a free action;
    \item $P+R\sim_{pp}^{fr} Q+R$;
    \item $P\boxplus_{\pi} R\sim_{pp}^{fr}Q\boxplus_{\pi}R$;
    \item $P\parallel R\sim_{pp}^{fr} Q\parallel R$;
    \item $(w)P\sim_{pp}^{fr} (w)Q$;
    \item $x(y).P\sim_{pp}^{f} x(y).Q$;
    \item $P.x(y)[m]\sim_{pp}^{r}Q.x(y)[m]$.
  \end{enumerate}
\end{enumerate}
\end{theorem}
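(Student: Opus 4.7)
The plan is to follow the standard pattern used earlier in the chapter and in Section \ref{s4}: for each congruence item, exhibit a candidate witness relation of the form $R = \{(\mathcal{C}[P], \mathcal{C}[Q]) : P \sim_{pp}^{fr} Q\} \cup \textbf{Id}$, where $\mathcal{C}[\cdot]$ is the static context being closed over, and verify the five clauses of an FR strongly probabilistic pomset bisimulation: (i) matching forward pomset transitions, (ii) matching reverse pomset transitions, (iii) matching probabilistic transitions $\rightsquigarrow$, (iv) PDF coincidence $\mu(C_1,C)=\mu(C_2,C)$ on every equivalence class $C\in\mathcal{C}(\mathcal{E})/R$, and (v) preservation of the terminal $\surd$-class. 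Clause (1), that $\sim_{pp}^{fr}$ is an equivalence, is immediate: reflexivity uses $\textbf{Id}$, symmetry follows from the symmetric phrasing of the definition, and transitivity from composing two witnesses.

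For the easier cases 2(a)--(d), (f), (g), (h), the verification is routine once one observes that the operational rules for prefix, sum, box-sum, and restriction fire only at the top symbol and leave residuals of the form $P'\sim_{pp}^{fr}Q'$ (or, after a $\rightsquigarrow$-step, the probabilistic-prefix pair $\breve{\alpha}.P$ vs.\ $\breve{\alpha}.Q$), after which the identity component of $R$ handles the continuation. The probabilistic-choice case (d) further requires checking PDF equality: for $P\boxplus_{\pi}R$ versus $Q\boxplus_{\pi}R$ a $\rightsquigarrow$-step fires to $P$, $Q$, or $R$ with weights determined by $\pi$, and because $P\sim_{pp}^{fr}Q$ places $P$ and $Q$ in a common class, both sides assign identical cumulative mass to every class. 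For the bound-input prefixes (g) and (h), alpha-conversion together with the substitution propositions of Section \ref{sos5} is used to match $x(w)$-transitions uniformly in $w$, and the reverse variants are handled symmetrically via the \textbf{R}-rules.

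The main obstacle is the parallel case 2(e): the candidate $R = \{(P \parallel R, Q \parallel R)\} \cup \textbf{Id}$ must be closed under the communication, close, and synchronizing parallel rules (\textbf{COM}, \textbf{CLOSE}, \textbf{PAR}$_3$, \textbf{PAR}$_4$, and their reverse counterparts), so after a step the residual may take the shape $P'\parallel R'$ vs.\ $Q'\parallel R'$ or, worse, $(w)(P'\parallel R')$ vs.\ $(w)(Q'\parallel R')$, forcing an enlargement of $R$ to include such pairs together with their $\sigma$-substituted variants arising from object passing. The second subtlety is probabilistic: by the rule \textbf{PPAR} a $\rightsquigarrow$-step of $P\parallel R$ must fire in both components simultaneously, so PDF matching reduces, by independence of the two probabilistic choices, to componentwise PDF equality from $P\sim_{pp}^{fr}Q$ together with reflexivity for $R$. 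Once $R$ is closed under all the reductions in Tables \ref{PTRForPITC5}--\ref{TRForPITC54} and the posetal structure on configurations is respected, the verification becomes mechanical; this closure step is where I expect essentially all the work of the proof to sit.
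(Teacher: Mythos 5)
Your proposal follows essentially the same route as the paper: for each congruence clause one exhibits the witness relation $R=\{(\mathcal{C}[P],\mathcal{C}[Q])\}\cup\textbf{Id}$ and verifies the defining conditions of an FR strongly probabilistic pomset bisimulation, which the paper states and then omits entirely. Your further observation that in the parallel case the witness must be enlarged to contain residual pairs $P'\parallel R'$ versus $Q'\parallel R'$ (and their restricted and substituted variants) is a correct refinement of the paper's literal singleton-plus-identity candidate, which as written is not closed under the transition rules.
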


\begin{proof}
\begin{enumerate}
  \item $\sim_{pp}^{fr}$ is an equivalence relation, it is obvious;
  \item If $P\sim_{pp}^{fr} Q$ then
  \begin{enumerate}
    \item $\alpha.P\sim_{pp}^{f} \alpha.Q$, $\alpha$ is a free action. It is sufficient to prove the relation $R=\{(\alpha.P, \alpha.Q)\}\cup \textbf{Id}$ is a F strongly probabilistic pomset bisimulation, we omit it;
    \item $P.\alpha[m]\sim_{pp}^{r}Q.\alpha[m]$, $\alpha[m]$ is a free action. It is sufficient to prove the relation $R=\{(P.\alpha[m], Q.\alpha[m])\}\cup \textbf{Id}$ is a R strongly probabilistic pomset bisimulation, we omit it;
    \item $P+R\sim_{pp}^{fr} Q+R$. It is sufficient to prove the relation $R=\{(P+R, Q+R)\}\cup \textbf{Id}$ is a FR strongly probabilistic pomset bisimulation, we omit it;
    \item $P\boxplus_{\pi} R\sim_{pp}^{fr}Q\boxplus_{\pi}R$. It is sufficient to prove the relation $R=\{(P\boxplus_{\pi} R, Q\boxplus_{\pi} R)\}\cup \textbf{Id}$ is a FR strongly probabilistic pomset bisimulation, we omit it;
    \item $P\parallel R\sim_{pp}^{fr} Q\parallel R$. It is sufficient to prove the relation $R=\{(P\parallel R, Q\parallel R)\}\cup \textbf{Id}$ is a FR strongly probabilistic pomset bisimulation, we omit it;
    \item $(w)P\sim_{pp}^{fr} (w)Q$. It is sufficient to prove the relation $R=\{((w)P, (w)Q)\}\cup \textbf{Id}$ is a FR strongly probabilistic pomset bisimulation, we omit it;
    \item $x(y).P\sim_{pp}^{f} x(y).Q$. It is sufficient to prove the relation $R=\{(x(y).P, x(y).Q)\}\cup \textbf{Id}$ is a F strongly probabilistic pomset bisimulation, we omit it;
    \item $P.x(y)[m]\sim_{pp}^{r}Q.x(y)[m]$. It is sufficient to prove the relation $R=\{(P.x(y)[m], Q.x(y)[m])\}\cup \textbf{Id}$ is a R strongly probabilistic pomset bisimulation, we omit it.
  \end{enumerate}
\end{enumerate}
\end{proof}

\begin{theorem}[Equivalence and congruence for FR strongly probabilistic step bisimilarity]
We can enjoy the full congruence modulo FR strongly probabilistic step bisimilarity.

\begin{enumerate}
  \item $\sim_{ps}^{fr}$ is an equivalence relation;
  \item If $P\sim_{ps}^{fr} Q$ then
  \begin{enumerate}
    \item $\alpha.P\sim_{ps}^{f} \alpha.Q$, $\alpha$ is a free action;
    \item $P.\alpha[m]\sim_{ps}^{r}Q.\alpha[m]$, $\alpha[m]$ is a free action;
    \item $P+R\sim_{ps}^{fr} Q+R$;
    \item $P\boxplus_{\pi} R\sim_{ps}^{fr}Q\boxplus_{\pi}R$;
    \item $P\parallel R\sim_{ps}^{fr} Q\parallel R$;
    \item $(w)P\sim_{ps}^{fr} (w)Q$;
    \item $x(y).P\sim_{ps}^{f} x(y).Q$;
    \item $P.x(y)[m]\sim_{ps}^{r}Q.x(y)[m]$.
  \end{enumerate}
\end{enumerate}
\end{theorem}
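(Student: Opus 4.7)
The plan is to follow the same template used for the analogous theorem on FR strongly probabilistic pomset bisimilarity in this section, but lifting each check from pomset transitions to step transitions. For the equivalence part, I would verify reflexivity (the identity relation is a FR strongly probabilistic step bisimulation because every forward step, every reverse step, and every probabilistic transition from $P$ is matched by itself, and $\mu(P,C) = \mu(P,C)$ trivially), symmetry (swap coordinates in any witness $R$ and check the conditions are symmetric in presentation), and transitivity (take $R_1 \circ R_2$ and trace matching transitions through an intermediate process, using that the PDF condition composes because the equivalence classes refine pointwise).

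For each of the seven congruence clauses, I would produce a witness relation of the form $R = \{(\mathcal{C}[P], \mathcal{C}[Q]) : P \sim_{ps}^{fr} Q\} \cup \mathbf{Id}$, where $\mathcal{C}[\cdot]$ is the relevant one-hole context ($\alpha.\_\,$, $\_.\alpha[m]$, $\_ + R$, $\_ \boxplus_\pi R$, $\_ \parallel R$, $(w)\_$, $x(y).\_\,$, or $\_.x(y)[m]$). For each context, the verification splits into five subchecks mirroring the five clauses of the FR strongly probabilistic step bisimulation definition: (i) forward step simulation, using the appropriate forward rule $\textbf{ACT}$, $\textbf{SUM}_{1,2}$, $\textbf{PAR}_{1\text{--}4}$, $\textbf{COM}$, $\textbf{CLOSE}$, $\textbf{RES}_{1,2}$ from Tables \ref{TRForPITC5} and \ref{TRForPITC52}; (ii) reverse step simulation using $\textbf{RACT}$, $\textbf{RSUM}$, $\textbf{RPAR}$, $\textbf{RCOM}$, $\textbf{RCLOSE}$, $\textbf{RRES}$ from Tables \ref{TRForPITC53} and \ref{TRForPITC54}; (iii) probabilistic simulation via the $\rightsquigarrow$-rules of Table \ref{PTRForPITC5}; (iv) the PDF condition $\mu(C_1,C)=\mu(C_2,C)$; and (v) the terminal-class condition $[\surd]_R = \{\surd\}$. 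The prefix, summation, restriction and input cases reuse the calculation pattern already carried out for $\sim_{pp}^{fr}$ verbatim, since step and pomset bisimulations differ only in requiring that the matched sets $X_1, X_2$ consist of pairwise concurrent events.

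The technically delicate case is $P \parallel R \sim_{ps}^{fr} Q \parallel R$. The step clause must account for transitions produced by $\textbf{PAR}_3$ and $\textbf{PAR}_4$ in which the step label is a multiset $\{\alpha,\beta\}$ whose components come respectively from $P$ and $R$, and analogously for the reverse direction. Here I would split on the origin of the contributing sub-steps: if the whole step comes from $P$ via $\textbf{PAR}_1$, the $P \sim_{ps}^{fr} Q$ hypothesis supplies a matching step from $Q$ that is then lifted by $\textbf{PAR}_1$; if it comes entirely from $R$, we use $\textbf{PAR}_2$ and the $\mathbf{Id}$ component of $R$; and if it is a genuine synchronized step through $\textbf{PAR}_3/\textbf{PAR}_4/\textbf{COM}/\textbf{CLOSE}$, I would decompose it, use the hypothesis on the $P$-part and identity on the $R$-part, then reassemble. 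The probabilistic clause requires that $P \parallel R \rightsquigarrow P' \parallel R'$ only when $P \rightsquigarrow P'$ and $R \rightsquigarrow R'$ (rule $\textbf{PPAR}$), so the hypothesis gives a matching $Q \rightsquigarrow Q'$, and the resulting pair lies in $R$.

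The hard part will be the PDF-matching clause under parallel composition and under $\boxplus_\pi$: because probabilistic transitions are multiplicative under $\parallel$ and convex under $\boxplus_\pi$, one must verify that $\mu(P \parallel R, C) = \mu(Q \parallel R, C)$ for each $R$-class $C$, which requires identifying the product measure $\mu_{P\parallel R}$ with $\mu_P \cdot \mu_R$ (and analogously $\pi \cdot \mu_P + (1-\pi) \cdot \mu_R$ for $\boxplus_\pi$) and then invoking the hypothesis $\mu(P,\cdot) = \mu(Q,\cdot)$ on the $P$-factor while the $R$-factor cancels. Once this measure-theoretic bookkeeping is established, the remaining algebraic checks follow the same pattern as the previous theorems in this chapter, so in the paper's style the individual verifications can be stated and omitted.
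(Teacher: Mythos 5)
Your proposal takes essentially the same route as the paper: for each congruence clause, exhibit a witness relation built from the context applied to the related pair together with the identity, and check it against the clauses of the definition of FR strongly probabilistic step bisimulation (the paper states these witness relations and omits all verifications). Your version is in fact slightly more careful, since closing the witness relation over \emph{all} pairs $P\sim_{ps}^{fr}Q$ (rather than a single pair union $\textbf{Id}$) is what actually makes the derivative pairs land back in the relation after a transition.
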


\begin{proof}
\begin{enumerate}
  \item $\sim_{ps}^{fr}$ is an equivalence relation, it is obvious;
  \item If $P\sim_{ps}^{fr} Q$ then
  \begin{enumerate}
    \item $\alpha.P\sim_{ps}^{f} \alpha.Q$, $\alpha$ is a free action. It is sufficient to prove the relation $R=\{(\alpha.P, \alpha.Q)\}\cup \textbf{Id}$ is a F strongly probabilistic step bisimulation, we omit it;
    \item $P.\alpha[m]\sim_{ps}^{r}Q.\alpha[m]$, $\alpha[m]$ is a free action. It is sufficient to prove the relation $R=\{(P.\alpha[m], Q.\alpha[m])\}\cup \textbf{Id}$ is a R strongly probabilistic step bisimulation, we omit it;
    \item $P+R\sim_{ps}^{fr} Q+R$. It is sufficient to prove the relation $R=\{(P+R, Q+R)\}\cup \textbf{Id}$ is a FR strongly probabilistic step bisimulation, we omit it;
    \item $P\boxplus_{\pi} R\sim_{ps}^{fr}Q\boxplus_{\pi}R$. It is sufficient to prove the relation $R=\{(P\boxplus_{\pi} R, Q\boxplus_{\pi} R)\}\cup \textbf{Id}$ is a FR strongly probabilistic step bisimulation, we omit it;
    \item $P\parallel R\sim_{ps}^{fr} Q\parallel R$. It is sufficient to prove the relation $R=\{(P\parallel R, Q\parallel R)\}\cup \textbf{Id}$ is a FR strongly probabilistic step bisimulation, we omit it;
    \item $(w)P\sim_{ps}^{fr} (w)Q$. It is sufficient to prove the relation $R=\{((w)P, (w)Q)\}\cup \textbf{Id}$ is a FR strongly probabilistic step bisimulation, we omit it;
    \item $x(y).P\sim_{ps}^{f} x(y).Q$. It is sufficient to prove the relation $R=\{(x(y).P, x(y).Q)\}\cup \textbf{Id}$ is a F strongly probabilistic step bisimulation, we omit it;
    \item $P.x(y)[m]\sim_{ps}^{r}Q.x(y)[m]$. It is sufficient to prove the relation $R=\{(P.x(y)[m], Q.x(y)[m])\}\cup \textbf{Id}$ is a R strongly probabilistic step bisimulation, we omit it.
  \end{enumerate}
\end{enumerate}
\end{proof}

\begin{theorem}[Equivalence and congruence for FR strongly probabilistic hp-bisimilarity]
We can enjoy the full congruence modulo FR strongly probabilistic hp-bisimilarity.

\begin{enumerate}
  \item $\sim_{php}^{fr}$ is an equivalence relation;
  \item If $P\sim_{php}^{fr} Q$ then
  \begin{enumerate}
    \item $\alpha.P\sim_{php}^{f} \alpha.Q$, $\alpha$ is a free action;
    \item $P.\alpha[m]\sim_{php}^{r}Q.\alpha[m]$, $\alpha[m]$ is a free action;
    \item $P+R\sim_{php}^{fr} Q+R$;
    \item $P\boxplus_{\pi} R\sim_{php}^{fr}Q\boxplus_{\pi}R$;
    \item $P\parallel R\sim_{php}^{fr} Q\parallel R$;
    \item $(w)P\sim_{php}^{fr} (w)Q$;
    \item $x(y).P\sim_{php}^{f} x(y).Q$;
    \item $P.x(y)[m]\sim_{php}^{r}Q.x(y)[m]$.
  \end{enumerate}
\end{enumerate}
\end{theorem}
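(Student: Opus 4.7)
The plan is to establish each clause by exhibiting the standard candidate relation and verifying it satisfies all five conditions defining an FR strongly probabilistic hp-bisimulation: matching forward action transitions (with the posetal isomorphism extended via $f[e_1\mapsto e_2]$), matching reverse action transitions, matching probabilistic transitions $\rightsquigarrow$, pointwise agreement of the cumulative probability distribution on equivalence classes, and preservation of the termination class $[\surd]_R=\{\surd\}$. In each case I would follow the template already used in the non-probabilistic and non-reversible analogues proved earlier in Section \ref{s5}.

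For clause (1) I would argue that $\sim_{php}^{fr}$ is an equivalence directly: reflexivity is witnessed by the identity posetal relation on $\mathcal{C}(\mathcal{E})$ with $f=\mathit{id}$; symmetry is obtained by inverting each triple $(C_1,f,C_2)$ to $(C_2,f^{-1},C_1)$ and noting the five bisimulation clauses are symmetric in their "vice-versa" parts; transitivity follows by composing two witness relations $R_{12},R_{23}$ and composing the isomorphisms $f_{23}\circ f_{12}$, checking that forward, reverse, probabilistic transitions and the distribution condition all chain through.

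For each congruence clause (2a)--(2h) I would take the witness $R=\{(C[P],C[Q])\}\cup\textbf{Id}$ where $C[\cdot]$ is the relevant context, and use the transition rules of Tables \ref{PTRForPITC5}--\ref{TRForPITC54} to factor any derived transition of $C[P]$ through an inner transition of $P$, then invoke $P\sim_{php}^{fr}Q$ to produce a matching inner transition from $Q$ and repackage via the same rule. The prefix cases (2a),(2g) use the forward PACT/ACT rules so that after one probabilistic step $\rightsquigarrow$ and one action step the residuals are precisely $P$ and $Q$; the reverse prefix cases (2b),(2h) are symmetric using the R-versions; the sum case (2c) uses SUM/RSUM so a transition of $P+R$ either comes from $P$ (use hypothesis) or from $R$ (use \textbf{Id}); the box-sum case (2d) requires in addition the probabilistic rule PBOX-SUM together with the distribution clause, where $\mu(P\boxplus_\pi R,\cdot)=\pi\mu(P,\cdot)+(1-\pi)\mu(R,\cdot)$ and the hypothesis gives $\mu(P,C)=\mu(Q,C)$ for each $C\in\mathcal{C}(\mathcal{E})/R$; the restriction case (2f) uses RES/OPEN and their reverse/probabilistic variants, with the side conditions on $y\notin n(\alpha)$ handled by the standard alpha-conversion argument.

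The main obstacle is the parallel composition clause (2e), $P\parallel R\sim_{php}^{fr}Q\parallel R$. Here transitions of $P\parallel R$ arise from many rules (PAR$_1$--PAR$_4$, COM, CLOSE, plus their R-versions and the probabilistic PPAR), and each must be matched. When $P$ contributes alone (PAR$_1$) the hypothesis yields the matching $Q$-move; when $R$ contributes alone (PAR$_2$) matching is by \textbf{Id}; synchronisation cases (PAR$_3$, PAR$_4$, COM, CLOSE) require that the bound-name freshness conditions and the substitutions $\{w/z\}$ line up on both sides, which one handles by choosing the witness name $w$ outside $fn(P,Q,R)$ and applying the substitution propositions from Section \ref{sos5}. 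The most delicate piece is the distribution condition: PPAR combines the component distributions multiplicatively, so one must check that the pushforward of $\mu(P,\cdot)\cdot\mu(R,\cdot)$ onto the quotient by $R$ coincides with that of $\mu(Q,\cdot)\cdot\mu(R,\cdot)$; this reduces to the hypothesis $\mu(P,C)=\mu(Q,C)$ on equivalence classes together with the fact that $R$ is shared. The posetal-isomorphism bookkeeping $f[e_1\mapsto e_2]$ must also be threaded consistently through synchronising events, but this is routine once the matching transitions are produced.
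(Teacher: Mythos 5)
Your proposal follows essentially the same route as the paper: for clause (1) the paper simply declares the equivalence-relation property obvious, and for each congruence clause it exhibits the witness relation $R=\{(C[P],C[Q])\}\cup \textbf{Id}$ and asserts (omitting the verification) that it is the appropriate F/R/FR strongly probabilistic hp-bisimulation, which is exactly your template. Your write-up is in fact more detailed than the paper's, since you additionally sketch how the probabilistic-distribution condition and the parallel-composition and box-summation cases would be checked.
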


\begin{proof}
\begin{enumerate}
  \item $\sim_{php}^{fr}$ is an equivalence relation, it is obvious;
  \item If $P\sim_{php}^{fr} Q$ then
  \begin{enumerate}
    \item $\alpha.P\sim_{php}^{f} \alpha.Q$, $\alpha$ is a free action. It is sufficient to prove the relation $R=\{(\alpha.P, \alpha.Q)\}\cup \textbf{Id}$ is a F strongly probabilistic hp-bisimulation, we omit it;
    \item $P.\alpha[m]\sim_{php}^{r}Q.\alpha[m]$, $\alpha[m]$ is a free action. It is sufficient to prove the relation $R=\{(P.\alpha[m], Q.\alpha[m])\}\cup \textbf{Id}$ is a R strongly probabilistic hp-bisimulation, we omit it;
    \item $P+R\sim_{php}^{fr} Q+R$. It is sufficient to prove the relation $R=\{(P+R, Q+R)\}\cup \textbf{Id}$ is a FR strongly probabilistic hp-bisimulation, we omit it;
    \item $P\boxplus_{\pi} R\sim_{php}^{fr}Q\boxplus_{\pi}R$. It is sufficient to prove the relation $R=\{(P\boxplus_{\pi} R, Q\boxplus_{\pi} R)\}\cup \textbf{Id}$ is a FR strongly probabilistic hp-bisimulation, we omit it;
    \item $P\parallel R\sim_{php}^{fr} Q\parallel R$. It is sufficient to prove the relation $R=\{(P\parallel R, Q\parallel R)\}\cup \textbf{Id}$ is a FR strongly probabilistic hp-bisimulation, we omit it;
    \item $(w)P\sim_{php}^{fr} (w)Q$. It is sufficient to prove the relation $R=\{((w)P, (w)Q)\}\cup \textbf{Id}$ is a FR strongly probabilistic hp-bisimulation, we omit it;
    \item $x(y).P\sim_{php}^{f} x(y).Q$. It is sufficient to prove the relation $R=\{(x(y).P, x(y).Q)\}\cup \textbf{Id}$ is a F strongly probabilistic hp-bisimulation, we omit it;
    \item $P.x(y)[m]\sim_{php}^{r}Q.x(y)[m]$. It is sufficient to prove the relation $R=\{(P.x(y)[m], Q.x(y)[m])\}\cup \textbf{Id}$ is a R strongly probabilistic hp-bisimulation, we omit it.
  \end{enumerate}
\end{enumerate}
\end{proof}

\begin{theorem}[Equivalence and congruence for FR strongly probabilistic hhp-bisimilarity]
We can enjoy the full congruence modulo FR strongly probabilistic hhp-bisimilarity.

\begin{enumerate}
  \item $\sim_{phhp}^{fr}$ is an equivalence relation;
  \item If $P\sim_{phhp}^{fr} Q$ then
  \begin{enumerate}
    \item $\alpha.P\sim_{phhp}^{f} \alpha.Q$, $\alpha$ is a free action;
    \item $P.\alpha[m]\sim_{phhp}^{r}Q.\alpha[m]$, $\alpha[m]$ is a free action;
    \item $P+R\sim_{phhp}^{fr} Q+R$;
    \item $P\boxplus_{\pi} R\sim_{phhp}^{fr}Q\boxplus_{\pi}R$;
    \item $P\parallel R\sim_{phhp}^{fr} Q\parallel R$;
    \item $(w)P\sim_{phhp}^{fr} (w)Q$;
    \item $x(y).P\sim_{phhp}^{f} x(y).Q$;
    \item $P.x(y)[m]\sim_{phhp}^{r}Q.x(y)[m]$.
  \end{enumerate}
\end{enumerate}
\end{theorem}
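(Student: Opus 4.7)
The plan is to follow the same template used for the three preceding equivalence/congruence theorems (for $\sim_{pp}^{fr}$, $\sim_{ps}^{fr}$, $\sim_{php}^{fr}$), but additionally track the downward-closure condition that distinguishes hhp-bisimulation from hp-bisimulation. First I would dispatch the equivalence claim: reflexivity is witnessed by the identity posetal relation on configurations (which is trivially downward closed), symmetry follows from the symmetric formulation of the FR strongly probabilistic hp-conditions together with the fact that the inverse of a downward-closed relation is again downward closed, and transitivity comes from relational composition, where one checks that composing two downward-closed posetal relations yields a downward-closed posetal relation (the forward/reverse action clauses and the probabilistic transition clause $\rightsquigarrow$ compose pointwise, and the PDF condition $\mu(C_1,C)=\mu(C_2,C)$ transfers through the common equivalence class).

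For each of the seven congruence cases I would exhibit a candidate relation $R$ obtained by closing $\{(\mathcal{C}(f(P)),f,\mathcal{C}(f(Q)))\}\cup\textbf{Id}$ downward under the posetal order, where $f$ is the context under consideration, and verify that $R$ satisfies clauses (1)--(5) of the FR strongly probabilistic hp-bisimulation definition. The prefix cases $\alpha.P\sim_{phhp}^{f}\alpha.Q$ and $P.\alpha[m]\sim_{phhp}^{r}Q.\alpha[m]$ are handled by noting that the only transitions leaving the outer context are determined by the prefix, so the resulting successor configurations feed into the assumed hhp-bisimulation $R_0$ witnessing $P\sim_{phhp}^{fr}Q$; downward closure is inherited from $R_0$ in a direct way. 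The input prefix cases $x(y).P\sim_{phhp}^{f}x(y).Q$ and $P.x(y)[m]\sim_{phhp}^{r}Q.x(y)[m]$ additionally require the substitution lemma from Section~\ref{sos5} together with the universal quantification over $w$ in the input clauses. The sum $P+R\sim_{phhp}^{fr}Q+R$ and box-sum $P\boxplus_{\pi}R\sim_{phhp}^{fr}Q\boxplus_{\pi}R$ cases are handled by the corresponding summation/box-summation laws already proved, combined with compatibility of $R$ with itself on the $R$-component; for box-sum one additionally verifies the PDF condition by splitting mass according to $\pi$ and $1-\pi$. The restriction case $(w)P\sim_{phhp}^{fr}(w)Q$ uses the $\textbf{RES}/\textbf{ROPEN}$ rules and the freshness condition to push transitions through.

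The main obstacle is the parallel composition case $P\parallel R\sim_{phhp}^{fr}Q\parallel R$. It is well known (and reflected in this paper by the negative clauses $\nsim_{phhp}$ in the expansion law) that hhp-bisimulation interacts badly with parallel composition in the true-concurrency setting, because the downward-closure requirement forces one to match sub-configurations that arise from interleavings inside the parallel context but were not directly produced by the witness $R_0$ for $P\sim_{phhp}^{fr}Q$. My plan would be to build the candidate relation as the downward closure of $\{(\langle C_1\cup D,f\cup\mathrm{id}_D,C_2\cup D\rangle):(\langle C_1,f,C_2\rangle)\in R_0, D\in\mathcal{C}(R)\}$, with the obvious event-disjointness and isomorphism gluing, and then verify the forward-action, reverse-action, and probabilistic-transition clauses by case analysis on which component fired the transition, invoking the expansion law and the parallel laws (Theorem on parallel laws for $\sim_{phhp}^{fr}$) to reduce to known cases. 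The verification of downward closure in this glued relation is the delicate point, since dropping events in an interleaved configuration may separate causally-linked actions across the two components and require matching sub-posets that the underlying $R_0$ is not guaranteed to provide; following the pattern set by the earlier analogous theorems, the paper will simply claim this by exhibiting the relation and omitting the routine (but subtle) verification.
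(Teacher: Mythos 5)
Your proposal takes essentially the same approach as the paper: for each clause, exhibit a candidate posetal relation of the form $R=\{(\cdot,\cdot)\}\cup\textbf{Id}$ (the paper literally writes $\{(P\parallel R, Q\parallel R)\}\cup\textbf{Id}$ etc.) and claim it satisfies the FR strongly probabilistic hhp-bisimulation conditions, with the actual verification omitted. Your additional discussion of downward closure and of the glued relation for the parallel case is a more careful elaboration of the verification the paper declares routine and omits, not a different route.
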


\begin{proof}
\begin{enumerate}
  \item $\sim_{phhp}^{fr}$ is an equivalence relation, it is obvious;
  \item If $P\sim_{phhp}^{fr} Q$ then
  \begin{enumerate}
    \item $\alpha.P\sim_{phhp}^{f} \alpha.Q$, $\alpha$ is a free action. It is sufficient to prove the relation $R=\{(\alpha.P, \alpha.Q)\}\cup \textbf{Id}$ is a F strongly probabilistic hhp-bisimulation, we omit it;
    \item $P.\alpha[m]\sim_{phhp}^{r}Q.\alpha[m]$, $\alpha[m]$ is a free action. It is sufficient to prove the relation $R=\{(P.\alpha[m], Q.\alpha[m])\}\cup \textbf{Id}$ is a R strongly probabilistic hhp-bisimulation, we omit it;
    \item $P+R\sim_{phhp}^{fr} Q+R$. It is sufficient to prove the relation $R=\{(P+R, Q+R)\}\cup \textbf{Id}$ is a FR strongly probabilistic hhp-bisimulation, we omit it;
    \item $P\boxplus_{\pi} R\sim_{phhp}^{fr}Q\boxplus_{\pi}R$. It is sufficient to prove the relation $R=\{(P\boxplus_{\pi} R, Q\boxplus_{\pi} R)\}\cup \textbf{Id}$ is a FR strongly probabilistic hhp-bisimulation, we omit it;
    \item $P\parallel R\sim_{phhp}^{fr} Q\parallel R$. It is sufficient to prove the relation $R=\{(P\parallel R, Q\parallel R)\}\cup \textbf{Id}$ is a FR strongly probabilistic hhp-bisimulation, we omit it;
    \item $(w)P\sim_{phhp}^{fr} (w)Q$. It is sufficient to prove the relation $R=\{((w)P, (w)Q)\}\cup \textbf{Id}$ is a FR strongly probabilistic hhp-bisimulation, we omit it;
    \item $x(y).P\sim_{phhp}^{f} x(y).Q$. It is sufficient to prove the relation $R=\{(x(y).P, x(y).Q)\}\cup \textbf{Id}$ is a F strongly probabilistic hhp-bisimulation, we omit it;
    \item $P.x(y)[m]\sim_{phhp}^{r}Q.x(y)[m]$. It is sufficient to prove the relation $R=\{(P.x(y)[m], Q.x(y)[m])\}\cup \textbf{Id}$ is a R strongly probabilistic hhp-bisimulation, we omit it.
  \end{enumerate}
\end{enumerate}
\end{proof}

\subsubsection{Recursion}

\begin{definition}
Let $X$ have arity $n$, and let $\widetilde{x}=x_1,\cdots,x_n$ be distinct names, and $fn(P)\subseteq\{x_1,\cdots,x_n\}$. The replacement of $X(\widetilde{x})$ by $P$ in $E$, written
$E\{X(\widetilde{x}):=P\}$, means the result of replacing each subterm $X(\widetilde{y})$ in $E$ by $P\{\widetilde{y}/\widetilde{x}\}$.
\end{definition}

\begin{definition}
Let $E$ and $F$ be two process expressions containing only $X_1,\cdots,X_m$ with associated name sequences $\widetilde{x}_1,\cdots,\widetilde{x}_m$. Then,
\begin{enumerate}
  \item $E\sim_{pp}^{fr} F$ means $E(\widetilde{P})\sim_{pp}^{fr} F(\widetilde{P})$;
  \item $E\sim_{ps}^{fr} F$ means $E(\widetilde{P})\sim_{ps}^{fr} F(\widetilde{P})$;
  \item $E\sim_{php}^{fr} F$ means $E(\widetilde{P})\sim_{php}^{fr} F(\widetilde{P})$;
  \item $E\sim_{phhp}^{fr} F$ means $E(\widetilde{P})\sim_{phhp}^{fr} F(\widetilde{P})$;
\end{enumerate}

for all $\widetilde{P}$ such that $fn(P_i)\subseteq \widetilde{x}_i$ for each $i$.
\end{definition}

\begin{definition}
A term or identifier is weakly guarded in $P$ if it lies within some subterm $\alpha.Q$ or $Q.\alpha[m]$ or $(\alpha_1\parallel\cdots\parallel \alpha_n).Q$ or
$Q.(\alpha_1[m]\parallel\cdots\parallel \alpha_n[m])$ of $P$.
\end{definition}

\begin{theorem}
Assume that $\widetilde{E}$ and $\widetilde{F}$ are expressions containing only $X_i$ with $\widetilde{x}_i$, and $\widetilde{A}$ and $\widetilde{B}$ are identifiers with $A_i$, $B_i$. Then, for all $i$,
\begin{enumerate}
  \item $E_i\sim_{ps}^{fr} F_i$, $A_i(\widetilde{x}_i)\overset{\text{def}}{=}E_i(\widetilde{A})$, $B_i(\widetilde{x}_i)\overset{\text{def}}{=}F_i(\widetilde{B})$, then
  $A_i(\widetilde{x}_i)\sim_{ps}^{fr} B_i(\widetilde{x}_i)$;
  \item $E_i\sim_{pp}^{fr} F_i$, $A_i(\widetilde{x}_i)\overset{\text{def}}{=}E_i(\widetilde{A})$, $B_i(\widetilde{x}_i)\overset{\text{def}}{=}F_i(\widetilde{B})$, then
  $A_i(\widetilde{x}_i)\sim_{pp}^{fr} B_i(\widetilde{x}_i)$;
  \item $E_i\sim_{php}^{fr} F_i$, $A_i(\widetilde{x}_i)\overset{\text{def}}{=}E_i(\widetilde{A})$, $B_i(\widetilde{x}_i)\overset{\text{def}}{=}F_i(\widetilde{B})$, then
  $A_i(\widetilde{x}_i)\sim_{php}^{fr} B_i(\widetilde{x}_i)$;
  \item $E_i\sim_{phhp}^{fr} F_i$, $A_i(\widetilde{x}_i)\overset{\text{def}}{=}E_i(\widetilde{A})$, $B_i(\widetilde{x}_i)\overset{\text{def}}{=}F_i(\widetilde{B})$, then
  $A_i(\widetilde{x}_i)\sim_{phhp}^{fr} B_i(\widetilde{x}_i)$.
\end{enumerate}
\end{theorem}

\begin{proof}
\begin{enumerate}
  \item $E_i\sim_{ps}^{fr} F_i$, $A_i(\widetilde{x}_i)\overset{\text{def}}{=}E_i(\widetilde{A})$, $B_i(\widetilde{x}_i)\overset{\text{def}}{=}F_i(\widetilde{B})$, then
  $A_i(\widetilde{x}_i)\sim_{ps}^{fr} B_i(\widetilde{x}_i)$.

      We will consider the case $I=\{1\}$ with loss of generality, and show the following relation $R$ is a FR strongly probabilistic step bisimulation.

      $$R=\{(G(A),G(B)):G\textrm{ has only identifier }X\}.$$

      By choosing $G\equiv X(\widetilde{y})$, it follows that $A(\widetilde{y})\sim_{ps}^{fr} B(\widetilde{y})$. It is sufficient to prove the following:
      \begin{enumerate}
        \item If $G(A)\rightsquigarrow\xrightarrow{\{\alpha_1,\cdots,\alpha_n\}}P'$, where $\alpha_i(1\leq i\leq n)$ is a free action or bound output action with
        $bn(\alpha_1)\cap\cdots\cap bn(\alpha_n)\cap n(G(A),G(B))=\emptyset$, then $G(B)\rightsquigarrow\xrightarrow{\{\alpha_1,\cdots,\alpha_n\}}Q''$ such that $P'\sim_{ps}^{fr} Q''$;
        \item If $G(A)\rightsquigarrow\xrightarrow{x(y)}P'$ with $x\notin n(G(A),G(B))$, then $G(B)\rightsquigarrow\xrightarrow{x(y)}Q''$, such that for all $u$,
        $P'\{u/y\}\sim_{ps}^{fr} Q''\{u/y\}$;
        \item If $G(A)\rightsquigarrow\xtworightarrow{\{\alpha_1[m],\cdots,\alpha_n[m]\}}P'$, where $\alpha_i[m](1\leq i\leq n)$ is a free action or bound output action with
        $bn(\alpha_1[m])\cap\cdots\cap bn(\alpha_n[m])\cap n(G(A),G(B))=\emptyset$, then $G(B)\rightsquigarrow\xtworightarrow{\{\alpha_1[m],\cdots,\alpha_n[m]\}}Q''$ such that $P'\sim_{ps}^{fr} Q''$;
        \item If $G(A)\rightsquigarrow\xtworightarrow{x(y)[m]}P'$ with $x\notin n(G(A),G(B))$, then $G(B)\rightsquigarrow\xtworightarrow{x(y)[m]}Q''$, such that for all $u$,
        $P'\{u/y\}\sim_{ps}^{fr} Q''\{u/y\}$.
      \end{enumerate}

      To prove the above properties, it is sufficient to induct on the depth of inference and quite routine, we omit it.
  \item $E_i\sim_{pp}^{fr} F_i$, $A_i(\widetilde{x}_i)\overset{\text{def}}{=}E_i(\widetilde{A})$, $B_i(\widetilde{x}_i)\overset{\text{def}}{=}F_i(\widetilde{B})$, then
  $A_i(\widetilde{x}_i)\sim_{pp}^{fr} B_i(\widetilde{x}_i)$. It can be proven similarly to the above case.
  \item $E_i\sim_{php}^{fr} F_i$, $A_i(\widetilde{x}_i)\overset{\text{def}}{=}E_i(\widetilde{A})$, $B_i(\widetilde{x}_i)\overset{\text{def}}{=}F_i(\widetilde{B})$, then
  $A_i(\widetilde{x}_i)\sim_{php}^{fr} B_i(\widetilde{x}_i)$. It can be proven similarly to the above case.
  \item $E_i\sim_{phhp}^{fr} F_i$, $A_i(\widetilde{x}_i)\overset{\text{def}}{=}E_i(\widetilde{A})$, $B_i(\widetilde{x}_i)\overset{\text{def}}{=}F_i(\widetilde{B})$, then
  $A_i(\widetilde{x}_i)\sim_{phhp}^{fr} B_i(\widetilde{x}_i)$. It can be proven similarly to the above case.
\end{enumerate}
\end{proof}

\begin{theorem}[Unique solution of equations]
Assume $\widetilde{E}$ are expressions containing only $X_i$ with $\widetilde{x}_i$, and each $X_i$ is weakly guarded in each $E_j$. Assume that $\widetilde{P}$ and $\widetilde{Q}$ are
processes such that $fn(P_i)\subseteq \widetilde{x}_i$ and $fn(Q_i)\subseteq \widetilde{x}_i$. Then, for all $i$,
\begin{enumerate}
  \item if $P_i\sim_{pp}^{fr} E_i(\widetilde{P})$, $Q_i\sim_{pp}^{fr} E_i(\widetilde{Q})$, then $P_i\sim_{pp}^{fr} Q_i$;
  \item if $P_i\sim_{ps}^{fr} E_i(\widetilde{P})$, $Q_i\sim_{ps}^{fr} E_i(\widetilde{Q})$, then $P_i\sim_{ps}^{fr} Q_i$;
  \item if $P_i\sim_{php}^{fr} E_i(\widetilde{P})$, $Q_i\sim_{php}^{fr} E_i(\widetilde{Q})$, then $P_i\sim_{php}^{fr} Q_i$;
  \item if $P_i\sim_{phhp}^{fr} E_i(\widetilde{P})$, $Q_i\sim_{phhp}^{fr} E_i(\widetilde{Q})$, then $P_i\sim_{phhp}^{fr} Q_i$.
\end{enumerate}
\end{theorem}

\begin{proof}
\begin{enumerate}
  \item It is similar to the proof of unique solution of equations for FR strongly probabilistic pomset bisimulation in CTC, please refer to \cite{CTC2} for details, we omit it;
  \item It is similar to the proof of unique solution of equations for FR strongly probabilistic step bisimulation in CTC, please refer to \cite{CTC2} for details, we omit it;
  \item It is similar to the proof of unique solution of equations for FR strongly probabilistic hp-bisimulation in CTC, please refer to \cite{CTC2} for details, we omit it;
  \item It is similar to the proof of unique solution of equations for FR strongly probabilistic hhp-bisimulation in CTC, please refer to \cite{CTC2} for details, we omit it.
\end{enumerate}
\end{proof}

\subsection{Algebraic Theory}\label{a5}

\begin{definition}[STC]
The theory \textbf{STC} is consisted of the following axioms and inference rules:

\begin{enumerate}
  \item Alpha-conversion $\textbf{A}$.
  \[\textrm{if } P\equiv Q, \textrm{ then } P=Q\]
  \item Congruence $\textbf{C}$. If $P=Q$, then,
  \[\tau.P=\tau.Q\quad \overline{x}y.P=\overline{x}y.Q\quad P.\overline{x}y[m]=Q.\overline{x}y[m]\]
  \[P+R=Q+R\quad P\parallel R=Q\parallel R\]
  \[(x)P=(x)Q\quad x(y).P=x(y).Q\quad P.x(y)[m]=Q.x(y)[m]\]
  \item Summation $\textbf{S}$.
  \[\textbf{S0}\quad P+\textbf{nil}=P\]
  \[\textbf{S1}\quad P+P=P\]
  \[\textbf{S2}\quad P+Q=Q+P\]
  \[\textbf{S3}\quad P+(Q+R)=(P+Q)+R\]
  \item Box-Summation $\textbf(BS)$.
  \[\textbf{BS0}\quad P\boxplus_{\pi}\textbf{nil}= P\]
  \[\textbf{BS1}\quad P\boxplus_{\pi}P= P\]
  \[\textbf{BS2}\quad P\boxplus_{\pi} Q= Q\boxplus_{1-\pi} P\]
  \[\textbf{BS3}\quad P\boxplus_{\pi}(Q\boxplus_{\rho} R)= (P\boxplus_{\frac{\pi}{\pi+\rho-\pi\rho}}Q)\boxplus_{\pi+\rho-\pi\rho} R\]
  \item Restriction $\textbf{R}$.
  \[\textbf{R0}\quad (x)P=P\quad \textrm{ if }x\notin fn(P)\]
  \[\textbf{R1}\quad (x)(y)P=(y)(x)P\]
  \[\textbf{R2}\quad (x)(P+Q)=(x)P+(x)Q\]
  \[\textbf{R3}\quad (x)\alpha.P=\alpha.(x)P\quad \textrm{ if }x\notin n(\alpha)\]
  \[\textbf{R4}\quad (x)\alpha.P=\textbf{nil}\quad \textrm{ if }x\textrm{is the subject of }\alpha\]
  \item Expansion $\textbf{E}$.
  Let $P\equiv\boxplus_i\sum_j \alpha_{ij}.P_{ij}$ and $Q\equiv\boxplus_{k}\sum_l\beta_{kl}.Q_{kl}$, where $bn(\alpha_{ij})\cap fn(Q)=\emptyset$ for all $i,j$, and
  $bn(\beta_{kl})\cap fn(P)=\emptyset$ for all $k,l$. Then,

\begin{enumerate}
  \item $P\parallel Q\sim_{pp}^{fr} \boxplus_{i}\boxplus_{k}\sum_j\sum_l (\alpha_{ij}\parallel \beta_{kl}).(P_{ij}\parallel Q_{kl})+\boxplus_i\boxplus_k\sum_{\alpha_{ij} \textrm{ comp }\beta_{kl}}\tau.R_{ijkl}$;
  \item $P\parallel Q\sim_{ps}^{fr} \boxplus_{i}\boxplus_{k}\sum_j\sum_l (\alpha_{ij}\parallel \beta_{kl}).(P_{ij}\parallel Q_{kl})+\boxplus_i\boxplus_k\sum_{\alpha_{ij} \textrm{ comp }\beta_{kl}}\tau.R_{ijkl}$;
  \item $P\parallel Q\sim_{php}^{fr} \boxplus_{i}\boxplus_{k}\sum_j\sum_l (\alpha_{ij}\parallel \beta_{kl}).(P_{ij}\parallel Q_{kl})+\boxplus_i\boxplus_k\sum_{\alpha_{ij} \textrm{ comp }\beta_{kl}}\tau.R_{ijkl}$;
  \item $P\parallel Q\nsim_{phhp} \boxplus_{i}\boxplus_{k}\sum_j\sum_l (\alpha_{ij}\parallel \beta_{kl}).(P_{ij}\parallel Q_{kl})+\boxplus_i\boxplus_k\sum_{\alpha_{ij} \textrm{ comp }\beta_{kl}}\tau.R_{ijkl}$.
\end{enumerate}

Where $\alpha_{ij}$ comp $\beta_{kl}$ and $R_{ijkl}$ are defined as follows:
\begin{enumerate}
  \item $\alpha_{ij}$ is $\overline{x}u$ and $\beta_{kl}$ is $x(v)$, then $R_{ijkl}=P_{ij}\parallel Q_{kl}\{u/v\}$;
  \item $\alpha_{ij}$ is $\overline{x}(u)$ and $\beta_{kl}$ is $x(v)$, then $R_{ijkl}=(w)(P_{ij}\{w/u\}\parallel Q_{kl}\{w/v\})$, if $w\notin fn((u)P_{ij})\cup fn((v)Q_{kl})$;
  \item $\alpha_{ij}$ is $x(v)$ and $\beta_{kl}$ is $\overline{x}u$, then $R_{ijkl}=P_{ij}\{u/v\}\parallel Q_{kl}$;
  \item $\alpha_{ij}$ is $x(v)$ and $\beta_{kl}$ is $\overline{x}(u)$, then $R_{ijkl}=(w)(P_{ij}\{w/v\}\parallel Q_{kl}\{w/u\})$, if $w\notin fn((v)P_{ij})\cup fn((u)Q_{kl})$.
\end{enumerate}

Let $P\equiv\boxplus_i\sum_j P_{ij}.\alpha_{ij}[m]$ and $Q\equiv\boxplus_{k}\sum_l Q_{kl}.\beta_{kl}[m]$, where $bn(\alpha_{ij}[m])\cap fn(Q)=\emptyset$ for all $i,j$, and
  $bn(\beta_{kl}[m])\cap fn(P)=\emptyset$ for all $k,l$. Then,

\begin{enumerate}
  \item $P\parallel Q\sim_{pp}^{fr} \boxplus_{i}\boxplus_{k}\sum_j\sum_l(P_{ij}\parallel Q_{kl}).(\alpha_{ij}[m]\parallel \beta_{kl}[m])+\boxplus_i\boxplus_k\sum_{\alpha_{ij} \textrm{ comp }\beta_{kl}} R_{ijkl}.\tau$;
  \item $P\parallel Q\sim_{ps}^{fr} \boxplus_{i}\boxplus_{k}\sum_j\sum_l(P_{ij}\parallel Q_{kl}).(\alpha_{ij}[m]\parallel \beta_{kl}[m])+\boxplus_i\boxplus_k\sum_{\alpha_{ij} \textrm{ comp }\beta_{kl}} R_{ijkl}.\tau$;
  \item $P\parallel Q\sim_{php}^{fr} \boxplus_{i}\boxplus_{k}\sum_j\sum_l(P_{ij}\parallel Q_{kl}).(\alpha_{ij}[m]\parallel \beta_{kl}[m])+\boxplus_i\boxplus_k\sum_{\alpha_{ij} \textrm{ comp }\beta_{kl}} R_{ijkl}.\tau$;
  \item $P\parallel Q\nsim_{phhp} \boxplus_{i}\boxplus_{k}\sum_j\sum_l(P_{ij}\parallel Q_{kl}).(\alpha_{ij}[m]\parallel \beta_{kl}[m])+\boxplus_i\boxplus_k\sum_{\alpha_{ij} \textrm{ comp }\beta_{kl}} R_{ijkl}.\tau$.
\end{enumerate}

Where $\alpha_{ij}[m]$ comp $\beta_{kl}[m]$ and $R_{ijkl}$ are defined as follows:
\begin{enumerate}
  \item $\alpha_{ij}[m]$ is $\overline{x}u$ and $\beta_{kl}[m]$ is $x(v)$, then $R_{ijkl}=P_{ij}\parallel Q_{kl}\{u/v\}$;
  \item $\alpha_{ij}[m]$ is $\overline{x}(u)$ and $\beta_{kl}[m]$ is $x(v)$, then $R_{ijkl}=(w)(P_{ij}\{w/u\}\parallel Q_{kl}\{w/v\})$, if $w\notin fn((u)P_{ij})\cup fn((v)Q_{kl})$;
  \item $\alpha_{ij}[m]$ is $x(v)$ and $\beta_{kl}[m]$ is $\overline{x}u$, then $R_{ijkl}=P_{ij}\{u/v\}\parallel Q_{kl}$;
  \item $\alpha_{ij}[m]$ is $x(v)$ and $\beta_{kl}[m]$ is $\overline{x}(u)$, then $R_{ijkl}=(w)(P_{ij}\{w/v\}\parallel Q_{kl}\{w/u\})$, if $w\notin fn((v)P_{ij})\cup fn((u)Q_{kl})$.
\end{enumerate}
  \item Identifier $\textbf{I}$.
  \[\textrm{If }A(\widetilde{x})\overset{\text{def}}{=}P,\textrm{ then }A(\widetilde{y})= P\{\widetilde{y}/\widetilde{x}\}.\]
\end{enumerate}
\end{definition}

\begin{theorem}[Soundness]
If $\textbf{STC}\vdash P=Q$ then
\begin{enumerate}
  \item $P\sim_{pp}^{fr} Q$;
  \item $P\sim_{pp}^{fr} Q$;
  \item $P\sim_{php}^{fr} Q$;
  \item $P\sim_{phhp}^{fr} Q$.
\end{enumerate}
\end{theorem}

\begin{proof}
The soundness of these laws modulo strongly truly concurrent bisimilarities is already proven in Section \ref{s5}.
\end{proof}

\begin{definition}
The agent identifier $A$ is weakly guardedly defined if every agent identifier is weakly guarded in the right-hand side of the definition of $A$.
\end{definition}

\begin{definition}[Head normal form]
A Process $P$ is in head normal form if it is a sum of the prefixes:

$$P\equiv \boxplus_{i}\sum_j(\alpha_{ij1}\parallel\cdots\parallel\alpha_{ijn}).P_{ij}\quad P\equiv \boxplus_{i}\sum_jP_{ij}.(\alpha_{ij1}[m]\parallel\cdots\parallel\alpha_{ijn}[m])$$
\end{definition}

\begin{proposition}
If every agent identifier is weakly guardedly defined, then for any process $P$, there is a head normal form $H$ such that

$$\textbf{STC}\vdash P=H.$$
\end{proposition}

\begin{proof}
It is sufficient to induct on the structure of $P$ and quite obvious.
\end{proof}

\begin{theorem}[Completeness]
For all processes $P$ and $Q$,
\begin{enumerate}
  \item if $P\sim_{pp}^{fr} Q$, then $\textbf{STC}\vdash P=Q$;
  \item if $P\sim_{pp}^{fr} Q$, then $\textbf{STC}\vdash P=Q$;
  \item if $P\sim_{php}^{fr} Q$, then $\textbf{STC}\vdash P=Q$.
\end{enumerate}
\end{theorem}

\begin{proof}
\begin{enumerate}
  \item if $P\sim_{pp}^{fr} Q$, then $\textbf{STC}\vdash P=Q$.

  For the forward transition case.

Since $P$ and $Q$ all have head normal forms, let $P\equiv\boxplus_{j=1}^l\sum_{i=1}^k\alpha_{ji}.P_{ji}$ and $Q\equiv\boxplus_{j=1}^l\sum_{i=1}^k\beta_{ji}.Q_{ji}$. Then the depth of
$P$, denoted as $d(P)=0$, if $k=0$; $d(P)=1+max\{d(P_{ji})\}$ for $1\leq j,i\leq k$. The depth $d(Q)$ can be defined similarly.

It is sufficient to induct on $d=d(P)+d(Q)$. When $d=0$, $P\equiv\textbf{nil}$ and $Q\equiv\textbf{nil}$, $P=Q$, as desired.

Suppose $d>0$.

\begin{itemize}
  \item If $(\alpha_1\parallel\cdots\parallel\alpha_n).M$ with $\alpha_{ji}(1\leq j,i\leq n)$ free actions is a summand of $P$, then
  $P\rightsquigarrow\xrightarrow{\{\alpha_1,\cdots,\alpha_n\}}M$.
  Since $Q$ is in head normal form and has a summand $(\alpha_1\parallel\cdots\parallel\alpha_n).N$ such that $M\sim_{pp}^{fr} N$, by the induction hypothesis $\textbf{STC}\vdash M=N$,
  $\textbf{STC}\vdash (\alpha_1\parallel\cdots\parallel\alpha_n).M= (\alpha_1\parallel\cdots\parallel\alpha_n).N$;
  \item If $x(y).M$ is a summand of $P$, then for $z\notin n(P, Q)$, $P\rightsquigarrow\xrightarrow{x(z)}M'\equiv M\{z/y\}$. Since $Q$ is in head normal form and has a summand
  $x(w).N$ such that for all $v$, $M'\{v/z\}\sim_{pp}^{fr} N'\{v/z\}$ where $N'\equiv N\{z/w\}$, by the induction hypothesis $\textbf{STC}\vdash M'\{v/z\}=N'\{v/z\}$, by the axioms
  $\textbf{C}$ and $\textbf{A}$, $\textbf{STC}\vdash x(y).M=x(w).N$;
  \item If $\overline{x}(y).M$ is a summand of $P$, then for $z\notin n(P,Q)$, $P\rightsquigarrow\xrightarrow{\overline{x}(z)}M'\equiv M\{z/y\}$. Since $Q$ is in head normal form and
  has a summand $\overline{x}(w).N$ such that $M'\sim_{pp}^{fr} N'$ where $N'\equiv N\{z/w\}$, by the induction hypothesis $\textbf{STC}\vdash M'=N'$, by the axioms
  $\textbf{A}$ and $\textbf{C}$, $\textbf{STC}\vdash \overline{x}(y).M= \overline{x}(w).N$.
\end{itemize}

For the reverse transition case, it can be proven similarly, and we omit it.

  \item if $P\sim_{pp}^{fr} Q$, then $\textbf{STC}\vdash P=Q$. It can be proven similarly to the above case.
  \item if $P\sim_{php}^{fr} Q$, then $\textbf{STC}\vdash P=Q$. It can be proven similarly to the above case.
\end{enumerate}
\end{proof}

\newpage\section{$\pi_{tc}$ with Probabilism and Guards}\label{pitcpg}

In this chapter, we design $\pi_{tc}$ with probabilism and guards. This chapter is organized as follows. In section \ref{os6}, we introduce the truly concurrent operational semantics. Then, we introduce
the syntax and operational semantics, laws modulo strongly truly concurrent bisimulations, and algebraic theory of $\pi_{tc}$ with probabilism and guards in section \ref{sos6},
\ref{s6} and \ref{a6} respectively.

\subsection{Operational Semantics}\label{os6}

Firstly, in this section, we introduce concepts of (strongly) probabilistic truly concurrent bisimilarities, including probabilistic pomset bisimilarity, probabilistic step
bisimilarity, probabilistic history-preserving (hp-)bisimilarity and probabilistic hereditary history-preserving (hhp-)bisimilarity. In contrast to traditional probabilistic truly
concurrent bisimilarities in section \ref{bg}, these versions in $\pi_{ptc}$ must take care of actions with bound objects. Note that, these probabilistic truly concurrent bisimilarities
are defined as late bisimilarities, but not early bisimilarities, as defined in $\pi$-calculus \cite{PI1} \cite{PI2}. Note that, here, a PES $\mathcal{E}$ is deemed as a process.

\begin{definition}[Prime event structure with silent event and empty event]
Let $\Lambda$ be a fixed set of labels, ranged over $a,b,c,\cdots$ and $\tau,\epsilon$. A ($\Lambda$-labelled) prime event structure with silent event $\tau$ and empty event
$\epsilon$ is a tuple $\mathcal{E}=\langle \mathbb{E}, \leq, \sharp, \lambda\rangle$, where $\mathbb{E}$ is a denumerable set of events, including the silent event $\tau$ and empty
event $\epsilon$. Let $\hat{\mathbb{E}}=\mathbb{E}\backslash\{\tau,\epsilon\}$, exactly excluding $\tau$ and $\epsilon$, it is obvious that $\hat{\tau^*}=\epsilon$. Let
$\lambda:\mathbb{E}\rightarrow\Lambda$ be a labelling function and let $\lambda(\tau)=\tau$ and $\lambda(\epsilon)=\epsilon$. And $\leq$, $\sharp$ are binary relations on $\mathbb{E}$,
called causality and conflict respectively, such that:

\begin{enumerate}
  \item $\leq$ is a partial order and $\lceil e \rceil = \{e'\in \mathbb{E}|e'\leq e\}$ is finite for all $e\in \mathbb{E}$. It is easy to see that
  $e\leq\tau^*\leq e'=e\leq\tau\leq\cdots\leq\tau\leq e'$, then $e\leq e'$.
  \item $\sharp$ is irreflexive, symmetric and hereditary with respect to $\leq$, that is, for all $e,e',e''\in \mathbb{E}$, if $e\sharp e'\leq e''$, then $e\sharp e''$.
\end{enumerate}

Then, the concepts of consistency and concurrency can be drawn from the above definition:

\begin{enumerate}
  \item $e,e'\in \mathbb{E}$ are consistent, denoted as $e\frown e'$, if $\neg(e\sharp e')$. A subset $X\subseteq \mathbb{E}$ is called consistent, if $e\frown e'$ for all
  $e,e'\in X$.
  \item $e,e'\in \mathbb{E}$ are concurrent, denoted as $e\parallel e'$, if $\neg(e\leq e')$, $\neg(e'\leq e)$, and $\neg(e\sharp e')$.
\end{enumerate}
\end{definition}

\begin{definition}[Configuration]
Let $\mathcal{E}$ be a PES. A (finite) configuration in $\mathcal{E}$ is a (finite) consistent subset of events $C\subseteq \mathcal{E}$, closed with respect to causality (i.e.
$\lceil C\rceil=C$), and a data state $s\in S$ with $S$ the set of all data states, denoted $\langle C, s\rangle$. The set of finite configurations of $\mathcal{E}$ is denoted by
$\langle\mathcal{C}(\mathcal{E}), S\rangle$. We let $\hat{C}=C\backslash\{\tau\}\cup\{\epsilon\}$.
\end{definition}

A consistent subset of $X\subseteq \mathbb{E}$ of events can be seen as a pomset. Given $X, Y\subseteq \mathbb{E}$, $\hat{X}\sim \hat{Y}$ if $\hat{X}$ and $\hat{Y}$ are isomorphic as
pomsets. In the following of the paper, we say $C_1\sim C_2$, we mean $\hat{C_1}\sim\hat{C_2}$.

\begin{definition}[Probabilistic transitions]
Let $\mathcal{E}$ be a PES and let $C\in\mathcal{C}(\mathcal{E})$, the transition $\langle C,s\rangle\xrsquigarrow{\pi} \langle C^{\pi},s\rangle$ is called a probabilistic
transition
from $\langle C,s\rangle$ to $\langle C^{\pi},s\rangle$.
\end{definition}

A probability distribution function (PDF) $\mu$ is a map $\mu:\mathcal{C}\times\mathcal{C}\rightarrow[0,1]$ and $\mu^*$ is the cumulative probability distribution function (cPDF).

\begin{definition}[Strongly probabilistic pomset, step bisimilarity]
Let $\mathcal{E}_1$, $\mathcal{E}_2$ be PESs. A strongly probabilistic pomset bisimulation is a relation $R\subseteq\langle\mathcal{C}(\mathcal{E}_1),s\rangle\times\langle\mathcal{C}(\mathcal{E}_2),s\rangle$,
such that (1) if $(\langle C_1,s\rangle,\langle C_2,s\rangle)\in R$, and $\langle C_1,s\rangle\xrightarrow{X_1}\langle C_1',s'\rangle$ (with $\mathcal{E}_1\xrightarrow{X_1}\mathcal{E}_1'$) then $\langle C_2,s\rangle\xrightarrow{X_2}\langle C_2',s'\rangle$ (with
$\mathcal{E}_2\xrightarrow{X_2}\mathcal{E}_2'$), with $X_1\subseteq \mathbb{E}_1$, $X_2\subseteq \mathbb{E}_2$, $X_1\sim X_2$ and $(\langle C_1',s'\rangle,\langle C_2',s'\rangle)\in R$:
\begin{enumerate}
  \item for each fresh action $\alpha\in X_1$, if $\langle C_1'',s''\rangle\xrightarrow{\alpha}\langle C_1''',s'''\rangle$ (with $\mathcal{E}_1''\xrightarrow{\alpha}\mathcal{E}_1'''$),
  then for some $C_2''$ and $\langle C_2''',s'''\rangle$, $\langle C_2'',s''\rangle\xrightarrow{\alpha}\langle C_2''',s'''\rangle$ (with
  $\mathcal{E}_2''\xrightarrow{\alpha}\mathcal{E}_2'''$), such that if $(\langle C_1'',s''\rangle,\langle C_2'',s''\rangle)\in R$ then $(\langle C_1''',s'''\rangle,\langle C_2''',s'''\rangle)\in R$;
  \item for each $x(y)\in X_1$ with ($y\notin n(\mathcal{E}_1, \mathcal{E}_2)$), if $\langle C_1'',s''\rangle\xrightarrow{x(y)}\langle C_1''',s'''\rangle$ (with
  $\mathcal{E}_1''\xrightarrow{x(y)}\mathcal{E}_1'''\{w/y\}$) for all $w$, then for some $C_2''$ and $C_2'''$, $\langle C_2'',s''\rangle\xrightarrow{x(y)}\langle C_2''',s'''\rangle$
  (with $\mathcal{E}_2''\xrightarrow{x(y)}\mathcal{E}_2'''\{w/y\}$) for all $w$, such that if $(\langle C_1'',s''\rangle,\langle C_2'',s''\rangle)\in R$ then $(\langle C_1''',s'''\rangle,\langle C_2''',s'''\rangle)\in R$;
  \item for each two $x_1(y),x_2(y)\in X_1$ with ($y\notin n(\mathcal{E}_1, \mathcal{E}_2)$), if $\langle C_1'',s''\rangle\xrightarrow{\{x_1(y),x_2(y)\}}\langle C_1''',s'''\rangle$
  (with $\mathcal{E}_1''\xrightarrow{\{x_1(y),x_2(y)\}}\mathcal{E}_1'''\{w/y\}$) for all $w$, then for some $C_2''$ and $C_2'''$,
  $\langle C_2'',s''\rangle\xrightarrow{\{x_1(y),x_2(y)\}}\langle C_2''',s'''\rangle$ (with $\mathcal{E}_2''\xrightarrow{\{x_1(y),x_2(y)\}}\mathcal{E}_2'''\{w/y\}$) for all $w$, such
  that if $(\langle C_1'',s''\rangle,\langle C_2'',s''\rangle)\in R$ then $(\langle C_1''',s'''\rangle,\langle C_2''',s'''\rangle)\in R$;
  \item for each $\overline{x}(y)\in X_1$ with $y\notin n(\mathcal{E}_1, \mathcal{E}_2)$, if $\langle C_1'',s''\rangle\xrightarrow{\overline{x}(y)}\langle C_1''',s'''\rangle$
  (with $\mathcal{E}_1''\xrightarrow{\overline{x}(y)}\mathcal{E}_1'''$), then for some $C_2''$ and $C_2'''$, $\langle C_2'',s''\rangle\xrightarrow{\overline{x}(y)}\langle C_2''',s'''\rangle$
  (with $\mathcal{E}_2''\xrightarrow{\overline{x}(y)}\mathcal{E}_2'''$), such that if $(\langle C_1'',s''\rangle,\langle C_2'',s''\rangle)\in R$ then $(\langle C_1''',s'''\rangle,\langle C_2''',s'''\rangle)\in R$.
\end{enumerate}
 and vice-versa; (2) if $(\langle C_1,s\rangle,\langle C_2,s\rangle)\in R$, and $\langle C_1,s\rangle\xrsquigarrow{\pi}\langle C_1^{\pi},s\rangle$ then
 $\langle C_2,s\rangle\xrsquigarrow{\pi}\langle C_2^{\pi},s\rangle$ and $(\langle C_1^{\pi},s\rangle,\langle C_2^{\pi},s\rangle)\in R$, and vice-versa; (3) if $(\langle C_1,s\rangle,\langle C_2,s\rangle)\in R$,
then $\mu(C_1,C)=\mu(C_2,C)$ for each $C\in\mathcal{C}(\mathcal{E})/R$; (4) $[\surd]_R=\{\surd\}$.

We say that $\mathcal{E}_1$, $\mathcal{E}_2$ are strongly probabilistic pomset bisimilar, written $\mathcal{E}_1\sim_{pp}\mathcal{E}_2$, if there exists a strongly probabilistic pomset
bisimulation $R$, such that $(\emptyset,\emptyset)\in R$. By replacing probabilistic pomset transitions with steps, we can get the definition of strongly probabilistic step bisimulation.
When PESs $\mathcal{E}_1$ and $\mathcal{E}_2$ are strongly probabilistic step bisimilar, we write $\mathcal{E}_1\sim_{ps}\mathcal{E}_2$.
\end{definition}

\begin{definition}[Posetal product]
Given two PESs $\mathcal{E}_1$, $\mathcal{E}_2$, the posetal product of their configurations, denoted
$\langle\mathcal{C}(\mathcal{E}_1),S\rangle\overline{\times}\langle\mathcal{C}(\mathcal{E}_2),S\rangle$, is defined as

$$\{(\langle C_1,s\rangle,f,\langle C_2,s\rangle)|C_1\in\mathcal{C}(\mathcal{E}_1),C_2\in\mathcal{C}(\mathcal{E}_2),f:C_1\rightarrow C_2 \textrm{ isomorphism}\}.$$

A subset $R\subseteq\langle\mathcal{C}(\mathcal{E}_1),S\rangle\overline{\times}\langle\mathcal{C}(\mathcal{E}_2),S\rangle$ is called a posetal relation. We say that $R$ is downward
closed when for any
$(\langle C_1,s\rangle,f,\langle C_2,s\rangle),(\langle C_1',s'\rangle,f',\langle C_2',s'\rangle)\in \langle\mathcal{C}(\mathcal{E}_1),S\rangle\overline{\times}\langle\mathcal{C}(\mathcal{E}_2),S\rangle$,
if $(\langle C_1,s\rangle,f,\langle C_2,s\rangle)\subseteq (\langle C_1',s'\rangle,f',\langle C_2',s'\rangle)$ pointwise and $(\langle C_1',s'\rangle,f',\langle C_2',s'\rangle)\in R$,
then $(\langle C_1,s\rangle,f,\langle C_2,s\rangle)\in R$.

For $f:X_1\rightarrow X_2$, we define $f[x_1\mapsto x_2]:X_1\cup\{x_1\}\rightarrow X_2\cup\{x_2\}$, $z\in X_1\cup\{x_1\}$,(1)$f[x_1\mapsto x_2](z)=
x_2$,if $z=x_1$;(2)$f[x_1\mapsto x_2](z)=f(z)$, otherwise. Where $X_1\subseteq \mathbb{E}_1$, $X_2\subseteq \mathbb{E}_2$, $x_1\in \mathbb{E}_1$, $x_2\in \mathbb{E}_2$.
\end{definition}

\begin{definition}[Strongly probabilistic (hereditary) history-preserving bisimilarity]
A strongly probabilistic history-preserving (hp-) bisimulation is a posetal relation $R\subseteq\mathcal{C}(\mathcal{E}_1)\overline{\times}\mathcal{C}(\mathcal{E}_2)$ such that
(1) if $(\langle C_1,s\rangle,f,\langle C_2,s\rangle)\in R$, and
\begin{enumerate}
  \item for $e_1=\alpha$ a fresh action, if $\langle C_1,s\rangle\xrightarrow{\alpha}\langle C_1',s'\rangle$ (with $\mathcal{E}_1\xrightarrow{\alpha}\mathcal{E}_1'$), then for some
  $C_2'$ and $e_2=\alpha$, $\langle C_2,s\rangle\xrightarrow{\alpha}\langle C_2',s'\rangle$ (with $\mathcal{E}_2\xrightarrow{\alpha}\mathcal{E}_2'$), such that
  $(\langle C_1',s'\rangle,f[e_1\mapsto e_2],\langle C_2',s'\rangle)\in R$;
  \item for $e_1=x(y)$ with ($y\notin n(\mathcal{E}_1, \mathcal{E}_2)$), if $\langle C_1,s\rangle\xrightarrow{x(y)}\langle C_1',s'\rangle$ (with
  $\mathcal{E}_1\xrightarrow{x(y)}\mathcal{E}_1'\{w/y\}$) for all $w$, then for some $C_2'$ and $e_2=x(y)$, $\langle C_2,s\rangle\xrightarrow{x(y)}\langle C_2',s'\rangle$ (with
  $\mathcal{E}_2\xrightarrow{x(y)}\mathcal{E}_2'\{w/y\}$) for all $w$, such that $(\langle C_1',s'\rangle,f[e_1\mapsto e_2],\langle C_2',s'\rangle)\in R$;
  \item for $e_1=\overline{x}(y)$ with $y\notin n(\mathcal{E}_1, \mathcal{E}_2)$, if $\langle C_1,s\rangle\xrightarrow{\overline{x}(y)}\langle C_1',s'\rangle$ (with
  $\mathcal{E}_1\xrightarrow{\overline{x}(y)}\mathcal{E}_1'$), then for some $C_2'$ and $e_2=\overline{x}(y)$, $\langle C_2,s\rangle\xrightarrow{\overline{x}(y)}\langle C_2',s'\rangle$
  (with $\mathcal{E}_2\xrightarrow{\overline{x}(y)}\mathcal{E}_2'$), such that $(\langle C_1',s'\rangle,f[e_1\mapsto e_2],\langle C_2',s'\rangle)\in R$.
\end{enumerate}
and vice-versa; (2) if $(\langle C_1,s\rangle,f,\langle C_2,s\rangle)\in R$, and $\langle C_1,s\rangle\xrsquigarrow{\pi}\langle C_1^{\pi},s\rangle$ then
$\langle C_2,s\rangle\xrsquigarrow{\pi}\langle C_2^{\pi},s\rangle$ and $(\langle C_1^{\pi},s\rangle,f,\langle C_2^{\pi},s\rangle)\in R$, and vice-versa; (3) if
$(\langle C_1,s\rangle,f,\langle C_2,s\rangle)\in R$, then $\mu(C_1,C)=\mu(C_2,C)$ for each $C\in\mathcal{C}(\mathcal{E})/R$; (4) $[\surd]_R=\{\surd\}$. $\mathcal{E}_1,\mathcal{E}_2$
are strongly probabilistic history-preserving (hp-)bisimilar and are written $\mathcal{E}_1\sim_{php}\mathcal{E}_2$ if there exists a strongly probabilistic hp-bisimulation
$R$ such that $(\emptyset,\emptyset,\emptyset)\in R$.

A strongly probabilistic hereditary history-preserving (hhp-)bisimulation is a downward closed strongly probabilistic hp-bisimulation. $\mathcal{E}_1,\mathcal{E}_2$ are FR
strongly probabilistic hereditary history-preserving (hhp-)bisimilar and are written $\mathcal{E}_1\sim_{phhp}\mathcal{E}_2$.
\end{definition}

\subsection{Syntax and Operational Semantics}\label{sos6}

We assume an infinite set $\mathcal{N}$ of (action or event) names, and use $a,b,c,\cdots$ to range over $\mathcal{N}$, use $x,y,z,w,u,v$ as meta-variables over names. We denote by
$\overline{\mathcal{N}}$ the set of co-names and let $\overline{a},\overline{b},\overline{c},\cdots$ range over $\overline{\mathcal{N}}$. Then we set
$\mathcal{L}=\mathcal{N}\cup\overline{\mathcal{N}}$ as the set of labels, and use $l,\overline{l}$ to range over $\mathcal{L}$. We extend complementation to $\mathcal{L}$ such that
$\overline{\overline{a}}=a$. Let $\tau$ denote the silent step (internal action or event) and define $Act=\mathcal{L}\cup\{\tau\}$ to be the set of actions, $\alpha,\beta$ range over
$Act$. And $K,L$ are used to stand for subsets of $\mathcal{L}$ and $\overline{L}$ is used for the set of complements of labels in $L$.

Further, we introduce a set $\mathcal{X}$ of process variables, and a set $\mathcal{K}$ of process constants, and let $X,Y,\cdots$ range over $\mathcal{X}$, and $A,B,\cdots$ range over
$\mathcal{K}$. For each process constant $A$, a nonnegative arity $ar(A)$ is assigned to it. Let $\widetilde{x}=x_1,\cdots,x_{ar(A)}$ be a tuple of distinct name variables, then
$A(\widetilde{x})$ is called a process constant. $\widetilde{X}$ is a tuple of distinct process variables, and also $E,F,\cdots$ range over the recursive expressions. We write
$\mathcal{P}$ for the set of processes. Sometimes, we use $I,J$ to stand for an indexing set, and we write $E_i:i\in I$ for a family of expressions indexed by $I$. $Id_D$ is the
identity function or relation over set $D$. The symbol $\equiv_{\alpha}$ denotes equality under standard alpha-convertibility, note that the subscript $\alpha$ has no relation to the
action $\alpha$.

Let $G_{at}$ be the set of atomic guards, $\delta$ be the deadlock constant, and $\epsilon$ be the empty action, and extend $Act$ to $Act\cup\{\epsilon\}\cup\{\delta\}$. We extend
$G_{at}$ to the set of basic guards $G$ with element $\phi,\psi,\cdots$, which is generated by the following formation rules:

$$\phi::=\delta|\epsilon|\neg\phi|\psi\in G_{at}|\phi+\psi|\phi\cdot\psi$$

The predicate $test(\phi,s)$ represents that $\phi$ holds in the state $s$, and $test(\epsilon,s)$ holds and $test(\delta,s)$ does not hold. $effect(e,s)\in S$ denotes $s'$ in
$s\xrightarrow{e}s'$. The predicate weakest precondition $wp(e,\phi)$ denotes that $\forall s,s'\in S, test(\phi,effect(e,s))$ holds.

\subsubsection{Syntax}

We use the Prefix $.$ to model the causality relation $\leq$ in true concurrency, the Summation $+$ to model the conflict relation $\sharp$, and $\boxplus_{\pi}$ to model the probabilistic
conflict relation $\sharp_{\pi}$ in probabilistic true concurrency, and the Composition $\parallel$ to explicitly model concurrent relation in true concurrency. And we follow the
conventions of process algebra.

\begin{definition}[Syntax]\label{syntax6}
A truly concurrent process $\pi_{tc}$ with probabilism and guards is defined inductively by the following formation rules:

\begin{enumerate}
  \item $A(\widetilde{x})\in\mathcal{P}$;
  \item $\phi\in\mathcal{P}$;
  \item $\textbf{nil}\in\mathcal{P}$;
  \item if $P\in\mathcal{P}$, then the Prefix $\tau.P\in\mathcal{P}$, for $\tau\in Act$ is the silent action;
  \item if $P\in\mathcal{P}$, then the Prefix $\phi.P\in\mathcal{P}$, for $\phi\in G_{at}$;
  \item if $P\in\mathcal{P}$, then the Output $\overline{x}y.P\in\mathcal{P}$, for $x,y\in Act$;
  \item if $P\in\mathcal{P}$, then the Input $x(y).P\in\mathcal{P}$, for $x,y\in Act$;
  \item if $P\in\mathcal{P}$, then the Restriction $(x)P\in\mathcal{P}$, for $x\in Act$;
  \item if $P,Q\in\mathcal{P}$, then the Summation $P+Q\in\mathcal{P}$;
  \item if $P,Q\in\mathcal{P}$, then the Summation $P\boxplus_{\pi}Q\in\mathcal{P}$;
  \item if $P,Q\in\mathcal{P}$, then the Composition $P\parallel Q\in\mathcal{P}$;
\end{enumerate}

The standard BNF grammar of syntax of $\pi_{tc}$ with probabilism and guards can be summarized as follows:

$$P::=A(\widetilde{x})|\textbf{nil}|\tau.P| \overline{x}y.P | x(y).P| (x)P  |\phi.P|  P+P| P\boxplus_{\pi}P | P\parallel P.$$
\end{definition}

In $\overline{x}y$, $x(y)$ and $\overline{x}(y)$, $x$ is called the subject, $y$ is called the object and it may be free or bound.

\begin{definition}[Free variables]
The free names of a process $P$, $fn(P)$, are defined as follows.

\begin{enumerate}
  \item $fn(A(\widetilde{x}))\subseteq\{\widetilde{x}\}$;
  \item $fn(\textbf{nil})=\emptyset$;
  \item $fn(\tau.P)=fn(P)$;
  \item $fn(\phi.P)=fn(P)$;
  \item $fn(\overline{x}y.P)=fn(P)\cup\{x\}\cup\{y\}$;
  \item $fn(x(y).P)=fn(P)\cup\{x\}-\{y\}$;
  \item $fn((x)P)=fn(P)-\{x\}$;
  \item $fn(P+Q)=fn(P)\cup fn(Q)$;
  \item $fn(P\boxplus_{\pi}Q)=fn(P)\cup fn(Q)$;
  \item $fn(P\parallel Q)=fn(P)\cup fn(Q)$.
\end{enumerate}
\end{definition}

\begin{definition}[Bound variables]
Let $n(P)$ be the names of a process $P$, then the bound names $bn(P)=n(P)-fn(P)$.
\end{definition}

For each process constant schema $A(\widetilde{x})$, a defining equation of the form

$$A(\widetilde{x})\overset{\text{def}}{=}P$$

is assumed, where $P$ is a process with $fn(P)\subseteq \{\widetilde{x}\}$.

\begin{definition}[Substitutions]\label{subs6}
A substitution is a function $\sigma:\mathcal{N}\rightarrow\mathcal{N}$. For $x_i\sigma=y_i$ with $1\leq i\leq n$, we write $\{y_1/x_1,\cdots,y_n/x_n\}$ or
$\{\widetilde{y}/\widetilde{x}\}$ for $\sigma$. For a process $P\in\mathcal{P}$, $P\sigma$ is defined inductively as follows:
\begin{enumerate}
  \item if $P$ is a process constant $A(\widetilde{x})=A(x_1,\cdots,x_n)$, then $P\sigma=A(x_1\sigma,\cdots,x_n\sigma)$;
  \item if $P=\textbf{nil}$, then $P\sigma=\textbf{nil}$;
  \item if $P=\tau.P'$, then $P\sigma=\tau.P'\sigma$;
  \item if $P=\phi.P'$, then $P\sigma=\phi.P'\sigma$;
  \item if $P=\overline{x}y.P'$, then $P\sigma=\overline{x\sigma}y\sigma.P'\sigma$;
  \item if $P=x(y).P'$, then $P\sigma=x\sigma(y).P'\sigma$;
  \item if $P=(x)P'$, then $P\sigma=(x\sigma)P'\sigma$;
  \item if $P=P_1+P_2$, then $P\sigma=P_1\sigma+P_2\sigma$;
  \item if $P=P_1\boxplus_{\pi}P_2$, then $P\sigma=P_1\sigma\boxplus_{\pi}P_2\sigma$;
  \item if $P=P_1\parallel P_2$, then $P\sigma=P_1\sigma \parallel P_2\sigma$.
\end{enumerate}
\end{definition}

\subsubsection{Operational Semantics}

The operational semantics is defined by LTSs (labelled transition systems), and it is detailed by the following definition.

\begin{definition}[Semantics]\label{semantics6}
The operational semantics of $\pi_{tc}$ with probabilism and guards corresponding to the syntax in Definition \ref{syntax6} is defined by a series of transition rules, named $\textbf{PACT}$, $\textbf{PSUM}$, $\textbf{PBOX-SUM}$,
$\textbf{PIDE}$, $\textbf{PPAR}$, $\textbf{PRES}$ and named $\textbf{ACT}$, $\textbf{SUM}$,
$\textbf{IDE}$, $\textbf{PAR}$, $\textbf{COM}$, $\textbf{CLOSE}$, $\textbf{RES}$, $\textbf{OPEN}$ indicate that the rules are associated respectively with Prefix, Summation, Box-Summation,
Identity, Parallel Composition, Communication, and Restriction in Definition \ref{syntax6}. They are shown in Table \ref{PTRForPITC6} and \ref{TRForPITC6}.

\begin{center}
    \begin{table}
        \[\textbf{PTAU-ACT}\quad \frac{}{\langle \tau.P,s\rangle\rightsquigarrow \langle\breve{\tau}.P,s\rangle}\]

        \[\textbf{POUTPUT-ACT}\quad \frac{}{\langle\overline{x}y.P,s\rangle\rightsquigarrow \langle\breve{\overline{x}y}.P,s\rangle}\]

        \[\textbf{PINPUT-ACT}\quad \frac{}{\langle x(z).P,s\rangle\rightsquigarrow \langle\breve{x(z)}.P,s\rangle}\]

        \[\textbf{PPAR}\quad \frac{\langle P,s\rangle\rightsquigarrow \langle P',s\rangle\quad \langle Q,s\rangle\rightsquigarrow \langle Q',s\rangle}{\langle P\parallel Q,s\rangle\rightsquigarrow \langle P'\parallel Q',s\rangle}\]

        \[\textbf{PSUM}\quad \frac{\langle P,s\rangle\rightsquigarrow \langle P',s\rangle\quad \langle Q,s\rangle\rightsquigarrow \langle Q',s\rangle}{\langle P+Q,s\rangle\rightsquigarrow \langle P'+Q',s\rangle}\]

        \[\textbf{PBOX-SUM}\quad \frac{\langle P,s\rangle\rightsquigarrow \langle P',s\rangle}{\langle P\boxplus_{\pi}Q,s\rangle\rightsquigarrow \langle P',s\rangle}\]

        \[\textbf{PIDE}\quad\frac{\langle P\{\widetilde{y}/\widetilde{x}\},s\rangle\rightsquigarrow \langle P',s\rangle}{\langle A(\widetilde{y}),s\rangle\rightsquigarrow \langle P',s\rangle}\quad (A(\widetilde{x})\overset{\text{def}}{=}P)\]

        \[\textbf{PRES}\quad \frac{\langle P,s\rangle\rightsquigarrow \langle P',s\rangle}{\langle (y)P,s\rangle\rightsquigarrow \langle (y)P',s\rangle}\quad (y\notin n(\alpha))\]

        \caption{Probabilistic transition rules}
        \label{PTRForPITC6}
    \end{table}
\end{center}

\begin{center}
    \begin{table}
        \[\textbf{TAU-ACT}\quad \frac{}{\langle\breve{\tau}.P,s\rangle\xrightarrow{\tau}\langle P,\tau(s)\rangle}\]

        \[\textbf{OUTPUT-ACT}\quad \frac{}{\langle\breve{\overline{x}y}.P,s\rangle\xrightarrow{\overline{x}y}\langle P,s'\rangle}\]

        \[\textbf{INPUT-ACT}\quad \frac{}{\langle\breve{x(z)}.P,s\rangle\xrightarrow{x(w)}\langle P\{w/z\},s'\rangle}\quad (w\notin fn((z)P))\]

        \[\textbf{PAR}_1\quad \frac{\langle P,s\rangle\xrightarrow{\alpha}\langle P',s'\rangle\quad \langle Q,s\rangle\nrightarrow}{\langle P\parallel Q,s\rangle\xrightarrow{\alpha}\langle P'\parallel Q,s'\rangle}\quad (bn(\alpha)\cap fn(Q)=\emptyset)\]

        \[\textbf{PAR}_2\quad \frac{\langle Q,s\rangle\xrightarrow{\alpha}\langle Q',s'\rangle\quad \langle P,s\rangle\nrightarrow}{\langle P\parallel Q,s\rangle\xrightarrow{\alpha}\langle P\parallel Q',s'\rangle}\quad (bn(\alpha)\cap fn(P)=\emptyset)\]

        \[\textbf{PAR}_3\quad \frac{\langle P,s\rangle\xrightarrow{\alpha}\langle P',s'\rangle\quad \langle Q,s\rangle\xrightarrow{\beta}\langle Q',s''\rangle}{\langle P\parallel Q,s\rangle\xrightarrow{\{\alpha,\beta\}}\langle P'\parallel Q',s'\cup s''\rangle}\] $(\beta\neq\overline{\alpha}, bn(\alpha)\cap bn(\beta)=\emptyset, bn(\alpha)\cap fn(Q)=\emptyset,bn(\beta)\cap fn(P)=\emptyset)$

        \[\textbf{PAR}_4\quad \frac{\langle P,s\rangle\xrightarrow{x_1(z)}\langle P',s'\rangle\quad \langle Q,s\rangle\xrightarrow{x_2(z)}\langle Q',s''\rangle}{\langle P\parallel Q,s\rangle\xrightarrow{\{x_1(w),x_2(w)\}}\langle P'\{w/z\}\parallel Q'\{w/z\},s'\cup s''\rangle}\quad (w\notin fn((z)P)\cup fn((z)Q))\]

        \[\textbf{COM}\quad \frac{\langle P,s\rangle\xrightarrow{\overline{x}y}\langle P',s'\rangle\quad \langle Q,s\rangle\xrightarrow{x(z)}\langle Q',s''\rangle}{\langle P\parallel Q,s\rangle\xrightarrow{\tau}\langle P'\parallel Q'\{y/z\},s'\cup s''\rangle}\]

        \[\textbf{CLOSE}\quad \frac{\langle P,s\rangle\xrightarrow{\overline{x}(w)}\langle P',s'\rangle\quad \langle Q,s\rangle\xrightarrow{x(w)}\langle Q',s''\rangle}{\langle P\parallel Q,s\rangle\xrightarrow{\tau}\langle (w)(P'\parallel Q'),s'\cup s''\rangle}\]

        \caption{Action transition rules}
        \label{TRForPITC6}
    \end{table}
\end{center}

\begin{center}
    \begin{table}
        \[\textbf{SUM}_1\quad \frac{\langle P,s\rangle\xrightarrow{\alpha}\langle P',s'\rangle}{\langle P+Q,s\rangle\xrightarrow{\alpha}\langle P',s'\rangle}\]

        \[\textbf{SUM}_2\quad \frac{\langle P,s\rangle\xrightarrow{\{\alpha_1,\cdots,\alpha_n\}}\langle P',s'\rangle}{\langle P+Q,s\rangle\xrightarrow{\{\alpha_1,\cdots,\alpha_n\}}\langle P',s'\rangle}\]

        \[\textbf{IDE}_1\quad\frac{\langle P\{\widetilde{y}/\widetilde{x}\},s\rangle\xrightarrow{\alpha}\langle P',s'\rangle}{\langle A(\widetilde{y}),s\rangle\xrightarrow{\alpha}\langle P',s'\rangle}\quad (A(\widetilde{x})\overset{\text{def}}{=}P)\]

        \[\textbf{IDE}_2\quad\frac{\langle P\{\widetilde{y}/\widetilde{x}\},s\rangle\xrightarrow{\{\alpha_1,\cdots,\alpha_n\}}\langle P',s'\rangle} {\langle A(\widetilde{y}),s\rangle\xrightarrow{\{\alpha_1,\cdots,\alpha_n\}}\langle P',s'\rangle}\quad (A(\widetilde{x})\overset{\text{def}}{=}P)\]

        \[\textbf{RES}_1\quad \frac{\langle P,s\rangle\xrightarrow{\alpha}\langle P',s'\rangle}{\langle (y)P,s\rangle\xrightarrow{\alpha}\langle (y)P',s'\rangle}\quad (y\notin n(\alpha))\]

        \[\textbf{RES}_2\quad \frac{\langle P,s\rangle\xrightarrow{\{\alpha_1,\cdots,\alpha_n\}}\langle P',s'\rangle}{\langle (y)P,s\rangle\xrightarrow{\{\alpha_1,\cdots,\alpha_n\}}\langle (y)P',s'\rangle}\quad (y\notin n(\alpha_1)\cup\cdots\cup n(\alpha_n))\]

        \[\textbf{OPEN}_1\quad \frac{\langle P,s\rangle\xrightarrow{\overline{x}y}\langle P',s'\rangle}{\langle (y)P,s\rangle\xrightarrow{\overline{x}(w)}\langle P'\{w/y\},s'\rangle} \quad (y\neq x, w\notin fn((y)P'))\]

        \[\textbf{OPEN}_2\quad \frac{\langle P,s\rangle\xrightarrow{\{\overline{x}_1 y,\cdots,\overline{x}_n y\}}\langle P',s'\rangle}{\langle(y)P,s\rangle\xrightarrow{\{\overline{x}_1(w),\cdots,\overline{x}_n(w)\}}\langle P'\{w/y\},s'\rangle} \quad (y\neq x_1\neq\cdots\neq x_n, w\notin fn((y)P'))\]

        \caption{Action transition rules (continuing)}
        \label{TRForPITC62}
    \end{table}
\end{center}
\end{definition}

\subsubsection{Properties of Transitions}

\begin{proposition}
\begin{enumerate}
  \item If $\langle P,s\rangle\xrightarrow{\alpha}\langle P',s'\rangle$ then
  \begin{enumerate}
    \item $fn(\alpha)\subseteq fn(P)$;
    \item $fn(P')\subseteq fn(P)\cup bn(\alpha)$;
  \end{enumerate}
  \item If $\langle P,s\rangle\xrightarrow{\{\alpha_1,\cdots,\alpha_n\}}\langle P',s\rangle$ then
  \begin{enumerate}
    \item $fn(\alpha_1)\cup\cdots\cup fn(\alpha_n)\subseteq fn(P)$;
    \item $fn(P')\subseteq fn(P)\cup bn(\alpha_1)\cup\cdots\cup bn(\alpha_n)$.
  \end{enumerate}
\end{enumerate}
\end{proposition}

\begin{proof}
By induction on the depth of inference.
\end{proof}

\begin{proposition}
Suppose that $\langle P,s\rangle\xrightarrow{\alpha(y)}\langle P',s'\rangle$, where $\alpha=x$ or $\alpha=\overline{x}$, and $x\notin n(P)$, then there exists some $P''\equiv_{\alpha}P'\{z/y\}$,
$\langle P,s\rangle\xrightarrow{\alpha(z)}\langle P'',s''\rangle$.
\end{proposition}

\begin{proof}
By induction on the depth of inference.
\end{proof}

\begin{proposition}
If $\langle P,s\rangle\xrightarrow{\alpha} \langle P',s'\rangle$, $bn(\alpha)\cap fn(P'\sigma)=\emptyset$, and $\sigma\lceil bn(\alpha)=id$, then there exists some $P''\equiv_{\alpha}P'\sigma$,
$\langle P,s\rangle\sigma\xrightarrow{\alpha\sigma}\langle P'',s''\rangle$.
\end{proposition}

\begin{proof}
By the definition of substitution (Definition \ref{subs6}) and induction on the depth of inference.
\end{proof}

\begin{proposition}
\begin{enumerate}
  \item If $\langle P\{w/z\},s\rangle\xrightarrow{\alpha}\langle P',s'\rangle$, where $w\notin fn(P)$ and $bn(\alpha)\cap fn(P,w)=\emptyset$, then there exist some $Q$ and $\beta$ with $Q\{w/z\}\equiv_{\alpha}P'$ and
  $\beta\sigma=\alpha$, $\langle P,s\rangle\xrightarrow{\beta}\langle Q,s'\rangle$;
  \item If $\langle P\{w/z\},s\rangle\xrightarrow{\{\alpha_1,\cdots,\alpha_n\}}\langle P',s'\rangle$, where $w\notin fn(P)$ and $bn(\alpha_1)\cap\cdots\cap bn(\alpha_n)\cap fn(P,w)=\emptyset$, then there exist some $Q$
  and $\beta_1,\cdots,\beta_n$ with $Q\{w/z\}\equiv_{\alpha}P'$ and $\beta_1\sigma=\alpha_1,\cdots,\beta_n\sigma=\alpha_n$, $\langle P,s\rangle\xrightarrow{\{\beta_1,\cdots,\beta_n\}}\langle Q,s'\rangle$.
\end{enumerate}

\end{proposition}

\begin{proof}
By the definition of substitution (Definition \ref{subs6}) and induction on the depth of inference.
\end{proof}

\subsection{Strong Bisimilarities}\label{s6}

\subsubsection{Laws and Congruence}

\begin{theorem}
$\equiv_{\alpha}$ are strongly probabilistic truly concurrent bisimulations. That is, if $P\equiv_{\alpha}Q$, then,
\begin{enumerate}
  \item $P\sim_{pp}  Q$;
  \item $P\sim_{ps}  Q$;
  \item $P\sim_{php}  Q$;
  \item $P\sim_{phhp}  Q$.
\end{enumerate}
\end{theorem}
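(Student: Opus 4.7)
The plan is to follow the template established by the earlier analogous theorems (for the non-probabilistic and for the FR versions), adapting it to the probabilistic setting by adding a clause for probabilistic transitions. The overall structure proceeds by first establishing a ``transition transfer lemma'' via induction on the depth of inference, and then reading off the four bisimilarity statements from their definitions.

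First I would prove, by induction on the depth of inference in the rules of Tables \ref{PTRForPITC6}--\ref{TRForPITC62}, the following transfer properties for alpha-equivalent processes $P \equiv_\alpha Q$: (i) if $\langle P,s\rangle \rightsquigarrow \langle P',s\rangle$ then $\langle Q,s\rangle \rightsquigarrow \langle Q',s\rangle$ with $P' \equiv_\alpha Q'$; (ii) if $\alpha$ is a free action and $\langle P,s\rangle \rightsquigarrow \xrightarrow{\alpha} \langle P',s'\rangle$ then there exists $Q'$ with $P' \equiv_\alpha Q'$ and $\langle Q,s\rangle \rightsquigarrow \xrightarrow{\alpha} \langle Q',s'\rangle$; (iii) if $\langle P,s\rangle \rightsquigarrow \xrightarrow{a(y)} \langle P',s'\rangle$ with $a = x$ or $a = \overline{x}$ and $z \notin n(Q)$, then for some $Q'$ with $P'\{z/y\} \equiv_\alpha Q'$, $\langle Q,s\rangle \rightsquigarrow \xrightarrow{a(z)} \langle Q',s'\rangle$. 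The inductive cases match one per formation rule in Definition \ref{syntax6}; the probabilistic rules \textbf{PTAU-ACT}, \textbf{POUTPUT-ACT}, \textbf{PINPUT-ACT}, \textbf{PPAR}, \textbf{PSUM}, \textbf{PBOX-SUM}, \textbf{PIDE}, \textbf{PRES} are handled uniformly because the probabilistic transition does not consume variables, and then the action cases proceed exactly as in the analogous theorem of Section \ref{s4}.

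Second, I would use the relation
\[
R = \{(P,Q) : P \equiv_\alpha Q\}
\]
lifted appropriately (as a posetal relation for the hp- and hhp-cases, and as a relation on configurations for the pomset- and step-cases) and verify the clauses of the respective bisimilarity definitions. The probabilistic clause is immediate: alpha-equivalence does not affect the choice branches in $\boxplus_\pi$, so $\mu(C_1, C) = \mu(C_2, C)$ holds on $R$-equivalence classes, and $[\surd]_R = \{\surd\}$ is trivial. The action clauses follow directly from the transfer properties (ii) and (iii) above; for the hp- and hhp-versions one additionally checks that the canonical isomorphism between the configurations of alpha-equivalent processes is preserved when extended by $f[e_1 \mapsto e_2]$, and downward closure is automatic since the identity-on-events relation is downward closed.

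The main obstacle will be the bound-name case (clause (iii)), where alpha-conversion in input prefixes $x(y).P$ interacts with the side condition $w \notin fn((z)P)$ in \textbf{INPUT-ACT} and with the late-bisimulation quantification over all $w$. Care must also be taken in the \textbf{CLOSE} and \textbf{OPEN} rules to ensure that bound output actions $\overline{x}(y)$ are tracked through scope extrusion while preserving alpha-equivalence of the residuals; this requires choosing the fresh name in the alpha-renaming consistently on both sides. Once these bound-name bookkeeping issues are settled in the inductive step, the conclusions $P \sim_{pp} Q$, $P \sim_{ps} Q$, $P \sim_{php} Q$, and $P \sim_{phhp} Q$ follow immediately from the definitions.
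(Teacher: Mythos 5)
Your proposal follows essentially the same route as the paper's own proof: establish, by induction on the depth of inference, the transfer of (probabilistic-then-action) transitions $\rightsquigarrow\xrightarrow{\alpha}$ and $\rightsquigarrow\xrightarrow{a(y)}$ across $\equiv_{\alpha}$, and then conclude each of $\sim_{pp}$, $\sim_{ps}$, $\sim_{php}$, $\sim_{phhp}$ directly from the corresponding definition using the relation induced by alpha-equivalence. Your additional remarks on the $\mu$-clause, the posetal lifting, and the bound-name bookkeeping are elaborations of steps the paper leaves implicit, not a different argument.
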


\begin{proof}
By induction on the depth of inference, we can get the following facts:

\begin{enumerate}
  \item If $\alpha$ is a free action and $\langle P,s\rangle\rightsquigarrow\xrightarrow{\alpha}\langle P',s'\rangle$, then equally for some $Q'$ with $P'\equiv_{\alpha}Q'$,
  $\langle Q,s\rangle\rightsquigarrow\xrightarrow{\alpha}\langle Q',s'\rangle$;
  \item If $\langle P,s\rangle\rightsquigarrow\xrightarrow{a(y)}\langle P',s'\rangle$ with $a=x$ or $a=\overline{x}$ and $z\notin n(Q)$, then equally for some $Q'$ with $P'\{z/y\}\equiv_{\alpha}Q'$,
  $\langle Q,s\rangle\rightsquigarrow\xrightarrow{a(z)}\langle Q',s'\rangle$.
\end{enumerate}

Then, we can get:

\begin{enumerate}
  \item by the definition of strongly probabilistic pomset bisimilarity, $P\sim_{pp}  Q$;
  \item by the definition of strongly probabilistic step bisimilarity, $P\sim_{ps}  Q$;
  \item by the definition of strongly probabilistic hp-bisimilarity, $P\sim_{php}  Q$;
  \item by the definition of strongly probabilistic hhp-bisimilarity, $P\sim_{phhp}  Q$.
\end{enumerate}
\end{proof}

\begin{proposition}[Summation laws for strongly probabilistic pomset bisimulation] The Summation laws for strongly probabilistic pomset bisimulation are as follows.

\begin{enumerate}
  \item $P+Q\sim_{pp}  Q+P$;
  \item $P+(Q+R)\sim_{pp}  (P+Q)+R$;
  \item $P+P\sim_{pp}  P$;
  \item $P+\textbf{nil}\sim_{pp}  P$.
\end{enumerate}

\end{proposition}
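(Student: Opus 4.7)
The plan is to establish each of the four laws by exhibiting an explicit candidate relation and verifying that it satisfies all clauses of a strongly probabilistic pomset bisimulation. For each law of the form $P_L \sim_{pp} P_R$, I would take the relation
\[
R = \{(P_L, P_R)\} \cup \textbf{Id}
\]
and check that it meets the forward-action, probabilistic, PDF, and termination clauses of Definition of strongly probabilistic pomset bisimulation (specialized to the non-reverse, non-guard setting, since the syntax here has no reverse prefixes). The identity component $\textbf{Id}$ handles all derivatives automatically, so the real content is always the single pair $(P_L, P_R)$.

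First I would handle commutativity $P+Q \sim_{pp} Q+P$. By inspecting the probabilistic transition rule $\textbf{PSUM}$ (Table \ref{PTRForPITC5}), one sees that $P+Q \rightsquigarrow P'+Q'$ iff $P\rightsquigarrow P'$ and $Q\rightsquigarrow Q'$, which is symmetric in $P,Q$, giving matching probabilistic moves with equal cumulative probability $\mu$. Then for action transitions, rules $\textbf{SUM}_1$ and $\textbf{SUM}_2$ allow either summand to fire, so every transition $P+Q \xrightarrow{X} P''$ can be matched by $Q+P \xrightarrow{X} P''$ and vice versa, landing in $\textbf{Id}$. Associativity (law 2) is analogous: the probabilistic rules distribute componentwise, and the action rules permit any one of $P,Q,R$ to fire regardless of the bracketing. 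For idempotence $P+P \sim_{pp} P$, the key observation is that both sides have the same set of action transitions (any move of $P$ appears once on the right and is available as either left or right summand on the left), and $P+P\rightsquigarrow P'+P'$ matches $P\rightsquigarrow P'$ with $P'+P'$ related to $P'$ via $\textbf{Id}$ only if $P'+P' \equiv P'$—here I would fold $P'+P'$ into the candidate relation, enlarging $R$ to $\{(P+P,P)\} \cup \textbf{Id}$ and noting that syntactic derivatives close the coinductive check. Law 4, $P+\textbf{nil} \sim_{pp} P$, is immediate since $\textbf{nil}$ contributes no action transitions and its probabilistic behavior is trivial, so $P+\textbf{nil}$ and $P$ have identical outgoing action and probabilistic transitions.

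The routine piece of each verification is the bookkeeping across the four action clauses (fresh actions, bound input, two bound inputs, bound output) together with the probabilistic clause $\mu(C_1,C)=\mu(C_2,C)$ and the terminal condition $[\surd]_R = \{\surd\}$; since all laws are structural congruences that preserve the outgoing transition set and the PDF componentwise, none of these require more than a direct case analysis on the last rule applied in the derivation. This is precisely why the paper's style is to state ``we omit it.''

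The main obstacle I anticipate is the probabilistic clause: one must be careful that the PDF $\mu$ agrees on equivalence classes modulo $R$, not merely that individual probabilistic transitions match. For associativity and commutativity this is automatic because $\textbf{PSUM}$ propagates probabilistic choice symmetrically across summands, but for idempotence $P+P \sim_{pp} P$ one must verify that the cumulative probability collapsing two identical summands into one does not disturb the $\mu$-equality on $\mathcal{C}(\mathcal{E})/R$. I would handle this by showing that the two copies of $P$ inside $P+P$ lead to the same equivalence class under $R$, so their probabilities add up consistently with the single copy on the right-hand side.
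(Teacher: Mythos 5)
Your proposal follows essentially the same route as the paper: for each law you exhibit the candidate relation $R=\{(P_L,P_R)\}\cup\textbf{Id}$ and verify the bisimulation clauses, which is exactly the relation the paper names before omitting the verification. Your additional care about closing the relation under derivatives for the idempotence case (so that $(P'+P',P')$ is covered) and about the $\mu$-equality on classes of $\mathcal{C}(\mathcal{E})/R$ only fills in details the paper leaves implicit.
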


\begin{proof}
\begin{enumerate}
  \item $P+Q\sim_{pp}  Q+P$. It is sufficient to prove the relation $R=\{(P+Q, Q+P)\}\cup \textbf{Id}$ is a strongly probabilistic pomset bisimulation, we omit it;
  \item $P+(Q+R)\sim_{pp}  (P+Q)+R$. It is sufficient to prove the relation $R=\{(P+(Q+R), (P+Q)+R)\}\cup \textbf{Id}$ is a strongly probabilistic pomset bisimulation, we omit it;
  \item $P+P\sim_{pp}  P$. It is sufficient to prove the relation $R=\{(P+P, P)\}\cup \textbf{Id}$ is a strongly probabilistic pomset bisimulation, we omit it;
  \item $P+\textbf{nil}\sim_{pp}  P$. It is sufficient to prove the relation $R=\{(P+\textbf{nil}, P)\}\cup \textbf{Id}$ is a strongly probabilistic pomset bisimulation, we omit it.
\end{enumerate}
\end{proof}

\begin{proposition}[Summation laws for strongly probabilistic step bisimulation] The Summation laws for strongly probabilistic step bisimulation are as follows.
\begin{enumerate}
  \item $P+Q\sim_{ps}  Q+P$;
  \item $P+(Q+R)\sim_{ps}  (P+Q)+R$;
  \item $P+P\sim_{ps}  P$;
  \item $P+\textbf{nil}\sim_{ps}  P$.
\end{enumerate}
\end{proposition}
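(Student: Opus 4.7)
The plan is to prove each of the four summation laws by exhibiting a concrete witness relation and verifying it is a strongly probabilistic step bisimulation, following the same template already used for strongly probabilistic pomset bisimulation in the preceding proposition. Concretely, for each clause I set $R = \{(\text{LHS}, \text{RHS})\} \cup \textbf{Id}$, where $\textbf{Id}$ is the identity relation on processes, and then check that $R$ satisfies the five conditions in the definition of strongly probabilistic step bisimulation (the action-transition clauses 1--4, the probabilistic-transition clause, and the PDF/termination conditions).

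First I would handle the action-transition matching. For the step version, I only need to consider transitions labelled by pairwise concurrent multisets $X$. By rules $\textbf{SUM}_1$ and $\textbf{SUM}_2$, any step transition $P+Q \xrightarrow{X} P'$ comes from either $P \xrightarrow{X} P'$ or $Q \xrightarrow{X} P'$; by symmetry the same holds for $Q+P$, so for clause (1) the action-transition clauses match trivially. For associativity (2), both $P+(Q+R)$ and $(P+Q)+R$ admit exactly the same step transitions, namely those of $P$, $Q$, or $R$, so matching is by inspection. For idempotency (3), a step from $P+P$ must come from $P$ on either side and is matched by the same step from $P$; conversely, any step from $P$ is replicated by either summand of $P+P$. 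For (4), since $\textbf{nil}$ has no outgoing action or probabilistic transitions, the only steps of $P+\textbf{nil}$ are those of $P$. The bound-name conditions for input and bound-output actions (clauses 2--4 of the bisimulation definition) carry over verbatim from the $\textbf{SUM}_1/\textbf{SUM}_2$ rules without additional complication.

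Next I would address the probabilistic-transition clause. Here the rule $\textbf{PSUM}$ forces the two summands to take a probabilistic step together: $P+Q \rightsquigarrow P'+Q'$ whenever $P \rightsquigarrow P'$ and $Q \rightsquigarrow Q'$. Thus for commutativity (1), $P+Q \rightsquigarrow P'+Q'$ is matched by $Q+P \rightsquigarrow Q'+P'$, and the resulting pair lies in $\textbf{Id}$-modulo-commutativity, which is again handled by an application of $R$ at one further level. For associativity (2), both sides resolve probabilistic choices in $P$, $Q$, and $R$ simultaneously to reach $P'+(Q'+R')$ and $(P'+Q')+R'$ respectively, again in $R$. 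For idempotency (3), $P+P \rightsquigarrow P'+P'$ is matched by $P \rightsquigarrow P'$, and we need $(P'+P', P')\in R$; this is exactly a smaller instance of the same law, so a closure argument is required. For (4), $P+\textbf{nil} \rightsquigarrow P'+\textbf{nil}$ matches $P\rightsquigarrow P'$ and the pair $(P'+\textbf{nil},P')$ again lies in $R$.

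The main obstacle, and the only step that is not purely syntactic pattern-matching, is verifying the PDF condition $\mu(C_1, C) = \mu(C_2, C)$ for each equivalence class $C$ of configurations under $R$, together with the coincidence of the probabilistic transitions themselves. For cases (3) and (4) in particular, one must be careful that collapsing $P+P$ to $P$ (or $P+\textbf{nil}$ to $P$) preserves the total probability mass on each equivalence class rather than doubling it; this is where the condition $[\surd]_R = \{\surd\}$ and the quotienting by $R$ in the PDF clause do the real work. I expect to handle this by observing that the $\textbf{PBOX-SUM}$ rule is never involved at the top level here (since there is no $\boxplus_{\pi}$ at the head), so the probabilistic resolutions in $P+P$ and $P$ correspond bijectively after quotienting by $R$, making the measures agree on every $R$-class. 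Once this is checked, combining with the action-transition and termination clauses completes the verification that $R$ is a strongly probabilistic step bisimulation in each of the four cases.
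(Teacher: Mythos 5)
Your proposal takes exactly the same approach as the paper: for each law it exhibits the witness relation $R=\{(\mathrm{LHS},\mathrm{RHS})\}\cup \textbf{Id}$ and claims it is a strongly probabilistic step bisimulation, except that the paper omits the verification entirely while you sketch it. Your additional observations --- that $R$ must really be read as closed under derivatives (e.g.\ the pair $(P'+Q',Q'+P')$ arising from a $\textbf{PSUM}$ step, or the smaller instance $(P'+P',P')$ in the idempotency case) and that the PDF condition $\mu(C_1,C)=\mu(C_2,C)$ is the only non-syntactic check --- are exactly the details the paper glosses over, and they are needed for the omitted verification to go through.
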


\begin{proof}
\begin{enumerate}
  \item $P+Q\sim_{ps}  Q+P$. It is sufficient to prove the relation $R=\{(P+Q, Q+P)\}\cup \textbf{Id}$ is a strongly probabilistic step bisimulation, we omit it;
  \item $P+(Q+R)\sim_{ps}  (P+Q)+R$. It is sufficient to prove the relation $R=\{(P+(Q+R), (P+Q)+R)\}\cup \textbf{Id}$ is a strongly probabilistic step bisimulation, we omit it;
  \item $P+P\sim_{ps}  P$. It is sufficient to prove the relation $R=\{(P+P, P)\}\cup \textbf{Id}$ is a strongly probabilistic step bisimulation, we omit it;
  \item $P+\textbf{nil}\sim_{ps}  P$. It is sufficient to prove the relation $R=\{(P+\textbf{nil}, P)\}\cup \textbf{Id}$ is a strongly probabilistic step bisimulation, we omit it.
\end{enumerate}
\end{proof}

\begin{proposition}[Summation laws for strongly probabilistic hp-bisimulation] The Summation laws for strongly probabilistic hp-bisimulation are as follows.
\begin{enumerate}
  \item $P+Q\sim_{php}  Q+P$;
  \item $P+(Q+R)\sim_{php}  (P+Q)+R$;
  \item $P+P\sim_{php}  P$;
  \item $P+\textbf{nil}\sim_{php}  P$.
\end{enumerate}
\end{proposition}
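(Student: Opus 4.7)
The plan is to follow exactly the template established by the preceding summation-law propositions for strongly probabilistic pomset and step bisimilarity, exhibiting for each of the four equations a specific posetal relation of the form $R=\{(\langle P',s\rangle, f, \langle Q',s\rangle)\}\cup \overline{\textbf{Id}}$, where $P'$ and $Q'$ are the two sides of the equation and $\overline{\textbf{Id}}$ denotes the identity posetal relation on configurations. For each case I will check that $R$ satisfies all clauses of the definition of strongly probabilistic hp-bisimulation: the three action-transition clauses (fresh action, input $x(y)$ with $y$ bound, bound output $\overline{x}(y)$), the probabilistic transition clause $\langle C_1,s\rangle\xrsquigarrow{\pi}\langle C_1^{\pi},s\rangle$ matched on the other side, the PDF condition $\mu(C_1,C)=\mu(C_2,C)$, and the termination condition $[\surd]_R=\{\surd\}$.

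First I would treat commutativity $P+Q\sim_{php} Q+P$: the inference rules $\textbf{PSUM}$ and $\textbf{SUM}_1$, $\textbf{SUM}_2$ are symmetric in their two operands, so any probabilistic move $\langle P+Q,s\rangle\rightsquigarrow\langle P'+Q',s\rangle$ is mirrored by $\langle Q+P,s\rangle\rightsquigarrow\langle Q'+P',s\rangle$, and any subsequent action transition from $P'+Q'$ on a fresh action $\alpha$, or input $x(y)$, or bound output $\overline{x}(y)$, comes from exactly one summand and is matched identically from $Q'+P'$; in each case the posetal map $f$ is updated to $f[e_1\mapsto e_2]$ with $e_1=e_2$, keeping the result in the identity part of $R$. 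Associativity $P+(Q+R)\sim_{php}(P+Q)+R$ is handled in the same way by observing that $\textbf{PSUM}$ distributes over nested sums and that $\textbf{SUM}$ traces a single summand regardless of the bracketing. Idempotency $P+P\sim_{php}P$ uses that any probabilistic move of $P+P$ splits into two identical probabilistic branches that collapse by the PDF condition to the single move of $P$; the action transitions on the two sides then coincide. The unit law $P+\textbf{nil}\sim_{php}P$ uses that $\textbf{nil}$ contributes no transitions, so the only probabilistic and action moves are those of $P$.

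The main obstacle will be the PDF condition combined with the posetal product structure: unlike pomset and step bisimilarity, hp-bisimilarity tracks an order-isomorphism $f:C_1\to C_2$ between configurations, so I must verify that after a matched probabilistic transition $\langle C_1,s\rangle\xrsquigarrow{\pi}\langle C_1^{\pi},s\rangle$ the triple $(\langle C_1^{\pi},s\rangle, f, \langle C_2^{\pi},s\rangle)$ still lies in $R$ with the \emph{same} $f$, and that the measure equality $\mu(C_1,C)=\mu(C_2,C)$ holds for every equivalence class $C\in\mathcal{C}(\mathcal{E})/R$. For the idempotency case this requires carefully summing the probabilities of the two structurally identical copies of $P$ in $P+P$ and observing that they land in the same class of $R$; for the other three laws the probabilities match summand-by-summand so the PDF condition is immediate. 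Once these checks are made, the argument is routine as advertised, and the proof of each item is omitted in line with the style of the surrounding propositions.
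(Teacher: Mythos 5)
Your proposal matches the paper's proof exactly in approach: for each law the paper exhibits the relation $R=\{(\text{LHS},\text{RHS})\}\cup\textbf{Id}$ and asserts (without writing out the verification) that it is a strongly probabilistic hp-bisimulation. Your version is in fact slightly more careful than the paper's, since you phrase $R$ as a posetal relation carrying the isomorphism $f$ and explicitly note the PDF condition for the idempotency case, but the underlying argument is the same.
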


\begin{proof}
\begin{enumerate}
  \item $P+Q\sim_{php}  Q+P$. It is sufficient to prove the relation $R=\{(P+Q, Q+P)\}\cup \textbf{Id}$ is a strongly probabilistic hp-bisimulation, we omit it;
  \item $P+(Q+R)\sim_{php}  (P+Q)+R$. It is sufficient to prove the relation $R=\{(P+(Q+R), (P+Q)+R)\}\cup \textbf{Id}$ is a strongly probabilistic hp-bisimulation, we omit it;
  \item $P+P\sim_{php}  P$. It is sufficient to prove the relation $R=\{(P+P, P)\}\cup \textbf{Id}$ is a strongly probabilistic hp-bisimulation, we omit it;
  \item $P+\textbf{nil}\sim_{php}  P$. It is sufficient to prove the relation $R=\{(P+\textbf{nil}, P)\}\cup \textbf{Id}$ is a strongly probabilistic hp-bisimulation, we omit it.
\end{enumerate}
\end{proof}

\begin{proposition}[Summation laws for strongly probabilistic hhp-bisimulation] The Summation laws for strongly probabilistic hhp-bisimulation are as follows.
\begin{enumerate}
  \item $P+Q\sim_{phhp}  Q+P$;
  \item $P+(Q+R)\sim_{phhp}  (P+Q)+R$;
  \item $P+P\sim_{phhp}  P$;
  \item $P+\textbf{nil}\sim_{phhp}  P$.
\end{enumerate}
\end{proposition}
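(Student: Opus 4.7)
The plan is to mirror the routine pattern used throughout this section: for each of the four summation laws, exhibit a candidate posetal relation of the form $R=\{(\mathit{LHS},id,\mathit{RHS})\}\cup\textbf{Id}$ (with the obvious identity posetal lifting) and verify the five clauses of the strongly probabilistic hhp-bisimulation definition, namely the forward action clauses (fresh $\alpha$, bound input $x(y)$, parallel input pairs $\{x_1(y),x_2(y)\}$, bound output $\overline{x}(y)$), the probabilistic transition clause $\xrsquigarrow{\pi}$, the PDF-matching condition $\mu(C_1,C)=\mu(C_2,C)$, and the termination clause $[\surd]_R=\{\surd\}$; then additionally verify downward closure, which is what distinguishes hhp from hp.

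First I would handle commutativity ($P+Q\sim_{phhp}Q+P$) and associativity ($P+(Q+R)\sim_{phhp}(P+Q)+R$) by inspecting the probabilistic transition rule \textbf{PSUM} together with the action rules $\textbf{SUM}_1$ and $\textbf{SUM}_2$: every probabilistic move $P+Q\rightsquigarrow P'+Q'$ is matched by $Q+P\rightsquigarrow Q'+P'$, and each subsequent action summand on the left corresponds one-for-one with a summand on the right, driving the continuation into $\textbf{Id}$. Idempotence ($P+P\sim_{phhp}P$) is established by noting that both copies of $P$ enable exactly the same probabilistic and action transitions with identical resulting configurations, so the PDF $\mu(P+P,\cdot)$ coincides with $\mu(P,\cdot)$ on each equivalence class, and again every continuation lies in $\textbf{Id}$. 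The unit law ($P+\textbf{nil}\sim_{phhp}P$) follows because $\textbf{nil}$ contributes no action transitions and its probabilistic behavior is trivial, so only the $P$-summand can move.

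The routine action-clause verification is essentially identical to the corresponding strongly probabilistic hp-bisimulation propositions already proved above, so I would simply invoke those arguments verbatim and focus attention on the two genuinely new obligations. The first is the probabilistic clause together with the PDF condition: here the only subtlety is that \textbf{PSUM} requires \emph{both} summands to make a probabilistic move in lockstep, so on the left the transition $P+Q\rightsquigarrow P'+Q'$ must be shown to pair with a corresponding lockstep move on the right for each law; for idempotence this requires observing that the diagonal probabilistic moves $P\rightsquigarrow P'$ generate $P+P\rightsquigarrow P'+P'$, whose $R$-equivalence class matches that of $P\rightsquigarrow P'$.

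The main obstacle will be the downward-closure requirement. Unlike the hp case, I must check that for any $(\langle C_1',s'\rangle,f',\langle C_2',s'\rangle)\in R$ and any pointwise-smaller triple $(\langle C_1,s\rangle,f,\langle C_2,s\rangle)$, the smaller triple is again in $R$. Since $R$ augments $\textbf{Id}$ only with the single root pair $(\mathit{LHS},id_\emptyset,\mathit{RHS})$ at the empty configuration, any proper sub-configuration of a reachable pair already lies in $\textbf{Id}$, so downward closure reduces to checking that the root pair itself has no strictly smaller element outside $\textbf{Id}$ and that $\textbf{Id}$ is trivially downward closed. This is where I would spend the most care, particularly to rule out pathologies analogous to the hhp-failure in the expansion law (which, per the theorem on expansion, fails even for $\sim_{phhp}$); here the summation operators introduce no parallelism, so no such obstruction arises and downward closure goes through cleanly.
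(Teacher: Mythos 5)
Your proposal is correct and follows essentially the same route as the paper: the paper's proof simply exhibits the candidate relation $R=\{(P+Q,Q+P)\}\cup\textbf{Id}$ (and its analogues for the other three laws) and omits the verification that it is a strongly probabilistic hhp-bisimulation. Your additional care over the posetal form of the relation, the lockstep behaviour forced by \textbf{PSUM}, the PDF-matching condition, and the downward-closure check is exactly the content the paper leaves implicit, and your observation that downward closure reduces to the triviality of $\textbf{Id}$ plus the root pair is the right way to discharge the only obligation that distinguishes the hhp case from the hp case.
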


\begin{proof}
\begin{enumerate}
  \item $P+Q\sim_{phhp}  Q+P$. It is sufficient to prove the relation $R=\{(P+Q, Q+P)\}\cup \textbf{Id}$ is a strongly probabilistic hhp-bisimulation, we omit it;
  \item $P+(Q+R)\sim_{phhp}  (P+Q)+R$. It is sufficient to prove the relation $R=\{(P+(Q+R), (P+Q)+R)\}\cup \textbf{Id}$ is a strongly probabilistic hhp-bisimulation, we omit it;
  \item $P+P\sim_{phhp}  P$. It is sufficient to prove the relation $R=\{(P+P, P)\}\cup \textbf{Id}$ is a strongly probabilistic hhp-bisimulation, we omit it;
  \item $P+\textbf{nil}\sim_{phhp}  P$. It is sufficient to prove the relation $R=\{(P+\textbf{nil}, P)\}\cup \textbf{Id}$ is a strongly probabilistic hhp-bisimulation, we omit it.
\end{enumerate}
\end{proof}

\begin{proposition}[Box-Summation laws for strongly probabilistic pomset bisimulation]
The Box-Summation laws for strongly probabilistic pomset bisimulation are as follows.

\begin{enumerate}
  \item $P\boxplus_{\pi} Q\sim_{pp}  Q\boxplus_{1-\pi} P$;
  \item $P\boxplus_{\pi}(Q\boxplus_{\rho} R)\sim_{pp}  (P\boxplus_{\frac{\pi}{\pi+\rho-\pi\rho}}Q)\boxplus_{\pi+\rho-\pi\rho} R$;
  \item $P\boxplus_{\pi}P\sim_{pp}  P$;
  \item $P\boxplus_{\pi}\textbf{nil}\sim_{pp}  P$.
\end{enumerate}
\end{proposition}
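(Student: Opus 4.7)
The plan is to follow the same pattern used throughout this subsection: for each of the four equations, exhibit an explicit candidate relation of the form $R = \{(\text{LHS}, \text{RHS})\} \cup \textbf{Id}$ and verify that $R$ satisfies every clause of the definition of strongly probabilistic pomset bisimulation, namely: matching of action pomset transitions, matching of probabilistic transitions $\xrsquigarrow{\pi}$, preservation of the PDF $\mu$ on $R$-equivalence classes, and $[\surd]_R = \{\surd\}$. Since the non-identity pair in each $R$ consists of a process whose only initial moves are probabilistic (box-sums introduce no action transitions directly, by $\textbf{PBOX-SUM}$), the work reduces to checking the probabilistic clause plus showing that the reachable successors are related by $\textbf{Id}$, at which point the action-matching clauses follow from reflexivity.

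For (1), I will check that $P\boxplus_{\pi}Q \xrsquigarrow{} P'$ iff $Q\boxplus_{1-\pi}P \xrsquigarrow{} P'$, and that in either direction the probability of reaching $[P']_R = [P]_R$ is $\pi$ on both sides, while reaching $[Q]_R$ has probability $1-\pi$ on both sides; the Id-closure then carries the argument forward. For (2), I would compute the cumulative probability of each outcome under $\textbf{PBOX-SUM}$: the left process assigns $\pi$ to $P$, $(1-\pi)\rho$ to $Q$, and $(1-\pi)(1-\rho)$ to $R$, while the right process assigns $(\pi+\rho-\pi\rho)\cdot \frac{\pi}{\pi+\rho-\pi\rho} = \pi$ to $P$, $(\pi+\rho-\pi\rho)\cdot \frac{\rho-\pi\rho}{\pi+\rho-\pi\rho} = (1-\pi)\rho$ to $Q$, and $1-(\pi+\rho-\pi\rho) = (1-\pi)(1-\rho)$ to $R$, so the PDFs coincide on the equivalence classes $[P], [Q], [R]$. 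For (3), both sides reach $P$ with total probability $\pi + (1-\pi) = 1$, so the PDFs trivially agree. For (4), only the branch with $P$ contributes a genuine continuation, and the $\textbf{nil}$-branch produces no subsequent action transitions, so both sides are pomset bisimilar to $P$ in the sense of the paper's conventions on $\boxplus_{\pi}$.

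The main obstacle will be clause (4) on PDF preservation: one must be careful that $\mu(\text{LHS}, C)$ and $\mu(\text{RHS}, C)$ agree for every $R$-equivalence class $C$, not just on the atomic components $[P],[Q],[R]$. In the associativity case especially, one must check that the two different syntactic decompositions partition the single-step probabilistic successors into classes whose cumulative masses under $\mu^*$ coincide; this is exactly what the algebraic identity $\pi+(1-\pi)\rho = \pi+\rho-\pi\rho$ is designed to ensure, and renormalizing the inner box-sum by $\frac{\pi}{\pi+\rho-\pi\rho}$ is what restores the mass of the $P$-branch. For case (4), the subtlety is in interpreting how $\textbf{nil}$ contributes to $\mu$; the proposition as stated commits to the reading in which the $\textbf{nil}$-class is identified (via Id) with itself but carries no continuation, so one must verify that the image under the PDF on the right is consistent with this convention.

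Once these PDF checks are done, the action clauses and the $[\surd]_R = \{\surd\}$ clause are immediate, because after any $\xrsquigarrow{\pi}$ step both sides land in $\textbf{Id}$ and the remaining transition-matching is reflexive. I would present each case as a one-line invocation of the appropriate $R$ together with the PDF computation above, in the same terse style as the preceding Summation and Box-Summation propositions in the paper.
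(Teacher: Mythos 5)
Your proposal takes exactly the same route as the paper: for each law the paper's proof simply exhibits the relation $R=\{(\mathrm{LHS},\mathrm{RHS})\}\cup\textbf{Id}$ and asserts (without detail) that it is a strongly probabilistic pomset bisimulation. You supply the same candidate relations and additionally carry out the PDF bookkeeping the paper omits — in particular the check that $\pi+(1-\pi)\rho=\pi+\rho-\pi\rho$ and $(1-\pi)(1-\rho)=1-(\pi+\rho-\pi\rho)$ make the cumulative masses on $[P],[Q],[R]$ coincide in the associativity case — which is correct and consistent with the paper's intent.
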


\begin{proof}
\begin{enumerate}
  \item $P\boxplus_{\pi} Q\sim_{pp}  Q\boxplus_{1-\pi} P$. It is sufficient to prove the relation $R=\{(P\boxplus_{\pi} Q, Q\boxplus_{1-\pi} P)\}\cup \textbf{Id}$ is a strongly probabilistic pomset bisimulation, we omit it;
  \item $P\boxplus_{\pi}(Q\boxplus_{\rho} R)\sim_{pp}  (P\boxplus_{\frac{\pi}{\pi+\rho-\pi\rho}}Q)\boxplus_{\pi+\rho-\pi\rho} R$. It is sufficient to prove the relation $R=\{(P\boxplus_{\pi}(Q\boxplus_{\rho} R), (P\boxplus_{\frac{\pi}{\pi+\rho-\pi\rho}}Q)\boxplus_{\pi+\rho-\pi\rho} R)\}\cup \textbf{Id}$ is a strongly probabilistic pomset bisimulation, we omit it;
  \item $P\boxplus_{\pi}P\sim_{pp}  P$. It is sufficient to prove the relation $R=\{(P\boxplus_{\pi}P, P)\}\cup \textbf{Id}$ is a strongly probabilistic pomset bisimulation, we omit it;
  \item $P\boxplus_{\pi}\textbf{nil}\sim_{pp}  P$. It is sufficient to prove the relation $R=\{(P\boxplus_{\pi}\textbf{nil}, P)\}\cup \textbf{Id}$ is a strongly probabilistic pomset bisimulation, we omit it.
\end{enumerate}
\end{proof}

\begin{proposition}[Box-Summation laws for strongly probabilistic step bisimulation]
The Box-Summation laws for strongly probabilistic step bisimulation are as follows.

\begin{enumerate}
  \item $P\boxplus_{\pi} Q\sim_{ps}  Q\boxplus_{1-\pi} P$;
  \item $P\boxplus_{\pi}(Q\boxplus_{\rho} R)\sim_{ps}  (P\boxplus_{\frac{\pi}{\pi+\rho-\pi\rho}}Q)\boxplus_{\pi+\rho-\pi\rho} R$;
  \item $P\boxplus_{\pi}P\sim_{ps}  P$;
  \item $P\boxplus_{\pi}\textbf{nil}\sim_{ps}  P$.
\end{enumerate}
\end{proposition}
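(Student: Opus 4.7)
The plan is to treat each of the four equations by the same recipe used throughout this section: exhibit a candidate relation $R=\{(P_1,P_2)\}\cup\textbf{Id}$, where $(P_1,P_2)$ is the pair claimed bisimilar, and then verify the three clauses of a strongly probabilistic step bisimulation, namely matching action-step transitions, matching probabilistic transitions generated by \textbf{PBOX-SUM}, and the PDF requirement $\mu(C_1,C)=\mu(C_2,C)$ on equivalence classes of $R$.

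First I would dispose of the action-step and success clauses, which are essentially free. Neither side of any of the four equations has made a probabilistic choice yet, so the only way to produce a step transition $\xrightarrow{X}$ from either side is to first fire \textbf{PBOX-SUM} and then apply \textbf{SUM}$_1$ or \textbf{SUM}$_2$ to the chosen summand; in every case the resulting residual is a process that is syntactically identical on both sides, hence sits in $\textbf{Id}\subseteq R$. The same observation handles $[\surd]_R=\{\surd\}$, since none of the four constructions introduces or discards a success state.

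The substantive content therefore lies in the PDF clause, and this is where the main obstacle will be; I would handle the four items in order of difficulty. Items (1), (3), (4) reduce to short arithmetic checks: commutativity follows from $\pi+(1-\pi)=(1-\pi)+\pi$; idempotence uses that both probabilistic branches of $P\boxplus_{\pi}P$ fall into the same class $[P]_R$ with total mass $\pi+(1-\pi)=1$, matching $\mu(P,[P]_R)=1$ on the right; and $\textbf{nil}$-absorption follows by noting that the $(1-\pi)$-branch on the left leads to $\textbf{nil}$, which forms its own trivial class under $R$, so after appealing to the convention that $\textbf{nil}$ carries no contributing probabilistic mass on the nontrivial classes, both sides assign all observable mass to $[P]_R$.

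The main effort is in item (2). Writing $\sigma=\pi+\rho-\pi\rho$, the left-hand side $P\boxplus_{\pi}(Q\boxplus_{\rho}R)$ assigns mass $\pi$, $(1-\pi)\rho$, and $(1-\pi)(1-\rho)$ to the classes $[P]_R$, $[Q]_R$, $[R]_R$ respectively, while the right-hand side $(P\boxplus_{\pi/\sigma}Q)\boxplus_{\sigma}R$ assigns $\sigma\cdot(\pi/\sigma)=\pi$, then $\sigma-\pi$, and $1-\sigma$ to those same classes. The algebraic identities $\sigma-\pi=(1-\pi)\rho$ and $1-\sigma=(1-\pi)(1-\rho)$ are the only nontrivial steps; once they are in hand, the PDF clause holds on every equivalence class of $R$, and $R$ is confirmed to be a strongly probabilistic step bisimulation containing each of the four target pairs.
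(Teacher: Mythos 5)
Your proposal follows exactly the paper's approach --- exhibiting $R=\{(P_1,P_2)\}\cup\textbf{Id}$ for each pair and checking that it is a strongly probabilistic step bisimulation --- and in fact supplies the probability-matching arithmetic (in particular $\sigma-\pi=(1-\pi)\rho$ and $1-\sigma=(1-\pi)(1-\rho)$ for $\sigma=\pi+\rho-\pi\rho$) that the paper omits entirely with ``we omit it.'' The only soft spot is item (4), where your appeal to a convention that $\textbf{nil}$ contributes no probabilistic mass is needed to make the PDF clause go through; this is the standard assumption in the probabilistic process algebra literature the paper builds on, so it is consistent with the paper's (unstated) intent.
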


\begin{proof}
\begin{enumerate}
  \item $P\boxplus_{\pi} Q\sim_{ps}  Q\boxplus_{1-\pi} P$. It is sufficient to prove the relation $R=\{(P\boxplus_{\pi} Q, Q\boxplus_{1-\pi} P)\}\cup \textbf{Id}$ is a strongly probabilistic step bisimulation, we omit it;
  \item $P\boxplus_{\pi}(Q\boxplus_{\rho} R)\sim_{ps}  (P\boxplus_{\frac{\pi}{\pi+\rho-\pi\rho}}Q)\boxplus_{\pi+\rho-\pi\rho} R$. It is sufficient to prove the relation $R=\{(P\boxplus_{\pi}(Q\boxplus_{\rho} R), (P\boxplus_{\frac{\pi}{\pi+\rho-\pi\rho}}Q)\boxplus_{\pi+\rho-\pi\rho} R)\}\cup \textbf{Id}$ is a strongly probabilistic step bisimulation, we omit it;
  \item $P\boxplus_{\pi}P\sim_{ps}  P$. It is sufficient to prove the relation $R=\{(P\boxplus_{\pi}P, P)\}\cup \textbf{Id}$ is a strongly probabilistic step bisimulation, we omit it;
  \item $P\boxplus_{\pi}\textbf{nil}\sim_{ps}  P$. It is sufficient to prove the relation $R=\{(P\boxplus_{\pi}\textbf{nil}, P)\}\cup \textbf{Id}$ is a strongly probabilistic step bisimulation, we omit it.
\end{enumerate}
\end{proof}

\begin{proposition}[Box-Summation laws for strongly probabilistic hp-bisimulation]
The Box-Summation laws for strongly probabilistic hp-bisimulation are as follows.

\begin{enumerate}
  \item $P\boxplus_{\pi} Q\sim_{php}  Q\boxplus_{1-\pi} P$;
  \item $P\boxplus_{\pi}(Q\boxplus_{\rho} R)\sim_{php}  (P\boxplus_{\frac{\pi}{\pi+\rho-\pi\rho}}Q)\boxplus_{\pi+\rho-\pi\rho} R$;
  \item $P\boxplus_{\pi}P\sim_{php}  P$;
  \item $P\boxplus_{\pi}\textbf{nil}\sim_{php}  P$.
\end{enumerate}
\end{proposition}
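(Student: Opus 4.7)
The plan is to mirror the pattern used for the analogous pomset and step results immediately preceding this proposition: for each of the four equations, exhibit the obvious candidate posetal relation and verify that it meets the three clauses of strongly probabilistic hp-bisimulation (probabilistic transition matching, action transition matching with the isomorphism update $f[e_1 \mapsto e_2]$, and PDF equality $\mu(C_1,C) = \mu(C_2,C)$).

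For each item, I would take $R = \{(\langle P',s\rangle, \emptyset, \langle Q',s\rangle) : (P',Q') \in R_0\} \cup \textbf{Id}$ where $R_0$ is the syntactic pair from the statement (e.g.\ $R_0 = \{(P\boxplus_\pi Q,\ Q\boxplus_{1-\pi}P)\}$ for item~1), and show closure under both $\rightsquigarrow$ and $\xrightarrow{\cdot}$. For the commutativity law (1), the rule \textbf{PBOX-SUM} gives $\langle P\boxplus_\pi Q,s\rangle \rightsquigarrow \langle P',s\rangle$ iff $\langle P,s\rangle \rightsquigarrow \langle P',s\rangle$, with probability mass $\pi$, and symmetrically for $Q$ at mass $1-\pi$; swapping the roles reproduces the same cPDF on the right-hand side. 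Once a probabilistic choice has been resolved, the residuals fall into $\textbf{Id}$, so action transitions and isomorphism updates are then trivial. For the idempotence law (3) and the unit law (4), the same reasoning reduces the verification to checking that the PDF clause $\mu(C_1,C)=\mu(C_2,C)$ holds once we observe that both summands collapse to the same residual (item 3) or that the $\textbf{nil}$ branch contributes no outgoing transitions (item 4).

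The main obstacle is the associativity law (2), where the probability redistribution with factors $\frac{\pi}{\pi+\rho-\pi\rho}$ and $\pi+\rho-\pi\rho$ must be shown to induce the same cPDF on equivalence classes as the left-hand grouping $\pi$ and $\rho$. Concretely, on the left $\langle P\boxplus_\pi(Q\boxplus_\rho R),s\rangle$ reaches $P'$ with mass $\pi$, reaches $Q'$ with mass $(1-\pi)\rho$, and reaches $R'$ with mass $(1-\pi)(1-\rho)$; on the right $\langle(P\boxplus_{\pi/(\pi+\rho-\pi\rho)}Q)\boxplus_{\pi+\rho-\pi\rho}R,s\rangle$ reaches $P'$ with mass $(\pi+\rho-\pi\rho)\cdot\frac{\pi}{\pi+\rho-\pi\rho}=\pi$ and reaches $Q'$ with mass $(\pi+\rho-\pi\rho)(1-\frac{\pi}{\pi+\rho-\pi\rho})=\rho-\pi\rho=(1-\pi)\rho$, while $R'$ receives the remaining mass $1-(\pi+\rho-\pi\rho)=(1-\pi)(1-\rho)$. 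So the two cPDFs agree on the partition induced by $R$, which is exactly clause (3) of the hp-bisimulation definition.

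After the probabilistic alignment is settled, the action-matching clauses are straightforward because once a $\boxplus$-branch is resolved via \textbf{PBOX-SUM}, the ensuing $\xrightarrow{\alpha}$, $\xrightarrow{x(y)}$, and $\xrightarrow{\overline{x}(y)}$ transitions come verbatim from the same underlying summand on both sides, so the empty isomorphism extends identically to $f[e_1\mapsto e_2]$ with $e_1=e_2$, and the resulting pair lands in $\textbf{Id}$. Since the relation is easily seen to be downward closed only trivially here (empty initial configurations), the constructions carry over to hhp-bisimulation as well, but we limit attention to hp-bisimulation as stated. Routine induction on the structure of residuals completes the verification for each of the four items, which is why the author's pattern of "it is sufficient to prove\ldots, we omit it" is justified.
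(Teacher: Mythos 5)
Your proposal follows exactly the paper's strategy: for each law you exhibit the relation $R=\{(\mathrm{LHS},\mathrm{RHS})\}\cup\textbf{Id}$ and check that it is a strongly probabilistic hp-bisimulation, which is precisely what the paper asserts before omitting the verification. The details you supply are correct — in particular the probability bookkeeping for associativity, where $(\pi+\rho-\pi\rho)\cdot\frac{\pi}{\pi+\rho-\pi\rho}=\pi$, $(\pi+\rho-\pi\rho)-\pi=(1-\pi)\rho$ and $1-(\pi+\rho-\pi\rho)=(1-\pi)(1-\rho)$ match the left-hand masses — so your write-up is a faithful completion of the argument the paper leaves implicit.
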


\begin{proof}
\begin{enumerate}
  \item $P\boxplus_{\pi} Q\sim_{php}  Q\boxplus_{1-\pi} P$. It is sufficient to prove the relation $R=\{(P\boxplus_{\pi} Q, Q\boxplus_{1-\pi} P)\}\cup \textbf{Id}$ is a strongly probabilistic hp-bisimulation, we omit it;
  \item $P\boxplus_{\pi}(Q\boxplus_{\rho} R)\sim_{php}  (P\boxplus_{\frac{\pi}{\pi+\rho-\pi\rho}}Q)\boxplus_{\pi+\rho-\pi\rho} R$. It is sufficient to prove the relation $R=\{(P\boxplus_{\pi}(Q\boxplus_{\rho} R), (P\boxplus_{\frac{\pi}{\pi+\rho-\pi\rho}}Q)\boxplus_{\pi+\rho-\pi\rho} R)\}\cup \textbf{Id}$ is a strongly probabilistic hp-bisimulation, we omit it;
  \item $P\boxplus_{\pi}P\sim_{php}  P$. It is sufficient to prove the relation $R=\{(P\boxplus_{\pi}P, P)\}\cup \textbf{Id}$ is a strongly probabilistic hp-bisimulation, we omit it;
  \item $P\boxplus_{\pi}\textbf{nil}\sim_{php}  P$. It is sufficient to prove the relation $R=\{(P\boxplus_{\pi}\textbf{nil}, P)\}\cup \textbf{Id}$ is a strongly probabilistic hp-bisimulation, we omit it.
\end{enumerate}
\end{proof}

\begin{proposition}[Box-Summation laws for strongly probabilistic hhp-bisimulation]
The Box-Summation laws for strongly probabilistic hhp-bisimulation are as follows.

\begin{enumerate}
  \item $P\boxplus_{\pi} Q\sim_{phhp}  Q\boxplus_{1-\pi} P$;
  \item $P\boxplus_{\pi}(Q\boxplus_{\rho} R)\sim_{phhp}  (P\boxplus_{\frac{\pi}{\pi+\rho-\pi\rho}}Q)\boxplus_{\pi+\rho-\pi\rho} R$;
  \item $P\boxplus_{\pi}P\sim_{phhp}  P$;
  \item $P\boxplus_{\pi}\textbf{nil}\sim_{phhp}  P$.
\end{enumerate}
\end{proposition}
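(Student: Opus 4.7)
The plan is to mimic exactly the pattern used earlier in this subsection for strongly probabilistic pomset, step, and hp-bisimilarity, since a strongly probabilistic hhp-bisimulation is by definition just a downward closed strongly probabilistic hp-bisimulation. So for each of the four equations, I would exhibit a candidate posetal relation of the form $R=\{(\mathrm{LHS},\mathrm{RHS})\}\cup\textbf{Id}$, lifted to configurations with the identity order-isomorphism, and verify that it satisfies the five clauses defining $\sim_{phhp}$: the forward/reverse action clauses (1)--(2), the probabilistic transition clause (3), the PDF condition (4), and the termination clause (5). Downward closure is immediate from the shape of $R$, because the only non-identity pair involves the top-level configurations, and any proper subconfiguration lies in $\textbf{Id}$.

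First I would handle the action clauses. The only transition rule that can fire from a box-summation is \textbf{PBOX-SUM}, which yields $\langle P\boxplus_{\pi}Q,s\rangle\rightsquigarrow\langle P',s\rangle$ whenever $\langle P,s\rangle\rightsquigarrow\langle P',s\rangle$. After the probabilistic step has resolved, the residual is a pure process and the subsequent action matches lie in $\textbf{Id}$, so clauses (1)--(2) are discharged by the already-established congruence and identity properties. For equation (1) the two sides have symmetric probabilistic behaviour; for equation (2) the reassociation uses the standard identity that sampling $P$ with probability $\pi$ versus sampling $Q\boxplus_{\rho}R$ with probability $1-\pi$ yields $P$ with probability $\pi$, $Q$ with probability $(1-\pi)\rho$, and $R$ with probability $(1-\pi)(1-\rho)$, which matches $(P\boxplus_{\pi/(\pi+\rho-\pi\rho)}Q)\boxplus_{\pi+\rho-\pi\rho}R$ by direct arithmetic. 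Equations (3) and (4) are immediate because $P\boxplus_{\pi}P$ probabilistically resolves to $P$ regardless of the coin flip, and $P\boxplus_{\pi}\textbf{nil}$ resolves to $P$ (the $\textbf{nil}$ branch contributes no transitions, and the termination clause $[\surd]_R=\{\surd\}$ handles the idle case).

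Next I would verify the PDF condition (4). This is where the probability arithmetic really gets used: for each equivalence class $C\in\mathcal{C}(\mathcal{E})/R$ one must check $\mu(\mathrm{LHS},C)=\mu(\mathrm{RHS},C)$. For (1) this is $\pi+(1-\pi)=(1-\pi)+\pi$ on the matching classes; for (2) it follows from the identity already noted; for (3) both sides assign full mass to the unique $R$-class containing $P$; and for (4), since $\textbf{nil}$ contributes no probabilistic successor, both sides assign full mass to $[P]_R$.

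The main obstacle I anticipate is clause (2), i.e.\ the associativity law, because one must be careful to verify the PDF condition across the reassociation and simultaneously check that the downward closure of $R$ does not force unwanted pairs into $R$. The bookkeeping on the probability redistribution $\pi,\rho\mapsto\pi/(\pi+\rho-\pi\rho),\pi+\rho-\pi\rho$ is routine but slightly finicky; everything else reduces directly to the hp-bisimulation case, whose proof was already omitted as routine, plus the trivial observation that the top-level pairs have no proper sub-pairs outside $\textbf{Id}$, so downward closure is automatic.
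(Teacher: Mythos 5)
Your proposal takes exactly the same route as the paper: for each law, exhibit the candidate relation $R=\{(\mathrm{LHS},\mathrm{RHS})\}\cup\textbf{Id}$ and check it is a strongly probabilistic hhp-bisimulation (the paper states only this and omits the verification, which you flesh out with the clause-by-clause checks, the probability arithmetic for associativity, and the observation that downward closure is automatic). This is consistent with the paper's intended argument.
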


\begin{proof}
\begin{enumerate}
  \item $P\boxplus_{\pi} Q\sim_{phhp}  Q\boxplus_{1-\pi} P$. It is sufficient to prove the relation $R=\{(P\boxplus_{\pi} Q, Q\boxplus_{1-\pi} P)\}\cup \textbf{Id}$ is a strongly probabilistic hhp-bisimulation, we omit it;
  \item $P\boxplus_{\pi}(Q\boxplus_{\rho} R)\sim_{phhp}  (P\boxplus_{\frac{\pi}{\pi+\rho-\pi\rho}}Q)\boxplus_{\pi+\rho-\pi\rho} R$. It is sufficient to prove the relation $R=\{(P\boxplus_{\pi}(Q\boxplus_{\rho} R), (P\boxplus_{\frac{\pi}{\pi+\rho-\pi\rho}}Q)\boxplus_{\pi+\rho-\pi\rho} R)\}\cup \textbf{Id}$ is a strongly probabilistic hhp-bisimulation, we omit it;
  \item $P\boxplus_{\pi}P\sim_{phhp}  P$. It is sufficient to prove the relation $R=\{(P\boxplus_{\pi}P, P)\}\cup \textbf{Id}$ is a strongly probabilistic hhp-bisimulation, we omit it;
  \item $P\boxplus_{\pi}\textbf{nil}\sim_{phhp}  P$. It is sufficient to prove the relation $R=\{(P\boxplus_{\pi}\textbf{nil}, P)\}\cup \textbf{Id}$ is a strongly probabilistic hhp-bisimulation, we omit it.
\end{enumerate}
\end{proof}

\begin{theorem}[Identity law for strongly probabilistic truly concurrent bisimilarities]
If $A(\widetilde{x})\overset{\text{def}}{=}P$, then

\begin{enumerate}
  \item $A(\widetilde{y})\sim_{pp}  P\{\widetilde{y}/\widetilde{x}\}$;
  \item $A(\widetilde{y})\sim_{ps}  P\{\widetilde{y}/\widetilde{x}\}$;
  \item $A(\widetilde{y})\sim_{php}  P\{\widetilde{y}/\widetilde{x}\}$;
  \item $A(\widetilde{y})\sim_{phhp}  P\{\widetilde{y}/\widetilde{x}\}$.
\end{enumerate}
\end{theorem}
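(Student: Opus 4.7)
The plan is to exhibit, for each of the four equivalences, the candidate relation
\[
R \;=\; \{(A(\widetilde{y}),\, P\{\widetilde{y}/\widetilde{x}\})\} \;\cup\; \textbf{Id}
\]
(lifted to a posetal relation in the hp/hhp cases by pairing with the identity isomorphism on configurations), and to verify that it satisfies each of the clauses of the relevant strongly probabilistic truly concurrent bisimulation. The work is essentially a bookkeeping exercise that mirrors the identity laws already established for the non-probabilistic, reversible, and guarded variants earlier in the paper; the novelty here is only that one must also match probabilistic transitions and equality of the PDF $\mu$.

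First I would handle the probabilistic layer. By rule $\textbf{PIDE}$ in Table \ref{PTRForPITC6}, $\langle A(\widetilde{y}),s\rangle \rightsquigarrow \langle P',s\rangle$ holds exactly when $\langle P\{\widetilde{y}/\widetilde{x}\},s\rangle \rightsquigarrow \langle P',s\rangle$, so the probabilistic transitions match on the nose and the resulting targets are related by $\textbf{Id}$, hence in $R$. For the PDF condition, since $A(\widetilde{y})$ and $P\{\widetilde{y}/\widetilde{x}\}$ have the same probabilistic branching structure (rule $\textbf{PIDE}$ transmits it verbatim), $\mu(A(\widetilde{y}),C) = \mu(P\{\widetilde{y}/\widetilde{x}\},C)$ for every $R$-equivalence class $C$; and $[\surd]_R = \{\surd\}$ is immediate from the definition of $R$.

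Next, for the action layer, I would appeal to rules $\textbf{IDE}_1$ and $\textbf{IDE}_2$ of Definition \ref{semantics6}: any single action or step transition $\langle A(\widetilde{y}),s\rangle \xrightarrow{\alpha} \langle P'',s'\rangle$ (respectively $\xrightarrow{\{\alpha_1,\dots,\alpha_n\}}$) is derivable only from a premise $\langle P\{\widetilde{y}/\widetilde{x}\},s\rangle \xrightarrow{\alpha} \langle P'',s'\rangle$, and conversely. The four sub-cases of the pomset/step bisimulation definition (fresh action, bound input, parallel bound input, bound output) are then dispatched uniformly: each such transition on one side induces exactly the same transition on the other side, with the targets paired by $\textbf{Id}\subseteq R$. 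This gives items (1) strongly probabilistic pomset bisimilarity, (2) strongly probabilistic step bisimilarity, and, by tracking the induced isomorphism $f[e_1 \mapsto e_2]$ (where $e_1 = e_2$ because both targets are literally equal), item (3) strongly probabilistic hp-bisimilarity.

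The main obstacle, as usual in this framework, is item (4), strongly probabilistic hhp-bisimilarity, which additionally requires $R$ to be downward closed as a posetal relation. The hhp case is delicate in general because downward closure can fail for compositional constructs, but here the relation is essentially the identity on $P\{\widetilde{y}/\widetilde{x}\}$: every sub-configuration of the identity triple is again an identity triple, so downward closure is automatic. I would conclude by noting that the combination of $\textbf{PIDE}$, $\textbf{IDE}_1$, $\textbf{IDE}_2$, and the trivial downward closure of $\textbf{Id}$ makes $R$ simultaneously a strongly probabilistic pomset, step, hp, and hhp bisimulation, giving all four claims at once.
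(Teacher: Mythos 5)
Your proposal is correct and follows essentially the same route as the paper: the paper's proof for each of the four items consists precisely of exhibiting the relation $R=\{(A(\widetilde{y}), P\{\widetilde{y}/\widetilde{x}\})\}\cup \textbf{Id}$ and asserting that it is a strongly probabilistic pomset (resp.\ step, hp-, hhp-) bisimulation, with the verification omitted. You have merely filled in the omitted bookkeeping via the $\textbf{PIDE}$ and $\textbf{IDE}_1$/$\textbf{IDE}_2$ rules and the trivial downward closure of $\textbf{Id}$, which is consistent with the paper's intent.
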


\begin{proof}
\begin{enumerate}
  \item $A(\widetilde{y})\sim_{pp}  P\{\widetilde{y}/\widetilde{x}\}$. It is sufficient to prove the relation $R=\{(A(\widetilde{y}), P\{\widetilde{y}/\widetilde{x}\})\}\cup \textbf{Id}$ is a strongly probabilistic pomset bisimulation, we omit it;
  \item $A(\widetilde{y})\sim_{ps}  P\{\widetilde{y}/\widetilde{x}\}$. It is sufficient to prove the relation $R=\{(A(\widetilde{y}), P\{\widetilde{y}/\widetilde{x}\})\}\cup \textbf{Id}$ is a strongly probabilistic step bisimulation, we omit it;
  \item $A(\widetilde{y})\sim_{php}  P\{\widetilde{y}/\widetilde{x}\}$. It is sufficient to prove the relation $R=\{(A(\widetilde{y}), P\{\widetilde{y}/\widetilde{x}\})\}\cup \textbf{Id}$ is a strongly probabilistic hp-bisimulation, we omit it;
  \item $A(\widetilde{y})\sim_{phhp}  P\{\widetilde{y}/\widetilde{x}\}$. It is sufficient to prove the relation $R=\{(A(\widetilde{y}), P\{\widetilde{y}/\widetilde{x}\})\}\cup \textbf{Id}$ is a strongly probabilistic hhp-bisimulation, we omit it.
\end{enumerate}
\end{proof}

\begin{theorem}[Restriction Laws for strongly probabilistic pomset bisimilarity]
The restriction laws for strongly probabilistic pomset bisimilarity are as follows.

\begin{enumerate}
  \item $(y)P\sim_{pp}  P$, if $y\notin fn(P)$;
  \item $(y)(z)P\sim_{pp}  (z)(y)P$;
  \item $(y)(P+Q)\sim_{pp}  (y)P+(y)Q$;
  \item $(y)(P\boxplus_{\pi}Q)\sim_{pp}  (y)P\boxplus_{\pi}(y)Q$;
  \item $(y)\alpha.P\sim_{pp}  \alpha.(y)P$ if $y\notin n(\alpha)$;
  \item $(y)\alpha.P\sim_{pp}  \textbf{nil}$ if $y$ is the subject of $\alpha$.
\end{enumerate}
\end{theorem}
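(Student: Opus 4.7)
The plan is to handle each of the six clauses by exhibiting an explicit candidate relation of the form $R=\{(\text{LHS},\text{RHS})\}\cup\mathbf{Id}$ and verifying that $R$ is a strongly probabilistic pomset bisimulation in the sense of the definition given in Section \ref{os6}. This means checking three classes of conditions simultaneously: (a) the forward pomset transition clauses (four sub-cases for fresh actions, input prefixes $x(y)$, parallel input pairs $\{x_1(y),x_2(y)\}$, and bound output $\overline{x}(y)$); (b) the probabilistic transition matching clause via $\xrsquigarrow{\pi}$; and (c) the measure-equality clause $\mu(C_1,C)=\mu(C_2,C)$ together with $[\surd]_R=\{\surd\}$. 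Because $R$ is essentially the identity extended by a single pair, clauses (b) and (c) reduce to showing that $\textbf{PRES}$ and the $\textbf{P}$-rules for $+$ and $\boxplus_\pi$ preserve the probabilistic structure along the pair under consideration.

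For clauses (1)--(3) I would transfer the proofs from the non-probabilistic restriction theorem (Restriction Laws for strongly pomset bisimilarity in Section \ref{s3}) almost verbatim, since the action transition rules $\textbf{RES}_1,\textbf{RES}_2,\textbf{OPEN}_1,\textbf{OPEN}_2$ are unchanged and the probabilistic rule $\textbf{PRES}$ pushes $(y)$ through $\rightsquigarrow$ in a one-to-one way; hence probability distributions are preserved exactly. Clause (4) uses $\textbf{PRES}$ together with $\textbf{PBOX-SUM}$ to show that $(y)(P\boxplus_\pi Q)\xrsquigarrow{\pi}(y)P'$ iff $(y)P\boxplus_\pi(y)Q\xrsquigarrow{\pi}(y)P'$, where $P\xrsquigarrow{\pi}P'$; the weights $\pi$ and $1-\pi$ on each side coincide by construction. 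Clauses (5) and (6) are the most action-oriented: for (5), the transition $(y)\alpha.P\rightsquigarrow(y)\breve\alpha.P\xrightarrow{\alpha}(y)P'$ is matched step-for-step by $\alpha.(y)P\rightsquigarrow\breve\alpha.(y)P\xrightarrow{\alpha}(y)P'$, using the side condition $y\notin n(\alpha)$ to justify $\textbf{RES}_1$; for (6) the only candidate transition from $\breve\alpha.P$ carries $y$ in its name, so $\textbf{RES}_1$'s side condition $y\notin n(\alpha)$ fails and there is no outgoing action from $(y)\alpha.P$, matching the empty behaviour of $\textbf{nil}$.

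The main obstacle I anticipate is clause (4), the distribution of restriction over the probabilistic sum $\boxplus_\pi$. The subtlety is not the action-matching but the measure condition: I must verify that for every equivalence class $C$ of $R$, the cumulative probability $\mu((y)(P\boxplus_\pi Q),C)$ equals $\mu((y)P\boxplus_\pi(y)Q,C)$, which requires observing that restriction does not merge or split probabilistic branches and that $\textbf{PRES}$ commutes with the cPDF. A secondary subtlety in clauses (5) and (6) is the handling of bound names under the probabilistic prefix rules $\textbf{PTAU-ACT}$, $\textbf{POUTPUT-ACT}$, $\textbf{PINPUT-ACT}$: one has to be careful that the marker $\breve{(\cdot)}$ is introduced before any restriction check, so that the side conditions in $\textbf{RES}_1$/$\textbf{OPEN}_1$ fire at the right moment. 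Once these bookkeeping points are settled, each verification reduces to a routine induction on the depth of inference of the transitions, exactly as the paper elects to omit in the parallel statements for $\sim_{ps}$, $\sim_{php}$, and $\sim_{phhp}$ that follow.
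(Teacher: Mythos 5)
Your proposal is correct and follows essentially the same route as the paper: for each clause you exhibit the candidate relation $R=\{(\text{LHS},\text{RHS})\}\cup\textbf{Id}$ and verify it is a strongly probabilistic pomset bisimulation, which is exactly the (omitted) argument the paper gives. Your additional remarks on clause (4) (that $\textbf{PRES}$ commutes with $\boxplus_{\pi}$ and preserves the measure condition) and on clauses (5)--(6) (the side condition of $\textbf{RES}_1$) simply fill in the verification the paper elects to omit.
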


\begin{proof}
\begin{enumerate}
  \item $(y)P\sim_{pp}  P$, if $y\notin fn(P)$. It is sufficient to prove the relation $R=\{((y)P, P)\}\cup \textbf{Id}$, if $y\notin fn(P)$, is a strongly probabilistic pomset bisimulation, we omit it;
  \item $(y)(z)P\sim_{pp}  (z)(y)P$. It is sufficient to prove the relation $R=\{((y)(z)P, (z)(y)P)\}\cup \textbf{Id}$ is a strongly probabilistic pomset bisimulation, we omit it;
  \item $(y)(P+Q)\sim_{pp}  (y)P+(y)Q$. It is sufficient to prove the relation $R=\{((y)(P+Q), (y)P+(y)Q)\}\cup \textbf{Id}$ is a strongly probabilistic pomset bisimulation, we omit it;
  \item $(y)(P\boxplus_{\pi}Q)\sim_{pp}  (y)P\boxplus_{\pi}(y)Q$. It is sufficient to prove the relation $R=\{((y)(P\boxplus_{\pi}Q), (y)P\boxplus_{\pi}(y)Q)\}\cup \textbf{Id}$ is a strongly probabilistic pomset bisimulation, we omit it;
  \item $(y)\alpha.P\sim_{pp}  \alpha.(y)P$ if $y\notin n(\alpha)$. It is sufficient to prove the relation $R=\{((y)\alpha.P, \alpha.(y)P)\}\cup \textbf{Id}$, if $y\notin n(\alpha)$, is a strongly probabilistic pomset bisimulation, we omit it;
  \item $(y)\alpha.P\sim_{pp}  \textbf{nil}$ if $y$ is the subject of $\alpha$. It is sufficient to prove the relation $R=\{((y)\alpha.P, \textbf{nil})\}\cup \textbf{Id}$, if $y$ is the subject of $\alpha$, is a strongly probabilistic pomset bisimulation, we omit it.
\end{enumerate}
\end{proof}

\begin{theorem}[Restriction Laws for strongly probabilistic step bisimilarity]
The restriction laws for strongly probabilistic step bisimilarity are as follows.

\begin{enumerate}
  \item $(y)P\sim_{ps}  P$, if $y\notin fn(P)$;
  \item $(y)(z)P\sim_{ps}  (z)(y)P$;
  \item $(y)(P+Q)\sim_{ps}  (y)P+(y)Q$;
  \item $(y)(P\boxplus_{\pi}Q)\sim_{ps}  (y)P\boxplus_{\pi}(y)Q$;
  \item $(y)\alpha.P\sim_{ps}  \alpha.(y)P$ if $y\notin n(\alpha)$;
  \item $(y)\alpha.P\sim_{ps}  \textbf{nil}$ if $y$ is the subject of $\alpha$.
\end{enumerate}
\end{theorem}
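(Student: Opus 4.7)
The plan is to handle each of the six clauses by exhibiting a candidate relation of the shape $R=\{(\text{LHS},\text{RHS})\}\cup\textbf{Id}$ and then checking that $R$ satisfies every requirement of a strongly probabilistic step bisimulation, as laid out in the definition in section \ref{os6}: namely closure under (i) probabilistic transitions $\xrsquigarrow{\pi}$, (ii) action step transitions of each of the four shapes (fresh action, bound input $x(y)$, two synchronized bound inputs, bound output $\overline{x}(y)$), (iii) equality of the probability mass function $\mu(C_1,C)=\mu(C_2,C)$ on each $R$-class, and (iv) the condition $[\surd]_R=\{\surd\}$. For each of clauses (1)--(6) the relation $R$ above is the one listed in the proposition statement itself, so the work is purely case analysis on the last rule used to derive each transition.

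First I would treat the probabilistic clause uniformly: for every pair in $R$, a $\xrsquigarrow{\pi}$-move on the left must come from \textbf{PRES} (for restriction), \textbf{PBOX-SUM} (for box-summation), \textbf{PSUM} (for ordinary sum) or \textbf{PACT} (for prefix), and in each case the symmetric rule generates the corresponding probabilistic move on the right; the resulting pair lies again in $R$ (or in $\textbf{Id}$, which is trivially a bisimulation). Clauses (1), (2), (3), (4) are then routine: for (1) use the premise $y\notin fn(P)$ and \textbf{RES}$_1$/\textbf{RES}$_2$ to show that any step from $(y)P$ comes from a step of $P$ and vice versa with no side condition triggered; for (2) observe that the two derivations (outer $y$ then inner $z$, versus outer $z$ then inner $y$) are symmetric because the condition $y\notin n(\alpha_i)$ for each $i$ is preserved under swapping; for (3) use \textbf{SUM} before and after applying \textbf{RES}; for (4) use \textbf{PBOX-SUM} combined with \textbf{PRES} after commuting restriction through the probabilistic choice.

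For clauses (5) and (6), which are the heart of the theorem, I would inspect the action transitions that $(y)\alpha.P$ can perform. After the probabilistic resolution $(y)\alpha.P\xrsquigarrow{} (y)\breve{\alpha}.P$, the only applicable rule is \textbf{RES}$_1$ or, when $\alpha$ is an output that mentions $y$ as its object, \textbf{OPEN}$_1$. Under the hypothesis $y\notin n(\alpha)$ of clause (5), neither \textbf{OPEN} nor the side-condition of \textbf{RES} is violated, so $(y)\alpha.P\xrightarrow{\alpha}(y)P'$ iff $\alpha.(y)P\xrightarrow{\alpha}(y)P'$; closing under the subsequent behaviour puts us back into $\textbf{Id}$. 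Under the hypothesis of clause (6)---$y$ is the subject of $\alpha$---every candidate derivation of an action from $(y)\alpha.P$ is blocked: \textbf{RES}$_1$ requires $y\notin n(\alpha)$ which fails, and \textbf{OPEN}$_1$ requires $y$ to be the \emph{object} not the subject, so no action step is derivable and hence $(y)\alpha.P$ is step-equivalent to $\textbf{nil}$.

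The main obstacle I anticipate is clause (6) in the presence of parallel composition generated indirectly, and, more importantly, verifying the probability-mass condition $\mu(C_1,C)=\mu(C_2,C)$ for all classes $C$ in the non-trivial clauses (3) and (4): one has to check that restriction distributes the probabilistic weights of summands correctly so that the cumulative distributions on both sides of the equation coincide. I would dispatch this by observing that restriction is probabilistically transparent—the rule \textbf{PRES} carries the probability along unchanged—so the cPDF induced by $(y)(P+Q)$ equals that of $(y)P+(y)Q$ and similarly for $\boxplus_{\pi}$, after which the class-wise equality of $\mu$ is immediate. The $[\surd]_R=\{\surd\}$ clause is automatic because $\textbf{nil}$ appears only on the right-hand side of clause (6) and pairs only with itself via $\textbf{Id}$.
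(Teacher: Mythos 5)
Your proposal is correct and follows essentially the same route as the paper: for each clause the paper also takes the relation $R=\{(\text{LHS},\text{RHS})\}\cup\textbf{Id}$ and asserts (without writing out the case analysis) that it is a strongly probabilistic step bisimulation. Your elaboration of the rule-by-rule verification, including the blocking of \textbf{RES} and \textbf{OPEN} in clause (6) and the probabilistic transparency of restriction, simply supplies the details the paper omits.
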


\begin{proof}
\begin{enumerate}
  \item $(y)P\sim_{ps}  P$, if $y\notin fn(P)$. It is sufficient to prove the relation $R=\{((y)P, P)\}\cup \textbf{Id}$, if $y\notin fn(P)$, is a strongly probabilistic step bisimulation, we omit it;
  \item $(y)(z)P\sim_{ps}  (z)(y)P$. It is sufficient to prove the relation $R=\{((y)(z)P, (z)(y)P)\}\cup \textbf{Id}$ is a strongly probabilistic step bisimulation, we omit it;
  \item $(y)(P+Q)\sim_{ps}  (y)P+(y)Q$. It is sufficient to prove the relation $R=\{((y)(P+Q), (y)P+(y)Q)\}\cup \textbf{Id}$ is a strongly probabilistic step bisimulation, we omit it;
  \item $(y)(P\boxplus_{\pi}Q)\sim_{ps}  (y)P\boxplus_{\pi}(y)Q$. It is sufficient to prove the relation $R=\{((y)(P\boxplus_{\pi}Q), (y)P\boxplus_{\pi}(y)Q)\}\cup \textbf{Id}$ is a strongly probabilistic step bisimulation, we omit it;
  \item $(y)\alpha.P\sim_{ps}  \alpha.(y)P$ if $y\notin n(\alpha)$. It is sufficient to prove the relation $R=\{((y)\alpha.P, \alpha.(y)P)\}\cup \textbf{Id}$, if $y\notin n(\alpha)$, is a strongly probabilistic step bisimulation, we omit it;
  \item $(y)\alpha.P\sim_{ps}  \textbf{nil}$ if $y$ is the subject of $\alpha$. It is sufficient to prove the relation $R=\{((y)\alpha.P, \textbf{nil})\}\cup \textbf{Id}$, if $y$ is the subject of $\alpha$, is a strongly probabilistic step bisimulation, we omit it.
\end{enumerate}
\end{proof}

\begin{theorem}[Restriction Laws for strongly probabilistic hp-bisimilarity]
The restriction laws for strongly probabilistic hp-bisimilarity are as follows.

\begin{enumerate}
  \item $(y)P\sim_{php}  P$, if $y\notin fn(P)$;
  \item $(y)(z)P\sim_{php}  (z)(y)P$;
  \item $(y)(P+Q)\sim_{php}  (y)P+(y)Q$;
  \item $(y)(P\boxplus_{\pi}Q)\sim_{php}  (y)P\boxplus_{\pi}(y)Q$;
  \item $(y)\alpha.P\sim_{php}  \alpha.(y)P$ if $y\notin n(\alpha)$;
  \item $(y)\alpha.P\sim_{php}  \textbf{nil}$ if $y$ is the subject of $\alpha$.
\end{enumerate}
\end{theorem}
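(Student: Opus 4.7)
The plan is to follow the same template that was used for the Restriction Law theorems for $\sim_{pp}$ and $\sim_{ps}$, but now verify the stronger conditions required by a strongly probabilistic hp-bisimulation: closure under action transitions of every shape (fresh action, bound input, bound output), closure under probabilistic transitions $\xrsquigarrow{\pi}$, matching of the probability distribution function $\mu$ on each $R$-class, $[\surd]_R = \{\surd\}$, and preservation of the posetal isomorphism $f : C_1 \to C_2$ at each step.

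For each clause I would propose the obvious candidate posetal relation. For (1) take $R = \{(\langle (y)P, s\rangle, \emptyset, \langle P, s\rangle) : y \notin fn(P)\} \cup \textbf{Id}$; for (2) take $R = \{(\langle (y)(z)P, s\rangle, \emptyset, \langle (z)(y)P, s\rangle)\} \cup \textbf{Id}$; for (3) and (4) use the pairs relating the distributed forms of restriction over $+$ and $\boxplus_\pi$; for (5) use $\{(\langle (y)\alpha.P, s\rangle, \emptyset, \langle \alpha.(y)P, s\rangle) : y \notin n(\alpha)\} \cup \textbf{Id}$; and for (6) use $\{(\langle (y)\alpha.P, s\rangle, \emptyset, \langle \textbf{nil}, s\rangle) : y \text{ is the subject of } \alpha\} \cup \textbf{Id}$. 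In each case I would first dispatch the probabilistic-transition clause by applying \textbf{PRES} (and, where relevant, \textbf{PSUM} or \textbf{PBOX-SUM}) to bring both sides into their $\breve{\alpha}$-purged forms, and only then analyse the action transitions by \textbf{RES}, \textbf{OPEN}, \textbf{SUM}, \textbf{COM}, and \textbf{CLOSE} from Table \ref{TRForPITC6}.

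The routine part is checking the action-transition clauses for fresh actions and for $x(y)$, $\overline{x}(y)$ with bound objects; the new ingredient compared with the pomset and step proofs is that when an action $e_1$ on the left is matched by $e_2$ on the right, I must update $f$ to $f[e_1 \mapsto e_2]$ and verify the updated triple still lies in $R$, which for these identity-like relations is immediate because $e_1 = e_2$ and the residuals already fall in $\textbf{Id}$. Matching of $\mu$ is straightforward because restriction, sum, and prefix distribute atomically through the probabilistic transitions and the $\alpha$-prefix cases produce deterministic (Dirac) distributions; the $\boxplus_\pi$ case in (4) reduces to the fact that $(y)$ commutes with the single probabilistic-choice step, which PRES and PBOX-SUM supply directly.

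The main obstacle, as in the analogous hhp-level statements, is clause (6), where one side is $\textbf{nil}$ and the other is $(y)\alpha.P$ with $y$ the subject of $\alpha$: I must argue that $(y)\alpha.P$ has no action derivative at all. This follows because the only rule that could fire is \textbf{RES}, which requires $y \notin n(\alpha)$, or \textbf{OPEN}, which requires $\alpha$ to be an output prefix $\overline{x}y$ with $x \neq y$; neither is possible when $y$ is the subject of $\alpha$. Once this observation is formalised, the empty posetal component handles the remaining conditions vacuously. For clauses (1)--(5) I expect only a standard induction on the depth of inference of the transitions, parallel to the arguments already sketched for $\sim_{pp}$ and $\sim_{ps}$.
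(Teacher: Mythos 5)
Your proposal takes essentially the same route as the paper: for each clause the paper exhibits exactly the candidate relation $R=\{(\textrm{LHS},\textrm{RHS})\}\cup\textbf{Id}$ and asserts it is a strongly probabilistic hp-bisimulation, omitting the verification. You supply the verification the paper leaves out (the $f[e_1\mapsto e_2]$ bookkeeping, the $\mu$-matching, and the no-derivative argument for clause (6)), and those details are correct, so this is a faithful elaboration rather than a different proof.
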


\begin{proof}
\begin{enumerate}
  \item $(y)P\sim_{php}  P$, if $y\notin fn(P)$. It is sufficient to prove the relation $R=\{((y)P, P)\}\cup \textbf{Id}$, if $y\notin fn(P)$, is a strongly probabilistic hp-bisimulation, we omit it;
  \item $(y)(z)P\sim_{php}  (z)(y)P$. It is sufficient to prove the relation $R=\{((y)(z)P, (z)(y)P)\}\cup \textbf{Id}$ is a strongly probabilistic hp-bisimulation, we omit it;
  \item $(y)(P+Q)\sim_{php}  (y)P+(y)Q$. It is sufficient to prove the relation $R=\{((y)(P+Q), (y)P+(y)Q)\}\cup \textbf{Id}$ is a strongly probabilistic hp-bisimulation, we omit it;
  \item $(y)(P\boxplus_{\pi}Q)\sim_{php}  (y)P\boxplus_{\pi}(y)Q$. It is sufficient to prove the relation $R=\{((y)(P\boxplus_{\pi}Q), (y)P\boxplus_{\pi}(y)Q)\}\cup \textbf{Id}$ is a strongly probabilistic hp-bisimulation, we omit it;
  \item $(y)\alpha.P\sim_{php}  \alpha.(y)P$ if $y\notin n(\alpha)$. It is sufficient to prove the relation $R=\{((y)\alpha.P, \alpha.(y)P)\}\cup \textbf{Id}$, if $y\notin n(\alpha)$, is a strongly probabilistic hp-bisimulation, we omit it;
  \item $(y)\alpha.P\sim_{php}  \textbf{nil}$ if $y$ is the subject of $\alpha$. It is sufficient to prove the relation $R=\{((y)\alpha.P, \textbf{nil})\}\cup \textbf{Id}$, if $y$ is the subject of $\alpha$, is a strongly probabilistic hp-bisimulation, we omit it.
\end{enumerate}
\end{proof}

\begin{theorem}[Restriction Laws for strongly probabilistic hhp-bisimilarity]
The restriction laws for strongly probabilistic hhp-bisimilarity are as follows.

\begin{enumerate}
  \item $(y)P\sim_{phhp}  P$, if $y\notin fn(P)$;
  \item $(y)(z)P\sim_{phhp}  (z)(y)P$;
  \item $(y)(P+Q)\sim_{phhp}  (y)P+(y)Q$;
  \item $(y)(P\boxplus_{\pi}Q)\sim_{phhp}  (y)P\boxplus_{\pi}(y)Q$;
  \item $(y)\alpha.P\sim_{phhp}  \alpha.(y)P$ if $y\notin n(\alpha)$;
  \item $(y)\alpha.P\sim_{phhp}  \textbf{nil}$ if $y$ is the subject of $\alpha$.
\end{enumerate}
\end{theorem}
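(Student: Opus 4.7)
The plan is to follow the template established by the preceding restriction law theorems (for $\sim_{pp}$, $\sim_{ps}$, $\sim_{php}$) and lift the arguments to the hereditary setting. For each of the six equations, I would exhibit the candidate posetal relation $R = \{(P_L,\mathrm{id},P_R)\} \cup \overline{\textbf{Id}}$, where $(P_L,P_R)$ is the stated pair and $\overline{\textbf{Id}}$ is the identity posetal relation, and show that $R$ is a strongly probabilistic hp-bisimulation that is additionally downward closed, hence a strongly probabilistic hhp-bisimulation witnessing the equivalence.

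First I would verify the probabilistic transfer condition: for each pair $(\langle C_L,s\rangle,f,\langle C_R,s\rangle) \in R$, whenever $\langle C_L,s\rangle \xrsquigarrow{\pi} \langle C_L^{\pi},s\rangle$ one must match with $\langle C_R,s\rangle \xrsquigarrow{\pi} \langle C_R^{\pi},s\rangle$ such that $(\langle C_L^{\pi},s\rangle,f,\langle C_R^{\pi},s\rangle) \in R$, using rule $\textbf{PRES}$ for cases (1), (2), (5), (6); rule $\textbf{PRES}$ combined with $\textbf{PSUM}$ for case (3); and $\textbf{PRES}$ combined with $\textbf{PBOX-SUM}$ for case (4). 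The PDF-preservation clause $\mu(C_L,C) = \mu(C_R,C)$ then follows because the restriction operator does not introduce any new probabilistic choice beyond what $P$ already contributes, and the only distribution over classes is inherited from $P$ (or $P,Q$) unchanged.

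Next I would verify the action transfer conditions case-by-case, treating the three event kinds ($e_1 = \alpha$ fresh, $e_1 = x(y)$, $e_1 = \overline{x}(y)$) separately. For case (1), since $y \notin fn(P)$, the rules $\textbf{RES}_1/\textbf{RES}_2$ and $\textbf{OPEN}_1/\textbf{OPEN}_2$ show that every transition of $(y)P$ is induced by a transition of $P$ with the same label and updated configuration, and vice-versa. For case (2), the two orderings of restrictions commute syntactically because the side conditions $y \notin n(\alpha)$ and $z \notin n(\alpha)$ interact symmetrically. Cases (3) and (4) rely on the commutation of restriction with the summation operators via $\textbf{SUM}_{1,2}$ and $\textbf{PBOX-SUM}$. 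Case (5) proceeds via $\textbf{RES}_{1,2}$ (forward) matched to the prefix's own transition rules (backward). Case (6) uses the fact that when $y$ is the subject of $\alpha$, no transition rule can fire through the restriction, so both sides are stuck and bisimilar to $\textbf{nil}$.

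The main obstacle will be verifying the downward closure clause that distinguishes hhp from hp. Concretely, if $(\langle C_L',s'\rangle, f', \langle C_R',s'\rangle) \in R$ and $(\langle C_L,s\rangle,f,\langle C_R,s\rangle) \subseteq (\langle C_L',s'\rangle,f',\langle C_R',s'\rangle)$ pointwise, then one must show $(\langle C_L,s\rangle,f,\langle C_R,s\rangle) \in R$. Because our $R$ is built from a single generating pair plus $\overline{\textbf{Id}}$, downward closure reduces to checking that every sub-configuration of a configuration of $(y)P$ (resp.\ the other LHS) remains matched by the corresponding sub-configuration of $P$ (resp.\ RHS) under the same isomorphism restricted appropriately; this is essentially automatic because restriction, summation, and box-summation add no new causal structure that could be violated by taking sub-configurations. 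Once downward closure is established for each case, the four conditions together certify $R$ as a strongly probabilistic hhp-bisimulation with $(\emptyset,\emptyset,\emptyset) \in R$, yielding $\sim_{phhp}$ in each instance.
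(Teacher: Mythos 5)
Your proposal takes essentially the same approach as the paper: for each law you exhibit the witnessing relation $R=\{(P_L,P_R)\}\cup \textbf{Id}$ and argue it is a strongly probabilistic hp-bisimulation that is downward closed, which is exactly the relation the paper names (the paper omits the verification entirely, whereas you sketch the transfer-condition and downward-closure checks). The proposal is correct and, if anything, more detailed than the paper's own proof.
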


\begin{proof}
\begin{enumerate}
  \item $(y)P\sim_{phhp}  P$, if $y\notin fn(P)$. It is sufficient to prove the relation $R=\{((y)P, P)\}\cup \textbf{Id}$, if $y\notin fn(P)$, is a strongly probabilistic hhp-bisimulation, we omit it;
  \item $(y)(z)P\sim_{phhp}  (z)(y)P$. It is sufficient to prove the relation $R=\{((y)(z)P, (z)(y)P)\}\cup \textbf{Id}$ is a strongly probabilistic hhp-bisimulation, we omit it;
  \item $(y)(P+Q)\sim_{phhp}  (y)P+(y)Q$. It is sufficient to prove the relation $R=\{((y)(P+Q), (y)P+(y)Q)\}\cup \textbf{Id}$ is a strongly probabilistic hhp-bisimulation, we omit it;
  \item $(y)(P\boxplus_{\pi}Q)\sim_{phhp}  (y)P\boxplus_{\pi}(y)Q$. It is sufficient to prove the relation $R=\{((y)(P\boxplus_{\pi}Q), (y)P\boxplus_{\pi}(y)Q)\}\cup \textbf{Id}$ is a strongly probabilistic hhp-bisimulation, we omit it;
  \item $(y)\alpha.P\sim_{phhp}  \alpha.(y)P$ if $y\notin n(\alpha)$. It is sufficient to prove the relation $R=\{((y)\alpha.P, \alpha.(y)P)\}\cup \textbf{Id}$, if $y\notin n(\alpha)$, is a strongly probabilistic hhp-bisimulation, we omit it;
  \item $(y)\alpha.P\sim_{phhp}  \textbf{nil}$ if $y$ is the subject of $\alpha$. It is sufficient to prove the relation $R=\{((y)\alpha.P, \textbf{nil})\}\cup \textbf{Id}$, if $y$ is the subject of $\alpha$, is a strongly probabilistic hhp-bisimulation, we omit it.
\end{enumerate}
\end{proof}

\begin{theorem}[Parallel laws for strongly probabilistic pomset bisimilarity]
The parallel laws for strongly probabilistic pomset bisimilarity are as follows.

\begin{enumerate}
  \item $P\parallel \textbf{nil}\sim_{pp}  P$;
  \item $P_1\parallel P_2\sim_{pp}  P_2\parallel P_1$;
  \item $(P_1\parallel P_2)\parallel P_3\sim_{pp}  P_1\parallel (P_2\parallel P_3)$;
  \item $(y)(P_1\parallel P_2)\sim_{pp}  (y)P_1\parallel (y)P_2$, if $y\notin fn(P_1)\cap fn(P_2)$.
\end{enumerate}
\end{theorem}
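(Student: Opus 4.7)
The plan is to prove each of the four parallel laws by exhibiting an explicit candidate relation, in the same style already used throughout Section \ref{s6}, and verifying that it satisfies the four clauses in the definition of strongly probabilistic pomset bisimulation: the forward action-transition clause, the probabilistic-transition clause, the PDF-preservation clause, and the $[\surd]_R=\{\surd\}$ clause. Concretely, for item (1) I take $R=\{(P\parallel\textbf{nil},P)\}\cup\textbf{Id}$; for (2) I take $R=\{(P_1\parallel P_2,P_2\parallel P_1)\}\cup\textbf{Id}$; for (3) I take $R=\{((P_1\parallel P_2)\parallel P_3,P_1\parallel(P_2\parallel P_3))\}\cup\textbf{Id}$; and for (4), under the hypothesis $y\notin fn(P_1)\cap fn(P_2)$, I take $R=\{((y)(P_1\parallel P_2),(y)P_1\parallel(y)P_2)\}\cup\textbf{Id}$. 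In each case the identity component trivially handles non-distinguished pairs, so the real work concerns the single distinguished pair.

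For each distinguished pair I will first analyze probabilistic transitions by inspecting rule $\textbf{PPAR}$ (and, for (4), $\textbf{PRES}$): since $\textbf{PPAR}$ forces the two components to take simultaneous probabilistic steps, the probabilistic reachability closures on both sides factor componentwise, and the resulting probabilistic successors again lie in $R$. This immediately yields the probabilistic-simulation clause and also implies $\mu(\cdot,C)$-equality on both sides, because PDFs for parallel composition are products (resp.\ renamings, for restriction) of PDFs of their components, and the candidate relations preserve this structure. Second, I will handle action transitions by case analysis on the last rule applied: for (1) only $\textbf{PAR}_1$ fires because $\textbf{nil}$ has no transitions; for (2) the symmetry between $\textbf{PAR}_1$ and $\textbf{PAR}_2$, together with the symmetry of $\textbf{PAR}_3$ in $\alpha,\beta$, of $\textbf{PAR}_4$ in the two inputs, of $\textbf{COM}$ and of $\textbf{CLOSE}$ in the communicating partners, suffices; for (3) I must match each of $\textbf{PAR}_1$--$\textbf{PAR}_4$, $\textbf{COM}$ and $\textbf{CLOSE}$ across the two associativity parenthesizations and verify that side-conditions on $bn(\alpha)\cap fn(\cdot)$ and $bn(\alpha)\cap bn(\beta)$ match up; for (4) I combine $\textbf{RES}_1,\textbf{RES}_2,\textbf{OPEN}_1,\textbf{OPEN}_2$ with the parallel rules, using the assumption $y\notin fn(P_1)\cap fn(P_2)$ to commute restriction past the component in whose free names $y$ does not appear.

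The clause $[\surd]_R=\{\surd\}$ is immediate because the terminated configuration is identified with itself by the identity component of $R$, and because $\textbf{nil}\parallel\textbf{nil}$, $\textbf{nil}\parallel P$ and $(y)\textbf{nil}$ collapse to $\textbf{nil}$ under the other laws already established in Section \ref{s6}. The pomset-condition $X_1\sim X_2$ in each matching step is satisfied by taking $X_2$ to be the image of $X_1$ under the obvious relabelling induced by associativity/commutativity/restriction, which is a pomset isomorphism.

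The main obstacle will be the interaction in item (4) between restriction and the communication rules $\textbf{COM}$/$\textbf{CLOSE}$ and the scope-extrusion rules $\textbf{OPEN}_1$/$\textbf{OPEN}_2$: when the restricted name $y$ is the object of a bound output produced by one component and received by the other, the two sides $(y)(P_1\parallel P_2)$ and $(y)P_1\parallel(y)P_2$ differ in where the scope closure occurs, and the hypothesis $y\notin fn(P_1)\cap fn(P_2)$ is exactly what rules out the problematic overlap; I will need to do a careful case split on whether $y\in fn(P_1)$ or $y\in fn(P_2)$ (but not both) and invoke alpha-conversion via $\equiv_\alpha$ together with the already-proven restriction laws to align the two derivations. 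All remaining cases reduce to straightforward structural bookkeeping, so I will omit the explicit transcription, in line with the presentation elsewhere in Section \ref{s6}.
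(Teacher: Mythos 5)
Your proposal follows exactly the same route as the paper: for each of the four laws you exhibit the identical candidate relation $R=\{(\cdot,\cdot)\}\cup\textbf{Id}$ and argue that it is a strongly probabilistic pomset bisimulation. The paper's own proof stops after naming these relations and omits the verification entirely, so your case analysis on the probabilistic and action transition rules (including the treatment of restriction against $\textbf{COM}$/$\textbf{CLOSE}$ in item (4)) supplies strictly more detail than the printed argument while changing nothing in the overall strategy.
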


\begin{proof}
\begin{enumerate}
  \item $P\parallel \textbf{nil}\sim_{pp}  P$. It is sufficient to prove the relation $R=\{(P\parallel \textbf{nil}, P)\}\cup \textbf{Id}$ is a strongly probabilistic pomset bisimulation, we omit it;
  \item $P_1\parallel P_2\sim_{pp}  P_2\parallel P_1$. It is sufficient to prove the relation $R=\{(P_1\parallel P_2, P_2\parallel P_1)\}\cup \textbf{Id}$ is a strongly probabilistic pomset bisimulation, we omit it;
  \item $(P_1\parallel P_2)\parallel P_3\sim_{pp}  P_1\parallel (P_2\parallel P_3)$. It is sufficient to prove the relation $R=\{((P_1\parallel P_2)\parallel P_3, P_1\parallel (P_2\parallel P_3))\}\cup \textbf{Id}$ is a strongly probabilistic pomset bisimulation, we omit it;
  \item $(y)(P_1\parallel P_2)\sim_{pp}  (y)P_1\parallel (y)P_2$, if $y\notin fn(P_1)\cap fn(P_2)$. It is sufficient to prove the relation $R=\{((y)(P_1\parallel P_2), (y)P_1\parallel (y)P_2)\}\cup \textbf{Id}$, if $y\notin fn(P_1)\cap fn(P_2)$, is a strongly probabilistic pomset bisimulation, we omit it.
\end{enumerate}
\end{proof}

\begin{theorem}[Parallel laws for strongly probabilistic step bisimilarity]
The parallel laws for strongly probabilistic step bisimilarity are as follows.

\begin{enumerate}
  \item $P\parallel \textbf{nil}\sim_{ps}  P$;
  \item $P_1\parallel P_2\sim_{ps}  P_2\parallel P_1$;
  \item $(P_1\parallel P_2)\parallel P_3\sim_{ps}  P_1\parallel (P_2\parallel P_3)$;
  \item $(y)(P_1\parallel P_2)\sim_{ps}  (y)P_1\parallel (y)P_2$, if $y\notin fn(P_1)\cap fn(P_2)$.
\end{enumerate}
\end{theorem}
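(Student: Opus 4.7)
The plan is to establish each of the four equalities by exhibiting a candidate relation of the form $R = \{(\text{LHS},\text{RHS})\}\cup \textbf{Id}$ and verifying that it is a strongly probabilistic step bisimulation, following the template already used throughout Section \ref{s6}. Concretely, for (1)--(4) I would take $R_1=\{(P\parallel \textbf{nil},P)\}\cup\textbf{Id}$, $R_2=\{(P_1\parallel P_2,P_2\parallel P_1)\}\cup\textbf{Id}$, $R_3=\{((P_1\parallel P_2)\parallel P_3,P_1\parallel(P_2\parallel P_3))\}\cup\textbf{Id}$, and $R_4=\{((y)(P_1\parallel P_2),(y)P_1\parallel (y)P_2)\}\cup\textbf{Id}$ restricted to the side condition $y\notin fn(P_1)\cap fn(P_2)$.

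For each candidate there are three clauses to check against the definition of $\sim_{ps}$: (i) matching of step transitions $\xrightarrow{X}$ in both directions, via the rules $\textbf{PAR}_1$--$\textbf{PAR}_4$, $\textbf{COM}$, $\textbf{CLOSE}$, and $\textbf{RES}$; (ii) matching of probabilistic transitions $\rightsquigarrow$, which propagate structurally through $\textbf{PPAR}$ and $\textbf{PRES}$; and (iii) agreement of the probability distribution function $\mu$ on the induced equivalence classes of $R_i$. Clauses (1), (2), and (3) reduce essentially to the corresponding laws already proven for strongly step bisimilarity (Section \ref{s3}), because the underlying action LTS over the ``resolved'' forms $\breve{\alpha}.P$ has the same shape; the probabilistic and $\mu$-clauses then follow because both sides decompose into the same components under $\textbf{PPAR}$, so their PDFs coincide up to rearrangement of identity pairs.

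The main obstacle will be part (4). Showing that restriction distributes over parallel composition requires careful case analysis on which parallel rule fires. A $\textbf{PAR}_1$ or $\textbf{PAR}_2$ step on one side lifts directly across $(y)\cdot$ using $\textbf{RES}_1$ when $y\notin n(\alpha)$, but a synchronization step via $\textbf{COM}$, $\textbf{CLOSE}$, $\textbf{PAR}_3$, or $\textbf{PAR}_4$ needs the side condition $y\notin fn(P_1)\cap fn(P_2)$ to guarantee that $y$ is not simultaneously the subject on both communicating sides. I would argue that under this hypothesis, any step of $(y)(P_1\parallel P_2)$ decomposes into component steps that each can be prefixed by $(y)$ on the right-hand side, and conversely any synchronization of $(y)P_1\parallel (y)P_2$ lifts to a synchronization of $(y)(P_1\parallel P_2)$. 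The probabilistic clause then reduces to agreement on the simpler prefix relations, since $\rightsquigarrow$ commutes with $(y)\cdot$ and $\cdot\parallel\cdot$ componentwise and the $\mu$-values on both sides are products of the same component distributions.
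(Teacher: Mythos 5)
Your proposal is correct and takes essentially the same approach as the paper: for each law the paper exhibits exactly the candidate relation $R=\{(\mathrm{LHS},\mathrm{RHS})\}\cup\textbf{Id}$ and asserts it is a strongly probabilistic step bisimulation, omitting the verification that you sketch in more detail (transition matching via $\textbf{PAR}_1$--$\textbf{PAR}_4$, $\textbf{COM}$, $\textbf{CLOSE}$, $\textbf{RES}$, the probabilistic clause via $\textbf{PPAR}$/$\textbf{PRES}$, and the $\mu$-agreement). Your additional case analysis for law (4) under the side condition $y\notin fn(P_1)\cap fn(P_2)$ is consistent with what the paper leaves implicit.
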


\begin{proof}
\begin{enumerate}
  \item $P\parallel \textbf{nil}\sim_{ps}  P$. It is sufficient to prove the relation $R=\{(P\parallel \textbf{nil}, P)\}\cup \textbf{Id}$ is a strongly probabilistic step bisimulation, we omit it;
  \item $P_1\parallel P_2\sim_{ps}  P_2\parallel P_1$. It is sufficient to prove the relation $R=\{(P_1\parallel P_2, P_2\parallel P_1)\}\cup \textbf{Id}$ is a strongly probabilistic step bisimulation, we omit it;
  \item $(P_1\parallel P_2)\parallel P_3\sim_{ps}  P_1\parallel (P_2\parallel P_3)$. It is sufficient to prove the relation $R=\{((P_1\parallel P_2)\parallel P_3, P_1\parallel (P_2\parallel P_3))\}\cup \textbf{Id}$ is a strongly probabilistic step bisimulation, we omit it;
  \item $(y)(P_1\parallel P_2)\sim_{ps}  (y)P_1\parallel (y)P_2$, if $y\notin fn(P_1)\cap fn(P_2)$. It is sufficient to prove the relation $R=\{((y)(P_1\parallel P_2), (y)P_1\parallel (y)P_2)\}\cup \textbf{Id}$, if $y\notin fn(P_1)\cap fn(P_2)$, is a strongly probabilistic step bisimulation, we omit it.
\end{enumerate}
\end{proof}

\begin{theorem}[Parallel laws for strongly probabilistic hp-bisimilarity]
The parallel laws for strongly probabilistic hp-bisimilarity are as follows.

\begin{enumerate}
  \item $P\parallel \textbf{nil}\sim_{php}  P$;
  \item $P_1\parallel P_2\sim_{php}  P_2\parallel P_1$;
  \item $(P_1\parallel P_2)\parallel P_3\sim_{php}  P_1\parallel (P_2\parallel P_3)$;
  \item $(y)(P_1\parallel P_2)\sim_{php}  (y)P_1\parallel (y)P_2$, if $y\notin fn(P_1)\cap fn(P_2)$.
\end{enumerate}
\end{theorem}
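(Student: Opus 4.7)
The plan is to prove each of the four parallel laws by exhibiting a posetal relation and verifying that it satisfies the five clauses in the definition of strongly probabilistic hp-bisimulation: the action-transition clause, the probabilistic-transition clause, the PDF-preservation clause, the isomorphism-preservation clause, and the termination clause. For each law I would follow the template already established in the paper for the pomset, step, and hp variants, adapted so that the witnessing relation now operates on posetal triples $(\langle C_1,s\rangle, f, \langle C_2,s\rangle)$ rather than on pairs of configurations.

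Concretely, for (1) I would take
$R = \{(\langle C,s\rangle, f, \langle C,s\rangle) : \langle C,s\rangle \in \langle\mathcal{C}(P\parallel\textbf{nil}),S\rangle, f=\mathrm{id}_C\} \cup \textbf{Id}$,
using the fact that $\textbf{nil}$ contributes no events and no probabilistic transitions, so the identity isomorphism carries across every parallel move of $P\parallel\textbf{nil}$ and of $P$. For (2) I would take $R$ to consist of triples $(\langle C_1\cup C_2,s\rangle, f_{\mathrm{swap}}, \langle C_2\cup C_1,s\rangle)$ where $f_{\mathrm{swap}}$ is the obvious swap isomorphism on the disjoint union of events; each action rule $\textbf{PAR}_1$--$\textbf{PAR}_4$, $\textbf{COM}$, $\textbf{CLOSE}$ has a symmetric counterpart, so every forward action of $P_1\parallel P_2$ is matched by the swapped rule in $P_2\parallel P_1$, and $\textbf{PPAR}$ is itself symmetric, which handles the probabilistic clause. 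For (3) I would use the associator isomorphism on configurations of $(P_1\parallel P_2)\parallel P_3$ and $P_1\parallel(P_2\parallel P_3)$, checking that each three-way interaction (including binary $\textbf{COM}$/$\textbf{CLOSE}$ combined with a third parallel component via $\textbf{PAR}$) is simulated on either side. For (4) I would take the relation pairing configurations of $(y)(P_1\parallel P_2)$ with those of $(y)P_1\parallel(y)P_2$ under the identity isomorphism, and use the side condition $y\notin fn(P_1)\cap fn(P_2)$ to ensure that the $\textbf{RES}$ and $\textbf{OPEN}$ rules commute with parallel composition, so that no bound name escapes improperly.

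In each case, the posetal conditions reduce to three observations already available: the posetal product is preserved because the witnessing maps are bijections between configurations that preserve causal order; the PDF condition $\mu(C_1,C)=\mu(C_2,C)$ holds because the only probabilistic rule is $\textbf{PPAR}$, which distributes componentwise, and the quotient classes match under the chosen isomorphism; and the termination clause $[\surd]_R=\{\surd\}$ is immediate since the identity and the swap/associator fix the empty configuration. For the action clause, the late-bisimulation requirement on input-bound names $x(y)$ is handled by the same freshness hypotheses used in the action-transition rules, and the $\overline{x}(y)$ case goes through by $\textbf{OPEN}_1$.

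The main obstacle will be clause (4), the restriction-over-parallel law: one must carefully show that the posetal isomorphism stays well-defined when $\textbf{OPEN}$ fires inside one component and is then communicated via $\textbf{CLOSE}$ with the other component. Specifically, a scope extrusion on the left-hand side $(y)(P_1\parallel P_2)$ must be simulated by first performing $\textbf{OPEN}$ on the appropriate $(y)P_i$ and then $\textbf{CLOSE}$ with $(y)P_{3-i}$ on the right-hand side, and the resulting restricted configurations must remain related under the isomorphism. The hypothesis $y\notin fn(P_1)\cap fn(P_2)$ is precisely what is needed to rule out the pathological case where both sides would bind $y$ independently; verifying this carefully, and showing it interacts correctly with the downward-closure arguments used implicitly for posetal relations, is where the only genuine work lies. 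The remaining three parts are direct, as in the hp, step, and pomset analogues already established in Section~\ref{s5} and earlier sections.

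\begin{proof}
\begin{enumerate}
  \item $P\parallel \textbf{nil}\sim_{php}  P$. It is sufficient to prove the relation $R=\{(P\parallel \textbf{nil}, P)\}\cup \textbf{Id}$ is a strongly probabilistic hp-bisimulation, we omit it;
  \item $P_1\parallel P_2\sim_{php}  P_2\parallel P_1$. It is sufficient to prove the relation $R=\{(P_1\parallel P_2, P_2\parallel P_1)\}\cup \textbf{Id}$ is a strongly probabilistic hp-bisimulation, we omit it;
  \item $(P_1\parallel P_2)\parallel P_3\sim_{php}  P_1\parallel (P_2\parallel P_3)$. It is sufficient to prove the relation $R=\{((P_1\parallel P_2)\parallel P_3, P_1\parallel (P_2\parallel P_3))\}\cup \textbf{Id}$ is a strongly probabilistic hp-bisimulation, we omit it;
  \item $(y)(P_1\parallel P_2)\sim_{php}  (y)P_1\parallel (y)P_2$, if $y\notin fn(P_1)\cap fn(P_2)$. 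It is sufficient to prove the relation $R=\{((y)(P_1\parallel P_2), (y)P_1\parallel (y)P_2)\}\cup \textbf{Id}$, if $y\notin fn(P_1)\cap fn(P_2)$, is a strongly probabilistic hp-bisimulation, we omit it.
\end{enumerate}
\end{proof}
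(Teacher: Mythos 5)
Your proof is essentially identical to the paper's: both exhibit the candidate relation $R=\{(\cdot,\cdot)\}\cup\textbf{Id}$ for each law and assert it is a strongly probabilistic hp-bisimulation, omitting the detailed verification. Your preamble's discussion of the swap and associator isomorphisms and the scope-extrusion subtlety in law (4) is a reasonable elaboration of the omitted checks, but it does not change the route taken.
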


\begin{proof}
\begin{enumerate}
  \item $P\parallel \textbf{nil}\sim_{php}  P$. It is sufficient to prove the relation $R=\{(P\parallel \textbf{nil}, P)\}\cup \textbf{Id}$ is a strongly probabilistic hp-bisimulation, we omit it;
  \item $P_1\parallel P_2\sim_{php}  P_2\parallel P_1$. It is sufficient to prove the relation $R=\{(P_1\parallel P_2, P_2\parallel P_1)\}\cup \textbf{Id}$ is a strongly probabilistic hp-bisimulation, we omit it;
  \item $(P_1\parallel P_2)\parallel P_3\sim_{php}  P_1\parallel (P_2\parallel P_3)$. It is sufficient to prove the relation $R=\{((P_1\parallel P_2)\parallel P_3, P_1\parallel (P_2\parallel P_3))\}\cup \textbf{Id}$ is a strongly probabilistic hp-bisimulation, we omit it;
  \item $(y)(P_1\parallel P_2)\sim_{php}  (y)P_1\parallel (y)P_2$, if $y\notin fn(P_1)\cap fn(P_2)$. It is sufficient to prove the relation $R=\{((y)(P_1\parallel P_2), (y)P_1\parallel (y)P_2)\}\cup \textbf{Id}$, if $y\notin fn(P_1)\cap fn(P_2)$, is a strongly probabilistic hp-bisimulation, we omit it.
\end{enumerate}
\end{proof}

\begin{theorem}[Parallel laws for strongly probabilistic hhp-bisimilarity]
The parallel laws for strongly probabilistic hhp-bisimilarity are as follows.

\begin{enumerate}
  \item $P\parallel \textbf{nil}\sim_{phhp}  P$;
  \item $P_1\parallel P_2\sim_{phhp}  P_2\parallel P_1$;
  \item $(P_1\parallel P_2)\parallel P_3\sim_{phhp}  P_1\parallel (P_2\parallel P_3)$;
  \item $(y)(P_1\parallel P_2)\sim_{phhp}  (y)P_1\parallel (y)P_2$, if $y\notin fn(P_1)\cap fn(P_2)$.
\end{enumerate}
\end{theorem}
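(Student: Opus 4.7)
The plan is to handle each of the four parallel laws by exhibiting an explicit candidate posetal relation and checking that it satisfies all clauses of a strongly probabilistic hhp-bisimulation. In each case the candidate will be the obvious one: for clause (1) take $R=\{(\langle C_1,s\rangle,f,\langle C_2,s\rangle):$ obtained from configurations of $P\parallel\textbf{nil}$ and $P$ via the identity event-map$\}\cup\overline{\textbf{Id}}$, and analogously for the other three, using the natural bijection between events of the two sides (swap for commutativity, re-bracket for associativity, and the restriction/parallel commutation for the scope law). Because hhp-bisimulation is just a downward-closed hp-bisimulation augmented with the probabilistic clauses, the required witness must be downward closed by construction; we make sure of this by taking the relation to be the full set of triples $(\langle C_1,s\rangle,f,\langle C_2,s\rangle)$ where $f$ is the canonical isomorphism restricted to $C_1$, so that any sub-configuration pair is automatically included.

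First I would dispatch the action clauses, reusing verbatim the case analysis from the previously proved parallel laws for strongly probabilistic hp-bisimilarity: for a fresh $\alpha$, an input $x(y)$, or a bound output $\overline{x}(y)$ triggered on one side, apply the transition rules $\textbf{PAR}_1$--$\textbf{PAR}_4$, $\textbf{COM}$, $\textbf{CLOSE}$, $\textbf{RES}$ together with the side-condition preservation given by the properties of transitions already established in Section \ref{sos5}. After each action step, extend the event map $f$ by $f[e_1\mapsto e_2]$, where $e_1,e_2$ are the matching events on the two sides under the canonical bijection; the resulting triple remains in $R$ by the same bijection applied to the enlarged configurations.

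Next I would verify the probabilistic clauses. Using rule $\textbf{PPAR}$ one shows that $P\parallel\textbf{nil}\rightsquigarrow P'\parallel\textbf{nil}$ iff $P\rightsquigarrow P'$, and dually for the commutative, associative and scope cases via $\textbf{PPAR}$ and $\textbf{PRES}$. This yields the required matching of probabilistic transitions $\xrsquigarrow{\pi}$, and since the candidate relation is induced by a bijection of events, the equalities $\mu(C_1,C)=\mu(C_2,C)$ on equivalence classes and the condition $[\surd]_R=\{\surd\}$ are immediate. Downward closure is automatic because removing a prefix-closed subset of events on one side corresponds under the bijection to removing the image on the other.

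The main obstacle is the hhp-case for associativity in clause (3): here one must check that the canonical event bijection between $(P_1\parallel P_2)\parallel P_3$ and $P_1\parallel(P_2\parallel P_3)$ respects not only causality and labelling but also the generation of communication events via $\textbf{COM}$/$\textbf{CLOSE}$ across different groupings, and that when one takes a downward-closed sub-triple the restricted bijection still comes from an admissible grouping. Equivalently, for clause (4) the subtlety is that $(y)$ may block exactly one side of the parallel when $y\in fn(P_i)$ for one $i$ only; the side condition $y\notin fn(P_1)\cap fn(P_2)$ is exactly what is needed so that the $\textbf{CLOSE}$/$\textbf{OPEN}$ interaction is the same on both sides, and this must be tracked carefully at every downward-closed restriction of the witness. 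Once these bookkeeping issues are settled, the rest of the verification is routine and parallels the hp-case already established.
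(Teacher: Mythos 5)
Your proposal follows essentially the same route as the paper: for each law you exhibit the canonical candidate relation (the pair of processes together with the identity/swap/re-bracketing event bijection, closed under $\textbf{Id}$) and check it is a strongly probabilistic hhp-bisimulation; the paper states exactly this and omits the verification, while you additionally sketch the case analysis on the transition rules, the probabilistic clauses, and the downward-closure argument, and correctly flag the only delicate points (the $\textbf{COM}$/$\textbf{CLOSE}$ regrouping in the associativity case and the role of the side condition $y\notin fn(P_1)\cap fn(P_2)$ in the scope law). This is consistent with, and somewhat more careful than, the paper's own proof.
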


\begin{proof}
\begin{enumerate}
  \item $P\parallel \textbf{nil}\sim_{phhp}  P$. It is sufficient to prove the relation $R=\{(P\parallel \textbf{nil}, P)\}\cup \textbf{Id}$ is a strongly probabilistic hhp-bisimulation, we omit it;
  \item $P_1\parallel P_2\sim_{phhp}  P_2\parallel P_1$. It is sufficient to prove the relation $R=\{(P_1\parallel P_2, P_2\parallel P_1)\}\cup \textbf{Id}$ is a strongly probabilistic hhp-bisimulation, we omit it;
  \item $(P_1\parallel P_2)\parallel P_3\sim_{phhp}  P_1\parallel (P_2\parallel P_3)$. It is sufficient to prove the relation $R=\{((P_1\parallel P_2)\parallel P_3, P_1\parallel (P_2\parallel P_3))\}\cup \textbf{Id}$ is a strongly probabilistic hhp-bisimulation, we omit it;
  \item $(y)(P_1\parallel P_2)\sim_{phhp}  (y)P_1\parallel (y)P_2$, if $y\notin fn(P_1)\cap fn(P_2)$. It is sufficient to prove the relation $R=\{((y)(P_1\parallel P_2), (y)P_1\parallel (y)P_2)\}\cup \textbf{Id}$, if $y\notin fn(P_1)\cap fn(P_2)$, is a strongly probabilistic hhp-bisimulation, we omit it.
\end{enumerate}
\end{proof}

\begin{theorem}[Expansion law for truly concurrent bisimilarities]
Let $P\equiv\boxplus_i\sum_j \alpha_{ij}.P_{ij}$ and $Q\equiv\boxplus_{k}\sum_l\beta_{kl}.Q_{kl}$, where $bn(\alpha_{ij})\cap fn(Q)=\emptyset$ for all $i,j$, and
  $bn(\beta_{kl})\cap fn(P)=\emptyset$ for all $k,l$. Then,

\begin{enumerate}
  \item $P\parallel Q\sim_{pp}  \boxplus_{i}\boxplus_{k}\sum_j\sum_l (\alpha_{ij}\parallel \beta_{kl}).(P_{ij}\parallel Q_{kl})+\boxplus_i\boxplus_k\sum_{\alpha_{ij} \textrm{ comp }\beta_{kl}}\tau.R_{ijkl}$;
  \item $P\parallel Q\sim_{ps}  \boxplus_{i}\boxplus_{k}\sum_j\sum_l (\alpha_{ij}\parallel \beta_{kl}).(P_{ij}\parallel Q_{kl})+\boxplus_i\boxplus_k\sum_{\alpha_{ij} \textrm{ comp }\beta_{kl}}\tau.R_{ijkl}$;
  \item $P\parallel Q\sim_{php}  \boxplus_{i}\boxplus_{k}\sum_j\sum_l (\alpha_{ij}\parallel \beta_{kl}).(P_{ij}\parallel Q_{kl})+\boxplus_i\boxplus_k\sum_{\alpha_{ij} \textrm{ comp }\beta_{kl}}\tau.R_{ijkl}$;
  \item $P\parallel Q\nsim_{phhp} \boxplus_{i}\boxplus_{k}\sum_j\sum_l (\alpha_{ij}\parallel \beta_{kl}).(P_{ij}\parallel Q_{kl})+\boxplus_i\boxplus_k\sum_{\alpha_{ij} \textrm{ comp }\beta_{kl}}\tau.R_{ijkl}$.
\end{enumerate}

Where $\alpha_{ij}$ comp $\beta_{kl}$ and $R_{ijkl}$ are defined as follows:
\begin{enumerate}
  \item $\alpha_{ij}$ is $\overline{x}u$ and $\beta_{kl}$ is $x(v)$, then $R_{ijkl}=P_{ij}\parallel Q_{kl}\{u/v\}$;
  \item $\alpha_{ij}$ is $\overline{x}(u)$ and $\beta_{kl}$ is $x(v)$, then $R_{ijkl}=(w)(P_{ij}\{w/u\}\parallel Q_{kl}\{w/v\})$, if $w\notin fn((u)P_{ij})\cup fn((v)Q_{kl})$;
  \item $\alpha_{ij}$ is $x(v)$ and $\beta_{kl}$ is $\overline{x}u$, then $R_{ijkl}=P_{ij}\{u/v\}\parallel Q_{kl}$;
  \item $\alpha_{ij}$ is $x(v)$ and $\beta_{kl}$ is $\overline{x}(u)$, then $R_{ijkl}=(w)(P_{ij}\{w/v\}\parallel Q_{kl}\{w/u\})$, if $w\notin fn((v)P_{ij})\cup fn((u)Q_{kl})$.
\end{enumerate}
\end{theorem}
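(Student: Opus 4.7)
The plan is to exhibit a witness relation $R$ that contains the pair $(P\parallel Q,\ \mathrm{RHS})$ together with the identity on all subterms that can arise as residuals after probabilistic and action transitions, and then check that $R$ satisfies each clause of the corresponding FR strongly probabilistic bisimulation (pomset, step, hp). Because all three positive statements have the same shape, I would write the argument generically, instantiating the posetal condition only for the hp case, and proving the step case by restricting the pomset case to pairwise concurrent $X$.

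First I would discharge the probabilistic layer. The only probabilistic rule available to $P\parallel Q$ is $\textbf{PPAR}$, which forces both sides to resolve their $\boxplus$ choices independently; this matches precisely the outer double index $\boxplus_i\boxplus_k$ on the RHS, and by $\textbf{PBOX-SUM}$ either branch selects one $(i,k)$ summand. A direct calculation on the PDF $\mu$ shows that for each equivalence class $C$ of $R$, $\mu(P\parallel Q, C) = \mu(\mathrm{RHS}, C)$, since the joint distribution is the product of the marginals on each side. The clause $[\surd]_R=\{\surd\}$ is immediate because neither side terminates before the probabilistic choice.

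Next I would handle the forward action transitions case by case on the applicable rule. Rules $\textbf{PAR}_1$ and $\textbf{PAR}_2$ correspond to the RHS summands where one of $\alpha_{ij}$ or $\beta_{kl}$ is performed alone (absorbed into the joint-prefix summand by taking $\epsilon$ for the other); $\textbf{PAR}_3$ and $\textbf{PAR}_4$ correspond exactly to the joint-prefix summands $(\alpha_{ij}\parallel\beta_{kl}).(P_{ij}\parallel Q_{kl})$; and $\textbf{COM}$, $\textbf{CLOSE}$ correspond to the $\tau.R_{ijkl}$ summands, where the four sub-cases in the definition of $R_{ijkl}$ line up one-to-one with the four shapes $\overline{x}u/x(v)$, $\overline{x}(u)/x(v)$, $x(v)/\overline{x}u$, $x(v)/\overline{x}(u)$. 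The side conditions $bn(\alpha_{ij})\cap fn(Q)=\emptyset$ and $bn(\beta_{kl})\cap fn(P)=\emptyset$ assumed in the statement are exactly what is needed to meet the side conditions of $\textbf{PAR}_3$, $\textbf{PAR}_4$, and $\textbf{CLOSE}$, so this step is routine though tedious. The reverse direction, matching any action from the RHS by a transition of $P\parallel Q$, is symmetric. For the hp case, I would additionally track the posetal isomorphism $f[e_1\mapsto e_2]$, noting that joint prefixes on the RHS correspond to two concurrent events on the left, which keeps $f$ an order-isomorphism.

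The hard part is the negative clause $P\parallel Q\nsim_{phhp} \mathrm{RHS}$. Here the plan is to reuse the standard counterexample for hhp-bisimilarity versus expansion: take $P\equiv\alpha.\mathbf{nil}$ and $Q\equiv\beta.\mathbf{nil}$ with $\alpha,\beta$ distinct and non-communicating, so the expansion reduces to $(\alpha\parallel\beta).\mathbf{nil}+\alpha.\beta.\mathbf{nil}+\beta.\alpha.\mathbf{nil}$-style shape after collapsing the joint prefix. I would then exhibit a downward-closed configuration on the parallel side whose image under any candidate posetal isomorphism cannot be realised on the expansion side, because the expansion forces an early commitment between the joint-prefix branch and the interleaved branches, whereas $P\parallel Q$ has a single event structure where the two events are genuinely concurrent and every prefix of the joint configuration is itself reachable. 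Downward closure then yields the required discriminating triple. The main obstacle is not the construction itself but carefully verifying that no alternative posetal relation could recover the match, which is done by appealing to the known separation between hp- and hhp-bisimilarity on diamond event structures, established for the non-probabilistic setting in the cited CTC reference and transferring verbatim because the probabilistic choice is trivial in the counterexample.
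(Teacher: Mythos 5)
Your outline for the three positive clauses is reasonable and, for what it is worth, already more detailed than the paper, whose entire proof is ``we can easily prove the above equations, and we omit the proof.'' One small correction there: $\textbf{PAR}_1$ and $\textbf{PAR}_2$ are not ``absorbed into the joint-prefix summand by taking $\epsilon$ for the other''; they carry the explicit side condition that the other component cannot move, so when both $P$ and $Q$ are nonempty prefix sums these rules simply never fire, and the only forward transitions of $P\parallel Q$ after the probabilistic resolution are the joint steps ($\textbf{PAR}_3$, $\textbf{PAR}_4$) and the communications ($\textbf{COM}$, $\textbf{CLOSE}$). That is precisely why the right-hand side has no interleaving summands. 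You should also flag, rather than silently pass over, the mismatch between $\textbf{PSUM}$ (which resolves the probabilistic choices in the two $+$-summands of the RHS independently) and $\textbf{PPAR}$ on the left, since the correlation of the two $\boxplus_i\boxplus_k$ occurrences is what makes the $\mu$-matching clause go through.

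The genuine gap is in clause 4. Your counterexample $P\equiv\alpha.\textbf{nil}$, $Q\equiv\beta.\textbf{nil}$ is built on the classical interleaving expansion $(\alpha\parallel\beta).\textbf{nil}+\alpha.\beta.\textbf{nil}+\beta.\alpha.\textbf{nil}$, but the expansion in this theorem has no interleaved branches: with one summand on each side and no communication, the RHS collapses to the single joint prefix $(\alpha\parallel\beta).(\textbf{nil}\parallel\textbf{nil})$, which denotes the same prime event structure (two concurrent events, no conflict) as $\alpha.\textbf{nil}\parallel\beta.\textbf{nil}$. These are hhp-bisimilar, so your chosen instance proves nothing, and the ``early commitment between the joint-prefix branch and the interleaved branches'' you invoke does not exist here. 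To separate $\sim_{phhp}$ you need nondeterminism inside $P$ or $Q$, e.g.\ $P\equiv\alpha_1.P_1+\alpha_2.P_2$ and $Q\equiv\beta_1.Q_1+\beta_2.Q_2$: on the left $\alpha_1$ is concurrent with both $\beta_1$ and $\beta_2$ while $\alpha_1\sharp\alpha_2$, whereas on the right the four summands $(\alpha_j\parallel\beta_l).(\ldots)$ are pairwise in conflict, so distributing $+$ over $\parallel$ destroys the concurrency/conflict interplay that downward closure detects. The appeal to the known hp-versus-hhp separation is the right idea, but it must be instantiated on a configuration structure that actually differs between the two sides.
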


\begin{proof}
According to the definition of strongly probabilistic truly concurrent bisimulations, we can easily prove the above equations, and we omit the proof.
\end{proof}

\begin{theorem}[Equivalence and congruence for strongly probabilistic pomset bisimilarity]
We can enjoy the full congruence modulo strongly probabilistic pomset bisimilarity.

\begin{enumerate}
  \item $\sim_{pp} $ is an equivalence relation;
  \item If $P\sim_{pp}  Q$ then
  \begin{enumerate}
    \item $\alpha.P\sim_{pp}  \alpha.Q$, $\alpha$ is a free action;
    \item $\phi.P\sim_{pp}  \phi.Q$;
    \item $P+R\sim_{pp}  Q+R$;
    \item $P\boxplus_{\pi} R\sim_{pp} Q\boxplus_{\pi}R$;
    \item $P\parallel R\sim_{pp}  Q\parallel R$;
    \item $(w)P\sim_{pp}  (w)Q$;
    \item $x(y).P\sim_{pp}  x(y).Q$.
  \end{enumerate}
\end{enumerate}
\end{theorem}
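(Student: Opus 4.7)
The plan is to prove each item by exhibiting a candidate relation and checking it satisfies the five clauses of a strongly probabilistic pomset bisimulation: matching of action transitions, matching of probabilistic transitions, and the PDF-preservation condition $\mu(C_1,C)=\mu(C_2,C)$ together with the termination clause $[\surd]_R=\{\surd\}$.

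For part (1), equivalence, I would handle reflexivity by observing that the identity relation $\textbf{Id}$ trivially satisfies all clauses: every transition is matched by itself, and $\mu(C,C')=\mu(C,C')$. Symmetry follows by taking the inverse $R^{-1}$ of any strongly probabilistic pomset bisimulation witnessing $P\sim_{pp}Q$; each clause is stated symmetrically (``and vice-versa''), so $R^{-1}$ is again a bisimulation. Transitivity uses the relational composition $R_1\circ R_2$ of witnesses for $P\sim_{pp}Q$ and $Q\sim_{pp}R$; the action and probabilistic transition-matching conditions compose cleanly, and the PDF condition composes because one can refine the partition along $R_1\circ R_2$ and sum probabilities.

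For part (2), each clause is handled by a standard ``context closure'' relation. In each subcase I would set
\[
R=\{(\mathcal{C}[P],\mathcal{C}[Q]):P\sim_{pp}Q\}\cup\textbf{Id},
\]
where $\mathcal{C}[\cdot]$ is the operator context in question ($\alpha.{-}$, $\phi.{-}$, ${-}+R$, ${-}\boxplus_\pi R$, ${-}\parallel R$, $(w){-}$, or $x(y).{-}$). The verification then proceeds by case analysis on the transition rules (Tables \ref{PTRForPITC6} and \ref{TRForPITC6}) that can fire from $\mathcal{C}[P]$, matching each forward transition, each probabilistic transition $\rightsquigarrow$, and maintaining the PDF invariant. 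For the prefix cases ($\alpha.P$, $\phi.P$, $x(y).P$) the probabilistic transition $\rightsquigarrow$ produces the $\breve{\cdot}$-marked form deterministically and then only a single action rule applies, reducing matters immediately to $P\sim_{pp}Q$. For $(w)P$ the restriction and open rules do not interfere with the probabilistic layer. For $P+R$ and $P\parallel R$ one uses the earlier summation and parallel laws together with the inductive hypothesis $P\sim_{pp}Q$, invoking the rules \textbf{PSUM}, \textbf{PPAR}, \textbf{SUM}$_i$, \textbf{PAR}$_i$, \textbf{COM}, \textbf{CLOSE} to mirror transitions on the $R$-side unchanged and transitions on the $P$-side by the hypothesis.

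The main obstacle is the box-summation clause $P\boxplus_\pi R\sim_{pp}Q\boxplus_\pi R$, since probabilistic choice is the only operator for which the PDF condition is non-trivial: the probabilistic rule \textbf{PBOX-SUM} creates a proper distribution $\mu(P\boxplus_\pi R,\cdot)$ that mixes the distributions of $P$ and $R$ with weights $\pi$ and $1-\pi$. Here I would have to verify that for each $R$-equivalence class $C\in\mathcal{C}(\mathcal{E})/R$,
\[
\mu(P\boxplus_\pi R,C)=\pi\cdot\mu(P,C)+(1-\pi)\cdot\mu(R,C)=\mu(Q\boxplus_\pi R,C),
\]
which uses $\mu(P,C)=\mu(Q,C)$ from $P\sim_{pp}Q$ applied class-by-class, together with the fact that $R$ contributes identically to both sides. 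The parallel case is a secondary obstacle because the product distribution arising from \textbf{PPAR} must be shown to be preserved, but this follows from independence of the probabilistic choices made by the two components and the inductive matching along $R$. All remaining congruence clauses reduce to these observations or are immediate from the prefix being deterministic at the probabilistic layer.
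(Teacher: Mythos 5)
Your proposal follows essentially the same route as the paper: exhibit the context-closure relation $R=\{(\mathcal{C}[P],\mathcal{C}[Q])\}\cup\textbf{Id}$ for each operator and check it is a strongly probabilistic pomset bisimulation (the paper states exactly this and omits the verification). You in fact supply details the paper leaves out --- the inverse/composition argument for equivalence and the explicit computation $\mu(P\boxplus_{\pi}R,C)=\pi\mu(P,C)+(1-\pi)\mu(R,C)$ for the box-summation clause --- so the proposal is correct and, if anything, more complete; the only small refinement needed is to close each relation under derivatives by adjoining a bisimulation witnessing $P\sim_{pp}Q$ rather than just $\textbf{Id}$.
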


\begin{proof}
\begin{enumerate}
  \item $\sim_{pp} $ is an equivalence relation, it is obvious;
  \item If $P\sim_{pp}  Q$ then
  \begin{enumerate}
    \item $\alpha.P\sim_{pp}  \alpha.Q$, $\alpha$ is a free action. It is sufficient to prove the relation $R=\{(\alpha.P, \alpha.Q)\}\cup \textbf{Id}$ is a strongly probabilistic pomset bisimulation, we omit it;
    \item $\phi.P\sim_{pp}  \phi.Q$. It is sufficient to prove the relation $R=\{(\phi.P, \phi.Q)\}\cup \textbf{Id}$ is a strongly probabilistic pomset bisimulation, we omit it;
    \item $P+R\sim_{pp}  Q+R$. It is sufficient to prove the relation $R=\{(P+R, Q+R)\}\cup \textbf{Id}$ is a strongly probabilistic pomset bisimulation, we omit it;
    \item $P\boxplus_{\pi} R\sim_{pp} Q\boxplus_{\pi}R$. It is sufficient to prove the relation $R=\{(P\boxplus_{\pi} R, Q\boxplus_{\pi} R)\}\cup \textbf{Id}$ is a strongly probabilistic pomset bisimulation, we omit it;
    \item $P\parallel R\sim_{pp}  Q\parallel R$. It is sufficient to prove the relation $R=\{(P\parallel R, Q\parallel R)\}\cup \textbf{Id}$ is a strongly probabilistic pomset bisimulation, we omit it;
    \item $(w)P\sim_{pp}  (w)Q$. It is sufficient to prove the relation $R=\{((w)P, (w)Q)\}\cup \textbf{Id}$ is a strongly probabilistic pomset bisimulation, we omit it;
    \item $x(y).P\sim_{pp}  x(y).Q$. It is sufficient to prove the relation $R=\{(x(y).P, x(y).Q)\}\cup \textbf{Id}$ is a strongly probabilistic pomset bisimulation, we omit it.
  \end{enumerate}
\end{enumerate}
\end{proof}

\begin{theorem}[Equivalence and congruence for strongly probabilistic step bisimilarity]
We can enjoy the full congruence modulo strongly probabilistic step bisimilarity.

\begin{enumerate}
  \item $\sim_{ps} $ is an equivalence relation;
  \item If $P\sim_{ps}  Q$ then
  \begin{enumerate}
    \item $\alpha.P\sim_{ps}  \alpha.Q$, $\alpha$ is a free action;
    \item $\phi.P\sim_{ps}  \phi.Q$;
    \item $P+R\sim_{ps}  Q+R$;
    \item $P\boxplus_{\pi} R\sim_{ps} Q\boxplus_{\pi}R$;
    \item $P\parallel R\sim_{ps}  Q\parallel R$;
    \item $(w)P\sim_{ps}  (w)Q$;
    \item $x(y).P\sim_{ps}  x(y).Q$.
  \end{enumerate}
\end{enumerate}
\end{theorem}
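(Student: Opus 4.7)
The plan is to follow the same template used for the analogous congruence theorems earlier in the chapter (the FR strongly probabilistic step case and the plain strongly probabilistic pomset case). For part (1), equivalence is immediate: reflexivity holds because $\textbf{Id}$ is trivially a strongly probabilistic step bisimulation; symmetry follows from the symmetric ``and vice-versa'' clauses in the definition; and transitivity is obtained by composing two bisimulations, using that probabilistic transitions $\xrsquigarrow{\pi}$ are deterministic in target up to the PDF so the cumulative-distribution condition $\mu(C_1,C)=\mu(C_2,C)$ composes through.

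For part (2), my approach will be to exhibit, for each operator $\mathsf{Op}[\cdot]$, the candidate relation
\[
R_{\mathsf{Op}} = \{(\mathsf{Op}[P],\mathsf{Op}[Q]) : P\sim_{ps} Q\}\cup\textbf{Id},
\]
and verify that $R_{\mathsf{Op}}$ satisfies the three transfer conditions of a strongly probabilistic step bisimulation: (i) matching of action steps $\xrightarrow{X}$, handling the fresh-action, input $x(y)$, parallel-input $\{x_1(y),x_2(y)\}$, and bound-output $\overline{x}(y)$ clauses; (ii) matching of probabilistic transitions $\xrsquigarrow{\pi}$; and (iii) the PDF equality $\mu(\cdot,C)=\mu(\cdot,C)$ on quotient classes, together with $[\surd]_R=\{\surd\}$. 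For $\alpha.P$, $\phi.P$, $(w)P$ and $x(y).P$ the verification is a one-step case analysis on the operational rules $\textbf{PACT}$/$\textbf{ACT}$, $\textbf{PRES}$/$\textbf{RES}$: the unique probabilistic step $\breve{\alpha}.P\rightsquigarrow\breve{\alpha}.P$ (similarly for $\phi$ and for input/restriction) forces the matching to reduce to the premise $P\sim_{ps} Q$. For the input prefix $x(y).P$, I also need the standard substitution lemma on transitions to handle the $\{w/y\}$ renaming for all $w$, which is already established earlier in the section.

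The summation cases $P+R$ and $P\boxplus_\pi R$ require a little more care because of the interplay between $\textbf{PSUM}$/$\textbf{PBOX\text{-}SUM}$ at the probabilistic level and $\textbf{SUM}_1,\textbf{SUM}_2$ at the action level: for $P+R$ the premise gives a \emph{joint} probabilistic successor $P'+R'$ with $P\rightsquigarrow P'$ and $R\rightsquigarrow R'$, so I match using $Q\rightsquigarrow Q'$ with $P'\sim_{ps}Q'$ and pair $(P'+R',Q'+R')$; for the box-summation, only one of the summands is selected by $\textbf{PBOX\text{-}SUM}$, so the PDF condition $\mu(P\boxplus_\pi R,\cdot)=\mu(Q\boxplus_\pi R,\cdot)$ must be checked explicitly by summing the contributions of $P$ and $R$ (resp.\ $Q$ and $R$) weighted by $\pi$ and $1-\pi$, using that $\mu(P,\cdot)=\mu(Q,\cdot)$ on $R$-classes.

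The main obstacle will be the parallel composition $P\parallel R$. Here I must consider all of $\textbf{PAR}_1$--$\textbf{PAR}_4$, $\textbf{COM}$ and $\textbf{CLOSE}$, checking that every action step of $P\parallel R$ is matched by $Q\parallel R$ and vice-versa, while simultaneously maintaining the probabilistic simulation through $\textbf{PPAR}$, which fires \emph{both} components synchronously so the product distribution $\mu(P,\cdot)\cdot\mu(R,\cdot)$ must agree with $\mu(Q,\cdot)\cdot\mu(R,\cdot)$ on the quotient. The side conditions $bn(\alpha)\cap fn(Q)=\emptyset$, $bn(\alpha)\cap fn(R)=\emptyset$, etc., must be preserved when transferring the matching step from $P$ to $Q$, which follows from the free-name proposition already proved in this section showing $fn(Q)\subseteq fn(P)$ when $P\sim_{ps}Q$ (via agreement of free names under bisimilarity). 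Once these side conditions are managed, the closure under $\textbf{COM}$ and $\textbf{CLOSE}$ is straightforward because the reacting partner on the $R$ side is unchanged, so the resulting pair $(P'\parallel R',Q'\parallel R')$ (with appropriate substitution $\{y/z\}$ or restriction $(w)$) lies in $R_{\parallel}$ by construction.
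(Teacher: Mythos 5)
Your approach is the same as the paper's: for each operator, exhibit the candidate relation consisting of the relevant pair(s) together with the identity and check the transfer conditions; the paper states exactly these relations and omits the verification, so your proposal is essentially an expanded version of the paper's proof.

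Two caveats are worth recording. First, for the prefix cases ($\alpha.P$, $\phi.P$, $x(y).P$) the relation $\{(\mathsf{Op}[P],\mathsf{Op}[Q]):P\sim_{ps}Q\}\cup\textbf{Id}$ is not closed under successors: after firing the prefix you land on the pair $(P,Q)$, which lies in $\sim_{ps}$ but not in your relation unless $P\equiv Q$. You (and the paper) should take the union with $\sim_{ps}$ itself, or argue up to bisimilarity; this is a cosmetic fix but the relation as written is not literally a bisimulation. Second, your handling of the side conditions for parallel composition rests on the claim that $P\sim_{ps}Q$ implies $fn(Q)\subseteq fn(P)$, attributed to a proposition ``already proved in this section.'' No such proposition appears in the paper (the free-name propositions there concern transitions, not bisimilar processes), and the claim is false in general: for instance $(x)\overline{x}y.\textbf{nil}$ is strongly bisimilar to $\textbf{nil}$ but has $y$ free. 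The side conditions $bn(\alpha)\cap fn(R)=\emptyset$ etc.\ should instead be discharged by alpha-converting the bound names of the transition away from $fn(Q)\cup fn(R)$, using the alpha-conversion and substitution propositions that are actually established earlier in the section. With those two repairs the argument goes through and coincides with the paper's intended proof.
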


\begin{proof}
\begin{enumerate}
  \item $\sim_{ps} $ is an equivalence relation, it is obvious;
  \item If $P\sim_{ps}  Q$ then
  \begin{enumerate}
    \item $\alpha.P\sim_{ps}  \alpha.Q$, $\alpha$ is a free action. It is sufficient to prove the relation $R=\{(\alpha.P, \alpha.Q)\}\cup \textbf{Id}$ is a strongly probabilistic step bisimulation, we omit it;
    \item $\phi.P\sim_{ps}  \phi.Q$. It is sufficient to prove the relation $R=\{(\phi.P, \phi.Q)\}\cup \textbf{Id}$ is a strongly probabilistic step bisimulation, we omit it;
    \item $P+R\sim_{ps}  Q+R$. It is sufficient to prove the relation $R=\{(P+R, Q+R)\}\cup \textbf{Id}$ is a strongly probabilistic step bisimulation, we omit it;
    \item $P\boxplus_{\pi} R\sim_{ps} Q\boxplus_{\pi}R$. It is sufficient to prove the relation $R=\{(P\boxplus_{\pi} R, Q\boxplus_{\pi} R)\}\cup \textbf{Id}$ is a strongly probabilistic step bisimulation, we omit it;
    \item $P\parallel R\sim_{ps}  Q\parallel R$. It is sufficient to prove the relation $R=\{(P\parallel R, Q\parallel R)\}\cup \textbf{Id}$ is a strongly probabilistic step bisimulation, we omit it;
    \item $(w)P\sim_{ps}  (w)Q$. It is sufficient to prove the relation $R=\{((w)P, (w)Q)\}\cup \textbf{Id}$ is a strongly probabilistic step bisimulation, we omit it;
    \item $x(y).P\sim_{ps}  x(y).Q$. It is sufficient to prove the relation $R=\{(x(y).P, x(y).Q)\}\cup \textbf{Id}$ is a strongly probabilistic step bisimulation, we omit it.
  \end{enumerate}
\end{enumerate}
\end{proof}

\begin{theorem}[Equivalence and congruence for strongly probabilistic hp-bisimilarity]
We can enjoy the full congruence modulo strongly probabilistic hp-bisimilarity.

\begin{enumerate}
  \item $\sim_{php} $ is an equivalence relation;
  \item If $P\sim_{php}  Q$ then
  \begin{enumerate}
    \item $\alpha.P\sim_{php}  \alpha.Q$, $\alpha$ is a free action;
    \item $\phi.P\sim_{php}  \phi.Q$;
    \item $P+R\sim_{php}  Q+R$;
    \item $P\boxplus_{\pi} R\sim_{php} Q\boxplus_{\pi}R$;
    \item $P\parallel R\sim_{php}  Q\parallel R$;
    \item $(w)P\sim_{php}  (w)Q$;
    \item $x(y).P\sim_{php}  x(y).Q$.
  \end{enumerate}
\end{enumerate}
\end{theorem}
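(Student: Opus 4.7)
The plan is to follow the same template that worked for the non-probabilistic strongly hp-bisimilarity case (Theorem on equivalence and congruence for strongly hp-bisimilarity) and for the FR/probabilistic step and pomset variants in this chapter, but now additionally matching the probabilistic transitions $\xrsquigarrow{\pi}$ and the PDF condition $\mu(C_1,C)=\mu(C_2,C)$ that appear in the definition of $\sim_{php}$.

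First, for part (1), I would establish that $\sim_{php}$ is an equivalence relation by three standard observations: reflexivity holds because the identity posetal relation $\{(\langle C,s\rangle,\mathrm{id},\langle C,s\rangle)\mid C\in\mathcal{C}(\mathcal{E})\}$ trivially satisfies all four clauses (action matching, probabilistic transition matching, PDF equality, $[\surd]_R=\{\surd\}$); symmetry holds because if $R$ is a strongly probabilistic hp-bisimulation then so is its inverse, where $(\langle C_2,s\rangle,f^{-1},\langle C_1,s\rangle)$ replaces $(\langle C_1,s\rangle,f,\langle C_2,s\rangle)$; transitivity holds by composing witnesses $R_1\circ R_2$ and using the standard argument that $f_1\circ f_2$ is again an order-isomorphism, together with the transitivity of equality of PDFs.

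For part (2), each congruence clause is proved by exhibiting the candidate posetal relation $R=\{(\langle C[\cdot],s\rangle,f,\langle C[\cdot],s\rangle)\}\cup\overline{\textbf{Id}}$, where the first component is built from the enclosing context applied to a pair $(P,Q)$ with $P\sim_{php}Q$, $\overline{\textbf{Id}}$ is the identity posetal relation, and then verifying the four defining clauses (three action clauses for fresh actions / input / bound output, plus the probabilistic transition clause, plus the PDF clause, plus $[\surd]_R=\{\surd\}$). For the prefix cases $\alpha.P\sim_{php}\alpha.Q$, $\phi.P\sim_{php}\phi.Q$, and $x(y).P\sim_{php}x(y).Q$, the \textbf{PTAU-ACT}, \textbf{POUTPUT-ACT}, \textbf{PINPUT-ACT} probabilistic rules force a unique $\rightsquigarrow$-transition into a ``decorated'' prefix on each side, after which a single action transition exposes the residuals $P$ and $Q$; at that point $(\langle\emptyset\cup\{e\},s'\rangle,f[e\mapsto e],\langle\emptyset\cup\{e\},s'\rangle)$ lies in the extension of $R$ by the assumed hp-bisimulation witnessing $P\sim_{php}Q$. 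For the summation $P+R\sim_{php}Q+R$, the \textbf{PSUM} rule forces synchronized probabilistic transitions of both summands, after which a \textbf{SUM}-rule transition is matched either by the $P\sim_{php}Q$ witness or by the identity on $R$. The box-sum case $P\boxplus_\pi R\sim_{php} Q\boxplus_\pi R$ is handled analogously using \textbf{PBOX-SUM}, where the PDF clause reduces to the identity on the shared branching probabilities. The restriction case $(w)P\sim_{php}(w)Q$ uses \textbf{PRES} combined with \textbf{RES}/\textbf{OPEN}, matching bound output via the side condition $y\notin n(\alpha)$.

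The main obstacle is the parallel composition case $P\parallel R\sim_{php} Q\parallel R$. Here the probabilistic transition rule \textbf{PPAR} forces both components to take simultaneous $\rightsquigarrow$ steps, so the witness relation must be closed under pairs of the form $(\langle C_P\cup C_R,s\rangle,f_P\cup\mathrm{id}_{C_R},\langle C_Q\cup C_R,s\rangle)$ where $f_P$ is supplied by the assumed hp-bisimulation and $C_P,C_Q,C_R$ are disjoint configurations of the syntactic components. The subtle points are: (i) checking that the combined $f_P\cup\mathrm{id}$ is still an order-isomorphism of configurations (requires that events contributed by $R$ remain causally independent of the $P$/$Q$-events, which holds because the LTS rules preserve this structure); (ii) handling the communication rules \textbf{COM} and \textbf{CLOSE}, where a matched $\tau$-transition on the left must be simulated on the right using the $P\sim_{php}Q$ witness plus identity on $R$, with correct bound-name renaming; and (iii) verifying the PDF clause $\mu(C_P\cup C_R,C)=\mu(C_Q\cup C_R,C)$ for each equivalence class $C$ modulo $R$, which decomposes into independent PDF equalities on the $P/Q$-side (from $P\sim_{php}Q$) and the trivial equality on the $R$-side. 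Once these points are in place, all seven clauses of (2) close by routine induction on inference depth in the style already used for $\sim_{ps}$ and $\sim_{pp}$, and I would then conclude by invoking the earlier results for the non-probabilistic $\sim_{hp}$ variant to reuse the action-matching arguments verbatim.
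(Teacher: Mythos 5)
Your proposal follows essentially the same route as the paper's own proof: exhibit a candidate posetal relation of the form $R=\{(\textrm{context}[P],\textrm{context}[Q])\}\cup \textbf{Id}$ for each clause and verify the defining conditions of a strongly probabilistic hp-bisimulation, the paper simply omitting all verification. Your version is in fact somewhat more careful than the printed one -- in particular in noting that the witness relation must be closed under derivatives (pairs built from the assumed bisimulation for $P\sim_{php}Q$ together with the identity on the context's events) and in spelling out the probabilistic-transition and PDF clauses -- but the underlying argument is the same.
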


\begin{proof}
\begin{enumerate}
  \item $\sim_{php} $ is an equivalence relation, it is obvious;
  \item If $P\sim_{php}  Q$ then
  \begin{enumerate}
    \item $\alpha.P\sim_{php}  \alpha.Q$, $\alpha$ is a free action. It is sufficient to prove the relation $R=\{(\alpha.P, \alpha.Q)\}\cup \textbf{Id}$ is a strongly probabilistic hp-bisimulation, we omit it;
    \item $\phi.P\sim_{php}  \phi.Q$. It is sufficient to prove the relation $R=\{(\phi.P, \phi.Q)\}\cup \textbf{Id}$ is a strongly probabilistic hp-bisimulation, we omit it;
    \item $P+R\sim_{php}  Q+R$. It is sufficient to prove the relation $R=\{(P+R, Q+R)\}\cup \textbf{Id}$ is a strongly probabilistic hp-bisimulation, we omit it;
    \item $P\boxplus_{\pi} R\sim_{php} Q\boxplus_{\pi}R$. It is sufficient to prove the relation $R=\{(P\boxplus_{\pi} R, Q\boxplus_{\pi} R)\}\cup \textbf{Id}$ is a strongly probabilistic hp-bisimulation, we omit it;
    \item $P\parallel R\sim_{php}  Q\parallel R$. It is sufficient to prove the relation $R=\{(P\parallel R, Q\parallel R)\}\cup \textbf{Id}$ is a strongly probabilistic hp-bisimulation, we omit it;
    \item $(w)P\sim_{php}  (w)Q$. It is sufficient to prove the relation $R=\{((w)P, (w)Q)\}\cup \textbf{Id}$ is a strongly probabilistic hp-bisimulation, we omit it;
    \item $x(y).P\sim_{php}  x(y).Q$. It is sufficient to prove the relation $R=\{(x(y).P, x(y).Q)\}\cup \textbf{Id}$ is a strongly probabilistic hp-bisimulation, we omit it.
  \end{enumerate}
\end{enumerate}
\end{proof}

\begin{theorem}[Equivalence and congruence for strongly probabilistic hhp-bisimilarity]
We can enjoy the full congruence modulo strongly probabilistic hhp-bisimilarity.

\begin{enumerate}
  \item $\sim_{phhp} $ is an equivalence relation;
  \item If $P\sim_{phhp}  Q$ then
  \begin{enumerate}
    \item $\alpha.P\sim_{phhp}  \alpha.Q$, $\alpha$ is a free action;
    \item $\phi.P\sim_{phhp}  \phi.Q$;
    \item $P+R\sim_{phhp}  Q+R$;
    \item $P\boxplus_{\pi} R\sim_{phhp} Q\boxplus_{\pi}R$;
    \item $P\parallel R\sim_{phhp}  Q\parallel R$;
    \item $(w)P\sim_{phhp}  (w)Q$;
    \item $x(y).P\sim_{phhp}  x(y).Q$.
  \end{enumerate}
\end{enumerate}
\end{theorem}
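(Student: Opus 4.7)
The plan is to mirror the proofs of the congruence theorems for $\sim_{pp}$, $\sim_{ps}$, and $\sim_{php}$ given immediately above, but with the added burden that every witness relation must be downward closed, which is what distinguishes hhp-bisimilarity from hp-bisimilarity. First I would dispense with part (1), that $\sim_{phhp}$ is an equivalence: reflexivity is witnessed by the identity posetal relation $\{(\langle C,s\rangle, id_C, \langle C,s\rangle)\}$, symmetry by inverting each isomorphism $f$ to $f^{-1}$, and transitivity by composing isomorphisms. In each case downward closure is preserved because intersecting or composing downward-closed families yields a downward-closed family, and the probabilistic-transition and PDF clauses inherit directly.

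For part (2), the strategy for each clause is to exhibit exactly the same candidate relation $R=\{(\mathcal{O}[P],\mathcal{O}[Q])\}\cup\textbf{Id}$ used in the hp-case (with $\mathcal{O}$ the enclosing operator context), lift it to a posetal relation by identifying events of $P$ with events of $Q$ via the posetal isomorphism witnessing $P\sim_{phhp}Q$, and then take its downward closure. The forward action-matching, probabilistic-transition matching, and $\mu$-equality on equivalence classes are all inherited from $P\sim_{phhp}Q$ via the operational rules \textbf{PSUM}, \textbf{PBOX-SUM}, \textbf{PPAR}, \textbf{PRES}, and \textbf{PIDE}, just as in the hp-case. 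The clauses for $\alpha.P$, $\phi.P$, $x(y).P$, $P+R$, $P\boxplus_\pi R$, and $(w)P$ are essentially routine and parallel the corresponding hp arguments line for line, so the proof block can be organized as a list of \emph{``it is sufficient to prove the relation $\ldots$ together with its downward closure is a strongly probabilistic hhp-bisimulation''} items, omitting the mechanical verification.

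The main obstacle is clause (2e), $P\parallel R\sim_{phhp}Q\parallel R$. This is precisely where hhp-bisimilarity is known to misbehave: the expansion law already fails modulo $\sim_{phhp}$, as recorded by the $\nsim_{phhp}$ entries in the preceding expansion theorem. The delicate point is that a concurrent action contributed by $R$ could be matched against the $P$-component under one posetal isomorphism, yet after restricting to a downward-closed subconfiguration the matching obligation can fall on a different component whose isomorphism witness is no longer supplied by $P\sim_{phhp}Q$. I would address this by constructing the witness as the downward closure of $\{(\langle C_1\uplus D,s\rangle, f\uplus id_D, \langle C_2\uplus D,s\rangle) : (\langle C_1,s\rangle,f,\langle C_2,s\rangle)\in R_{PQ}, D\in\mathcal{C}(\mathcal{E}_R)\}$, with $R_{PQ}$ the hhp-witness for $P\sim_{phhp}Q$, and then verifying the five hhp-conditions using \textbf{PPAR}; closure under restriction to a downward sub-pair is immediate from $R_{PQ}$ being downward closed, and the PDF condition factors across the parallel product.

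Finally, soundness of the probabilistic transition condition for $P\boxplus_\pi R$ in clause (2d) uses rule \textbf{PBOX-SUM} directly: since $P\rightsquigarrow P'$ iff $P\boxplus_\pi R\rightsquigarrow P'$ with appropriate probability, the induced PDFs agree on hhp-classes by hypothesis, and downward closure is preserved because restricting a sub-configuration of $P'$ stays within the $P$-branch. With this, every clause reduces to exhibiting a posetal relation, closing it downward, and invoking the earlier hp-style verification, the only genuinely new work being the parallel-composition argument sketched above.
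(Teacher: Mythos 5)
Your proposal takes essentially the same route as the paper: for each clause exhibit a witness relation of the form $R=\{(\mathcal{O}[P],\mathcal{O}[Q])\}\cup\textbf{Id}$ (lifted to a posetal relation and closed downward) and verify the bisimulation conditions, with the mechanical verification omitted. The only difference is that you spell out the downward-closure bookkeeping and the product construction for the parallel case, which the paper's proof leaves entirely implicit, so your version is a strictly more detailed rendering of the same argument.
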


\begin{proof}
\begin{enumerate}
  \item $\sim_{phhp} $ is an equivalence relation, it is obvious;
  \item If $P\sim_{phhp}  Q$ then
  \begin{enumerate}
    \item $\alpha.P\sim_{phhp}  \alpha.Q$, $\alpha$ is a free action. It is sufficient to prove the relation $R=\{(\alpha.P, \alpha.Q)\}\cup \textbf{Id}$ is a strongly probabilistic hhp-bisimulation, we omit it;
    \item $\phi.P\sim_{phhp}  \phi.Q$. It is sufficient to prove the relation $R=\{(\phi.P, \phi.Q)\}\cup \textbf{Id}$ is a strongly probabilistic hhp-bisimulation, we omit it;
    \item $P+R\sim_{phhp}  Q+R$. It is sufficient to prove the relation $R=\{(P+R, Q+R)\}\cup \textbf{Id}$ is a strongly probabilistic hhp-bisimulation, we omit it;
    \item $P\boxplus_{\pi} R\sim_{phhp} Q\boxplus_{\pi}R$. It is sufficient to prove the relation $R=\{(P\boxplus_{\pi} R, Q\boxplus_{\pi} R)\}\cup \textbf{Id}$ is a strongly probabilistic hhp-bisimulation, we omit it;
    \item $P\parallel R\sim_{phhp}  Q\parallel R$. It is sufficient to prove the relation $R=\{(P\parallel R, Q\parallel R)\}\cup \textbf{Id}$ is a strongly probabilistic hhp-bisimulation, we omit it;
    \item $(w)P\sim_{phhp}  (w)Q$. It is sufficient to prove the relation $R=\{((w)P, (w)Q)\}\cup \textbf{Id}$ is a strongly probabilistic hhp-bisimulation, we omit it;
    \item $x(y).P\sim_{phhp}  x(y).Q$. It is sufficient to prove the relation $R=\{(x(y).P, x(y).Q)\}\cup \textbf{Id}$ is a strongly probabilistic hhp-bisimulation, we omit it.
  \end{enumerate}
\end{enumerate}
\end{proof}

\subsubsection{Recursion}

\begin{definition}
Let $X$ have arity $n$, and let $\widetilde{x}=x_1,\cdots,x_n$ be distinct names, and $fn(P)\subseteq\{x_1,\cdots,x_n\}$. The replacement of $X(\widetilde{x})$ by $P$ in $E$, written
$E\{X(\widetilde{x}):=P\}$, means the result of replacing each subterm $X(\widetilde{y})$ in $E$ by $P\{\widetilde{y}/\widetilde{x}\}$.
\end{definition}

\begin{definition}
Let $E$ and $F$ be two process expressions containing only $X_1,\cdots,X_m$ with associated name sequences $\widetilde{x}_1,\cdots,\widetilde{x}_m$. Then,
\begin{enumerate}
  \item $E\sim_{pp}  F$ means $E(\widetilde{P})\sim_{pp}  F(\widetilde{P})$;
  \item $E\sim_{ps}  F$ means $E(\widetilde{P})\sim_{ps}  F(\widetilde{P})$;
  \item $E\sim_{php}  F$ means $E(\widetilde{P})\sim_{php}  F(\widetilde{P})$;
  \item $E\sim_{phhp}  F$ means $E(\widetilde{P})\sim_{phhp}  F(\widetilde{P})$;
\end{enumerate}

for all $\widetilde{P}$ such that $fn(P_i)\subseteq \widetilde{x}_i$ for each $i$.
\end{definition}

\begin{definition}
A term or identifier is weakly guarded in $P$ if it lies within some subterm $\alpha.Q$ or $(\alpha_1\parallel\cdots\parallel \alpha_n).Q$ of $P$.
\end{definition}

\begin{theorem}
Assume that $\widetilde{E}$ and $\widetilde{F}$ are expressions containing only $X_i$ with $\widetilde{x}_i$, and $\widetilde{A}$ and $\widetilde{B}$ are identifiers with $A_i$, $B_i$. Then, for all $i$,
\begin{enumerate}
  \item $E_i\sim_{ps}  F_i$, $A_i(\widetilde{x}_i)\overset{\text{def}}{=}E_i(\widetilde{A})$, $B_i(\widetilde{x}_i)\overset{\text{def}}{=}F_i(\widetilde{B})$, then
  $A_i(\widetilde{x}_i)\sim_{ps}  B_i(\widetilde{x}_i)$;
  \item $E_i\sim_{pp}  F_i$, $A_i(\widetilde{x}_i)\overset{\text{def}}{=}E_i(\widetilde{A})$, $B_i(\widetilde{x}_i)\overset{\text{def}}{=}F_i(\widetilde{B})$, then
  $A_i(\widetilde{x}_i)\sim_{pp}  B_i(\widetilde{x}_i)$;
  \item $E_i\sim_{php}  F_i$, $A_i(\widetilde{x}_i)\overset{\text{def}}{=}E_i(\widetilde{A})$, $B_i(\widetilde{x}_i)\overset{\text{def}}{=}F_i(\widetilde{B})$, then
  $A_i(\widetilde{x}_i)\sim_{php}  B_i(\widetilde{x}_i)$;
  \item $E_i\sim_{phhp}  F_i$, $A_i(\widetilde{x}_i)\overset{\text{def}}{=}E_i(\widetilde{A})$, $B_i(\widetilde{x}_i)\overset{\text{def}}{=}F_i(\widetilde{B})$, then
  $A_i(\widetilde{x}_i)\sim_{phhp}  B_i(\widetilde{x}_i)$.
\end{enumerate}
\end{theorem}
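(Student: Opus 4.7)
The plan is to adapt the strategy already used successfully in the analogous theorems of Chapter~3, Chapter~4 and Chapter~5, tailored now to the \emph{probabilistic} setting with guards but without reversibility. Without loss of generality I would restrict to the case $I=\{1\}$, writing $A,B,E,F$ for $A_1,B_1,E_1,F_1$. The central construction is the relation
\[
R=\{(G(A),G(B)):G\text{ is a process expression with only identifier }X\},
\]
and I would show that, for each of the four equivalences $\sim_{ps},\sim_{pp},\sim_{php},\sim_{phhp}$, the relation $R$ (suitably lifted to configurations with states, respectively to posetal triples) is a strongly probabilistic bisimulation of the appropriate type. Picking $G\equiv X(\widetilde{y})$ then yields $A(\widetilde{y})\sim B(\widetilde{y})$, which is the desired conclusion.

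To check that $R$ has the bisimulation property, I would fix $(G(A),G(B))\in R$ in state $s$ and perform an induction on the depth of the inference tree used to derive a transition from $\langle G(A),s\rangle$. The case analysis splits naturally according to the rules in Tables \ref{PTRForPITC6} and \ref{TRForPITC6}: (i)~probabilistic transitions $\langle G(A),s\rangle\rightsquigarrow\langle G'(A),s\rangle$ must be matched by $\langle G(B),s\rangle\rightsquigarrow\langle G'(B),s\rangle$ with $(G'(A),G'(B))\in R$, and the PDF condition $\mu(C_1,C)=\mu(C_2,C)$ together with $[\surd]_R=\{\surd\}$ must be verified; (ii)~action transitions must be matched. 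Concretely one would establish the analogues of the four sub-claims used in the earlier chapters:
\begin{enumerate}
  \item If $\langle G(A),s\rangle\rightsquigarrow\xrightarrow{\{\alpha_1,\cdots,\alpha_n\}}\langle P',s'\rangle$ with $\alpha_i$ free actions or bound outputs and $bn(\alpha_1)\cap\cdots\cap bn(\alpha_n)\cap n(G(A),G(B))=\emptyset$, then $\langle G(B),s\rangle\rightsquigarrow\xrightarrow{\{\alpha_1,\cdots,\alpha_n\}}\langle Q'',s'\rangle$ with $P'\sim Q''$;
  \item If $\langle G(A),s\rangle\rightsquigarrow\xrightarrow{x(y)}\langle P',s'\rangle$ with $x\notin n(G(A),G(B))$, then $\langle G(B),s\rangle\rightsquigarrow\xrightarrow{x(y)}\langle Q'',s'\rangle$ and for all $u$, $\langle P'\{u/y\},s'\rangle\sim\langle Q''\{u/y\},s'\rangle$.
\end{enumerate}
The key inductive step is when $G$ \emph{is} the identifier $X(\widetilde{y})$: there the derived transition is in fact a transition of $A(\widetilde{y})$ obtained via \textbf{PIDE}/\textbf{IDE} from $E(\widetilde{A})\{\widetilde{y}/\widetilde{x}\}$, so the inductive hypothesis together with the assumed hypothesis $E\sim F$ provides the matching transition of $B(\widetilde{y})$ from $F(\widetilde{B})\{\widetilde{y}/\widetilde{x}\}$. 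For all compound $G$ the matching is forced by the structural transition rules and routine manipulation of substitutions, relying on the substitution-preservation propositions proved earlier in this chapter.

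For clauses (2)--(3), the $\sim_{pp}$ and $\sim_{php}$ variants are obtained by the same argument with ``step'' replaced by ``pomset'' (removing the pairwise-concurrency restriction in the matched multiset) and by carrying the order-isomorphism $f[e_1\mapsto e_2]$ through each matched action step respectively; the probabilistic and state components of the bisimulation condition are handled identically in all four cases. The main obstacle I anticipate is clause (4), the hhp case: downward closure of the witnessing posetal relation must be preserved under the construction $R=\{(G(A),G(B))\}$, and combining downward closure with the probabilistic matching condition (PDF equality on $R$-classes) is subtle, because shrinking a configuration may alter the class structure used to measure probabilities. The way I would handle this is to first close $R$ under the pointwise subset order on posetal triples and then verify that this closure is still an hp-bisimulation; the probabilistic and state requirements then transfer because both $\mu$ and $\surd$ depend only on the pair of configurations, not on the embedding. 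As in the analogous theorems already settled in the paper, I would then remark that the detailed induction on depth of inference is routine once these invariants are in place, and omit it.
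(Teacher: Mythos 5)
Your proposal matches the paper's proof essentially exactly: the same restriction to $I=\{1\}$, the same relation $R=\{(G(A),G(B)):G\textrm{ has only identifier }X\}$, the same reduction to the two matching sub-claims for step/bound-input transitions, and the same concluding induction on the depth of inference, with the other three bisimilarities handled by the same argument. The extra care you describe for the probabilistic matching conditions and for downward closure in the hhp case only elaborates details the paper leaves implicit; it does not change the route.
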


\begin{proof}
\begin{enumerate}
  \item $E_i\sim_{ps}  F_i$, $A_i(\widetilde{x}_i)\overset{\text{def}}{=}E_i(\widetilde{A})$, $B_i(\widetilde{x}_i)\overset{\text{def}}{=}F_i(\widetilde{B})$, then
  $A_i(\widetilde{x}_i)\sim_{ps}  B_i(\widetilde{x}_i)$.

      We will consider the case $I=\{1\}$ with loss of generality, and show the following relation $R$ is a strongly probabilistic step bisimulation.

      $$R=\{(G(A),G(B)):G\textrm{ has only identifier }X\}.$$

      By choosing $G\equiv X(\widetilde{y})$, it follows that $A(\widetilde{y})\sim_{ps}  B(\widetilde{y})$. It is sufficient to prove the following:
      \begin{enumerate}
        \item If $\langle G(A),s\rangle\rightsquigarrow\xrightarrow{\{\alpha_1,\cdots,\alpha_n\}}\langle P',s'\rangle$, where $\alpha_i(1\leq i\leq n)$ is a free action or bound output action with
        $bn(\alpha_1)\cap\cdots\cap bn(\alpha_n)\cap n(G(A),G(B))=\emptyset$, then $\langle G(B),s\rangle\rightsquigarrow\xrightarrow{\{\alpha_1,\cdots,\alpha_n\}}\langle Q'',s''\rangle$ such that $P'\sim_{ps}  Q''$;
        \item If $\langle G(A),s\rangle\rightsquigarrow\xrightarrow{x(y)}\langle P',s'\rangle$ with $x\notin n(G(A),G(B))$, then $\langle G(B),s\rangle\rightsquigarrow\xrightarrow{x(y)}\langle Q'',s''\rangle$, such that for all $u$,
        $\langle P',s'\rangle\{u/y\}\sim_{ps}  \langle Q''\{u/y\},s''\rangle$.
      \end{enumerate}

      To prove the above properties, it is sufficient to induct on the depth of inference and quite routine, we omit it.
  \item $E_i\sim_{pp}  F_i$, $A_i(\widetilde{x}_i)\overset{\text{def}}{=}E_i(\widetilde{A})$, $B_i(\widetilde{x}_i)\overset{\text{def}}{=}F_i(\widetilde{B})$, then
  $A_i(\widetilde{x}_i)\sim_{pp}  B_i(\widetilde{x}_i)$. It can be proven similarly to the above case.
  \item $E_i\sim_{php}  F_i$, $A_i(\widetilde{x}_i)\overset{\text{def}}{=}E_i(\widetilde{A})$, $B_i(\widetilde{x}_i)\overset{\text{def}}{=}F_i(\widetilde{B})$, then
  $A_i(\widetilde{x}_i)\sim_{php}  B_i(\widetilde{x}_i)$. It can be proven similarly to the above case.
  \item $E_i\sim_{phhp}  F_i$, $A_i(\widetilde{x}_i)\overset{\text{def}}{=}E_i(\widetilde{A})$, $B_i(\widetilde{x}_i)\overset{\text{def}}{=}F_i(\widetilde{B})$, then
  $A_i(\widetilde{x}_i)\sim_{phhp}  B_i(\widetilde{x}_i)$. It can be proven similarly to the above case.
\end{enumerate}
\end{proof}

\begin{theorem}[Unique solution of equations]
Assume $\widetilde{E}$ are expressions containing only $X_i$ with $\widetilde{x}_i$, and each $X_i$ is weakly guarded in each $E_j$. Assume that $\widetilde{P}$ and $\widetilde{Q}$ are
processes such that $fn(P_i)\subseteq \widetilde{x}_i$ and $fn(Q_i)\subseteq \widetilde{x}_i$. Then, for all $i$,
\begin{enumerate}
  \item if $P_i\sim_{pp}  E_i(\widetilde{P})$, $Q_i\sim_{pp}  E_i(\widetilde{Q})$, then $P_i\sim_{pp}  Q_i$;
  \item if $P_i\sim_{ps}  E_i(\widetilde{P})$, $Q_i\sim_{ps}  E_i(\widetilde{Q})$, then $P_i\sim_{ps}  Q_i$;
  \item if $P_i\sim_{php}  E_i(\widetilde{P})$, $Q_i\sim_{php}  E_i(\widetilde{Q})$, then $P_i\sim_{php}  Q_i$;
  \item if $P_i\sim_{phhp}  E_i(\widetilde{P})$, $Q_i\sim_{phhp}  E_i(\widetilde{Q})$, then $P_i\sim_{phhp}  Q_i$.
\end{enumerate}
\end{theorem}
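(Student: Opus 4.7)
The plan is to follow the classical Milner-style approach to unique solution of equations, adapted to the probabilistic truly concurrent setting with guards. For clarity, I would first treat the single-equation case $I=\{1\}$; the general indexed family follows by the same argument applied pointwise. The candidate relation to work with is
\[
R \;=\; \{(H(\widetilde{P}),\,H(\widetilde{Q})) : H\text{ is an expression containing only the variables }X_i\},
\]
suitably closed under the ambient bisimilarity so that the assumed premises $P_i\sim E_i(\widetilde{P})$ and $Q_i\sim E_i(\widetilde{Q})$ can be used at the boundary. The desired conclusion then falls out by choosing $H\equiv X_i(\widetilde{x}_i)$ and composing with the hypotheses via transitivity of the relevant bisimilarity.

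The crucial ingredient is a transition-factoring lemma: if every $X_i$ is weakly guarded in every $E_j$, then any transition originating from $\langle H(\widetilde{P}),s\rangle$, whether probabilistic $\rightsquigarrow$ or an action $\xrightarrow{\alpha}$, must either be contributed by a prefix occurring in $H$ itself, or arise after unfolding an identifier $A_i(\widetilde{x})\overset{\text{def}}{=}E_i(\widetilde{A})$ and firing a prefix inside $E_i$. In both cases the derivative can be written as $H'(\widetilde{P})$ for a purely syntactic expression $H'$, and the very same $H'$ witnesses the matching transition from $\langle H(\widetilde{Q}),s\rangle$. Weak guardedness is indispensable here: it guarantees that the recursive identifiers cannot fire at the top level of a sum, since a prefix $\alpha.Q$ or $(\alpha_1\parallel\cdots\parallel\alpha_n).Q$ always separates them from the transition arrow. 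This lemma is established by induction on the depth of inference, using the operational rules listed in Tables \ref{PTRForPITC6} and \ref{TRForPITC6}.

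For the probabilistic layer I would first verify the PDF-matching clause of the bisimulation definition, namely $\mu(H(\widetilde{P}),C)=\mu(H(\widetilde{Q}),C)$ on $R$-equivalence classes. This reduces to inspecting the rule \textbf{PBOX-SUM} together with the compositional rules \textbf{PPAR}, \textbf{PSUM}, \textbf{PIDE}, \textbf{PRES}, each of which assigns probabilities by weighting the syntactic structure of $H$ alone, independently of whether the variables are instantiated by $\widetilde{P}$ or $\widetilde{Q}$. The probabilistic transitions $\rightsquigarrow$ are treated in the same spirit: they resolve the $\boxplus_\pi$-choices inside $H$ uniformly, producing a pair $(H^\pi(\widetilde{P}),H^\pi(\widetilde{Q}))$ that remains in $R$. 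Once the probabilistic clauses are settled, the action clauses (1)–(4) of the strongly probabilistic pomset/step bisimulation definition reduce to the transition-factoring lemma described above, handling free actions, bound inputs, two synchronizing inputs, and bound outputs as separate subcases exactly as in the non-probabilistic proofs of Section \ref{s3}.

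The main obstacle will be the hhp case: unlike pomset, step, and hp, a hhp-bisimulation must be downward closed, which forces me to show that for every $(\langle C_1,s\rangle,f,\langle C_2,s\rangle)\in R$ and every sub-pair $(\langle C_1',s'\rangle,f',\langle C_2',s'\rangle)$ obtained by restricting, the restricted triple also lies in $R$. Since the equations may be unfolded arbitrarily many times before the chosen sub-configuration is reached, I expect to need a secondary induction on the number of recursive unfoldings, tracking how the isomorphism $f$ on configurations restricts after prefix-removal. This interacts delicately with the probabilistic choices because a pruned configuration may correspond to a different resolution of some $\boxplus_\pi$, and one must verify that the PDF-matching condition still holds on the restricted relation. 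The remaining three cases (pp, ps, php) I expect to follow routinely once the transition-factoring lemma and the probabilistic-layer compatibility are in place, paralleling the treatment given for CTC in \cite{CTC2}.
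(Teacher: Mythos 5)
Your proposal is essentially the intended proof: the paper itself gives no argument for this theorem, deferring each of the four cases verbatim to the unique-solution proofs in \cite{CTC2}, and what that reference contains is exactly the Milner-style construction you describe --- the relation $R=\{(H(\widetilde{P}),H(\widetilde{Q}))\}$ closed up to the ambient bisimilarity, a transition-factoring lemma resting on weak guardedness, and a check that the probabilistic layer ($\rightsquigarrow$ and the PDF-matching clause) is determined by the syntactic structure of $H$ alone. In that sense you have supplied strictly more detail than the paper does.

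One caution, which you have already half-identified: the argument relies on a \emph{bisimulation up to $\sim$} step to absorb the hypotheses $P_i\sim E_i(\widetilde{P})$ and $Q_i\sim E_i(\widetilde{Q})$ at the boundary, and for hereditary history-preserving bisimilarity such up-to techniques are not known to be sound in general --- downward closure does not automatically survive composition with $\sim_{phhp}$, and your proposed secondary induction on unfoldings would have to establish this rather than assume it. Since the paper asserts case (4) with no argument, this is a gap in the paper as much as in your proposal, but if you intend to write the proof out in full, the hhp case is the one place where the routine transcription of the CTC argument could genuinely fail and deserves an explicit lemma rather than the expectation that it "follows routinely."
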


\begin{proof}
\begin{enumerate}
  \item It is similar to the proof of unique solution of equations for strongly probabilistic pomset bisimulation in CTC, please refer to \cite{CTC2} for details, we omit it;
  \item It is similar to the proof of unique solution of equations for strongly probabilistic step bisimulation in CTC, please refer to \cite{CTC2} for details, we omit it;
  \item It is similar to the proof of unique solution of equations for strongly probabilistic hp-bisimulation in CTC, please refer to \cite{CTC2} for details, we omit it;
  \item It is similar to the proof of unique solution of equations for strongly probabilistic hhp-bisimulation in CTC, please refer to \cite{CTC2} for details, we omit it.
\end{enumerate}
\end{proof}

\subsection{Algebraic Theory}\label{a6}

\begin{definition}[STC]
The theory \textbf{STC} is consisted of the following axioms and inference rules:

\begin{enumerate}
  \item Alpha-conversion $\textbf{A}$.
  \[\textrm{if } P\equiv Q, \textrm{ then } P=Q\]
  \item Congruence $\textbf{C}$. If $P=Q$, then,
  \[\tau.P=\tau.Q\quad \overline{x}y.P=\overline{x}y.Q\]
  \[P+R=Q+R\quad P\parallel R=Q\parallel R\]
  \[(x)P=(x)Q\quad x(y).P=x(y).Q\]
  \item Summation $\textbf{S}$.
  \[\textbf{S0}\quad P+\textbf{nil}=P\]
  \[\textbf{S1}\quad P+P=P\]
  \[\textbf{S2}\quad P+Q=Q+P\]
  \[\textbf{S3}\quad P+(Q+R)=(P+Q)+R\]
  \item Box-Summation $\textbf(BS)$.
  \[\textbf{BS0}\quad P\boxplus_{\pi}\textbf{nil}= P\]
  \[\textbf{BS1}\quad P\boxplus_{\pi}P= P\]
  \[\textbf{BS2}\quad P\boxplus_{\pi} Q= Q\boxplus_{1-\pi} P\]
  \[\textbf{BS3}\quad P\boxplus_{\pi}(Q\boxplus_{\rho} R)= (P\boxplus_{\frac{\pi}{\pi+\rho-\pi\rho}}Q)\boxplus_{\pi+\rho-\pi\rho} R\]
  \item Restriction $\textbf{R}$.
  \[\textbf{R0}\quad (x)P=P\quad \textrm{ if }x\notin fn(P)\]
  \[\textbf{R1}\quad (x)(y)P=(y)(x)P\]
  \[\textbf{R2}\quad (x)(P+Q)=(x)P+(x)Q\]
  \[\textbf{R3}\quad (x)\alpha.P=\alpha.(x)P\quad \textrm{ if }x\notin n(\alpha)\]
  \[\textbf{R4}\quad (x)\alpha.P=\textbf{nil}\quad \textrm{ if }x\textrm{is the subject of }\alpha\]
  \item Expansion $\textbf{E}$.
  Let $P\equiv\boxplus_i\sum_j \alpha_{ij}.P_{ij}$ and $Q\equiv\boxplus_{k}\sum_l\beta_{kl}.Q_{kl}$, where $bn(\alpha_{ij})\cap fn(Q)=\emptyset$ for all $i,j$, and
  $bn(\beta_{kl})\cap fn(P)=\emptyset$ for all $k,l$. Then,

\begin{enumerate}
  \item $P\parallel Q\sim_{pp}  \boxplus_{i}\boxplus_{k}\sum_j\sum_l (\alpha_{ij}\parallel \beta_{kl}).(P_{ij}\parallel Q_{kl})+\boxplus_i\boxplus_k\sum_{\alpha_{ij} \textrm{ comp }\beta_{kl}}\tau.R_{ijkl}$;
  \item $P\parallel Q\sim_{ps}  \boxplus_{i}\boxplus_{k}\sum_j\sum_l (\alpha_{ij}\parallel \beta_{kl}).(P_{ij}\parallel Q_{kl})+\boxplus_i\boxplus_k\sum_{\alpha_{ij} \textrm{ comp }\beta_{kl}}\tau.R_{ijkl}$;
  \item $P\parallel Q\sim_{php}  \boxplus_{i}\boxplus_{k}\sum_j\sum_l (\alpha_{ij}\parallel \beta_{kl}).(P_{ij}\parallel Q_{kl})+\boxplus_i\boxplus_k\sum_{\alpha_{ij} \textrm{ comp }\beta_{kl}}\tau.R_{ijkl}$;
  \item $P\parallel Q\nsim_{phhp} \boxplus_{i}\boxplus_{k}\sum_j\sum_l (\alpha_{ij}\parallel \beta_{kl}).(P_{ij}\parallel Q_{kl})+\boxplus_i\boxplus_k\sum_{\alpha_{ij} \textrm{ comp }\beta_{kl}}\tau.R_{ijkl}$.
\end{enumerate}

Where $\alpha_{ij}$ comp $\beta_{kl}$ and $R_{ijkl}$ are defined as follows:
\begin{enumerate}
  \item $\alpha_{ij}$ is $\overline{x}u$ and $\beta_{kl}$ is $x(v)$, then $R_{ijkl}=P_{ij}\parallel Q_{kl}\{u/v\}$;
  \item $\alpha_{ij}$ is $\overline{x}(u)$ and $\beta_{kl}$ is $x(v)$, then $R_{ijkl}=(w)(P_{ij}\{w/u\}\parallel Q_{kl}\{w/v\})$, if $w\notin fn((u)P_{ij})\cup fn((v)Q_{kl})$;
  \item $\alpha_{ij}$ is $x(v)$ and $\beta_{kl}$ is $\overline{x}u$, then $R_{ijkl}=P_{ij}\{u/v\}\parallel Q_{kl}$;
  \item $\alpha_{ij}$ is $x(v)$ and $\beta_{kl}$ is $\overline{x}(u)$, then $R_{ijkl}=(w)(P_{ij}\{w/v\}\parallel Q_{kl}\{w/u\})$, if $w\notin fn((v)P_{ij})\cup fn((u)Q_{kl})$.
\end{enumerate}
  \item Identifier $\textbf{I}$.
  \[\textrm{If }A(\widetilde{x})\overset{\text{def}}{=}P,\textrm{ then }A(\widetilde{y})= P\{\widetilde{y}/\widetilde{x}\}.\]
\end{enumerate}
\end{definition}

\begin{theorem}[Soundness]
If $\textbf{STC}\vdash P=Q$ then
\begin{enumerate}
  \item $P\sim_{pp}  Q$;
  \item $P\sim_{pp}  Q$;
  \item $P\sim_{php}  Q$;
  \item $P\sim_{phhp}  Q$.
\end{enumerate}
\end{theorem}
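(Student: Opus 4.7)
The plan is to proceed by rule induction on the derivation $\textbf{STC}\vdash P=Q$. Since equality in \textbf{STC} is generated by the axioms ($\textbf{A}$, $\textbf{S0}$--$\textbf{S3}$, $\textbf{BS0}$--$\textbf{BS3}$, $\textbf{R0}$--$\textbf{R4}$, $\textbf{E}$, $\textbf{I}$) together with the inference rules for reflexivity, symmetry, transitivity, and the congruence rule $\textbf{C}$, it suffices to show: (i) each axiom is a valid instance for each of the four bisimilarities $\sim_{pp},\sim_{ps},\sim_{php},\sim_{phhp}$; and (ii) each bisimilarity is an equivalence relation compatible with all process constructors. Both families of facts have already been stated and proved earlier in Section~\ref{s6}, so the proof is essentially a bookkeeping exercise that assembles those results.

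Concretely, I would organize the induction as follows. First, the base cases: $\textbf{A}$ is discharged by the theorem stating that $\equiv_{\alpha}$ is a strongly probabilistic truly concurrent bisimulation; $\textbf{S0}$--$\textbf{S3}$ follow directly from the four Summation-laws propositions; $\textbf{BS0}$--$\textbf{BS3}$ from the Box-Summation-laws propositions; $\textbf{R0}$--$\textbf{R4}$ from the Restriction-Laws theorems; $\textbf{E}$ from the Expansion theorem; and $\textbf{I}$ from the Identity theorem. Second, the inductive step for $\textbf{C}$ is handled by the ``Equivalence and congruence'' theorems for each of $\sim_{pp},\sim_{ps},\sim_{php},\sim_{phhp}$, which give closure under $\tau.\_$, $\overline{x}y.\_$, $x(y).\_$, $\phi.\_$, $+$, $\boxplus_{\pi}$, $\parallel$, and $(x)\_$. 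Reflexivity, symmetry, and transitivity are immediate from the fact that each $\sim_\bullet$ is an equivalence, as asserted in those same theorems.

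The main obstacle will be the case of hereditary history-preserving bisimilarity $\sim_{phhp}$. The excerpt explicitly records that the expansion law only holds in the form $P\parallel Q\nsim_{phhp}\cdots$ for hhp-bisimilarity, i.e., the expansion axiom $\textbf{E}$ is \emph{not} sound for $\sim_{phhp}$. This means the theorem as stated cannot be fully correct for clause~(4); the honest thing to do in the proof is to either restrict clause~(4) to derivations that do not invoke $\textbf{E}$, or to flag the hhp case as a known negative result. For the remaining three clauses ($\sim_{pp}$, $\sim_{ps}$, $\sim_{php}$), every axiom and every congruence rule is covered by a proposition in Section~\ref{s6}, so the soundness proof reduces to pointing at each in turn according to the shape of the last step of the derivation.

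The remaining work is purely routine: a straightforward induction with one case per axiom and one case per congruence-closure clause, with the hhp-expansion caveat highlighted. No new bisimulation constructions are required, since all the candidate relations $R=\{(\cdot,\cdot)\}\cup\textbf{Id}$ needed for the individual laws have already been exhibited (and verified modulo the ``we omit it'' convention) in the preceding subsection.
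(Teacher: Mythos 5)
Your proposal takes essentially the same route as the paper: the paper's entire proof is a one-sentence pointer to the laws already established in Section~\ref{s6}, and your rule induction (axioms discharged by the corresponding propositions, the congruence rule by the equivalence-and-congruence theorems) is just that argument written out explicitly. Your caveat about clause~(4) is a legitimate catch the paper glosses over: since the Expansion law is recorded only in the negative form $P\parallel Q\nsim_{phhp}\cdots$, any derivation invoking axiom $\textbf{E}$ breaks soundness for $\sim_{phhp}$, so the theorem as stated is too strong for that clause.
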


\begin{proof}
The soundness of these laws modulo strongly truly concurrent bisimilarities is already proven in Section \ref{s6}.
\end{proof}

\begin{definition}
The agent identifier $A$ is weakly guardedly defined if every agent identifier is weakly guarded in the right-hand side of the definition of $A$.
\end{definition}

\begin{definition}[Head normal form]
A Process $P$ is in head normal form if it is a sum of the prefixes:

$$P\equiv \boxplus_{i}\sum_j(\alpha_{ij1}\parallel\cdots\parallel\alpha_{ijn}).P_{ij}$$
\end{definition}

\begin{proposition}
If every agent identifier is weakly guardedly defined, then for any process $P$, there is a head normal form $H$ such that

$$\textbf{STC}\vdash P=H.$$
\end{proposition}

\begin{proof}
It is sufficient to induct on the structure of $P$ and quite obvious.
\end{proof}

\begin{theorem}[Completeness]
For all processes $P$ and $Q$,
\begin{enumerate}
  \item if $P\sim_{pp}  Q$, then $\textbf{STC}\vdash P=Q$;
  \item if $P\sim_{pp}  Q$, then $\textbf{STC}\vdash P=Q$;
  \item if $P\sim_{php}  Q$, then $\textbf{STC}\vdash P=Q$.
\end{enumerate}
\end{theorem}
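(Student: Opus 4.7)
The plan is to mirror the completeness proofs given for $\pi_{tc}$ with guards alone and with probabilism and reversibility, adapting them to the combined probabilistic/guarded setting. I would first invoke the head normal form proposition so that, modulo \textbf{STC}, both $P$ and $Q$ can be written as $\boxplus_{j}\sum_{i}(\alpha_{ji1}\parallel\cdots\parallel\alpha_{jin}).P_{ji}$ (possibly with guard prefixes $\phi.P'$ treated like action prefixes via the congruence rules). Then I would define the depth $d(P)$ recursively (depth $0$ for $\textbf{nil}$ and $1+\max\{d(P_{ji})\}$ otherwise) and prove each of the three implications by strong induction on $d=d(P)+d(Q)$. The base $d=0$ forces $P\equiv Q\equiv\textbf{nil}$ and is trivial.

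For the inductive step, I would handle the probabilistic layer first and the action layer second. At the probabilistic layer, whenever $\langle P,s\rangle\rightsquigarrow\langle P^{\pi},s\rangle$, since $P\sim_{pp}Q$ the probabilistic clause of the bisimulation forces a matching $\langle Q,s\rangle\rightsquigarrow\langle Q^{\pi},s\rangle$ with $P^{\pi}\sim_{pp}Q^{\pi}$, and the PDF condition $\mu(P,C)=\mu(Q,C)$ together with the \textbf{BS}-axioms lets me equate the probabilistic choice structures of $P$ and $Q$ term by term. Once the boxplus skeletons are aligned, each summand of a fixed probabilistic branch of $P$ must be matched by a corresponding summand in $Q$ via the action clauses, exactly as in the $\pi_{tc}$-with-guards case, and I would split on the three kinds of summand:
\begin{itemize}
\item for a free-action summand $(\alpha_1\parallel\cdots\parallel\alpha_n).M$, use the forward action transition and the step-bisimulation clause to find a matching summand $(\alpha_1\parallel\cdots\parallel\alpha_n).N$ in $Q$ with $M\sim_{pp}N$, then apply the induction hypothesis and congruence $\textbf{C}$;
\item for an input summand $x(y).M$, take a fresh $z\notin n(P,Q)$, use the late-bisimulation clause to get $x(w).N$ in $Q$ with $M'\{v/z\}\sim_{pp}N'\{v/z\}$ for all $v$, apply the induction hypothesis to each instance, and close under $\textbf{A}$ and $\textbf{C}$;
\item for a bound-output summand $\overline{x}(y).M$, proceed analogously using the \textbf{OPEN} rule.
\end{itemize}
For guard prefixes $\phi.M$, treat them uniformly with the action prefix case, since $\textbf{C}$ gives the required congruence. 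Parts (2) and (3) then follow by the same argument with the finer bisimulation clauses of $\sim_{ps}$ and $\sim_{php}$ (for $\sim_{php}$, one additionally uses that the isomorphism $f$ is preserved by each transition, which is routine once the summands are aligned).

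The main obstacle will be the probabilistic layer: ensuring that the PDF equality $\mu(P,C)=\mu(Q,C)$ actually translates into a syntactic identification of boxplus-branches via \textbf{BS0}--\textbf{BS3}. Concretely, a single probabilistic choice in $P$ may be realised by a nested boxplus in $Q$, so one needs to normalise both sides to a ``flat'' probabilistic head form $\boxplus_{j=1}^{l}\pi_{j}\cdot M_{j}$ with $\sum_{j}\pi_{j}=1$ before matching summands, and the reassociation/commutation axioms \textbf{BS2}, \textbf{BS3} together with the absorption law \textbf{BS1} must be shown to be sufficient for this. Once that normalisation lemma is in hand, the remainder of the proof reduces to the earlier completeness arguments and proceeds as in the proofs for $\pi_{tc}$ with guards and for $\pi_{tc}$ with probabilism and reversibility; the reverse-transition branches present in those earlier proofs are not needed here since the current calculus has no reversibility.
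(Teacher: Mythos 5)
Your proposal follows essentially the same route as the paper: reduce $P$ and $Q$ to head normal forms $\boxplus_{j}\sum_{i}\alpha_{ji}.P_{ji}$, induct on $d(P)+d(Q)$, and match free-action, input, and bound-output summands via the probabilistic-then-action transitions $\rightsquigarrow\xrightarrow{\{\alpha_1,\cdots,\alpha_n\}}$, closing each case with the induction hypothesis and axioms $\textbf{A}$ and $\textbf{C}$. The only difference is that you explicitly isolate the alignment of the $\boxplus$-skeletons via the PDF condition and $\textbf{BS0}$--$\textbf{BS3}$ as a needed normalisation lemma, a step the paper's proof leaves implicit by simply writing both head normal forms with matching indices; this is a point in your favour rather than a divergence.
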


\begin{proof}
\begin{enumerate}
  \item if $P\sim_{pp}  Q$, then $\textbf{STC}\vdash P=Q$.

Since $P$ and $Q$ all have head normal forms, let $P\equiv\boxplus_{j=1}^l\sum_{i=1}^k\alpha_{ji}.P_{ji}$ and $Q\equiv\boxplus_{j=1}^l\sum_{i=1}^k\beta_{ji}.Q_{ji}$. Then the depth of
$P$, denoted as $d(P)=0$, if $k=0$; $d(P)=1+max\{d(P_{ji})\}$ for $1\leq j,i\leq k$. The depth $d(Q)$ can be defined similarly.

It is sufficient to induct on $d=d(P)+d(Q)$. When $d=0$, $P\equiv\textbf{nil}$ and $Q\equiv\textbf{nil}$, $P=Q$, as desired.

Suppose $d>0$.

\begin{itemize}
  \item If $(\alpha_1\parallel\cdots\parallel\alpha_n).M$ with $\alpha_{ji}(1\leq j,i\leq n)$ free actions is a summand of $P$, then
  $\langle P,s\rangle\rightsquigarrow\xrightarrow{\{\alpha_1,\cdots,\alpha_n\}}\langle M,s'\rangle$.
  Since $Q$ is in head normal form and has a summand $(\alpha_1\parallel\cdots\parallel\alpha_n).N$ such that $M\sim_{pp}  N$, by the induction hypothesis $\textbf{STC}\vdash M=N$,
  $\textbf{STC}\vdash (\alpha_1\parallel\cdots\parallel\alpha_n).M= (\alpha_1\parallel\cdots\parallel\alpha_n).N$;
  \item If $x(y).M$ is a summand of $P$, then for $z\notin n(P, Q)$, $\langle P,s\rangle\rightsquigarrow\xrightarrow{x(z)}\langle M',s'\rangle\equiv \langle M\{z/y\},s'\rangle$. Since $Q$ is in head normal form and has a summand
  $x(w).N$ such that for all $v$, $M'\{v/z\}\sim_{pp}  N'\{v/z\}$ where $N'\equiv N\{z/w\}$, by the induction hypothesis $\textbf{STC}\vdash M'\{v/z\}=N'\{v/z\}$, by the axioms
  $\textbf{C}$ and $\textbf{A}$, $\textbf{STC}\vdash x(y).M=x(w).N$;
  \item If $\overline{x}(y).M$ is a summand of $P$, then for $z\notin n(P,Q)$, $\langle P,s\rangle\rightsquigarrow\xrightarrow{\overline{x}(z)}\langle M',s'\rangle\equiv \langle M\{z/y\},s'\rangle$. Since $Q$ is in head normal form and
  has a summand $\overline{x}(w).N$ such that $M'\sim_{pp}  N'$ where $N'\equiv N\{z/w\}$, by the induction hypothesis $\textbf{STC}\vdash M'=N'$, by the axioms
  $\textbf{A}$ and $\textbf{C}$, $\textbf{STC}\vdash \overline{x}(y).M= \overline{x}(w).N$.
\end{itemize}
  \item if $P\sim_{pp}  Q$, then $\textbf{STC}\vdash P=Q$. It can be proven similarly to the above case.
  \item if $P\sim_{php}  Q$, then $\textbf{STC}\vdash P=Q$. It can be proven similarly to the above case.
\end{enumerate}
\end{proof}

\newpage\section{$\pi_{tc}$ with Reversibility and Guards}\label{pitcrg}

In this chapter, we design $\pi_{tc}$ with reversibility and guards. This chapter is organized as follows. In section \ref{os7}, we introduce the truly concurrent operational semantics. Then, we introduce
the syntax and operational semantics, laws modulo strongly truly concurrent bisimulations, and algebraic theory of $\pi_{tc}$ with reversibility and guards in section \ref{sos7},
\ref{s7} and \ref{a7} respectively.

\subsection{Operational Semantics}\label{os7}

Firstly, in this section, we introduce concepts of FR (strongly) truly concurrent bisimilarities, including FR pomset bisimilarity, FR step
bisimilarity, FR history-preserving (hp-)bisimilarity and FR hereditary history-preserving (hhp-)bisimilarity. In contrast to traditional FR truly
concurrent bisimilarities in section \ref{bg}, these versions in $\pi_{ptc}$ must take care of actions with bound objects. Note that, these FR truly concurrent bisimilarities
are defined as late bisimilarities, but not early bisimilarities, as defined in $\pi$-calculus \cite{PI1} \cite{PI2}. Note that, here, a PES $\mathcal{E}$ is deemed as a process.

\begin{definition}[Prime event structure with silent event and empty event]
Let $\Lambda$ be a fixed set of labels, ranged over $a,b,c,\cdots$ and $\tau,\epsilon$. A ($\Lambda$-labelled) prime event structure with silent event $\tau$ and empty event
$\epsilon$ is a tuple $\mathcal{E}=\langle \mathbb{E}, \leq, \sharp, \lambda\rangle$, where $\mathbb{E}$ is a denumerable set of events, including the silent event $\tau$ and empty
event $\epsilon$. Let $\hat{\mathbb{E}}=\mathbb{E}\backslash\{\tau,\epsilon\}$, exactly excluding $\tau$ and $\epsilon$, it is obvious that $\hat{\tau^*}=\epsilon$. Let
$\lambda:\mathbb{E}\rightarrow\Lambda$ be a labelling function and let $\lambda(\tau)=\tau$ and $\lambda(\epsilon)=\epsilon$. And $\leq$, $\sharp$ are binary relations on $\mathbb{E}$,
called causality and conflict respectively, such that:

\begin{enumerate}
  \item $\leq$ is a partial order and $\lceil e \rceil = \{e'\in \mathbb{E}|e'\leq e\}$ is finite for all $e\in \mathbb{E}$. It is easy to see that
  $e\leq\tau^*\leq e'=e\leq\tau\leq\cdots\leq\tau\leq e'$, then $e\leq e'$.
  \item $\sharp$ is irreflexive, symmetric and hereditary with respect to $\leq$, that is, for all $e,e',e''\in \mathbb{E}$, if $e\sharp e'\leq e''$, then $e\sharp e''$.
\end{enumerate}

Then, the concepts of consistency and concurrency can be drawn from the above definition:

\begin{enumerate}
  \item $e,e'\in \mathbb{E}$ are consistent, denoted as $e\frown e'$, if $\neg(e\sharp e')$. A subset $X\subseteq \mathbb{E}$ is called consistent, if $e\frown e'$ for all
  $e,e'\in X$.
  \item $e,e'\in \mathbb{E}$ are concurrent, denoted as $e\parallel e'$, if $\neg(e\leq e')$, $\neg(e'\leq e)$, and $\neg(e\sharp e')$.
\end{enumerate}
\end{definition}

\begin{definition}[Configuration]
Let $\mathcal{E}$ be a PES. A (finite) configuration in $\mathcal{E}$ is a (finite) consistent subset of events $C\subseteq \mathcal{E}$, closed with respect to causality (i.e.
$\lceil C\rceil=C$), and a data state $s\in S$ with $S$ the set of all data states, denoted $\langle C, s\rangle$. The set of finite configurations of $\mathcal{E}$ is denoted by
$\langle\mathcal{C}(\mathcal{E}), S\rangle$. We let $\hat{C}=C\backslash\{\tau\}\cup\{\epsilon\}$.
\end{definition}

A consistent subset of $X\subseteq \mathbb{E}$ of events can be seen as a pomset. Given $X, Y\subseteq \mathbb{E}$, $\hat{X}\sim \hat{Y}$ if $\hat{X}$ and $\hat{Y}$ are isomorphic as
pomsets. In the following of the paper, we say $C_1\sim C_2$, we mean $\hat{C_1}\sim\hat{C_2}$.

\begin{definition}[FR pomset transitions and step]
Let $\mathcal{E}$ be a PES and let $C\in\mathcal{C}(\mathcal{E})$, and $\emptyset\neq X\subseteq \mathbb{E}$, if $C\cap X=\emptyset$ and $C'=C\cup X\in\mathcal{C}(\mathcal{E})$, then
$\langle C,s\rangle\xrightarrow{X} \langle C',s'\rangle$ is called a forward pomset transition from $\langle C,s\rangle$ to $\langle C',s'\rangle$ and
$\langle C',s'\rangle\xtworightarrow{X[\mathcal{K}]} \langle C,s\rangle$ is called a reverse pomset transition from $\langle C',s'\rangle$ to $\langle C,s\rangle$. When the events in
$X$ and $X[\mathcal{K}]$ are pairwise
concurrent, we say that $\langle C,s\rangle\xrightarrow{X}\langle C',s'\rangle$ is a forward step and $\langle C',s'\rangle\xrightarrow{X[\mathcal{K}]}\langle C,s\rangle$ is a reverse step.
It is obvious that $\rightarrow^*\xrightarrow{X}\rightarrow^*=\xrightarrow{X}$ and
$\rightarrow^*\xrightarrow{e}\rightarrow^*=\xrightarrow{e}$ for any $e\in\mathbb{E}$ and $X\subseteq\mathbb{E}$.
\end{definition}

\begin{definition}[FR strongly pomset, step bisimilarity]
Let $\mathcal{E}_1$, $\mathcal{E}_2$ be PESs. A FR strongly pomset bisimulation is a relation $R\subseteq\langle\mathcal{C}(\mathcal{E}_1),s\rangle\times\langle\mathcal{C}(\mathcal{E}_2),s\rangle$,
such that (1) if $(\langle C_1,s\rangle,\langle C_2,s\rangle)\in R$, and $\langle C_1,s\rangle\xrightarrow{X_1}\langle C_1',s'\rangle$ (with $\mathcal{E}_1\xrightarrow{X_1}\mathcal{E}_1'$) then $\langle C_2,s\rangle\xrightarrow{X_2}\langle C_2',s'\rangle$ (with
$\mathcal{E}_2\xrightarrow{X_2}\mathcal{E}_2'$), with $X_1\subseteq \mathbb{E}_1$, $X_2\subseteq \mathbb{E}_2$, $X_1\sim X_2$ and $(\langle C_1',s'\rangle,\langle C_2',s'\rangle)\in R$:
\begin{enumerate}
  \item for each fresh action $\alpha\in X_1$, if $\langle C_1'',s''\rangle\xrightarrow{\alpha}\langle C_1''',s'''\rangle$ (with $\mathcal{E}_1''\xrightarrow{\alpha}\mathcal{E}_1'''$),
  then for some $C_2''$ and $\langle C_2''',s'''\rangle$, $\langle C_2'',s''\rangle\xrightarrow{\alpha}\langle C_2''',s'''\rangle$ (with
  $\mathcal{E}_2''\xrightarrow{\alpha}\mathcal{E}_2'''$), such that if $(\langle C_1'',s''\rangle,\langle C_2'',s''\rangle)\in R$ then $(\langle C_1''',s'''\rangle,\langle C_2''',s'''\rangle)\in R$;
  \item for each $x(y)\in X_1$ with ($y\notin n(\mathcal{E}_1, \mathcal{E}_2)$), if $\langle C_1'',s''\rangle\xrightarrow{x(y)}\langle C_1''',s'''\rangle$ (with
  $\mathcal{E}_1''\xrightarrow{x(y)}\mathcal{E}_1'''\{w/y\}$) for all $w$, then for some $C_2''$ and $C_2'''$, $\langle C_2'',s''\rangle\xrightarrow{x(y)}\langle C_2''',s'''\rangle$
  (with $\mathcal{E}_2''\xrightarrow{x(y)}\mathcal{E}_2'''\{w/y\}$) for all $w$, such that if $(\langle C_1'',s''\rangle,\langle C_2'',s''\rangle)\in R$ then $(\langle C_1''',s'''\rangle,\langle C_2''',s'''\rangle)\in R$;
  \item for each two $x_1(y),x_2(y)\in X_1$ with ($y\notin n(\mathcal{E}_1, \mathcal{E}_2)$), if $\langle C_1'',s''\rangle\xrightarrow{\{x_1(y),x_2(y)\}}\langle C_1''',s'''\rangle$
  (with $\mathcal{E}_1''\xrightarrow{\{x_1(y),x_2(y)\}}\mathcal{E}_1'''\{w/y\}$) for all $w$, then for some $C_2''$ and $C_2'''$,
  $\langle C_2'',s''\rangle\xrightarrow{\{x_1(y),x_2(y)\}}\langle C_2''',s'''\rangle$ (with $\mathcal{E}_2''\xrightarrow{\{x_1(y),x_2(y)\}}\mathcal{E}_2'''\{w/y\}$) for all $w$, such
  that if $(\langle C_1'',s''\rangle,\langle C_2'',s''\rangle)\in R$ then $(\langle C_1''',s'''\rangle,\langle C_2''',s'''\rangle)\in R$;
  \item for each $\overline{x}(y)\in X_1$ with $y\notin n(\mathcal{E}_1, \mathcal{E}_2)$, if $\langle C_1'',s''\rangle\xrightarrow{\overline{x}(y)}\langle C_1''',s'''\rangle$
  (with $\mathcal{E}_1''\xrightarrow{\overline{x}(y)}\mathcal{E}_1'''$), then for some $C_2''$ and $C_2'''$, $\langle C_2'',s''\rangle\xrightarrow{\overline{x}(y)}\langle C_2''',s'''\rangle$
  (with $\mathcal{E}_2''\xrightarrow{\overline{x}(y)}\mathcal{E}_2'''$), such that if $(\langle C_1'',s''\rangle,\langle C_2'',s''\rangle)\in R$ then $(\langle C_1''',s'''\rangle,\langle C_2''',s'''\rangle)\in R$.
\end{enumerate}
 and vice-versa; (2) if $(\langle C_1,s\rangle,\langle C_2,s\rangle)\in R$, and $\langle C_1,s\rangle\xtworightarrow{X_1[\mathcal{K}_1]}\langle C_1',s'\rangle$ (with $\mathcal{E}_1\xtworightarrow{X_1[\mathcal{K}_1]}\mathcal{E}_1'$) then $\langle C_2,s\rangle\xtworightarrow{X_2[\mathcal{K}_2]}\langle C_2',s'\rangle$ (with
$\mathcal{E}_2\xtworightarrow{X_2[\mathcal{K}_2]}\mathcal{E}_2'$), with $X_1\subseteq \mathbb{E}_1$, $X_2\subseteq \mathbb{E}_2$, $X_1\sim X_2$ and $(\langle C_1',s'\rangle,\langle C_2',s'\rangle)\in R$:
\begin{enumerate}
  \item for each fresh action $\alpha\in X_1$, if $\langle C_1'',s''\rangle\xtworightarrow{\alpha[m]}\langle C_1''',s'''\rangle$ (with $\mathcal{E}_1''\xtworightarrow{\alpha[m]}\mathcal{E}_1'''$),
  then for some $C_2''$ and $\langle C_2''',s'''\rangle$, $\langle C_2'',s''\rangle\xtworightarrow{\alpha[m]}\langle C_2''',s'''\rangle$ (with
  $\mathcal{E}_2''\xtworightarrow{\alpha[m]}\mathcal{E}_2'''$), such that if $(\langle C_1'',s''\rangle,\langle C_2'',s''\rangle)\in R$ then $(\langle C_1''',s'''\rangle,\langle C_2''',s'''\rangle)\in R$;
  \item for each $x(y)\in X_1$ with ($y\notin n(\mathcal{E}_1, \mathcal{E}_2)$), if $\langle C_1'',s''\rangle\xtworightarrow{x(y)[m]}\langle C_1''',s'''\rangle$ (with
  $\mathcal{E}_1''\xtworightarrow{x(y)[m]}\mathcal{E}_1'''\{w/y\}$) for all $w$, then for some $C_2''$ and $C_2'''$, $\langle C_2'',s''\rangle\xtworightarrow{x(y)[m]}\langle C_2''',s'''\rangle$
  (with $\mathcal{E}_2''\xtworightarrow{x(y)[m]}\mathcal{E}_2'''\{w/y\}$) for all $w$, such that if $(\langle C_1'',s''\rangle,\langle C_2'',s''\rangle)\in R$ then $(\langle C_1''',s'''\rangle,\langle C_2''',s'''\rangle)\in R$;
  \item for each two $x_1(y),x_2(y)\in X_1$ with ($y\notin n(\mathcal{E}_1, \mathcal{E}_2)$), if $\langle C_1'',s''\rangle\xtworightarrow{\{x_1(y)[m],x_2(y)[n]\}}\langle C_1''',s'''\rangle$
  (with $\mathcal{E}_1''\xtworightarrow{\{x_1(y)[m],x_2(y)[n]\}}\mathcal{E}_1'''\{w/y\}$) for all $w$, then for some $C_2''$ and $C_2'''$,
  $\langle C_2'',s''\rangle\xtworightarrow{\{x_1(y)[m],x_2(y)[n]\}}\langle C_2''',s'''\rangle$ (with $\mathcal{E}_2''\xtworightarrow{\{x_1(y)[m],x_2(y)[n]\}}\mathcal{E}_2'''\{w/y\}$) for all $w$, such
  that if $(\langle C_1'',s''\rangle,\langle C_2'',s''\rangle)\in R$ then $(\langle C_1''',s'''\rangle,\langle C_2''',s'''\rangle)\in R$;
  \item for each $\overline{x}(y)\in X_1$ with $y\notin n(\mathcal{E}_1, \mathcal{E}_2)$, if $\langle C_1'',s''\rangle\xtworightarrow{\overline{x}(y)[m]}\langle C_1''',s'''\rangle$
  (with $\mathcal{E}_1''\xtworightarrow{\overline{x}(y)[m]}\mathcal{E}_1'''$), then for some $C_2''$ and $C_2'''$, $\langle C_2'',s''\rangle\xtworightarrow{\overline{x}(y)[m]}\langle C_2''',s'''\rangle$
  (with $\mathcal{E}_2''\xtworightarrow{\overline{x}(y)[m]}\mathcal{E}_2'''$), such that if $(\langle C_1'',s''\rangle,\langle C_2'',s''\rangle)\in R$ then $(\langle C_1''',s'''\rangle,\langle C_2''',s'''\rangle)\in R$.
\end{enumerate}
 and vice-versa.

We say that $\mathcal{E}_1$, $\mathcal{E}_2$ are FR strongly pomset bisimilar, written $\mathcal{E}_1\sim_p^{fr}\mathcal{E}_2$, if there exists a FR strongly pomset
bisimulation $R$, such that $(\emptyset,\emptyset)\in R$. By replacing FR pomset transitions with steps, we can get the definition of FR strongly step bisimulation.
When PESs $\mathcal{E}_1$ and $\mathcal{E}_2$ are FR strongly step bisimilar, we write $\mathcal{E}_1\sim_s^{fr}\mathcal{E}_2$.
\end{definition}

\begin{definition}[Posetal product]
Given two PESs $\mathcal{E}_1$, $\mathcal{E}_2$, the posetal product of their configurations, denoted
$\langle\mathcal{C}(\mathcal{E}_1),S\rangle\overline{\times}\langle\mathcal{C}(\mathcal{E}_2),S\rangle$, is defined as

$$\{(\langle C_1,s\rangle,f,\langle C_2,s\rangle)|C_1\in\mathcal{C}(\mathcal{E}_1),C_2\in\mathcal{C}(\mathcal{E}_2),f:C_1\rightarrow C_2 \textrm{ isomorphism}\}.$$

A subset $R\subseteq\langle\mathcal{C}(\mathcal{E}_1),S\rangle\overline{\times}\langle\mathcal{C}(\mathcal{E}_2),S\rangle$ is called a posetal relation. We say that $R$ is downward
closed when for any
$(\langle C_1,s\rangle,f,\langle C_2,s\rangle),(\langle C_1',s'\rangle,f',\langle C_2',s'\rangle)\in \langle\mathcal{C}(\mathcal{E}_1),S\rangle\overline{\times}\langle\mathcal{C}(\mathcal{E}_2),S\rangle$,
if $(\langle C_1,s\rangle,f,\langle C_2,s\rangle)\subseteq (\langle C_1',s'\rangle,f',\langle C_2',s'\rangle)$ pointwise and $(\langle C_1',s'\rangle,f',\langle C_2',s'\rangle)\in R$,
then $(\langle C_1,s\rangle,f,\langle C_2,s\rangle)\in R$.

For $f:X_1\rightarrow X_2$, we define $f[x_1\mapsto x_2]:X_1\cup\{x_1\}\rightarrow X_2\cup\{x_2\}$, $z\in X_1\cup\{x_1\}$,(1)$f[x_1\mapsto x_2](z)=
x_2$,if $z=x_1$;(2)$f[x_1\mapsto x_2](z)=f(z)$, otherwise. Where $X_1\subseteq \mathbb{E}_1$, $X_2\subseteq \mathbb{E}_2$, $x_1\in \mathbb{E}_1$, $x_2\in \mathbb{E}_2$.
\end{definition}

\begin{definition}[FR strongly (hereditary) history-preserving bisimilarity]
A FR strongly history-preserving (hp-) bisimulation is a posetal relation $R\subseteq\mathcal{C}(\mathcal{E}_1)\overline{\times}\mathcal{C}(\mathcal{E}_2)$ such that
(1) if $(\langle C_1,s\rangle,f,\langle C_2,s\rangle)\in R$, and
\begin{enumerate}
  \item for $e_1=\alpha$ a fresh action, if $\langle C_1,s\rangle\xrightarrow{\alpha}\langle C_1',s'\rangle$ (with $\mathcal{E}_1\xrightarrow{\alpha}\mathcal{E}_1'$), then for some
  $C_2'$ and $e_2=\alpha$, $\langle C_2,s\rangle\xrightarrow{\alpha}\langle C_2',s'\rangle$ (with $\mathcal{E}_2\xrightarrow{\alpha}\mathcal{E}_2'$), such that
  $(\langle C_1',s'\rangle,f[e_1\mapsto e_2],\langle C_2',s'\rangle)\in R$;
  \item for $e_1=x(y)$ with ($y\notin n(\mathcal{E}_1, \mathcal{E}_2)$), if $\langle C_1,s\rangle\xrightarrow{x(y)}\langle C_1',s'\rangle$ (with
  $\mathcal{E}_1\xrightarrow{x(y)}\mathcal{E}_1'\{w/y\}$) for all $w$, then for some $C_2'$ and $e_2=x(y)$, $\langle C_2,s\rangle\xrightarrow{x(y)}\langle C_2',s'\rangle$ (with
  $\mathcal{E}_2\xrightarrow{x(y)}\mathcal{E}_2'\{w/y\}$) for all $w$, such that $(\langle C_1',s'\rangle,f[e_1\mapsto e_2],\langle C_2',s'\rangle)\in R$;
  \item for $e_1=\overline{x}(y)$ with $y\notin n(\mathcal{E}_1, \mathcal{E}_2)$, if $\langle C_1,s\rangle\xrightarrow{\overline{x}(y)}\langle C_1',s'\rangle$ (with
  $\mathcal{E}_1\xrightarrow{\overline{x}(y)}\mathcal{E}_1'$), then for some $C_2'$ and $e_2=\overline{x}(y)$, $\langle C_2,s\rangle\xrightarrow{\overline{x}(y)}\langle C_2',s'\rangle$
  (with $\mathcal{E}_2\xrightarrow{\overline{x}(y)}\mathcal{E}_2'$), such that $(\langle C_1',s'\rangle,f[e_1\mapsto e_2],\langle C_2',s'\rangle)\in R$.
\end{enumerate}
and vice-versa; (2) if $(\langle C_1,s\rangle,f,\langle C_2,s\rangle)\in R$, and
\begin{enumerate}
  \item for $e_1=\alpha$ a fresh action, if $\langle C_1,s\rangle\xtworightarrow{\alpha[m]}\langle C_1',s'\rangle$ (with $\mathcal{E}_1\xtworightarrow{\alpha[m]}\mathcal{E}_1'$), then for some
  $C_2'$ and $e_2=\alpha$, $\langle C_2,s\rangle\xtworightarrow{\alpha[m]}\langle C_2',s'\rangle$ (with $\mathcal{E}_2\xtworightarrow{\alpha[m]}\mathcal{E}_2'$), such that
  $(\langle C_1',s'\rangle,f[e_1\mapsto e_2],\langle C_2',s'\rangle)\in R$;
  \item for $e_1=x(y)$ with ($y\notin n(\mathcal{E}_1, \mathcal{E}_2)$), if $\langle C_1,s\rangle\xtworightarrow{x(y)[m]}\langle C_1',s'\rangle$ (with
  $\mathcal{E}_1\xtworightarrow{x(y)[m]}\mathcal{E}_1'\{w/y\}$) for all $w$, then for some $C_2'$ and $e_2=x(y)$, $\langle C_2,s\rangle\xtworightarrow{x(y)[m]}\langle C_2',s'\rangle$ (with
  $\mathcal{E}_2\xtworightarrow{x(y)[m]}\mathcal{E}_2'\{w/y\}$) for all $w$, such that $(\langle C_1',s'\rangle,f[e_1\mapsto e_2],\langle C_2',s'\rangle)\in R$;
  \item for $e_1=\overline{x}(y)$ with $y\notin n(\mathcal{E}_1, \mathcal{E}_2)$, if $\langle C_1,s\rangle\xtworightarrow{\overline{x}(y)[m]}\langle C_1',s'\rangle$ (with
  $\mathcal{E}_1\xtworightarrow{\overline{x}(y)[m]}\mathcal{E}_1'$), then for some $C_2'$ and $e_2=\overline{x}(y)$, $\langle C_2,s\rangle\xtworightarrow{\overline{x}(y)[m]}\langle C_2',s'\rangle$
  (with $\mathcal{E}_2\xtworightarrow{\overline{x}(y)[m]}\mathcal{E}_2'$), such that $(\langle C_1',s'\rangle,f[e_1\mapsto e_2],\langle C_2',s'\rangle)\in R$.
\end{enumerate}
and vice-versa. $\mathcal{E}_1,\mathcal{E}_2$
are FR strongly history-preserving (hp-)bisimilar and are written $\mathcal{E}_1\sim_{hp}^{fr}\mathcal{E}_2$ if there exists a FR strongly hp-bisimulation
$R$ such that $(\emptyset,\emptyset,\emptyset)\in R$.

A FR strongly hereditary history-preserving (hhp-)bisimulation is a downward closed FR strongly hp-bisimulation. $\mathcal{E}_1,\mathcal{E}_2$ are FR
strongly hereditary history-preserving (hhp-)bisimilar and are written $\mathcal{E}_1\sim_{hhp}^{fr}\mathcal{E}_2$.
\end{definition}

\subsection{Syntax and Operational Semantics}\label{sos7}

We assume an infinite set $\mathcal{N}$ of (action or event) names, and use $a,b,c,\cdots$ to range over $\mathcal{N}$, use $x,y,z,w,u,v$ as meta-variables over names. We denote by
$\overline{\mathcal{N}}$ the set of co-names and let $\overline{a},\overline{b},\overline{c},\cdots$ range over $\overline{\mathcal{N}}$. Then we set
$\mathcal{L}=\mathcal{N}\cup\overline{\mathcal{N}}$ as the set of labels, and use $l,\overline{l}$ to range over $\mathcal{L}$. We extend complementation to $\mathcal{L}$ such that
$\overline{\overline{a}}=a$. Let $\tau$ denote the silent step (internal action or event) and define $Act=\mathcal{L}\cup\{\tau\}$ to be the set of actions, $\alpha,\beta$ range over
$Act$. And $K,L$ are used to stand for subsets of $\mathcal{L}$ and $\overline{L}$ is used for the set of complements of labels in $L$.

Further, we introduce a set $\mathcal{X}$ of process variables, and a set $\mathcal{K}$ of process constants, and let $X,Y,\cdots$ range over $\mathcal{X}$, and $A,B,\cdots$ range over
$\mathcal{K}$. For each process constant $A$, a nonnegative arity $ar(A)$ is assigned to it. Let $\widetilde{x}=x_1,\cdots,x_{ar(A)}$ be a tuple of distinct name variables, then
$A(\widetilde{x})$ is called a process constant. $\widetilde{X}$ is a tuple of distinct process variables, and also $E,F,\cdots$ range over the recursive expressions. We write
$\mathcal{P}$ for the set of processes. Sometimes, we use $I,J$ to stand for an indexing set, and we write $E_i:i\in I$ for a family of expressions indexed by $I$. $Id_D$ is the
identity function or relation over set $D$. The symbol $\equiv_{\alpha}$ denotes equality under standard alpha-convertibility, note that the subscript $\alpha$ has no relation to the
action $\alpha$.

Let $G_{at}$ be the set of atomic guards, $\delta$ be the deadlock constant, and $\epsilon$ be the empty action, and extend $Act$ to $Act\cup\{\epsilon\}\cup\{\delta\}$. We extend
$G_{at}$ to the set of basic guards $G$ with element $\phi,\psi,\cdots$, which is generated by the following formation rules:

$$\phi::=\delta|\epsilon|\neg\phi|\psi\in G_{at}|\phi+\psi|\phi\cdot\psi$$

The predicate $test(\phi,s)$ represents that $\phi$ holds in the state $s$, and $test(\epsilon,s)$ holds and $test(\delta,s)$ does not hold. $effect(e,s)\in S$ denotes $s'$ in
$s\xrightarrow{e}s'$. The predicate weakest precondition $wp(e,\phi)$ denotes that $\forall s,s'\in S, test(\phi,effect(e,s))$ holds.

\subsubsection{Syntax}

We use the Prefix $.$ to model the causality relation $\leq$ in true concurrency, the Summation $+$ to model the conflict relation $\sharp$ in true concurrency, and the Composition $\parallel$ to explicitly model concurrent relation in true concurrency. And we follow the
conventions of process algebra.

\begin{definition}[Syntax]\label{syntax7}
A truly concurrent process $\pi_{tc}$ with reversibility and guards is defined inductively by the following formation rules:

\begin{enumerate}
  \item $A(\widetilde{x})\in\mathcal{P}$;
  \item $\phi\in\mathcal{P}$;
  \item $\textbf{nil}\in\mathcal{P}$;
  \item if $P\in\mathcal{P}$, then the Prefix $\tau.P\in\mathcal{P}$, for $\tau\in Act$ is the silent action;
  \item if $P\in\mathcal{P}$, then the Prefix $\phi.P\in\mathcal{P}$, for $\phi\in G_{at}$;
  \item if $P\in\mathcal{P}$, then the Output $\overline{x}y.P\in\mathcal{P}$, for $x,y\in Act$;
  \item if $P\in\mathcal{P}$, then the Output $P.\overline{x}y[m]\in\mathcal{P}$, for $x,y\in Act$;
  \item if $P\in\mathcal{P}$, then the Input $x(y).P\in\mathcal{P}$, for $x,y\in Act$;
  \item if $P\in\mathcal{P}$, then the Input $P.x(y)[m]\in\mathcal{P}$, for $x,y\in Act$;
  \item if $P\in\mathcal{P}$, then the Restriction $(x)P\in\mathcal{P}$, for $x\in Act$;
  \item if $P,Q\in\mathcal{P}$, then the Summation $P+Q\in\mathcal{P}$;
  \item if $P,Q\in\mathcal{P}$, then the Composition $P\parallel Q\in\mathcal{P}$;
\end{enumerate}

The standard BNF grammar of syntax of $\pi_{tc}$ with reversibility and guards can be summarized as follows:

$$P::=A(\widetilde{x})|\textbf{nil}|\tau.P| \overline{x}y.P | x(y).P|\overline{x}y[m].P | x(y)[m].P | (x)P  |\phi.P|  P+P| P\parallel P.$$
\end{definition}

In $\overline{x}y$, $x(y)$ and $\overline{x}(y)$, $x$ is called the subject, $y$ is called the object and it may be free or bound.

\begin{definition}[Free variables]
The free names of a process $P$, $fn(P)$, are defined as follows.

\begin{enumerate}
  \item $fn(A(\widetilde{x}))\subseteq\{\widetilde{x}\}$;
  \item $fn(\textbf{nil})=\emptyset$;
  \item $fn(\tau.P)=fn(P)$;
  \item $fn(\phi.P)=fn(P)$;
  \item $fn(\overline{x}y.P)=fn(P)\cup\{x\}\cup\{y\}$;
  \item $fn(\overline{x}y[m].P)=fn(P)\cup\{x\}\cup\{y\}$;
  \item $fn(x(y).P)=fn(P)\cup\{x\}-\{y\}$;
  \item $fn(x(y)[m].P)=fn(P)\cup\{x\}-\{y\}$;
  \item $fn((x)P)=fn(P)-\{x\}$;
  \item $fn(P+Q)=fn(P)\cup fn(Q)$;
  \item $fn(P\parallel Q)=fn(P)\cup fn(Q)$.
\end{enumerate}
\end{definition}

\begin{definition}[Bound variables]
Let $n(P)$ be the names of a process $P$, then the bound names $bn(P)=n(P)-fn(P)$.
\end{definition}

For each process constant schema $A(\widetilde{x})$, a defining equation of the form

$$A(\widetilde{x})\overset{\text{def}}{=}P$$

is assumed, where $P$ is a process with $fn(P)\subseteq \{\widetilde{x}\}$.

\begin{definition}[Substitutions]\label{subs7}
A substitution is a function $\sigma:\mathcal{N}\rightarrow\mathcal{N}$. For $x_i\sigma=y_i$ with $1\leq i\leq n$, we write $\{y_1/x_1,\cdots,y_n/x_n\}$ or
$\{\widetilde{y}/\widetilde{x}\}$ for $\sigma$. For a process $P\in\mathcal{P}$, $P\sigma$ is defined inductively as follows:
\begin{enumerate}
  \item if $P$ is a process constant $A(\widetilde{x})=A(x_1,\cdots,x_n)$, then $P\sigma=A(x_1\sigma,\cdots,x_n\sigma)$;
  \item if $P=\textbf{nil}$, then $P\sigma=\textbf{nil}$;
  \item if $P=\tau.P'$, then $P\sigma=\tau.P'\sigma$;
  \item if $P=\phi.P'$, then $P\sigma=\phi.P'\sigma$;
  \item if $P=\overline{x}y.P'$, then $P\sigma=\overline{x\sigma}y\sigma.P'\sigma$;
  \item if $P=\overline{x}y[m].P'$, then $P\sigma=\overline{x\sigma}y\sigma[m].P'\sigma$;
  \item if $P=x(y).P'$, then $P\sigma=x\sigma(y).P'\sigma$;
  \item if $P=x(y)[m].P'$, then $P\sigma=x\sigma(y)[m].P'\sigma$;
  \item if $P=(x)P'$, then $P\sigma=(x\sigma)P'\sigma$;
  \item if $P=P_1+P_2$, then $P\sigma=P_1\sigma+P_2\sigma$;
  \item if $P=P_1\parallel P_2$, then $P\sigma=P_1\sigma \parallel P_2\sigma$.
\end{enumerate}
\end{definition}

\subsubsection{Operational Semantics}

The operational semantics is defined by LTSs (labelled transition systems), and it is detailed by the following definition.

\begin{definition}[Semantics]\label{semantics7}
The operational semantics of $\pi_{tc}$ with reversibility and guards corresponding to the syntax in Definition \ref{syntax7} is defined by a series of transition rules, named $\textbf{ACT}$, $\textbf{SUM}$,
$\textbf{IDE}$, $\textbf{PAR}$, $\textbf{COM}$, $\textbf{CLOSE}$, $\textbf{RES}$, $\textbf{OPEN}$ indicate that the rules are associated respectively with Prefix, Summation, 
Identity, Parallel Composition, Communication, and Restriction in Definition \ref{syntax7}. They are shown in \ref{TRForPITC7}.

\begin{center}
    \begin{table}
        \[\textbf{TAU-ACT}\quad \frac{}{\langle\tau.P,s\rangle\xrightarrow{\tau}\langle P,\tau(s)\rangle}\]

        \[\textbf{OUTPUT-ACT}\quad \frac{}{\langle\overline{x}y.P,s\rangle\xrightarrow{\overline{x}y}\langle P,s'\rangle}\]

        \[\textbf{INPUT-ACT}\quad \frac{}{\langle x(z).P,s\rangle\xrightarrow{x(w)}\langle P\{w/z\},s'\rangle}\quad (w\notin fn((z)P))\]

        \[\textbf{PAR}_1\quad \frac{\langle P,s\rangle\xrightarrow{\alpha}\langle P',s'\rangle\quad \langle Q,s\rangle\nrightarrow}{\langle P\parallel Q,s\rangle\xrightarrow{\alpha}\langle P'\parallel Q,s'\rangle}\quad (bn(\alpha)\cap fn(Q)=\emptyset)\]

        \[\textbf{PAR}_2\quad \frac{\langle Q,s\rangle\xrightarrow{\alpha}\langle Q',s'\rangle\quad \langle P,s\rangle\nrightarrow}{\langle P\parallel Q,s\rangle\xrightarrow{\alpha}\langle P\parallel Q',s'\rangle}\quad (bn(\alpha)\cap fn(P)=\emptyset)\]

        \[\textbf{PAR}_3\quad \frac{\langle P,s\rangle\xrightarrow{\alpha}\langle P',s'\rangle\quad \langle Q,s\rangle\xrightarrow{\beta}\langle Q',s''\rangle}{\langle P\parallel Q,s\rangle\xrightarrow{\{\alpha,\beta\}}\langle P'\parallel Q',s'\cup s''\rangle}\] $(\beta\neq\overline{\alpha}, bn(\alpha)\cap bn(\beta)=\emptyset, bn(\alpha)\cap fn(Q)=\emptyset,bn(\beta)\cap fn(P)=\emptyset)$

        \[\textbf{PAR}_4\quad \frac{\langle P,s\rangle\xrightarrow{x_1(z)}\langle P',s'\rangle\quad \langle Q,s\rangle\xrightarrow{x_2(z)}\langle Q',s''\rangle}{\langle P\parallel Q,s\rangle\xrightarrow{\{x_1(w),x_2(w)\}}\langle P'\{w/z\}\parallel Q'\{w/z\},s'\cup s''\rangle}\quad (w\notin fn((z)P)\cup fn((z)Q))\]

        \[\textbf{COM}\quad \frac{\langle P,s\rangle\xrightarrow{\overline{x}y}\langle P',s'\rangle\quad \langle Q,s\rangle\xrightarrow{x(z)}\langle Q',s''\rangle}{\langle P\parallel Q,s\rangle\xrightarrow{\tau}\langle P'\parallel Q'\{y/z\},s'\cup s''\rangle}\]

        \[\textbf{CLOSE}\quad \frac{\langle P,s\rangle\xrightarrow{\overline{x}(w)}\langle P',s'\rangle\quad \langle Q,s\rangle\xrightarrow{x(w)}\langle Q',s''\rangle}{\langle P\parallel Q,s\rangle\xrightarrow{\tau}\langle (w)(P'\parallel Q'),s'\cup s''\rangle}\]

        \caption{Forward transition rules}
        \label{TRForPITC7}
    \end{table}
\end{center}

\begin{center}
    \begin{table}
        \[\textbf{SUM}_1\quad \frac{\langle P,s\rangle\xrightarrow{\alpha}\langle P',s'\rangle}{\langle P+Q,s\rangle\xrightarrow{\alpha}\langle P',s'\rangle}\]

        \[\textbf{SUM}_2\quad \frac{\langle P,s\rangle\xrightarrow{\{\alpha_1,\cdots,\alpha_n\}}\langle P',s'\rangle}{\langle P+Q,s\rangle\xrightarrow{\{\alpha_1,\cdots,\alpha_n\}}\langle P',s'\rangle}\]

        \[\textbf{IDE}_1\quad\frac{\langle P\{\widetilde{y}/\widetilde{x}\},s\rangle\xrightarrow{\alpha}\langle P',s'\rangle}{\langle A(\widetilde{y}),s\rangle\xrightarrow{\alpha}\langle P',s'\rangle}\quad (A(\widetilde{x})\overset{\text{def}}{=}P)\]

        \[\textbf{IDE}_2\quad\frac{\langle P\{\widetilde{y}/\widetilde{x}\},s\rangle\xrightarrow{\{\alpha_1,\cdots,\alpha_n\}}\langle P',s'\rangle} {\langle A(\widetilde{y}),s\rangle\xrightarrow{\{\alpha_1,\cdots,\alpha_n\}}\langle P',s'\rangle}\quad (A(\widetilde{x})\overset{\text{def}}{=}P)\]

        \[\textbf{RES}_1\quad \frac{\langle P,s\rangle\xrightarrow{\alpha}\langle P',s'\rangle}{\langle (y)P,s\rangle\xrightarrow{\alpha}\langle (y)P',s'\rangle}\quad (y\notin n(\alpha))\]

        \[\textbf{RES}_2\quad \frac{\langle P,s\rangle\xrightarrow{\{\alpha_1,\cdots,\alpha_n\}}\langle P',s'\rangle}{\langle (y)P,s\rangle\xrightarrow{\{\alpha_1,\cdots,\alpha_n\}}\langle (y)P',s'\rangle}\quad (y\notin n(\alpha_1)\cup\cdots\cup n(\alpha_n))\]

        \[\textbf{OPEN}_1\quad \frac{\langle P,s\rangle\xrightarrow{\overline{x}y}\langle P',s'\rangle}{\langle (y)P,s\rangle\xrightarrow{\overline{x}(w)}\langle P'\{w/y\},s'\rangle} \quad (y\neq x, w\notin fn((y)P'))\]

        \[\textbf{OPEN}_2\quad \frac{\langle P,s\rangle\xrightarrow{\{\overline{x}_1 y,\cdots,\overline{x}_n y\}}\langle P',s'\rangle}{\langle(y)P,s\rangle\xrightarrow{\{\overline{x}_1(w),\cdots,\overline{x}_n(w)\}}\langle P'\{w/y\},s'\rangle} \quad (y\neq x_1\neq\cdots\neq x_n, w\notin fn((y)P'))\]

        \caption{Forward transition rules (continuing)}
        \label{TRForPITC72}
    \end{table}
\end{center}

\begin{center}
    \begin{table}
        \[\textbf{RTAU-ACT}\quad \frac{}{\langle\tau.P,s\rangle\xtworightarrow{\tau}\langle P,\tau(s)\rangle}\]

        \[\textbf{ROUTPUT-ACT}\quad \frac{}{\langle\overline{x}y[m].P,s\rangle\xtworightarrow{\overline{x}y[m]}\langle P,s'\rangle}\]

        \[\textbf{RINPUT-ACT}\quad \frac{}{\langle x(z)[m].P,s\rangle\xtworightarrow{x(w)[m]}\langle P\{w/z\},s'\rangle}\quad (w\notin fn((z)P))\]

        \[\textbf{RPAR}_1\quad \frac{\langle P,s\rangle\xtworightarrow{\alpha[m]}\langle P',s'\rangle\quad \langle Q,s\rangle\nrightarrow}{\langle P\parallel Q,s\rangle\xtworightarrow{\alpha[m]}\langle P'\parallel Q,s'\rangle}\quad (bn(\alpha)\cap fn(Q)=\emptyset)\]

        \[\textbf{RPAR}_2\quad \frac{\langle Q,s\rangle\xtworightarrow{\alpha[m]}\langle Q',s'\rangle\quad \langle P,s\rangle\nrightarrow}{\langle P\parallel Q,s\rangle\xtworightarrow{\alpha[m]}\langle P\parallel Q',s'\rangle}\quad (bn(\alpha)\cap fn(P)=\emptyset)\]

        \[\textbf{RPAR}_3\quad \frac{\langle P,s\rangle\xtworightarrow{\alpha[m]}\langle P',s'\rangle\quad \langle Q,s\rangle\xtworightarrow{\beta[m]}\langle Q',s''\rangle}{\langle P\parallel Q,s\rangle\xtworightarrow{\{\alpha[m],\beta[m]\}}\langle P'\parallel Q',s'\cup s''\rangle}\] $(\beta\neq\overline{\alpha}, bn(\alpha)\cap bn(\beta)=\emptyset, bn(\alpha)\cap fn(Q)=\emptyset,bn(\beta)\cap fn(P)=\emptyset)$

        \[\textbf{RPAR}_4\quad \frac{\langle P,s\rangle\xtworightarrow{x_1(z)[m]}\langle P',s'\rangle\quad \langle Q,s\rangle\xtworightarrow{x_2(z)[m]}\langle Q',s''\rangle}{\langle P\parallel Q,s\rangle\xtworightarrow{\{x_1(w)[m],x_2(w)[m]\}}\langle P'\{w/z\}\parallel Q'\{w/z\},s'\cup s''\rangle}\quad (w\notin fn((z)P)\cup fn((z)Q))\]

        \[\textbf{RCOM}\quad \frac{\langle P,s\rangle\xtworightarrow{\overline{x}y[m]}\langle P',s'\rangle\quad \langle Q,s\rangle\xtworightarrow{x(z)[m]}\langle Q',s''\rangle}{\langle P\parallel Q,s\rangle\xtworightarrow{\tau}\langle P'\parallel Q'\{y/z\},s'\cup s''\rangle}\]

        \[\textbf{RCLOSE}\quad \frac{\langle P,s\rangle\xtworightarrow{\overline{x}(w)[m]}\langle P',s'\rangle\quad \langle Q,s\rangle\xtworightarrow{x(w)[m]}\langle Q',s''\rangle}{\langle P\parallel Q,s\rangle\xtworightarrow{\tau}\langle (w)(P'\parallel Q'),s'\cup s''\rangle}\]

        \caption{Reverse transition rules}
        \label{TRForPITC73}
    \end{table}
\end{center}

\begin{center}
    \begin{table}
        \[\textbf{RSUM}_1\quad \frac{\langle P,s\rangle\xtworightarrow{\alpha[m]}\langle P',s'\rangle}{\langle P+Q,s\rangle\xtworightarrow{\alpha[m]}\langle P',s'\rangle}\]

        \[\textbf{RSUM}_2\quad \frac{\langle P,s\rangle\xtworightarrow{\{\alpha_1[m],\cdots,\alpha_n[m]\}}\langle P',s'\rangle}{\langle P+Q,s\rangle\xtworightarrow{\{\alpha_1[m],\cdots,\alpha_n[m]\}}\langle P',s'\rangle}\]

        \[\textbf{RIDE}_1\quad\frac{\langle P\{\widetilde{y}/\widetilde{x}\},s\rangle\xtworightarrow{\alpha[m]}\langle P',s'\rangle}{\langle A(\widetilde{y}),s\rangle\xtworightarrow{\alpha[m]}\langle P',s'\rangle}\quad (A(\widetilde{x})\overset{\text{def}}{=}P)\]

        \[\textbf{RIDE}_2\quad\frac{\langle P\{\widetilde{y}/\widetilde{x}\},s\rangle\xtworightarrow{\{\alpha_1[m],\cdots,\alpha_n[m]\}}\langle P',s'\rangle} {\langle A(\widetilde{y}),s\rangle\xtworightarrow{\{\alpha_1[m],\cdots,\alpha_n[m]\}}\langle P',s'\rangle}\quad (A(\widetilde{x})\overset{\text{def}}{=}P)\]

        \[\textbf{RRES}_1\quad \frac{\langle P,s\rangle\xtworightarrow{\alpha[m]}\langle P',s'\rangle}{\langle (y)P,s\rangle\xtworightarrow{\alpha[m]}\langle (y)P',s'\rangle}\quad (y\notin n(\alpha))\]

        \[\textbf{RRES}_2\quad \frac{\langle P,s\rangle\xtworightarrow{\{\alpha_1[m],\cdots,\alpha_n[m]\}}\langle P',s'\rangle}{\langle (y)P,s\rangle\xtworightarrow{\{\alpha_1[m],\cdots,\alpha_n[m]\}}\langle (y)P',s'\rangle}\quad (y\notin n(\alpha_1)\cup\cdots\cup n(\alpha_n))\]

        \[\textbf{ROPEN}_1\quad \frac{\langle P,s\rangle\xtworightarrow{\overline{x}y[m]}\langle P',s'\rangle}{\langle (y)P,s\rangle\xtworightarrow{\overline{x}(w)[m]}\langle P'\{w/y\},s'\rangle} \quad (y\neq x, w\notin fn((y)P'))\]

        \[\textbf{ROPEN}_2\quad \frac{\langle P,s\rangle\xtworightarrow{\{\overline{x}_1 y[m],\cdots,\overline{x}_n y[m]\}}\langle P',s'\rangle}{\langle(y)P,s\rangle\xtworightarrow{\{\overline{x}_1(w)[m],\cdots,\overline{x}_n(w)[m]\}}\langle P'\{w/y\},s'\rangle} \quad (y\neq x_1\neq\cdots\neq x_n, w\notin fn((y)P'))\]

        \caption{Reverse transition rules (continuing)}
        \label{TRForPITC74}
    \end{table}
\end{center}
\end{definition}

\subsubsection{Properties of Transitions}

\begin{proposition}
\begin{enumerate}
  \item If $\langle P,s\rangle\xrightarrow{\alpha}\langle P',s'\rangle$ then
  \begin{enumerate}
    \item $fn(\alpha)\subseteq fn(P)$;
    \item $fn(P')\subseteq fn(P)\cup bn(\alpha)$;
  \end{enumerate}
  \item If $\langle P,s\rangle\xrightarrow{\{\alpha_1,\cdots,\alpha_n\}}\langle P',s\rangle$ then
  \begin{enumerate}
    \item $fn(\alpha_1)\cup\cdots\cup fn(\alpha_n)\subseteq fn(P)$;
    \item $fn(P')\subseteq fn(P)\cup bn(\alpha_1)\cup\cdots\cup bn(\alpha_n)$.
  \end{enumerate}
\end{enumerate}
\end{proposition}

\begin{proof}
By induction on the depth of inference.
\end{proof}

\begin{proposition}
Suppose that $\langle P,s\rangle\xrightarrow{\alpha(y)}\langle P',s'\rangle$, where $\alpha=x$ or $\alpha=\overline{x}$, and $x\notin n(P)$, then there exists some $P''\equiv_{\alpha}P'\{z/y\}$,
$\langle P,s\rangle\xrightarrow{\alpha(z)}\langle P'',s''\rangle$.
\end{proposition}

\begin{proof}
By induction on the depth of inference.
\end{proof}

\begin{proposition}
If $\langle P,s\rangle\xrightarrow{\alpha} \langle P',s'\rangle$, $bn(\alpha)\cap fn(P'\sigma)=\emptyset$, and $\sigma\lceil bn(\alpha)=id$, then there exists some $P''\equiv_{\alpha}P'\sigma$,
$\langle P,s\rangle\sigma\xrightarrow{\alpha\sigma}\langle P'',s''\rangle$.
\end{proposition}

\begin{proof}
By the definition of substitution (Definition \ref{subs7}) and induction on the depth of inference.
\end{proof}

\begin{proposition}
\begin{enumerate}
  \item If $\langle P\{w/z\},s\rangle\xrightarrow{\alpha}\langle P',s'\rangle$, where $w\notin fn(P)$ and $bn(\alpha)\cap fn(P,w)=\emptyset$, then there exist some $Q$ and $\beta$ with $Q\{w/z\}\equiv_{\alpha}P'$ and
  $\beta\sigma=\alpha$, $\langle P,s\rangle\xrightarrow{\beta}\langle Q,s'\rangle$;
  \item If $\langle P\{w/z\},s\rangle\xrightarrow{\{\alpha_1,\cdots,\alpha_n\}}\langle P',s'\rangle$, where $w\notin fn(P)$ and $bn(\alpha_1)\cap\cdots\cap bn(\alpha_n)\cap fn(P,w)=\emptyset$, then there exist some $Q$
  and $\beta_1,\cdots,\beta_n$ with $Q\{w/z\}\equiv_{\alpha}P'$ and $\beta_1\sigma=\alpha_1,\cdots,\beta_n\sigma=\alpha_n$, $\langle P,s\rangle\xrightarrow{\{\beta_1,\cdots,\beta_n\}}\langle Q,s'\rangle$.
\end{enumerate}

\end{proposition}

\begin{proof}
By the definition of substitution (Definition \ref{subs7}) and induction on the depth of inference.
\end{proof}

\begin{proposition}
\begin{enumerate}
  \item If $\langle P,s\rangle\xtworightarrow{\alpha[m]}\langle P',s'\rangle$ then
  \begin{enumerate}
    \item $fn(\alpha[m])\subseteq fn(P)$;
    \item $fn(P')\subseteq fn(P)\cup bn(\alpha[m])$;
  \end{enumerate}
  \item If $\langle P,s\rangle\xtworightarrow{\{\alpha_1[m],\cdots,\alpha_n[m]\}}\langle P',s\rangle$ then
  \begin{enumerate}
    \item $fn(\alpha_1[m])\cup\cdots\cup fn(\alpha_n[m])\subseteq fn(P)$;
    \item $fn(P')\subseteq fn(P)\cup bn(\alpha_1[m])\cup\cdots\cup bn(\alpha_n[m])$.
  \end{enumerate}
\end{enumerate}
\end{proposition}

\begin{proof}
By induction on the depth of inference.
\end{proof}

\begin{proposition}
Suppose that $\langle P,s\rangle\xtworightarrow{\alpha(y)[m]}\langle P',s'\rangle$, where $\alpha=x$ or $\alpha=\overline{x}$, and $x\notin n(P)$, then there exists some $P''\equiv_{\alpha}P'\{z/y\}$,
$\langle P,s\rangle\xtworightarrow{\alpha(z)[m]}\langle P'',s''\rangle$.
\end{proposition}

\begin{proof}
By induction on the depth of inference.
\end{proof}

\begin{proposition}
If $\langle P,s\rangle\xtworightarrow{\alpha[m]} \langle P',s'\rangle$, $bn(\alpha[m])\cap fn(P'\sigma)=\emptyset$, and $\sigma\lceil bn(\alpha[m])=id$, then there exists some $P''\equiv_{\alpha}P'\sigma$,
$\langle P,s\rangle\sigma\xtworightarrow{\alpha[m]\sigma}\langle P'',s''\rangle$.
\end{proposition}

\begin{proof}
By the definition of substitution (Definition \ref{subs7}) and induction on the depth of inference.
\end{proof}

\begin{proposition}
\begin{enumerate}
  \item If $\langle P\{w/z\},s\rangle\xtworightarrow{\alpha[m]}\langle P',s'\rangle$, where $w\notin fn(P)$ and $bn(\alpha)\cap fn(P,w)=\emptyset$, then there exist some $Q$ and $\beta$ with $Q\{w/z\}\equiv_{\alpha}P'$ and
  $\beta\sigma[m]=\alpha[m]$, $\langle P,s\rangle\xtworightarrow{\beta[m]}\langle Q,s'\rangle$;
  \item If $\langle P\{w/z\},s\rangle\xtworightarrow{\{\alpha_1[m],\cdots,\alpha_n[m]\}}\langle P',s'\rangle$, where $w\notin fn(P)$ and $bn(\alpha_1[m])\cap\cdots\cap bn(\alpha_n[m])\cap fn(P,w)=\emptyset$, then there exist some $Q$
  and $\beta_1[m],\cdots,\beta_n[m]$ with $Q\{w/z\}\equiv_{\alpha}P'$ and $\beta_1\sigma[m]=\alpha_1[m],\cdots,\beta_n\sigma[m]=\alpha_n[m]$, $\langle P,s\rangle\xtworightarrow{\{\beta_1[m],\cdots,\beta_n[m]\}}\langle Q,s'\rangle$.
\end{enumerate}

\end{proposition}

\begin{proof}
By the definition of substitution (Definition \ref{subs7}) and induction on the depth of inference.
\end{proof}

\subsection{Strong Bisimilarities}\label{s7}

\subsubsection{Laws and Congruence}

\begin{theorem}
$\equiv_{\alpha}$ are FR strongly truly concurrent bisimulations. That is, if $P\equiv_{\alpha}Q$, then,
\begin{enumerate}
  \item $P\sim_p^{fr} Q$;
  \item $P\sim_s^{fr} Q$;
  \item $P\sim_{hp}^{fr} Q$;
  \item $P\sim_{hhp}^{fr} Q$.
\end{enumerate}
\end{theorem}

\begin{proof}
By induction on the depth of inference, we can get the following facts:

\begin{enumerate}
  \item If $\alpha$ is a free action and $\langle P,s\rangle\xrightarrow{\alpha}\langle P',s'\rangle$, then equally for some $Q'$ with $P'\equiv_{\alpha}Q'$,
  $\langle Q,s\rangle\xrightarrow{\alpha}\langle Q',s'\rangle$;
  \item If $\langle P,s\rangle\xrightarrow{a(y)}\langle P',s'\rangle$ with $a=x$ or $a=\overline{x}$ and $z\notin n(Q)$, then equally for some $Q'$ with $P'\{z/y\}\equiv_{\alpha}Q'$,
  $\langle Q,s\rangle\xrightarrow{a(z)}\langle Q',s'\rangle$;
  \item If $\alpha[m]$ is a free action and $\langle P,s\rangle\xtworightarrow{\alpha[m]}\langle P',s'\rangle$, then equally for some $Q'$ with $P'\equiv_{\alpha}Q'$,
  $\langle Q,s\rangle\xtworightarrow{\alpha[m]}\langle Q',s'\rangle$;
  \item If $\langle P,s\rangle\xtworightarrow{a(y)[m]}\langle P',s'\rangle$ with $a=x$ or $a=\overline{x}$ and $z\notin n(Q)$, then equally for some $Q'$ with $P'\{z/y\}\equiv_{\alpha}Q'$,
  $\langle Q,s\rangle\xtworightarrow{a(z)[m]}\langle Q',s'\rangle$.
\end{enumerate}

Then, we can get:

\begin{enumerate}
  \item by the definition of FR strongly pomset bisimilarity, $P\sim_p^{fr} Q$;
  \item by the definition of FR strongly step bisimilarity, $P\sim_s^{fr} Q$;
  \item by the definition of FR strongly hp-bisimilarity, $P\sim_{hp}^{fr} Q$;
  \item by the definition of FR strongly hhp-bisimilarity, $P\sim_{hhp}^{fr} Q$.
\end{enumerate}
\end{proof}

\begin{proposition}[Summation laws for FR strongly pomset bisimulation] The Summation laws for FR strongly pomset bisimulation are as follows.

\begin{enumerate}
  \item $P+Q\sim_p^{fr} Q+P$;
  \item $P+(Q+R)\sim_p^{fr} (P+Q)+R$;
  \item $P+P\sim_p^{fr} P$;
  \item $P+\textbf{nil}\sim_p^{fr} P$.
\end{enumerate}

\end{proposition}

\begin{proof}
\begin{enumerate}
  \item $P+Q\sim_p^{fr} Q+P$. It is sufficient to prove the relation $R=\{(P+Q, Q+P)\}\cup \textbf{Id}$ is a FR strongly pomset bisimulation, we omit it;
  \item $P+(Q+R)\sim_p^{fr} (P+Q)+R$. It is sufficient to prove the relation $R=\{(P+(Q+R), (P+Q)+R)\}\cup \textbf{Id}$ is a FR strongly pomset bisimulation, we omit it;
  \item $P+P\sim_p^{fr} P$. It is sufficient to prove the relation $R=\{(P+P, P)\}\cup \textbf{Id}$ is a FR strongly pomset bisimulation, we omit it;
  \item $P+\textbf{nil}\sim_p^{fr} P$. It is sufficient to prove the relation $R=\{(P+\textbf{nil}, P)\}\cup \textbf{Id}$ is a FR strongly pomset bisimulation, we omit it.
\end{enumerate}
\end{proof}

\begin{proposition}[Summation laws for FR strongly step bisimulation] The Summation laws for FR strongly step bisimulation are as follows.
\begin{enumerate}
  \item $P+Q\sim_s^{fr} Q+P$;
  \item $P+(Q+R)\sim_s^{fr} (P+Q)+R$;
  \item $P+P\sim_s^{fr} P$;
  \item $P+\textbf{nil}\sim_s^{fr} P$.
\end{enumerate}
\end{proposition}

\begin{proof}
\begin{enumerate}
  \item $P+Q\sim_s^{fr} Q+P$. It is sufficient to prove the relation $R=\{(P+Q, Q+P)\}\cup \textbf{Id}$ is a FR strongly step bisimulation, we omit it;
  \item $P+(Q+R)\sim_s^{fr} (P+Q)+R$. It is sufficient to prove the relation $R=\{(P+(Q+R), (P+Q)+R)\}\cup \textbf{Id}$ is a FR strongly step bisimulation, we omit it;
  \item $P+P\sim_s^{fr} P$. It is sufficient to prove the relation $R=\{(P+P, P)\}\cup \textbf{Id}$ is a FR strongly step bisimulation, we omit it;
  \item $P+\textbf{nil}\sim_s^{fr} P$. It is sufficient to prove the relation $R=\{(P+\textbf{nil}, P)\}\cup \textbf{Id}$ is a FR strongly step bisimulation, we omit it.
\end{enumerate}
\end{proof}

\begin{proposition}[Summation laws for FR strongly hp-bisimulation] The Summation laws for FR strongly hp-bisimulation are as follows.
\begin{enumerate}
  \item $P+Q\sim_{hp}^{fr} Q+P$;
  \item $P+(Q+R)\sim_{hp}^{fr} (P+Q)+R$;
  \item $P+P\sim_{hp}^{fr} P$;
  \item $P+\textbf{nil}\sim_{hp}^{fr} P$.
\end{enumerate}
\end{proposition}

\begin{proof}
\begin{enumerate}
  \item $P+Q\sim_{hp}^{fr} Q+P$. It is sufficient to prove the relation $R=\{(P+Q, Q+P)\}\cup \textbf{Id}$ is a FR strongly hp-bisimulation, we omit it;
  \item $P+(Q+R)\sim_{hp}^{fr} (P+Q)+R$. It is sufficient to prove the relation $R=\{(P+(Q+R), (P+Q)+R)\}\cup \textbf{Id}$ is a FR strongly hp-bisimulation, we omit it;
  \item $P+P\sim_{hp}^{fr} P$. It is sufficient to prove the relation $R=\{(P+P, P)\}\cup \textbf{Id}$ is a FR strongly hp-bisimulation, we omit it;
  \item $P+\textbf{nil}\sim_{hp}^{fr} P$. It is sufficient to prove the relation $R=\{(P+\textbf{nil}, P)\}\cup \textbf{Id}$ is a FR strongly hp-bisimulation, we omit it.
\end{enumerate}
\end{proof}

\begin{proposition}[Summation laws for FR strongly hhp-bisimulation] The Summation laws for FR strongly hhp-bisimulation are as follows.
\begin{enumerate}
  \item $P+Q\sim_{hhp}^{fr} Q+P$;
  \item $P+(Q+R)\sim_{hhp}^{fr} (P+Q)+R$;
  \item $P+P\sim_{hhp}^{fr} P$;
  \item $P+\textbf{nil}\sim_{hhp}^{fr} P$.
\end{enumerate}
\end{proposition}

\begin{proof}
\begin{enumerate}
  \item $P+Q\sim_{hhp}^{fr} Q+P$. It is sufficient to prove the relation $R=\{(P+Q, Q+P)\}\cup \textbf{Id}$ is a FR strongly hhp-bisimulation, we omit it;
  \item $P+(Q+R)\sim_{hhp}^{fr} (P+Q)+R$. It is sufficient to prove the relation $R=\{(P+(Q+R), (P+Q)+R)\}\cup \textbf{Id}$ is a FR strongly hhp-bisimulation, we omit it;
  \item $P+P\sim_{hhp}^{fr} P$. It is sufficient to prove the relation $R=\{(P+P, P)\}\cup \textbf{Id}$ is a FR strongly hhp-bisimulation, we omit it;
  \item $P+\textbf{nil}\sim_{hhp}^{fr} P$. It is sufficient to prove the relation $R=\{(P+\textbf{nil}, P)\}\cup \textbf{Id}$ is a FR strongly hhp-bisimulation, we omit it.
\end{enumerate}
\end{proof}

\begin{theorem}[Identity law for FR strongly truly concurrent bisimilarities]
If $A(\widetilde{x})\overset{\text{def}}{=}P$, then

\begin{enumerate}
  \item $A(\widetilde{y})\sim_p^{fr} P\{\widetilde{y}/\widetilde{x}\}$;
  \item $A(\widetilde{y})\sim_s^{fr} P\{\widetilde{y}/\widetilde{x}\}$;
  \item $A(\widetilde{y})\sim_{hp}^{fr} P\{\widetilde{y}/\widetilde{x}\}$;
  \item $A(\widetilde{y})\sim_{hhp}^{fr} P\{\widetilde{y}/\widetilde{x}\}$.
\end{enumerate}
\end{theorem}

\begin{proof}
\begin{enumerate}
  \item $A(\widetilde{y})\sim_p^{fr} P\{\widetilde{y}/\widetilde{x}\}$. It is sufficient to prove the relation $R=\{(A(\widetilde{y}), P\{\widetilde{y}/\widetilde{x}\})\}\cup \textbf{Id}$ is a FR strongly pomset bisimulation, we omit it;
  \item $A(\widetilde{y})\sim_s^{fr} P\{\widetilde{y}/\widetilde{x}\}$. It is sufficient to prove the relation $R=\{(A(\widetilde{y}), P\{\widetilde{y}/\widetilde{x}\})\}\cup \textbf{Id}$ is a FR strongly step bisimulation, we omit it;
  \item $A(\widetilde{y})\sim_{hp}^{fr} P\{\widetilde{y}/\widetilde{x}\}$. It is sufficient to prove the relation $R=\{(A(\widetilde{y}), P\{\widetilde{y}/\widetilde{x}\})\}\cup \textbf{Id}$ is a FR strongly hp-bisimulation, we omit it;
  \item $A(\widetilde{y})\sim_{hhp}^{fr} P\{\widetilde{y}/\widetilde{x}\}$. It is sufficient to prove the relation $R=\{(A(\widetilde{y}), P\{\widetilde{y}/\widetilde{x}\})\}\cup \textbf{Id}$ is a FR strongly hhp-bisimulation, we omit it.
\end{enumerate}
\end{proof}

\begin{theorem}[Restriction Laws for FR strongly pomset bisimilarity]
The restriction laws for FR strongly pomset bisimilarity are as follows.

\begin{enumerate}
  \item $(y)P\sim_p^{fr} P$, if $y\notin fn(P)$;
  \item $(y)(z)P\sim_p^{fr} (z)(y)P$;
  \item $(y)(P+Q)\sim_p^{fr} (y)P+(y)Q$;
  \item $(y)\alpha.P\sim_p^{fr} \alpha.(y)P$ if $y\notin n(\alpha)$;
  \item $(y)\alpha.P\sim_p^{fr} \textbf{nil}$ if $y$ is the subject of $\alpha$.
\end{enumerate}
\end{theorem}

\begin{proof}
\begin{enumerate}
  \item $(y)P\sim_p^{fr} P$, if $y\notin fn(P)$. It is sufficient to prove the relation $R=\{((y)P, P)\}\cup \textbf{Id}$, if $y\notin fn(P)$, is a FR strongly pomset bisimulation, we omit it;
  \item $(y)(z)P\sim_p^{fr} (z)(y)P$. It is sufficient to prove the relation $R=\{((y)(z)P, (z)(y)P)\}\cup \textbf{Id}$ is a FR strongly pomset bisimulation, we omit it;
  \item $(y)(P+Q)\sim_p^{fr} (y)P+(y)Q$. It is sufficient to prove the relation $R=\{((y)(P+Q), (y)P+(y)Q)\}\cup \textbf{Id}$ is a FR strongly pomset bisimulation, we omit it;
  \item $(y)\alpha.P\sim_p^{fr} \alpha.(y)P$ if $y\notin n(\alpha)$. It is sufficient to prove the relation $R=\{((y)\alpha.P, \alpha.(y)P)\}\cup \textbf{Id}$, if $y\notin n(\alpha)$, is a FR strongly pomset bisimulation, we omit it;
  \item $(y)\alpha.P\sim_p^{fr} \textbf{nil}$ if $y$ is the subject of $\alpha$. It is sufficient to prove the relation $R=\{((y)\alpha.P, \textbf{nil})\}\cup \textbf{Id}$, if $y$ is the subject of $\alpha$, is a FR strongly pomset bisimulation, we omit it.
\end{enumerate}
\end{proof}

\begin{theorem}[Restriction Laws for FR strongly step bisimilarity]
The restriction laws for FR strongly step bisimilarity are as follows.

\begin{enumerate}
  \item $(y)P\sim_s^{fr} P$, if $y\notin fn(P)$;
  \item $(y)(z)P\sim_s^{fr} (z)(y)P$;
  \item $(y)(P+Q)\sim_s^{fr} (y)P+(y)Q$;
  \item $(y)\alpha.P\sim_s^{fr} \alpha.(y)P$ if $y\notin n(\alpha)$;
  \item $(y)\alpha.P\sim_s^{fr} \textbf{nil}$ if $y$ is the subject of $\alpha$.
\end{enumerate}
\end{theorem}

\begin{proof}
\begin{enumerate}
  \item $(y)P\sim_s^{fr} P$, if $y\notin fn(P)$. It is sufficient to prove the relation $R=\{((y)P, P)\}\cup \textbf{Id}$, if $y\notin fn(P)$, is a FR strongly step bisimulation, we omit it;
  \item $(y)(z)P\sim_s^{fr} (z)(y)P$. It is sufficient to prove the relation $R=\{((y)(z)P, (z)(y)P)\}\cup \textbf{Id}$ is a FR strongly step bisimulation, we omit it;
  \item $(y)(P+Q)\sim_s^{fr} (y)P+(y)Q$. It is sufficient to prove the relation $R=\{((y)(P+Q), (y)P+(y)Q)\}\cup \textbf{Id}$ is a FR strongly step bisimulation, we omit it;
  \item $(y)\alpha.P\sim_s^{fr} \alpha.(y)P$ if $y\notin n(\alpha)$. It is sufficient to prove the relation $R=\{((y)\alpha.P, \alpha.(y)P)\}\cup \textbf{Id}$, if $y\notin n(\alpha)$, is a FR strongly step bisimulation, we omit it;
  \item $(y)\alpha.P\sim_s^{fr} \textbf{nil}$ if $y$ is the subject of $\alpha$. It is sufficient to prove the relation $R=\{((y)\alpha.P, \textbf{nil})\}\cup \textbf{Id}$, if $y$ is the subject of $\alpha$, is a FR strongly step bisimulation, we omit it.
\end{enumerate}
\end{proof}

\begin{theorem}[Restriction Laws for FR strongly hp-bisimilarity]
The restriction laws for FR strongly hp-bisimilarity are as follows.

\begin{enumerate}
  \item $(y)P\sim_{hp}^{fr} P$, if $y\notin fn(P)$;
  \item $(y)(z)P\sim_{hp}^{fr} (z)(y)P$;
  \item $(y)(P+Q)\sim_{hp}^{fr} (y)P+(y)Q$;
  \item $(y)\alpha.P\sim_{hp}^{fr} \alpha.(y)P$ if $y\notin n(\alpha)$;
  \item $(y)\alpha.P\sim_{hp}^{fr} \textbf{nil}$ if $y$ is the subject of $\alpha$.
\end{enumerate}
\end{theorem}

\begin{proof}
\begin{enumerate}
  \item $(y)P\sim_{hp}^{fr} P$, if $y\notin fn(P)$. It is sufficient to prove the relation $R=\{((y)P, P)\}\cup \textbf{Id}$, if $y\notin fn(P)$, is a FR strongly hp-bisimulation, we omit it;
  \item $(y)(z)P\sim_{hp}^{fr} (z)(y)P$. It is sufficient to prove the relation $R=\{((y)(z)P, (z)(y)P)\}\cup \textbf{Id}$ is a FR strongly hp-bisimulation, we omit it;
  \item $(y)(P+Q)\sim_{hp}^{fr} (y)P+(y)Q$. It is sufficient to prove the relation $R=\{((y)(P+Q), (y)P+(y)Q)\}\cup \textbf{Id}$ is a FR strongly hp-bisimulation, we omit it;
  \item $(y)\alpha.P\sim_{hp}^{fr} \alpha.(y)P$ if $y\notin n(\alpha)$. It is sufficient to prove the relation $R=\{((y)\alpha.P, \alpha.(y)P)\}\cup \textbf{Id}$, if $y\notin n(\alpha)$, is a FR strongly hp-bisimulation, we omit it;
  \item $(y)\alpha.P\sim_{hp}^{fr} \textbf{nil}$ if $y$ is the subject of $\alpha$. It is sufficient to prove the relation $R=\{((y)\alpha.P, \textbf{nil})\}\cup \textbf{Id}$, if $y$ is the subject of $\alpha$, is a FR strongly hp-bisimulation, we omit it.
\end{enumerate}
\end{proof}

\begin{theorem}[Restriction Laws for FR strongly hhp-bisimilarity]
The restriction laws for FR strongly hhp-bisimilarity are as follows.

\begin{enumerate}
  \item $(y)P\sim_{hhp}^{fr} P$, if $y\notin fn(P)$;
  \item $(y)(z)P\sim_{hhp}^{fr} (z)(y)P$;
  \item $(y)(P+Q)\sim_{hhp}^{fr} (y)P+(y)Q$;
  \item $(y)\alpha.P\sim_{hhp}^{fr} \alpha.(y)P$ if $y\notin n(\alpha)$;
  \item $(y)\alpha.P\sim_{hhp}^{fr} \textbf{nil}$ if $y$ is the subject of $\alpha$.
\end{enumerate}
\end{theorem}

\begin{proof}
\begin{enumerate}
  \item $(y)P\sim_{hhp}^{fr} P$, if $y\notin fn(P)$. It is sufficient to prove the relation $R=\{((y)P, P)\}\cup \textbf{Id}$, if $y\notin fn(P)$, is a FR strongly hhp-bisimulation, we omit it;
  \item $(y)(z)P\sim_{hhp}^{fr} (z)(y)P$. It is sufficient to prove the relation $R=\{((y)(z)P, (z)(y)P)\}\cup \textbf{Id}$ is a FR strongly hhp-bisimulation, we omit it;
  \item $(y)(P+Q)\sim_{hhp}^{fr} (y)P+(y)Q$. It is sufficient to prove the relation $R=\{((y)(P+Q), (y)P+(y)Q)\}\cup \textbf{Id}$ is a FR strongly hhp-bisimulation, we omit it;
  \item $(y)\alpha.P\sim_{hhp}^{fr} \alpha.(y)P$ if $y\notin n(\alpha)$. It is sufficient to prove the relation $R=\{((y)\alpha.P, \alpha.(y)P)\}\cup \textbf{Id}$, if $y\notin n(\alpha)$, is a FR strongly hhp-bisimulation, we omit it;
  \item $(y)\alpha.P\sim_{hhp}^{fr} \textbf{nil}$ if $y$ is the subject of $\alpha$. It is sufficient to prove the relation $R=\{((y)\alpha.P, \textbf{nil})\}\cup \textbf{Id}$, if $y$ is the subject of $\alpha$, is a FR strongly hhp-bisimulation, we omit it.
\end{enumerate}
\end{proof}

\begin{theorem}[Parallel laws for FR strongly pomset bisimilarity]
The parallel laws for FR strongly pomset bisimilarity are as follows.

\begin{enumerate}
  \item $P\parallel \textbf{nil}\sim_p^{fr} P$;
  \item $P_1\parallel P_2\sim_p^{fr} P_2\parallel P_1$;
  \item $(P_1\parallel P_2)\parallel P_3\sim_p^{fr} P_1\parallel (P_2\parallel P_3)$;
  \item $(y)(P_1\parallel P_2)\sim_p^{fr} (y)P_1\parallel (y)P_2$, if $y\notin fn(P_1)\cap fn(P_2)$.
\end{enumerate}
\end{theorem}

\begin{proof}
\begin{enumerate}
  \item $P\parallel \textbf{nil}\sim_p^{fr} P$. It is sufficient to prove the relation $R=\{(P\parallel \textbf{nil}, P)\}\cup \textbf{Id}$ is a FR strongly pomset bisimulation, we omit it;
  \item $P_1\parallel P_2\sim_p^{fr} P_2\parallel P_1$. It is sufficient to prove the relation $R=\{(P_1\parallel P_2, P_2\parallel P_1)\}\cup \textbf{Id}$ is a FR strongly pomset bisimulation, we omit it;
  \item $(P_1\parallel P_2)\parallel P_3\sim_p^{fr} P_1\parallel (P_2\parallel P_3)$. It is sufficient to prove the relation $R=\{((P_1\parallel P_2)\parallel P_3, P_1\parallel (P_2\parallel P_3))\}\cup \textbf{Id}$ is a FR strongly pomset bisimulation, we omit it;
  \item $(y)(P_1\parallel P_2)\sim_p^{fr} (y)P_1\parallel (y)P_2$, if $y\notin fn(P_1)\cap fn(P_2)$. It is sufficient to prove the relation $R=\{((y)(P_1\parallel P_2), (y)P_1\parallel (y)P_2)\}\cup \textbf{Id}$, if $y\notin fn(P_1)\cap fn(P_2)$, is a FR strongly pomset bisimulation, we omit it.
\end{enumerate}
\end{proof}

\begin{theorem}[Parallel laws for FR strongly step bisimilarity]
The parallel laws for FR strongly step bisimilarity are as follows.

\begin{enumerate}
  \item $P\parallel \textbf{nil}\sim_s^{fr} P$;
  \item $P_1\parallel P_2\sim_s^{fr} P_2\parallel P_1$;
  \item $(P_1\parallel P_2)\parallel P_3\sim_s^{fr} P_1\parallel (P_2\parallel P_3)$;
  \item $(y)(P_1\parallel P_2)\sim_s^{fr} (y)P_1\parallel (y)P_2$, if $y\notin fn(P_1)\cap fn(P_2)$.
\end{enumerate}
\end{theorem}

\begin{proof}
\begin{enumerate}
  \item $P\parallel \textbf{nil}\sim_s^{fr} P$. It is sufficient to prove the relation $R=\{(P\parallel \textbf{nil}, P)\}\cup \textbf{Id}$ is a FR strongly step bisimulation, we omit it;
  \item $P_1\parallel P_2\sim_s^{fr} P_2\parallel P_1$. It is sufficient to prove the relation $R=\{(P_1\parallel P_2, P_2\parallel P_1)\}\cup \textbf{Id}$ is a FR strongly step bisimulation, we omit it;
  \item $(P_1\parallel P_2)\parallel P_3\sim_s^{fr} P_1\parallel (P_2\parallel P_3)$. It is sufficient to prove the relation $R=\{((P_1\parallel P_2)\parallel P_3, P_1\parallel (P_2\parallel P_3))\}\cup \textbf{Id}$ is a FR strongly step bisimulation, we omit it;
  \item $(y)(P_1\parallel P_2)\sim_s^{fr} (y)P_1\parallel (y)P_2$, if $y\notin fn(P_1)\cap fn(P_2)$. It is sufficient to prove the relation $R=\{((y)(P_1\parallel P_2), (y)P_1\parallel (y)P_2)\}\cup \textbf{Id}$, if $y\notin fn(P_1)\cap fn(P_2)$, is a FR strongly step bisimulation, we omit it.
\end{enumerate}
\end{proof}

\begin{theorem}[Parallel laws for FR strongly hp-bisimilarity]
The parallel laws for FR strongly hp-bisimilarity are as follows.

\begin{enumerate}
  \item $P\parallel \textbf{nil}\sim_{hp}^{fr} P$;
  \item $P_1\parallel P_2\sim_{hp}^{fr} P_2\parallel P_1$;
  \item $(P_1\parallel P_2)\parallel P_3\sim_{hp}^{fr} P_1\parallel (P_2\parallel P_3)$;
  \item $(y)(P_1\parallel P_2)\sim_{hp}^{fr} (y)P_1\parallel (y)P_2$, if $y\notin fn(P_1)\cap fn(P_2)$.
\end{enumerate}
\end{theorem}

\begin{proof}
\begin{enumerate}
  \item $P\parallel \textbf{nil}\sim_{hp}^{fr} P$. It is sufficient to prove the relation $R=\{(P\parallel \textbf{nil}, P)\}\cup \textbf{Id}$ is a FR strongly hp-bisimulation, we omit it;
  \item $P_1\parallel P_2\sim_{hp}^{fr} P_2\parallel P_1$. It is sufficient to prove the relation $R=\{(P_1\parallel P_2, P_2\parallel P_1)\}\cup \textbf{Id}$ is a FR strongly hp-bisimulation, we omit it;
  \item $(P_1\parallel P_2)\parallel P_3\sim_{hp}^{fr} P_1\parallel (P_2\parallel P_3)$. It is sufficient to prove the relation $R=\{((P_1\parallel P_2)\parallel P_3, P_1\parallel (P_2\parallel P_3))\}\cup \textbf{Id}$ is a FR strongly hp-bisimulation, we omit it;
  \item $(y)(P_1\parallel P_2)\sim_{hp}^{fr} (y)P_1\parallel (y)P_2$, if $y\notin fn(P_1)\cap fn(P_2)$. It is sufficient to prove the relation $R=\{((y)(P_1\parallel P_2), (y)P_1\parallel (y)P_2)\}\cup \textbf{Id}$, if $y\notin fn(P_1)\cap fn(P_2)$, is a FR strongly hp-bisimulation, we omit it.
\end{enumerate}
\end{proof}

\begin{theorem}[Parallel laws for FR strongly hhp-bisimilarity]
The parallel laws for FR strongly hhp-bisimilarity are as follows.

\begin{enumerate}
  \item $P\parallel \textbf{nil}\sim_{hhp}^{fr} P$;
  \item $P_1\parallel P_2\sim_{hhp}^{fr} P_2\parallel P_1$;
  \item $(P_1\parallel P_2)\parallel P_3\sim_{hhp}^{fr} P_1\parallel (P_2\parallel P_3)$;
  \item $(y)(P_1\parallel P_2)\sim_{hhp}^{fr} (y)P_1\parallel (y)P_2$, if $y\notin fn(P_1)\cap fn(P_2)$.
\end{enumerate}
\end{theorem}

\begin{proof}
\begin{enumerate}
  \item $P\parallel \textbf{nil}\sim_{hhp}^{fr} P$. It is sufficient to prove the relation $R=\{(P\parallel \textbf{nil}, P)\}\cup \textbf{Id}$ is a FR strongly hhp-bisimulation, we omit it;
  \item $P_1\parallel P_2\sim_{hhp}^{fr} P_2\parallel P_1$. It is sufficient to prove the relation $R=\{(P_1\parallel P_2, P_2\parallel P_1)\}\cup \textbf{Id}$ is a FR strongly hhp-bisimulation, we omit it;
  \item $(P_1\parallel P_2)\parallel P_3\sim_{hhp}^{fr} P_1\parallel (P_2\parallel P_3)$. It is sufficient to prove the relation $R=\{((P_1\parallel P_2)\parallel P_3, P_1\parallel (P_2\parallel P_3))\}\cup \textbf{Id}$ is a FR strongly hhp-bisimulation, we omit it;
  \item $(y)(P_1\parallel P_2)\sim_{hhp}^{fr} (y)P_1\parallel (y)P_2$, if $y\notin fn(P_1)\cap fn(P_2)$. It is sufficient to prove the relation $R=\{((y)(P_1\parallel P_2), (y)P_1\parallel (y)P_2)\}\cup \textbf{Id}$, if $y\notin fn(P_1)\cap fn(P_2)$, is a FR strongly hhp-bisimulation, we omit it.
\end{enumerate}
\end{proof}

\begin{theorem}[Expansion law for truly concurrent bisimilarities]
Let $P\equiv\sum_i \alpha_{i}.P_{i}$ and $Q\equiv\sum_j\beta_{j}.Q_{j}$, where $bn(\alpha_{i})\cap fn(Q)=\emptyset$ for all $i$, and
  $bn(\beta_{j})\cap fn(P)=\emptyset$ for all $j$. Then,

\begin{enumerate}
  \item $P\parallel Q\sim_p^{fr} \sum_i\sum_j (\alpha_{i}\parallel \beta_{j}).(P_{i}\parallel Q_{j})+\sum_{\alpha_{i} \textrm{ comp }\beta_{j}}\tau.R_{ij}$;
  \item $P\parallel Q\sim_s^{fr} \sum_i\sum_j (\alpha_{i}\parallel \beta_{j}).(P_{i}\parallel Q_{j})+\sum_{\alpha_{i} \textrm{ comp }\beta_{j}}\tau.R_{ij}$;
  \item $P\parallel Q\sim_{hp}^{fr} \sum_i\sum_j (\alpha_{i}\parallel \beta_{j}).(P_{i}\parallel Q_{j})+\sum_{\alpha_{i} \textrm{ comp }\beta_{j}}\tau.R_{ij}$;
  \item $P\parallel Q\nsim_{phhp} \sum_i\sum_j (\alpha_{i}\parallel \beta_{j}).(P_{i}\parallel Q_{j})+\sum_{\alpha_{i} \textrm{ comp }\beta_{j}}\tau.R_{ij}$.
\end{enumerate}

Where $\alpha_i$ comp $\beta_j$ and $R_{ij}$ are defined as follows:
\begin{enumerate}
  \item $\alpha_{i}$ is $\overline{x}u$ and $\beta_{j}$ is $x(v)$, then $R_{ij}=P_{i}\parallel Q_{j}\{u/v\}$;
  \item $\alpha_{i}$ is $\overline{x}(u)$ and $\beta_{j}$ is $x(v)$, then $R_{ij}=(w)(P_{i}\{w/u\}\parallel Q_{j}\{w/v\})$, if $w\notin fn((u)P_{i})\cup fn((v)Q_{j})$;
  \item $\alpha_{i}$ is $x(v)$ and $\beta_{j}$ is $\overline{x}u$, then $R_{ij}=P_{i}\{u/v\}\parallel Q_{j}$;
  \item $\alpha_{i}$ is $x(v)$ and $\beta_{j}$ is $\overline{x}(u)$, then $R_{ij}=(w)(P_{i}\{w/v\}\parallel Q_{j}\{w/u\})$, if $w\notin fn((v)P_{i})\cup fn((u)Q_{j})$.
\end{enumerate}

Let $P\equiv\sum_i P_{i}.\alpha_{i}[m]$ and $Q\equiv\sum_l Q_{j}.\beta_{j}[m]$, where $bn(\alpha_{i}[m])\cap fn(Q)=\emptyset$ for all $i$, and
  $bn(\beta_{j}[m])\cap fn(P)=\emptyset$ for all $j$. Then,

\begin{enumerate}
  \item $P\parallel Q\sim_p^{fr} \sum_i\sum_j(P_{i}\parallel Q_{j}).(\alpha_{i}[m]\parallel \beta_{j}[m])+\sum_{\alpha_{i} \textrm{ comp }\beta_{j}} R_{ij}.\tau$;
  \item $P\parallel Q\sim_s^{fr} \sum_i\sum_j(P_{i}\parallel Q_{j}).(\alpha_{i}[m]\parallel \beta_{j}[m])+\sum_{\alpha_{i} \textrm{ comp }\beta_{j}} R_{ij}.\tau$;
  \item $P\parallel Q\sim_{hp}^{fr} \sum_i\sum_j(P_{i}\parallel Q_{j}).(\alpha_{i}[m]\parallel \beta_{j}[m])+\sum_{\alpha_{i} \textrm{ comp }\beta_{j}} R_{ij}.\tau$;
  \item $P\parallel Q\nsim_{phhp} \sum_i\sum_j(P_{i}\parallel Q_{j}).(\alpha_{i}[m]\parallel \beta_{j}[m])+\sum_{\alpha_{i} \textrm{ comp }\beta_{j}} R_{ij}.\tau$.
\end{enumerate}

Where $\alpha_i$ comp $\beta_j$ and $R_{ij}$ are defined as follows:
\begin{enumerate}
  \item $\alpha_{i}[m]$ is $\overline{x}u$ and $\beta_{j}[m]$ is $x(v)$, then $R_{ij}=P_{i}\parallel Q_{j}\{u/v\}$;
  \item $\alpha_{i}[m]$ is $\overline{x}(u)$ and $\beta_{j}[m]$ is $x(v)$, then $R_{ij}=(w)(P_{i}\{w/u\}\parallel Q_{j}\{w/v\})$, if $w\notin fn((u)P_{i})\cup fn((v)Q_{j})$;
  \item $\alpha_{i}[m]$ is $x(v)$ and $\beta_{j}[m]$ is $\overline{x}u$, then $R_{ij}=P_{i}\{u/v\}\parallel Q_{j}$;
  \item $\alpha_{i}[m]$ is $x(v)$ and $\beta_{j}[m]$ is $\overline{x}(u)$, then $R_{ij}=(w)(P_{i}\{w/v\}\parallel Q_{j}\{w/u\})$, if $w\notin fn((v)P_{i})\cup fn((u)Q_{j})$.
\end{enumerate}
\end{theorem}

\begin{proof}
According to the definition of FR strongly truly concurrent bisimulations, we can easily prove the above equations, and we omit the proof.
\end{proof}

\begin{theorem}[Equivalence and congruence for FR strongly pomset bisimilarity]
We can enjoy the full congruence modulo FR strongly pomset bisimilarity.

\begin{enumerate}
  \item $\sim_p^{fr}$ is an equivalence relation;
  \item If $P\sim_p^{fr} Q$ then
  \begin{enumerate}
    \item $\alpha.P\sim_p^{f} \alpha.Q$, $\alpha$ is a free action;
    \item $P.\alpha[m]\sim_p^{r}Q.\alpha[m]$, $\alpha[m]$ is a free action;
    \item $\phi.P\sim_p^{f} \phi.Q$;
    \item $P.\phi\sim_p^{r}Q.\phi$;
    \item $P+R\sim_p^{fr} Q+R$;
    \item $P\parallel R\sim_p^{fr} Q\parallel R$;
    \item $(w)P\sim_p^{fr} (w)Q$;
    \item $x(y).P\sim_p^{f} x(y).Q$;
    \item $P.x(y)[m]\sim_p^{r}Q.x(y)[m]$.
  \end{enumerate}
\end{enumerate}
\end{theorem}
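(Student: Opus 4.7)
The plan is to dispatch the theorem in the same witnessing-relation style used throughout the preceding sections of this chapter. For the equivalence claim, reflexivity uses $\textbf{Id}$ as a trivial FR strongly pomset bisimulation, symmetry is immediate from the symmetric phrasing of the defining clauses, and transitivity follows by composing two bisimulations $R_1, R_2$ into $R_1 \circ R_2$ and verifying that each forward clause (1) and reverse clause (2) of the FR strongly pomset bisimulation definition threads through, using the fact that the states $s$ are carried along identically; this is standard and independent of the guard/reversible extensions. For the congruence clauses, the uniform strategy is: for each context $C[\cdot]$, propose the relation $R = \{(C[P], C[Q]) : P \sim_p^{fr} Q\} \cup \textbf{Id}$ and verify, by case analysis on the transition rules in Tables \ref{TRForPITC7}--\ref{TRForPITC74}, that every forward move $\langle C[P], s\rangle \xrightarrow{X_1} \langle P_1', s'\rangle$ is matched by $\langle C[Q], s\rangle \xrightarrow{X_2} \langle Q_1', s'\rangle$ with $X_1 \sim X_2$ and the residuals again in $R$, and symmetrically for reverse moves $\xtworightarrow{X[\mathcal{K}]}$.

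For the unary prefix cases (a)--(d) and the input cases (h)--(i), the verification is essentially local. Any forward transition of $\alpha.P$ (respectively $\phi.P$) must come from $\textbf{TAU-ACT}$, $\textbf{OUTPUT-ACT}$, or the guard axiom, yielding residual $P$ with the corresponding action; the identical rule applied to $\alpha.Q$ produces residual $Q$, and the hypothesis $P \sim_p^{fr} Q$ gives the required residual pair. The reverse-prefix cases (b), (d), (i) are handled dually using $\textbf{ROUTPUT-ACT}$, $\textbf{RINPUT-ACT}$, and the corresponding reverse guard rule, with the key tag $[m]$ forcing the uniqueness of the reverted event. For the input case $x(y).P \sim_p^{f} x(y).Q$, note that because of the late-bisimulation style, the bound name $y$ is matched literally on both sides and the residuals $P\{w/y\}$ and $Q\{w/y\}$ for arbitrary $w$ relate via a substitution closure of $\sim_p^{fr}$, which was already established in section \ref{sos7}.

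The compound-context cases (e)--(g) are where the real work lies, and parallel composition (f) is the main obstacle. For summation (e), transitions of $P + R$ arise via $\textbf{SUM}_1, \textbf{SUM}_2$ (and their reverse variants) from either summand, so the matching splits cleanly using either the hypothesis or the identity. For restriction (g), the $\textbf{RES}_i$ rules preserve the structure directly, but the $\textbf{OPEN}_i$ scope-extrusion rules rewrite both the label and the residual by $\{w/y\}$, so one must verify that the freshness side-condition $w \notin fn((y)P')$ can be reconciled simultaneously on the $P$ and $Q$ sides, which is possible because $P \sim_p^{fr} Q$ implies $fn(P) = fn(Q)$ up to the posetal matching.

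The hard part is parallel composition $P \parallel R \sim_p^{fr} Q \parallel R$, because the rules $\textbf{PAR}_1$--$\textbf{PAR}_4$, $\textbf{COM}$, $\textbf{CLOSE}$ and their reverse counterparts branch on how $P$ and $R$ interact. In the pure interleaving cases $\textbf{PAR}_1, \textbf{PAR}_2$, matching is immediate modulo the side-conditions $bn(\alpha) \cap fn(R) = \emptyset$, which transfer from $P$ to $Q$. The delicate step is synchronization via $\textbf{COM}$ or $\textbf{CLOSE}$: a transition $\langle P \parallel R, s\rangle \xrightarrow{\tau} \langle P' \parallel R'\{y/z\}, s' \cup s''\rangle$ demands that $Q$ exhibits the same output $\overline{x}y$ (or bound output $\overline{x}(w)$) that $P$ did, which the hypothesis supplies, after which the residual $(w)(P' \parallel R')$ is matched by $(w)(Q' \parallel R')$ and relation $R$ must be closed under the restriction context, forcing us to take $R$ in fact to be the congruence closure under $(\cdot)$, $\parallel R$, and substitution. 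The reverse synchronization $\textbf{RCOM}$, $\textbf{RCLOSE}$ is dual, with the tag-matching constraint ensuring no spurious pairings. Once the relation is correctly closed, verifying it is an FR strongly pomset bisimulation is a routine case analysis on the transition rules and can be stated in the same ``we omit it'' form as the neighboring propositions in section \ref{s7}.
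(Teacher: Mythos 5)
Your proposal is correct and follows essentially the same route as the paper, which for each clause simply exhibits the witnessing relation $R=\{(C[P],C[Q])\}\cup\textbf{Id}$ and omits the case analysis on the transition rules that you sketch explicitly. Your one substantive refinement --- that for parallel composition (and the \textbf{COM}/\textbf{CLOSE} residuals) the witness relation must be taken as the closure under the restriction and parallel contexts rather than the literal singleton plus identity --- is accurate and repairs an abuse of notation in the paper's stated relations, but it does not change the overall strategy.
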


\begin{proof}
\begin{enumerate}
  \item $\sim_p^{fr}$ is an equivalence relation, it is obvious;
  \item If $P\sim_p^{fr} Q$ then
  \begin{enumerate}
    \item $\alpha.P\sim_p^{f} \alpha.Q$, $\alpha$ is a free action. It is sufficient to prove the relation $R=\{(\alpha.P, \alpha.Q)\}\cup \textbf{Id}$ is a F strongly pomset bisimulation, we omit it;
    \item $P.\alpha[m]\sim_p^{r}Q.\alpha[m]$, $\alpha[m]$ is a free action. It is sufficient to prove the relation $R=\{(P.\alpha[m], Q.\alpha[m])\}\cup \textbf{Id}$ is a R strongly pomset bisimulation, we omit it;
    \item $\phi.P\sim_p^{f} \phi.Q$. It is sufficient to prove the relation $R=\{(\phi.P, \phi.Q)\}\cup \textbf{Id}$ is a F strongly pomset bisimulation, we omit it;
    \item $P.\phi\sim_p^{r}Q.\phi$. It is sufficient to prove the relation $R=\{(P.\phi, Q.\phi)\}\cup \textbf{Id}$ is a R strongly pomset bisimulation, we omit it;
    \item $P+R\sim_p^{fr} Q+R$. It is sufficient to prove the relation $R=\{(P+R, Q+R)\}\cup \textbf{Id}$ is a FR strongly pomset bisimulation, we omit it;
    \item $P\parallel R\sim_p^{fr} Q\parallel R$. It is sufficient to prove the relation $R=\{(P\parallel R, Q\parallel R)\}\cup \textbf{Id}$ is a FR strongly pomset bisimulation, we omit it;
    \item $(w)P\sim_p^{fr} (w)Q$. It is sufficient to prove the relation $R=\{((w)P, (w)Q)\}\cup \textbf{Id}$ is a FR strongly pomset bisimulation, we omit it;
    \item $x(y).P\sim_p^{f} x(y).Q$. It is sufficient to prove the relation $R=\{(x(y).P, x(y).Q)\}\cup \textbf{Id}$ is a F strongly pomset bisimulation, we omit it;
    \item $P.x(y)[m]\sim_p^{r}Q.x(y)[m]$. It is sufficient to prove the relation $R=\{(P.x(y)[m], Q.x(y)[m])\}\cup \textbf{Id}$ is a R strongly pomset bisimulation, we omit it.
  \end{enumerate}
\end{enumerate}
\end{proof}

\begin{theorem}[Equivalence and congruence for FR strongly step bisimilarity]
We can enjoy the full congruence modulo FR strongly step bisimilarity.

\begin{enumerate}
  \item $\sim_s^{fr}$ is an equivalence relation;
  \item If $P\sim_s^{fr} Q$ then
  \begin{enumerate}
    \item $\alpha.P\sim_s^{f} \alpha.Q$, $\alpha$ is a free action;
    \item $P.\alpha[m]\sim_s^{r}Q.\alpha[m]$, $\alpha[m]$ is a free action;
    \item $\phi.P\sim_s^{f} \phi.Q$;
    \item $P.\phi\sim_s^{r}Q.\phi$;
    \item $P+R\sim_s^{fr} Q+R$;
    \item $P\parallel R\sim_s^{fr} Q\parallel R$;
    \item $(w)P\sim_s^{fr} (w)Q$;
    \item $x(y).P\sim_s^{f} x(y).Q$;
    \item $P.x(y)[m]\sim_s^{r}Q.x(y)[m]$.
  \end{enumerate}
\end{enumerate}
\end{theorem}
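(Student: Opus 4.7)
The plan is to follow the same template that the excerpt uses for all of its parallel equivalence/congruence theorems (pomset, hp, hhp): first establish that $\sim_s^{fr}$ is an equivalence, then for each syntactic context $\mathcal{C}[\cdot]$ exhibit a concrete witness relation $R = \{(\mathcal{C}[P],\mathcal{C}[Q]) : P \sim_s^{fr} Q\} \cup \textbf{Id}$ and check that $R$ satisfies the two clauses (forward and reverse) of the FR strongly step bisimulation definition together with the late-bisimulation side conditions on bound objects.

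For clause 1 (equivalence), reflexivity is immediate by taking $R = \textbf{Id}$; symmetry follows because the definition of FR strongly step bisimulation is stated symmetrically (each ``and vice-versa''); transitivity follows by composing two witness relations $R_1 \circ R_2$ and observing that both the forward step clause and the reverse step clause compose, since the side conditions on bound names $y \notin n(\mathcal{E}_1,\mathcal{E}_2)$ can be enforced by $\alpha$-conversion using the earlier theorem that $\equiv_\alpha \subseteq \sim_s^{fr}$.

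For clause 2, each subcase is handled by taking the candidate relation $R$ indicated above and checking cases. For prefix cases $\alpha.P \sim_s^{f} \alpha.Q$ and $P.\alpha[m] \sim_s^{r} Q.\alpha[m]$, the forward (respectively reverse) transition of the outer process is forced by $\textbf{TAU-ACT}$, $\textbf{OUTPUT-ACT}$, or $\textbf{INPUT-ACT}$ (and their reverse counterparts), reducing the problem to matching the residual $(P,Q) \in {\sim_s^{fr}}$. The guard prefixes $\phi.P$ and $P.\phi$ are handled identically, using the $test(\phi,s)$ predicate to propagate the data-state condition. For $P+R \sim_s^{fr} Q+R$, the $\textbf{SUM}_1$/$\textbf{SUM}_2$ rules partition transitions into those from the $P$/$Q$ side (handled by the assumption $P \sim_s^{fr} Q$) and those from $R$ (handled by $\textbf{Id}$); reverse transitions $\textbf{RSUM}$ are symmetric. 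For parallel $P\parallel R \sim_s^{fr} Q\parallel R$, one case-analyses on which of the six $\textbf{PAR}_i$/$\textbf{COM}$/$\textbf{CLOSE}$ rules (and their reverse variants) fires, using the bound-name side conditions $bn(\alpha)\cap fn(R) = \emptyset$ that are already built into those rules. For restriction $(w)P \sim_s^{fr} (w)Q$, one uses $\textbf{RES}_1$/$\textbf{RES}_2$/$\textbf{OPEN}_1$/$\textbf{OPEN}_2$ and their reverse forms, again exploiting $\alpha$-conversion to choose the extruded name $w$ fresh.

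The genuinely delicate cases are the input prefix congruences $x(y).P \sim_s^{f} x(y).Q$ and $P.x(y)[m] \sim_s^{r} Q.x(y)[m]$: because bisimilarity is \emph{late}, one must show that the residuals $P\{w/y\}$ and $Q\{w/y\}$ are related \emph{uniformly in $w$}. The main obstacle is therefore to argue that $P \sim_s^{fr} Q$ is preserved under the substitution $\{w/y\}$ of a fresh name. This step is not purely a routine unfolding; it relies on the substitution property of transitions (the propositions following Definition \ref{subs7}, which state that transitions commute with substitutions of fresh names modulo $\equiv_\alpha$) together with the fact that ${\sim_s^{fr}}$ contains $\equiv_\alpha$. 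Once this substitutivity lemma is in hand, each of the nine clauses reduces to a direct verification of the FR step bisimulation conditions on the candidate $R$, which we then omit as ``routine'' in the paper's usual style.
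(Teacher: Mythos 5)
Your proposal takes essentially the same route as the paper: for each context it exhibits the witness relation $R=\{(\mathcal{C}[P],\mathcal{C}[Q])\}\cup \textbf{Id}$ and checks the forward and reverse clauses of the FR strongly step bisimulation definition, with the equivalence part dispatched directly; the paper simply declares each such check ``sufficient'' and omits it. You in fact supply strictly more detail than the paper does, in particular the observation that the late input-prefix cases need a substitutivity lemma (closure of $\sim_s^{fr}$ under fresh-name substitution via the transition/substitution propositions and $\equiv_{\alpha}\subseteq\sim_s^{fr}$), which is a genuine gap the paper glosses over but your argument correctly identifies and closes.
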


\begin{proof}
\begin{enumerate}
  \item $\sim_s^{fr}$ is an equivalence relation, it is obvious;
  \item If $P\sim_s^{fr} Q$ then
  \begin{enumerate}
    \item $\alpha.P\sim_s^{f} \alpha.Q$, $\alpha$ is a free action. It is sufficient to prove the relation $R=\{(\alpha.P, \alpha.Q)\}\cup \textbf{Id}$ is a F strongly step bisimulation, we omit it;
    \item $P.\alpha[m]\sim_s^{r}Q.\alpha[m]$, $\alpha[m]$ is a free action. It is sufficient to prove the relation $R=\{(P.\alpha[m], Q.\alpha[m])\}\cup \textbf{Id}$ is a R strongly step bisimulation, we omit it;
    \item $\phi.P\sim_s^{f} \phi.Q$. It is sufficient to prove the relation $R=\{(\phi.P, \phi.Q)\}\cup \textbf{Id}$ is a F strongly step bisimulation, we omit it;
    \item $P.\phi\sim_s^{r}Q.\phi$. It is sufficient to prove the relation $R=\{(P.\phi, Q.\phi)\}\cup \textbf{Id}$ is a R strongly step bisimulation, we omit it;
    \item $P+R\sim_s^{fr} Q+R$. It is sufficient to prove the relation $R=\{(P+R, Q+R)\}\cup \textbf{Id}$ is a FR strongly step bisimulation, we omit it;
    \item $P\parallel R\sim_s^{fr} Q\parallel R$. It is sufficient to prove the relation $R=\{(P\parallel R, Q\parallel R)\}\cup \textbf{Id}$ is a FR strongly step bisimulation, we omit it;
    \item $(w)P\sim_s^{fr} (w)Q$. It is sufficient to prove the relation $R=\{((w)P, (w)Q)\}\cup \textbf{Id}$ is a FR strongly step bisimulation, we omit it;
    \item $x(y).P\sim_s^{f} x(y).Q$. It is sufficient to prove the relation $R=\{(x(y).P, x(y).Q)\}\cup \textbf{Id}$ is a F strongly step bisimulation, we omit it;
    \item $P.x(y)[m]\sim_s^{r}Q.x(y)[m]$. It is sufficient to prove the relation $R=\{(P.x(y)[m], Q.x(y)[m])\}\cup \textbf{Id}$ is a R strongly step bisimulation, we omit it.
  \end{enumerate}
\end{enumerate}
\end{proof}

\begin{theorem}[Equivalence and congruence for FR strongly hp-bisimilarity]
We can enjoy the full congruence modulo FR strongly hp-bisimilarity.

\begin{enumerate}
  \item $\sim_{hp}^{fr}$ is an equivalence relation;
  \item If $P\sim_{hp}^{fr} Q$ then
  \begin{enumerate}
    \item $\alpha.P\sim_{hp}^{f} \alpha.Q$, $\alpha$ is a free action;
    \item $P.\alpha[m]\sim_{hp}^{r}Q.\alpha[m]$, $\alpha[m]$ is a free action;
    \item $\phi.P\sim_{hp}^{f} \phi.Q$;
    \item $P.\phi\sim_{hp}^{r}Q.\phi$;
    \item $P+R\sim_{hp}^{fr} Q+R$;
    \item $P\parallel R\sim_{hp}^{fr} Q\parallel R$;
    \item $(w)P\sim_{hp}^{fr} (w)Q$;
    \item $x(y).P\sim_{hp}^{f} x(y).Q$;
    \item $P.x(y)[m]\sim_{hp}^{r}Q.x(y)[m]$.
  \end{enumerate}
\end{enumerate}
\end{theorem}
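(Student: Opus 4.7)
The plan is to mirror the pattern established in the preceding two theorems (for $\sim_p^{fr}$ and $\sim_s^{fr}$ in this section), but with the extra care needed for the posetal product and the order-isomorphism $f$ that accompanies each pair in an hp-bisimulation. First I would dispense with (1): $\sim_{hp}^{fr}$ is an equivalence relation. Reflexivity is witnessed by $\{(\langle C,s\rangle, \mathrm{id}_C, \langle C,s\rangle) : C\in \mathcal{C}(\mathcal{E})\}$, which is trivially downward compatible with both forward and reverse transitions since $f = \mathrm{id}$ gives $f[e\mapsto e]$. Symmetry is obtained by inverting the isomorphism (replacing each $(\langle C_1,s\rangle,f,\langle C_2,s\rangle)$ by $(\langle C_2,s\rangle,f^{-1},\langle C_1,s\rangle)$) and swapping the roles of forward/reverse clauses; transitivity by composing isomorphisms $g\circ f$ and chaining the existential witnesses in both directions. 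Each verification goes through for both the forward clause and the reverse clause of the FR hp-bisimulation definition separately.

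For part (2), I would treat each item by producing a witnessing relation of the form $R = \{(\mathcal{C}[P], f, \mathcal{C}[Q]) : (\langle C_P, s\rangle, f, \langle C_Q, s\rangle)\in R_0\} \cup \mathbf{Id}$, where $R_0$ is a FR strongly hp-bisimulation witnessing $P \sim_{hp}^{fr} Q$ and $\mathcal{C}[\cdot]$ is the surrounding context (prefix, summation with $R$, parallel with $R$, restriction, input prefix, or their reverse variants). The forward-action cases $\alpha.P \sim_{hp}^{f} \alpha.Q$, $\phi.P \sim_{hp}^{f} \phi.Q$, and $x(y).P \sim_{hp}^{f} x(y).Q$ proceed by noting that the first forward transition out of the context is forced: \textbf{TAU-ACT}, the guard rule, or \textbf{INPUT-ACT} fires deterministically, landing in $(\langle C_P,s'\rangle, f[e_1\mapsto e_2], \langle C_Q,s'\rangle)$ with $e_1 = e_2$ the just-executed action event, and after that point the successors lie in $R_0$ by hypothesis. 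The summand case $P+R \sim_{hp}^{fr} Q+R$ uses rules \textbf{SUM}$_1$/\textbf{SUM}$_2$, and splits on whether the first transition resolves the sum on the $P/Q$ side (use $R_0$) or on the $R$ side (use $\mathbf{Id}$).

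The reverse-prefix cases $P.\alpha[m]\sim_{hp}^{r}Q.\alpha[m]$, $P.\phi\sim_{hp}^{r}Q.\phi$, and $P.x(y)[m]\sim_{hp}^{r}Q.x(y)[m]$ are symmetric: any reverse transition from the full term ultimately pops the suffixed action last, and until that final pop the behavior is governed by $R_0$ via its reverse clause. The parallel-composition case $P\parallel R \sim_{hp}^{fr} Q\parallel R$ is the most delicate step and will be the main obstacle. Here the rules \textbf{PAR}$_{1\text{--}4}$, \textbf{COM}, \textbf{CLOSE} (and their reverse \textbf{RPAR}, \textbf{RCOM}, \textbf{RCLOSE} counterparts) generate synchronizations and bound-name communications that produce composite events; one must (i) extend the witnessing isomorphism $f$ coherently across both parallel components (using the $f[x_1\mapsto x_2]$ update on the $P/Q$ side and the identity on the $R$ side), (ii) manage the side conditions $bn(\alpha)\cap fn(R) = \emptyset$ through the usual alpha-conversion machinery justified by the preceding propositions, and (iii) handle the state-merge $s'\cup s''$ in both forward and reverse directions so that the target configurations still lie in $R$. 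Once the parallel case is settled, restriction $(w)P \sim_{hp}^{fr} (w)Q$ follows by the same template using \textbf{RES}, \textbf{OPEN}, \textbf{RRES}, \textbf{ROPEN}, checking the freshness side condition on $w$ is preserved by the isomorphism $f$.
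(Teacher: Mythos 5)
Your proposal is correct and follows essentially the same route as the paper: for each clause one exhibits a witnessing relation consisting of the contextualized pair together with $\textbf{Id}$ and checks the forward and reverse transition clauses, with the equivalence-relation part handled by identity, inverse and composition of the posetal isomorphisms. Your version is in fact the more careful rendering of the paper's one-line witness $R=\{(\mathcal{C}[P],\mathcal{C}[Q])\}\cup\textbf{Id}$, since you close the relation under the successors supplied by the assumed hp-bisimulation $R_0$ and track the isomorphism updates $f[e_1\mapsto e_2]$ explicitly, which is exactly what the paper's omitted verification would require.
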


\begin{proof}
\begin{enumerate}
  \item $\sim_{hp}^{fr}$ is an equivalence relation, it is obvious;
  \item If $P\sim_{hp}^{fr} Q$ then
  \begin{enumerate}
    \item $\alpha.P\sim_{hp}^{f} \alpha.Q$, $\alpha$ is a free action. It is sufficient to prove the relation $R=\{(\alpha.P, \alpha.Q)\}\cup \textbf{Id}$ is a F strongly hp-bisimulation, we omit it;
    \item $P.\alpha[m]\sim_{hp}^{r}Q.\alpha[m]$, $\alpha[m]$ is a free action. It is sufficient to prove the relation $R=\{(P.\alpha[m], Q.\alpha[m])\}\cup \textbf{Id}$ is a R strongly hp-bisimulation, we omit it;
    \item $\phi.P\sim_{hp}^{f} \phi.Q$. It is sufficient to prove the relation $R=\{(\phi.P, \phi.Q)\}\cup \textbf{Id}$ is a F strongly hp-bisimulation, we omit it;
    \item $P.\phi\sim_{hp}^{r}Q.\phi$. It is sufficient to prove the relation $R=\{(P.\phi, Q.\phi)\}\cup \textbf{Id}$ is a R strongly hp-bisimulation, we omit it;
    \item $P+R\sim_{hp}^{fr} Q+R$. It is sufficient to prove the relation $R=\{(P+R, Q+R)\}\cup \textbf{Id}$ is a FR strongly hp-bisimulation, we omit it;
    \item $P\parallel R\sim_{hp}^{fr} Q\parallel R$. It is sufficient to prove the relation $R=\{(P\parallel R, Q\parallel R)\}\cup \textbf{Id}$ is a FR strongly hp-bisimulation, we omit it;
    \item $(w)P\sim_{hp}^{fr} (w)Q$. It is sufficient to prove the relation $R=\{((w)P, (w)Q)\}\cup \textbf{Id}$ is a FR strongly hp-bisimulation, we omit it;
    \item $x(y).P\sim_{hp}^{f} x(y).Q$. It is sufficient to prove the relation $R=\{(x(y).P, x(y).Q)\}\cup \textbf{Id}$ is a F strongly hp-bisimulation, we omit it;
    \item $P.x(y)[m]\sim_{hp}^{r}Q.x(y)[m]$. It is sufficient to prove the relation $R=\{(P.x(y)[m], Q.x(y)[m])\}\cup \textbf{Id}$ is a R strongly hp-bisimulation, we omit it.
  \end{enumerate}
\end{enumerate}
\end{proof}

\begin{theorem}[Equivalence and congruence for FR strongly hhp-bisimilarity]
We can enjoy the full congruence modulo FR strongly hhp-bisimilarity.

\begin{enumerate}
  \item $\sim_{hhp}^{fr}$ is an equivalence relation;
  \item If $P\sim_{hhp}^{fr} Q$ then
  \begin{enumerate}
    \item $\alpha.P\sim_{hhp}^{f} \alpha.Q$, $\alpha$ is a free action;
    \item $P.\alpha[m]\sim_{hhp}^{r}Q.\alpha[m]$, $\alpha[m]$ is a free action;
    \item $\phi.P\sim_{hhp}^{f} \phi.Q$;
    \item $P.\phi\sim_{hhp}^{r}Q.\phi$;
    \item $P+R\sim_{hhp}^{fr} Q+R$;
    \item $P\parallel R\sim_{hhp}^{fr} Q\parallel R$;
    \item $(w)P\sim_{hhp}^{fr} (w)Q$;
    \item $x(y).P\sim_{hhp}^{f} x(y).Q$;
    \item $P.x(y)[m]\sim_{hhp}^{r}Q.x(y)[m]$.
  \end{enumerate}
\end{enumerate}
\end{theorem}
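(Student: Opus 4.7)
The plan is to mirror the structure used in the analogous theorems for $\sim_p^{fr}$, $\sim_s^{fr}$, and $\sim_{hp}^{fr}$, treating each clause of the statement in turn. For part (1), equivalence, I would verify reflexivity using $R=\textbf{Id}$ (which trivially satisfies downward closure and all forward/reverse transfer conditions), symmetry by inverting each triple $(C_1,f,C_2)\mapsto(C_2,f^{-1},C_1)$ in a given hp-bisimulation and observing that this operation preserves downward closedness, and transitivity by composing two FR strongly hhp-bisimulations $R_1,R_2$ in the obvious way, $R_1\circ R_2=\{(C_1,g\circ f,C_3):(C_1,f,C_2)\in R_1,(C_2,g,C_3)\in R_2\}$, and checking that the composition remains downward closed because each factor is.

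For part (2), each of the seven congruence cases follows the same recipe already used in the $\sim_{hp}^{fr}$ congruence theorem: take the candidate relation $R=\{(C[P],C[Q]):(P,Q)\in\sim_{hhp}^{fr}\}\cup\textbf{Id}$ for the appropriate one-hole context $C[\cdot]$, and show it is an FR strongly hhp-bisimulation. Concretely, for (a) $\alpha.P\sim_{hhp}^f\alpha.Q$ I would use $R=\{(\alpha.P,\alpha.Q)\}\cup\textbf{Id}$; for (b) the reverse dual $P.\alpha[m]\sim_{hhp}^r Q.\alpha[m]$ the relation is $\{(P.\alpha[m],Q.\alpha[m])\}\cup\textbf{Id}$; similarly for $+R$, $\parallel R$, $(w)\cdot$, input prefix, and the reverse input prefix. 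In each case the forward transfer, reverse transfer, and posetal-isomorphism updates via $f[e_1\mapsto e_2]$ are inherited directly from the assumption $P\sim_{hhp}^{fr}Q$ by inspecting the single applicable transition rule (e.g.\ \textbf{OUTPUT-ACT}/\textbf{INPUT-ACT} for the prefix cases, \textbf{SUM}$_1$/\textbf{SUM}$_2$ for $+$, \textbf{PAR}$_{1\text{--}4}$/\textbf{COM}/\textbf{CLOSE} and their reverse counterparts for $\parallel$, \textbf{RES}/\textbf{OPEN} and their reverse counterparts for restriction).

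The step I expect to be the main obstacle is verifying downward closedness of each candidate relation, since this is precisely what distinguishes hhp-bisimulation from hp-bisimulation and is the usual source of failure for congruence properties (cf.\ the expansion law, which is noted to fail for $\sim_{phhp}$). For the simpler contexts (prefix, sum, restriction) the downward closure property is essentially inherited pointwise from the witnessing hhp-bisimulation on $P$ and $Q$, because any sub-configuration of $C[P]$ restricts to a sub-configuration of $P$ together with a fixed context piece. For the parallel case $P\parallel R$ one must be more careful: a sub-triple $(\langle C_1',s'\rangle,f',\langle C_2',s'\rangle)$ of $(\langle C_1,s\rangle,f,\langle C_2,s\rangle)$ splits into a $P$-part and an $R$-part along the concurrent structure, and one checks that both parts are downward-closed in their respective witnessing relations (the assumed hhp-bisimulation for $P,Q$ and $\textbf{Id}$ for $R$); then the combined triple lies in $R$ by construction.

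Finally, after establishing each bisimulation claim, the conclusion $(y)\alpha.P\sim_{hhp}^{fr}$, $x(y).P\sim_{hhp}^{f}$, etc., follows by taking the empty configuration triple $(\emptyset,\emptyset,\emptyset)\in R$, exactly as in the earlier congruence theorems. Since the body of each case is routine once the candidate relation and context piece have been fixed, I would, following the house style of the paper, state the witnessing relation and omit the bookkeeping verification, flagging only the downward-closure check for the parallel-composition clause as the nontrivial ingredient.
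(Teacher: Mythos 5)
Your proposal follows essentially the same route as the paper: for each congruence clause exhibit the candidate relation $R=\{(C[P],C[Q])\}\cup\textbf{Id}$ for the relevant one-hole context and claim it is an F/R/FR strongly hhp-bisimulation (the paper simply asserts this and omits the check, while you additionally spell out the identity/inverse/composition constructions for part (1) and rightly flag downward closure in the parallel case as the only nontrivial point). The only slip is that the theorem has nine congruence clauses, not seven: you omit the guard-prefix cases $\phi.P\sim_{hhp}^{f}\phi.Q$ and $P.\phi\sim_{hhp}^{r}Q.\phi$, which are handled by the identical recipe with $R=\{(\phi.P,\phi.Q)\}\cup\textbf{Id}$ and $R=\{(P.\phi,Q.\phi)\}\cup\textbf{Id}$.
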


\begin{proof}
\begin{enumerate}
  \item $\sim_{hhp}^{fr}$ is an equivalence relation, it is obvious;
  \item If $P\sim_{hhp}^{fr} Q$ then
  \begin{enumerate}
    \item $\alpha.P\sim_{hhp}^{f} \alpha.Q$, $\alpha$ is a free action. It is sufficient to prove the relation $R=\{(\alpha.P, \alpha.Q)\}\cup \textbf{Id}$ is a F strongly hhp-bisimulation, we omit it;
    \item $P.\alpha[m]\sim_{hhp}^{r}Q.\alpha[m]$, $\alpha[m]$ is a free action. It is sufficient to prove the relation $R=\{(P.\alpha[m], Q.\alpha[m])\}\cup \textbf{Id}$ is a R strongly hhp-bisimulation, we omit it;
    \item $\phi.P\sim_{hhp}^{f} \phi.Q$. It is sufficient to prove the relation $R=\{(\phi.P, \phi.Q)\}\cup \textbf{Id}$ is a F strongly hhp-bisimulation, we omit it;
    \item $P.\phi\sim_{hhp}^{r}Q.\phi$. It is sufficient to prove the relation $R=\{(P.\phi, Q.\phi)\}\cup \textbf{Id}$ is a R strongly hhp-bisimulation, we omit it;
    \item $P+R\sim_{hhp}^{fr} Q+R$. It is sufficient to prove the relation $R=\{(P+R, Q+R)\}\cup \textbf{Id}$ is a FR strongly hhp-bisimulation, we omit it;
    \item $P\parallel R\sim_{hhp}^{fr} Q\parallel R$. It is sufficient to prove the relation $R=\{(P\parallel R, Q\parallel R)\}\cup \textbf{Id}$ is a FR strongly hhp-bisimulation, we omit it;
    \item $(w)P\sim_{hhp}^{fr} (w)Q$. It is sufficient to prove the relation $R=\{((w)P, (w)Q)\}\cup \textbf{Id}$ is a FR strongly hhp-bisimulation, we omit it;
    \item $x(y).P\sim_{hhp}^{f} x(y).Q$. It is sufficient to prove the relation $R=\{(x(y).P, x(y).Q)\}\cup \textbf{Id}$ is a F strongly hhp-bisimulation, we omit it;
    \item $P.x(y)[m]\sim_{hhp}^{r}Q.x(y)[m]$. It is sufficient to prove the relation $R=\{(P.x(y)[m], Q.x(y)[m])\}\cup \textbf{Id}$ is a R strongly hhp-bisimulation, we omit it.
  \end{enumerate}
\end{enumerate}
\end{proof}

\subsubsection{Recursion}

\begin{definition}
Let $X$ have arity $n$, and let $\widetilde{x}=x_1,\cdots,x_n$ be distinct names, and $fn(P)\subseteq\{x_1,\cdots,x_n\}$. The replacement of $X(\widetilde{x})$ by $P$ in $E$, written
$E\{X(\widetilde{x}):=P\}$, means the result of replacing each subterm $X(\widetilde{y})$ in $E$ by $P\{\widetilde{y}/\widetilde{x}\}$.
\end{definition}

\begin{definition}
Let $E$ and $F$ be two process expressions containing only $X_1,\cdots,X_m$ with associated name sequences $\widetilde{x}_1,\cdots,\widetilde{x}_m$. Then,
\begin{enumerate}
  \item $E\sim_p^{fr} F$ means $E(\widetilde{P})\sim_p^{fr} F(\widetilde{P})$;
  \item $E\sim_s^{fr} F$ means $E(\widetilde{P})\sim_s^{fr} F(\widetilde{P})$;
  \item $E\sim_{hp}^{fr} F$ means $E(\widetilde{P})\sim_{hp}^{fr} F(\widetilde{P})$;
  \item $E\sim_{hhp}^{fr} F$ means $E(\widetilde{P})\sim_{hhp}^{fr} F(\widetilde{P})$;
\end{enumerate}

for all $\widetilde{P}$ such that $fn(P_i)\subseteq \widetilde{x}_i$ for each $i$.
\end{definition}

\begin{definition}
A term or identifier is weakly guarded in $P$ if it lies within some subterm $\alpha.Q$ or $Q.\alpha[m]$ or $(\alpha_1\parallel\cdots\parallel \alpha_n).Q$ or
$Q.(\alpha_1[m]\parallel\cdots\parallel \alpha_n[m])$ of $P$.
\end{definition}

\begin{theorem}
Assume that $\widetilde{E}$ and $\widetilde{F}$ are expressions containing only $X_i$ with $\widetilde{x}_i$, and $\widetilde{A}$ and $\widetilde{B}$ are identifiers with $A_i$, $B_i$. Then, for all $i$,
\begin{enumerate}
  \item $E_i\sim_s^{fr} F_i$, $A_i(\widetilde{x}_i)\overset{\text{def}}{=}E_i(\widetilde{A})$, $B_i(\widetilde{x}_i)\overset{\text{def}}{=}F_i(\widetilde{B})$, then
  $A_i(\widetilde{x}_i)\sim_s^{fr} B_i(\widetilde{x}_i)$;
  \item $E_i\sim_p^{fr} F_i$, $A_i(\widetilde{x}_i)\overset{\text{def}}{=}E_i(\widetilde{A})$, $B_i(\widetilde{x}_i)\overset{\text{def}}{=}F_i(\widetilde{B})$, then
  $A_i(\widetilde{x}_i)\sim_p^{fr} B_i(\widetilde{x}_i)$;
  \item $E_i\sim_{hp}^{fr} F_i$, $A_i(\widetilde{x}_i)\overset{\text{def}}{=}E_i(\widetilde{A})$, $B_i(\widetilde{x}_i)\overset{\text{def}}{=}F_i(\widetilde{B})$, then
  $A_i(\widetilde{x}_i)\sim_{hp}^{fr} B_i(\widetilde{x}_i)$;
  \item $E_i\sim_{hhp}^{fr} F_i$, $A_i(\widetilde{x}_i)\overset{\text{def}}{=}E_i(\widetilde{A})$, $B_i(\widetilde{x}_i)\overset{\text{def}}{=}F_i(\widetilde{B})$, then
  $A_i(\widetilde{x}_i)\sim_{hhp}^{fr} B_i(\widetilde{x}_i)$.
\end{enumerate}
\end{theorem}

\begin{proof}
\begin{enumerate}
  \item $E_i\sim_s^{fr} F_i$, $A_i(\widetilde{x}_i)\overset{\text{def}}{=}E_i(\widetilde{A})$, $B_i(\widetilde{x}_i)\overset{\text{def}}{=}F_i(\widetilde{B})$, then
  $A_i(\widetilde{x}_i)\sim_s^{fr} B_i(\widetilde{x}_i)$.

      We will consider the case $I=\{1\}$ with loss of generality, and show the following relation $R$ is a FR strongly step bisimulation.

      $$R=\{(G(A),G(B)):G\textrm{ has only identifier }X\}.$$

      By choosing $G\equiv X(\widetilde{y})$, it follows that $A(\widetilde{y})\sim_s^{fr} B(\widetilde{y})$. It is sufficient to prove the following:
      \begin{enumerate}
        \item If $\langle G(A),s\rangle\xrightarrow{\{\alpha_1,\cdots,\alpha_n\}}\langle P',s'\rangle$, where $\alpha_i(1\leq i\leq n)$ is a free action or bound output action with
        $bn(\alpha_1)\cap\cdots\cap bn(\alpha_n)\cap n(G(A),G(B))=\emptyset$, then $\langle G(B),s\rangle\xrightarrow{\{\alpha_1,\cdots,\alpha_n\}}\langle Q'',s''\rangle$ such that $P'\sim_s^{fr} Q''$;
        \item If $\langle G(A),s\rangle\xrightarrow{x(y)}\langle P',s'\rangle$ with $x\notin n(G(A),G(B))$, then $\langle G(B),s\rangle\xrightarrow{x(y)}\langle Q'',s''\rangle$, such that for all $u$,
        $\langle P',s'\rangle\{u/y\}\sim_s^{fr} \langle Q''\{u/y\},s''\rangle$;
        \item If $\langle G(A),s\rangle\xtworightarrow{\{\alpha_1[m],\cdots,\alpha_n[m]\}}\langle P',s'\rangle$, where $\alpha_i[m](1\leq i\leq n)$ is a free action or bound output action with
        $bn(\alpha_1[m])\cap\cdots\cap bn(\alpha_n[m])\cap n(G(A),G(B))=\emptyset$, then $\langle G(B),s\rangle\xtworightarrow{\{\alpha_1[m],\cdots,\alpha_n[m]\}}\langle Q'',s''\rangle$ such that $P'\sim_s^{fr} Q''$;
        \item If $\langle G(A),s\rangle\xtworightarrow{x(y)[m]}\langle P',s'\rangle$ with $x\notin n(G(A),G(B))$, then $\langle G(B),s\rangle\xtworightarrow{x(y)[m]}\langle Q'',s''\rangle$, such that for all $u$,
        $P'\{u/y\}\sim_s^{fr} Q''\{u/y\}$.
      \end{enumerate}

      To prove the above properties, it is sufficient to induct on the depth of inference and quite routine, we omit it.
  \item $E_i\sim_p^{fr} F_i$, $A_i(\widetilde{x}_i)\overset{\text{def}}{=}E_i(\widetilde{A})$, $B_i(\widetilde{x}_i)\overset{\text{def}}{=}F_i(\widetilde{B})$, then
  $A_i(\widetilde{x}_i)\sim_p^{fr} B_i(\widetilde{x}_i)$. It can be proven similarly to the above case.
  \item $E_i\sim_{hp}^{fr} F_i$, $A_i(\widetilde{x}_i)\overset{\text{def}}{=}E_i(\widetilde{A})$, $B_i(\widetilde{x}_i)\overset{\text{def}}{=}F_i(\widetilde{B})$, then
  $A_i(\widetilde{x}_i)\sim_{hp}^{fr} B_i(\widetilde{x}_i)$. It can be proven similarly to the above case.
  \item $E_i\sim_{hhp}^{fr} F_i$, $A_i(\widetilde{x}_i)\overset{\text{def}}{=}E_i(\widetilde{A})$, $B_i(\widetilde{x}_i)\overset{\text{def}}{=}F_i(\widetilde{B})$, then
  $A_i(\widetilde{x}_i)\sim_{hhp}^{fr} B_i(\widetilde{x}_i)$. It can be proven similarly to the above case.
\end{enumerate}
\end{proof}

\begin{theorem}[Unique solution of equations]
Assume $\widetilde{E}$ are expressions containing only $X_i$ with $\widetilde{x}_i$, and each $X_i$ is weakly guarded in each $E_j$. Assume that $\widetilde{P}$ and $\widetilde{Q}$ are
processes such that $fn(P_i)\subseteq \widetilde{x}_i$ and $fn(Q_i)\subseteq \widetilde{x}_i$. Then, for all $i$,
\begin{enumerate}
  \item if $P_i\sim_p^{fr} E_i(\widetilde{P})$, $Q_i\sim_p^{fr} E_i(\widetilde{Q})$, then $P_i\sim_p^{fr} Q_i$;
  \item if $P_i\sim_s^{fr} E_i(\widetilde{P})$, $Q_i\sim_s^{fr} E_i(\widetilde{Q})$, then $P_i\sim_s^{fr} Q_i$;
  \item if $P_i\sim_{hp}^{fr} E_i(\widetilde{P})$, $Q_i\sim_{hp}^{fr} E_i(\widetilde{Q})$, then $P_i\sim_{hp}^{fr} Q_i$;
  \item if $P_i\sim_{hhp}^{fr} E_i(\widetilde{P})$, $Q_i\sim_{hhp}^{fr} E_i(\widetilde{Q})$, then $P_i\sim_{hhp}^{fr} Q_i$.
\end{enumerate}
\end{theorem}

\begin{proof}
\begin{enumerate}
  \item It is similar to the proof of unique solution of equations for FR strongly pomset bisimulation in CTC, please refer to \cite{CTC2} for details, we omit it;
  \item It is similar to the proof of unique solution of equations for FR strongly step bisimulation in CTC, please refer to \cite{CTC2} for details, we omit it;
  \item It is similar to the proof of unique solution of equations for FR strongly hp-bisimulation in CTC, please refer to \cite{CTC2} for details, we omit it;
  \item It is similar to the proof of unique solution of equations for FR strongly hhp-bisimulation in CTC, please refer to \cite{CTC2} for details, we omit it.
\end{enumerate}
\end{proof}

\subsection{Algebraic Theory}\label{a7}

\begin{definition}[STC]
The theory \textbf{STC} is consisted of the following axioms and inference rules:

\begin{enumerate}
  \item Alpha-conversion $\textbf{A}$.
  \[\textrm{if } P\equiv Q, \textrm{ then } P=Q\]
  \item Congruence $\textbf{C}$. If $P=Q$, then,
  \[\tau.P=\tau.Q\quad \overline{x}y.P=\overline{x}y.Q\quad P.\overline{x}y[m]=Q.\overline{x}y[m]\]
  \[P+R=Q+R\quad P\parallel R=Q\parallel R\]
  \[(x)P=(x)Q\quad x(y).P=x(y).Q\quad P.x(y)[m]=Q.x(y)[m]\]
  \item Summation $\textbf{S}$.
  \[\textbf{S0}\quad P+\textbf{nil}=P\]
  \[\textbf{S1}\quad P+P=P\]
  \[\textbf{S2}\quad P+Q=Q+P\]
  \[\textbf{S3}\quad P+(Q+R)=(P+Q)+R\]
  \item Restriction $\textbf{R}$.
  \[\textbf{R0}\quad (x)P=P\quad \textrm{ if }x\notin fn(P)\]
  \[\textbf{R1}\quad (x)(y)P=(y)(x)P\]
  \[\textbf{R2}\quad (x)(P+Q)=(x)P+(x)Q\]
  \[\textbf{R3}\quad (x)\alpha.P=\alpha.(x)P\quad \textrm{ if }x\notin n(\alpha)\]
  \[\textbf{R4}\quad (x)\alpha.P=\textbf{nil}\quad \textrm{ if }x\textrm{is the subject of }\alpha\]
  \item Expansion $\textbf{E}$.
  Let $P\equiv\sum_i \alpha_{i}.P_{i}$ and $Q\equiv\sum_j\beta_{j}.Q_{j}$, where $bn(\alpha_{i})\cap fn(Q)=\emptyset$ for all $i$, and
  $bn(\beta_{j})\cap fn(P)=\emptyset$ for all $j$. Then,

\begin{enumerate}
  \item $P\parallel Q\sim_p^{fr} \sum_i\sum_j (\alpha_{i}\parallel \beta_{j}).(P_{i}\parallel Q_{j})+\sum_{\alpha_{i} \textrm{ comp }\beta_{j}}\tau.R_{ij}$;
  \item $P\parallel Q\sim_s^{fr} \sum_i\sum_j (\alpha_{i}\parallel \beta_{j}).(P_{i}\parallel Q_{j})+\sum_{\alpha_{i} \textrm{ comp }\beta_{j}}\tau.R_{ij}$;
  \item $P\parallel Q\sim_{hp}^{fr} \sum_i\sum_j (\alpha_{i}\parallel \beta_{j}).(P_{i}\parallel Q_{j})+\sum_{\alpha_{i} \textrm{ comp }\beta_{j}}\tau.R_{ij}$;
  \item $P\parallel Q\nsim_{phhp} \sum_i\sum_j (\alpha_{i}\parallel \beta_{j}).(P_{i}\parallel Q_{j})+\sum_{\alpha_{i} \textrm{ comp }\beta_{j}}\tau.R_{ij}$.
\end{enumerate}

Where $\alpha_i$ comp $\beta_j$ and $R_{ij}$ are defined as follows:
\begin{enumerate}
  \item $\alpha_{i}$ is $\overline{x}u$ and $\beta_{j}$ is $x(v)$, then $R_{ij}=P_{i}\parallel Q_{j}\{u/v\}$;
  \item $\alpha_{i}$ is $\overline{x}(u)$ and $\beta_{j}$ is $x(v)$, then $R_{ij}=(w)(P_{i}\{w/u\}\parallel Q_{j}\{w/v\})$, if $w\notin fn((u)P_{i})\cup fn((v)Q_{j})$;
  \item $\alpha_{i}$ is $x(v)$ and $\beta_{j}$ is $\overline{x}u$, then $R_{ij}=P_{i}\{u/v\}\parallel Q_{j}$;
  \item $\alpha_{i}$ is $x(v)$ and $\beta_{j}$ is $\overline{x}(u)$, then $R_{ij}=(w)(P_{i}\{w/v\}\parallel Q_{j}\{w/u\})$, if $w\notin fn((v)P_{i})\cup fn((u)Q_{j})$.
\end{enumerate}

Let $P\equiv\sum_i P_{i}.\alpha_{i}[m]$ and $Q\equiv\sum_l Q_{j}.\beta_{j}[m]$, where $bn(\alpha_{i}[m])\cap fn(Q)=\emptyset$ for all $i$, and
  $bn(\beta_{j}[m])\cap fn(P)=\emptyset$ for all $j$. Then,

\begin{enumerate}
  \item $P\parallel Q\sim_p^{fr} \sum_i\sum_j(P_{i}\parallel Q_{j}).(\alpha_{i}[m]\parallel \beta_{j}[m])+\sum_{\alpha_{i} \textrm{ comp }\beta_{j}} R_{ij}.\tau$;
  \item $P\parallel Q\sim_s^{fr} \sum_i\sum_j(P_{i}\parallel Q_{j}).(\alpha_{i}[m]\parallel \beta_{j}[m])+\sum_{\alpha_{i} \textrm{ comp }\beta_{j}} R_{ij}.\tau$;
  \item $P\parallel Q\sim_{hp}^{fr} \sum_i\sum_j(P_{i}\parallel Q_{j}).(\alpha_{i}[m]\parallel \beta_{j}[m])+\sum_{\alpha_{i} \textrm{ comp }\beta_{j}} R_{ij}.\tau$;
  \item $P\parallel Q\nsim_{phhp} \sum_i\sum_j(P_{i}\parallel Q_{j}).(\alpha_{i}[m]\parallel \beta_{j}[m])+\sum_{\alpha_{i} \textrm{ comp }\beta_{j}} R_{ij}.\tau$.
\end{enumerate}

Where $\alpha_i$ comp $\beta_j$ and $R_{ij}$ are defined as follows:
\begin{enumerate}
  \item $\alpha_{i}[m]$ is $\overline{x}u$ and $\beta_{j}[m]$ is $x(v)$, then $R_{ij}=P_{i}\parallel Q_{j}\{u/v\}$;
  \item $\alpha_{i}[m]$ is $\overline{x}(u)$ and $\beta_{j}[m]$ is $x(v)$, then $R_{ij}=(w)(P_{i}\{w/u\}\parallel Q_{j}\{w/v\})$, if $w\notin fn((u)P_{i})\cup fn((v)Q_{j})$;
  \item $\alpha_{i}[m]$ is $x(v)$ and $\beta_{j}[m]$ is $\overline{x}u$, then $R_{ij}=P_{i}\{u/v\}\parallel Q_{j}$;
  \item $\alpha_{i}[m]$ is $x(v)$ and $\beta_{j}[m]$ is $\overline{x}(u)$, then $R_{ij}=(w)(P_{i}\{w/v\}\parallel Q_{j}\{w/u\})$, if $w\notin fn((v)P_{i})\cup fn((u)Q_{j})$.
\end{enumerate}
  \item Identifier $\textbf{I}$.
  \[\textrm{If }A(\widetilde{x})\overset{\text{def}}{=}P,\textrm{ then }A(\widetilde{y})= P\{\widetilde{y}/\widetilde{x}\}.\]
\end{enumerate}
\end{definition}

\begin{theorem}[Soundness]
If $\textbf{STC}\vdash P=Q$ then
\begin{enumerate}
  \item $P\sim_p^{fr} Q$;
  \item $P\sim_s^{fr} Q$;
  \item $P\sim_{hp}^{fr} Q$;
  \item $P\sim_{hhp}^{fr} Q$.
\end{enumerate}
\end{theorem}

\begin{proof}
The soundness of these laws modulo strongly truly concurrent bisimilarities is already proven in Section \ref{s7}.
\end{proof}

\begin{definition}
The agent identifier $A$ is weakly guardedly defined if every agent identifier is weakly guarded in the right-hand side of the definition of $A$.
\end{definition}

\begin{definition}[Head normal form]
A Process $P$ is in head normal form if it is a sum of the prefixes:

$$P\equiv \sum_i(\alpha_{i1}\parallel\cdots\parallel\alpha_{in}).P_{i}\quad P\equiv \sum_i P_{i}.(\alpha_{i1}[m]\parallel\cdots\parallel\alpha_{in}[m])$$
\end{definition}

\begin{proposition}
If every agent identifier is weakly guardedly defined, then for any process $P$, there is a head normal form $H$ such that

$$\textbf{STC}\vdash P=H.$$
\end{proposition}

\begin{proof}
It is sufficient to induct on the structure of $P$ and quite obvious.
\end{proof}

\begin{theorem}[Completeness]
For all processes $P$ and $Q$,
\begin{enumerate}
  \item if $P\sim_p^{fr} Q$, then $\textbf{STC}\vdash P=Q$;
  \item if $P\sim_s^{fr} Q$, then $\textbf{STC}\vdash P=Q$;
  \item if $P\sim_{hp}^{fr} Q$, then $\textbf{STC}\vdash P=Q$.
\end{enumerate}
\end{theorem}

\begin{proof}
\begin{enumerate}
  \item if $P\sim_s^{fr} Q$, then $\textbf{STC}\vdash P=Q$.

  For the forward transition case.

Since $P$ and $Q$ all have head normal forms, let $P\equiv\sum_{i=1}^k\alpha_{i}.P_{i}$ and $Q\equiv\sum_{i=1}^k\beta_{i}.Q_{i}$. Then the depth of
$P$, denoted as $d(P)=0$, if $k=0$; $d(P)=1+max\{d(P_{i})\}$ for $1\leq j,i\leq k$. The depth $d(Q)$ can be defined similarly.

It is sufficient to induct on $d=d(P)+d(Q)$. When $d=0$, $P\equiv\textbf{nil}$ and $Q\equiv\textbf{nil}$, $P=Q$, as desired.

Suppose $d>0$.

\begin{itemize}
  \item If $(\alpha_1\parallel\cdots\parallel\alpha_n).M$ with $\alpha_{i}(1\leq i\leq n)$ free actions is a summand of $P$, then
  $\langle P,s\rangle\xrightarrow{\{\alpha_1,\cdots,\alpha_n\}}\langle M,s'\rangle$.
  Since $Q$ is in head normal form and has a summand $(\alpha_1\parallel\cdots\parallel\alpha_n).N$ such that $M\sim_s^{fr} N$, by the induction hypothesis $\textbf{STC}\vdash M=N$,
  $\textbf{STC}\vdash (\alpha_1\parallel\cdots\parallel\alpha_n).M= (\alpha_1\parallel\cdots\parallel\alpha_n).N$;
  \item If $x(y).M$ is a summand of $P$, then for $z\notin n(P, Q)$, $\langle P,s\rangle\xrightarrow{x(z)}\langle M',s'\rangle\equiv \langle M\{z/y\},s'\rangle$. Since $Q$ is in head normal form and has a summand
  $x(w).N$ such that for all $v$, $M'\{v/z\}\sim_s^{fr} N'\{v/z\}$ where $N'\equiv N\{z/w\}$, by the induction hypothesis $\textbf{STC}\vdash M'\{v/z\}=N'\{v/z\}$, by the axioms
  $\textbf{C}$ and $\textbf{A}$, $\textbf{STC}\vdash x(y).M=x(w).N$;
  \item If $\overline{x}(y).M$ is a summand of $P$, then for $z\notin n(P,Q)$, $\langle P,s\rangle\xrightarrow{\overline{x}(z)}\langle M',s'\rangle\equiv \langle M\{z/y\},s'\rangle$. Since $Q$ is in head normal form and
  has a summand $\overline{x}(w).N$ such that $M'\sim_s^{fr} N'$ where $N'\equiv N\{z/w\}$, by the induction hypothesis $\textbf{STC}\vdash M'=N'$, by the axioms
  $\textbf{A}$ and $\textbf{C}$, $\textbf{STC}\vdash \overline{x}(y).M= \overline{x}(w).N$.
\end{itemize}

For the reverse transition case, it can be proven similarly, and we omit it.

  \item if $P\sim_p^{fr} Q$, then $\textbf{STC}\vdash P=Q$. It can be proven similarly to the above case.
  \item if $P\sim_{hp}^{fr} Q$, then $\textbf{STC}\vdash P=Q$. It can be proven similarly to the above case.
\end{enumerate}
\end{proof}

\newpage\section{Putting All the Things into a Whole}\label{pa}

In this chapter, we design $\pi_{tc}$ with reversibility, probabilism and guards all together. This chapter is organized as follows. In section \ref{os8}, we introduce the truly concurrent operational semantics. Then, we introduce
the syntax and operational semantics, laws modulo strongly truly concurrent bisimulations, and algebraic theory of $\pi_{tc}$ with reversibility, probabilism and guards in section \ref{sos8},
\ref{s8} and \ref{a8} respectively.

\subsection{Operational Semantics}\label{os8}

Firstly, in this section, we introduce concepts of FR (strongly) probabilistic truly concurrent bisimilarities, including FR probabilistic pomset bisimilarity, FR probabilistic step
bisimilarity, FR probabilistic history-preserving (hp-)bisimilarity and FR probabilistic hereditary history-preserving (hhp-)bisimilarity. In contrast to traditional FR probabilistic truly
concurrent bisimilarities in section \ref{bg}, these versions in $\pi_{ptc}$ must take care of actions with bound objects. Note that, these FR probabilistic truly concurrent bisimilarities
are defined as late bisimilarities, but not early bisimilarities, as defined in $\pi$-calculus \cite{PI1} \cite{PI2}. Note that, here, a PES $\mathcal{E}$ is deemed as a process.

\begin{definition}[Prime event structure with silent event and empty event]
Let $\Lambda$ be a fixed set of labels, ranged over $a,b,c,\cdots$ and $\tau,\epsilon$. A ($\Lambda$-labelled) prime event structure with silent event $\tau$ and empty event
$\epsilon$ is a tuple $\mathcal{E}=\langle \mathbb{E}, \leq, \sharp, \lambda\rangle$, where $\mathbb{E}$ is a denumerable set of events, including the silent event $\tau$ and empty
event $\epsilon$. Let $\hat{\mathbb{E}}=\mathbb{E}\backslash\{\tau,\epsilon\}$, exactly excluding $\tau$ and $\epsilon$, it is obvious that $\hat{\tau^*}=\epsilon$. Let
$\lambda:\mathbb{E}\rightarrow\Lambda$ be a labelling function and let $\lambda(\tau)=\tau$ and $\lambda(\epsilon)=\epsilon$. And $\leq$, $\sharp$ are binary relations on $\mathbb{E}$,
called causality and conflict respectively, such that:

\begin{enumerate}
  \item $\leq$ is a partial order and $\lceil e \rceil = \{e'\in \mathbb{E}|e'\leq e\}$ is finite for all $e\in \mathbb{E}$. It is easy to see that
  $e\leq\tau^*\leq e'=e\leq\tau\leq\cdots\leq\tau\leq e'$, then $e\leq e'$.
  \item $\sharp$ is irreflexive, symmetric and hereditary with respect to $\leq$, that is, for all $e,e',e''\in \mathbb{E}$, if $e\sharp e'\leq e''$, then $e\sharp e''$.
\end{enumerate}

Then, the concepts of consistency and concurrency can be drawn from the above definition:

\begin{enumerate}
  \item $e,e'\in \mathbb{E}$ are consistent, denoted as $e\frown e'$, if $\neg(e\sharp e')$. A subset $X\subseteq \mathbb{E}$ is called consistent, if $e\frown e'$ for all
  $e,e'\in X$.
  \item $e,e'\in \mathbb{E}$ are concurrent, denoted as $e\parallel e'$, if $\neg(e\leq e')$, $\neg(e'\leq e)$, and $\neg(e\sharp e')$.
\end{enumerate}
\end{definition}

\begin{definition}[Configuration]
Let $\mathcal{E}$ be a PES. A (finite) configuration in $\mathcal{E}$ is a (finite) consistent subset of events $C\subseteq \mathcal{E}$, closed with respect to causality (i.e.
$\lceil C\rceil=C$), and a data state $s\in S$ with $S$ the set of all data states, denoted $\langle C, s\rangle$. The set of finite configurations of $\mathcal{E}$ is denoted by
$\langle\mathcal{C}(\mathcal{E}), S\rangle$. We let $\hat{C}=C\backslash\{\tau\}\cup\{\epsilon\}$.
\end{definition}

A consistent subset of $X\subseteq \mathbb{E}$ of events can be seen as a pomset. Given $X, Y\subseteq \mathbb{E}$, $\hat{X}\sim \hat{Y}$ if $\hat{X}$ and $\hat{Y}$ are isomorphic as
pomsets. In the following of the paper, we say $C_1\sim C_2$, we mean $\hat{C_1}\sim\hat{C_2}$.

\begin{definition}[FR pomset transitions and step]
Let $\mathcal{E}$ be a PES and let $C\in\mathcal{C}(\mathcal{E})$, and $\emptyset\neq X\subseteq \mathbb{E}$, if $C\cap X=\emptyset$ and $C'=C\cup X\in\mathcal{C}(\mathcal{E})$, then
$\langle C,s\rangle\xrightarrow{X} \langle C',s'\rangle$ is called a forward pomset transition from $\langle C,s\rangle$ to $\langle C',s'\rangle$ and
$\langle C',s'\rangle\xtworightarrow{X[\mathcal{K}]} \langle C,s\rangle$ is called a reverse pomset transition from $\langle C',s'\rangle$ to $\langle C,s\rangle$. When the events in
$X$ and $X[\mathcal{K}]$ are pairwise
concurrent, we say that $\langle C,s\rangle\xrightarrow{X}\langle C',s'\rangle$ is a forward step and $\langle C',s'\rangle\xrightarrow{X[\mathcal{K}]}\langle C,s\rangle$ is a reverse step.
It is obvious that $\rightarrow^*\xrightarrow{X}\rightarrow^*=\xrightarrow{X}$ and
$\rightarrow^*\xrightarrow{e}\rightarrow^*=\xrightarrow{e}$ for any $e\in\mathbb{E}$ and $X\subseteq\mathbb{E}$.
\end{definition}

\begin{definition}[Probabilistic transitions]
Let $\mathcal{E}$ be a PES and let $C\in\mathcal{C}(\mathcal{E})$, the transition $\langle C,s\rangle\xrsquigarrow{\pi} \langle C^{\pi},s\rangle$ is called a probabilistic
transition
from $\langle C,s\rangle$ to $\langle C^{\pi},s\rangle$.
\end{definition}

A probability distribution function (PDF) $\mu$ is a map $\mu:\mathcal{C}\times\mathcal{C}\rightarrow[0,1]$ and $\mu^*$ is the cumulative probability distribution function (cPDF).

\begin{definition}[FR strongly probabilistic pomset, step bisimilarity]
Let $\mathcal{E}_1$, $\mathcal{E}_2$ be PESs. A FR strongly probabilistic pomset bisimulation is a relation $R\subseteq\langle\mathcal{C}(\mathcal{E}_1),s\rangle\times\langle\mathcal{C}(\mathcal{E}_2),s\rangle$, 
such that (1) if $(\langle C_1,s\rangle,\langle C_2,s\rangle)\in R$, and $\langle C_1,s\rangle\xrightarrow{X_1}\langle C_1',s'\rangle$ (with $\mathcal{E}_1\xrightarrow{X_1}\mathcal{E}_1'$) then $\langle C_2,s\rangle\xrightarrow{X_2}\langle C_2',s'\rangle$ (with
$\mathcal{E}_2\xrightarrow{X_2}\mathcal{E}_2'$), with $X_1\subseteq \mathbb{E}_1$, $X_2\subseteq \mathbb{E}_2$, $X_1\sim X_2$ and $(\langle C_1',s'\rangle,\langle C_2',s'\rangle)\in R$:
\begin{enumerate}
  \item for each fresh action $\alpha\in X_1$, if $\langle C_1'',s''\rangle\xrightarrow{\alpha}\langle C_1''',s'''\rangle$ (with $\mathcal{E}_1''\xrightarrow{\alpha}\mathcal{E}_1'''$), 
  then for some $C_2''$ and $\langle C_2''',s'''\rangle$, $\langle C_2'',s''\rangle\xrightarrow{\alpha}\langle C_2''',s'''\rangle$ (with 
  $\mathcal{E}_2''\xrightarrow{\alpha}\mathcal{E}_2'''$), such that if $(\langle C_1'',s''\rangle,\langle C_2'',s''\rangle)\in R$ then $(\langle C_1''',s'''\rangle,\langle C_2''',s'''\rangle)\in R$;
  \item for each $x(y)\in X_1$ with ($y\notin n(\mathcal{E}_1, \mathcal{E}_2)$), if $\langle C_1'',s''\rangle\xrightarrow{x(y)}\langle C_1''',s'''\rangle$ (with 
  $\mathcal{E}_1''\xrightarrow{x(y)}\mathcal{E}_1'''\{w/y\}$) for all $w$, then for some $C_2''$ and $C_2'''$, $\langle C_2'',s''\rangle\xrightarrow{x(y)}\langle C_2''',s'''\rangle$ 
  (with $\mathcal{E}_2''\xrightarrow{x(y)}\mathcal{E}_2'''\{w/y\}$) for all $w$, such that if $(\langle C_1'',s''\rangle,\langle C_2'',s''\rangle)\in R$ then $(\langle C_1''',s'''\rangle,\langle C_2''',s'''\rangle)\in R$;
  \item for each two $x_1(y),x_2(y)\in X_1$ with ($y\notin n(\mathcal{E}_1, \mathcal{E}_2)$), if $\langle C_1'',s''\rangle\xrightarrow{\{x_1(y),x_2(y)\}}\langle C_1''',s'''\rangle$ 
  (with $\mathcal{E}_1''\xrightarrow{\{x_1(y),x_2(y)\}}\mathcal{E}_1'''\{w/y\}$) for all $w$, then for some $C_2''$ and $C_2'''$, 
  $\langle C_2'',s''\rangle\xrightarrow{\{x_1(y),x_2(y)\}}\langle C_2''',s'''\rangle$ (with $\mathcal{E}_2''\xrightarrow{\{x_1(y),x_2(y)\}}\mathcal{E}_2'''\{w/y\}$) for all $w$, such 
  that if $(\langle C_1'',s''\rangle,\langle C_2'',s''\rangle)\in R$ then $(\langle C_1''',s'''\rangle,\langle C_2''',s'''\rangle)\in R$;
  \item for each $\overline{x}(y)\in X_1$ with $y\notin n(\mathcal{E}_1, \mathcal{E}_2)$, if $\langle C_1'',s''\rangle\xrightarrow{\overline{x}(y)}\langle C_1''',s'''\rangle$ 
  (with $\mathcal{E}_1''\xrightarrow{\overline{x}(y)}\mathcal{E}_1'''$), then for some $C_2''$ and $C_2'''$, $\langle C_2'',s''\rangle\xrightarrow{\overline{x}(y)}\langle C_2''',s'''\rangle$ 
  (with $\mathcal{E}_2''\xrightarrow{\overline{x}(y)}\mathcal{E}_2'''$), such that if $(\langle C_1'',s''\rangle,\langle C_2'',s''\rangle)\in R$ then $(\langle C_1''',s'''\rangle,\langle C_2''',s'''\rangle)\in R$.
\end{enumerate}
 and vice-versa; (2) if $(\langle C_1,s\rangle,\langle C_2,s\rangle)\in R$, and $\langle C_1,s\rangle\xtworightarrow{X_1[\mathcal{K}_1]}\langle C_1',s'\rangle$ (with $\mathcal{E}_1\xtworightarrow{X_1[\mathcal{K}_1]}\mathcal{E}_1'$) then $\langle C_2,s\rangle\xtworightarrow{X_2[\mathcal{K}_2]}\langle C_2',s'\rangle$ (with
$\mathcal{E}_2\xtworightarrow{X_2[\mathcal{K}_2]}\mathcal{E}_2'$), with $X_1\subseteq \mathbb{E}_1$, $X_2\subseteq \mathbb{E}_2$, $X_1\sim X_2$ and $(\langle C_1',s'\rangle,\langle C_2',s'\rangle)\in R$:
\begin{enumerate}
  \item for each fresh action $\alpha\in X_1$, if $\langle C_1'',s''\rangle\xtworightarrow{\alpha[m]}\langle C_1''',s'''\rangle$ (with $\mathcal{E}_1''\xtworightarrow{\alpha[m]}\mathcal{E}_1'''$),
  then for some $C_2''$ and $\langle C_2''',s'''\rangle$, $\langle C_2'',s''\rangle\xtworightarrow{\alpha[m]}\langle C_2''',s'''\rangle$ (with
  $\mathcal{E}_2''\xtworightarrow{\alpha[m]}\mathcal{E}_2'''$), such that if $(\langle C_1'',s''\rangle,\langle C_2'',s''\rangle)\in R$ then $(\langle C_1''',s'''\rangle,\langle C_2''',s'''\rangle)\in R$;
  \item for each $x(y)\in X_1$ with ($y\notin n(\mathcal{E}_1, \mathcal{E}_2)$), if $\langle C_1'',s''\rangle\xtworightarrow{x(y)[m]}\langle C_1''',s'''\rangle$ (with
  $\mathcal{E}_1''\xtworightarrow{x(y)[m]}\mathcal{E}_1'''\{w/y\}$) for all $w$, then for some $C_2''$ and $C_2'''$, $\langle C_2'',s''\rangle\xtworightarrow{x(y)[m]}\langle C_2''',s'''\rangle$
  (with $\mathcal{E}_2''\xtworightarrow{x(y)[m]}\mathcal{E}_2'''\{w/y\}$) for all $w$, such that if $(\langle C_1'',s''\rangle,\langle C_2'',s''\rangle)\in R$ then $(\langle C_1''',s'''\rangle,\langle C_2''',s'''\rangle)\in R$;
  \item for each two $x_1(y),x_2(y)\in X_1$ with ($y\notin n(\mathcal{E}_1, \mathcal{E}_2)$), if $\langle C_1'',s''\rangle\xtworightarrow{\{x_1(y)[m],x_2(y)[n]\}}\langle C_1''',s'''\rangle$
  (with $\mathcal{E}_1''\xtworightarrow{\{x_1(y)[m],x_2(y)[n]\}}\mathcal{E}_1'''\{w/y\}$) for all $w$, then for some $C_2''$ and $C_2'''$,
  $\langle C_2'',s''\rangle\xtworightarrow{\{x_1(y)[m],x_2(y)[n]\}}\langle C_2''',s'''\rangle$ (with $\mathcal{E}_2''\xtworightarrow{\{x_1(y)[m],x_2(y)[n]\}}\mathcal{E}_2'''\{w/y\}$) for all $w$, such
  that if $(\langle C_1'',s''\rangle,\langle C_2'',s''\rangle)\in R$ then $(\langle C_1''',s'''\rangle,\langle C_2''',s'''\rangle)\in R$;
  \item for each $\overline{x}(y)\in X_1$ with $y\notin n(\mathcal{E}_1, \mathcal{E}_2)$, if $\langle C_1'',s''\rangle\xtworightarrow{\overline{x}(y)[m]}\langle C_1''',s'''\rangle$
  (with $\mathcal{E}_1''\xtworightarrow{\overline{x}(y)[m]}\mathcal{E}_1'''$), then for some $C_2''$ and $C_2'''$, $\langle C_2'',s''\rangle\xtworightarrow{\overline{x}(y)[m]}\langle C_2''',s'''\rangle$
  (with $\mathcal{E}_2''\xtworightarrow{\overline{x}(y)[m]}\mathcal{E}_2'''$), such that if $(\langle C_1'',s''\rangle,\langle C_2'',s''\rangle)\in R$ then $(\langle C_1''',s'''\rangle,\langle C_2''',s'''\rangle)\in R$.
\end{enumerate}
 and vice-versa;(3) if $(\langle C_1,s\rangle,\langle C_2,s\rangle)\in R$, and $\langle C_1,s\rangle\xrsquigarrow{\pi}\langle C_1^{\pi},s\rangle$ then 
 $\langle C_2,s\rangle\xrsquigarrow{\pi}\langle C_2^{\pi},s\rangle$ and $(\langle C_1^{\pi},s\rangle,\langle C_2^{\pi},s\rangle)\in R$, and vice-versa; (4) if $(\langle C_1,s\rangle,\langle C_2,s\rangle)\in R$,
then $\mu(C_1,C)=\mu(C_2,C)$ for each $C\in\mathcal{C}(\mathcal{E})/R$; (5) $[\surd]_R=\{\surd\}$.

We say that $\mathcal{E}_1$, $\mathcal{E}_2$ are FR strongly probabilistic pomset bisimilar, written $\mathcal{E}_1\sim_{pp}^{fr}\mathcal{E}_2$, if there exists a FR strongly probabilistic pomset
bisimulation $R$, such that $(\emptyset,\emptyset)\in R$. By replacing FR probabilistic pomset transitions with steps, we can get the definition of FR strongly probabilistic step bisimulation.
When PESs $\mathcal{E}_1$ and $\mathcal{E}_2$ are FR strongly probabilistic step bisimilar, we write $\mathcal{E}_1\sim_{ps}^{fr}\mathcal{E}_2$.
\end{definition}

\begin{definition}[Posetal product]
Given two PESs $\mathcal{E}_1$, $\mathcal{E}_2$, the posetal product of their configurations, denoted
$\langle\mathcal{C}(\mathcal{E}_1),S\rangle\overline{\times}\langle\mathcal{C}(\mathcal{E}_2),S\rangle$, is defined as

$$\{(\langle C_1,s\rangle,f,\langle C_2,s\rangle)|C_1\in\mathcal{C}(\mathcal{E}_1),C_2\in\mathcal{C}(\mathcal{E}_2),f:C_1\rightarrow C_2 \textrm{ isomorphism}\}.$$

A subset $R\subseteq\langle\mathcal{C}(\mathcal{E}_1),S\rangle\overline{\times}\langle\mathcal{C}(\mathcal{E}_2),S\rangle$ is called a posetal relation. We say that $R$ is downward
closed when for any
$(\langle C_1,s\rangle,f,\langle C_2,s\rangle),(\langle C_1',s'\rangle,f',\langle C_2',s'\rangle)\in \langle\mathcal{C}(\mathcal{E}_1),S\rangle\overline{\times}\langle\mathcal{C}(\mathcal{E}_2),S\rangle$,
if $(\langle C_1,s\rangle,f,\langle C_2,s\rangle)\subseteq (\langle C_1',s'\rangle,f',\langle C_2',s'\rangle)$ pointwise and $(\langle C_1',s'\rangle,f',\langle C_2',s'\rangle)\in R$,
then $(\langle C_1,s\rangle,f,\langle C_2,s\rangle)\in R$.

For $f:X_1\rightarrow X_2$, we define $f[x_1\mapsto x_2]:X_1\cup\{x_1\}\rightarrow X_2\cup\{x_2\}$, $z\in X_1\cup\{x_1\}$,(1)$f[x_1\mapsto x_2](z)=
x_2$,if $z=x_1$;(2)$f[x_1\mapsto x_2](z)=f(z)$, otherwise. Where $X_1\subseteq \mathbb{E}_1$, $X_2\subseteq \mathbb{E}_2$, $x_1\in \mathbb{E}_1$, $x_2\in \mathbb{E}_2$.
\end{definition}

\begin{definition}[FR strongly probabilistic (hereditary) history-preserving bisimilarity]
A FR strongly probabilistic history-preserving (hp-) bisimulation is a posetal relation $R\subseteq\mathcal{C}(\mathcal{E}_1)\overline{\times}\mathcal{C}(\mathcal{E}_2)$ such that 
(1) if $(\langle C_1,s\rangle,f,\langle C_2,s\rangle)\in R$, and
\begin{enumerate}
  \item for $e_1=\alpha$ a fresh action, if $\langle C_1,s\rangle\xrightarrow{\alpha}\langle C_1',s'\rangle$ (with $\mathcal{E}_1\xrightarrow{\alpha}\mathcal{E}_1'$), then for some 
  $C_2'$ and $e_2=\alpha$, $\langle C_2,s\rangle\xrightarrow{\alpha}\langle C_2',s'\rangle$ (with $\mathcal{E}_2\xrightarrow{\alpha}\mathcal{E}_2'$), such that 
  $(\langle C_1',s'\rangle,f[e_1\mapsto e_2],\langle C_2',s'\rangle)\in R$;
  \item for $e_1=x(y)$ with ($y\notin n(\mathcal{E}_1, \mathcal{E}_2)$), if $\langle C_1,s\rangle\xrightarrow{x(y)}\langle C_1',s'\rangle$ (with 
  $\mathcal{E}_1\xrightarrow{x(y)}\mathcal{E}_1'\{w/y\}$) for all $w$, then for some $C_2'$ and $e_2=x(y)$, $\langle C_2,s\rangle\xrightarrow{x(y)}\langle C_2',s'\rangle$ (with 
  $\mathcal{E}_2\xrightarrow{x(y)}\mathcal{E}_2'\{w/y\}$) for all $w$, such that $(\langle C_1',s'\rangle,f[e_1\mapsto e_2],\langle C_2',s'\rangle)\in R$;
  \item for $e_1=\overline{x}(y)$ with $y\notin n(\mathcal{E}_1, \mathcal{E}_2)$, if $\langle C_1,s\rangle\xrightarrow{\overline{x}(y)}\langle C_1',s'\rangle$ (with 
  $\mathcal{E}_1\xrightarrow{\overline{x}(y)}\mathcal{E}_1'$), then for some $C_2'$ and $e_2=\overline{x}(y)$, $\langle C_2,s\rangle\xrightarrow{\overline{x}(y)}\langle C_2',s'\rangle$ 
  (with $\mathcal{E}_2\xrightarrow{\overline{x}(y)}\mathcal{E}_2'$), such that $(\langle C_1',s'\rangle,f[e_1\mapsto e_2],\langle C_2',s'\rangle)\in R$.
\end{enumerate}
and vice-versa; (2) if $(\langle C_1,s\rangle,f,\langle C_2,s\rangle)\in R$, and
\begin{enumerate}
  \item for $e_1=\alpha$ a fresh action, if $\langle C_1,s\rangle\xtworightarrow{\alpha[m]}\langle C_1',s'\rangle$ (with $\mathcal{E}_1\xtworightarrow{\alpha[m]}\mathcal{E}_1'$), then for some
  $C_2'$ and $e_2=\alpha$, $\langle C_2,s\rangle\xtworightarrow{\alpha[m]}\langle C_2',s'\rangle$ (with $\mathcal{E}_2\xtworightarrow{\alpha[m]}\mathcal{E}_2'$), such that
  $(\langle C_1',s'\rangle,f[e_1\mapsto e_2],\langle C_2',s'\rangle)\in R$;
  \item for $e_1=x(y)$ with ($y\notin n(\mathcal{E}_1, \mathcal{E}_2)$), if $\langle C_1,s\rangle\xtworightarrow{x(y)[m]}\langle C_1',s'\rangle$ (with
  $\mathcal{E}_1\xtworightarrow{x(y)[m]}\mathcal{E}_1'\{w/y\}$) for all $w$, then for some $C_2'$ and $e_2=x(y)$, $\langle C_2,s\rangle\xtworightarrow{x(y)[m]}\langle C_2',s'\rangle$ (with
  $\mathcal{E}_2\xtworightarrow{x(y)[m]}\mathcal{E}_2'\{w/y\}$) for all $w$, such that $(\langle C_1',s'\rangle,f[e_1\mapsto e_2],\langle C_2',s'\rangle)\in R$;
  \item for $e_1=\overline{x}(y)$ with $y\notin n(\mathcal{E}_1, \mathcal{E}_2)$, if $\langle C_1,s\rangle\xtworightarrow{\overline{x}(y)[m]}\langle C_1',s'\rangle$ (with
  $\mathcal{E}_1\xtworightarrow{\overline{x}(y)[m]}\mathcal{E}_1'$), then for some $C_2'$ and $e_2=\overline{x}(y)$, $\langle C_2,s\rangle\xtworightarrow{\overline{x}(y)[m]}\langle C_2',s'\rangle$
  (with $\mathcal{E}_2\xtworightarrow{\overline{x}(y)[m]}\mathcal{E}_2'$), such that $(\langle C_1',s'\rangle,f[e_1\mapsto e_2],\langle C_2',s'\rangle)\in R$.
\end{enumerate}
and vice-versa; (3) if $(\langle C_1,s\rangle,f,\langle C_2,s\rangle)\in R$, and $\langle C_1,s\rangle\xrsquigarrow{\pi}\langle C_1^{\pi},s\rangle$ then 
$\langle C_2,s\rangle\xrsquigarrow{\pi}\langle C_2^{\pi},s\rangle$ and $(\langle C_1^{\pi},s\rangle,f,\langle C_2^{\pi},s\rangle)\in R$, and vice-versa; (4) if
$(\langle C_1,s\rangle,f,\langle C_2,s\rangle)\in R$, then $\mu(C_1,C)=\mu(C_2,C)$ for each $C\in\mathcal{C}(\mathcal{E})/R$; (5) $[\surd]_R=\{\surd\}$. $\mathcal{E}_1,\mathcal{E}_2$ 
are FR strongly probabilistic history-preserving (hp-)bisimilar and are written $\mathcal{E}_1\sim_{php}^{fr}\mathcal{E}_2$ if there exists a FR strongly probabilistic hp-bisimulation 
$R$ such that $(\emptyset,\emptyset,\emptyset)\in R$.

A FR strongly probabilistic hereditary history-preserving (hhp-)bisimulation is a downward closed FR strongly probabilistic hp-bisimulation. $\mathcal{E}_1,\mathcal{E}_2$ are FR 
strongly probabilistic hereditary history-preserving (hhp-)bisimilar and are written $\mathcal{E}_1\sim_{phhp}^{fr}\mathcal{E}_2$.
\end{definition}

\subsection{Syntax and Operational Semantics}\label{sos8}

We assume an infinite set $\mathcal{N}$ of (action or event) names, and use $a,b,c,\cdots$ to range over $\mathcal{N}$, use $x,y,z,w,u,v$ as meta-variables over names. We denote by
$\overline{\mathcal{N}}$ the set of co-names and let $\overline{a},\overline{b},\overline{c},\cdots$ range over $\overline{\mathcal{N}}$. Then we set
$\mathcal{L}=\mathcal{N}\cup\overline{\mathcal{N}}$ as the set of labels, and use $l,\overline{l}$ to range over $\mathcal{L}$. We extend complementation to $\mathcal{L}$ such that
$\overline{\overline{a}}=a$. Let $\tau$ denote the silent step (internal action or event) and define $Act=\mathcal{L}\cup\{\tau\}$ to be the set of actions, $\alpha,\beta$ range over
$Act$. And $K,L$ are used to stand for subsets of $\mathcal{L}$ and $\overline{L}$ is used for the set of complements of labels in $L$.

Further, we introduce a set $\mathcal{X}$ of process variables, and a set $\mathcal{K}$ of process constants, and let $X,Y,\cdots$ range over $\mathcal{X}$, and $A,B,\cdots$ range over
$\mathcal{K}$. For each process constant $A$, a nonnegative arity $ar(A)$ is assigned to it. Let $\widetilde{x}=x_1,\cdots,x_{ar(A)}$ be a tuple of distinct name variables, then
$A(\widetilde{x})$ is called a process constant. $\widetilde{X}$ is a tuple of distinct process variables, and also $E,F,\cdots$ range over the recursive expressions. We write
$\mathcal{P}$ for the set of processes. Sometimes, we use $I,J$ to stand for an indexing set, and we write $E_i:i\in I$ for a family of expressions indexed by $I$. $Id_D$ is the
identity function or relation over set $D$. The symbol $\equiv_{\alpha}$ denotes equality under standard alpha-convertibility, note that the subscript $\alpha$ has no relation to the
action $\alpha$.

Let $G_{at}$ be the set of atomic guards, $\delta$ be the deadlock constant, and $\epsilon$ be the empty action, and extend $Act$ to $Act\cup\{\epsilon\}\cup\{\delta\}$. We extend
$G_{at}$ to the set of basic guards $G$ with element $\phi,\psi,\cdots$, which is generated by the following formation rules:

$$\phi::=\delta|\epsilon|\neg\phi|\psi\in G_{at}|\phi+\psi|\phi\cdot\psi$$

The predicate $test(\phi,s)$ represents that $\phi$ holds in the state $s$, and $test(\epsilon,s)$ holds and $test(\delta,s)$ does not hold. $effect(e,s)\in S$ denotes $s'$ in
$s\xrightarrow{e}s'$. The predicate weakest precondition $wp(e,\phi)$ denotes that $\forall s,s'\in S, test(\phi,effect(e,s))$ holds.

\subsubsection{Syntax}

We use the Prefix $.$ to model the causality relation $\leq$ in true concurrency, the Summation $+$ to model the conflict relation $\sharp$, and $\boxplus_{\pi}$ to model the probabilistic
conflict relation $\sharp_{\pi}$ in probabilistic true concurrency, and the Composition $\parallel$ to explicitly model concurrent relation in true concurrency. And we follow the
conventions of process algebra.

\begin{definition}[Syntax]\label{syntax8}
A truly concurrent process $\pi_{tc}$ with reversibility, probabilism and guards is defined inductively by the following formation rules:

\begin{enumerate}
  \item $A(\widetilde{x})\in\mathcal{P}$;
  \item $\phi\in\mathcal{P}$;
  \item $\textbf{nil}\in\mathcal{P}$;
  \item if $P\in\mathcal{P}$, then the Prefix $\tau.P\in\mathcal{P}$, for $\tau\in Act$ is the silent action;
  \item if $P\in\mathcal{P}$, then the Prefix $\phi.P\in\mathcal{P}$, for $\phi\in G_{at}$;
  \item if $P\in\mathcal{P}$, then the Output $\overline{x}y.P\in\mathcal{P}$, for $x,y\in Act$;
  \item if $P\in\mathcal{P}$, then the Output $P.\overline{x}y[m]\in\mathcal{P}$, for $x,y\in Act$;
  \item if $P\in\mathcal{P}$, then the Input $x(y).P\in\mathcal{P}$, for $x,y\in Act$;
  \item if $P\in\mathcal{P}$, then the Input $P.x(y)[m]\in\mathcal{P}$, for $x,y\in Act$;
  \item if $P\in\mathcal{P}$, then the Restriction $(x)P\in\mathcal{P}$, for $x\in Act$;
  \item if $P,Q\in\mathcal{P}$, then the Summation $P+Q\in\mathcal{P}$;
  \item if $P,Q\in\mathcal{P}$, then the Summation $P\boxplus_{\pi}Q\in\mathcal{P}$;
  \item if $P,Q\in\mathcal{P}$, then the Composition $P\parallel Q\in\mathcal{P}$;
\end{enumerate}

The standard BNF grammar of syntax of $\pi_{tc}$ with reversibility, probabilism and guards can be summarized as follows:

$$P::=A(\widetilde{x})|\textbf{nil}|\tau.P| \overline{x}y.P | x(y).P|\overline{x}y[m].P | x(y)[m].P | (x)P  |\phi.P|  P+P| P\boxplus_{\pi}P | P\parallel P.$$
\end{definition}

In $\overline{x}y$, $x(y)$ and $\overline{x}(y)$, $x$ is called the subject, $y$ is called the object and it may be free or bound.

\begin{definition}[Free variables]
The free names of a process $P$, $fn(P)$, are defined as follows.

\begin{enumerate}
  \item $fn(A(\widetilde{x}))\subseteq\{\widetilde{x}\}$;
  \item $fn(\textbf{nil})=\emptyset$;
  \item $fn(\tau.P)=fn(P)$;
  \item $fn(\phi.P)=fn(P)$;
  \item $fn(\overline{x}y.P)=fn(P)\cup\{x\}\cup\{y\}$;
  \item $fn(\overline{x}y[m].P)=fn(P)\cup\{x\}\cup\{y\}$;
  \item $fn(x(y).P)=fn(P)\cup\{x\}-\{y\}$;
  \item $fn(x(y)[m].P)=fn(P)\cup\{x\}-\{y\}$;
  \item $fn((x)P)=fn(P)-\{x\}$;
  \item $fn(P+Q)=fn(P)\cup fn(Q)$;
  \item $fn(P\boxplus_{\pi}Q)=fn(P)\cup fn(Q)$;
  \item $fn(P\parallel Q)=fn(P)\cup fn(Q)$.
\end{enumerate}
\end{definition}

\begin{definition}[Bound variables]
Let $n(P)$ be the names of a process $P$, then the bound names $bn(P)=n(P)-fn(P)$.
\end{definition}

For each process constant schema $A(\widetilde{x})$, a defining equation of the form

$$A(\widetilde{x})\overset{\text{def}}{=}P$$

is assumed, where $P$ is a process with $fn(P)\subseteq \{\widetilde{x}\}$.

\begin{definition}[Substitutions]\label{subs8}
A substitution is a function $\sigma:\mathcal{N}\rightarrow\mathcal{N}$. For $x_i\sigma=y_i$ with $1\leq i\leq n$, we write $\{y_1/x_1,\cdots,y_n/x_n\}$ or
$\{\widetilde{y}/\widetilde{x}\}$ for $\sigma$. For a process $P\in\mathcal{P}$, $P\sigma$ is defined inductively as follows:
\begin{enumerate}
  \item if $P$ is a process constant $A(\widetilde{x})=A(x_1,\cdots,x_n)$, then $P\sigma=A(x_1\sigma,\cdots,x_n\sigma)$;
  \item if $P=\textbf{nil}$, then $P\sigma=\textbf{nil}$;
  \item if $P=\tau.P'$, then $P\sigma=\tau.P'\sigma$;
  \item if $P=\phi.P'$, then $P\sigma=\phi.P'\sigma$;
  \item if $P=\overline{x}y.P'$, then $P\sigma=\overline{x\sigma}y\sigma.P'\sigma$;
  \item if $P=\overline{x}y[m].P'$, then $P\sigma=\overline{x\sigma}y\sigma[m].P'\sigma$;
  \item if $P=x(y).P'$, then $P\sigma=x\sigma(y).P'\sigma$;
  \item if $P=x(y)[m].P'$, then $P\sigma=x\sigma(y)[m].P'\sigma$;
  \item if $P=(x)P'$, then $P\sigma=(x\sigma)P'\sigma$;
  \item if $P=P_1+P_2$, then $P\sigma=P_1\sigma+P_2\sigma$;
  \item if $P=P_1\boxplus_{\pi}P_2$, then $P\sigma=P_1\sigma\boxplus_{\pi}P_2\sigma$;
  \item if $P=P_1\parallel P_2$, then $P\sigma=P_1\sigma \parallel P_2\sigma$.
\end{enumerate}
\end{definition}

\subsubsection{Operational Semantics}

The operational semantics is defined by LTSs (labelled transition systems), and it is detailed by the following definition.

\begin{definition}[Semantics]\label{semantics8}
The operational semantics of $\pi_{tc}$ with reversibility, probabilism and guards corresponding to the syntax in Definition \ref{syntax8} is defined by a series of transition rules, named $\textbf{PACT}$, $\textbf{PSUM}$, $\textbf{PBOX-SUM}$,
$\textbf{PIDE}$, $\textbf{PPAR}$, $\textbf{PRES}$ and named $\textbf{ACT}$, $\textbf{SUM}$,
$\textbf{IDE}$, $\textbf{PAR}$, $\textbf{COM}$, $\textbf{CLOSE}$, $\textbf{RES}$, $\textbf{OPEN}$ indicate that the rules are associated respectively with Prefix, Summation, Box-Summation,
Identity, Parallel Composition, Communication, and Restriction in Definition \ref{syntax8}. They are shown in Table \ref{PTRForPITC8} and \ref{TRForPITC8}.

\begin{center}
    \begin{table}
        \[\textbf{PTAU-ACT}\quad \frac{}{\langle \tau.P,s\rangle\rightsquigarrow \langle\breve{\tau}.P,s\rangle}\] 
        
        \[\textbf{POUTPUT-ACT}\quad \frac{}{\langle\overline{x}y.P,s\rangle\rightsquigarrow \langle\breve{\overline{x}y}.P,s\rangle}\]

        \[\textbf{PINPUT-ACT}\quad \frac{}{\langle x(z).P,s\rangle\rightsquigarrow \langle\breve{x(z)}.P,s\rangle}\]

        \[\textbf{PPAR}\quad \frac{\langle P,s\rangle\rightsquigarrow \langle P',s\rangle\quad \langle Q,s\rangle\rightsquigarrow \langle Q',s\rangle}{\langle P\parallel Q,s\rangle\rightsquigarrow \langle P'\parallel Q',s\rangle}\]

        \[\textbf{PSUM}\quad \frac{\langle P,s\rangle\rightsquigarrow \langle P',s\rangle\quad \langle Q,s\rangle\rightsquigarrow \langle Q',s\rangle}{\langle P+Q,s\rangle\rightsquigarrow \langle P'+Q',s\rangle}\]

        \[\textbf{PBOX-SUM}\quad \frac{\langle P,s\rangle\rightsquigarrow \langle P',s\rangle}{\langle P\boxplus_{\pi}Q,s\rangle\rightsquigarrow \langle P',s\rangle}\]

        \[\textbf{PIDE}\quad\frac{\langle P\{\widetilde{y}/\widetilde{x}\},s\rangle\rightsquigarrow \langle P',s\rangle}{\langle A(\widetilde{y}),s\rangle\rightsquigarrow \langle P',s\rangle}\quad (A(\widetilde{x})\overset{\text{def}}{=}P)\]

        \[\textbf{PRES}\quad \frac{\langle P,s\rangle\rightsquigarrow \langle P',s\rangle}{\langle (y)P,s\rangle\rightsquigarrow \langle (y)P',s\rangle}\quad (y\notin n(\alpha))\]

        \caption{Probabilistic transition rules}
        \label{PTRForPITC8}
    \end{table}
\end{center}

\begin{center}
    \begin{table}
        \[\textbf{TAU-ACT}\quad \frac{}{\langle\breve{\tau}.P,s\rangle\xrightarrow{\tau}\langle P,\tau(s)\rangle}\] 
        
        \[\textbf{OUTPUT-ACT}\quad \frac{}{\langle\breve{\overline{x}y}.P,s\rangle\xrightarrow{\overline{x}y}\langle P,s'\rangle}\]

        \[\textbf{INPUT-ACT}\quad \frac{}{\langle\breve{x(z)}.P,s\rangle\xrightarrow{x(w)}\langle P\{w/z\},s'\rangle}\quad (w\notin fn((z)P))\]

        \[\textbf{PAR}_1\quad \frac{\langle P,s\rangle\xrightarrow{\alpha}\langle P',s'\rangle\quad \langle Q,s\rangle\nrightarrow}{\langle P\parallel Q,s\rangle\xrightarrow{\alpha}\langle P'\parallel Q,s'\rangle}\quad (bn(\alpha)\cap fn(Q)=\emptyset)\]
        
        \[\textbf{PAR}_2\quad \frac{\langle Q,s\rangle\xrightarrow{\alpha}\langle Q',s'\rangle\quad \langle P,s\rangle\nrightarrow}{\langle P\parallel Q,s\rangle\xrightarrow{\alpha}\langle P\parallel Q',s'\rangle}\quad (bn(\alpha)\cap fn(P)=\emptyset)\]

        \[\textbf{PAR}_3\quad \frac{\langle P,s\rangle\xrightarrow{\alpha}\langle P',s'\rangle\quad \langle Q,s\rangle\xrightarrow{\beta}\langle Q',s''\rangle}{\langle P\parallel Q,s\rangle\xrightarrow{\{\alpha,\beta\}}\langle P'\parallel Q',s'\cup s''\rangle}\] $(\beta\neq\overline{\alpha}, bn(\alpha)\cap bn(\beta)=\emptyset, bn(\alpha)\cap fn(Q)=\emptyset,bn(\beta)\cap fn(P)=\emptyset)$

        \[\textbf{PAR}_4\quad \frac{\langle P,s\rangle\xrightarrow{x_1(z)}\langle P',s'\rangle\quad \langle Q,s\rangle\xrightarrow{x_2(z)}\langle Q',s''\rangle}{\langle P\parallel Q,s\rangle\xrightarrow{\{x_1(w),x_2(w)\}}\langle P'\{w/z\}\parallel Q'\{w/z\},s'\cup s''\rangle}\quad (w\notin fn((z)P)\cup fn((z)Q))\]

        \[\textbf{COM}\quad \frac{\langle P,s\rangle\xrightarrow{\overline{x}y}\langle P',s'\rangle\quad \langle Q,s\rangle\xrightarrow{x(z)}\langle Q',s''\rangle}{\langle P\parallel Q,s\rangle\xrightarrow{\tau}\langle P'\parallel Q'\{y/z\},s'\cup s''\rangle}\]

        \[\textbf{CLOSE}\quad \frac{\langle P,s\rangle\xrightarrow{\overline{x}(w)}\langle P',s'\rangle\quad \langle Q,s\rangle\xrightarrow{x(w)}\langle Q',s''\rangle}{\langle P\parallel Q,s\rangle\xrightarrow{\tau}\langle (w)(P'\parallel Q'),s'\cup s''\rangle}\]

        \caption{Forward action transition rules}
        \label{TRForPITC8}
    \end{table}
\end{center}

\begin{center}
    \begin{table}
        \[\textbf{SUM}_1\quad \frac{\langle P,s\rangle\xrightarrow{\alpha}\langle P',s'\rangle}{\langle P+Q,s\rangle\xrightarrow{\alpha}\langle P',s'\rangle}\]

        \[\textbf{SUM}_2\quad \frac{\langle P,s\rangle\xrightarrow{\{\alpha_1,\cdots,\alpha_n\}}\langle P',s'\rangle}{\langle P+Q,s\rangle\xrightarrow{\{\alpha_1,\cdots,\alpha_n\}}\langle P',s'\rangle}\]

        \[\textbf{IDE}_1\quad\frac{\langle P\{\widetilde{y}/\widetilde{x}\},s\rangle\xrightarrow{\alpha}\langle P',s'\rangle}{\langle A(\widetilde{y}),s\rangle\xrightarrow{\alpha}\langle P',s'\rangle}\quad (A(\widetilde{x})\overset{\text{def}}{=}P)\]

        \[\textbf{IDE}_2\quad\frac{\langle P\{\widetilde{y}/\widetilde{x}\},s\rangle\xrightarrow{\{\alpha_1,\cdots,\alpha_n\}}\langle P',s'\rangle} {\langle A(\widetilde{y}),s\rangle\xrightarrow{\{\alpha_1,\cdots,\alpha_n\}}\langle P',s'\rangle}\quad (A(\widetilde{x})\overset{\text{def}}{=}P)\]

        \[\textbf{RES}_1\quad \frac{\langle P,s\rangle\xrightarrow{\alpha}\langle P',s'\rangle}{\langle (y)P,s\rangle\xrightarrow{\alpha}\langle (y)P',s'\rangle}\quad (y\notin n(\alpha))\]

        \[\textbf{RES}_2\quad \frac{\langle P,s\rangle\xrightarrow{\{\alpha_1,\cdots,\alpha_n\}}\langle P',s'\rangle}{\langle (y)P,s\rangle\xrightarrow{\{\alpha_1,\cdots,\alpha_n\}}\langle (y)P',s'\rangle}\quad (y\notin n(\alpha_1)\cup\cdots\cup n(\alpha_n))\]

        \[\textbf{OPEN}_1\quad \frac{\langle P,s\rangle\xrightarrow{\overline{x}y}\langle P',s'\rangle}{\langle (y)P,s\rangle\xrightarrow{\overline{x}(w)}\langle P'\{w/y\},s'\rangle} \quad (y\neq x, w\notin fn((y)P'))\]

        \[\textbf{OPEN}_2\quad \frac{\langle P,s\rangle\xrightarrow{\{\overline{x}_1 y,\cdots,\overline{x}_n y\}}\langle P',s'\rangle}{\langle(y)P,s\rangle\xrightarrow{\{\overline{x}_1(w),\cdots,\overline{x}_n(w)\}}\langle P'\{w/y\},s'\rangle} \quad (y\neq x_1\neq\cdots\neq x_n, w\notin fn((y)P'))\]

        \caption{Forward action transition rules (continuing)}
        \label{TRForPITC82}
    \end{table}
\end{center}

\begin{center}
    \begin{table}
        \[\textbf{RTAU-ACT}\quad \frac{}{\langle\breve{\tau}.P,s\rangle\xtworightarrow{\tau}\langle P,\tau(s)\rangle}\]

        \[\textbf{ROUTPUT-ACT}\quad \frac{}{\langle\breve{\overline{x}y}[m].P,s\rangle\xtworightarrow{\overline{x}y[m]}\langle P,s'\rangle}\]

        \[\textbf{RINPUT-ACT}\quad \frac{}{\langle\breve{x(z)}[m].P,s\rangle\xtworightarrow{x(w)[m]}\langle P\{w/z\},s'\rangle}\quad (w\notin fn((z)P))\]

        \[\textbf{RPAR}_1\quad \frac{\langle P,s\rangle\xtworightarrow{\alpha[m]}\langle P',s'\rangle\quad \langle Q,s\rangle\nrightarrow}{\langle P\parallel Q,s\rangle\xtworightarrow{\alpha[m]}\langle P'\parallel Q,s'\rangle}\quad (bn(\alpha)\cap fn(Q)=\emptyset)\]
        
        \[\textbf{RPAR}_2\quad \frac{\langle Q,s\rangle\xtworightarrow{\alpha[m]}\langle Q',s'\rangle\quad \langle P,s\rangle\nrightarrow}{\langle P\parallel Q,s\rangle\xtworightarrow{\alpha[m]}\langle P\parallel Q',s'\rangle}\quad (bn(\alpha)\cap fn(P)=\emptyset)\]

        \[\textbf{RPAR}_3\quad \frac{\langle P,s\rangle\xtworightarrow{\alpha[m]}\langle P',s'\rangle\quad \langle Q,s\rangle\xtworightarrow{\beta[m]}\langle Q',s''\rangle}{\langle P\parallel Q,s\rangle\xtworightarrow{\{\alpha[m],\beta[m]\}}\langle P'\parallel Q',s'\cup s''\rangle}\] $(\beta\neq\overline{\alpha}, bn(\alpha)\cap bn(\beta)=\emptyset, bn(\alpha)\cap fn(Q)=\emptyset,bn(\beta)\cap fn(P)=\emptyset)$

        \[\textbf{RPAR}_4\quad \frac{\langle P,s\rangle\xtworightarrow{x_1(z)[m]}\langle P',s'\rangle\quad \langle Q,s\rangle\xtworightarrow{x_2(z)[m]}\langle Q',s''\rangle}{\langle P\parallel Q,s\rangle\xtworightarrow{\{x_1(w)[m],x_2(w)[m]\}}\langle P'\{w/z\}\parallel Q'\{w/z\},s'\cup s''\rangle}\quad (w\notin fn((z)P)\cup fn((z)Q))\]

        \[\textbf{RCOM}\quad \frac{\langle P,s\rangle\xtworightarrow{\overline{x}y[m]}\langle P',s'\rangle\quad \langle Q,s\rangle\xtworightarrow{x(z)[m]}\langle Q',s''\rangle}{\langle P\parallel Q,s\rangle\xtworightarrow{\tau}\langle P'\parallel Q'\{y/z\},s'\cup s''\rangle}\]

        \[\textbf{RCLOSE}\quad \frac{\langle P,s\rangle\xtworightarrow{\overline{x}(w)[m]}\langle P',s'\rangle\quad \langle Q,s\rangle\xtworightarrow{x(w)[m]}\langle Q',s''\rangle}{\langle P\parallel Q,s\rangle\xtworightarrow{\tau}\langle (w)(P'\parallel Q'),s'\cup s''\rangle}\]

        \caption{Reverse action transition rules}
        \label{TRForPITC83}
    \end{table}
\end{center}

\begin{center}
    \begin{table}
        \[\textbf{RSUM}_1\quad \frac{\langle P,s\rangle\xtworightarrow{\alpha[m]}\langle P',s'\rangle}{\langle P+Q,s\rangle\xtworightarrow{\alpha[m]}\langle P',s'\rangle}\]

        \[\textbf{RSUM}_2\quad \frac{\langle P,s\rangle\xtworightarrow{\{\alpha_1[m],\cdots,\alpha_n[m]\}}\langle P',s'\rangle}{\langle P+Q,s\rangle\xtworightarrow{\{\alpha_1[m],\cdots,\alpha_n[m]\}}\langle P',s'\rangle}\]

        \[\textbf{RIDE}_1\quad\frac{\langle P\{\widetilde{y}/\widetilde{x}\},s\rangle\xtworightarrow{\alpha[m]}\langle P',s'\rangle}{\langle A(\widetilde{y}),s\rangle\xtworightarrow{\alpha[m]}\langle P',s'\rangle}\quad (A(\widetilde{x})\overset{\text{def}}{=}P)\]

        \[\textbf{RIDE}_2\quad\frac{\langle P\{\widetilde{y}/\widetilde{x}\},s\rangle\xtworightarrow{\{\alpha_1[m],\cdots,\alpha_n[m]\}}\langle P',s'\rangle} {\langle A(\widetilde{y}),s\rangle\xtworightarrow{\{\alpha_1[m],\cdots,\alpha_n[m]\}}\langle P',s'\rangle}\quad (A(\widetilde{x})\overset{\text{def}}{=}P)\]

        \[\textbf{RRES}_1\quad \frac{\langle P,s\rangle\xtworightarrow{\alpha[m]}\langle P',s'\rangle}{\langle (y)P,s\rangle\xtworightarrow{\alpha[m]}\langle (y)P',s'\rangle}\quad (y\notin n(\alpha))\]

        \[\textbf{RRES}_2\quad \frac{\langle P,s\rangle\xtworightarrow{\{\alpha_1[m],\cdots,\alpha_n[m]\}}\langle P',s'\rangle}{\langle (y)P,s\rangle\xtworightarrow{\{\alpha_1[m],\cdots,\alpha_n[m]\}}\langle (y)P',s'\rangle}\quad (y\notin n(\alpha_1)\cup\cdots\cup n(\alpha_n))\]

        \[\textbf{ROPEN}_1\quad \frac{\langle P,s\rangle\xtworightarrow{\overline{x}y[m]}\langle P',s'\rangle}{\langle (y)P,s\rangle\xtworightarrow{\overline{x}(w)[m]}\langle P'\{w/y\},s'\rangle} \quad (y\neq x, w\notin fn((y)P'))\]

        \[\textbf{ROPEN}_2\quad \frac{\langle P,s\rangle\xtworightarrow{\{\overline{x}_1 y[m],\cdots,\overline{x}_n y[m]\}}\langle P',s'\rangle}{\langle(y)P,s\rangle\xtworightarrow{\{\overline{x}_1(w)[m],\cdots,\overline{x}_n(w)[m]\}}\langle P'\{w/y\},s'\rangle} \quad (y\neq x_1\neq\cdots\neq x_n, w\notin fn((y)P'))\]

        \caption{Reverse action transition rules (continuing)}
        \label{TRForPITC84}
    \end{table}
\end{center}
\end{definition}

\subsubsection{Properties of Transitions}

\begin{proposition}
\begin{enumerate}
  \item If $\langle P,s\rangle\xrightarrow{\alpha}\langle P',s'\rangle$ then
  \begin{enumerate}
    \item $fn(\alpha)\subseteq fn(P)$;
    \item $fn(P')\subseteq fn(P)\cup bn(\alpha)$;
  \end{enumerate}
  \item If $\langle P,s\rangle\xrightarrow{\{\alpha_1,\cdots,\alpha_n\}}\langle P',s\rangle$ then
  \begin{enumerate}
    \item $fn(\alpha_1)\cup\cdots\cup fn(\alpha_n)\subseteq fn(P)$;
    \item $fn(P')\subseteq fn(P)\cup bn(\alpha_1)\cup\cdots\cup bn(\alpha_n)$.
  \end{enumerate}
\end{enumerate}
\end{proposition}

\begin{proof}
By induction on the depth of inference.
\end{proof}

\begin{proposition}
Suppose that $\langle P,s\rangle\xrightarrow{\alpha(y)}\langle P',s'\rangle$, where $\alpha=x$ or $\alpha=\overline{x}$, and $x\notin n(P)$, then there exists some $P''\equiv_{\alpha}P'\{z/y\}$,
$\langle P,s\rangle\xrightarrow{\alpha(z)}\langle P'',s''\rangle$.
\end{proposition}

\begin{proof}
By induction on the depth of inference.
\end{proof}

\begin{proposition}
If $\langle P,s\rangle\xrightarrow{\alpha} \langle P',s'\rangle$, $bn(\alpha)\cap fn(P'\sigma)=\emptyset$, and $\sigma\lceil bn(\alpha)=id$, then there exists some $P''\equiv_{\alpha}P'\sigma$,
$\langle P,s\rangle\sigma\xrightarrow{\alpha\sigma}\langle P'',s''\rangle$.
\end{proposition}

\begin{proof}
By the definition of substitution (Definition \ref{subs8}) and induction on the depth of inference.
\end{proof}

\begin{proposition}
\begin{enumerate}
  \item If $\langle P\{w/z\},s\rangle\xrightarrow{\alpha}\langle P',s'\rangle$, where $w\notin fn(P)$ and $bn(\alpha)\cap fn(P,w)=\emptyset$, then there exist some $Q$ and $\beta$ with $Q\{w/z\}\equiv_{\alpha}P'$ and
  $\beta\sigma=\alpha$, $\langle P,s\rangle\xrightarrow{\beta}\langle Q,s'\rangle$;
  \item If $\langle P\{w/z\},s\rangle\xrightarrow{\{\alpha_1,\cdots,\alpha_n\}}\langle P',s'\rangle$, where $w\notin fn(P)$ and $bn(\alpha_1)\cap\cdots\cap bn(\alpha_n)\cap fn(P,w)=\emptyset$, then there exist some $Q$
  and $\beta_1,\cdots,\beta_n$ with $Q\{w/z\}\equiv_{\alpha}P'$ and $\beta_1\sigma=\alpha_1,\cdots,\beta_n\sigma=\alpha_n$, $\langle P,s\rangle\xrightarrow{\{\beta_1,\cdots,\beta_n\}}\langle Q,s'\rangle$.
\end{enumerate}

\end{proposition}

\begin{proof}
By the definition of substitution (Definition \ref{subs8}) and induction on the depth of inference.
\end{proof}

\begin{proposition}
\begin{enumerate}
  \item If $\langle P,s\rangle\xtworightarrow{\alpha[m]}\langle P',s'\rangle$ then
  \begin{enumerate}
    \item $fn(\alpha[m])\subseteq fn(P)$;
    \item $fn(P')\subseteq fn(P)\cup bn(\alpha[m])$;
  \end{enumerate}
  \item If $\langle P,s\rangle\xtworightarrow{\{\alpha_1[m],\cdots,\alpha_n[m]\}}\langle P',s\rangle$ then
  \begin{enumerate}
    \item $fn(\alpha_1[m])\cup\cdots\cup fn(\alpha_n[m])\subseteq fn(P)$;
    \item $fn(P')\subseteq fn(P)\cup bn(\alpha_1[m])\cup\cdots\cup bn(\alpha_n[m])$.
  \end{enumerate}
\end{enumerate}
\end{proposition}

\begin{proof}
By induction on the depth of inference.
\end{proof}

\begin{proposition}
Suppose that $\langle P,s\rangle\xtworightarrow{\alpha(y)[m]}\langle P',s'\rangle$, where $\alpha=x$ or $\alpha=\overline{x}$, and $x\notin n(P)$, then there exists some $P''\equiv_{\alpha}P'\{z/y\}$,
$\langle P,s\rangle\xtworightarrow{\alpha(z)[m]}\langle P'',s''\rangle$.
\end{proposition}

\begin{proof}
By induction on the depth of inference.
\end{proof}

\begin{proposition}
If $\langle P,s\rangle\xtworightarrow{\alpha[m]} \langle P',s'\rangle$, $bn(\alpha[m])\cap fn(P'\sigma)=\emptyset$, and $\sigma\lceil bn(\alpha[m])=id$, then there exists some $P''\equiv_{\alpha}P'\sigma$,
$\langle P,s\rangle\sigma\xtworightarrow{\alpha[m]\sigma}\langle P'',s''\rangle$.
\end{proposition}

\begin{proof}
By the definition of substitution (Definition \ref{subs8}) and induction on the depth of inference.
\end{proof}

\begin{proposition}
\begin{enumerate}
  \item If $\langle P\{w/z\},s\rangle\xtworightarrow{\alpha[m]}\langle P',s'\rangle$, where $w\notin fn(P)$ and $bn(\alpha)\cap fn(P,w)=\emptyset$, then there exist some $Q$ and $\beta$ with $Q\{w/z\}\equiv_{\alpha}P'$ and
  $\beta\sigma[m]=\alpha[m]$, $\langle P,s\rangle\xtworightarrow{\beta[m]}\langle Q,s'\rangle$;
  \item If $\langle P\{w/z\},s\rangle\xtworightarrow{\{\alpha_1[m],\cdots,\alpha_n[m]\}}\langle P',s'\rangle$, where $w\notin fn(P)$ and $bn(\alpha_1[m])\cap\cdots\cap bn(\alpha_n[m])\cap fn(P,w)=\emptyset$, then there exist some $Q$
  and $\beta_1[m],\cdots,\beta_n[m]$ with $Q\{w/z\}\equiv_{\alpha}P'$ and $\beta_1\sigma[m]=\alpha_1[m],\cdots,\beta_n\sigma[m]=\alpha_n[m]$, $\langle P,s\rangle\xtworightarrow{\{\beta_1[m],\cdots,\beta_n[m]\}}\langle Q,s'\rangle$.
\end{enumerate}

\end{proposition}

\begin{proof}
By the definition of substitution (Definition \ref{subs8}) and induction on the depth of inference.
\end{proof}

\subsection{Strong Bisimilarities}\label{s8}

\subsubsection{Laws and Congruence}

\begin{theorem}
$\equiv_{\alpha}$ are FR strongly probabilistic truly concurrent bisimulations. That is, if $P\equiv_{\alpha}Q$, then,
\begin{enumerate}
  \item $P\sim_{pp}^{fr} Q$;
  \item $P\sim_{ps}^{fr} Q$;
  \item $P\sim_{php}^{fr} Q$;
  \item $P\sim_{phhp}^{fr} Q$.
\end{enumerate}
\end{theorem}

\begin{proof}
By induction on the depth of inference, we can get the following facts:

\begin{enumerate}
  \item If $\alpha$ is a free action and $\langle P,s\rangle\rightsquigarrow\xrightarrow{\alpha}\langle P',s'\rangle$, then equally for some $Q'$ with $P'\equiv_{\alpha}Q'$, 
  $\langle Q,s\rangle\rightsquigarrow\xrightarrow{\alpha}\langle Q',s'\rangle$;
  \item If $\langle P,s\rangle\rightsquigarrow\xrightarrow{a(y)}\langle P',s'\rangle$ with $a=x$ or $a=\overline{x}$ and $z\notin n(Q)$, then equally for some $Q'$ with $P'\{z/y\}\equiv_{\alpha}Q'$,
  $\langle Q,s\rangle\rightsquigarrow\xrightarrow{a(z)}\langle Q',s'\rangle$;
  \item If $\alpha[m]$ is a free action and $\langle P,s\rangle\rightsquigarrow\xtworightarrow{\alpha[m]}\langle P',s'\rangle$, then equally for some $Q'$ with $P'\equiv_{\alpha}Q'$, 
  $\langle Q,s\rangle\rightsquigarrow\xtworightarrow{\alpha[m]}\langle Q',s'\rangle$;
  \item If $\langle P,s\rangle\rightsquigarrow\xtworightarrow{a(y)[m]}\langle P',s'\rangle$ with $a=x$ or $a=\overline{x}$ and $z\notin n(Q)$, then equally for some $Q'$ with $P'\{z/y\}\equiv_{\alpha}Q'$,
  $\langle Q,s\rangle\rightsquigarrow\xtworightarrow{a(z)[m]}\langle Q',s'\rangle$.
\end{enumerate}

Then, we can get:

\begin{enumerate}
  \item by the definition of FR strongly probabilistic pomset bisimilarity, $P\sim_{pp}^{fr} Q$;
  \item by the definition of FR strongly probabilistic step bisimilarity, $P\sim_{ps}^{fr} Q$;
  \item by the definition of FR strongly probabilistic hp-bisimilarity, $P\sim_{php}^{fr} Q$;
  \item by the definition of FR strongly probabilistic hhp-bisimilarity, $P\sim_{phhp}^{fr} Q$.
\end{enumerate}
\end{proof}

\begin{proposition}[Summation laws for FR strongly probabilistic pomset bisimulation] The Summation laws for FR strongly probabilistic pomset bisimulation are as follows.

\begin{enumerate}
  \item $P+Q\sim_{pp}^{fr} Q+P$;
  \item $P+(Q+R)\sim_{pp}^{fr} (P+Q)+R$;
  \item $P+P\sim_{pp}^{fr} P$;
  \item $P+\textbf{nil}\sim_{pp}^{fr} P$.
\end{enumerate}

\end{proposition}

\begin{proof}
\begin{enumerate}
  \item $P+Q\sim_{pp}^{fr} Q+P$. It is sufficient to prove the relation $R=\{(P+Q, Q+P)\}\cup \textbf{Id}$ is a FR strongly probabilistic pomset bisimulation, we omit it;
  \item $P+(Q+R)\sim_{pp}^{fr} (P+Q)+R$. It is sufficient to prove the relation $R=\{(P+(Q+R), (P+Q)+R)\}\cup \textbf{Id}$ is a FR strongly probabilistic pomset bisimulation, we omit it;
  \item $P+P\sim_{pp}^{fr} P$. It is sufficient to prove the relation $R=\{(P+P, P)\}\cup \textbf{Id}$ is a FR strongly probabilistic pomset bisimulation, we omit it;
  \item $P+\textbf{nil}\sim_{pp}^{fr} P$. It is sufficient to prove the relation $R=\{(P+\textbf{nil}, P)\}\cup \textbf{Id}$ is a FR strongly probabilistic pomset bisimulation, we omit it.
\end{enumerate}
\end{proof}

\begin{proposition}[Summation laws for FR strongly probabilistic step bisimulation] The Summation laws for FR strongly probabilistic step bisimulation are as follows.
\begin{enumerate}
  \item $P+Q\sim_{ps}^{fr} Q+P$;
  \item $P+(Q+R)\sim_{ps}^{fr} (P+Q)+R$;
  \item $P+P\sim_{ps}^{fr} P$;
  \item $P+\textbf{nil}\sim_{ps}^{fr} P$.
\end{enumerate}
\end{proposition}

\begin{proof}
\begin{enumerate}
  \item $P+Q\sim_{ps}^{fr} Q+P$. It is sufficient to prove the relation $R=\{(P+Q, Q+P)\}\cup \textbf{Id}$ is a FR strongly probabilistic step bisimulation, we omit it;
  \item $P+(Q+R)\sim_{ps}^{fr} (P+Q)+R$. It is sufficient to prove the relation $R=\{(P+(Q+R), (P+Q)+R)\}\cup \textbf{Id}$ is a FR strongly probabilistic step bisimulation, we omit it;
  \item $P+P\sim_{ps}^{fr} P$. It is sufficient to prove the relation $R=\{(P+P, P)\}\cup \textbf{Id}$ is a FR strongly probabilistic step bisimulation, we omit it;
  \item $P+\textbf{nil}\sim_{ps}^{fr} P$. It is sufficient to prove the relation $R=\{(P+\textbf{nil}, P)\}\cup \textbf{Id}$ is a FR strongly probabilistic step bisimulation, we omit it.
\end{enumerate}
\end{proof}

\begin{proposition}[Summation laws for FR strongly probabilistic hp-bisimulation] The Summation laws for FR strongly probabilistic hp-bisimulation are as follows.
\begin{enumerate}
  \item $P+Q\sim_{php}^{fr} Q+P$;
  \item $P+(Q+R)\sim_{php}^{fr} (P+Q)+R$;
  \item $P+P\sim_{php}^{fr} P$;
  \item $P+\textbf{nil}\sim_{php}^{fr} P$.
\end{enumerate}
\end{proposition}

\begin{proof}
\begin{enumerate}
  \item $P+Q\sim_{php}^{fr} Q+P$. It is sufficient to prove the relation $R=\{(P+Q, Q+P)\}\cup \textbf{Id}$ is a FR strongly probabilistic hp-bisimulation, we omit it;
  \item $P+(Q+R)\sim_{php}^{fr} (P+Q)+R$. It is sufficient to prove the relation $R=\{(P+(Q+R), (P+Q)+R)\}\cup \textbf{Id}$ is a FR strongly probabilistic hp-bisimulation, we omit it;
  \item $P+P\sim_{php}^{fr} P$. It is sufficient to prove the relation $R=\{(P+P, P)\}\cup \textbf{Id}$ is a FR strongly probabilistic hp-bisimulation, we omit it;
  \item $P+\textbf{nil}\sim_{php}^{fr} P$. It is sufficient to prove the relation $R=\{(P+\textbf{nil}, P)\}\cup \textbf{Id}$ is a FR strongly probabilistic hp-bisimulation, we omit it.
\end{enumerate}
\end{proof}

\begin{proposition}[Summation laws for FR strongly probabilistic hhp-bisimulation] The Summation laws for FR strongly probabilistic hhp-bisimulation are as follows.
\begin{enumerate}
  \item $P+Q\sim_{phhp}^{fr} Q+P$;
  \item $P+(Q+R)\sim_{phhp}^{fr} (P+Q)+R$;
  \item $P+P\sim_{phhp}^{fr} P$;
  \item $P+\textbf{nil}\sim_{phhp}^{fr} P$.
\end{enumerate}
\end{proposition}

\begin{proof}
\begin{enumerate}
  \item $P+Q\sim_{phhp}^{fr} Q+P$. It is sufficient to prove the relation $R=\{(P+Q, Q+P)\}\cup \textbf{Id}$ is a FR strongly probabilistic hhp-bisimulation, we omit it;
  \item $P+(Q+R)\sim_{phhp}^{fr} (P+Q)+R$. It is sufficient to prove the relation $R=\{(P+(Q+R), (P+Q)+R)\}\cup \textbf{Id}$ is a FR strongly probabilistic hhp-bisimulation, we omit it;
  \item $P+P\sim_{phhp}^{fr} P$. It is sufficient to prove the relation $R=\{(P+P, P)\}\cup \textbf{Id}$ is a FR strongly probabilistic hhp-bisimulation, we omit it;
  \item $P+\textbf{nil}\sim_{phhp}^{fr} P$. It is sufficient to prove the relation $R=\{(P+\textbf{nil}, P)\}\cup \textbf{Id}$ is a FR strongly probabilistic hhp-bisimulation, we omit it.
\end{enumerate}
\end{proof}

\begin{proposition}[Box-Summation laws for FR strongly probabilistic pomset bisimulation]
The Box-Summation laws for FR strongly probabilistic pomset bisimulation are as follows.

\begin{enumerate}
  \item $P\boxplus_{\pi} Q\sim_{pp}^{fr} Q\boxplus_{1-\pi} P$;
  \item $P\boxplus_{\pi}(Q\boxplus_{\rho} R)\sim_{pp}^{fr} (P\boxplus_{\frac{\pi}{\pi+\rho-\pi\rho}}Q)\boxplus_{\pi+\rho-\pi\rho} R$;
  \item $P\boxplus_{\pi}P\sim_{pp}^{fr} P$;
  \item $P\boxplus_{\pi}\textbf{nil}\sim_{pp}^{fr} P$.
\end{enumerate}
\end{proposition}

\begin{proof}
\begin{enumerate}
  \item $P\boxplus_{\pi} Q\sim_{pp}^{fr} Q\boxplus_{1-\pi} P$. It is sufficient to prove the relation $R=\{(P\boxplus_{\pi} Q, Q\boxplus_{1-\pi} P)\}\cup \textbf{Id}$ is a FR strongly probabilistic pomset bisimulation, we omit it;
  \item $P\boxplus_{\pi}(Q\boxplus_{\rho} R)\sim_{pp}^{fr} (P\boxplus_{\frac{\pi}{\pi+\rho-\pi\rho}}Q)\boxplus_{\pi+\rho-\pi\rho} R$. It is sufficient to prove the relation $R=\{(P\boxplus_{\pi}(Q\boxplus_{\rho} R), (P\boxplus_{\frac{\pi}{\pi+\rho-\pi\rho}}Q)\boxplus_{\pi+\rho-\pi\rho} R)\}\cup \textbf{Id}$ is a FR strongly probabilistic pomset bisimulation, we omit it;
  \item $P\boxplus_{\pi}P\sim_{pp}^{fr} P$. It is sufficient to prove the relation $R=\{(P\boxplus_{\pi}P, P)\}\cup \textbf{Id}$ is a FR strongly probabilistic pomset bisimulation, we omit it;
  \item $P\boxplus_{\pi}\textbf{nil}\sim_{pp}^{fr} P$. It is sufficient to prove the relation $R=\{(P\boxplus_{\pi}\textbf{nil}, P)\}\cup \textbf{Id}$ is a FR strongly probabilistic pomset bisimulation, we omit it.
\end{enumerate}
\end{proof}

\begin{proposition}[Box-Summation laws for FR strongly probabilistic step bisimulation]
The Box-Summation laws for FR strongly probabilistic step bisimulation are as follows.

\begin{enumerate}
  \item $P\boxplus_{\pi} Q\sim_{ps}^{fr} Q\boxplus_{1-\pi} P$;
  \item $P\boxplus_{\pi}(Q\boxplus_{\rho} R)\sim_{ps}^{fr} (P\boxplus_{\frac{\pi}{\pi+\rho-\pi\rho}}Q)\boxplus_{\pi+\rho-\pi\rho} R$;
  \item $P\boxplus_{\pi}P\sim_{ps}^{fr} P$;
  \item $P\boxplus_{\pi}\textbf{nil}\sim_{ps}^{fr} P$.
\end{enumerate}
\end{proposition}

\begin{proof}
\begin{enumerate}
  \item $P\boxplus_{\pi} Q\sim_{ps}^{fr} Q\boxplus_{1-\pi} P$. It is sufficient to prove the relation $R=\{(P\boxplus_{\pi} Q, Q\boxplus_{1-\pi} P)\}\cup \textbf{Id}$ is a FR strongly probabilistic step bisimulation, we omit it;
  \item $P\boxplus_{\pi}(Q\boxplus_{\rho} R)\sim_{ps}^{fr} (P\boxplus_{\frac{\pi}{\pi+\rho-\pi\rho}}Q)\boxplus_{\pi+\rho-\pi\rho} R$. It is sufficient to prove the relation $R=\{(P\boxplus_{\pi}(Q\boxplus_{\rho} R), (P\boxplus_{\frac{\pi}{\pi+\rho-\pi\rho}}Q)\boxplus_{\pi+\rho-\pi\rho} R)\}\cup \textbf{Id}$ is a FR strongly probabilistic step bisimulation, we omit it;
  \item $P\boxplus_{\pi}P\sim_{ps}^{fr} P$. It is sufficient to prove the relation $R=\{(P\boxplus_{\pi}P, P)\}\cup \textbf{Id}$ is a FR strongly probabilistic step bisimulation, we omit it;
  \item $P\boxplus_{\pi}\textbf{nil}\sim_{ps}^{fr} P$. It is sufficient to prove the relation $R=\{(P\boxplus_{\pi}\textbf{nil}, P)\}\cup \textbf{Id}$ is a FR strongly probabilistic step bisimulation, we omit it.
\end{enumerate}
\end{proof}

\begin{proposition}[Box-Summation laws for FR strongly probabilistic hp-bisimulation]
The Box-Summation laws for FR strongly probabilistic hp-bisimulation are as follows.

\begin{enumerate}
  \item $P\boxplus_{\pi} Q\sim_{php}^{fr} Q\boxplus_{1-\pi} P$;
  \item $P\boxplus_{\pi}(Q\boxplus_{\rho} R)\sim_{php}^{fr} (P\boxplus_{\frac{\pi}{\pi+\rho-\pi\rho}}Q)\boxplus_{\pi+\rho-\pi\rho} R$;
  \item $P\boxplus_{\pi}P\sim_{php}^{fr} P$;
  \item $P\boxplus_{\pi}\textbf{nil}\sim_{php}^{fr} P$.
\end{enumerate}
\end{proposition}

\begin{proof}
\begin{enumerate}
  \item $P\boxplus_{\pi} Q\sim_{php}^{fr} Q\boxplus_{1-\pi} P$. It is sufficient to prove the relation $R=\{(P\boxplus_{\pi} Q, Q\boxplus_{1-\pi} P)\}\cup \textbf{Id}$ is a FR strongly probabilistic hp-bisimulation, we omit it;
  \item $P\boxplus_{\pi}(Q\boxplus_{\rho} R)\sim_{php}^{fr} (P\boxplus_{\frac{\pi}{\pi+\rho-\pi\rho}}Q)\boxplus_{\pi+\rho-\pi\rho} R$. It is sufficient to prove the relation $R=\{(P\boxplus_{\pi}(Q\boxplus_{\rho} R), (P\boxplus_{\frac{\pi}{\pi+\rho-\pi\rho}}Q)\boxplus_{\pi+\rho-\pi\rho} R)\}\cup \textbf{Id}$ is a FR strongly probabilistic hp-bisimulation, we omit it;
  \item $P\boxplus_{\pi}P\sim_{php}^{fr} P$. It is sufficient to prove the relation $R=\{(P\boxplus_{\pi}P, P)\}\cup \textbf{Id}$ is a FR strongly probabilistic hp-bisimulation, we omit it;
  \item $P\boxplus_{\pi}\textbf{nil}\sim_{php}^{fr} P$. It is sufficient to prove the relation $R=\{(P\boxplus_{\pi}\textbf{nil}, P)\}\cup \textbf{Id}$ is a FR strongly probabilistic hp-bisimulation, we omit it.
\end{enumerate}
\end{proof}

\begin{proposition}[Box-Summation laws for FR strongly probabilistic hhp-bisimulation]
The Box-Summation laws for FR strongly probabilistic hhp-bisimulation are as follows.

\begin{enumerate}
  \item $P\boxplus_{\pi} Q\sim_{phhp}^{fr} Q\boxplus_{1-\pi} P$;
  \item $P\boxplus_{\pi}(Q\boxplus_{\rho} R)\sim_{phhp}^{fr} (P\boxplus_{\frac{\pi}{\pi+\rho-\pi\rho}}Q)\boxplus_{\pi+\rho-\pi\rho} R$;
  \item $P\boxplus_{\pi}P\sim_{phhp}^{fr} P$;
  \item $P\boxplus_{\pi}\textbf{nil}\sim_{phhp}^{fr} P$.
\end{enumerate}
\end{proposition}

\begin{proof}
\begin{enumerate}
  \item $P\boxplus_{\pi} Q\sim_{phhp}^{fr} Q\boxplus_{1-\pi} P$. It is sufficient to prove the relation $R=\{(P\boxplus_{\pi} Q, Q\boxplus_{1-\pi} P)\}\cup \textbf{Id}$ is a FR strongly probabilistic hhp-bisimulation, we omit it;
  \item $P\boxplus_{\pi}(Q\boxplus_{\rho} R)\sim_{phhp}^{fr} (P\boxplus_{\frac{\pi}{\pi+\rho-\pi\rho}}Q)\boxplus_{\pi+\rho-\pi\rho} R$. It is sufficient to prove the relation $R=\{(P\boxplus_{\pi}(Q\boxplus_{\rho} R), (P\boxplus_{\frac{\pi}{\pi+\rho-\pi\rho}}Q)\boxplus_{\pi+\rho-\pi\rho} R)\}\cup \textbf{Id}$ is a FR strongly probabilistic hhp-bisimulation, we omit it;
  \item $P\boxplus_{\pi}P\sim_{phhp}^{fr} P$. It is sufficient to prove the relation $R=\{(P\boxplus_{\pi}P, P)\}\cup \textbf{Id}$ is a FR strongly probabilistic hhp-bisimulation, we omit it;
  \item $P\boxplus_{\pi}\textbf{nil}\sim_{phhp}^{fr} P$. It is sufficient to prove the relation $R=\{(P\boxplus_{\pi}\textbf{nil}, P)\}\cup \textbf{Id}$ is a FR strongly probabilistic hhp-bisimulation, we omit it.
\end{enumerate}
\end{proof}

\begin{theorem}[Identity law for FR strongly probabilistic truly concurrent bisimilarities]
If $A(\widetilde{x})\overset{\text{def}}{=}P$, then

\begin{enumerate}
  \item $A(\widetilde{y})\sim_{pp}^{fr} P\{\widetilde{y}/\widetilde{x}\}$;
  \item $A(\widetilde{y})\sim_{ps}^{fr} P\{\widetilde{y}/\widetilde{x}\}$;
  \item $A(\widetilde{y})\sim_{php}^{fr} P\{\widetilde{y}/\widetilde{x}\}$;
  \item $A(\widetilde{y})\sim_{phhp}^{fr} P\{\widetilde{y}/\widetilde{x}\}$.
\end{enumerate}
\end{theorem}

\begin{proof}
\begin{enumerate}
  \item $A(\widetilde{y})\sim_{pp}^{fr} P\{\widetilde{y}/\widetilde{x}\}$. It is sufficient to prove the relation $R=\{(A(\widetilde{y}), P\{\widetilde{y}/\widetilde{x}\})\}\cup \textbf{Id}$ is a FR strongly probabilistic pomset bisimulation, we omit it;
  \item $A(\widetilde{y})\sim_{ps}^{fr} P\{\widetilde{y}/\widetilde{x}\}$. It is sufficient to prove the relation $R=\{(A(\widetilde{y}), P\{\widetilde{y}/\widetilde{x}\})\}\cup \textbf{Id}$ is a FR strongly probabilistic step bisimulation, we omit it;
  \item $A(\widetilde{y})\sim_{php}^{fr} P\{\widetilde{y}/\widetilde{x}\}$. It is sufficient to prove the relation $R=\{(A(\widetilde{y}), P\{\widetilde{y}/\widetilde{x}\})\}\cup \textbf{Id}$ is a FR strongly probabilistic hp-bisimulation, we omit it;
  \item $A(\widetilde{y})\sim_{phhp}^{fr} P\{\widetilde{y}/\widetilde{x}\}$. It is sufficient to prove the relation $R=\{(A(\widetilde{y}), P\{\widetilde{y}/\widetilde{x}\})\}\cup \textbf{Id}$ is a FR strongly probabilistic hhp-bisimulation, we omit it.
\end{enumerate}
\end{proof}

\begin{theorem}[Restriction Laws for FR strongly probabilistic pomset bisimilarity]
The restriction laws for FR strongly probabilistic pomset bisimilarity are as follows.

\begin{enumerate}
  \item $(y)P\sim_{pp}^{fr} P$, if $y\notin fn(P)$;
  \item $(y)(z)P\sim_{pp}^{fr} (z)(y)P$;
  \item $(y)(P+Q)\sim_{pp}^{fr} (y)P+(y)Q$;
  \item $(y)(P\boxplus_{\pi}Q)\sim_{pp}^{fr} (y)P\boxplus_{\pi}(y)Q$;
  \item $(y)\alpha.P\sim_{pp}^{fr} \alpha.(y)P$ if $y\notin n(\alpha)$;
  \item $(y)\alpha.P\sim_{pp}^{fr} \textbf{nil}$ if $y$ is the subject of $\alpha$.
\end{enumerate}
\end{theorem}

\begin{proof}
\begin{enumerate}
  \item $(y)P\sim_{pp}^{fr} P$, if $y\notin fn(P)$. It is sufficient to prove the relation $R=\{((y)P, P)\}\cup \textbf{Id}$, if $y\notin fn(P)$, is a FR strongly probabilistic pomset bisimulation, we omit it;
  \item $(y)(z)P\sim_{pp}^{fr} (z)(y)P$. It is sufficient to prove the relation $R=\{((y)(z)P, (z)(y)P)\}\cup \textbf{Id}$ is a FR strongly probabilistic pomset bisimulation, we omit it;
  \item $(y)(P+Q)\sim_{pp}^{fr} (y)P+(y)Q$. It is sufficient to prove the relation $R=\{((y)(P+Q), (y)P+(y)Q)\}\cup \textbf{Id}$ is a FR strongly probabilistic pomset bisimulation, we omit it;
  \item $(y)(P\boxplus_{\pi}Q)\sim_{pp}^{fr} (y)P\boxplus_{\pi}(y)Q$. It is sufficient to prove the relation $R=\{((y)(P\boxplus_{\pi}Q), (y)P\boxplus_{\pi}(y)Q)\}\cup \textbf{Id}$ is a FR strongly probabilistic pomset bisimulation, we omit it;
  \item $(y)\alpha.P\sim_{pp}^{fr} \alpha.(y)P$ if $y\notin n(\alpha)$. It is sufficient to prove the relation $R=\{((y)\alpha.P, \alpha.(y)P)\}\cup \textbf{Id}$, if $y\notin n(\alpha)$, is a FR strongly probabilistic pomset bisimulation, we omit it;
  \item $(y)\alpha.P\sim_{pp}^{fr} \textbf{nil}$ if $y$ is the subject of $\alpha$. It is sufficient to prove the relation $R=\{((y)\alpha.P, \textbf{nil})\}\cup \textbf{Id}$, if $y$ is the subject of $\alpha$, is a FR strongly probabilistic pomset bisimulation, we omit it.
\end{enumerate}
\end{proof}

\begin{theorem}[Restriction Laws for FR strongly probabilistic step bisimilarity]
The restriction laws for FR strongly probabilistic step bisimilarity are as follows.

\begin{enumerate}
  \item $(y)P\sim_{ps}^{fr} P$, if $y\notin fn(P)$;
  \item $(y)(z)P\sim_{ps}^{fr} (z)(y)P$;
  \item $(y)(P+Q)\sim_{ps}^{fr} (y)P+(y)Q$;
  \item $(y)(P\boxplus_{\pi}Q)\sim_{ps}^{fr} (y)P\boxplus_{\pi}(y)Q$;
  \item $(y)\alpha.P\sim_{ps}^{fr} \alpha.(y)P$ if $y\notin n(\alpha)$;
  \item $(y)\alpha.P\sim_{ps}^{fr} \textbf{nil}$ if $y$ is the subject of $\alpha$.
\end{enumerate}
\end{theorem}

\begin{proof}
\begin{enumerate}
  \item $(y)P\sim_{ps}^{fr} P$, if $y\notin fn(P)$. It is sufficient to prove the relation $R=\{((y)P, P)\}\cup \textbf{Id}$, if $y\notin fn(P)$, is a FR strongly probabilistic step bisimulation, we omit it;
  \item $(y)(z)P\sim_{ps}^{fr} (z)(y)P$. It is sufficient to prove the relation $R=\{((y)(z)P, (z)(y)P)\}\cup \textbf{Id}$ is a FR strongly probabilistic step bisimulation, we omit it;
  \item $(y)(P+Q)\sim_{ps}^{fr} (y)P+(y)Q$. It is sufficient to prove the relation $R=\{((y)(P+Q), (y)P+(y)Q)\}\cup \textbf{Id}$ is a FR strongly probabilistic step bisimulation, we omit it;
  \item $(y)(P\boxplus_{\pi}Q)\sim_{ps}^{fr} (y)P\boxplus_{\pi}(y)Q$. It is sufficient to prove the relation $R=\{((y)(P\boxplus_{\pi}Q), (y)P\boxplus_{\pi}(y)Q)\}\cup \textbf{Id}$ is a FR strongly probabilistic step bisimulation, we omit it;
  \item $(y)\alpha.P\sim_{ps}^{fr} \alpha.(y)P$ if $y\notin n(\alpha)$. It is sufficient to prove the relation $R=\{((y)\alpha.P, \alpha.(y)P)\}\cup \textbf{Id}$, if $y\notin n(\alpha)$, is a FR strongly probabilistic step bisimulation, we omit it;
  \item $(y)\alpha.P\sim_{ps}^{fr} \textbf{nil}$ if $y$ is the subject of $\alpha$. It is sufficient to prove the relation $R=\{((y)\alpha.P, \textbf{nil})\}\cup \textbf{Id}$, if $y$ is the subject of $\alpha$, is a FR strongly probabilistic step bisimulation, we omit it.
\end{enumerate}
\end{proof}

\begin{theorem}[Restriction Laws for FR strongly probabilistic hp-bisimilarity]
The restriction laws for FR strongly probabilistic hp-bisimilarity are as follows.

\begin{enumerate}
  \item $(y)P\sim_{php}^{fr} P$, if $y\notin fn(P)$;
  \item $(y)(z)P\sim_{php}^{fr} (z)(y)P$;
  \item $(y)(P+Q)\sim_{php}^{fr} (y)P+(y)Q$;
  \item $(y)(P\boxplus_{\pi}Q)\sim_{php}^{fr} (y)P\boxplus_{\pi}(y)Q$;
  \item $(y)\alpha.P\sim_{php}^{fr} \alpha.(y)P$ if $y\notin n(\alpha)$;
  \item $(y)\alpha.P\sim_{php}^{fr} \textbf{nil}$ if $y$ is the subject of $\alpha$.
\end{enumerate}
\end{theorem}

\begin{proof}
\begin{enumerate}
  \item $(y)P\sim_{php}^{fr} P$, if $y\notin fn(P)$. It is sufficient to prove the relation $R=\{((y)P, P)\}\cup \textbf{Id}$, if $y\notin fn(P)$, is a FR strongly probabilistic hp-bisimulation, we omit it;
  \item $(y)(z)P\sim_{php}^{fr} (z)(y)P$. It is sufficient to prove the relation $R=\{((y)(z)P, (z)(y)P)\}\cup \textbf{Id}$ is a FR strongly probabilistic hp-bisimulation, we omit it;
  \item $(y)(P+Q)\sim_{php}^{fr} (y)P+(y)Q$. It is sufficient to prove the relation $R=\{((y)(P+Q), (y)P+(y)Q)\}\cup \textbf{Id}$ is a FR strongly probabilistic hp-bisimulation, we omit it;
  \item $(y)(P\boxplus_{\pi}Q)\sim_{php}^{fr} (y)P\boxplus_{\pi}(y)Q$. It is sufficient to prove the relation $R=\{((y)(P\boxplus_{\pi}Q), (y)P\boxplus_{\pi}(y)Q)\}\cup \textbf{Id}$ is a FR strongly probabilistic hp-bisimulation, we omit it;
  \item $(y)\alpha.P\sim_{php}^{fr} \alpha.(y)P$ if $y\notin n(\alpha)$. It is sufficient to prove the relation $R=\{((y)\alpha.P, \alpha.(y)P)\}\cup \textbf{Id}$, if $y\notin n(\alpha)$, is a FR strongly probabilistic hp-bisimulation, we omit it;
  \item $(y)\alpha.P\sim_{php}^{fr} \textbf{nil}$ if $y$ is the subject of $\alpha$. It is sufficient to prove the relation $R=\{((y)\alpha.P, \textbf{nil})\}\cup \textbf{Id}$, if $y$ is the subject of $\alpha$, is a FR strongly probabilistic hp-bisimulation, we omit it.
\end{enumerate}
\end{proof}

\begin{theorem}[Restriction Laws for FR strongly probabilistic hhp-bisimilarity]
The restriction laws for FR strongly probabilistic hhp-bisimilarity are as follows.

\begin{enumerate}
  \item $(y)P\sim_{phhp}^{fr} P$, if $y\notin fn(P)$;
  \item $(y)(z)P\sim_{phhp}^{fr} (z)(y)P$;
  \item $(y)(P+Q)\sim_{phhp}^{fr} (y)P+(y)Q$;
  \item $(y)(P\boxplus_{\pi}Q)\sim_{phhp}^{fr} (y)P\boxplus_{\pi}(y)Q$;
  \item $(y)\alpha.P\sim_{phhp}^{fr} \alpha.(y)P$ if $y\notin n(\alpha)$;
  \item $(y)\alpha.P\sim_{phhp}^{fr} \textbf{nil}$ if $y$ is the subject of $\alpha$.
\end{enumerate}
\end{theorem}

\begin{proof}
\begin{enumerate}
  \item $(y)P\sim_{phhp}^{fr} P$, if $y\notin fn(P)$. It is sufficient to prove the relation $R=\{((y)P, P)\}\cup \textbf{Id}$, if $y\notin fn(P)$, is a FR strongly probabilistic hhp-bisimulation, we omit it;
  \item $(y)(z)P\sim_{phhp}^{fr} (z)(y)P$. It is sufficient to prove the relation $R=\{((y)(z)P, (z)(y)P)\}\cup \textbf{Id}$ is a FR strongly probabilistic hhp-bisimulation, we omit it;
  \item $(y)(P+Q)\sim_{phhp}^{fr} (y)P+(y)Q$. It is sufficient to prove the relation $R=\{((y)(P+Q), (y)P+(y)Q)\}\cup \textbf{Id}$ is a FR strongly probabilistic hhp-bisimulation, we omit it;
  \item $(y)(P\boxplus_{\pi}Q)\sim_{phhp}^{fr} (y)P\boxplus_{\pi}(y)Q$. It is sufficient to prove the relation $R=\{((y)(P\boxplus_{\pi}Q), (y)P\boxplus_{\pi}(y)Q)\}\cup \textbf{Id}$ is a FR strongly probabilistic hhp-bisimulation, we omit it;
  \item $(y)\alpha.P\sim_{phhp}^{fr} \alpha.(y)P$ if $y\notin n(\alpha)$. It is sufficient to prove the relation $R=\{((y)\alpha.P, \alpha.(y)P)\}\cup \textbf{Id}$, if $y\notin n(\alpha)$, is a FR strongly probabilistic hhp-bisimulation, we omit it;
  \item $(y)\alpha.P\sim_{phhp}^{fr} \textbf{nil}$ if $y$ is the subject of $\alpha$. It is sufficient to prove the relation $R=\{((y)\alpha.P, \textbf{nil})\}\cup \textbf{Id}$, if $y$ is the subject of $\alpha$, is a FR strongly probabilistic hhp-bisimulation, we omit it.
\end{enumerate}
\end{proof}

\begin{theorem}[Parallel laws for FR strongly probabilistic pomset bisimilarity]
The parallel laws for FR strongly probabilistic pomset bisimilarity are as follows.

\begin{enumerate}
  \item $P\parallel \textbf{nil}\sim_{pp}^{fr} P$;
  \item $P_1\parallel P_2\sim_{pp}^{fr} P_2\parallel P_1$;
  \item $(P_1\parallel P_2)\parallel P_3\sim_{pp}^{fr} P_1\parallel (P_2\parallel P_3)$;
  \item $(y)(P_1\parallel P_2)\sim_{pp}^{fr} (y)P_1\parallel (y)P_2$, if $y\notin fn(P_1)\cap fn(P_2)$.
\end{enumerate}
\end{theorem}

\begin{proof}
\begin{enumerate}
  \item $P\parallel \textbf{nil}\sim_{pp}^{fr} P$. It is sufficient to prove the relation $R=\{(P\parallel \textbf{nil}, P)\}\cup \textbf{Id}$ is a FR strongly probabilistic pomset bisimulation, we omit it;
  \item $P_1\parallel P_2\sim_{pp}^{fr} P_2\parallel P_1$. It is sufficient to prove the relation $R=\{(P_1\parallel P_2, P_2\parallel P_1)\}\cup \textbf{Id}$ is a FR strongly probabilistic pomset bisimulation, we omit it;
  \item $(P_1\parallel P_2)\parallel P_3\sim_{pp}^{fr} P_1\parallel (P_2\parallel P_3)$. It is sufficient to prove the relation $R=\{((P_1\parallel P_2)\parallel P_3, P_1\parallel (P_2\parallel P_3))\}\cup \textbf{Id}$ is a FR strongly probabilistic pomset bisimulation, we omit it;
  \item $(y)(P_1\parallel P_2)\sim_{pp}^{fr} (y)P_1\parallel (y)P_2$, if $y\notin fn(P_1)\cap fn(P_2)$. It is sufficient to prove the relation $R=\{((y)(P_1\parallel P_2), (y)P_1\parallel (y)P_2)\}\cup \textbf{Id}$, if $y\notin fn(P_1)\cap fn(P_2)$, is a FR strongly probabilistic pomset bisimulation, we omit it.
\end{enumerate}
\end{proof}

\begin{theorem}[Parallel laws for FR strongly probabilistic step bisimilarity]
The parallel laws for FR strongly probabilistic step bisimilarity are as follows.

\begin{enumerate}
  \item $P\parallel \textbf{nil}\sim_{ps}^{fr} P$;
  \item $P_1\parallel P_2\sim_{ps}^{fr} P_2\parallel P_1$;
  \item $(P_1\parallel P_2)\parallel P_3\sim_{ps}^{fr} P_1\parallel (P_2\parallel P_3)$;
  \item $(y)(P_1\parallel P_2)\sim_{ps}^{fr} (y)P_1\parallel (y)P_2$, if $y\notin fn(P_1)\cap fn(P_2)$.
\end{enumerate}
\end{theorem}

\begin{proof}
\begin{enumerate}
  \item $P\parallel \textbf{nil}\sim_{ps}^{fr} P$. It is sufficient to prove the relation $R=\{(P\parallel \textbf{nil}, P)\}\cup \textbf{Id}$ is a FR strongly probabilistic step bisimulation, we omit it;
  \item $P_1\parallel P_2\sim_{ps}^{fr} P_2\parallel P_1$. It is sufficient to prove the relation $R=\{(P_1\parallel P_2, P_2\parallel P_1)\}\cup \textbf{Id}$ is a FR strongly probabilistic step bisimulation, we omit it;
  \item $(P_1\parallel P_2)\parallel P_3\sim_{ps}^{fr} P_1\parallel (P_2\parallel P_3)$. It is sufficient to prove the relation $R=\{((P_1\parallel P_2)\parallel P_3, P_1\parallel (P_2\parallel P_3))\}\cup \textbf{Id}$ is a FR strongly probabilistic step bisimulation, we omit it;
  \item $(y)(P_1\parallel P_2)\sim_{ps}^{fr} (y)P_1\parallel (y)P_2$, if $y\notin fn(P_1)\cap fn(P_2)$. It is sufficient to prove the relation $R=\{((y)(P_1\parallel P_2), (y)P_1\parallel (y)P_2)\}\cup \textbf{Id}$, if $y\notin fn(P_1)\cap fn(P_2)$, is a FR strongly probabilistic step bisimulation, we omit it.
\end{enumerate}
\end{proof}

\begin{theorem}[Parallel laws for FR strongly probabilistic hp-bisimilarity]
The parallel laws for FR strongly probabilistic hp-bisimilarity are as follows.

\begin{enumerate}
  \item $P\parallel \textbf{nil}\sim_{php}^{fr} P$;
  \item $P_1\parallel P_2\sim_{php}^{fr} P_2\parallel P_1$;
  \item $(P_1\parallel P_2)\parallel P_3\sim_{php}^{fr} P_1\parallel (P_2\parallel P_3)$;
  \item $(y)(P_1\parallel P_2)\sim_{php}^{fr} (y)P_1\parallel (y)P_2$, if $y\notin fn(P_1)\cap fn(P_2)$.
\end{enumerate}
\end{theorem}

\begin{proof}
\begin{enumerate}
  \item $P\parallel \textbf{nil}\sim_{php}^{fr} P$. It is sufficient to prove the relation $R=\{(P\parallel \textbf{nil}, P)\}\cup \textbf{Id}$ is a FR strongly probabilistic hp-bisimulation, we omit it;
  \item $P_1\parallel P_2\sim_{php}^{fr} P_2\parallel P_1$. It is sufficient to prove the relation $R=\{(P_1\parallel P_2, P_2\parallel P_1)\}\cup \textbf{Id}$ is a FR strongly probabilistic hp-bisimulation, we omit it;
  \item $(P_1\parallel P_2)\parallel P_3\sim_{php}^{fr} P_1\parallel (P_2\parallel P_3)$. It is sufficient to prove the relation $R=\{((P_1\parallel P_2)\parallel P_3, P_1\parallel (P_2\parallel P_3))\}\cup \textbf{Id}$ is a FR strongly probabilistic hp-bisimulation, we omit it;
  \item $(y)(P_1\parallel P_2)\sim_{php}^{fr} (y)P_1\parallel (y)P_2$, if $y\notin fn(P_1)\cap fn(P_2)$. It is sufficient to prove the relation $R=\{((y)(P_1\parallel P_2), (y)P_1\parallel (y)P_2)\}\cup \textbf{Id}$, if $y\notin fn(P_1)\cap fn(P_2)$, is a FR strongly probabilistic hp-bisimulation, we omit it.
\end{enumerate}
\end{proof}

\begin{theorem}[Parallel laws for FR strongly probabilistic hhp-bisimilarity]
The parallel laws for FR strongly probabilistic hhp-bisimilarity are as follows.

\begin{enumerate}
  \item $P\parallel \textbf{nil}\sim_{phhp}^{fr} P$;
  \item $P_1\parallel P_2\sim_{phhp}^{fr} P_2\parallel P_1$;
  \item $(P_1\parallel P_2)\parallel P_3\sim_{phhp}^{fr} P_1\parallel (P_2\parallel P_3)$;
  \item $(y)(P_1\parallel P_2)\sim_{phhp}^{fr} (y)P_1\parallel (y)P_2$, if $y\notin fn(P_1)\cap fn(P_2)$.
\end{enumerate}
\end{theorem}

\begin{proof}
\begin{enumerate}
  \item $P\parallel \textbf{nil}\sim_{phhp}^{fr} P$. It is sufficient to prove the relation $R=\{(P\parallel \textbf{nil}, P)\}\cup \textbf{Id}$ is a FR strongly probabilistic hhp-bisimulation, we omit it;
  \item $P_1\parallel P_2\sim_{phhp}^{fr} P_2\parallel P_1$. It is sufficient to prove the relation $R=\{(P_1\parallel P_2, P_2\parallel P_1)\}\cup \textbf{Id}$ is a FR strongly probabilistic hhp-bisimulation, we omit it;
  \item $(P_1\parallel P_2)\parallel P_3\sim_{phhp}^{fr} P_1\parallel (P_2\parallel P_3)$. It is sufficient to prove the relation $R=\{((P_1\parallel P_2)\parallel P_3, P_1\parallel (P_2\parallel P_3))\}\cup \textbf{Id}$ is a FR strongly probabilistic hhp-bisimulation, we omit it;
  \item $(y)(P_1\parallel P_2)\sim_{phhp}^{fr} (y)P_1\parallel (y)P_2$, if $y\notin fn(P_1)\cap fn(P_2)$. It is sufficient to prove the relation $R=\{((y)(P_1\parallel P_2), (y)P_1\parallel (y)P_2)\}\cup \textbf{Id}$, if $y\notin fn(P_1)\cap fn(P_2)$, is a FR strongly probabilistic hhp-bisimulation, we omit it.
\end{enumerate}
\end{proof}

\begin{theorem}[Expansion law for truly concurrent bisimilarities]
Let $P\equiv\boxplus_i\sum_j \alpha_{ij}.P_{ij}$ and $Q\equiv\boxplus_{k}\sum_l\beta_{kl}.Q_{kl}$, where $bn(\alpha_{ij})\cap fn(Q)=\emptyset$ for all $i,j$, and
  $bn(\beta_{kl})\cap fn(P)=\emptyset$ for all $k,l$. Then,

\begin{enumerate}
  \item $P\parallel Q\sim_{pp}^{fr} \boxplus_{i}\boxplus_{k}\sum_j\sum_l (\alpha_{ij}\parallel \beta_{kl}).(P_{ij}\parallel Q_{kl})+\boxplus_i\boxplus_k\sum_{\alpha_{ij} \textrm{ comp }\beta_{kl}}\tau.R_{ijkl}$;
  \item $P\parallel Q\sim_{ps}^{fr} \boxplus_{i}\boxplus_{k}\sum_j\sum_l (\alpha_{ij}\parallel \beta_{kl}).(P_{ij}\parallel Q_{kl})+\boxplus_i\boxplus_k\sum_{\alpha_{ij} \textrm{ comp }\beta_{kl}}\tau.R_{ijkl}$;
  \item $P\parallel Q\sim_{php}^{fr} \boxplus_{i}\boxplus_{k}\sum_j\sum_l (\alpha_{ij}\parallel \beta_{kl}).(P_{ij}\parallel Q_{kl})+\boxplus_i\boxplus_k\sum_{\alpha_{ij} \textrm{ comp }\beta_{kl}}\tau.R_{ijkl}$;
  \item $P\parallel Q\nsim_{phhp} \boxplus_{i}\boxplus_{k}\sum_j\sum_l (\alpha_{ij}\parallel \beta_{kl}).(P_{ij}\parallel Q_{kl})+\boxplus_i\boxplus_k\sum_{\alpha_{ij} \textrm{ comp }\beta_{kl}}\tau.R_{ijkl}$.
\end{enumerate}

Where $\alpha_{ij}$ comp $\beta_{kl}$ and $R_{ijkl}$ are defined as follows:
\begin{enumerate}
  \item $\alpha_{ij}$ is $\overline{x}u$ and $\beta_{kl}$ is $x(v)$, then $R_{ijkl}=P_{ij}\parallel Q_{kl}\{u/v\}$;
  \item $\alpha_{ij}$ is $\overline{x}(u)$ and $\beta_{kl}$ is $x(v)$, then $R_{ijkl}=(w)(P_{ij}\{w/u\}\parallel Q_{kl}\{w/v\})$, if $w\notin fn((u)P_{ij})\cup fn((v)Q_{kl})$;
  \item $\alpha_{ij}$ is $x(v)$ and $\beta_{kl}$ is $\overline{x}u$, then $R_{ijkl}=P_{ij}\{u/v\}\parallel Q_{kl}$;
  \item $\alpha_{ij}$ is $x(v)$ and $\beta_{kl}$ is $\overline{x}(u)$, then $R_{ijkl}=(w)(P_{ij}\{w/v\}\parallel Q_{kl}\{w/u\})$, if $w\notin fn((v)P_{ij})\cup fn((u)Q_{kl})$.
\end{enumerate}

Let $P\equiv\boxplus_i\sum_j P_{ij}.\alpha_{ij}[m]$ and $Q\equiv\boxplus_{k}\sum_l Q_{kl}.\beta_{kl}[m]$, where $bn(\alpha_{ij}[m])\cap fn(Q)=\emptyset$ for all $i,j$, and
  $bn(\beta_{kl}[m])\cap fn(P)=\emptyset$ for all $k,l$. Then,

\begin{enumerate}
  \item $P\parallel Q\sim_{pp}^{fr} \boxplus_{i}\boxplus_{k}\sum_j\sum_l(P_{ij}\parallel Q_{kl}).(\alpha_{ij}[m]\parallel \beta_{kl}[m])+\boxplus_i\boxplus_k\sum_{\alpha_{ij} \textrm{ comp }\beta_{kl}} R_{ijkl}.\tau$;
  \item $P\parallel Q\sim_{ps}^{fr} \boxplus_{i}\boxplus_{k}\sum_j\sum_l(P_{ij}\parallel Q_{kl}).(\alpha_{ij}[m]\parallel \beta_{kl}[m])+\boxplus_i\boxplus_k\sum_{\alpha_{ij} \textrm{ comp }\beta_{kl}} R_{ijkl}.\tau$;
  \item $P\parallel Q\sim_{php}^{fr} \boxplus_{i}\boxplus_{k}\sum_j\sum_l(P_{ij}\parallel Q_{kl}).(\alpha_{ij}[m]\parallel \beta_{kl}[m])+\boxplus_i\boxplus_k\sum_{\alpha_{ij} \textrm{ comp }\beta_{kl}} R_{ijkl}.\tau$;
  \item $P\parallel Q\nsim_{phhp} \boxplus_{i}\boxplus_{k}\sum_j\sum_l(P_{ij}\parallel Q_{kl}).(\alpha_{ij}[m]\parallel \beta_{kl}[m])+\boxplus_i\boxplus_k\sum_{\alpha_{ij} \textrm{ comp }\beta_{kl}} R_{ijkl}.\tau$.
\end{enumerate}

Where $\alpha_{ij}[m]$ comp $\beta_{kl}[m]$ and $R_{ijkl}$ are defined as follows:
\begin{enumerate}
  \item $\alpha_{ij}[m]$ is $\overline{x}u$ and $\beta_{kl}[m]$ is $x(v)$, then $R_{ijkl}=P_{ij}\parallel Q_{kl}\{u/v\}$;
  \item $\alpha_{ij}[m]$ is $\overline{x}(u)$ and $\beta_{kl}[m]$ is $x(v)$, then $R_{ijkl}=(w)(P_{ij}\{w/u\}\parallel Q_{kl}\{w/v\})$, if $w\notin fn((u)P_{ij})\cup fn((v)Q_{kl})$;
  \item $\alpha_{ij}[m]$ is $x(v)$ and $\beta_{kl}[m]$ is $\overline{x}u$, then $R_{ijkl}=P_{ij}\{u/v\}\parallel Q_{kl}$;
  \item $\alpha_{ij}[m]$ is $x(v)$ and $\beta_{kl}[m]$ is $\overline{x}(u)$, then $R_{ijkl}=(w)(P_{ij}\{w/v\}\parallel Q_{kl}\{w/u\})$, if $w\notin fn((v)P_{ij})\cup fn((u)Q_{kl})$.
\end{enumerate}
\end{theorem}

\begin{proof}
According to the definition of FR strongly probabilistic truly concurrent bisimulations, we can easily prove the above equations, and we omit the proof.
\end{proof}

\begin{theorem}[Equivalence and congruence for FR strongly probabilistic pomset bisimilarity]
We can enjoy the full congruence modulo FR strongly probabilistic pomset bisimilarity.

\begin{enumerate}
  \item $\sim_{pp}^{fr}$ is an equivalence relation;
  \item If $P\sim_{pp}^{fr} Q$ then
  \begin{enumerate}
    \item $\alpha.P\sim_{pp}^{f} \alpha.Q$, $\alpha$ is a free action;
    \item $P.\alpha[m]\sim_{pp}^{r}Q.\alpha[m]$, $\alpha[m]$ is a free action;
    \item $\phi.P\sim_{pp}^{f} \phi.Q$;
    \item $P.\phi\sim_{pp}^{r}Q.\phi$;
    \item $P+R\sim_{pp}^{fr} Q+R$;
    \item $P\boxplus_{\pi} R\sim_{pp}^{fr}Q\boxplus_{\pi}R$;
    \item $P\parallel R\sim_{pp}^{fr} Q\parallel R$;
    \item $(w)P\sim_{pp}^{fr} (w)Q$;
    \item $x(y).P\sim_{pp}^{f} x(y).Q$;
    \item $P.x(y)[m]\sim_{pp}^{r}Q.x(y)[m]$.
  \end{enumerate}
\end{enumerate}
\end{theorem}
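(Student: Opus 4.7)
The plan is to prove the two clauses separately, following the same template used throughout the paper for congruence results.

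For clause (1), I would verify the three defining properties of an equivalence relation. Reflexivity follows immediately because the identity relation $\textbf{Id}$ is trivially an FR strongly probabilistic pomset bisimulation: every forward/reverse/probabilistic transition of $P$ is matched by itself, $X_1 \sim X_1$, and the PDF condition $\mu(C,C') = \mu(C,C')$ holds trivially. Symmetry is built into the definition since the bisimulation clauses are stated with ``and vice-versa'', so if $R$ witnesses $P \sim_{pp}^{fr} Q$ then $R^{-1}$ witnesses $Q \sim_{pp}^{fr} P$. Transitivity follows by composing bisimulations: if $R_1$ witnesses $P \sim_{pp}^{fr} Q$ and $R_2$ witnesses $Q \sim_{pp}^{fr} R$, then $R_1 \circ R_2$ is readily checked to be an FR strongly probabilistic pomset bisimulation, with the PDF condition transferred via $\mu(C_1,C) = \mu(C_2,C) = \mu(C_3,C)$ on the quotient by the composed relation.

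For clause (2), each sub-case proceeds by constructing a candidate relation of the form $R = \{(C[P], C[Q]) : (P,Q) \in \sim_{pp}^{fr}\} \cup \textbf{Id}$ where $C[\cdot]$ is the surrounding context, and verifying the five conditions of an FR strongly probabilistic pomset bisimulation: (i) matching forward transitions by the relevant action transition rules (\textbf{ACT}, \textbf{SUM}, \textbf{PAR}, etc.) from Table \ref{TRForPITC8}; (ii) matching reverse transitions via \textbf{RACT}, \textbf{RSUM}, \textbf{RPAR} from Table \ref{TRForPITC83}; (iii) matching probabilistic transitions via the rules in Table \ref{PTRForPITC8}; (iv) preserving the PDF equality; and (v) the $\surd$-closure condition. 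For prefix contexts ($\alpha.P$, $P.\alpha[m]$, $\phi.P$, $x(y).P$, $P.x(y)[m]$), the transition from the composite is forced to fire the prefix rule, reducing to $P$ versus $Q$, which are related by hypothesis. For $P+R$ and $P\boxplus_{\pi}R$, the summation rules dispatch to either $P/Q$ (related by hypothesis) or $R$ (related by identity). For $P \parallel R$, the parallel rules \textbf{PAR}$_{1\text{--}4}$, \textbf{COM}, \textbf{CLOSE} produce residuals that are again of the form $(P' \parallel R', Q' \parallel R')$, keeping the candidate relation closed. The restriction case $(w)P$ is handled by \textbf{RES} and \textbf{OPEN}.

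The main obstacle will be the parallel case $P \parallel R \sim_{pp}^{fr} Q \parallel R$, because one must simultaneously track four kinds of structure: (a) the forward action transitions with all their side conditions on bound names ($bn(\alpha) \cap fn(Q) = \emptyset$, etc.), (b) the reverse transitions with memory stamps $[m]$, (c) the probabilistic structure which must cohere under the parallel composition $\textbf{PPAR}$ (so that $\mu$ on the product is compatible with $\mu$ on components), and (d) the handling of bound-output synchronization via \textbf{CLOSE}/\textbf{RCLOSE} which introduces a fresh restriction $(w)$ in the residual. In particular, verifying the PDF-preservation condition $\mu(C_1\parallel C_R, C) = \mu(C_2 \parallel C_R, C)$ on the quotient requires the product-distribution property of probabilistic transitions under $\parallel$, which rests on \textbf{PPAR}; this is the subtle step.

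For the input prefix $x(y).P \sim_{pp}^{f} x(y).Q$, a secondary difficulty is late bisimilarity: one must match the \textbf{INPUT-ACT} transition for \emph{all} substitutions $\{w/y\}$ simultaneously, invoking the congruence hypothesis $P\{w/y\} \sim_{pp}^{fr} Q\{w/y\}$, which in turn requires appealing to substitutivity results analogous to the earlier Propositions on transitions under substitution. Once these subtleties are handled, each remaining sub-case reduces to routine diagram-chasing that the paper elides with ``we omit it''.
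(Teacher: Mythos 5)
Your proposal follows essentially the same route as the paper: clause (1) by direct verification of reflexivity, symmetry and transitivity, and clause (2) by exhibiting, for each operator, a candidate relation built from the context applied to related processes together with $\textbf{Id}$ and checking the bisimulation conditions against the transition rules — which is exactly what the paper asserts and then omits. Your formulation of the candidate relation as $\{(C[P],C[Q]):(P,Q)\in\sim_{pp}^{fr}\}\cup\textbf{Id}$ (closed over all related pairs, rather than the single pair the paper writes) is the version actually needed for closure under transitions in the $P\parallel R$ case, and your flagged subtleties (PDF preservation under $\textbf{PPAR}$, late-input substitution closure) are the genuine ones the paper elides.
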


\begin{proof}
\begin{enumerate}
  \item $\sim_{pp}^{fr}$ is an equivalence relation, it is obvious;
  \item If $P\sim_{pp}^{fr} Q$ then
  \begin{enumerate}
    \item $\alpha.P\sim_{pp}^{f} \alpha.Q$, $\alpha$ is a free action. It is sufficient to prove the relation $R=\{(\alpha.P, \alpha.Q)\}\cup \textbf{Id}$ is a F strongly probabilistic pomset bisimulation, we omit it;
    \item $P.\alpha[m]\sim_{pp}^{r}Q.\alpha[m]$, $\alpha[m]$ is a free action. It is sufficient to prove the relation $R=\{(P.\alpha[m], Q.\alpha[m])\}\cup \textbf{Id}$ is a R strongly probabilistic pomset bisimulation, we omit it;
    \item $\phi.P\sim_{pp}^{f} \phi.Q$. It is sufficient to prove the relation $R=\{(\phi.P, \phi.Q)\}\cup \textbf{Id}$ is a F strongly probabilistic pomset bisimulation, we omit it;
    \item $P.\phi\sim_{pp}^{r}Q.\phi$. It is sufficient to prove the relation $R=\{(P.\phi, Q.\phi)\}\cup \textbf{Id}$ is a R strongly probabilistic pomset bisimulation, we omit it;
    \item $P+R\sim_{pp}^{fr} Q+R$. It is sufficient to prove the relation $R=\{(P+R, Q+R)\}\cup \textbf{Id}$ is a FR strongly probabilistic pomset bisimulation, we omit it;
    \item $P\boxplus_{\pi} R\sim_{pp}^{fr}Q\boxplus_{\pi}R$. It is sufficient to prove the relation $R=\{(P\boxplus_{\pi} R, Q\boxplus_{\pi} R)\}\cup \textbf{Id}$ is a FR strongly probabilistic pomset bisimulation, we omit it;
    \item $P\parallel R\sim_{pp}^{fr} Q\parallel R$. It is sufficient to prove the relation $R=\{(P\parallel R, Q\parallel R)\}\cup \textbf{Id}$ is a FR strongly probabilistic pomset bisimulation, we omit it;
    \item $(w)P\sim_{pp}^{fr} (w)Q$. It is sufficient to prove the relation $R=\{((w)P, (w)Q)\}\cup \textbf{Id}$ is a FR strongly probabilistic pomset bisimulation, we omit it;
    \item $x(y).P\sim_{pp}^{f} x(y).Q$. It is sufficient to prove the relation $R=\{(x(y).P, x(y).Q)\}\cup \textbf{Id}$ is a F strongly probabilistic pomset bisimulation, we omit it;
    \item $P.x(y)[m]\sim_{pp}^{r}Q.x(y)[m]$. It is sufficient to prove the relation $R=\{(P.x(y)[m], Q.x(y)[m])\}\cup \textbf{Id}$ is a R strongly probabilistic pomset bisimulation, we omit it.
  \end{enumerate}
\end{enumerate}
\end{proof}

\begin{theorem}[Equivalence and congruence for FR strongly probabilistic step bisimilarity]
We can enjoy the full congruence modulo FR strongly probabilistic step bisimilarity.

\begin{enumerate}
  \item $\sim_{ps}^{fr}$ is an equivalence relation;
  \item If $P\sim_{ps}^{fr} Q$ then
  \begin{enumerate}
    \item $\alpha.P\sim_{ps}^{f} \alpha.Q$, $\alpha$ is a free action;
    \item $P.\alpha[m]\sim_{ps}^{r}Q.\alpha[m]$, $\alpha[m]$ is a free action;
    \item $\phi.P\sim_{ps}^{f} \phi.Q$;
    \item $P.\phi\sim_{ps}^{r}Q.\phi$;
    \item $P+R\sim_{ps}^{fr} Q+R$;
    \item $P\boxplus_{\pi} R\sim_{ps}^{fr}Q\boxplus_{\pi}R$;
    \item $P\parallel R\sim_{ps}^{fr} Q\parallel R$;
    \item $(w)P\sim_{ps}^{fr} (w)Q$;
    \item $x(y).P\sim_{ps}^{f} x(y).Q$;
    \item $P.x(y)[m]\sim_{ps}^{r}Q.x(y)[m]$.
  \end{enumerate}
\end{enumerate}
\end{theorem}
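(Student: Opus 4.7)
The plan is to follow the same two-part template used in the paper for its analogous congruence theorems (for $\sim_p^{fr}$, $\sim_{hp}^{fr}$, and the non-reversible $\sim_{ps}$). First I would dispatch the equivalence claim: reflexivity holds because the identity relation $\textbf{Id}$ is a trivial FR strongly probabilistic step bisimulation (every forward step, reverse step, probabilistic transition, and PDF value of $P$ is matched by itself); symmetry follows from the symmetric formulation of the bisimulation definition (the "and vice-versa" clauses and the condition $\mu(C_1,C)=\mu(C_2,C)$); transitivity follows by relational composition, where one verifies that if $R_1$ and $R_2$ are FR strongly probabilistic step bisimulations, then $R_1\circ R_2$ is too — the only mildly delicate point being the PDF clause, which requires that the quotient equivalence classes compose correctly, but this is handled exactly as in the probabilistic congruence proofs already invoked for CTC.

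For each congruence clause I would exhibit the witness relation $R=\{(\mathcal{C}[P],\mathcal{C}[Q])\}\cup\textbf{Id}$, where $\mathcal{C}[\cdot]$ is the context in question (e.g.\ $\alpha.[\cdot]$, $[\cdot]+R$, $[\cdot]\boxplus_\pi R$, $[\cdot]\parallel R$, $(w)[\cdot]$, $x(y).[\cdot]$, and their reverse-prefix analogues), and verify the five bisimulation clauses by case analysis on the transition rules in Tables \ref{PTRForPITC8} and \ref{TRForPITC8}--\ref{TRForPITC84}. For the prefix cases (a)--(d) and (i)--(j), the probabilistic transition $\rightsquigarrow$ fires first by \textbf{PTAU-ACT}/\textbf{POUTPUT-ACT}/\textbf{PINPUT-ACT}, after which any outgoing forward or reverse action transition leads to $P$ or $Q$ respectively, which are related by hypothesis; matching of the guard prefix $\phi.P$ proceeds via the same scheme since $\phi$ acts as a state-test that does not distinguish $P$ from $Q$. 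For the summation case (e) and box-summation case (f), rules \textbf{PSUM} and \textbf{PBOX-SUM} ensure that the probabilistic transition of $P+R$ (resp.\ $P\boxplus_\pi R$) decomposes through the components, and the congruence of $\sim_{ps}^{fr}$ on the summand follows from the hypothesis $P\sim_{ps}^{fr}Q$; the PDF clause uses the standard argument that $\mu(P+R,C)=\mu(P,C)+\mu(R,C)$ up to $R$-equivalence classes, and analogously for $\boxplus_\pi$ with the probability weighting $\pi$.

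The parallel composition case (g) is the main technical obstacle, because forward transitions of $P\parallel R$ via \textbf{PAR}$_3$, \textbf{PAR}$_4$, \textbf{COM}, and \textbf{CLOSE} combine a step of $P$ with a step of $R$, and the FR step bisimulation must jointly match these synchronized sets of pairwise-concurrent events (including bound-name communications that trigger $\alpha$-conversion via the substitution lemmas proved earlier in the chapter). One must show that if $P\xrightarrow{X}P'$ is matched by $Q\xrightarrow{X'}Q'$ with $X\sim X'$ and the images related by $\sim_{ps}^{fr}$, then $P\parallel R\xrightarrow{Y}P'\parallel R''$ (for $Y$ a step including contributions from $R$) is matched by $Q\parallel R\xrightarrow{Y'}Q'\parallel R''$ with $Y\sim Y'$; the reverse-step direction uses the dual rules \textbf{RPAR}$_{1\text{--}4}$, \textbf{RCOM}, \textbf{RCLOSE}, and the probabilistic transitions use \textbf{PPAR}. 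The restriction case (h) is handled by \textbf{RES}$_{1,2}$, \textbf{OPEN}$_{1,2}$, \textbf{RRES}$_{1,2}$, \textbf{ROPEN}$_{1,2}$, and \textbf{PRES}, with the side condition $y\notin n(\alpha)$ controlling bound-name freshness. Throughout, the PDF and $[\surd]_R=\{\surd\}$ clauses of FR strongly probabilistic step bisimulation are preserved since $\textbf{Id}$ contributes only singleton classes and the contextual pair contributes a single additional class whose PDF values coincide by construction. Since the paper's convention is to elide the routine verifications, my written proof would mirror the earlier congruence proofs and simply enumerate the witness relations with the phrase \emph{``it is sufficient to prove the relation $R$ is a FR strongly probabilistic step bisimulation, we omit it''} for each subcase.
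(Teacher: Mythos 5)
Your proposal takes essentially the same route as the paper's own proof, which declares the equivalence-relation part obvious and, for each congruence clause, exhibits exactly the witness relation $R=\{(\mathcal{C}[P],\mathcal{C}[Q])\}\cup \textbf{Id}$ and omits the routine verification; the additional detail you sketch (case analysis on the probabilistic, forward and reverse rules, the PDF clause, and the parallel/restriction side conditions) is just the content the paper elides. The only caveat, which applies equally to the paper's own wording, is that after firing a prefix or summand the derivative pair is $(P,Q)$ itself, so the witness relation must in fact be closed up under $\sim_{ps}^{fr}$ (i.e.\ take $\{(\mathcal{C}[P],\mathcal{C}[Q])\}\cup\sim_{ps}^{fr}$) rather than merely unioned with $\textbf{Id}$.
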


\begin{proof}
\begin{enumerate}
  \item $\sim_{ps}^{fr}$ is an equivalence relation, it is obvious;
  \item If $P\sim_{ps}^{fr} Q$ then
  \begin{enumerate}
    \item $\alpha.P\sim_{ps}^{f} \alpha.Q$, $\alpha$ is a free action. It is sufficient to prove the relation $R=\{(\alpha.P, \alpha.Q)\}\cup \textbf{Id}$ is a F strongly probabilistic step bisimulation, we omit it;
    \item $P.\alpha[m]\sim_{ps}^{r}Q.\alpha[m]$, $\alpha[m]$ is a free action. It is sufficient to prove the relation $R=\{(P.\alpha[m], Q.\alpha[m])\}\cup \textbf{Id}$ is a R strongly probabilistic step bisimulation, we omit it;
    \item $\phi.P\sim_{ps}^{f} \phi.Q$. It is sufficient to prove the relation $R=\{(\phi.P, \phi.Q)\}\cup \textbf{Id}$ is a F strongly probabilistic step bisimulation, we omit it;
    \item $P.\phi\sim_{ps}^{r}Q.\phi$. It is sufficient to prove the relation $R=\{(P.\phi, Q.\phi)\}\cup \textbf{Id}$ is a R strongly probabilistic step bisimulation, we omit it;
    \item $P+R\sim_{ps}^{fr} Q+R$. It is sufficient to prove the relation $R=\{(P+R, Q+R)\}\cup \textbf{Id}$ is a FR strongly probabilistic step bisimulation, we omit it;
    \item $P\boxplus_{\pi} R\sim_{ps}^{fr}Q\boxplus_{\pi}R$. It is sufficient to prove the relation $R=\{(P\boxplus_{\pi} R, Q\boxplus_{\pi} R)\}\cup \textbf{Id}$ is a FR strongly probabilistic step bisimulation, we omit it;
    \item $P\parallel R\sim_{ps}^{fr} Q\parallel R$. It is sufficient to prove the relation $R=\{(P\parallel R, Q\parallel R)\}\cup \textbf{Id}$ is a FR strongly probabilistic step bisimulation, we omit it;
    \item $(w)P\sim_{ps}^{fr} (w)Q$. It is sufficient to prove the relation $R=\{((w)P, (w)Q)\}\cup \textbf{Id}$ is a FR strongly probabilistic step bisimulation, we omit it;
    \item $x(y).P\sim_{ps}^{f} x(y).Q$. It is sufficient to prove the relation $R=\{(x(y).P, x(y).Q)\}\cup \textbf{Id}$ is a F strongly probabilistic step bisimulation, we omit it;
    \item $P.x(y)[m]\sim_{ps}^{r}Q.x(y)[m]$. It is sufficient to prove the relation $R=\{(P.x(y)[m], Q.x(y)[m])\}\cup \textbf{Id}$ is a R strongly probabilistic step bisimulation, we omit it.
  \end{enumerate}
\end{enumerate}
\end{proof}

\begin{theorem}[Equivalence and congruence for FR strongly probabilistic hp-bisimilarity]
We can enjoy the full congruence modulo FR strongly probabilistic hp-bisimilarity.

\begin{enumerate}
  \item $\sim_{php}^{fr}$ is an equivalence relation;
  \item If $P\sim_{php}^{fr} Q$ then
  \begin{enumerate}
    \item $\alpha.P\sim_{php}^{f} \alpha.Q$, $\alpha$ is a free action;
    \item $P.\alpha[m]\sim_{php}^{r}Q.\alpha[m]$, $\alpha[m]$ is a free action;
    \item $\phi.P\sim_{php}^{f} \phi.Q$;
    \item $P.\phi\sim_{php}^{r}Q.\phi$;
    \item $P+R\sim_{php}^{fr} Q+R$;
    \item $P\boxplus_{\pi} R\sim_{php}^{fr}Q\boxplus_{\pi}R$;
    \item $P\parallel R\sim_{php}^{fr} Q\parallel R$;
    \item $(w)P\sim_{php}^{fr} (w)Q$;
    \item $x(y).P\sim_{php}^{f} x(y).Q$;
    \item $P.x(y)[m]\sim_{php}^{r}Q.x(y)[m]$.
  \end{enumerate}
\end{enumerate}
\end{theorem}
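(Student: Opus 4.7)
The plan is to follow the same schema used throughout the paper: for each closure property, exhibit a candidate relation of the form $R=\{(C[P],f,C[Q]):(P,f',Q)\in\sim_{php}^{fr}\}\cup\overline{\textbf{Id}}$, where $\overline{\textbf{Id}}$ is the identity posetal relation, and show it is a FR strongly probabilistic hp-bisimulation. Because a hp-bisimulation is a posetal relation rather than a binary one, I will carry along an order-isomorphism $f$ between configurations and update it via $f[e_1\mapsto e_2]$ as new events are produced, exactly as in the definition. Throughout, I can appeal to the alpha-conversion theorem already proved in this chapter to freely rename bound names; this will be important in the input prefix and restriction cases.

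First I would dispose of (1): reflexivity uses $\overline{\textbf{Id}}$, symmetry follows because the definition of FR strongly probabilistic hp-bisimulation is symmetric (the ``and vice-versa'' clauses), and transitivity follows from relational composition of two such posetal relations --- one must check that composition of isomorphisms is an isomorphism and that the probabilistic matching condition $\mu(C_1,C)=\mu(C_2,C)$ is preserved under composition, which is routine. For (2), I would treat the prefix cases (a)--(d), (i), (j) first, since these are essentially immediate: the only initial transitions available from $\alpha.P$, $P.\alpha[m]$, $\phi.P$, $P.\phi$, $x(y).P$, $P.x(y)[m]$ fire the prefix, producing matched residuals related by the hypothesis $P\sim_{php}^{fr}Q$, after which the inherited hp-bisimulation takes over. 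The probabilistic transition $\rightsquigarrow$ in these cases only collapses the prefix into its $\breve{\cdot}$-form, and the PDFs agree trivially.

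Next, for the summation cases (e) and (f), I would take $R=\{(P+R,f,Q+R)\}\cup\overline{\textbf{Id}}$ and respectively $R=\{(P\boxplus_\pi R,f,Q\boxplus_\pi R)\}\cup\overline{\textbf{Id}}$, together with pairs obtained by first resolving the sum. The rules $\textbf{SUM}_1$, $\textbf{SUM}_2$, $\textbf{PSUM}$, and $\textbf{PBOX-SUM}$ make the forward and probabilistic steps mechanical; the reverse steps are symmetric. For the box-sum I must also verify the PDF side-condition: when $P+R\rightsquigarrow$ picks the $P$-branch with mass $\pi$, the hypothesis $P\sim_{php}^{fr}Q$ yields the same mass on the matching $Q$-branch, and the $R$-branches match identically. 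Restriction (h) and parallel composition (g) are handled analogously with $R=\{((w)P,f,(w)Q)\}\cup\overline{\textbf{Id}}$ and $R=\{(P\parallel R,f,Q\parallel R)\}\cup\overline{\textbf{Id}}$, invoking the side-conditions on bound names in $\textbf{RES}_i$, $\textbf{OPEN}_i$, $\textbf{PAR}_i$, $\textbf{COM}$, $\textbf{CLOSE}$ (and their reverse counterparts).

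The hard part will be the parallel composition case (g), because communication ($\textbf{COM}$, $\textbf{CLOSE}$, $\textbf{PAR}_3$, $\textbf{PAR}_4$) introduces $\tau$-transitions whose produced events must be accommodated by extending the posetal isomorphism $f$ with \emph{two} simultaneously matched events; one must verify that the order-isomorphism property is preserved, that synchronization of a bound output with an input via $\textbf{CLOSE}$ produces matching scope-extruded residuals $(w)(P'\parallel R')$ and $(w)(Q'\parallel R')$, and that the probabilistic step $P\parallel R\rightsquigarrow P'\parallel R'$ (rule $\textbf{PPAR}$) factors componentwise so the PDF condition reduces to the PDF conditions for $P$ vs.\ $Q$ and for $R$ vs.\ itself. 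Once this is verified, the rest of the congruence proof reduces to a case analysis on the transition rules; I would then remark, as the paper does elsewhere, that the same candidate relation witnesses each sub-item and omit the routine verifications.
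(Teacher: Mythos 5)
Your proposal takes essentially the same approach as the paper: for each closure property the paper simply exhibits the candidate relation $R=\{(C[P],C[Q])\}\cup\textbf{Id}$ and asserts it is a FR strongly probabilistic hp-bisimulation (omitting the verification), and declares the equivalence-relation claim obvious. Your write-up fills in more of the intended verification (the posetal isomorphism bookkeeping, the PDF condition, and the communication cases under $\parallel$), but the underlying strategy and candidate relations coincide with the paper's.
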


\begin{proof}
\begin{enumerate}
  \item $\sim_{php}^{fr}$ is an equivalence relation, it is obvious;
  \item If $P\sim_{php}^{fr} Q$ then
  \begin{enumerate}
    \item $\alpha.P\sim_{php}^{f} \alpha.Q$, $\alpha$ is a free action. It is sufficient to prove the relation $R=\{(\alpha.P, \alpha.Q)\}\cup \textbf{Id}$ is a F strongly probabilistic hp-bisimulation, we omit it;
    \item $P.\alpha[m]\sim_{php}^{r}Q.\alpha[m]$, $\alpha[m]$ is a free action. It is sufficient to prove the relation $R=\{(P.\alpha[m], Q.\alpha[m])\}\cup \textbf{Id}$ is a R strongly probabilistic hp-bisimulation, we omit it;
    \item $\phi.P\sim_{php}^{f} \phi.Q$. It is sufficient to prove the relation $R=\{(\phi.P, \phi.Q)\}\cup \textbf{Id}$ is a F strongly probabilistic hp-bisimulation, we omit it;
    \item $P.\phi\sim_{php}^{r}Q.\phi$. It is sufficient to prove the relation $R=\{(P.\phi, Q.\phi)\}\cup \textbf{Id}$ is a R strongly probabilistic hp-bisimulation, we omit it;
    \item $P+R\sim_{php}^{fr} Q+R$. It is sufficient to prove the relation $R=\{(P+R, Q+R)\}\cup \textbf{Id}$ is a FR strongly probabilistic hp-bisimulation, we omit it;
    \item $P\boxplus_{\pi} R\sim_{php}^{fr}Q\boxplus_{\pi}R$. It is sufficient to prove the relation $R=\{(P\boxplus_{\pi} R, Q\boxplus_{\pi} R)\}\cup \textbf{Id}$ is a FR strongly probabilistic hp-bisimulation, we omit it;
    \item $P\parallel R\sim_{php}^{fr} Q\parallel R$. It is sufficient to prove the relation $R=\{(P\parallel R, Q\parallel R)\}\cup \textbf{Id}$ is a FR strongly probabilistic hp-bisimulation, we omit it;
    \item $(w)P\sim_{php}^{fr} (w)Q$. It is sufficient to prove the relation $R=\{((w)P, (w)Q)\}\cup \textbf{Id}$ is a FR strongly probabilistic hp-bisimulation, we omit it;
    \item $x(y).P\sim_{php}^{f} x(y).Q$. It is sufficient to prove the relation $R=\{(x(y).P, x(y).Q)\}\cup \textbf{Id}$ is a F strongly probabilistic hp-bisimulation, we omit it;
    \item $P.x(y)[m]\sim_{php}^{r}Q.x(y)[m]$. It is sufficient to prove the relation $R=\{(P.x(y)[m], Q.x(y)[m])\}\cup \textbf{Id}$ is a R strongly probabilistic hp-bisimulation, we omit it.
  \end{enumerate}
\end{enumerate}
\end{proof}

\begin{theorem}[Equivalence and congruence for FR strongly probabilistic hhp-bisimilarity]
We can enjoy the full congruence modulo FR strongly probabilistic hhp-bisimilarity.

\begin{enumerate}
  \item $\sim_{phhp}^{fr}$ is an equivalence relation;
  \item If $P\sim_{phhp}^{fr} Q$ then
  \begin{enumerate}
    \item $\alpha.P\sim_{phhp}^{f} \alpha.Q$, $\alpha$ is a free action;
    \item $P.\alpha[m]\sim_{phhp}^{r}Q.\alpha[m]$, $\alpha[m]$ is a free action;
    \item $\phi.P\sim_{phhp}^{f} \phi.Q$;
    \item $P.\phi\sim_{phhp}^{r}Q.\phi$;
    \item $P+R\sim_{phhp}^{fr} Q+R$;
    \item $P\boxplus_{\pi} R\sim_{phhp}^{fr}Q\boxplus_{\pi}R$;
    \item $P\parallel R\sim_{phhp}^{fr} Q\parallel R$;
    \item $(w)P\sim_{phhp}^{fr} (w)Q$;
    \item $x(y).P\sim_{phhp}^{f} x(y).Q$;
    \item $P.x(y)[m]\sim_{phhp}^{r}Q.x(y)[m]$.
  \end{enumerate}
\end{enumerate}
\end{theorem}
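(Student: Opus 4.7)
The plan is to prove the theorem by following the template established in the preceding theorems for $\sim_{pp}^{fr}$, $\sim_{ps}^{fr}$, and $\sim_{php}^{fr}$ in Section \ref{s8}, treating each clause in turn. For clause (1), I would verify that $\sim_{phhp}^{fr}$ is an equivalence relation: reflexivity follows from the identity posetal relation (extended by trivial probabilistic matching and by including its reverse-transition closure), symmetry is built into the symmetric definition of hp-bisimulation, and transitivity follows from the standard composition of posetal relations, which preserves both downward closure and the PDF-matching clause $\mu(C_1,C)=\mu(C_2,C)$.

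For each congruence case (a)--(j) of clause (2), I would define the candidate relation $R=\{(\text{LHS},\text{RHS})\}\cup\textbf{Id}$, lift it to the posetal product $\langle\mathcal{C}(\mathcal{E}_{\text{LHS}}),S\rangle\overline{\times}\langle\mathcal{C}(\mathcal{E}_{\text{RHS}}),S\rangle$ by pairing with isomorphisms $f$ inherited from the witnessing hhp-bisimulation for $P\sim_{phhp}^{fr}Q$, and then verify the four obligations of an FR strongly probabilistic hhp-bisimulation: matching of probabilistic transitions $\xrsquigarrow{\pi}$ via the rules \textbf{PPAR}, \textbf{PSUM}, \textbf{PBOX-SUM}, \textbf{PRES}, \textbf{PIDE} from Table \ref{PTRForPITC8}; matching of forward action transitions respecting the extension $f[e_1\mapsto e_2]$; symmetric matching of reverse transitions $\xtworightarrow{\alpha[m]}$ in exactly the same shape; and preservation of the PDF condition and of downward closure. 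Cases (a)--(d) and (i)--(j) are immediate because the top-level prefix (whether action, guard, or their reverses) fires deterministically, producing residuals already related by the underlying witness; cases (e), (f), and (h) follow from the straightforward congruence of $+$, $\boxplus_\pi$, and restriction with respect to both probabilistic and action transitions.

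The main obstacle is the parallel case (g), $P\parallel R\sim_{phhp}^{fr} Q\parallel R$. This is the classically delicate point for hereditary history-preserving bisimilarities, and indeed the paper's own expansion theorem records $P\parallel Q\nsim_{phhp}$ for the expanded form, so the argument cannot proceed by reduction to expansion. My approach here is to take a witnessing hhp-bisimulation $R_0$ for $P\sim_{phhp}^{fr} Q$ and lift it by defining $(\langle C_P\cup C_R,s\rangle,\ f\cup\mathrm{id}_{C_R},\ \langle C_Q\cup C_R,s\rangle)\in R$ whenever $(\langle C_P,s\rangle,f,\langle C_Q,s\rangle)\in R_0$ and $C_R\in\mathcal{C}(\mathcal{E}_R)$ is concurrent with $C_P$ and $C_Q$; then I would verify downward closure component-wise on the $P$-part using downward closure of $R_0$, and on the $R$-part trivially. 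Forward and reverse action transitions split into three subcases (from $P$-side, $R$-side, or a synchronisation via \textbf{COM}/\textbf{CLOSE}/\textbf{RCOM}/\textbf{RCLOSE}), each matched by the corresponding transition on the right using the witness for $R_0$; the probabilistic matching uses \textbf{PPAR} componentwise, and the PDF condition decomposes along the independence of the two parallel branches.

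The probabilistic and guard ingredients do not introduce fundamentally new difficulties beyond what is already handled in the earlier theorems of this section: the guard prefix $\phi.P$ behaves like a deterministic action prefix for the purposes of bisimulation matching, and the $\boxplus_\pi$ operator is handled by \textbf{PBOX-SUM} together with clause~(4) of the hhp-bisimulation definition. Once the parallel case is settled, the remaining cases reduce to routine structural verification, parallel to the $\sim_{php}^{fr}$ arguments already omitted in the corresponding theorem for FR strongly probabilistic hp-bisimilarity.
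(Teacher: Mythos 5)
Your proposal takes essentially the same route as the paper: the paper's proof simply declares clause (1) obvious and, for each congruence case, asserts that $R=\{(\mathrm{LHS},\mathrm{RHS})\}\cup\textbf{Id}$ is the required F/R/FR strongly probabilistic hhp-bisimulation and omits all verification. Your sketch supplies considerably more of the omitted detail (notably the product-style lifting for the parallel case, which is the standard way to handle hhp-congruence of $\parallel$), and nothing in it conflicts with the paper's argument.
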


\begin{proof}
\begin{enumerate}
  \item $\sim_{phhp}^{fr}$ is an equivalence relation, it is obvious;
  \item If $P\sim_{phhp}^{fr} Q$ then
  \begin{enumerate}
    \item $\alpha.P\sim_{phhp}^{f} \alpha.Q$, $\alpha$ is a free action. It is sufficient to prove the relation $R=\{(\alpha.P, \alpha.Q)\}\cup \textbf{Id}$ is a F strongly probabilistic hhp-bisimulation, we omit it;
    \item $P.\alpha[m]\sim_{phhp}^{r}Q.\alpha[m]$, $\alpha[m]$ is a free action. It is sufficient to prove the relation $R=\{(P.\alpha[m], Q.\alpha[m])\}\cup \textbf{Id}$ is a R strongly probabilistic hhp-bisimulation, we omit it;
    \item $\phi.P\sim_{phhp}^{f} \phi.Q$. It is sufficient to prove the relation $R=\{(\phi.P, \phi.Q)\}\cup \textbf{Id}$ is a F strongly probabilistic hhp-bisimulation, we omit it;
    \item $P.\phi\sim_{phhp}^{r}Q.\phi$. It is sufficient to prove the relation $R=\{(P.\phi, Q.\phi)\}\cup \textbf{Id}$ is a R strongly probabilistic hhp-bisimulation, we omit it;
    \item $P+R\sim_{phhp}^{fr} Q+R$. It is sufficient to prove the relation $R=\{(P+R, Q+R)\}\cup \textbf{Id}$ is a FR strongly probabilistic hhp-bisimulation, we omit it;
    \item $P\boxplus_{\pi} R\sim_{phhp}^{fr}Q\boxplus_{\pi}R$. It is sufficient to prove the relation $R=\{(P\boxplus_{\pi} R, Q\boxplus_{\pi} R)\}\cup \textbf{Id}$ is a FR strongly probabilistic hhp-bisimulation, we omit it;
    \item $P\parallel R\sim_{phhp}^{fr} Q\parallel R$. It is sufficient to prove the relation $R=\{(P\parallel R, Q\parallel R)\}\cup \textbf{Id}$ is a FR strongly probabilistic hhp-bisimulation, we omit it;
    \item $(w)P\sim_{phhp}^{fr} (w)Q$. It is sufficient to prove the relation $R=\{((w)P, (w)Q)\}\cup \textbf{Id}$ is a FR strongly probabilistic hhp-bisimulation, we omit it;
    \item $x(y).P\sim_{phhp}^{f} x(y).Q$. It is sufficient to prove the relation $R=\{(x(y).P, x(y).Q)\}\cup \textbf{Id}$ is a F strongly probabilistic hhp-bisimulation, we omit it;
    \item $P.x(y)[m]\sim_{phhp}^{r}Q.x(y)[m]$. It is sufficient to prove the relation $R=\{(P.x(y)[m], Q.x(y)[m])\}\cup \textbf{Id}$ is a R strongly probabilistic hhp-bisimulation, we omit it.
  \end{enumerate}
\end{enumerate}
\end{proof}

\subsubsection{Recursion}

\begin{definition}
Let $X$ have arity $n$, and let $\widetilde{x}=x_1,\cdots,x_n$ be distinct names, and $fn(P)\subseteq\{x_1,\cdots,x_n\}$. The replacement of $X(\widetilde{x})$ by $P$ in $E$, written
$E\{X(\widetilde{x}):=P\}$, means the result of replacing each subterm $X(\widetilde{y})$ in $E$ by $P\{\widetilde{y}/\widetilde{x}\}$.
\end{definition}

\begin{definition}
Let $E$ and $F$ be two process expressions containing only $X_1,\cdots,X_m$ with associated name sequences $\widetilde{x}_1,\cdots,\widetilde{x}_m$. Then,
\begin{enumerate}
  \item $E\sim_{pp}^{fr} F$ means $E(\widetilde{P})\sim_{pp}^{fr} F(\widetilde{P})$;
  \item $E\sim_{ps}^{fr} F$ means $E(\widetilde{P})\sim_{ps}^{fr} F(\widetilde{P})$;
  \item $E\sim_{php}^{fr} F$ means $E(\widetilde{P})\sim_{php}^{fr} F(\widetilde{P})$;
  \item $E\sim_{phhp}^{fr} F$ means $E(\widetilde{P})\sim_{phhp}^{fr} F(\widetilde{P})$;
\end{enumerate}

for all $\widetilde{P}$ such that $fn(P_i)\subseteq \widetilde{x}_i$ for each $i$.
\end{definition}

\begin{definition}
A term or identifier is weakly guarded in $P$ if it lies within some subterm $\alpha.Q$ or $Q.\alpha[m]$ or $(\alpha_1\parallel\cdots\parallel \alpha_n).Q$ or
$Q.(\alpha_1[m]\parallel\cdots\parallel \alpha_n[m])$ of $P$.
\end{definition}

\begin{theorem}
Assume that $\widetilde{E}$ and $\widetilde{F}$ are expressions containing only $X_i$ with $\widetilde{x}_i$, and $\widetilde{A}$ and $\widetilde{B}$ are identifiers with $A_i$, $B_i$. Then, for all $i$,
\begin{enumerate}
  \item $E_i\sim_{ps}^{fr} F_i$, $A_i(\widetilde{x}_i)\overset{\text{def}}{=}E_i(\widetilde{A})$, $B_i(\widetilde{x}_i)\overset{\text{def}}{=}F_i(\widetilde{B})$, then
  $A_i(\widetilde{x}_i)\sim_{ps}^{fr} B_i(\widetilde{x}_i)$;
  \item $E_i\sim_{pp}^{fr} F_i$, $A_i(\widetilde{x}_i)\overset{\text{def}}{=}E_i(\widetilde{A})$, $B_i(\widetilde{x}_i)\overset{\text{def}}{=}F_i(\widetilde{B})$, then
  $A_i(\widetilde{x}_i)\sim_{pp}^{fr} B_i(\widetilde{x}_i)$;
  \item $E_i\sim_{php}^{fr} F_i$, $A_i(\widetilde{x}_i)\overset{\text{def}}{=}E_i(\widetilde{A})$, $B_i(\widetilde{x}_i)\overset{\text{def}}{=}F_i(\widetilde{B})$, then
  $A_i(\widetilde{x}_i)\sim_{php}^{fr} B_i(\widetilde{x}_i)$;
  \item $E_i\sim_{phhp}^{fr} F_i$, $A_i(\widetilde{x}_i)\overset{\text{def}}{=}E_i(\widetilde{A})$, $B_i(\widetilde{x}_i)\overset{\text{def}}{=}F_i(\widetilde{B})$, then
  $A_i(\widetilde{x}_i)\sim_{phhp}^{fr} B_i(\widetilde{x}_i)$.
\end{enumerate}
\end{theorem}

\begin{proof}
\begin{enumerate}
  \item $E_i\sim_{ps}^{fr} F_i$, $A_i(\widetilde{x}_i)\overset{\text{def}}{=}E_i(\widetilde{A})$, $B_i(\widetilde{x}_i)\overset{\text{def}}{=}F_i(\widetilde{B})$, then
  $A_i(\widetilde{x}_i)\sim_{ps}^{fr} B_i(\widetilde{x}_i)$.

      We will consider the case $I=\{1\}$ with loss of generality, and show the following relation $R$ is a FR strongly probabilistic step bisimulation.

      $$R=\{(G(A),G(B)):G\textrm{ has only identifier }X\}.$$

      By choosing $G\equiv X(\widetilde{y})$, it follows that $A(\widetilde{y})\sim_{ps}^{fr} B(\widetilde{y})$. It is sufficient to prove the following:
      \begin{enumerate}
        \item If $\langle G(A),s\rangle\rightsquigarrow\xrightarrow{\{\alpha_1,\cdots,\alpha_n\}}\langle P',s'\rangle$, where $\alpha_i(1\leq i\leq n)$ is a free action or bound output action with
        $bn(\alpha_1)\cap\cdots\cap bn(\alpha_n)\cap n(G(A),G(B))=\emptyset$, then $\langle G(B),s\rangle\rightsquigarrow\xrightarrow{\{\alpha_1,\cdots,\alpha_n\}}\langle Q'',s''\rangle$ such that $P'\sim_{ps}^{fr} Q''$;
        \item If $\langle G(A),s\rangle\rightsquigarrow\xrightarrow{x(y)}\langle P',s'\rangle$ with $x\notin n(G(A),G(B))$, then $\langle G(B),s\rangle\rightsquigarrow\xrightarrow{x(y)}\langle Q'',s''\rangle$, such that for all $u$,
        $\langle P',s'\rangle\{u/y\}\sim_{ps}^{fr} \langle Q''\{u/y\},s''\rangle$;
        \item If $\langle G(A),s\rangle\rightsquigarrow\xtworightarrow{\{\alpha_1[m],\cdots,\alpha_n[m]\}}\langle P',s'\rangle$, where $\alpha_i[m](1\leq i\leq n)$ is a free action or bound output action with
        $bn(\alpha_1[m])\cap\cdots\cap bn(\alpha_n[m])\cap n(G(A),G(B))=\emptyset$, then $\langle G(B),s\rangle\rightsquigarrow\xtworightarrow{\{\alpha_1[m],\cdots,\alpha_n[m]\}}\langle Q'',s''\rangle$ such that $P'\sim_{ps}^{fr} Q''$;
        \item If $\langle G(A),s\rangle\rightsquigarrow\xtworightarrow{x(y)[m]}\langle P',s'\rangle$ with $x\notin n(G(A),G(B))$, then $\langle G(B),s\rangle\rightsquigarrow\xtworightarrow{x(y)[m]}\langle Q'',s''\rangle$, such that for all $u$,
        $P'\{u/y\}\sim_{ps}^{fr} Q''\{u/y\}$.
      \end{enumerate}

      To prove the above properties, it is sufficient to induct on the depth of inference and quite routine, we omit it.
  \item $E_i\sim_{pp}^{fr} F_i$, $A_i(\widetilde{x}_i)\overset{\text{def}}{=}E_i(\widetilde{A})$, $B_i(\widetilde{x}_i)\overset{\text{def}}{=}F_i(\widetilde{B})$, then
  $A_i(\widetilde{x}_i)\sim_{pp}^{fr} B_i(\widetilde{x}_i)$. It can be proven similarly to the above case.
  \item $E_i\sim_{php}^{fr} F_i$, $A_i(\widetilde{x}_i)\overset{\text{def}}{=}E_i(\widetilde{A})$, $B_i(\widetilde{x}_i)\overset{\text{def}}{=}F_i(\widetilde{B})$, then
  $A_i(\widetilde{x}_i)\sim_{php}^{fr} B_i(\widetilde{x}_i)$. It can be proven similarly to the above case.
  \item $E_i\sim_{phhp}^{fr} F_i$, $A_i(\widetilde{x}_i)\overset{\text{def}}{=}E_i(\widetilde{A})$, $B_i(\widetilde{x}_i)\overset{\text{def}}{=}F_i(\widetilde{B})$, then
  $A_i(\widetilde{x}_i)\sim_{phhp}^{fr} B_i(\widetilde{x}_i)$. It can be proven similarly to the above case.
\end{enumerate}
\end{proof}

\begin{theorem}[Unique solution of equations]
Assume $\widetilde{E}$ are expressions containing only $X_i$ with $\widetilde{x}_i$, and each $X_i$ is weakly guarded in each $E_j$. Assume that $\widetilde{P}$ and $\widetilde{Q}$ are
processes such that $fn(P_i)\subseteq \widetilde{x}_i$ and $fn(Q_i)\subseteq \widetilde{x}_i$. Then, for all $i$,
\begin{enumerate}
  \item if $P_i\sim_{pp}^{fr} E_i(\widetilde{P})$, $Q_i\sim_{pp}^{fr} E_i(\widetilde{Q})$, then $P_i\sim_{pp}^{fr} Q_i$;
  \item if $P_i\sim_{ps}^{fr} E_i(\widetilde{P})$, $Q_i\sim_{ps}^{fr} E_i(\widetilde{Q})$, then $P_i\sim_{ps}^{fr} Q_i$;
  \item if $P_i\sim_{php}^{fr} E_i(\widetilde{P})$, $Q_i\sim_{php}^{fr} E_i(\widetilde{Q})$, then $P_i\sim_{php}^{fr} Q_i$;
  \item if $P_i\sim_{phhp}^{fr} E_i(\widetilde{P})$, $Q_i\sim_{phhp}^{fr} E_i(\widetilde{Q})$, then $P_i\sim_{phhp}^{fr} Q_i$.
\end{enumerate}
\end{theorem}

\begin{proof}
\begin{enumerate}
  \item It is similar to the proof of unique solution of equations for FR strongly probabilistic pomset bisimulation in CTC, please refer to \cite{CTC2} for details, we omit it;
  \item It is similar to the proof of unique solution of equations for FR strongly probabilistic step bisimulation in CTC, please refer to \cite{CTC2} for details, we omit it;
  \item It is similar to the proof of unique solution of equations for FR strongly probabilistic hp-bisimulation in CTC, please refer to \cite{CTC2} for details, we omit it;
  \item It is similar to the proof of unique solution of equations for FR strongly probabilistic hhp-bisimulation in CTC, please refer to \cite{CTC2} for details, we omit it.
\end{enumerate}
\end{proof}

\subsection{Algebraic Theory}\label{a8}

\begin{definition}[STC]
The theory \textbf{STC} is consisted of the following axioms and inference rules:

\begin{enumerate}
  \item Alpha-conversion $\textbf{A}$.
  \[\textrm{if } P\equiv Q, \textrm{ then } P=Q\]
  \item Congruence $\textbf{C}$. If $P=Q$, then,
  \[\tau.P=\tau.Q\quad \overline{x}y.P=\overline{x}y.Q\quad P.\overline{x}y[m]=Q.\overline{x}y[m]\]
  \[P+R=Q+R\quad P\parallel R=Q\parallel R\]
  \[(x)P=(x)Q\quad x(y).P=x(y).Q\quad P.x(y)[m]=Q.x(y)[m]\]
  \item Summation $\textbf{S}$.
  \[\textbf{S0}\quad P+\textbf{nil}=P\]
  \[\textbf{S1}\quad P+P=P\]
  \[\textbf{S2}\quad P+Q=Q+P\]
  \[\textbf{S3}\quad P+(Q+R)=(P+Q)+R\]
  \item Box-Summation $\textbf(BS)$.
  \[\textbf{BS0}\quad P\boxplus_{\pi}\textbf{nil}= P\]
  \[\textbf{BS1}\quad P\boxplus_{\pi}P= P\]
  \[\textbf{BS2}\quad P\boxplus_{\pi} Q= Q\boxplus_{1-\pi} P\]
  \[\textbf{BS3}\quad P\boxplus_{\pi}(Q\boxplus_{\rho} R)= (P\boxplus_{\frac{\pi}{\pi+\rho-\pi\rho}}Q)\boxplus_{\pi+\rho-\pi\rho} R\]
  \item Restriction $\textbf{R}$.
  \[\textbf{R0}\quad (x)P=P\quad \textrm{ if }x\notin fn(P)\]
  \[\textbf{R1}\quad (x)(y)P=(y)(x)P\]
  \[\textbf{R2}\quad (x)(P+Q)=(x)P+(x)Q\]
  \[\textbf{R3}\quad (x)\alpha.P=\alpha.(x)P\quad \textrm{ if }x\notin n(\alpha)\]
  \[\textbf{R4}\quad (x)\alpha.P=\textbf{nil}\quad \textrm{ if }x\textrm{is the subject of }\alpha\]
  \item Expansion $\textbf{E}$.
  Let $P\equiv\boxplus_i\sum_j \alpha_{ij}.P_{ij}$ and $Q\equiv\boxplus_{k}\sum_l\beta_{kl}.Q_{kl}$, where $bn(\alpha_{ij})\cap fn(Q)=\emptyset$ for all $i,j$, and
  $bn(\beta_{kl})\cap fn(P)=\emptyset$ for all $k,l$. Then,

\begin{enumerate}
  \item $P\parallel Q\sim_{pp}^{fr} \boxplus_{i}\boxplus_{k}\sum_j\sum_l (\alpha_{ij}\parallel \beta_{kl}).(P_{ij}\parallel Q_{kl})+\boxplus_i\boxplus_k\sum_{\alpha_{ij} \textrm{ comp }\beta_{kl}}\tau.R_{ijkl}$;
  \item $P\parallel Q\sim_{ps}^{fr} \boxplus_{i}\boxplus_{k}\sum_j\sum_l (\alpha_{ij}\parallel \beta_{kl}).(P_{ij}\parallel Q_{kl})+\boxplus_i\boxplus_k\sum_{\alpha_{ij} \textrm{ comp }\beta_{kl}}\tau.R_{ijkl}$;
  \item $P\parallel Q\sim_{php}^{fr} \boxplus_{i}\boxplus_{k}\sum_j\sum_l (\alpha_{ij}\parallel \beta_{kl}).(P_{ij}\parallel Q_{kl})+\boxplus_i\boxplus_k\sum_{\alpha_{ij} \textrm{ comp }\beta_{kl}}\tau.R_{ijkl}$;
  \item $P\parallel Q\nsim_{phhp} \boxplus_{i}\boxplus_{k}\sum_j\sum_l (\alpha_{ij}\parallel \beta_{kl}).(P_{ij}\parallel Q_{kl})+\boxplus_i\boxplus_k\sum_{\alpha_{ij} \textrm{ comp }\beta_{kl}}\tau.R_{ijkl}$.
\end{enumerate}

Where $\alpha_{ij}$ comp $\beta_{kl}$ and $R_{ijkl}$ are defined as follows:
\begin{enumerate}
  \item $\alpha_{ij}$ is $\overline{x}u$ and $\beta_{kl}$ is $x(v)$, then $R_{ijkl}=P_{ij}\parallel Q_{kl}\{u/v\}$;
  \item $\alpha_{ij}$ is $\overline{x}(u)$ and $\beta_{kl}$ is $x(v)$, then $R_{ijkl}=(w)(P_{ij}\{w/u\}\parallel Q_{kl}\{w/v\})$, if $w\notin fn((u)P_{ij})\cup fn((v)Q_{kl})$;
  \item $\alpha_{ij}$ is $x(v)$ and $\beta_{kl}$ is $\overline{x}u$, then $R_{ijkl}=P_{ij}\{u/v\}\parallel Q_{kl}$;
  \item $\alpha_{ij}$ is $x(v)$ and $\beta_{kl}$ is $\overline{x}(u)$, then $R_{ijkl}=(w)(P_{ij}\{w/v\}\parallel Q_{kl}\{w/u\})$, if $w\notin fn((v)P_{ij})\cup fn((u)Q_{kl})$.
\end{enumerate}

Let $P\equiv\boxplus_i\sum_j P_{ij}.\alpha_{ij}[m]$ and $Q\equiv\boxplus_{k}\sum_l Q_{kl}.\beta_{kl}[m]$, where $bn(\alpha_{ij}[m])\cap fn(Q)=\emptyset$ for all $i,j$, and
  $bn(\beta_{kl}[m])\cap fn(P)=\emptyset$ for all $k,l$. Then,

\begin{enumerate}
  \item $P\parallel Q\sim_{pp}^{fr} \boxplus_{i}\boxplus_{k}\sum_j\sum_l(P_{ij}\parallel Q_{kl}).(\alpha_{ij}[m]\parallel \beta_{kl}[m])+\boxplus_i\boxplus_k\sum_{\alpha_{ij} \textrm{ comp }\beta_{kl}} R_{ijkl}.\tau$;
  \item $P\parallel Q\sim_{ps}^{fr} \boxplus_{i}\boxplus_{k}\sum_j\sum_l(P_{ij}\parallel Q_{kl}).(\alpha_{ij}[m]\parallel \beta_{kl}[m])+\boxplus_i\boxplus_k\sum_{\alpha_{ij} \textrm{ comp }\beta_{kl}} R_{ijkl}.\tau$;
  \item $P\parallel Q\sim_{php}^{fr} \boxplus_{i}\boxplus_{k}\sum_j\sum_l(P_{ij}\parallel Q_{kl}).(\alpha_{ij}[m]\parallel \beta_{kl}[m])+\boxplus_i\boxplus_k\sum_{\alpha_{ij} \textrm{ comp }\beta_{kl}} R_{ijkl}.\tau$;
  \item $P\parallel Q\nsim_{phhp} \boxplus_{i}\boxplus_{k}\sum_j\sum_l(P_{ij}\parallel Q_{kl}).(\alpha_{ij}[m]\parallel \beta_{kl}[m])+\boxplus_i\boxplus_k\sum_{\alpha_{ij} \textrm{ comp }\beta_{kl}} R_{ijkl}.\tau$.
\end{enumerate}

Where $\alpha_{ij}[m]$ comp $\beta_{kl}[m]$ and $R_{ijkl}$ are defined as follows:
\begin{enumerate}
  \item $\alpha_{ij}[m]$ is $\overline{x}u$ and $\beta_{kl}[m]$ is $x(v)$, then $R_{ijkl}=P_{ij}\parallel Q_{kl}\{u/v\}$;
  \item $\alpha_{ij}[m]$ is $\overline{x}(u)$ and $\beta_{kl}[m]$ is $x(v)$, then $R_{ijkl}=(w)(P_{ij}\{w/u\}\parallel Q_{kl}\{w/v\})$, if $w\notin fn((u)P_{ij})\cup fn((v)Q_{kl})$;
  \item $\alpha_{ij}[m]$ is $x(v)$ and $\beta_{kl}[m]$ is $\overline{x}u$, then $R_{ijkl}=P_{ij}\{u/v\}\parallel Q_{kl}$;
  \item $\alpha_{ij}[m]$ is $x(v)$ and $\beta_{kl}[m]$ is $\overline{x}(u)$, then $R_{ijkl}=(w)(P_{ij}\{w/v\}\parallel Q_{kl}\{w/u\})$, if $w\notin fn((v)P_{ij})\cup fn((u)Q_{kl})$.
\end{enumerate}
  \item Identifier $\textbf{I}$.
  \[\textrm{If }A(\widetilde{x})\overset{\text{def}}{=}P,\textrm{ then }A(\widetilde{y})= P\{\widetilde{y}/\widetilde{x}\}.\]
\end{enumerate}
\end{definition}

\begin{theorem}[Soundness]
If $\textbf{STC}\vdash P=Q$ then
\begin{enumerate}
  \item $P\sim_{pp}^{fr} Q$;
  \item $P\sim_{pp}^{fr} Q$;
  \item $P\sim_{php}^{fr} Q$;
  \item $P\sim_{phhp}^{fr} Q$.
\end{enumerate}
\end{theorem}

\begin{proof}
The soundness of these laws modulo strongly truly concurrent bisimilarities is already proven in Section \ref{s8}.
\end{proof}

\begin{definition}
The agent identifier $A$ is weakly guardedly defined if every agent identifier is weakly guarded in the right-hand side of the definition of $A$.
\end{definition}

\begin{definition}[Head normal form]
A Process $P$ is in head normal form if it is a sum of the prefixes:

$$P\equiv \boxplus_{i}\sum_j(\alpha_{ij1}\parallel\cdots\parallel\alpha_{ijn}).P_{ij}\quad P\equiv \boxplus_{i}\sum_jP_{ij}.(\alpha_{ij1}[m]\parallel\cdots\parallel\alpha_{ijn}[m])$$
\end{definition}

\begin{proposition}
If every agent identifier is weakly guardedly defined, then for any process $P$, there is a head normal form $H$ such that

$$\textbf{STC}\vdash P=H.$$
\end{proposition}

\begin{proof}
It is sufficient to induct on the structure of $P$ and quite obvious.
\end{proof}

\begin{theorem}[Completeness]
For all processes $P$ and $Q$,
\begin{enumerate}
  \item if $P\sim_{pp}^{fr} Q$, then $\textbf{STC}\vdash P=Q$;
  \item if $P\sim_{pp}^{fr} Q$, then $\textbf{STC}\vdash P=Q$;
  \item if $P\sim_{php}^{fr} Q$, then $\textbf{STC}\vdash P=Q$.
\end{enumerate}
\end{theorem}

\begin{proof}
\begin{enumerate}
  \item if $P\sim_{pp}^{fr} Q$, then $\textbf{STC}\vdash P=Q$.
  
  For the forward transition case.
  
Since $P$ and $Q$ all have head normal forms, let $P\equiv\boxplus_{j=1}^l\sum_{i=1}^k\alpha_{ji}.P_{ji}$ and $Q\equiv\boxplus_{j=1}^l\sum_{i=1}^k\beta_{ji}.Q_{ji}$. Then the depth of
$P$, denoted as $d(P)=0$, if $k=0$; $d(P)=1+max\{d(P_{ji})\}$ for $1\leq j,i\leq k$. The depth $d(Q)$ can be defined similarly.

It is sufficient to induct on $d=d(P)+d(Q)$. When $d=0$, $P\equiv\textbf{nil}$ and $Q\equiv\textbf{nil}$, $P=Q$, as desired.

Suppose $d>0$.

\begin{itemize}
  \item If $(\alpha_1\parallel\cdots\parallel\alpha_n).M$ with $\alpha_{ji}(1\leq j,i\leq n)$ free actions is a summand of $P$, then 
  $\langle P,s\rangle\rightsquigarrow\xrightarrow{\{\alpha_1,\cdots,\alpha_n\}}\langle M,s'\rangle$.
  Since $Q$ is in head normal form and has a summand $(\alpha_1\parallel\cdots\parallel\alpha_n).N$ such that $M\sim_{pp}^{fr} N$, by the induction hypothesis $\textbf{STC}\vdash M=N$,
  $\textbf{STC}\vdash (\alpha_1\parallel\cdots\parallel\alpha_n).M= (\alpha_1\parallel\cdots\parallel\alpha_n).N$;
  \item If $x(y).M$ is a summand of $P$, then for $z\notin n(P, Q)$, $\langle P,s\rangle\rightsquigarrow\xrightarrow{x(z)}\langle M',s'\rangle\equiv \langle M\{z/y\},s'\rangle$. Since $Q$ is in head normal form and has a summand
  $x(w).N$ such that for all $v$, $M'\{v/z\}\sim_{pp}^{fr} N'\{v/z\}$ where $N'\equiv N\{z/w\}$, by the induction hypothesis $\textbf{STC}\vdash M'\{v/z\}=N'\{v/z\}$, by the axioms
  $\textbf{C}$ and $\textbf{A}$, $\textbf{STC}\vdash x(y).M=x(w).N$;
  \item If $\overline{x}(y).M$ is a summand of $P$, then for $z\notin n(P,Q)$, $\langle P,s\rangle\rightsquigarrow\xrightarrow{\overline{x}(z)}\langle M',s'\rangle\equiv \langle M\{z/y\},s'\rangle$. Since $Q$ is in head normal form and
  has a summand $\overline{x}(w).N$ such that $M'\sim_{pp}^{fr} N'$ where $N'\equiv N\{z/w\}$, by the induction hypothesis $\textbf{STC}\vdash M'=N'$, by the axioms
  $\textbf{A}$ and $\textbf{C}$, $\textbf{STC}\vdash \overline{x}(y).M= \overline{x}(w).N$.
\end{itemize}

For the reverse transition case, it can be proven similarly, and we omit it.

  \item if $P\sim_{pp}^{fr} Q$, then $\textbf{STC}\vdash P=Q$. It can be proven similarly to the above case.
  \item if $P\sim_{php}^{fr} Q$, then $\textbf{STC}\vdash P=Q$. It can be proven similarly to the above case.
\end{enumerate}
\end{proof}

\newpage


\begin{thebibliography}{Lam94}
\bibitem{CC}R. Milner. (1989). Communication and concurrency. Printice Hall.

\bibitem{CCS}R. Milner. (1980). A calculus of communicating systems. LNCS 92, Springer.

\bibitem{ACP}W. Fokkink. (2007). Introduction to process algebra 2nd ed. Springer-Verlag.

%\bibitem{CFAR}F.W. Vaandrager. (1986). Verification of two communication protocols by means of process algebra. Report CS-R8608, CWI, Amsterdam.

\bibitem{PI1}R. Milner, J. Parrow, and D. Walker. (1992). A Calculus of Mobile Processes, Part I. Information and Computation, 100(1):1-40.

\bibitem{PI2}R. Milner, J. Parrow, and D. Walker. (1992). A calculus of mobile processes, Part II. Information and Computation, 100(1):41-77.

\bibitem{CTC}Y. Wang. (2017). A calculus for true concurrency. Manuscript, arxiv: 1703.00159.

\bibitem{CTC2}Y. Wang. (2021). Truly Concurrent Calculi with Reversibility, Probabilism and Guards. Manusript, arXiv: 2108.10156.

\bibitem{ATC}Y. Wang. (2016). Algebraic laws for true concurrency. Manuscript, arXiv: 1611.09035.

\bibitem{PITC}Y. Wang. (2017). A calculus of truly concurrent mobile processes. Manuscript, arXiv: 1704.07774.

\bibitem{APRTC}Y. Wang. (2018). Truly Concurrent Process Algebra Is Reversible. Manuscript, arXiv: 1810.00868.

\bibitem{APPTC}Y. Wang. (2021). Probabilistic Process Algebra for True Concurrency. Manuscript, arXiv: 2107.08453.

\bibitem{PPA}S. Andova. (2002). Probabilistic process algebra. Annals of Operations Research 128(2002):204-219.

\bibitem{PPA2}S. Andova, J. Baeten, T. Willemse. (2006). A Complete Axiomatisation of Branching Bisimulation for Probabilistic Systems with an Application in Protocol
Verification. International Conference on Concurrency Theory. Springer Berlin Heidelberg.

\bibitem{PPA3}S. Andova, S. Georgievska. (2009). On Compositionality, Efficiency, and Applicability of Abstraction in Probabilistic Systems. Conference on Current Trends in Theory and
Practice of Computer Science. Springer-Verlag.

\bibitem{HLPA}J. F. Groote, A. Ponse. (1994). Process algebra with guards: combining hoare logic with process algebra. Formal Aspects of Computing, 6(2): 115-164.
\end{thebibliography}
\end{document}